\documentclass[12pt,english]{amsart}
\usepackage{xcolor}
\usepackage{amsfonts}
\usepackage{amsmath, amsfonts, amsthm, amssymb, amscd}
\usepackage[mathcal]{euscript}
\usepackage{mathrsfs}
\usepackage{ulem}
\usepackage{mathabx,a4wide}
  
\linespread{1.5}


\usepackage{dsfont}
\usepackage{extarrows}
%
%

\usepackage{perpage} 
\MakePerPage{footnote} 
\makeatletter
\def\blfootnote{\xdef\@thefnmark{}\@footnotetext}
\makeatother

\newtheorem{theorem}{Theorem}[section]
\newtheorem{proposition}[theorem]{Proposition}
\newtheorem{lemma}[theorem]{Lemma}

\newtheorem{definition}[theorem]{Definition}

\numberwithin{equation}{section} 
\newcommand \bel {\be\label}
\newcommand \be {\begin{equation}}
\newcommand \ee {\end{equation}}
\newcommand \newperp  {\underline{\partial}_\perp}
\newcommand \Lcal {\mathcal L}
\newcommand \xb {\overline {x}}
\newcommand \delb {\overline {\del}}
\newcommand \hb{\overline h}

\newcommand \Tb {\overline {T}}
\newcommand \minb{\overline{m}}
\newcommand \Phib{\overline{\Phi}}
\newcommand \Psib{\overline{\Psi}}
\newcommand \Kcal {\mathcal K}
\newcommand \Kical {\mathcal {K}^{\text{int}}}

\newcommand \Hcal {\mathcal H}
\newcommand \Boxt {\widetilde {\Box}}
\newcommand \del \partial
\newcommand \delu {\underline{\del}}
\newcommand \Tu {\underline{T}}
\newcommand \Hu {\underline{H}}
\newcommand \hu {\underline{h}}
\newcommand {\Gammau}{\underline{\Gamma}}
\newcommand {\Thetau}{\underline{\Theta}}
\newcommand {\thetau}{\underline{\theta}}

\newcommand {\Pu}{\underline{P}}
\newcommand {\gu}{\underline{g}}
\newcommand {\minu}{\underline{m}}
\newcommand \RR{\mathbb{R}}
\newcommand {\vep}{\varepsilon}

\newcommand {\gb}{\overline{g}}
\newcommand {\Mb}{\overline {M}}
\newcommand {\Rb}{\overline{R}}
\newcommand {\nablab}{\overline{\nabla}}

\let\oldmarginpar\marginpar
\renewcommand\marginpar[1]{\- \oldmarginpar[\raggedleft\footnotesize #1]%
{\raggedright\footnotesize #1}}
\newcommand \eps {\epsilon}

\newcommand \sbar {{\overline{s}}}

\newcommand \gSch {{g_S}}


\begin{document}

\

\title[Stability of Minkowski space for self-gravitating massive fields]
{\large The global nonlinear stability 
\\
\vskip.4cm 
of 
Minkowski space
\\
\vskip.8cm 
for self-gravitating massive fields}

\blfootnote{PLF: Laboratoire Jacques-Louis Lions \& Centre National de la Recherche Scientifique, 
Universit\'e Pierre et Marie Curie, 
4 Place Jussieu, 75252 Paris, France. Email: {contact@philippelefloch.org}
\newline 
YM: 
School of Mathematics and Statistics, Xi'an Jiaotong University, Xi'an, 710049 Shaanxi, People's Republic of China.
Email: {yuemath@mail.xjtu.edu.cn}
\newline 
{\sl Keywords.} Einstein equations; massive field; global Cauchy development;  Minkowski space; quasi-null nonlinearity; coupled wave-Klein-Gordon system; hyperboloidal foliation method. 
}


\author[Philippe G. L{\smaller e}FLOCH and Yue MA]{\large \vskip.8cm 
P{\small hilippe} G. L{\small e}F{\small loch} {\small and} Y{\small ue} M{\small a}}

\date{} 

\vfill 

\maketitle
               
\

\vfill

\newpage

\

\vskip1.cm 

\centerline{\scshape \large Preface.} 

\

The theory presented in this Monograph establishes the first mathematically rigorous result on the {\sl global nonlinear stability of self-gravitating matter} under small perturbations.
Indeed, it allows us to exclude the existence of dynamically unstable, self-gravitating massive fields and, therefore, solves a long-standing open problem in General Relativity. 

We establish that Minkowski spacetime is nonlinearly stable  in presence of a massive scalar field under suitable smallness conditions (for, otherwise, black holes might form). We formulate the initial value problem for the Einstein-massive scalar field equations, when the initial slice is a perturbation of an asymptotically flat, spacelike hypersurface in Minkowski space, and we prove that this perturbation disperses in future timelike directions so that the associated Cauchy development is future geodesically complete. 

Our method of proof which we refer to as the {\sl Hyperboloidal Foliation Method}, extends the standard `vector field method' developed for massless fields and, importantly, does not use the scaling vector field of Minkowski space. We construct a foliation (of the interior of a light cone) by spacelike and asymptotically hyperboloidal hypersurfaces and we rely on a decomposition of the Einstein equations expressed in {\sl wave gauge} and in a {\sl semi-hyperboloidal frame,} in a sense defined in this Monograph. We focus here on the problem of the evolution of a spatially compact matter field, and we consider initial data coinciding, in a neighborhood of spacelike infinity, with a spacelike slice of Schwarzschild spacetime. We express the Einstein equations as a system of coupled nonlinear wave-Klein-Gordon equations (with differential constraints) posed on a curved space (whose metric is one of the unknowns). 

The main challenge is to establish a global-in-time existence theory for {\sl coupled wave-Klein-Gordon systems} in Sobolev-type spaces defined from the translations and the boosts of Minkowski spacetime, only. To this end, we rely on the following novel and robust techniques: new commutator estimates for hyperboloidal frames, sharp decay estimates for wave and Klein-Gordon equations, Sobolev and Hardy inequalities along the hyperboloidal foliation, quasi-null hyperboloidal structure of the Einstein equations, as well as integration arguments along characteristics and radial rays. Our proof also relies on an iterative procedure involving the components of the metric and the Klein-Gordon field, and on a hierarchy of low- and high-order energy estimates, which distinguishes between the metric components and between 
several levels of time dependency and regularity for the metric coefficients and the massive field.

\

\hfill Philippe G. LeFloch (Paris) and Yue Ma (Xi'an)

\

\newpage

\vskip4.cm 

\setcounter{tocdepth}{5}
\tableofcontents
 
\newpage

\section{Introduction}  

\subsection{The nonlinear stability problem for the Einstein-Klein-Gordon system}

We consider Einstein's field equations 
of General Relativity for self-gravitating massive scalar fields and formulate the initial value problem when the initial data set is a perturbation of an asymptotically flat, spacelike hypersurface in Minkowski spacetime. We then establish the existence of an Einstein development associated with this initial data set, which is proven to be an asymptotically flat and future geodesically complete spacetime.
Recall that, in the case of vacuum spacetimes or massless scalar fields, such a nonlinear stability theory for Minkowski spacetime was first established by Christodoulou and Klainerman in their breakthrough work \cite{CK}, which was later revisited by Lindblad and Rodnianski \cite{LR2} via an alternative approach. Partial results on the global existence problem for the Einstein equations was also obtained earlier by Friedrich \cite{Friedrich81,Friedrich83}. 

Let us emphasize that the {\sl vacuum} Einstein equations are currently under particularly active development: this is illustrated by the recent contributions by Christo\-doulou \cite{Christodoulou} and Klainerman and Rodnianski \cite{KR} (on the formation of trapped surfaces) and by Klainerman, Rodnianski and Szeftel \cite{KRS} (on the $L^2$ curvature theorem). The Einstein equations coupled with massless fields such as the Maxwell field were also extensively studied; see for instance Bieri and Zipser \cite{BieriZipser} and Speck \cite{Speck}; existence under slow decay conditions was established by Bieri  \cite{BieriZipser}. 
 
The present Monograph offers a new method for the global analysis of the Einstein equations, which we refer to as the {\sl Hyperboloidal Foliation Method} 
 and allows us to investigate the global dynamics of massive fields and, especially, the {\sl coupling between wave and Klein-Gordon equations.} This method was first outlined in \cite{PLF-YM-book,PLF-YM-one}, together with references to earlier works, especially by Friedrich \cite{Friedrich81,Friedrich83}, Klainerman \cite{Klainerman85}, and H\"ormander \cite{Hormander}.
 We hope that the present contribution will open a further direction of research concerning {\sl matter spacetimes,} which need not be not Ricci-flat and may contain {\sl massive fields}. In this direction, we refer to LeFloch et al.  \cite{BLF,BuLF,GLF,LeFloch-lectures,LFR} for existence results on weakly regular matter spacetimes. 

The {\sl nonlinear stability problem for self-gravitating massive fields,} solved in the present Monograph\footnote{We present here our method for a restricted class of initial data, while more general data as well as the theory of $f(R)$--modified gravity are treated in \cite{PLF-YM-three}.}, was a long-standing open problem for
the past twenty five years since the publication of Christodoulou-Klainerman's book \cite{CK}. In the physics literature, blow-up mechanisms were proposed which suggest possible instabilities for self-gravitating massive fields. While the most recent numerical investigations \cite{OCP} gave some confidence that Minkowski spacetime should be nonlinearly stable, the present work provides the {\sl first mathematically rigorous proof that dynamically unstable solutions to the Einstein equations do not exist in presence of massive fields} (under suitable smallness conditions specified below). On the other hand, nonlinear stability would not hold when the mass is sufficiently large, since trapped surfaces and presumably black holes form from (large) perturbations of Minkowski spacetime \cite{Christodoulou}. 

Mathematically, the problem under consideration can be formulated (in the so-called wave gauge, see below) as a quasilinear system of {\sl coupled nonlinear wave-Klein-Gordon equations,} supplemented with differential constraints and posed on a curved spacetime. The spacetime (Lorentzian) metric together with the scalar field defined on this spacetime are the unknowns of the Einstein-matter system. The Hyperboloidal Foliation Method introduced in this Monograph leads us to a {\sl global-in-time theory} for this wave-Klein-Gordon system when initial data are provided on a spacelike hypersurface. 
Our proof is based on a substantial modification of the so-called vector field method, which have been applied to massless problems, only. Importantly, we do not use the scaling vector field of Minkowski spacetime, which is required to be able to handle Klein-Gordon equations.  

In order to simplify the presentation of the method, in this Monograph we are interested in spatially compact matter fields and, therefore, we assume that the initial data coincide, in a neighborhood of spacelike infinity, with an asymptotically flat spacelike slice of Schwarzschild spacetime in wave coordinates. Our proof relies on several novel contributions: sharp time-decay estimates for wave equations and Klein-Gordon equations on a curved spacetime, Sobolev and Hardy's inequalities on hyperboloids, quasi-null hyperboloidal structure of the Einstein equations and estimates based on integration along characteristics and radial rays. We also distinguish between low- and high-order energies for the metric coefficients and the massive field. 

We refer to \cite{PLF-YM-book,PLF-YM-CRAS,PLF-YM-one} for earlier work by the authors and to the companion work \cite{PLF-YM-three} for an extension to more general data and to the theory of modified gravity. We focus on $(3+1)$-dimensional problems since this is the dimension of main interest. As already mentioned, in the context of the Einstein equations, hyperboloidal foliations were introduced first by Friedrich \cite{Friedrich81,Friedrich83}. Of course, hyperboloidal foliations can be introduced in any number of dimensions, and should also lead to interesting results (see \cite{Ma} in $(2 +1)$ dimensions), but we do not pursue this here since the Einstein equations have rather different properties in these other dimensions. 

For a different approach to the nonlinear stability of massive fields, we refer the reader to an ongoing research project by Q. Wang (outlined in ArXiv:1607.01466) which is aimed at generalizing Christodoulou-Klainerman's geometric method. 
An important recent development is provided by Fajman, Joudioux, and Smulevici \cite{FJS,FJS2}, who recently introduced
 a new vector field method based on a hyperboloidal foliation and aimed at dealing with global existence problems for massive kinetic equations; for this technique, we also refer to Smulevici \cite{Smulevici}. Hyperboloidal foliations are also useful to analyze the blow-up of solutions for, for instance, focusing wave equations, as investigated by Burtscher and Donninger \cite{BD}. 

Furthermore, we also recall that nonlinear wave equations of Klein-Gordon-type posed on possibly curved spacetimes have been the subject of extensive research in the past two decades, and we will not try to review this vast literature and we refer the interested reader to, for instance, Bachelot \cite{Bachelot88,Bachelot94}, 
Delort et al. \cite{Delort01,Delort04}, 
Katayama \cite{Katayama12a,Katayama12b}, and 
Shatah \cite{Shatah82,Shatah85}, 
as well as Germain \cite{G} and Ionescu and Pasauder \cite{IP}; 
see also \cite{Hormander,HoshigaKubo,Strauss} and the references cited therein.
Importantly, the use of hyperboloidal foliations leads to robust and efficient numerical methods, as demonstrated by a variety of approaches by Ansorg and Macedo \cite{AnsorgMacedo}, Frauendiener \cite{Frauendiener,FrauendienerH}, Hilditch et al. \cite{Hilditch,VHH}, Moncrief and Rinne \cite{MoncriefRinne}, Rinne \cite{Rinne}, and Zenginoglu \cite{Zenginoglu,Zenginoglu2}. 


\subsection{Statement of the main result}

We thus consider the {\sl Einstein equations} for an unknown spacetime $(M,g)$, that is, 
\be
\label{eq main geo 0}
G_{\alpha \beta} := R_{\alpha \beta} - {R \over 2} g_{\alpha \beta} = 8 \pi \, T_{\alpha \beta},
\ee
where $R_{\alpha \beta}$ denotes the Ricci curvature of $(M,g)$, $R=g^{\alpha \beta} R_{\alpha \beta}$ its scalar curvature, and $G_{\alpha \beta}$ is referred to as the Einstein tensor.
Our main unknown in \eqref{eq main geo 0} is a Lorentzian metric $g_{\alpha \beta}$ defined on a topological $4$-manifold $M$. By convention, Greek indices $\alpha, \beta, \ldots$ take values $0,1,2,3$, while Latin indices $i,j, \ldots$ takes values $1,2,3$ (as, for instance, in \eqref{eq constraint} below). In this work, we  are interested in non-vacuum spacetimes when the matter content is described by a massive scalar field denoted by $\phi: M\to \RR$ with potential $V=V(\phi)$. The stress-energy tensor of such a field reads 
\bel{eq tensor T}
T_{\alpha \beta} := \nabla_\alpha \phi \nabla_\beta \phi 
- \Big( {1 \over 2} \nabla_\gamma \phi \nabla^\gamma \phi + V(\phi) \Big) g_{\alpha \beta}.
\ee
Recall that from the contracted Bianchi identities $\nabla^\alpha G_{\alpha \beta} =0$, we can derive an evolution equation for the scalar field and, in turn, formulate the Einstein--massive field system as the system of quasilinear partial differential equations (in any choice of coordinates at this stage) 
\begin{subequations}
\label{eq main geo}
\be
\label{eq main geo a}
\aligned
R_{\alpha \beta}  = 8\pi\big(\nabla_\alpha \phi\nabla_{\beta} \phi + V(\phi) \, g_{\alpha \beta} \big),
\endaligned
\ee
\bel{eq main geo b}
\aligned
\Box_g\phi - V'(\phi) = 0.
\endaligned
\ee
\end{subequations}
Without loss of generality, throughout we assume that the potential is quadratic in $\phi$, i.e.
\be
V(\phi) = \frac{c^2}{2} \phi^2,
\ee
where $c^2>0$ is referred to as the mass density of the scalar field. The equation \eqref{eq main geo b} is nothing but a Klein-Gordon equation posed on an (unknown) curved spacetime.

The Cauchy problem for the Einstein equations can be formulated as follows; cf., for instance, Choquet-Bruhat's textbook \cite{CB}. First of all, let us recall that an {\sl initial data set} for the Einstein equations consists of a Riemannian $3$-manifold $(\Mb, \gb)$, a symmetric $2$-tensor field $K$ defined on $\Mb$, and two scalar fields $\phi_0$ and $\phi_1$ also defined on $\Mb$. 
A {\sl Cauchy development of the initial data set} $(\Mb, \gb, K, \phi_0, \phi_1)$, by definition, is a $(3+1)$-dimensional Lorentzian manifold $(M,g)$ satisfying the following two properties:
\begin{itemize}

\item There exists an embedding $i: \Mb\to M$ such that the (pull-back) induced metric $i^*(g) = \gb$ coincides with the prescribed metric $\gb$, while  the second fundamental form of $i(\Mb) \subset M$ coincides with the prescribed $2$-tensor $K$. In addition, by denoting by $n$ the (future-oriented) unit normal to $i(\Mb)$, the restriction (to the hypersurface $i(\Mb)$) of the field $\phi$ and its Lie derivative $\Lcal_n \phi$ coincides with the data $\phi_0$ and $\phi_1$ respectively.

\item The manifold $(M,g)$ satisfies the Einstein equations \eqref{eq main geo a} and, consequently, the scalar field $\phi$ satisfies the Klein-Gordon equation \eqref{eq main geo b}.
\end{itemize}

\noindent As is well-known, in order to fulfill the equations \eqref{eq main geo a}, the initial data set cannot be arbitrary but must satisfy Einstein's constraint equations:
\bel{eq constraint}
\aligned
\Rb - K_{ij} \, K^{ij} + (K_i^i)^2   &= 8\pi T_{00},
\qquad 
\nablab^i K_{ij} - \nablab_j K_l^l = 8 \pi T_{0j},
\endaligned
\ee 
where $\Rb$ and $\nablab$ are the scalar curvature and Levi-Civita connection of the manifold $(\Mb, \gb)$, respectively,
while the mass-energy density $T_{00}$ and the momentum vector $T_{0i}$ are determined from the data $\phi_0, \phi_1$ (in view of the expression \eqref{eq tensor T} of the stress-energy tensor).

Our main result established in the present Monograph can be stated as follows.

\begin{theorem}[Nonlinear stability of Minkowski spacetime for self-gravitating massive fields. Geo\-metric 
 version]
\label{MAIN-TH}

Consider the Einstein-massive field system \eqref{eq main geo} when the initial data set $(\Mb, \gb, K, \phi_0, \phi_1)$ satisfies Einstein's constraint equations \eqref{eq constraint} and is close to an asymptotically flat slice of the (vacuum) Minkowski spacetime and, more precisely, coincides in a neighborhood of spacelike infinity with a spacelike slice of a Schwarzschild spacetime with sufficiently small ADM mass. The corresponding initial value problem admits a globally hyperbolic Cauchy development, which represents an asymptotically flat and future geodesically complete spacetime.
\end{theorem}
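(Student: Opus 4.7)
The approach is to recast the initial value problem as a Cauchy problem for a coupled quasilinear wave--Klein-Gordon system in wave coordinates, and then carry out a global-in-time existence proof via a bootstrap argument built on hyperboloidal energy estimates. First, I would impose the \emph{wave gauge} $\Box_g x^\alpha = 0$; in this gauge, solutions of the Einstein equations satisfy, and can be recovered from, the reduced system
\begin{equation*}
\Boxt_g g_{\alpha\beta} = F_{\alpha\beta}(g,\del g) - 16\pi\, \del_\alpha\phi\,\del_\beta\phi - 16\pi V(\phi)\, g_{\alpha\beta},
\qquad \Box_g \phi - c^2 \phi = 0,
\end{equation*}
with preservation of the wave-gauge conditions ensured by the constraint equations via the contracted Bianchi identities. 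The semilinear term $F(g,\del g)$ does not all obey Klainerman's classical null condition but satisfies a weaker \emph{quasi-null} structure.

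Next, using that the data coincide with a spacelike Schwarzschild slice outside a compact set, I would reduce the analysis to the interior of a forward light cone issued from a point in the past of the initial slice: the exterior region can be handled by a standard asymptotic matching to the Schwarzschild solution, so that the interior analysis can be initialized on a suitable hyperboloid $\Hcal_{s_0}$ obtained from the original slice by a short-time classical evolution. On the interior I would introduce the foliation $\{\Hcal_s\}_{s \geq s_0}$ defined by $t^2 - |x|^2 = s^2$, together with the semi-hyperboloidal frame adapted to it, and the admissible vector-field family consisting only of translations $\del_\alpha$ and Lorentz boosts $L_a = x^a \del_t + t\,\del_a$. The scaling field is deliberately excluded, as it is incompatible with the Klein-Gordon equation.

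The heart of the proof is a bootstrap on a hierarchy of weighted $L^2$-type energies on the hyperboloids, indexed by the number of admissible vector fields applied. I would set up two levels of bounds: a \emph{high-order} energy permitted to grow slowly (at worst polynomially in $s$) and a \emph{low-order} energy required to remain uniformly small. To close the bootstrap one proves: (i) commutator estimates between the admissible fields and the curved-space wave and Klein-Gordon operators expressed in the semi-hyperboloidal frame; (ii) Sobolev and Hardy-type inequalities on $\Hcal_s$, yielding the pointwise rates $|u|\lesssim t^{-3/2}$ for wave components and the stronger Klein-Gordon decay for $\phi$ from low-order energy control; (iii) sharp pointwise decay for individual metric components and for $\phi$ derived by integration along characteristics and along radial rays, which is crucial to treat the terms that fail the classical null condition. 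The metric must be decomposed component-by-component in the frame, with distinct decay rates assigned to the different pieces in accordance with the quasi-null structure.

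The principal obstacle---and the step that forces essentially all of the technical machinery---is closing the estimates for the quasi-null components of the reduced Einstein equations in the presence of the massive scalar field. The Klein-Gordon stress-energy contributes source terms $\del_\alpha\phi\,\del_\beta\phi$ and $V(\phi)\, g_{\alpha\beta}$ with decay weaker than the massless case, and the absence of the scaling field removes the conformal energy used in earlier work on the vacuum problem. To surmount this I would separate the metric coefficients by frame position, exploit the good (tangential-to-$\Hcal_s$) derivatives that the boosts generate, and use the mass term in the Klein-Gordon equation to absorb undifferentiated $\phi$-contributions through the enhanced decay of Klein-Gordon solutions; an iterative scheme distinguishing between metric components and the massive field, and between several levels of regularity and time-dependency, organizes the closure. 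Once the bootstrap closes globally on $\{\Hcal_s\}_{s\ge s_0}$, the uniform control of the metric and its derivatives, combined with the Schwarzschild exterior, yields asymptotic flatness and future completeness of all timelike geodesics.
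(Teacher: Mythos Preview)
Your proposal is correct and follows essentially the same strategy as the paper: wave-gauge reduction to a coupled wave--Klein-Gordon system, localization to the interior of a light cone via the exact Schwarzschild exterior, hyperboloidal foliation with only translations and boosts, a bootstrap on a two-tier energy hierarchy, and closure via commutator estimates, hyperboloidal Sobolev/Hardy inequalities, and sharp $L^\infty$--$L^\infty$ bounds from integration along characteristics and radial rays. The only minor imprecision is that in the paper even the low-order energies carry a slow $s^\delta$ growth (with $\delta$ tied to $\vep$) rather than being uniformly bounded, and the exterior is handled by an exact domain-of-dependence argument showing the solution \emph{equals} Schwarzschild there, not merely by asymptotic matching.
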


We observe that the existence of initial data sets satisfying the conditions above was established by Corvino and Schoen \cite{CorvinoSchoen}; see also Chrusciel and Delay \cite{CD} and the recent review \cite{Chrusciel}. Although the main focus therein is on vacuum spacetimes, it is straightforward to include matter fields by observing\footnote{The authors thank J. Corvino for pointing this out to them.}  that classical existence theorems \cite{CB} provide the existence of non-trivial initial data in the ``interior region'' and that Corvino-Schoen's glueing construction is purely local in space.  

We are going to formulate the Einstein-massive field system as coupled partial differential equations. This is achieved by introducing {\sl wave coordinates} denoted by $x^\alpha$, satisfying the wave equation
$\Box_g x^\alpha = 0$ ($\alpha = 0, \ldots, 3$). 
From \eqref{eq main geo}, we will see that, in wave coordinates, the Ricci curvature operator reduces to the wave operator on the metric coefficients and, in fact, (cf.~Lemma \ref{lem Ricci}, below)
\begin{subequations} 
\label{eq main PDE}
\bel{eq main PDE a}
\Boxt_g h_{\alpha \beta} = F_{\alpha \beta}(h;\del h, \del h) - 16\pi \del_{\alpha} \phi\del_{\beta} \phi - 16\pi V(\phi)g_{\alpha \beta},
\ee 
\bel{eq main PDE b}
\Boxt_g\phi - V'(\phi) = 0,
\ee
\end{subequations}
where $\Boxt_g := g^{\alpha \beta} \del_{\alpha} \del_{\beta}$ is referred to as the {\sl reduced wave operator,} and $h_{\alpha \beta} := g_{\alpha \beta} - m_{\alpha \beta}$ denotes the curved part of the unknown metric. The nonlinear terms $F_{\alpha \beta}(h;\del h, \del h)$ are quadratic in first-order derivatives of the metric. Of course, that the system \eqref{eq main PDE} must be supplemented with Einstein's constraints \eqref{eq constraint} as well as the wave gauge conditions $\Box_g x^\alpha = 0$, which both are first-order differential constraints on the metric. 

In order to establish a global-in-time existence theory for the above system, several major challenges are overcome in the present work:

\begin{itemize}

\item Most importantly, we cannot use the scaling vector field $S := r\del_r  +  t\del_t$, since the Klein-Gordon equation is not kept conformally invariant by this vector field.

\item In addition to null terms which are standard in the theory of quasilinear wave equations, in the nonlinearity $F_{\alpha \beta}(h;\del h, \del h)$ we must also handle {\sl quasi-null terms,} as we call them, which will be controlled by relying on the wave gauge condition.

\item The structure of the nonlinearities in the Einstein equations must be carefully studied in order to exclude instabilities that may be induced by the {\sl massive scalar field.}  

\end{itemize}
In addition to the refined estimates on the commutators for hyperboloidal frames\footnote{A fundamental observation  in \cite{PLF-YM-book} is that commutators of hyperboloidal frames and Lorentz boosts enjoy good properties, which had not been observed in earlier works on the subject. In the notation presented below, this is especially true of $[L_a, (s/t)\del_t]$ and $[L_a,\delb_b]$.}
 and  
the sharp $L^\infty$-$L^\infty$ estimates for wave equations and Klein-Gordon equations already introduced by the authors in the first part \cite{PLF-YM-one}, we need the following new arguments of proof (further discussed below):
\begin{itemize}

\item Formulation of the Einstein equations in wave gauge in the semi-hyperboloidal frame.

\item Energy estimates at arbitrary order on a background Schwarzschild space in wave gauge. 

\item Refined estimates for nonlinear wave equations, that are established by integration along characteristics or radial rays.

\item Estimates of quasi-null terms in wave gauge, for which we rely on, both, the tensorial structure of the Einstein equations and the wave gauge condition.

\item New weighted Hardy inequality along the hyperboloidal foliation.

 \end{itemize}
A precise outline of the content of this Monograph will be given at the end of the following section, after introducing further notation. 
 
\ 

\centerline{\scshape Acknowledgements}

\vskip.3cm

The results established in this Monograph were presented in July 2015 at the Mini-Symposium on `Mathematical Problems of General Relativity'' (EquaDiff Conference), organized by S. Klainerman and J. Szeftel at the University of Lyon (France). They also led to a graduate course \cite{LeFloch-lectures} taught by the first author at the Institute Henri Poincar\'e (Paris) during the quarter program ``Mathematical General Relativity: the 100th anniversary'' organized by L. Andersson, S. Klainerman, and P.G. LeFloch in the Fall 2015. The first author (PLF) gratefully acknowledges support from the Simons Center for Geometry and Physics, Stony Brook University, during the one-month Program ``Mathematical Method in General Relativity' hold  in January 2015  and organized by M. Anderson, S. Klainerman, P.G. LeFloch, and J. Speck.


\section{Overview of the Hyperboloidal Foliation Method} 

\subsection{The semi-hyperboloidal frame and the hyperboloidal frame}

Consider the $(3+1)$-dimensional Minkowski spacetime with signature $(-,+,+,+)$. In Cartesian coordinates, we write $(t, x) = (x^0, x^1, x^2, x^3)$ with $r^2 := |x|^2 = (x^1)^2 + (x^2)^2 + (x^3)^2$, and we use the partial derivative fields $\del_0$ and $\del_a$, as well as the Lorentz boosts
$L_a := x^a \del_t + t\del_a$ and their ``normalized'' version  
${L_a \over t} = {x^a \over t} \del_t + \del_a$. 
We primarily deal with functions defined in the interior of the future light cone from the point $(1,0,0,0)$, denoted by 
$$
\Kcal := \{(t, x) \, / \, r<t-1\}.
$$ 
To foliate this domain, we consider the hyperboloidal hypersurfaces with hyperbolic radius $s>0$, defined by 
$$
\Hcal_s := \big\{(t, x) \, / \, t^2-r^2 = s^2; \quad t>0 \big\}
$$
 with $s \geq 1$. 
In particular, we can introduce the following subset of $\Kcal$ limited by two hyperboloids (with $s_0 < s_1$) 
$$
\Kcal_{[s_0,s_1]} := \big\{ (t, x) \, / \, s_0^2 \leq t^2-r^2 \leq s_1^2; \quad r<t-1 \big\}
$$
whose boundary contains a section of the light cone $\Kcal$. 

With these notations, the {\sl semi-hyperboloidal frame} is, by definition, 
\be
\delu_0 := \del_t,
\qquad \delu_a:= \frac{x^a}{t} \del_t + \del_a, \qquad a=1,2,3. 
\ee
Note that the three vectors $\delu_a$ generate the tangent space to the hyperboloids. For some of our statements (for instance in Proposition~\ref{Linfini KG}), It will be convenient to also use the  vector field $\delu_{\perp} : = \del_t + \frac{x^a}{t} \del_a$,
which is orthogonal to the hyperboloids (and is proportional to the scaling vector field). 

Furthermore, given a multi-index $I = (\alpha_n, \alpha_{n-1}, \dots, \alpha_1)$ with $\alpha_i\in\{0,1,2,3\}$, we use the notation
$\del^I := \del_{\alpha_n} \del_{\alpha_{n-1}} \ldots \del_{\alpha_1}$
for the product of $n$ partial derivatives and, similarly, for $J = (a_n,a_{n-1}, \dots, a_1)$ with $a_i\in \{1,2,3\}$ we write 
$L^J = L_{a_n}L_{a_{n-1}} \ldots L_{a_1}$ for the product of $n$ Lorentz boosts.

Associated with the semi-hyperboloidal frame, one has the dual frame $\thetau^0:= dt - \frac{x^a}{t} \, dx^a$, $\thetau^a: = dx^a$. The (dual) semi-hyperboloidal frame and the (dual) natural Cartesian frame are related via
$$
\aligned
&\delu_{\alpha} = \Phi_{\alpha}^{\alpha'} \del_{\alpha'},
\quad \del_{\alpha}
= \Psi_{\alpha}^{\alpha'} \delu_{\alpha'},
\qquad 
\thetau^{\alpha} = \Psi_{\alpha'}^{\alpha} \, dx^{\alpha'},
\quad
dx^{\alpha} = \Phi^{\alpha}_{\beta'} \thetau^{\alpha'}, 
\endaligned
$$
in which the transition matrix $\left(\Phi_{\alpha}^{\beta} \right)$ and its inverse $\left(\Psi_{\alpha}^{\beta} \right)$ are 
$$
\big(\Phi_{\alpha}^{\beta} \big)
=
\left(
\aligned
&1 &&0 &&&0 &&&&0
\\
&x^1/t &&1 &&&0 &&&&0
\\
&x^2/t &&0 &&&1 &&&&0
\\
&x^3/t &&0 &&&0 &&&&1
\endaligned
\right),
\qquad
\qquad
\big(\Psi_\alpha^{\beta} \big) 
=
\left(
\aligned
&1 &&0 &&&0 &&&&0
\\
-&x^1/t &&1 &&&0 &&&&0
\\
-&x^2/t &&0 &&&1 &&&&0
\\
-&x^3/t &&0 &&&0 &&&&1
\endaligned
\right).
$$
With this notation, for any two-tensor $T_{\alpha \beta} \, dx^\alpha \otimes dx^{\beta} = \Tu_{\alpha \beta} \thetau^{\alpha} \otimes \thetau^{\beta}$, we can write 
$\Tu_{\alpha \beta} = T_{\alpha'\beta'} \Phi_\alpha^{\alpha'} \Phi_{\beta}^{\beta'}$ and $T_{\alpha \beta} = \Tu_{\alpha'\beta'} \Psi_\alpha^{\alpha'} \Psi_{\beta}^{\beta'}$. We also have the similar decompositions
$\Tu^{\alpha \beta} = T^{\alpha'\beta'} \Phi^\alpha_{\alpha'} \Phi^{\beta}_{\beta'}$ and 
$T^{\alpha \beta} = \Tu^{\alpha'\beta'} \Psi^\alpha_{\alpha'} \Psi^{\beta}_{\beta'}$.

\begin{lemma}[Decomposition of the wave operator]
For every smooth function $u$ defined in the future light-cone $\Kcal$, the flat wave operator in the semi-hyperboloidal frame reads
\be
\label{eq 1 wave decompo}
\Box u = - \frac{s^2}{t^2} \del_t\del_t u  - \frac{3}{t} \del_t u
- \frac{x^a}{t} \big( \del_t  \delu_a u + \delu_a \del_t u \big)
+ \sum_a \delu_a \delu_a u.
\ee
\end{lemma}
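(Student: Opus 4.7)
The plan is to expand the flat wave operator $\Box = -\del_t\del_t + \sum_a \del_a \del_a$ by expressing each Cartesian spatial derivative $\del_a$ in terms of the semi-hyperboloidal frame. From the definition $\delu_a = (x^a/t)\del_t + \del_a$ one immediately inverts the transition to obtain
\[
\del_a = \delu_a - \frac{x^a}{t}\del_t, \qquad a=1,2,3,
\]
which is the only structural input needed; the $\del_t$ direction is common to the two frames.

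Next I would substitute this into $\del_a\del_a u$ and carefully expand, keeping track of the fact that $\del_a$ acts on the coefficient $x^a/t$. Writing $v_a := x^a/t$, one computes $\del_a v_a = 1/t$ for each $a$ (since $t$ is independent of the spatial variables), so $\sum_a \del_a v_a = 3/t$; this is what will produce the $-(3/t)\del_t u$ term. Expanding,
\[
\del_a\del_a u \;=\; \delu_a\delu_a u - v_a\bigl(\del_t \delu_a u + \delu_a \del_t u\bigr) + v_a^2\, \del_t\del_t u - \tfrac{1}{t}\del_t u,
\]
where I have used that $\del_t$ and $\delu_a - v_a\del_t$ commute only up to the term $(1/t)\del_t$ coming from $\del_a v_a$. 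Summing over $a$ and using $\sum_a v_a^2 = r^2/t^2$, one gets
\[
\sum_a \del_a\del_a u \;=\; \sum_a \delu_a \delu_a u - \frac{x^a}{t}\bigl(\del_t \delu_a u + \delu_a \del_t u\bigr) + \frac{r^2}{t^2}\,\del_t\del_t u - \frac{3}{t}\del_t u.
\]

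Adding $-\del_t\del_t u$ and collecting the $\del_t^2$ coefficient gives $-\bigl(1 - r^2/t^2\bigr)\del_t\del_t u$. Finally, on $\Hcal_s$ one has $t^2-r^2 = s^2$, so $1 - r^2/t^2 = s^2/t^2$, yielding exactly the claimed identity \eqref{eq 1 wave decompo}. The computation is purely algebraic, with no analytic subtleties; the only place where one must be careful is the non-commutation of $\del_a$ with the coefficient $v_a = x^a/t$, which is precisely what feeds the characteristic $-(3/t)\del_t u$ term of the decomposition.
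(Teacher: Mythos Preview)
Your proof is correct; the substitution $\del_a = \delu_a - (x^a/t)\del_t$ followed by a direct expansion is exactly the natural route, and your bookkeeping of the lower-order term from $\del_a(x^a/t)=1/t$ (together with the cancellation against $\delu_a(x^a/t)$ and $\del_t(x^a/t)$) is accurate. The paper states this lemma without proof, so there is nothing to compare against beyond noting that your argument is the standard one.
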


Within the  future cone $\Kcal$, we introduce the change of variables
$\xb^0 = s: = \sqrt{t^2 - r^2}$ and $\xb^a = x^a$ 
and the associated frame which we refer to as the {\sl hyperboloidal frame} :
\bel{Hyper frame}
\aligned
& \delb_0 := \del_s = \frac{s}{t} \del_t = \frac{\xb^0}{t} \del_t = \frac{\sqrt{t^2-r^2}}{t} \del_t,
\qquad 
 \delb_a := \del_{\xb^a} = \frac{\xb^a}{t} \del_t + \del_a = \frac{x^a}{t} \del_t + \del_a.
\endaligned
\ee
The transition matrices between the hyperboloidal frame and the Cartesian frame read
$$
\big(\Phib^{\beta}_\alpha \big) 
= \left(
\begin{array}{cccc}
s/t &0 &0 &0
\\
x^1/t &1 &0 &0
\\
x^2/t &0 &1 &0
\\
x^3/t &0 &0 &1
\end{array}
\right), 
\qquad
\big(\Psib^{\ \beta}_\alpha \big)
:= \big(\Phib^{\beta}_\alpha \big)^{-1}
 = \left(
\begin{array}{cccc}
t/s &0 &0 &0
\\
-x^1/s &1 &0 &0
\\
-x^2/s &0 &1 &0
\\
-x^3/s &0 &0 &1
\end{array}
\right),
$$
so that
$
\delb_\alpha = \Phib^{\beta}_\alpha \del_{\beta}
$
and
$
\del_\alpha = \Psib^{\beta}_\alpha \delb_{\beta}.
$
Observe also that the dual hyperboloidal frame is
$d\xb^0 := ds = \frac{t}{s} \, dt - \frac{x^a}{s} \, dx^a$ and $d\xb^a := dx^a$, while the Minkowski metric in the hyperboloidal frame reads 
$$
\big( 
\minb^{\alpha \beta} \big) = \left(
\begin{array}{cccc}
-1 & \quad - x^1/s & \quad - x^2/s & \quad - x^3/s
\\
-x^1/s &1 &0 &0
\\
-x^2/s &0 &1 &0
\\
-x^3/s &0 &0 &1
\end{array}
\right).
$$

A given tensor can be expressed in any of the above three frames:
the standard frame $\{\del_\alpha \}$,
the semi-hyperboloidal frame $\{\delu_\alpha \}$,
and the hyperboloidal frame $\{\delb_\alpha \}$.
We use Roman letters, underlined Roman letters and overlined Roman letters for the corresponding components of a tensor expressed in different frame. For example, $T^{\alpha \beta} \del_\alpha \otimes \del_{\beta}$ also reads
$
T^{\alpha \beta} \del_\alpha \otimes\del_{\beta} = \Tu^{\alpha \beta} \delu_\alpha \otimes \delu_{\beta} = \Tb^{\alpha \beta} \delb_\alpha \otimes\delb_{\beta},
$
where $\Tb^{\alpha \beta} = \Psib_{\alpha'}^{\alpha} \Psib_{\beta'}^{\beta}T^{\alpha'\beta'}$
and, moreover, by setting $M :=\max_{\alpha \beta}|T^{\alpha \beta}|$, in the hyperboloidal frame we have the uniform bounds\footnote{Here and in the rest of this paper, the notation $A \lesssim B$ is used when $A \leq C B$ and $C$ is already known to be bounded (at the stage of the analysis).}  
$ (s/t)^2 \, |\Tb^{00}| + (s/t) \, |\Tb^{a0}| + |\Tb^{ab}| \lesssim M$.


\subsection{Spacetime foliation and initial data set} \label{subsec foliation}

We now discuss the construction of the initial data by following the notation in \cite[Sections VI.2 and VI.3]{CB}.  We are interested in a time-oriented spacetime $(M,g)$ that is endowed with a Lorentzian metric $g$ with signature $(-, +, +, +)$ and admits a global foliation by spacelike hypersurfaces $M_t \simeq \{t\} \times \RR^3$. 
The foliation is determined by a time function $t: M \to [0, +\infty)$. 
We introduce local coordinates adapted to the above product structure, that is, $(x^{\alpha}) = (x^0=t, x^i)$, and we choose the basis of vectors $(\del_i)$ as the `natural frame' of each slice $M_t$, and this also defines the `natural frame' $(\del_t, \del_i)$ on the spacetime $M$. By definition, the `Cauchy adapted frame' is $e_i = \del_i$ and $e_0 = \del_t - \beta^i\del_i$, where $\beta = \beta^i\del_i$ is a time-dependent field, tangent to $M_t$ and is called the {\sl shift vector,} and we impose the restriction that $e_0$ is orthogonal to each hypersurface $M_t$.
The dual frame $(\theta^{\alpha})$ of the Cauchy adapted frame $(e_{\alpha})$, by definition, is
$\theta^0 := dt$ and $\theta^i := dx^i + \beta^i dt$ and the spacetime metric reads
\be
g = -N^2 \theta^0\theta^0 + g_{ij} \theta^i\theta^j,
\ee
where the function $N>0$ is referred to as the {\sl lapse function} of the foliation.

We denote by $\gb=\gb_t$ the induced Riemannian metric associated with the slices $M_t$ and by $\nablab$ the Levi-Civita connection of $\gb$.
We also introduce the {\sl second fundamental form} $K = K_t$ defined by
$$
K(X,Y) := - g(\nabla_X n, Y)
$$
for all vectors $X,Y$ tangent to the slices $M_t$, where $n$ denotes the future-oriented, unit normal to the slices. In the Cauchy adapted frame, it reads
$$
K_{ij} = - \frac{1}{2N} \Big(\langle e_0,g_{ij} \rangle - g_{lj} \del_i\beta^l - g_{il} \del_j\beta^l \Big).
$$ 
Here, we use the notation $\langle e_0,g_{ij} \rangle$ for the action of the vector field $e_0$ on the function $g_{ij}$. Next, we define the {\sl time-operator} $D_0$
acting on a two-tensor defined on the slice $M_t$ by
$
D_0 T_{ij} = \langle e_0,T_{ij} \rangle - T_{lj} \del_i\beta^l - T_{il} \del_j\beta^l$, 
which is again a two-tensor on $M_t$. With this notation, we have
$$
K = - \frac{1}{2N}D_0 \gb.
$$

In order to express the field equations \eqref{eq main geo} as a system of partial differential equations (PDE) in wave coordinates, we need first to turn the geometric initial data set $(\Mb, \gb,K, \phi_0, \phi_1)$ into a ``PDE initial data set''.
Since the equations are second-order, we need to know the data
$g_{\alpha \beta}|_{\{t=2\}} = g_{0, \alpha \beta}$,
$\del_t g_{\alpha \beta} |_{\{t=2\}} = g_{1, \alpha \beta}$,
$\phi|_{\{t=2\}} = \phi_0$,
$\del_t \phi|_{\{t=2\}} = \phi_1$,
that is, the metric and the scalar field and their time derivative evaluated on the initial hypersurface $\{t=2\}$.
We claim that these data can be precisely determined from the prescribed geometric data $(\gb, K, \phi_0, \phi_1)$, as follows. The PDE initial data satisfy:
\begin{itemize}

\item 4 Gauss-Codazzi equations which form the system of Einstein's constraints, and  
 
\item 4 equations deduced from the (restriction of the) wave gauge condition.
  
\end{itemize}
For the PDE initial data we have to determine $22$ components, and the geometric initial data provide us with $(\gb_{ab},K_{ab}, \phi_0, \phi_1)$, that is, $14$ components in total. The remaining degrees of freedom are exactly determined by the above $8$ equations.
The well-posedness of the system composed by the above $8$ equations is a trivial property. In this work, we are concerned with the evolution part of the Einstein equations and our discussion is naturally based directly on the PDE initial data set.

The initial data sets considered in the present article are taken to be ``near'' initial data sets generating the Minkowski metric (i.e. without  matter field). More precisely, we consider initial data sets which coincide, outside a spatially compact set $\{|x| \leq 1\}$, with an asymptotically flat, spacelike hypersurface in a Schwarzschild spacetime with sufficiently small ADM mass. The following observation is in order. The main challenge overcome by the hyperboloidal foliation method applied to \eqref{eq main PDE} concerns the part of the solution supported in the  region $\Kcal_{[2,+\infty)}$ or, more precisely, the global evolution of initial data posed on an asymptotically hyperbolic hypersurface. (See \cite{PLF-YM-three} for further details.)
To guarantee this, the initial data posed on the hypersurface $\{t=2\}$ should have its support contained in the unit ball $\{r < 1\}$. Of course, in view of the positive mass theorem (associated with the constraint equation \eqref{eq constraint}), admissible non-trivial initial data must have a non-trivial tail at spatial infinity, that is,
\be
m_S := \lim_{r\rightarrow +\infty}\int_{\Sigma_r}\big(\del_jg_{ij} - \del_ig_{jj}\big)n^i d\Sigma,
\ee 
where $n$ is the outward unit norm to the sphere $\Sigma_r$ with radius $r$.
Therefore, an initial data (unless it identically vanishes) cannot be supported in a compact region.

To bypass this difficulty, we make the following observation: first, the Schwarzschild spacetime provides us with an exact solution to \eqref{eq main geo}, that is, the equations \eqref{eq main PDE} (when expressed with wave coordinates). So, we assume that our initial data $g_0$ and $g_1$ coincide with the restriction of the Schwarzschild metric and its time derivative, respectively (again in wave coordinates) on the initial hypersurface $\{t=2\}$ outside the unit ball $\{r<1\}$. 
Outside the region $\Kcal_{[2,+\infty)}$, we prove that the solution coincides with Schwarzschild spacetime and the global existence problem can be posed in the region $\Kcal_{[2,+\infty)}$.

We can also formulate the Cauchy problem directly with initial data posed on a hyperboloidal hypersurface. This appears to be, both, geometrically and physically natural. As we demonstrated earlier in \cite{PLF-YM-book}, the analysis of nonlinear wave equations is also more natural in such a setup and may lead us to {\sl uniform bounds} for the energy of the solutions. 
Yet another approach would be to pose the Cauchy problem on a light cone, but while it is physically appealing, such a formulation would introduce spurious technical difficulties (i.e.~the regularity at the tip of the cone) and does not appear to be very convenient from the analysis viewpoint.

The Schwarzschild metric in standard wave coordinates $(x^0, x^1, x^2, x^3)$ takes the form (cf.~\cite{Asanov}):
\bel{eq Sch-wave}
\aligned
\gSch_{00} &= - \frac{r-m_S}{r+m_S},
\qquad\qquad
\gSch_{ab} = \frac{r+m_S}{r-m_S} \omega_a \omega_b + \frac{(r+m_S)^2}{r^2}(\delta_{ab} - \omega_a \omega_b)
\endaligned
\ee 
with $\omega_a := x_a/r$. 
Furthermore, in order to distinguish between the behavior in the small and in the large, we introduce a smooth cut-off function $\chi: \RR^+  \to \RR$ (fixed once for all) satisfying $\chi(\tau) = 0$ for $\tau \in [0, 1/3]$ while  $\chi(\tau) = 1$ for $\tau \in [2/3, +\infty)$.

\begin{definition}
An initial data set for the Einstein-massive field system posed on the initial hypersurface $\{t=2\}$
is said to be a {\rm spatially compact perturbation of Schwarzschild spacetime} or
a compact Schwarzschild perturbation, in short,
if outside a compact set it coincides with the (vacuum) Schwarzschild space.
\end{definition}

The proof of the following result is postponed to Section~\ref{subsec proof-localization}, after investigating the nonlinear structure of the Einstein-massive field system.

\begin{proposition} \label{prop basic-localization}
Let $(g_{\alpha \beta}, \phi)$ be a solution to the system \eqref{eq main PDE} whose initial data
is a compact Schwarzschild perturbation,  
then $(g_{\alpha \beta} - {\gSch}_{\alpha \beta})$ is supported in the region $\Kcal$ 
and vanishes in a neighborhood of the 
boundary $\del_B\Kcal:= \{r=t-1, t\geq 2\}$.
\end{proposition}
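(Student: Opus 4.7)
The plan is to reduce the claim to a finite-speed-of-propagation / uniqueness argument for the Cauchy problem of the quasilinear system \eqref{eq main PDE}. The starting observation is that the pair $(\gSch_{\alpha \beta}, \phi \equiv 0)$ is itself an exact solution of \eqref{eq main PDE}: the Schwarzschild metric written in wave coordinates \eqref{eq Sch-wave} solves the reduced vacuum Einstein equations \eqref{eq main PDE a} (with vanishing right-hand side, since there is no matter and $V(0)=0$), while the trivial scalar field satisfies \eqref{eq main PDE b} because $V'(0)=0$. By the definition of a compact Schwarzschild perturbation, together with the stated support condition ``support contained in $\{r<1\}$'', the initial data of $(g,\phi)$ and those of $(\gSch,0)$ agree on $\{t=2\}\cap\{r\geq 1-\eta\}$ for some $\eta>0$.

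\textbf{Finite speed of propagation.} Fix a point $p=(t_0,x_0)$ with $t_0 \geq 2$ and $r_0 := |x_0|$ satisfying $r_0 > t_0-1-\eta/2$. In the Minkowski geometry the solid past cone from $p$ meets $\{t=2\}$ in the closed ball $\{|x-x_0|\leq t_0-2\}$, which lies entirely in $\{|x|\geq r_0-(t_0-2)>1-\eta/2\}$, hence inside the region where the two initial data coincide. Since $g$ is assumed to exist and to remain close to the Minkowski metric $m$ (because the ADM mass and the perturbation are small), the $g$-past cone from $p$ is a small deformation of the Minkowski one and also stays inside $\{r\geq 1-\eta\}$. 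The standard uniqueness theorem for the Cauchy problem of a quasilinear symmetric hyperbolic system then forces $g_{\alpha \beta}(p)=\gSch_{\alpha \beta}(p)$ and $\phi(p)=0$. As the set of such $p$ is an open spacetime neighborhood that covers the complement of $\Kcal$ and a two-sided layer around $\del_B\Kcal$, this establishes both assertions of the proposition at once.

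\textbf{The key obstacle and its resolution.} The principal subtlety is the quasilinear nature of \eqref{eq main PDE}: the causal structure depends on the unknown $g$, so the domain-of-dependence argument is in principle circular. I would close it by a continuity/bootstrap argument: let $T_\ast$ be the supremum of times $T$ for which the conclusion holds on $\{2\leq t\leq T\}$; on $[2,T_\ast)$ the metric in the exterior wedge $\Omega:=\{r\geq t-1-\eta/2\}$ is exactly $\gSch$ and hence close to $m$, so the comparison of the preceding paragraph extends the equality slightly beyond $T_\ast$, contradicting maximality unless $T_\ast=+\infty$. Equivalently, subtracting the two systems gives a linear wave-Klein-Gordon system for $(h_{\alpha \beta}:=g_{\alpha \beta}-\gSch_{\alpha \beta},\phi)$ with coefficients depending on $(g,\gSch)$, with vanishing data on $\Omega\cap\{t=2\}$ and a lateral boundary $\del_B\Kcal$ that is incoming for $g$ (since it is null for $m$ and $g$ is a small perturbation of $m$); the associated energy identity in $\Omega\cap\{2\leq t\leq T\}$ carries no boundary contribution from $\del_B\Kcal$, and Gr\"onwall's inequality forces $h\equiv 0$ and $\phi\equiv 0$ on $\Omega$.
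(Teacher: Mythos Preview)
Your overall strategy---uniqueness for the difference via a domain-of-dependence argument---is the same as the paper's, and your bootstrap/continuity idea for handling the quasilinear circularity is sound. The paper takes a somewhat different route on two points that are worth comparing.

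First, instead of comparing $g$ directly with $\gSch$ (which is singular near $r=m_S$ and, in your setup, forces the bootstrap because the relevant causal structure depends on the unknown $g$), the paper introduces a fixed auxiliary metric $p_{\alpha\beta}:=(\gSch_{\alpha\beta}-m_{\alpha\beta})\,\xi(t-r)+m_{\alpha\beta}$ that interpolates between Schwarzschild near the cone boundary and Minkowski well inside. It then derives a genuinely linear system for $q:=g-p$ with principal part $\Boxt_p$; since $p$ is explicit and globally regular, the causal structure is known a priori and no bootstrap is needed. This is cleaner than your continuity argument, though yours would also close.

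Second, and more importantly, the decisive geometric input is not that $\{r=t-1\}$ is null for $m$ and hence ``incoming under small perturbation''---that reasoning is too loose, since a Minkowski-null hypersurface can perturb to either side. What the paper actually verifies by direct computation is that $\{r=t-1\}$ is \emph{strictly spacelike} for $\gSch$ when $m_S>0$: for any nonzero tangent vector $v$ one finds
\[
(v,v)_{\gSch}=\frac{3r^2m_S+4rm_S^2+m_S^3}{(r+m_S)^2(r-m_S)}\,\sum_a|v^a|^2>0.
\]
This is what guarantees that past-directed $p$-causal curves from the exterior never enter $\Kcal$, and it is also what yields the vanishing in an \emph{open} neighborhood of the boundary rather than just on its closure. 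Your energy-identity variant would ultimately need the same computation; note also that your phrase ``carries no boundary contribution'' is not quite accurate---for a null or spacelike lateral boundary the flux term is present, it just has a favorable sign.
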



\subsection{Coordinate formulation of the nonlinear stability property}

We introduce the restriction
$$
\Hcal_s^* := \Hcal_s\cap\Kcal
$$
of the hyperboloid to the light cone
and we consider the energy functionals
$$
\aligned
E_{g,c^2}(s,u)
:& =  \int_{\Hcal_s} \Big(-g^{00}|\del_t u|^2 + g^{ab} \del_au\del_bu + \sum_a \frac{2x^a}{t}g^{a \beta} \del_{\beta}u\del_t u + c^2u^2\Big) \, dx,
\\
E_{g,c^2}^*(s,u)
:& = 
 \int_{\Hcal^*_s} \Big(-g^{00}|\del_t u|^2 + g^{ab} \del_au\del_bu + \sum_a \frac{2x^a}{t}g^{a \beta} \del_{\beta}u\del_t u + c^2u^2\Big) \, dx,
\endaligned
$$
and, for the flat Minkowski background,
$$
\aligned
E_{M,c^2}(s,u)
:& = 
 \int_{\Hcal_s} \Big(|\del_t u|^2 + \sum_a|\del_au|^2 + \sum_a \frac{2x^a}{t} \del_au\del_t u + c^2u^2\Big) \, dx,
\\
E^*_{M,c^2}(s,u)
:& = 
 \int_{\Hcal_s^*} \Big(|\del_t u|^2 + \sum_a|\del_au|^2 + \sum_a \frac{2x^a}{t} \del_au\del_t u + c^2u^2\Big) \, dx.
\endaligned
$$
We have the alternative form
$$
\aligned
E_{M,c^2}(s,u) & =  \int_{\Hcal_s} \Big((s/t)^2 |\del_t u|^2 + \sum_a|\delu_a u|^2 + c^2u^2\Big) \, dx
\\
& =  \int_{\Hcal_s} \Big(|\del_t u + (x^a/t) \del_a u|^2 + \sum_{a<b}|t^{-1} \Omega_{ab}u|^2 + c^2u^2\Big) \, dx, 
\endaligned
$$
where $\Omega_{ab} :=x^a \del_b-x^b\del_a$ denotes the spatial rotations.
When the parameter $c$ is taken to vanish, we also use the short-hand notation
$E^*_g(s,u) := E^*_{g,0}(s,u)$ and $E_g(s,u):=E_{g,0}(s,u)$. 
In addition, for all  $p \in [1,+\infty)$, the $L^p$ norms on the hyperboloids endowed with the (flat) measure $dx$  are denoted by 
$$
\|u\|_{L_f^p(\Hcal_s)}^p : = \int_{\Hcal_s}|u|^p dx =  \int_{\RR^3} \big|u\big(\sqrt{s^2 +r^2}, x\big) \big|^pdx
$$
and the $L^P$ norms on the interior of $\Hcal_s$ by
$$
\|u\|_{L^p(\Hcal_s^*)}^p := \int_{\Hcal_s\cap \Kcal}|u|^p dx = \int_{r\leq (s^2-1)/2} \big|u\big(\sqrt{s^2 +r^2}, x\big) \big|^pdx.
$$

We are now in a position to state our main result for the Einstein system \eqref{eq main PDE}. The principal part of our system is the reduced wave operator associated with the curved metric $g$  and we can write the decomposition 
\be
\Boxt_g = g^{\alpha \beta} \del_{\alpha} \del_{\beta} = \Box + H^{\alpha \beta} \del_{\alpha} \del_{\beta}, 
\ee
in which $H^{\alpha \beta} := m^{\alpha \beta} - g^{\alpha \beta}$ are functions of $h = (h_{\alpha \beta})$. When $h$ is sufficiently small, $H^{\alpha \beta}(h)$ can be expressed as a power series in the components $h_{\alpha \beta}$ and vanishes at first-order at the origin. Our analysis will (only) use the translation and boost Killing fields associated with the flat wave operator $\Box$ in the coordinates under consideration.

\begin{theorem}[Nonlinear stability of Minkowski spacetime for self-gravitating massive fields. Formulation in coordinates]
\label{thm main-PDE}
Consider the Einstein-massive field equations \eqref{eq main PDE} together with an initial data set satisfying the constraints and prescribed on the hypersurface $\{t=2\}$:
\bel{eq main initial-data}
\aligned
& g_{\alpha \beta}|_{\{t=2\}} = g_{0, \alpha \beta},
\qquad
&&&
\del_t g_{\alpha \beta}|_{\{t=2\}} = g_{1, \alpha \beta},
\\
& \phi|_{\{t=2\}} = \phi_0,
\qquad
&&&
\del_t \phi |_{\{t=2\}} = \phi_1, 
\endaligned
\ee
which, on $\{t=2\}$ outside the unit ball $\{ r < 1\}$, is assumed to coincide with the restriction of Schwarzschild spacetime of mass $m_S$ (in the wave gauge \eqref{eq Sch-wave}), i.e. 
$$
g_{\alpha \beta}(2, \cdot) = {\gSch}_{\alpha \beta}, \qquad \del_tg_{\alpha \beta}(2, \cdot) = \phi(2, \cdot) = \del_t\phi(2, \cdot) = 0
\quad \text{in } \, \big\{ r=|x| \geq 1 \big\}.
$$
Then, for any sufficiently large integer $N$, there exist constants $\vep_0,C_1, \delta>0$ and such that provided
\bel{eq main smallness-condition}
\sum_{\alpha, \beta} \| \del g_{0, \alpha \beta}, g_{1, \alpha \beta} \|_{H^N(\{r < 1\})} 
+ \|\phi_0 \|_{H^{N+1}(\{ r <1\})} 
+ \|\phi_1 \|_{H^{N}(\{ r <1\})} 
+ m_S \leq \vep\leq \vep_0
\ee
holds at the initial time, then the solution associated with the initial data \eqref{eq main initial-data} exists for all times $t \geq 2$ and, furthermore, 
\bel{eq main energy-bound}
\aligned
E_M(s, \del^IL^Jh_{\alpha \beta})^{1/2}
& \leq C_1\vep s^{\delta},
\qquad&&&& | I | + |J| \leq N,
\\
 E_{M,c^2}(s, \del^IL^J\phi)^{1/2}
& \leq C_1\vep s^{\delta+1/2},
\qquad &&&&| I | + |J| \leq N,
\\
 E_{M,c^2}(s, \del^IL^J\phi)^{1/2}
& \leq C_1\vep s^{\delta},
\qquad &&&&| I | + |J| \leq N-4.
\endaligned
\ee
\end{theorem}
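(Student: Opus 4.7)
My plan is to set up a bootstrap argument on a hyperboloidal foliation of $\mathcal{K}_{[2,+\infty)}$, following the Hyperboloidal Foliation Method outlined in the paper. First, since the initial data coincide with the Schwarzschild solution outside the unit ball $\{r<1\}$, Proposition~\ref{prop basic-localization} tells me that the perturbation $h_{\alpha\beta}-\gSch_{\alpha\beta}$ and the scalar field $\phi$ are supported inside the cone $\mathcal{K}$ and vanish near the boundary $\{r=t-1\}$. A standard local existence/domain-of-dependence argument propagates the data from $\{t=2\}$ onto the initial hyperboloid $\mathcal{H}_2^\ast$, so the problem reduces to a Cauchy problem on the hyperboloidal foliation $\{\mathcal{H}_s^\ast\}_{s\geq 2}$ with smallness controlled by $\varepsilon$ on $\mathcal{H}_2^\ast$.

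Next, I would fix the bootstrap assumptions: assume on a time interval $[2,s_1]$ that \eqref{eq main energy-bound} holds with $C_1$ replaced by $2C_1$, and aim to recover the bounds with $C_1$. The core mechanism is a hierarchy of weighted energy estimates on the $\mathcal{H}_s^\ast$, derived by commuting the admissible vector fields $\partial^I L^J$ (translations and boosts only — no scaling!) through \eqref{eq main PDE a}--\eqref{eq main PDE b}. I would split the energies into two tiers: a \emph{high-order} tier $|I|+|J|\le N$ allowed to grow as $s^\delta$, and a \emph{low-order} tier $|I|+|J|\le N-4$ kept essentially bounded (with only mild $s^\delta$ loss for $\phi$). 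The $s^{1/2}$ loss in the middle inequality of \eqref{eq main energy-bound} reflects the top-order loss inherent to Klein-Gordon energies when one derivative of $\phi$ is spent creating a boost/translation; this is unavoidable in the absence of the scaling field and must be tracked carefully through the hierarchy.

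To execute the energy inequalities, I rely on the commutator identities for $[L_a,\delu_b]$, $[L_a,(s/t)\partial_t]$, and the good behavior of the hyperboloidal frame under boosts that the paper highlights. Applying Sobolev and Hardy inequalities along $\mathcal{H}_s$ converts these $L^2$-type bounds into pointwise decay for $h_{\alpha\beta}$, $\phi$, and their derivatives with the correct $(s/t)$-weights. The decay so obtained feeds back into the right-hand sides of the commuted wave and Klein-Gordon equations; plugging this into Gronwall-type arguments on the energy inequalities closes the bootstrap provided the nonlinear source is integrable in $s$ up to a factor $s^{-1+C\varepsilon}$. The sharp $L^\infty$--$L^\infty$ estimates of \cite{PLF-YM-one} for inhomogeneous wave and Klein-Gordon equations are what guarantee this integrability once the bootstrap hypotheses give pointwise control on the coefficients. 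The key issue requiring special treatment is the quadratic nonlinearity $F_{\alpha\beta}(h;\partial h,\partial h)$: generic quadratic terms in $\partial h$ would resonate catastrophically, so I must split them into \emph{null} components (handled by the classical null condition adapted to the semi-hyperboloidal frame) and \emph{quasi-null} components $\underline{H}^{00}\partial_t\partial_t h$, which do not satisfy the null condition but are tamed by invoking the wave gauge constraints $\Box_g x^\alpha=0$ to reexpress the dangerous $\partial_t\partial_t$-direction in terms of tangential and lower-order quantities.

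The main obstacle will be exactly this quasi-null structure in the high-order energy estimate: one has to show that the wave gauge relations, differentiated $|I|+|J|$ times, still provide enough cancellation at top order to compensate for the absence of the scaling field and for the weak $s^\delta$ divergence of top-order norms. Closely related is the coupling through $16\pi V(\phi) g_{\alpha\beta} = 8\pi c^2 \phi^2 g_{\alpha\beta}$ on the right of \eqref{eq main PDE a}: the mass term injects $\phi^2$ into the metric equations, and only the Klein-Gordon $L^2$ energy (the $c^2 u^2$ term in $E_{M,c^2}$) provides usable control. Coordinating the Klein-Gordon hierarchy with the metric hierarchy, using integration along characteristics/radial rays for refined pointwise decay of $h_{\alpha\beta}$ where commutator losses are worst, and deploying the new weighted hyperboloidal Hardy inequality to absorb zero-order boundary terms, together deliver the closure of \eqref{eq main energy-bound} with $C_1$. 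A standard continuation criterion then upgrades the bootstrap interval to $[2,+\infty)$, yielding global existence and the stated energy bounds, from which future timelike geodesic completeness follows as in Theorem~\ref{MAIN-TH}.
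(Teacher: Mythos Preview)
Your outline is correct and follows essentially the same route as the paper: localization via Proposition~\ref{prop basic-localization}, transfer to $\Hcal_2^*$, a bootstrap on the hyperboloidal energies with a high-order/low-order hierarchy, closure via commutator estimates, hyperboloidal Sobolev--Hardy inequalities, the sharp $L^\infty$--$L^\infty$ wave and Klein-Gordon estimates, and integration along characteristics and radial rays. One small terminological point: what you label the ``quasi-null component $\Hu^{00}\del_t\del_t h$'' is actually the quasi-\emph{linear} piece of $\Boxt_g$; the quasi-null terms proper are the semi-linear pieces $P_{\alpha\beta}$ inside $F_{\alpha\beta}$ (Lemma~\ref{lem Ricci}), and the paper tames \emph{both} via the wave gauge---$\hu^{00}$ through Lemmas~\ref{lem wave-condition 3}--\ref{lem wave-condition 4} and $\Pu_{00}$ through Lemma~\ref{lem P 2}, the latter also exploiting the tensorial fact that $\Pu_{a\beta}$ are genuine null terms so that the refined characteristic decay of $\hu_{a\beta}$ (Proposition~\ref{prop Ch-sup 1}) closes the remaining $\del_t\hu_{a\alpha}\del_t\hu_{b\beta}$ contribution.
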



\subsection{Bootstrap argument and construction of the initial data}

We will rely on a bootstrap argument, which can be sketched as follows. We begin with our main system \eqref{eq main PDE} supplemented with initial data on the initial hyperboloid $\Hcal_2$, that is, $g_{\alpha \beta}|_{\Hcal_2}$, 
$\del_tg_{\alpha \beta}|_{\Hcal_2}$, 
$\phi|_{\Hcal_2}$, and $\del_t\phi|_{\Hcal_2}$. 
First of all, since the initial data is posed on $\{t=2\}$ and is sufficiently small, we need first to construct its restriction on the initial hyperboloid $\Hcal_2$. Since the data are compactly supported, this is immediate by the standard local existence theorem (see \cite[Chap.~11]{PLF-YM-book} for the details). We also observe that when the initial data posed on $\{t=2\}$ are sufficiently small, i.e.~\eqref{eq main smallness-condition} holds, then the corresponding data on $\Hcal_2$ satisfies the bounds
$$
\aligned
\| \del_a \del^IL^Jh_{\alpha \beta} \|_{L^2(\Hcal_2^*)} + \|\del_t\del^IL^Jh_{\alpha \beta} \|_{L^2(\Hcal_2^*)} 
& \leq C_0 \, \vep,  \qquad &&& |I|+|J| \leq N,
\\
\|\del^IL^J\phi\|_{L^2(\Hcal_2^*)} + \|\del_t\del^IL^J\phi\|_{L^2(\Hcal_2^*)} &
 \leq C_0 \, \vep, \qquad &&& |I|+|J| \leq N.
\endaligned
$$ 

We outline here the bootstrap argument and refer to \cite[Section~2.4]{PLF-YM-book} for further details. Throughout we fix a sufficiently large integer $N$
and we proceed by assuming that the following energy bounds have been established within a hyperbolic time interval $[2,s^*]$:
\begin{subequations} \label{eq 1 bootstrap}
\bel{eq 1 bootstrap a}
\aligned
E_M(s, \del^IL^Jh_{\alpha \beta})^{1/2} 
& \leq C_1\vep s^{\delta}, \quad &&& N-3\leq |I|+|J| \leq N,
\\
E_{M,c^2}(s, \del^IL^J\phi)^{1/2} 
& \leq C_1\vep s^{1/2 +\delta},
\quad &&& N-3\leq |I|+|J| \leq N,
\endaligned
\ee 
\bel{eq 1 bootstrap b}
E_M(s, \del^IL^Jh_{\alpha \beta})^{1/2} + E_{M,c^2}(s, \del^IL^J\phi)^{1/2} \leq C_1\vep s^{\delta},
\quad  |I|+|J| \leq N-4,
\ee
\end{subequations}
and, more precisely, we choose 
$$
s^* := \sup \Big\{s_1 \, \big| \, \text{for all }2\leq s\leq s_1, \text{the bounds \eqref{eq 1 bootstrap} hold} \Big\}.
$$
Since standard arguments for local existence do apply (see \cite[Chap. 11]{PLF-YM-book}) and, 
clearly, $s^* $ is not trivial in the sense that, if we choose $C_1>C_0$, then by continuity we have $s^*>2$.

By continuity, when $s=s^*$ at least one of the following equalities holds:
\bel{eq 2 bootstrap}
\aligned
E_M(s, \del^IL^Jh_{\alpha \beta})^{1/2}
& = C_1\vep s^{\delta},
\quad &&& N-3\leq |I|+|J| \leq N,
\\
E_{M,c^2}(s, \del^IL^J\phi)^{1/2}
& = C_1\vep s^{1/2 +\delta},
\quad &&& N-3\leq |I|+|J| \leq N,
\\
E_M(s, \del^IL^Jh_{\alpha \beta})^{1/2} + E_{M,c^2}(s, \del^IL^J\phi)^{1/2}
& = C_1\vep s^{\delta},
\quad &&& |I|+|J| \leq N-4.
\endaligned
\ee
Our main task for the rest of this work is to derive from \eqref{eq 1 bootstrap} the {\sl improved energy bounds} :
\bel{eq 3 bootstrap}
\aligned
E_M(s, \del^IL^Jh_{\alpha \beta})^{1/2} 
& \leq \frac{1}{2}C_1\vep s^{\delta},  
\quad &&&N-3\leq |I|+|J| \leq N,
\\
E_{M,c^2}(s, \del^IL^J\phi)^{1/2} 
& \leq \frac{1}{2}C_1\vep s^{1/2 +\delta},
\quad &&&N-3\leq |I|+|J| \leq N,
\\
E_M(s, \del^IL^Jh_{\alpha \beta})^{1/2} + E_{M,c^2}(s, \del^IL^J\phi)^{1/2} 
& \leq \frac{1}{2}C_1\vep s^{\delta},
\quad  &&&|I|+|J| \leq N-4.
\endaligned
\ee 
By comparing with  \eqref{eq 2 bootstrap}, we will be able to conclude that the interval $[2,s^*]$ extends to the maximal time of existence of the local solution. Then by a standard local existence argument, this local solution extends to all time values $s$.


\subsection{Outline of the Monograph}

We must therefore derive the improved energy bounds \eqref{eq 3 bootstrap} and, to this end, the rest of this work is organized as follows. In Section~\ref{section-2}, we begin by presenting various analytical tools which are required for the analysis of (general functions or) solutions defined on the hyperboloidal foliation. In particular, we establish first an energy estimate for wave equations and Klein-Gordon equations on a curved spacetime, then a sup-norm estimate based on characteristic integration, and next sharp $L^\infty$--$L^\infty$ estimates for wave equations and for Klein-Gordon equations, as well as Sobolev and Hardy inequalities on hyperboloids. 

In Section~\ref{section-3-QUASI}, we discuss the reduction of the Einstein-massive field system and we establish the quasi-null structure in wave gauge. We provide a classification of all relevant nonlinearities arising in the problem and we carefully study the nonlinear structure of the Einstein equations in the semi-hyperboloidal frame.  

Next, in Section~\ref{section-4} we formulate our full list of bootstrap assumptions and we write down basic estimates that directly follow from these assumptions. In Section~\ref{section-5}, we are in a position to provide a preliminary control of the nonlinearities of the Einstein equations in the $L^2$ and $L^\infty$ norms.  In Section~\ref{sect--7}, we establish estimates which are tight to the wave gauge condition.
 
 An estimate of the second-order derivatives of the metric coefficients is then derived in Section~\ref{section-6.5}, while in Section~\ref{section-7} we obtain a sup-norm estimate based on integration on characteristics and we apply it to the control of quasi-null terms. 

We are then able, in Section~\ref{section-8}, to derive the low-order ``refined'' energy estimate for the metric and next, in Section~\ref{section-9}, to control the low-order sup-norm of the metric as well as of the scalar field. In Section~\ref{section-10}, we improve our bound on the high-order energy for the metric components and the scalar field. In Section~\ref{section-12}, based on this improved energy bound at high-order, we establish high-order sup-norm estimates. Finally, in Section~\ref{section-13}, we improve the low-order energy bound on the scalar field and we conclude our bootstrap argument.


\section{Functional Analysis on Hyperboloids of Minkowski Spacetime}
\label{section-2}

\subsection{Energy estimate on hyperboloids}

In this section, we need to adapt the techniques we introduced earlier in \cite{PLF-YM-book,PLF-YM-one} to 
the compact Schwarzschild perturbations under consideration in the present Monograph, since these techniques were established for compactly supported initial data. Here, the initial data is not supported in the unit ball but coincides with Schwarzschild space outside the unit ball. 
As mentioned in the previous section, the curved part of the metric (for a solution of the Einstein-massive field system with a compact Schwarzschild perturbation) is not compactly supported in the light-cone $\Kcal$, while the hyperboloidal energy estimate developed in \cite{PLF-YM-book} 
were assuming this. Therefore, we need to revisit the energy estimate and take
suitable boundary terms into account. 

\begin{proposition}[Energy estimate. I]\label{prop 1 energy-W}
Let $(h_{\alpha \beta}, \phi)$ be a solution of the Einstein-massive field system associated with an initial data set that is a compact Schwarzschild perturbation with mass $m_S \in (0,1)$. Assume that there exists a constant $\kappa>1$ such that
\bel{eq energy 1}
\kappa^{-1} E_M^*(s,u)^{1/2} \leq E_g^*(s,u)^{1/2} \leq \kappa E_M^*(s,u)^{1/2}. 
\ee
Then, there exists a positive constant $C$ (depending upon $N$ and $\kappa$) such that the following energy estimate holds (for all $\alpha, \beta \leq 3$, and $|I|+|J| \leq N$):
\bel{eq energy 2}
\aligned
E_M(s, \del^IL^J h_{\alpha \beta})^{1/2} & \leq   CE_g(2, \del^IL^J h_{\alpha\beta} )^{1/2} + C m_S
+ C\int_2^s\|\del^IL^J F_{\alpha \beta} \|_{L^2(\Hcal_\tau^*)}d\tau
\\
& \quad +  C\int_2^s\|[\del^IL^J,H^{\mu\nu} \del_\mu\del_\nu]h_{\alpha \beta} \|_{L^2(\Hcal_\tau^*)}d\tau
+  C\int_2^s M_{\alpha \beta}[\del^IL^J h](\tau) \, d\tau 
\\
& \quad +  C\int_2^s
\Big(\|\del^IL^J(\del_{\alpha} \phi\del_{\beta} \phi) \|_{\Hcal_{\tau}^*} + \|\del^IL^J(\phi^2 g_{\alpha\beta}) \|_{\Hcal_{\tau}^*} \Big) \, d\tau,
\endaligned
\ee
in which $M_{\alpha \beta}[\del^IL^J h](s)$ is a positive function such that
\bel{eq energy 3}
\aligned
&\int_{\Hcal_s^*}(s/t) \big|\del_{\mu}g^{\mu\nu} \del_{\nu} \big(\del^IL^Jh_{\alpha \beta} \big) \del_t\big(\del^IL^Jh_{\alpha \beta} \big)- \frac{1}{2} \del_tg^{\mu\nu} \del_{\mu} \big(\del^IL^Jh_{\alpha \beta} \big) \del_{\nu} \big(\del^IL^Jh_{\alpha \beta} \big) \big| \, dx
\\
& \leq M_{\alpha \beta}[\del^IL^Jh](s)E_M^*(s, \del^IL^Jh_{\alpha \beta})^{1/2}.
\endaligned
\ee
\end{proposition}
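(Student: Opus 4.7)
The plan is to derive a spacetime energy identity for $u := \del^I L^J h_{\alpha\beta}$ on the truncated slab $\Kcal_{[2,s]}$, and then use the equivalence \eqref{eq energy 1} to transfer the resulting curved bound to $E_M(s,u)^{1/2}$. First, I would apply $\del^I L^J$ to the reduced Einstein equation \eqref{eq main PDE a}. Writing $\Boxt_g = \Box + H^{\mu\nu}\del_\mu\del_\nu$ and commuting, I obtain
\begin{equation*}
\Boxt_g u = \del^I L^J F_{\alpha\beta} - 16\pi\,\del^I L^J\!\big(\del_\alpha\phi\,\del_\beta\phi\big) - 16\pi\,\del^I L^J\!\big(V(\phi)g_{\alpha\beta}\big) + \big[\del^I L^J,\,H^{\mu\nu}\del_\mu\del_\nu\big]h_{\alpha\beta},
\end{equation*}
whose right-hand side has $L^2(\Hcal_\tau^*)$ norm precisely the sum of the four integrands of \eqref{eq energy 2}.

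Next, I would multiply $\Boxt_g u$ by $\del_t u$, rewrite it as
\begin{equation*}
\del_\mu\!\Big(g^{\mu\nu}\del_\nu u\,\del_t u - \tfrac12\,\delta^\mu_0\,g^{\lambda\nu}\del_\lambda u\,\del_\nu u\Big) + \tfrac12\del_t g^{\lambda\nu}\del_\lambda u\,\del_\nu u - \del_\mu g^{\mu\nu}\del_\nu u\,\del_t u,
\end{equation*}
and integrate over $\Kcal_{[2,s]}$. Stokes' theorem produces the top/bottom hyperboloidal pieces, which give $\pm E_g^*(\cdot,u)$ (up to normalization), together with a lateral flux through $\{r=t-1,\,2\le\sqrt{t^2-r^2}\le s\}$ that is new compared to the compactly supported framework of \cite{PLF-YM-book}. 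Written in the hyperboloidal $(s/t)\,dx$ measure, the bulk remainder is exactly the integrand bounded in \eqref{eq energy 3}, so Cauchy-Schwarz against $\|\del_t u\|_{L^2_f(\Hcal^*_\tau)} \lesssim E_M^*(\tau,u)^{1/2}$ delivers both the $M_{\alpha\beta}$ term and the four source integrals.

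The key new contribution is the lateral flux. By Proposition~\ref{prop basic-localization}, on a neighborhood of $\del_B\Kcal$ the solution coincides with the Schwarzschild metric, so $h_{\alpha\beta}$ agrees there with the explicit perturbation $\gSch_{\alpha\beta}-m_{\alpha\beta}$ of \eqref{eq Sch-wave}, which is pointwise $O(m_S/r)$. Since the Lorentz boosts $L_a$ are tangent to the outgoing cone from the origin and act on $m_S/r$-type quantities without losses, a direct calculation yields $|\del^I L^J h_{\alpha\beta}|^2 = O(m_S^2/r^4)$ along the cone; combined with the $r^2\,dr$ area element this produces a total lateral flux of order $m_S^2$, uniformly in $s$, whose square root gives the $Cm_S$ term in \eqref{eq energy 2}.

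The final step is assembly: after using \eqref{eq energy 1} to pass from $E_g^*$ to $E_M^*$ and noting $E_M(s,u)=E_M^*(s,u)$ (since $h_{\alpha\beta}-\gSch_{\alpha\beta}+m_{\alpha\beta}$ is supported in $\Kcal$), the classical Gronwall-type division by $E_M^*(s,u)^{1/2}$ extracts the square root to yield \eqref{eq energy 2}. I expect the \textbf{main obstacle} to be a rigorous verification of the lateral flux bound: one must confirm that the sharp $1/r$ decay of the Schwarzschild perturbation in wave coordinates, combined with the tangency of the boost fields to the light cone, really produces an $s$-uniform bound despite the unbounded area element. This argument is structurally parallel to the interior analysis but relies solely on the explicit Schwarzschild form \eqref{eq Sch-wave}.
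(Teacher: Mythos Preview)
Your overall strategy---multiplier $\del_t u$, integrate over $\Kcal_{[2,s]}$, Stokes, handle the lateral flux via the explicit Schwarzschild form---matches the paper's approach. However, there are two genuine gaps in your final assembly.

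First, the claim $E_M(s,u)=E_M^*(s,u)$ is false. What is supported in $\Kcal$ is $h_{\alpha\beta}-(\gSch_{\alpha\beta}-m_{\alpha\beta})$, not $h_{\alpha\beta}$ itself. Outside $\Kcal$ one has $h_{\alpha\beta}=\gSch_{\alpha\beta}-m_{\alpha\beta}\neq 0$, so $u=\del^IL^J h_{\alpha\beta}$ does not vanish on $\Hcal_s\setminus\Hcal_s^*$, and the full energy $E_M(s,u)$ exceeds the interior energy $E_M^*(s,u)$ by a nontrivial exterior piece. The paper handles this via a separate explicit computation (its Lemma~\ref{lem energy 3}) showing $\int_{\Hcal_s\cap\Kcal^c}\big(|\delu_a\del^IL^J h_{S\alpha\beta}|^2+(s/t)|\del_t\del^IL^J h_{S\alpha\beta}|^2\big)\,dx\leq Cm_S^2$; this is what produces the $Cm_S$ contribution when passing from $E_M^*$ to $E_M$, and it is a distinct argument from the lateral-flux bound.

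Second, the ``Gronwall-type division by $E_M^*(s,u)^{1/2}$'' is not immediate. After differentiating the Stokes identity in $s$, the lateral flux contributes a term of order $m_S^2 s^{-3}$ that does \emph{not} carry a factor $E_M^*(s,u)^{1/2}$, so you face a nonlinear differential inequality of the form $y(s)y'(s)\leq g(s)y(s)+Cm_S^2 s^{-3}$. Plain division fails where $y$ is small; the paper isolates this in a separate ODE lemma (Lemma~\ref{lem energy 2}) that yields $y(s)\leq y(2)+Cm_S+\int_2^s g$. Your sketch is salvageable, but you should make both of these steps explicit.
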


The proof of this estimate is done as follows: in the exterior part of the hyperboloid (i.e. $\Hcal_s\cap \Kcal^c$), the metric coincides with the Schwarzschild metric and we can calculate the energy by an explicit expression. On the other hand, the interior part is bounded as follows.

\begin{lemma} \label{lem energy 1}
Under the assumptions in Proposition \ref{prop 1 energy-W}, one has 
\bel{eq energy 4}
\aligned
E^*_M(s, \del^IL^J h_{\alpha \beta})^{1/2} & \leq  CE_g^*(2, \del^I L^Jh_{\alpha \beta})^{1/2} + C m_S
+ C\int_2^s M_{\alpha \beta}(\tau, \del^IL^Jh_{\alpha \beta}) \, d\tau
\\
& \quad +  C\int_2^s\|\del^IL^J F_{\alpha \beta} \|_{L^2(\Hcal_\tau^*)}d\tau
+ C\int_2^s\|[\del^IL^J,H^{\mu\nu} \del_\mu\del_\nu]h_{\alpha \beta} \|_{L^2(\Hcal_\tau^*)}d\tau
\\
& \quad +  C\int_2^s\big(\|\del^IL^J(\del_{\alpha} \phi\del_{\beta} \phi) \|_{L^2(\Hcal_\tau^*)} + \|\del^IL^J\big(\phi^2 g_{\alpha\beta} \big) \|_{L^2(\Hcal_\tau^*)} \big) d\tau. 
\endaligned
\ee
\end{lemma}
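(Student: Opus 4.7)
The plan is to derive the estimate via the standard multiplier method, applied after commuting $\del^I L^J$ through the reduced wave equation \eqref{eq main PDE a}. Since $\Box$ commutes with translations $\del_\alpha$ and Lorentz boosts $L_a$, setting $u := \del^I L^J h_{\alpha\beta}$ we obtain
$$
\Boxt_g u \;=\; \del^I L^J F_{\alpha\beta} \;+\; [\del^I L^J,\, H^{\mu\nu}\del_\mu\del_\nu] h_{\alpha\beta} \;-\; 16\pi\,\del^I L^J(\del_\alpha\phi\,\del_\beta\phi) \;-\; 16\pi\,\del^I L^J\bigl(V(\phi)\,g_{\alpha\beta}\bigr).
$$
The next step is to use the pointwise identity
$$
2\,\del_t u \cdot g^{\mu\nu}\del_\mu\del_\nu u \;=\; \del_\mu\!\Bigl(2 g^{\mu\nu}\del_\nu u\,\del_t u - \delta^\mu_t\, g^{\alpha\beta}\del_\alpha u\,\del_\beta u\Bigr) \;-\; 2\,\del_\mu g^{\mu\nu}\,\del_\nu u\,\del_t u \;+\; \del_t g^{\mu\nu}\,\del_\mu u\,\del_\nu u
$$
and integrate the divergence over the curved slab $\Kcal_{[2,s]}$, whose boundary consists of $\Hcal_s^\ast$, $\Hcal_2^\ast$, and the lateral piece on $\del_B\Kcal=\{r=t-1\}$.

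On the two hyperboloidal caps, the divergence theorem using the outward unit normal $(t,-x)/s$ produces exactly $E_g^\ast(s,u)-E_g^\ast(2,u)$ (this is an elementary computation that recovers the definition of $E_g$). The source integrals in the right-hand side of \eqref{eq energy 4} come from the Cauchy-Schwarz bound of $\|\del_t u\|_{L^2(\Hcal_\tau^\ast)}\lesssim E_M^\ast(\tau,u)^{1/2}$ against each of the four forcing terms listed above, integrated in $\tau\in[2,s]$. The remaining interior error terms $-2\del_\mu g^{\mu\nu}\,\del_\nu u\,\del_t u+\del_t g^{\mu\nu}\,\del_\mu u\,\del_\nu u$ are exactly those controlled by $M_{\alpha\beta}[\del^I L^J h]$ in the normalization \eqref{eq energy 3} (the factor $(s/t)$ in \eqref{eq energy 3} is the Jacobian converting the flat integration on $\Hcal_\tau^\ast$ to the hyperbolic-time integration against $d\tau\,dx$). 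After these identifications, I would divide through by $E_M^\ast(s,u)^{1/2}$ in the customary way (writing $\frac{d}{d\tau}E_M^\ast(\tau,u) \le 2 E_M^\ast(\tau,u)^{1/2}\cdot(\mathrm{source})$ and taking square roots) and use the equivalence \eqref{eq energy 1} to pass from $E_g^\ast$ to $E_M^\ast$ with constant $\kappa$.

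The main obstacle, and the reason this lemma is not a direct transcription of the estimate in \cite{PLF-YM-book}, is the flux through the lateral null-like boundary $\del_B\Kcal$. Here I would exploit Proposition~\ref{prop basic-localization}: in an open neighborhood of $\del_B\Kcal$ the metric coincides with $\gSch$ and the scalar field vanishes, so $u=\del^I L^J(\gSch-m)_{\alpha\beta}$ can be read off from the explicit expression \eqref{eq Sch-wave}. Since $\del_B\Kcal$ is a translate of the Minkowski light cone, its conormal is the null form $dt-dr$ and the boundary integrand decomposes into (i) a manifestly nonnegative quadratic expression in the good (tangential and null) derivatives of $u$, which I discard for the inequality, and (ii) a Schwarzschild-only contribution pointwise bounded by $C\,m_S^2\, r^{-2}$ thanks to $|\del^I L^J \gSch|\lesssim m_S\,(1+r)^{-1}$ on the set $r=t-1\ge 1$, whose integral over $\del_B\Kcal\cap\Kcal_{[2,s]}$ is finite and contributes at most $C m_S^2$. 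Taking square roots yields the $C\,m_S$ term in \eqref{eq energy 4}. The technical care needed to justify the splitting of the null flux into a good square plus a Schwarzschild remainder—and to verify that the remainder is indeed $O(m_S)$ uniformly in $s$ after summation over the finitely many components of $\del^I L^J h$ with $|I|+|J|\le N$—is the part of the argument where the assumption that the perturbation is spatially compact (in the sense of coinciding with Schwarzschild at infinity) is essential.
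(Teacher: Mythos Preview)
Your multiplier-and-Stokes strategy matches the paper's, but the treatment of the lateral boundary $B_{[2,s]}=\{r=t-1\}$ has a sign error. With the flux vector $W=\big(-\tfrac12 g^{00}|\del_t u|^2+\tfrac12 g^{ab}\del_a u\del_b u,\ -g^{a\nu}\del_\nu u\,\del_t u\big)$ and the Euclidean outward normal $n=(-1,x/r)/\sqrt2$, one computes for the Minkowski part
\[
W\cdot n=-\tfrac1{2\sqrt2}\big((\del_t u+\del_r u)^2+|\nabla u|^2-|\del_r u|^2\big)\le 0.
\]
In the identity $E_g^*(s)=E_g^*(2)-2\int_{B_{[2,s]}}W\cdot n+\int\mathcal F$ the lateral contribution $-2\int W\cdot n$ is therefore \emph{nonnegative} and cannot be discarded for an upper bound. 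Your null-flux intuition applies to a domain-of-dependence cone that contracts to the future; here the lateral cone expands, and the sign reverses.

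The paper does not argue by sign. It uses that near $B$ the solution equals the \emph{static} Schwarzschild metric \eqref{eq Sch-wave}, so $\del_t u=0$; this collapses $W$ on $B$ to $W=\big(\tfrac12\gSch^{ab}\del_a u\,\del_b u,\,0\big)$. Since $|\del_a u|\lesssim m_S r^{-2}$ (you quoted $|u|\lesssim m_S r^{-1}$, one derivative short), the flux density is $O(m_S^2 r^{-4})$, and its integral over $B_{[2,s]}$ with area element $\sim r^2\,dr\,d\omega$ is $\lesssim m_S^2$ uniformly in $s$. Once this bound is in place your integrated Gronwall argument does work; the paper instead differentiates in $s$ and applies the ODE Lemma~\ref{lem energy 2} to $y\,y'\le g(\tau)y+Cm_S^2\tau^{-3}$, which is an equivalent route.
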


\begin{proof} We consider the wave equation
$g^{\mu\nu} \del_{\mu} \del_{\nu}h_{\alpha \beta} = F_{\alpha \beta} - 16\pi \del_{\alpha} \phi\del_{\beta} \phi - 8\pi c^2\phi^2 g_{\alpha\beta}$  satisfied by the curved part of the metric
and differentiate it (with $\del^IL^J$ with $|I|+|J| \leq N$): 
$$
\aligned
g^{\mu\nu} \del_{\mu} \del_{\nu} \del^IL^Jh_{\alpha \beta}
 =
&  -[\del^IL^J,H^{\mu\nu} \del_{\mu} \del_{\nu}]h_{\alpha \beta}
  +   \del^IL^JF_{\alpha \beta} 
\\
&- 16\pi \del^IL^J\big(\del_{\alpha} \phi\del_{\beta} \phi\big) -8\pi c^2\del^IL^J\big(\phi^2 g_{\alpha\beta} \big). 
\endaligned
$$
Using the multiplier $- \del_t \del^IL^J h_{\alpha \beta}$, we obtain the general identity 
\bel{eq energy 5}
\aligned
&\del_t\big(-(1/2)g^{00}|\del_t\del^IL^Jh_{\alpha \beta}|^2 + (1/2)g^{ab} \del_a \del^IL^Jh_{\alpha \beta} \del_b \del^IL^Jh_{\alpha \beta} \big)
-  \del_a \big(g^{a \nu} \del_{\nu} \del^IL^Jh_{\alpha \beta} \del_t\del^IL^Jh_{\alpha \beta} \big)
 \\
& = \frac{1}{2} \del_tg^{\mu\nu} \del_{\mu} \del^IL^Jh_{\alpha \beta} - \del_{\mu}g^{\mu\nu} \del_t\del^IL^Jh_{\alpha \beta} \del_{\nu} \del^IL^Jh_{\alpha \beta}
\\
& \quad + [\del^IL^J,H^{\mu\nu} \del_{\mu} \del_{\nu}]h_{\alpha \beta} \del_t\del^IL^Jh_{\alpha \beta}
 - \del^IL^JF_{\alpha \beta} \del_t\del^IL^Jh_{\alpha \beta}
\\
& \quad + 16\pi \del^IL^J\big(\del_{\alpha} \phi\del_{\beta} \phi\big) \del_t\del^IL^J h_{\alpha \beta}
 +8\pi c^2\del^IL^J\big(\phi^2 g_{\alpha\beta}  \big) \del_t\del^IL^Jh_{\alpha \beta}. 
\endaligned
\ee
For simplicity, we write $u = \del^IL^J h_{\alpha \beta}$ and
$W := \big(-(1/2)g^{00}|\del_t u|^2 + (1/2)g^{ab} \del_au\del_bu,  - g^{a \nu} \del_{\nu}u\del_t u\big)$
for 
the energy flux, while 
$$
\aligned
\mathcal{F} :=
&\frac{1}{2} \del_tg^{\mu\nu} \del_{\mu} \del^IL^Jh_{\alpha \beta} - \del_{\mu}g^{\mu\nu} \del_t\del^IL^Jh_{\alpha \beta} \del_{\nu} \del^IL^Jh_{\alpha \beta}
\\
& \quad + [\del^IL^J,H^{\mu\nu} \del_{\mu} \del_{\nu}]h_{\alpha \beta} \del_t\del^IL^Jh_{\alpha \beta}
 - \del^IL^JF_{\alpha \beta} \del_t\del^IL^Jh_{\alpha \beta}
\\
& \quad + 16\pi \del^IL^J\big(\del_{\alpha} \phi\del_{\beta} \phi\big) \del_t\del^IL^J h_{\alpha \beta}
 +8\pi c^2\del^IL^J\big(\phi^2 g_{\alpha\beta}  \big) \del_t\del^IL^Jh_{\alpha \beta}. 
\endaligned
$$
Then, by defining $\text{Div}$ with respect to the Euclidian metric on $\RR^{3+1}$, \eqref{eq energy 5} reads 
$\text{Div}W = \mathcal{F}$ 
and we can next integrate this equation in the region $\Kcal_{[2,s]}$ and write 
$
\int_{\Kcal_{[2,s]}} \text{Div}W dxdt = \int_{\Kcal[2,s]} \mathcal{F} dxdt.
$
In the left-hand side, we apply Stokes' formula:
$$
\int_{\Kcal_{[2,s]}} \text{Div}W dxdt
 = \int_{\Hcal_s^*}W\cdot {\sl n}d\sigma
 + \int_{\Hcal_2^*}W\cdot {\sl n}d\sigma
 + \int_{B_{[2,s]}}W\cdot {\sl n}d\sigma, 
$$
where $B_{[2,s]}$ is the  
boundary of $\Kcal_{[2,s]}$, which is $\big\{(t, x)|t=r+1, 3/2 \leq r \leq (s^2-1)/2 \big\}$.
An easy calculation shows that
\bel{eq energy 7}
\aligned
\int_{\Kcal_{[2,s]}} \text{Div}W dxdt 
& =  \frac{1}{2} \Big(E_g^*(s, \del^IL^J h_{\alpha \beta}) - E_g^*(2, \del^IL^J h_{\alpha \beta}) \Big)
\\
 & \quad +  \int_{3/2\leq r\leq (s^2-1)/2} \int_{\mathbb{S}^2}W\cdot (- \sqrt{2}/2, \sqrt{2}x^a/2r) \sqrt{2}r^2drd\omega ds, 
\endaligned
\ee
where $d\omega$ is the standard Lebesgue measure on $\mathbb{S}^2$. Recall that $g_{\alpha \beta} = {\gSch}_{\alpha \beta}$  in a neighborhood of $B_{[2,s]}$.
An explicit calculation shows that 
$W = \big((1/2){\gSch}^{ab} \del_a \del^IL^J {h_S}_{\alpha \beta} \del_b\del^I L^J {h_S}_{\alpha \beta},0\big)$
on $B_{[2,s]}$.
We have 
$$
\aligned
&
\int_{3/2\leq r\leq (s^2-1)/2} \int_{\mathbb{S}^2}W\cdot (- \sqrt{2}/2, \sqrt{2}x^a/2r) \sqrt{2}r^2drd\omega
 \\
& =-2\pi \int_{3/2}^{(s^2-1)/2}\gSch^{ab} \del_a \del^IL^J {h_S}_{\alpha \beta} \del_b \del^IL^J {h_S}_{\alpha \beta}r^2dr ds
\endaligned
$$
with ${h_S}_{\alpha \beta} := {\gSch}_{\alpha \beta} - m_{\alpha \beta}$. This leads us to
$$
 \frac{d}{ds} \int_{B_{[2,s]}}W\cdot {\sl n}d\sigma
 = - \frac{\pi}{2} s(s^2-1)^2  \gSch^{ab} \del_a \del^IL^J {h_S}_{\alpha \beta} \del_b \del^IL^J {h_S}_{\alpha \beta} \bigg|_{r=\frac{s^2-1}{2}}. 
$$
Assuming that $m_S$ is sufficiently small, we see that
$$
\big| \gSch^{ab} \del_a \del^IL^J {h_S}_{\alpha \beta} \del_b \del^IL^J {h_S}_{\alpha \beta} \big| \leq Cm_S^2r^{-4} \leq Cm_S^2 s^{-8},
\qquad 3/2\leq r.
$$
We have 
\bel{eq energy 8}
\bigg|\frac{d}{ds} \int_{B_{[2,s]}}W\cdot {\sl n}d\sigma \bigg| \leq Cm_S^2 s^{-3}.
\ee

Now, we combine $\text{Div}W = \mathcal{F}$ and \eqref{eq energy 7} and differentiate in $s$:
$$
\frac{1}{2} \frac{d}{ds}E_g^*(s, \del^IL^J h_{\alpha \beta}) + \frac{d}{ds} \int_{B_{[2,s]}}W\cdot {\sl n}d\sigma = \frac{d}{ds} \int_{\Kcal_{[2,s]}} \mathcal{F} \, dxdt, 
$$
which leads us to
$$
E_g^*(s, \del^IL^J h_{\alpha \beta})^{1/2} \frac{d}{ds} \big(E_g^*(s, \del^IL^J h_{\alpha \beta})^{1/2} \big) 
=
 - \frac{d}{ds} \int_{B_{[2,s]}}W\cdot {\sl n}d\sigma + \frac{d}{ds} \int_2^s\int_{\Hcal_s^*}(s/t) \mathcal{F} \, dxds. 
$$
Then, in view of \eqref{eq energy 8} we have 
\bel{eq energy 9}
E_g^*(s, \del^IL^J h_{\alpha \beta})^{1/2} \frac{d}{ds} \big(E_g^*(s, \del^IL^J h_{\alpha \beta})^{1/2} \big) \leq
\int_{\Hcal_s^*}(s/t)|\mathcal{F}| \, dx + Cm_S^2 s^{-3}.
\ee
In view of the notation and assumptions in Proposition \ref{prop 1 energy-W}, we have
$$
\aligned
&\int_{\Hcal_s^*} \big|(s/t) \mathcal{F} \big| \, dx
\leq \int_{\Hcal_s^*}|(s/t) \del_t\del^IL^Jh_{\alpha \beta} \del^IL^JF_{\alpha \beta}| dx
\\
&+ \int_{\Hcal_s^*}|(s/t) \del_t\del^IL^Jh_{\alpha \beta}[\del^IL^J,H^{\mu\nu} \del_{\mu} \del_{\nu}]h_{\alpha \beta}| dx
 + 16\pi\int_{\Hcal_s^*}|(s/t) \del_t\del^IL^Jh_{\alpha \beta} \del^IL^J(\del_{\alpha} \phi\del_{\beta} \phi)| \, dx
\\
&+ 8\pi c^2\int_{\Hcal_s^*}|(s/t) \del_t\del^IL^Jh_{\alpha \beta} \del^IL^J\big(\phi^2 g_{\alpha\beta}  \big)| \, dx
 + M[\del^IL^J h](s)E_M^*(s, \del^IL^J h_{\alpha \beta})^{1/2}
\\
&\leq \|(s/t) \del_t\del^IL^J h_{\alpha \beta} \|_{L^2(\Hcal_s^*)} \big(\|\del^IL^JF_{\alpha \beta} \|_{L^2(\Hcal_s^*)} + \|\del^IL^J,[H^{\mu\nu} \del_\mu\del_\nu]h_{\alpha \beta} \|_{L^2(\Hcal_s^*)} \big)
\\
& \quad + C\|(s/t) \del_t\del^IL^J h_{\alpha \beta} \|_{L^2(\Hcal_s^*)} \big(\|\del^IL^J(\del_{\alpha} \phi\del_{\beta} \phi) \|_{L^2(\Hcal_s^*)} + \|\del^IL^J\big(\phi^2 g_{\alpha\beta} \big) \|_{L^2(\Hcal_s^*)} \big)
\\
&\quad+M[\del^IL^J h](s)E^*_M(s, \del^IL^J h_{\alpha \beta})^{1/2},
\endaligned
$$
 so that
$$
\aligned
\int_{\Hcal_s^*} \big|(s/t) \mathcal{F} \big| \, dx
& \leq CE_M^*(s, \del^IL^J h_{\alpha \beta})^{1/2} \, 
\bigg(\|\del^IL^JF_{\alpha \beta} \|_{L^2(\Hcal_s^*)} + \|\del^IL^J,[H^{\mu\nu} \del_\mu\del_\nu]h_{\alpha \beta} \|_{L^2(\Hcal_s^*)}
\\
& \quad +  \|\del^IL^J(\del_{\alpha} \phi\del_{\beta} \phi) \|_{L^2(\Hcal_s^*)} + \|\del^IL^J\big(\phi^2  g_{\alpha\beta} \big) \|_{L^2(\Hcal_s^*)} +M[\del^IL^J h](s) \bigg). 
\endaligned
$$
For simplicity, we write 
$$
\aligned
L(s) :& = 
\|\del^IL^JF_{\alpha \beta} \|_{L^2(\Hcal_s^*)} + \|\del^IL^J,[H^{\mu\nu} \del_\mu\del_\nu]h_{\alpha \beta} \|_{L^2(\Hcal_s^*)}
\\
& \quad +  \|\del^IL^J(\del_{\alpha} \phi\del_{\beta} \phi) \|_{L^2(\Hcal_s^*)} + \|\del^IL^J\big(\phi^2  g_{\alpha\beta} \big) \|_{L^2(\Hcal_s^*)} + M[\del^IL^J h](s)
\endaligned
$$
and
$y(s):= E_g^*(s, \del^IL^Jh_{\alpha \beta})^{1/2}.$
In view of \eqref{eq energy 1}, we have 
$$
E_M^*(s, \del^IL^J h_{\alpha \beta})^{1/2} \leq C\kappa E_g^*(s, \del^IL^J h_{\alpha \beta})^{1/2} 
$$
and \eqref{eq energy 9} leads us to
$
y(s)y'(s) = C\kappa y(s)L(s) + Cm_S^2s^{-3}. 
$
By Lemma \ref{lem energy 2} stated shortly below, we conclude that (with $m_S = \vep$ and $\sigma=2$ therein)
$$
y(s) \leq y(0) + Cm_S + C\kappa \int_2^sL(s)ds.
$$
By recalling \eqref{eq energy 1}, the above inequality leads us to \eqref{eq energy 4}.
\end{proof}

\begin{lemma} \label{lem energy 2}
The nonlinear inequality 
$
y(\tau)y'(\tau) \leq g(\tau)y(\tau) + C^2\vep^2 \tau^{-1- \sigma},  
$
in which the function $y:[2,s] \to \RR^+$ is sufficiently regular, 
the function $g$ is positive and locally integrable, and $C, \vep, \sigma$ are positive constants, 
implies the linear inequality 
$$
y(\tau) \leq y(2) + C\vep\left(1+\sigma^{-1} \right) + \int_2^\tau g(\eta)d\eta.
$$
\end{lemma}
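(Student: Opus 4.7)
My plan is to reduce the stated differential inequality to a linear one by a simple regularization trick, then integrate. The obstacle to dividing through by $y$ in $yy' \leq gy + C^2\vep^2\tau^{-1-\sigma}$ is that $y$ may vanish (or be very small), so I would replace $y$ by a strictly positive smooth upper bound before dividing.

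Concretely, for a parameter $\delta>0$ to be chosen later, I would set
\[
w_\delta(\tau) := \sqrt{y(\tau)^2 + \delta^2},
\]
which satisfies $w_\delta \geq \max(y,\delta) > 0$ everywhere, together with $w_\delta(2) \leq y(2) + \delta$. Differentiating gives $w_\delta w_\delta' = y y'$, so the hypothesis yields
\[
w_\delta(\tau) \, w_\delta'(\tau) \;\leq\; g(\tau)\, y(\tau) + C^2\vep^2 \tau^{-1-\sigma} \;\leq\; g(\tau)\, w_\delta(\tau) + C^2\vep^2 \tau^{-1-\sigma}.
\]
Dividing by $w_\delta(\tau) \geq \delta$, I obtain the linear differential inequality
\[
w_\delta'(\tau) \;\leq\; g(\tau) + \frac{C^2\vep^2}{\delta}\,\tau^{-1-\sigma}.
\]

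Integrating from $2$ to $\tau$, and using the elementary bound $\int_2^\tau \eta^{-1-\sigma}\,d\eta \leq 2^{-\sigma}/\sigma \leq 1/\sigma$ (valid since $\sigma>0$), I get
\[
y(\tau) \;\leq\; w_\delta(\tau) \;\leq\; y(2) + \delta + \int_2^\tau g(\eta)\,d\eta + \frac{C^2\vep^2}{\delta\,\sigma}.
\]
Finally, I choose $\delta := C\vep$ to balance the two $\vep$-dependent contributions, which yields
\[
y(\tau) \;\leq\; y(2) + C\vep + \int_2^\tau g(\eta)\,d\eta + \frac{C\vep}{\sigma} \;=\; y(2) + C\vep\bigl(1+\sigma^{-1}\bigr) + \int_2^\tau g(\eta)\,d\eta,
\]
which is the claimed inequality.

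The only real subtlety is the initial regularization: one cannot just divide by $y$ in the hypothesis because $y$ may vanish. Everything else is routine integration, together with the balancing choice $\delta = C\vep$ that produces exactly the combination $1 + \sigma^{-1}$ appearing in the conclusion. Regularity of $y$ is used only to justify the chain rule $w_\delta w_\delta' = yy'$, which holds for absolutely continuous $y$, so the ``sufficiently regular'' hypothesis of the lemma is all that is needed.
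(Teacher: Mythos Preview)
Your proof is correct and in fact slightly cleaner than the paper's. The key difficulty in both approaches is that one cannot divide $yy' \leq gy + C^2\vep^2\tau^{-1-\sigma}$ by $y$ when $y$ is small. The paper handles this by a level-set argument: it splits $[2,s]$ into the open set $I=\{y>C\vep\}$ and its complement, divides by $y$ only on $I$ (where $y>C\vep$ gives $y'\leq g + C\vep\,\tau^{-1-\sigma}$), integrates over each component of $I$ from its left endpoint, and uses that at such an endpoint either $y=C\vep$ or the endpoint is $2$. You instead regularize by $w_\delta=\sqrt{y^2+\delta^2}$, which makes the division global and reduces the case analysis to a single integration followed by the optimal choice $\delta=C\vep$. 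The two arguments yield exactly the same constant $C\vep(1+\sigma^{-1})$; your route avoids tracking connected components of a super-level set at the mild cost of introducing an auxiliary function, while the paper's route is more bare-handed but requires a continuity/endpoint discussion.
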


\begin{proof}
We denote by $I = \{\tau\in[2,s]|y(s)>C\vep \}$. In view of the continuity of $y$, $I = \bigcup_{i\in\mathbb{N}}(I_n\cap[2,s])$ where $I_n$ are open intervals disjoint from each other. For $\tau \notin I$, $y(\tau) \leq C\vep $.
For $\tau \in I$, there exists some integer $i$ such that $\tau\in I_i\cap [2,s]$. Let $\inf (I_i\cap[2,s]) = s_0\geq 2$, then on $I_n\cap[2,s]$,
$$
y'(\tau) \leq g(\tau) + \frac{C^2\vep^2\tau^{-1- \sigma}}{y(\tau)} \leq g(\tau) + C\vep \tau^{-1- \sigma}.
$$
This leads us to
$$
\aligned
\int_{s_0}^\tau y'(\eta)d\eta & \leq  \int_{s_0}^\tau g(\eta)d\eta + C\vep \int_{s_0}^\tau s^{-1- \sigma}ds
\leq \int_2^\tau g(\eta)d\eta + C\vep\int_2^\infty s^{-1- \sigma}ds
 \leq  \int_2^\tau g(\eta)d\eta + C\vep\sigma^{-1}
\endaligned
$$
and 
$y(\tau) - y(s_0) \leq \int_2^\tau g(\eta)d\eta + C\vep\sigma^{-1}$. 
By continuity, either $s_0 \in(2,s)$ which leads us to $y(s_0) = C\vep$, or else $s_0 = 2$ which leads us to $y(s_0) = y(2)$. Then, we obtain 
$$
y(\tau) \leq \max\{y(2),C\vep\} + C\vep\sigma^{-1} + \int_2^\tau g(\eta)d\eta. 
$$ 
\end{proof}

To complete the proof of Proposition \ref{prop 1 energy-W}, we need the following additional observation, which is checked
by an explicit calculation (omitted here).

\begin{lemma} \label{lem energy 3}
The following uniform estimate holds (for all $a, \alpha, \beta$, all relevant $I,J$, and for some $C=C(I,J)$)
\be
\int_{\Hcal_s\cap\Kcal^c}|\delu_a \del^IL^J{h_S}_{\alpha \beta}|^2dx + \int_{\Hcal_s\cap\Kcal^c}(s/t)|\del_t \del^IL^J{h_S}_{\alpha \beta}|^2dx \leq Cm_S^2.
\ee
\end{lemma}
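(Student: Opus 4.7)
The guiding observation is that on the exterior piece $\Hcal_s\cap\Kcal^c$ the Schwarzschild tail lives in a region where $r$ is large, so decay in $r$ buys integrability. The hyperboloid condition $t^2-r^2=s^2$ together with $r\ge t-1$ forces $r\ge (s^2-1)/2$ and $t\le r+1$, so for $s\ge 2$ both $t$ and $r$ are of size $r$ and bounded below by $3/2$. Combined with the explicit Schwarzschild expressions \eqref{eq Sch-wave}, this yields the pointwise asymptotic bound
$$
|\del^K {h_S}_{\alpha\beta}(x)|\;\lesssim\; m_S / r^{|K|+1},\qquad r\ge 3/2,\ m_S<1,
$$
for every (spatial) multi-index $K$, with a constant depending only on $|K|$.

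The key step is to transfer this decay through the Lorentz boosts $L_a=x^a\del_t+t\del_a$. Since ${h_S}_{\alpha\beta}$ is time-independent in wave coordinates, a straightforward induction on $|J|$ shows
$$
L^J {h_S}_{\alpha\beta}(t,x)\;=\;\sum_{|K|\le |J|} p^{J}_K(t,x)\,\del_x^K {h_S}_{\alpha\beta}(x),
$$
where $\del_x^K$ is purely spatial and each $p^J_K$ is a polynomial in $(t,x^1,x^2,x^3)$ of degree exactly $|K|$. The induction step rests on the identity
$L_a\big(p\,\del_x^K f\big) = (x^a\del_t p + t\del_a p)\,\del_x^K f + t\,p\,\del_a\del_x^K f$
and uses $\del_t f=0$. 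Combining this with the Leibniz rule to distribute the Cartesian derivatives $\del^I$ between $p^J_K$ and $\del_x^K {h_S}_{\alpha\beta}$, and then using $t\lesssim r$ on $\Hcal_s\cap\Kcal^c$ to bound each polynomial derivative $|\del^{I_1}p^J_K|\lesssim r^{|K|-|I_1|}$, we obtain
$$
|\del^I L^J {h_S}_{\alpha\beta}(t,x)|\;\lesssim\; m_S/r^{|I|+1}\qquad\text{on } \Hcal_s\cap\Kcal^c,
$$
with constant depending only on $I,J$.

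Applying this estimate with one extra derivative ($|I|$ replaced by $|I|+1$) and using $\delu_a=\del_a+(x^a/t)\del_t$ with $|x^a|/t\le 1$, one concludes
$$
|\delu_a\del^I L^J {h_S}_{\alpha\beta}|+|\del_t\del^IL^J{h_S}_{\alpha\beta}|\;\lesssim\; m_S/r^{|I|+2}\qquad\text{on } \Hcal_s\cap\Kcal^c.
$$
Parameterizing $\Hcal_s\cap\Kcal^c$ by $\{x\in\RR^3:|x|\ge(s^2-1)/2\}$ with the flat measure $dx$ that appears in the $L^p_f$-norms, and bounding $s/t\le 1$, both integrands in the statement are dominated by $m_S^2/r^{2|I|+4}$. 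Passing to spherical coordinates,
$$
\int_{\Hcal_s\cap\Kcal^c} m_S^2\,r^{-2|I|-4}\,dx\;\lesssim\; m_S^2\int_{(s^2-1)/2}^{\infty} r^{-2|I|-2}\,dr\;\lesssim\; m_S^2\,s^{-4|I|-2}\;\le\; Cm_S^2,
$$
which is exactly the announced bound.

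\textbf{Expected main obstacle.} The only non-routine ingredient is the combinatorial structure of $L^J{h_S}_{\alpha\beta}$: the Lorentz boosts mix spatial and temporal derivatives, so one must verify that the coefficient polynomials $p^J_K$ have the \emph{right} degree $|K|$ (and not a larger degree that would spoil the pointwise bound after pairing with $\del_x^K{h_S}$). Everything else is a decay calculation on a region where $r\gtrsim s^2$, hence strongly convergent.
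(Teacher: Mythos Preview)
Your argument is correct and supplies the explicit calculation that the paper omits. The key observation---that ${h_S}_{\alpha\beta}$ is time-independent in wave coordinates, so repeated boosts yield $\sum_{|K|\le|J|} p^J_K\,\del_x^K{h_S}$ with $\deg p^J_K\le|K|$---is exactly the right structural fact, and pairing it with $|\del_x^K{h_S}|\lesssim m_S r^{-|K|-1}$ gives the uniform $m_S r^{-|I|-1}$ pointwise bound on $\Hcal_s\cap\Kcal^c$ (where $t\le r+1\lesssim r$), after which the $dx$-integral over $\{r\ge(s^2-1)/2\}$ converges uniformly in $s$. One cosmetic remark: your induction actually yields $\deg p^J_K\le|K|$ rather than ``exactly $|K|$'', and in the Leibniz expansion the terms with $|I_1|>|K|$ simply vanish; neither point affects the conclusion.
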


\begin{proof}[Proof of Proposition \ref{prop 1 energy-W}]
We observe that
$$
\aligned
& E_g(s, \del^IL^J h_{\alpha \beta}) 
\\
& \leq   E^*_g(s, \del^IL^Jh_{\alpha \beta})
 + C\int_{\Hcal_s\cap\Kcal^c}|\delu_a \del^IL^J{h_S}_{\alpha \beta}|^2dx + \int_{\Hcal_s\cap\Kcal^c}(s/t)|\del_t \del^IL^J{h_S}_{\alpha \beta}|^2dx. 
\endaligned
$$
Combining \eqref{eq energy 4} with Lemma \ref{lem energy 3} allows us to complete the proof of \eqref{eq energy 2}.
\end{proof}

For all solutions to the Einstein-massive field system associated with compact Schwarzschild perturbations, the scalar field $\phi$ is also supported in $\Kcal$. So the energy estimate for $\phi$ remains identical to the one in \cite{PLF-YM-one}.

\begin{proposition}[Energy estimate. II]
\label{prop energy 2KG}
Under the assumptions in Proposition \ref{prop 1 energy-W}, the scalar field $\phi$ satisfies
\bel{eq energy 10}
\aligned
E_{M,c^2}(s, \del^IL^J\phi)^{1/2} & \leq  CE_{g,c^2}(2, \del^IL^J \phi)^{1/2}
\\
& \quad +  \int_2^s\big|[\del^IL^J,H^{\mu\nu} \del_{\mu} \del_{\nu}]\phi\big|d\tau
+ \int_2^sM[\del^IL^J \phi](\tau) \, d\tau,
\endaligned
\ee
in which $M[\del^IL^J \phi](s)$ denotes a positive function such that
\be
\label{eq energy 11}
\aligned
&\int_{\Hcal_s}(s/t) \big|\del_{\mu}g^{\mu\nu} \del_{\nu} \big(\del^IL^J\phi\big) \del_t\big(\del^IL^J\phi\big)- \frac{1}{2} \del_tg^{\mu\nu} \del_{\mu} \big(\del^IL^J\phi\big) \del_{\nu} \big(\del^IL^J\phi\big) \big| \, dx
\\
& \leq M[\del^IL^J\phi](s)E_{M, c^2}(s, \del^IL^J\phi)^{1/2}.
\endaligned
\ee
\end{proposition}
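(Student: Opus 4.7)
The plan is to mimic the argument of Proposition~\ref{prop 1 energy-W} for the wave-type equation for $h_{\alpha\beta}$, but with two crucial simplifications: (i) the Klein-Gordon energy density already includes the zeroth-order term $c^2 u^2$, which is produced naturally when one multiplies the mass term by $-\del_t u$, and (ii) the scalar field $\phi$ is supported inside $\Kcal$ (and vanishes in a neighborhood of the null boundary $\del_B \Kcal$, analogously to Proposition~\ref{prop basic-localization}), so there is no contribution from integration against the lateral boundary and hence no $m_S$ term.

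First, I would commute $\del^I L^J$ with the Klein-Gordon equation \eqref{eq main PDE b}. Writing $\Boxt_g = \Box + H^{\mu\nu}\del_\mu \del_\nu$ and using that translations and Lorentz boosts are Killing for the flat wave operator $\Box$, I obtain, with $u := \del^I L^J \phi$,
\be
g^{\mu\nu}\del_\mu \del_\nu u \;=\; c^2 u \;-\; [\del^I L^J,\, H^{\mu\nu}\del_\mu \del_\nu]\phi.
\ee
Multiplying by $-\del_t u$ yields the divergence identity
\be
\text{Div}\, W \;=\; \mathcal{F},
\ee
where the energy flux $W$ has components
$\bigl(-\tfrac{1}{2}g^{00}|\del_t u|^2 + \tfrac{1}{2}g^{ab}\del_a u\, \del_b u + \tfrac{1}{2}c^2 u^2,\; -g^{a\nu}\del_\nu u\,\del_t u\bigr)$,
so that the time-slice integral of $W\cdot n$ over $\Hcal_s$ reproduces (up to a universal constant) $E_{g,c^2}(s,u)$, and where $\mathcal{F}$ collects the terms $\tfrac{1}{2}\del_t g^{\mu\nu}\del_\mu u\,\del_\nu u - \del_\mu g^{\mu\nu}\del_t u\,\del_\nu u$ and $[\del^I L^J,H^{\mu\nu}\del_\mu \del_\nu]\phi\cdot \del_t u$.

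Next I would integrate $\text{Div}\, W = \mathcal{F}$ over the hyperboloidal slab $\Kcal_{[2,s]}$ and apply Stokes' formula. The contributions on $\Hcal_s$ and $\Hcal_2$ produce $E_{g,c^2}(s,u)$ and $E_{g,c^2}(2,u)$ respectively, while the boundary integral on $\del_B \Kcal_{[2,s]}$ vanishes identically because $\phi$ (and hence every $\del^I L^J \phi$) is supported in the interior of $\Kcal$. Differentiating in $s$ and using the weight $s/t$ that arises from the hyperboloidal measure, I would obtain
\be
E_{g,c^2}(s,u)^{1/2}\,\frac{d}{ds}E_{g,c^2}(s,u)^{1/2} \;\leq\; \int_{\Hcal_s}(s/t)\,|\mathcal{F}|\,dx.
\ee
The commutator piece of $\mathcal{F}$ is bounded by $\|(s/t)\del_t u\|_{L^2(\Hcal_s)}\cdot \|[\del^I L^J,H^{\mu\nu}\del_\mu \del_\nu]\phi\|_{L^2(\Hcal_s)}$, while the remaining piece is precisely the quantity controlled by $M[\del^I L^J \phi](s) \cdot E_{M,c^2}(s,u)^{1/2}$ by the defining property \eqref{eq energy 11}. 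Both $\|(s/t)\del_t u\|_{L^2}$ and $E_{M,c^2}(s,u)^{1/2}$ are controlled by $E_{g,c^2}(s,u)^{1/2}$ thanks to the norm equivalence assumption \eqref{eq energy 1} (which extends to the massive energy, since the additional $c^2 u^2$ terms agree for $g$ and $m$).

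Finally, dividing by $E_{g,c^2}(s,u)^{1/2}$ and integrating in $s$ (applying Lemma~\ref{lem energy 2} with $\vep=0$, which reduces to ordinary Gronwall since there is no $\tau^{-1-\sigma}$ boundary term to absorb), and then converting back from $E_{g,c^2}$ to $E_{M,c^2}$ using \eqref{eq energy 1} once more, yields exactly \eqref{eq energy 10}. The main conceptual point, rather than an obstacle, is the verification that the support property of $\phi$ kills the boundary term on $\del_B\Kcal$; the remaining steps are a direct and slightly simpler transcription of the argument given for Lemma~\ref{lem energy 1}.
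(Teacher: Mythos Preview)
Your proposal is correct and matches the paper's approach. The paper itself does not give a detailed proof of this proposition; it simply observes that since $\phi$ is supported in $\Kcal$ (so the lateral boundary term on $\del_B\Kcal$ vanishes), the energy estimate reduces to the one already established in \cite{PLF-YM-one}, and your write-up is precisely the natural adaptation of the argument of Lemma~\ref{lem energy 1} with this simplification.
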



\subsection{Sup-norm estimate based on curved characteristic integration}

We now revisit an important technical tool introduced first in Lindblad and Rodnianski~\cite{LR1}. This is an $L^\infty$ estimate on the gradient of solutions to a wave equation posed in a curved background. For our problem, we must adapt this tool to the hyperboloidal foliation and we begin by stating without proof the following identity.  

\begin{lemma}[Decomposition of the flat wave operator in the null frame]\label{lem LR-sup 1}
For every smooth function $u$, the following identity holds:
\bel{eq LR-sup 1}
- \Box u = r^{-1}(\del_t+\del_r) \big(\del_t- \del_r \big)(ru) - \sum_{a<b} \big(r^{-1} \Omega_{ab} \big)^2u 
\ee 
with 
$
\Omega_{ab} = x^a \del_b - x^b\del_a = x^a \delu_b - x^b\delu_a$ (defined earlier). 
\end{lemma}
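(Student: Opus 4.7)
The plan is to establish the identity by working in spatial polar coordinates and then re-interpreting the angular piece in terms of the rotations $\Omega_{ab}$. I would start from the Cartesian expression $\Box = -\del_t^2 + \Delta_x$ (signature $(-,+,+,+)$) and insert the standard polar decomposition $\Delta_x = \del_r^2 + (2/r)\del_r + r^{-2} \Delta_{S^2}$ valid in $\RR^3$. Multiplying through by $r$ and using the elementary observation
\[
\del_r^2(ru) = r \del_r^2 u + 2 \del_r u,
\qquad
\del_t^2(ru) = r \del_t^2 u,
\]
I obtain
\[
r \, \Box u = -\del_t^2(ru) + \del_r^2(ru) + r^{-1} \Delta_{S^2} u.
\]
Factoring the $(t,r)$ part as $-\del_t^2 + \del_r^2 = -(\del_t+\del_r)(\del_t-\del_r)$ gives the first term of the desired identity after dividing by $r$.

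It then remains to rewrite the angular piece $r^{-2}\Delta_{S^2} u$ as $\sum_{a<b}(r^{-1}\Omega_{ab})^2 u$. The key lemma I would verify is the operator identity
\[
\sum_{a<b} \Omega_{ab}^{\,2} \, = \, r^2 \Delta_x - (r\del_r)^2 - r\del_r \, = \, \Delta_{S^2},
\]
where the second equality follows from $r^2 \Delta_x = r^2 \del_r^2 + 2 r \del_r + \Delta_{S^2}$ and $(r\del_r)^2 = r^2 \del_r^2 + r\del_r$. For the first equality, I would expand $\Omega_{ab}^{\,2} u$ directly (with $a\neq b$) to get
\[
(x^a)^2 \del_b^2 u + (x^b)^2 \del_a^2 u - 2 x^a x^b \del_a \del_b u - x^a \del_a u - x^b \del_b u,
\]
sum over $a<b$ using $\sum_{a\neq c}(x^a)^2 = r^2 - (x^c)^2$ and $\sum_{c,c'} x^c x^{c'} \del_c \del_{c'} = (r\del_r)^2 - r\del_r$, and then collect terms.

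To finish, I would observe that each $\Omega_{ab}$ is tangent to the spheres centered at the origin, hence $\Omega_{ab}(r^{-1}) = 0$, which yields $\Omega_{ab}(r^{-1} \Omega_{ab} u) = r^{-1} \Omega_{ab}^{\,2} u$, so that $(r^{-1}\Omega_{ab})^2 u = r^{-2} \Omega_{ab}^{\,2} u$. Summing over $a<b$ and using the operator identity above gives $r^{-2}\Delta_{S^2} u = \sum_{a<b}(r^{-1}\Omega_{ab})^2 u$, which combined with the earlier computation proves the claim. The only step that requires a little bookkeeping is the algebraic identification $\sum_{a<b}\Omega_{ab}^{\,2} = \Delta_{S^2}$; everything else is either a direct product rule or a standard factorization in the null variables $\del_t \pm \del_r$.
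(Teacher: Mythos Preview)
Your proof is correct. The paper in fact states this identity without proof (it writes ``we begin by stating without proof the following identity''), so there is nothing to compare against; your argument via the polar decomposition of $\Delta_x$, the factorization $\del_t^2-\del_r^2=(\del_t+\del_r)(\del_t-\del_r)$ applied to $ru$, and the algebraic identification $\sum_{a<b}\Omega_{ab}^{\,2}=\Delta_{S^2}$ together with $\Omega_{ab}(r^{-1})=0$ is exactly the standard route and all steps check out.
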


We then write $\del_t = \frac{t}{t+r}(\del_t - \del_r) + \frac{x^at}{(t+r)r} \delu_a$
and thus 
$$
\aligned
\del_t\del_t & =  \frac{t^2}{(t+r)^2}(\del_t- \del_r)^2 + \frac{t}{t+r}(\del_t- \del_r) \bigg(\frac{x^at\delu_a}{r(t+r)} \bigg) + \frac{x^at}{r(t+r)} \delu_a \bigg(\frac{t}{t+r}(\del_t- \del_r) \bigg)
\\
& \quad +  \bigg(\frac{x^at}{r(t+r)} \delu_a \bigg)^2 + \frac{\del_t- \del_r}{t+r}.
\endaligned
$$
Consequently, we have found the decomposition 
\bel{eq:form9-1}
\aligned
r\del_t\del_t u & =  \frac{t^2}{(t+r)^2}(\del_t- \del_r)^2(ru) + \frac{2t^2}{(t+r)^2}(\del_t- \del_r)u +  \frac{rt}{t+r}(\del_t- \del_r) \bigg(\frac{x^at}{r(t+r)} \delu_au\bigg)
\\
& \quad  + \frac{x^at}{(t+r)} \delu_a \bigg(\frac{t}{t+r}(\del_t- \del_r)u\bigg)
 + r\bigg(\frac{x^at}{r(t+r)} \delu_a \bigg)^2 u + \frac{r(\del_t- \del_r)u}{t+r}
\\
& =:  \frac{t^2}{(t+r)^2}(\del_t- \del_r)^2(ru) + W_1[u]. 
\endaligned
\ee

On the other hand, the curved part of the reduced wave operator $H^{\alpha \beta} \del_{\alpha} \del_{\beta}$ can be decomposed in the semi-hyperboloidal frame as follows: 
$$
\aligned
H^{\alpha \beta} \del_{\alpha} \del_{\beta}u & =  \Hu^{\alpha \beta} \delu_{\alpha} \delu_{\beta}u
+ H^{\alpha \beta} \del_{\alpha} \big(\Psi_{\beta}^{\beta'} \big) \delu_{\beta'}u
\\
& =  \Hu^{00} \del_t\del_t u
 +  \Hu^{a0} \delu_a \del_t u + \Hu^{0a} \del_t\delu_a u + \Hu^{ab} \delu_a \delu_b u
 + H^{\alpha \beta} \del_{\alpha} \big(\Psi_{\beta}^{\beta'} \big) \delu_{\beta'}u.
\endaligned
$$
The ``good'' part of the curved wave operator (i.e. terms containing one derivative tangential to the hyperboloids) is defined to be 
\bel{eq:form9}
R[u,H]:=  \Hu^{a0} \delu_a \del_t u + \Hu^{0a} \del_t\delu_a u + \Hu^{ab} \delu_a \delu_b u
 + H^{\alpha \beta} \del_{\alpha} \big(\Psi_{\beta}^{\beta'} \big) \delu_{\beta'}u, 
\ee
and, with this notation together with \eqref{eq:form9-1},  
\bel{eq LR-sup 2}
\aligned
rH^{\alpha \beta} \del_{\alpha} \del_{\beta}u & =  \frac{t^2\Hu^{00}}{(t+r)^2}(\del_t- \del_r) \big((\del_t- \del_r)(ru) \big)
+\Hu^{00}W_1[u] + rR[u,H].
\endaligned
\ee
Then, by combining \eqref{eq LR-sup 1} for the flat wave operator and \eqref{eq LR-sup 2} for the curved part, we reach the following conclusion. 

\begin{lemma}[Decomposition of  the reduced wave operator $\Boxt_g$]
\label{lem LR-sup 2}
Let $u$ be a smooth function defined in $\RR^{3+1}$ and $H^{\alpha \beta}$ be functions in $\RR^{3+1}$. Then the following identity holds:
\bel{eq LR-sup 3}
\aligned
&\Big((\del_t+\del_r) - t^2(t+r)^{-2} \Hu^{00}(\del_t- \del_r) \Big) \Big( \big(\del_t- \del_r \big)(ru) \Big) 
\\
&= -r \, \Boxt_g u + r\sum_{a<b} \big(r^{-1} \Omega_{ab} \big)^2 u + \Hu^{00}W_1[u] + rR[u,H]
\endaligned
\ee
with the notation above. 
\end{lemma}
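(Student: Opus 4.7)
The plan is to obtain \eqref{eq LR-sup 3} as a bookkeeping identity by combining the flat null-frame identity of Lemma \ref{lem LR-sup 1} with the already-established curved-part decomposition \eqref{eq LR-sup 2}, and then collapsing the sum $\Box + H^{\alpha\beta}\del_\alpha\del_\beta$ back into $\Boxt_g$. No new estimates are required; all the nontrivial geometric work has been absorbed into the definitions of $W_1[u]$ via \eqref{eq:form9-1} and of $R[u,H]$ via \eqref{eq:form9}.

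Concretely, I would first multiply the identity of Lemma \ref{lem LR-sup 1} by $r$ and rearrange it to read
\begin{equation*}
(\del_t+\del_r)\bigl((\del_t-\del_r)(ru)\bigr) \;=\; -\,r\,\Box u \;+\; r\sum_{a<b}\bigl(r^{-1}\Omega_{ab}\bigr)^{\!2} u.
\end{equation*}
Next I would take \eqref{eq LR-sup 2}, which isolates the ``bad'' second-order direction $(\del_t-\del_r)^2(ru)$ of the curved correction, and solve it for that term, giving
\begin{equation*}
\frac{t^2\Hu^{00}}{(t+r)^2}(\del_t-\del_r)\bigl((\del_t-\del_r)(ru)\bigr) \;=\; r\,H^{\alpha\beta}\del_\alpha\del_\beta u \;-\; \Hu^{00}W_1[u] \;-\; r\,R[u,H].
\end{equation*}

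Subtracting the second display from the first, the left-hand side factors as the operator $\bigl((\del_t+\del_r) - t^2(t+r)^{-2}\Hu^{00}(\del_t-\del_r)\bigr)$ applied to $(\del_t-\del_r)(ru)$, while the right-hand side reads
\begin{equation*}
-r\bigl(\Box + H^{\alpha\beta}\del_\alpha\del_\beta\bigr)u \;+\; r\sum_{a<b}\bigl(r^{-1}\Omega_{ab}\bigr)^{\!2} u \;+\; \Hu^{00}W_1[u] \;+\; r\,R[u,H].
\end{equation*}
Invoking $\Boxt_g = \Box + H^{\alpha\beta}\del_\alpha\del_\beta$ collapses the first bracket into $-r\,\Boxt_g u$, which is exactly \eqref{eq LR-sup 3}.

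There is essentially no obstacle at this step; the only thing that needed foresight was the earlier algebraic split, where the common ``radial'' second-order piece $(\del_t-\del_r)^2(ru)$ was identified as the sole obstruction in both the flat and the curved parts, with everything else deposited into the good remainders $W_1[u]$ (which, by inspection of \eqref{eq:form9-1}, contains only factors of $\delu_a$ or the lower-order outgoing derivative $(\del_t-\del_r)/(t+r)$) and $R[u,H]$ (which, by \eqref{eq:form9}, involves at least one hyperboloidal-tangent derivative $\delu_a$). Once this separation is in hand, the proof of the lemma reduces to the two-line subtraction above.
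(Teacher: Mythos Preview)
Your proof is correct and follows exactly the approach the paper indicates: the lemma is obtained by combining \eqref{eq LR-sup 1} (multiplied by $r$) with \eqref{eq LR-sup 2}, subtracting, and using $\Boxt_g = \Box + H^{\alpha\beta}\del_\alpha\del_\beta$. The paper itself simply states ``by combining \eqref{eq LR-sup 1} for the flat wave operator and \eqref{eq LR-sup 2} for the curved part, we reach the following conclusion,'' which is precisely the two-line subtraction you carried out.
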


Now we are ready to establish the desired estimate of this section. For convenience, we set  
$$
\Kical := \big\{(t, x)|r \leq \frac{3}{5}t \big\} \cap \Kcal,
\quad 
\Kical_{[s_0,s_1]} := \big\{(t, x) \in \Kical \, / \,  s_0^2\leq t^2-r^2\leq s_1^2 \big\}
$$
and we denote by $\del_B\Kical_{[s_0,s_1]}$ the following ``boundary'' of $\Kical_{[s_0,s_1]}$
$$
\del_B\Kical_{[s_0,s_1]} := \big\{(t, x) \, / \, r = (3/5)t, \, (5/4)s_0 \leq t \leq (5/4)s_1 \big\}. 
$$ 
We will now prove the following sharp decay property for solutions to the wave equation on a curved spacetime. 

\begin{proposition}[Sup-norm estimate based on characteristic integration]
\label{prop LR-sup 1}
Let $u$ be a solution to the wave equation on curved spacetime 
$- \Box u - H^{\alpha \beta} \del_{\alpha} \del_{\beta} u = F,
$
where $H^{\alpha \beta}$ are given functions.
Given any point $(t_0, x_0)$, denote by $\left(t, \varphi(t;t_0, x_0) \right)$ the integral curve of the vector field
$$
\del_t + \frac{(t+r)^2 +t^2\Hu^{00}}{(t+r)^2 - t^2\Hu^{00}} \del_r
$$
passing through  $(t_0, x_0)$, that is, 
$\varphi(t_0;t_0, x_0) = x_0$. 
Then, there exist two positive constants $\vep_s$ and $a_0\geq 2$ such that for $t\geq a_0$
\bel{ineq LR-sup 2}
|\Hu^{00}| \leq \vep_s (t-r)/t,
\ee
then for all $s \geq a_0$ and $(t, x) \in \Kcal\backslash\Kical_{[2,s]}$
one has 
\bel{ineq LR-sup 1}
\aligned
|(\del_t- \del_r)u(t, x)|
& \leq  t^{-1} \sup_{\del_B\Kical_{[2,s]} \cup \del\Kcal} \Big( |(\del_t- \del_r)(ru)|\Big) 
 + Ct^{-1}|u(t, x)|
\\
& \quad +  t^{-1} \int_{a_0}^t \tau|F(\tau, \varphi(\tau;t, x))|d\tau
+ t^{-1} \int_{a_0}^t \big|M_s[u,H]|_{(\tau, \varphi(\tau;t, x))}d\tau, 
\endaligned
\ee
where $F=- \Box u - H^{\alpha \beta} \del_{\alpha} \del_{\beta} u$ is the right-hand side of the wave equation, 
$$
M_s[u,H]:= r\sum_{a<b} \big(r^{-1} \Omega_{ab} \big)^2 u + \Hu^{00}W_1[u] + rR[u,H],
$$
in which one can guarantee that the associated integral curve satisfies 
$
(\tau, \varphi(\tau;t, x)) \in \Kcal\backslash\Kical_{[2,s]}
$
for $2\leq a_0<\tau<t$, but
$
(a_0, \varphi(a_0;t, x)) \in\del_B\Kical_{[2,s_0]} \cup \del\Kcal
$ at the initial time $a_0$. 
\end{proposition}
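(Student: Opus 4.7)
The strategy is to transform the PDE $-\Box u - H^{\alpha\beta}\del_\alpha\del_\beta u = F$ into a scalar transport equation along the characteristic curves and integrate it. The key structural ingredient is already supplied by Lemma~\ref{lem LR-sup 2}, whose right-hand side of \eqref{eq LR-sup 3} isolates a first-order operator acting on the quantity $V := (\del_t-\del_r)(ru)$. Expanding the bracket on the left-hand side of \eqref{eq LR-sup 3}, one finds that this operator equals $\lambda\, X$, where
\[
\lambda := \tfrac{(t+r)^2 - t^2\Hu^{00}}{(t+r)^2},
\qquad
X := \del_t + \tfrac{(t+r)^2 + t^2\Hu^{00}}{(t+r)^2 - t^2\Hu^{00}}\,\del_r,
\]
and $X$ is precisely the vector field whose integral curves are $\tau\mapsto\bigl(\tau,\varphi(\tau;t,x)\bigr)$. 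Under the smallness condition \eqref{ineq LR-sup 2}, one has $|t^2\Hu^{00}/(t+r)^2|\leq \vep_s\,t(t-r)/(t+r)^2\leq \vep_s$, which guarantees that $\lambda$ stays pinched between $1-\vep_s$ and $1+\vep_s$ and that $X$ is a small perturbation of the outgoing null field $\del_t+\del_r$.

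Consequently, setting $V(\tau) := V\bigl(\tau,\varphi(\tau;t,x)\bigr)$ and reading \eqref{eq LR-sup 3} along the characteristic, one obtains the scalar ODE
\[
\lambda\,\frac{dV}{d\tau} = \bigl(r F + M_s[u,H]\bigr)\bigl(\tau,\varphi(\tau;t,x)\bigr).
\]
The next step is to analyze the backward flow issued from $(t,x)\in\Kcal\setminus\Kical_{[2,s]}$. Since $(\del_t+\del_r)(t-r)=0$ and $(\del_t+\del_r)(t+r)=2$, the smallness of $X-(\del_t+\del_r)$ forces $\tfrac{d}{d\tau}(t-r)=O(\vep_s)$ and $\tfrac{d}{d\tau}(t+r)=2+O(\vep_s)$ along $X$; hence $t+r$ is strictly increasing in $\tau$ while $t-r$ drifts only slowly. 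This essentially pins the characteristic to an almost-outgoing null ray and, by a continuity argument, it must first exit $\Kcal\setminus\Kical_{[2,s]}$ through $\del_B\Kical_{[2,s_0]}\cup\del\Kcal$ at some well-defined time $a_0\geq 2$ (rather than through a hyperboloid of the foliation). I expect the careful bookkeeping of this exit, in particular quantifying how the smallness of $\vep_s$ prevents the curve from re-entering $\Kical_{[2,s]}$, to be the main technical step.

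Integrating the ODE from $a_0$ to $t$ and using $|\lambda^{-1}|\leq 1+C\vep_s$ yields
\[
|V(t)| \leq |V(a_0)| + (1+C\vep_s)\int_{a_0}^{t}\bigl(r|F|+|M_s[u,H]|\bigr)\bigl(\tau,\varphi(\tau;t,x)\bigr)\,d\tau,
\]
and the boundary contribution is bounded by $\sup_{\del_B\Kical_{[2,s_0]}\cup\del\Kcal}|(\del_t-\del_r)(ru)|$. To revert from $V$ to the quantity of interest I invoke the algebraic identity $(\del_t-\del_r)(ru) = r(\del_t-\del_r)u - u$, so that $(\del_t-\del_r)u = (V+u)/r$. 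The conclusion \eqref{ineq LR-sup 1} then follows by exploiting that $r\gtrsim t$ on $\Kcal\setminus\Kical_{[2,s]}$ and likewise $r\sim\tau$ along the characteristic: the factor $1/r$ becomes $Ct^{-1}$ in the boundary and $u$ contributions, while $r|F|$ is absorbed as $C\tau|F|$ inside the integral. This reproduces precisely the claimed bound.
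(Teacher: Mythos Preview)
Your proposal is correct and follows essentially the same route as the paper: rewrite \eqref{eq LR-sup 3} as a transport equation for $V=(\del_t-\del_r)(ru)$ along the integral curves of $X$, integrate from the exit time $a_0$ to $t$, and convert back via $(\del_t-\del_r)u=(V+u)/r$ together with $r\sim t$ in $\Kcal\setminus\Kical_{[2,s]}$. For the exit analysis you flagged as the main technical step, the paper's concrete computation is to compare $|\varphi(\tau;t_0,x_0)|$ against the hyperboloid radius $r(\tau)=\sqrt{\tau^2-s_0^2}$ on $\Hcal_{s_0}$ and show $\tfrac{d}{d\tau}\bigl(|\varphi|-r\bigr)<0$ under \eqref{ineq LR-sup 2}, so the backward characteristic never re-crosses $\Hcal_{s_0}$ and must leave through $\del_B\Kical_{[2,s_0]}\cup\del\Kcal$.
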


\begin{proof} Under the condition \eqref{ineq LR-sup 2}, the decomposition \eqref{eq LR-sup 3} can be rewritten in the form 
\bel{eq LR-sup 4}
\aligned
&\bigg(\del_t + \frac{1+t^2(t+r)^{-2} \Hu^{00}}{1-t^2(t+r)^{-2} \Hu^{00}} \del_r\bigg) \big((\del_t - \del_r)(ru) \big)
=: \mathcal{L} \big((\del_t - \del_r)(ru) \big)
\\
& = \frac{-r\Boxt_g u + r\sum_{a<b} \big(r^{-1} \Omega_{ab} \big)^2 u + \Hu^{00}W_1[u] + rR[u,H]}{1-t^2(t+r)^{-2} \Hu^{00}} =: \mathcal{F}.
\endaligned
\ee
In other words, \eqref{eq LR-sup 4} reads 
$
\mathcal{L} \big((\del_r- \del_r)(ru) \big) = \mathcal{F} 
$
and by writing 
$$
v_{t_0, x_0}(t) := \big( (\del_r- \del_r)(ru) \big)(t, \varphi(t;t_0, x_0)),
$$
we have 
$$
\frac{d}{dt} v_{t_0, x_0}(t) = \mathcal{L} \big((\del_t - \del_r)(ru) \big)(t, \varphi(t;t_0, x_0))
= \mathcal{F} (t, \varphi(t;t_0, x_0)).
$$
By integration, we have 
$
v_{t_0, x_0}(t_0) = v_{t_0, x_0}(a) + \int_{a}^{t_0} \mathcal{F}(t, \varphi(t;t_0, x_0)) \, dt.
$

Fix $s_0^2 = t_0^2-r_0^2$ with $s_0>0$ and take $(t_0, x_0) \in \Kcal_{[2,s]} \backslash \Kical$, that is $\{(t_0, x_0)|(3/5)t_0\leq r_0<t_0-1\}$. We will prove that there exists some $a \geq 2$ such that for all $t\in[a,t_0]$, $(t, \varphi(t;t_0, x_0)) \in \Kcal_{[2,s]} \backslash\Kical$ and $(a, \varphi(a;t_0, x_0)) \in \del_B\Kical_{[2,s_0]} \cup \del\Kcal$, that is, for $t<t_0$, $(t, \varphi(t;t_0, x_0))$ will not intersect $\Hcal_{s_0}$ again before leaving the region $\Kcal_{[2,s_0]} \backslash \Kical$. This is due to the following observation: denote by $|\varphi(t;t_0, x_0)|$ the Euclidian norm of $\varphi(t;t_0, x_0)$, and by the definition of $\mathcal{L}$, we have 
$$
\frac{d|\varphi(t;t_0, x_0)|}{dt} = \frac{1+t^2(t+r)^{-2} \Hu^{00}}{1-t^2(t+r)^{-2} \Hu^{00}}.
$$
Also, we observe that for a point $(t, x)$ on the hyperboloid $\Hcal_{s_0}$, we have $r(t) = |x(t)| = \sqrt{t^2-s_0^2}$, and this leads us to
$
\frac{dr}{dt} = \frac{t}{r}. 
$
Then we have
$$
\aligned
\frac{d\big(|\varphi(t;t_0, x_0)| - r\big)}{dt} & =  \frac{1+t^2(t+r)^{-2} \Hu^{00}}{1-t^2(t+r)^{-2} \Hu^{00}} - \frac{t}{r}
 = \frac{2t^2(t+r)^{-2} \Hu^{00}}{1-t^2(t+r)^{-2} \Hu^{00}} - \frac{t-r}{r}.
\endaligned
$$
So, there exists a constant $\vep_s$ such that if $|\Hu^{00}| \leq \frac{\vep_s(t-r)}{t}$, then 
$
\frac{d\big(|\varphi(t;t_0, x_0)| - r\big)}{dt}< 0.
$
Recall that at $t=t_0$, $|\varphi(t_0;t_0, x_0)| = |x_0| = r(t_0)$. We conclude that for all $t< t_0$, $|\varphi(t;t_0, x_0)|> r(t)$ which shows that $(t, \varphi(t;t_0, x_0))$ will never intersect $\Hcal_{s_0}$ again. Furthermore we see that there exists a time $a_0$ sufficiently small (but still $a_0\geq 3$) such that $(t, \varphi(t;t_0, x_0))$ leaves $\Kcal_{[2,s]} \backslash \Kical$ by intersecting the boundary $\del_B\Kical_{[2,s_0]} \cup \del\Kcal$ at $t=a_0$. So we see that
$
v_{t_0, x_0}(t_0) = v_{t_0, x_0}(a_0) + \int_{a_0}^{t_0} \mathcal{F}(t, \varphi(t;t_0, x_0)) \, dt, 
$
which leads us to
$$
\aligned
|v_{t_0, x_0}(t_0)|
& \leq  \sup_{(t, x) \in\del_B\Kical_{[2,s_0]} \cup \del\Kcal} \{|(\del_t- \del_r)(ru)|_{(t, x)}|\}
\\
& \quad +  \int_2^{t_0} \big|-r\Boxt_g u + r\sum_{a<b} \big(r\Omega_{ab} \big)^2 u + \Hu^{00}W_1[u] + rR[u,H]\big|_{(t, \varphi(t;t_0, x_0))} \, dt.
\endaligned
$$ 
\end{proof}


\subsection{Sup-norm estimate for wave equations with source}

Our sup-norm estimate for the wave equation, established earlier in \cite{PLF-YM-one} and based on an explicit formula for solutions (cf. also the Appendix at the end of this monograph), is now revisited and adapted to the problem of compact Schwarzschild perturbations. By applying $\del^IL^J$ to  the Einstein equations \eqref{eq main PDE a}, we obtain 
\bel{eq:SJO}
\aligned
\Box \del^IL^J h_{\alpha \beta} 
& = - \del^IL^J\big(H^{\mu\nu} \del_{\mu} \del_{\nu}h_{\alpha \beta} \big) + \del^IL^JF_{\alpha \beta} -  16\pi\del^IL^J\big(\del_{\alpha} \phi\del_{\beta} \phi\big) - 8\pi c^2\del^IL^J\big(\phi^2 g_{\alpha\beta} \big)
\\
& =: S_{\alpha \beta}^{I,J}  = S^{W,I,J}_{\alpha \beta} + S^{KG,I,J}_{\alpha \beta}, 
\endaligned
\ee
with 
$$
\aligned
S^{W,I,J}_{\alpha \beta} 
& := - \del^IL^J\big(H^{\mu\nu} \del_{\mu} \del_{\nu}h_{\alpha \beta} \big) + \del^IL^JF_{\alpha \beta},
\\
S^{KG,I,J}_{\alpha \beta} 
& := -  16\pi\del^IL^J\big(\del_{\alpha} \phi\del_{\beta} \phi\big) - 8\pi c^2\del^IL^J\big(\phi^2 g_{\alpha\beta} \big).
\endaligned
$$
We denote by $\mathds{1}_{\Kcal}: \RR^4 \to \{0,1\}$ the characteristic function of the set $\Kcal$,
and  introduce the corresponding decomposition into interior/exterior contributions of the wave source of the Einstein equations:  
$$
S^{W,I,J}_{\text{Int}, \alpha \beta} := \mathds{1}_{\Kcal}  S^{W,I,J}_{\alpha \beta}, 
\qquad 
S^{W,I,J}_{\text{Ext}, \alpha \beta}  := (1- \mathds{1}_{\Kcal})S^{W,I,J}_{\alpha \beta}, 
$$
while $S^{KG,I,J}_{\alpha \beta}$ is compactly supported in $\Kcal$ and need not be decomposed. We thus have 
\be
S_{\alpha \beta}^{I,J} = S^{W,I,J}_{\text{Ext}, \alpha \beta}  + S^{KG,I,J}_{\alpha \beta} + S^{W,I,J}_{\text{Int}, \alpha \beta} .
\ee
Outside the region $\Kcal$, the metric $g_{\alpha \beta}$ coincides with the Schwarzschild metric so that an easy calculation leads us to the following estimate. 

\begin{lemma} \label{lem 1 sup-norm-W}
One has
$
|S^{W,I,J}_{\text{Ext}, \alpha \beta} | \leq Cm_S^2(1- \mathds{1}_\Kcal)r^{-4}. 
$
\end{lemma}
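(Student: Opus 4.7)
The plan is to reduce everything to an explicit pointwise computation on the Schwarzschild metric. By Proposition~\ref{prop basic-localization}, on the support of $(1 - \mathds{1}_\Kcal)$ the solution satisfies $g_{\alpha\beta} \equiv \gSch_{\alpha\beta}$, hence $h_{\alpha\beta} \equiv h_{S,\alpha\beta}$ and $H^{\mu\nu}$ coincides with its Schwarzschild value $H_S^{\mu\nu} = m^{\mu\nu} - g_S^{\mu\nu}$. Since $\phi$ is supported in $\Kcal$ as well, the Klein-Gordon contributions vanish in the exterior, and it suffices to prove the bound for $-\del^I L^J(H_S^{\mu\nu}\del_\mu\del_\nu h_{S,\alpha\beta}) + \del^I L^J F_{\alpha\beta}(h_S; \del h_S, \del h_S)$ in the region $\{r \geq t-1, t \geq 2\}$, where in particular $r \geq 1$.

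Next, I would establish the pointwise estimate
\bel{eq plan Sch-decay}
\bigl|\del^I L^J h_{S,\alpha\beta}\bigr| + \bigl|\del^I L^J H_S^{\mu\nu}\bigr| \leq C(I,J)\, m_S\, r^{-1-|I|}
\ee
on the exterior region. From the explicit formula \eqref{eq Sch-wave}, $h_{S,\alpha\beta}$ is a smooth function of $m_S/r$ and the angular variables $\omega_a = x^a/r$, of size $O(m_S/r)$ once $r \geq 2m_S$. Each partial derivative $\del_\alpha$ contributes one power of $r^{-1}$, giving the $r^{-1-|I|}$ factor. The Lorentz boost $L_a = x^a\del_t + t\del_a$, applied to a time-independent Schwarzschild coefficient, reduces to $t\del_a$; since the exterior region forces $t \leq r+1 \lesssim r$, this gives $|L_a f(r,\omega)| \lesssim r \, |\del f|$, so boosts preserve the decay weights in~\eqref{eq plan Sch-decay}. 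An induction on $|J|$ (using that $L_a \omega^b$ and its iterates remain bounded smooth functions of $\omega$, multiplied by at most one factor of $t/r \lesssim 1$) then yields the full estimate. The same reasoning applies to $H_S^{\mu\nu}$, which is a convergent power series in the components of $h_S$.

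Granted~\eqref{eq plan Sch-decay}, both pieces of $S^{W,I,J}_{\text{Ext},\alpha\beta}$ are controlled by Leibniz: for the quasilinear term,
\[
\del^I L^J\bigl(H_S^{\mu\nu}\del_\mu\del_\nu h_{S,\alpha\beta}\bigr)
= \sum \bigl(\del^{I_1}L^{J_1} H_S^{\mu\nu}\bigr)\bigl(\del^{I_2}L^{J_2}\del_\mu\del_\nu h_{S,\alpha\beta}\bigr),
\]
and each summand is bounded by $m_S r^{-1-|I_1|} \cdot m_S r^{-3-|I_2|} = m_S^2\, r^{-4-|I|}$. For $F_{\alpha\beta}(h;\del h,\del h)$, which depends smoothly on $h$ and quadratically on $\del h$, I would expand with Leibniz and the Faà di Bruno rule: since $|\del^I L^J h_S| \lesssim m_S/r$ and $|\del^I L^J \del h_S| \lesssim m_S/r^2$, every summand is again $O(m_S^2 r^{-4-|I|})$. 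Using $r \geq 1$ on the exterior to absorb the extra $r^{-|I|}$ factor into the constant $C$, I would conclude $|S^{W,I,J}_{\text{Ext},\alpha\beta}| \leq C m_S^2 r^{-4}$.

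The routine but potentially tedious step is the bookkeeping in~\eqref{eq plan Sch-decay}, which requires verifying that each $L_a$ applied to terms involving $\omega^a$, $r$, and the nonlinear Schwarzschild functions of $m_S/r$ leaves the $O(m_S r^{-1-|I|})$ bound intact; this is the only place where the particular structure of the boosts interacts with the angular dependence in~\eqref{eq Sch-wave}, and it is the main obstacle. Everything else is Leibniz and term-counting.
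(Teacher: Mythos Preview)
Your proposal is correct and follows exactly the approach the paper indicates: the paper simply remarks that outside $\Kcal$ the metric coincides with Schwarzschild, so the estimate follows from an explicit calculation, and you have supplied the details of that calculation. Your key observations---that the Schwarzschild coefficients are time-independent so boosts reduce to $t\del_a$ with $t\lesssim r$ in the exterior, and that each $\del_\alpha$ gains a power of $r^{-1}$---are precisely what is needed, and the Leibniz bookkeeping is routine.
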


We next decompose the initial data for the equations \eqref{eq:SJO}. Recall that on the initial hypersurface $\{t=2\}$ and outside the unit ball, the metric  coincides with the Schwarzschild metric. We write 
$$
\aligned 
\del^IL^J h_{\alpha \beta}(2, \cdot) &:= I_{\text{Int}, \alpha}^{0,I,J}  + I_{\text{Ext}, \alpha \beta}^{0,I,J}, 
\\
I_{\text{Int}, \alpha}^{0,I,J}  &:= \widetilde{\chi}(r) \del^IL^J h_{\alpha \beta}(2, \cdot),
\qquad 
I_{\text{Ext}, \alpha \beta}^{0,I,J} := (1- \widetilde{\chi}(r)) \del^IL^J h_{\alpha \beta}(2, \cdot), 
\endaligned
$$
in which $\widetilde{\chi}(\cdot):\RR^+  \to \RR^+$ is a smooth cut-off function with 
$$
\widetilde{\chi}(r) =
\left\{
\aligned
&1, \quad r\leq 1, 
\\
&0, \quad r\geq 2. 
\endaligned
\right.
$$
On the other hand, the initial data $\del_t\del^IL^Jh_{\alpha \beta}(2, \cdot) =: I^1[\del^IL^J]$ is supported in $\{r\leq 1\}$ since the metric is initially static outside the unit ball. We are in a position to state our main sup-norm estimate. 

\begin{proposition}[Sup-norm estimate for the Einstein equations]
\label{prop 1 sup-norm-W}
Let $(g_{\alpha \beta}, \phi)$ be a solution of the Einstein-massive field system associated with a compact Schwarzschild initial data. 
Assume that the source terms in \eqref{eq:SJO} satisfy 
\be
|S_{\text{Int}, \alpha \beta}^{W,I,J}| + |S_{\alpha \beta}^{KG,I,J}| \leq C_*  t^{-2- \nu}(t-r)^{-1+\mu}.
\ee
Then, when $0<\mu\leq 1/2$ and $0<\nu\leq 1/2$, one has 
\be
|\del^IL^Jh_{\alpha \beta}(t, x)| \leq \frac{CC_* (\alpha, \beta)}{\mu|\nu|}t^{-1}(t-r)^{\mu- \nu} + C m_St^{-1}, 
\ee
while, when $0<\mu\leq 1/2$ and $-1/2\leq \nu<0$, 
\be
|\del^IL^Jh_{\alpha \beta}(t, x)| \leq \frac{CC_* (\alpha, \beta)}{\mu|\nu|}t^{-1- \nu}(t-r)^{\mu} + C m_St^{-1}.
\ee
\end{proposition}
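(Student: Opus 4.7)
\medskip

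\noindent\textbf{Proof proposal.} The strategy is to exploit the linearity of the flat wave operator $\Box$ and split the equation \eqref{eq:SJO} for $u:=\del^IL^Jh_{\alpha\beta}$ into three independent Cauchy problems whose solutions add up to $u$. Writing $u = u_{\text{Int}} + u_{\text{Sch-S}} + u_{\text{Sch-D}}$, the three pieces are chosen as follows. First, $u_{\text{Int}}$ solves $\Box u_{\text{Int}} = S^{W,I,J}_{\text{Int},\alpha\beta} + S^{KG,I,J}_{\alpha\beta}$ with initial data $(I^{0,I,J}_{\text{Int},\alpha\beta},\, I^{1,I,J})$, which is compactly supported in the unit ball; the corresponding source is also spatially supported in $\Kcal$ and obeys the decay hypothesis of the Proposition. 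Second, $u_{\text{Sch-S}}$ is the solution of $\Box u_{\text{Sch-S}} = S^{W,I,J}_{\text{Ext},\alpha\beta}$ with vanishing initial data; by Lemma~\ref{lem 1 sup-norm-W} its source is controlled by $m_S^2(1-\mathds{1}_\Kcal)r^{-4}$ and is supported outside the future light cone $\Kcal$. Third, $u_{\text{Sch-D}}$ is the solution of the homogeneous wave equation $\Box u_{\text{Sch-D}} = 0$ with initial data $(I^{0,I,J}_{\text{Ext},\alpha\beta},0)$, which is exactly (a derivative of) Schwarzschild restricted to $\{r \geq 1\}$.

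For the interior piece $u_{\text{Int}}$ the problem reduces exactly to the sup-norm estimate that the authors already derived in \cite{PLF-YM-one}: the source is compactly supported in $\Kcal$ and satisfies $|S^{W,I,J}_{\text{Int},\alpha\beta}|+|S^{KG,I,J}_{\alpha\beta}|\le C_*t^{-2-\nu}(t-r)^{-1+\mu}$, and the initial data is compactly supported and of size controlled by $\vep$. Plugging these bounds into the explicit Kirchhoff-type representation formula (as recalled in the Appendix of the Monograph) yields, in the first regime $0<\mu,\nu\le 1/2$, the contribution $\tfrac{CC_*}{\mu|\nu|} t^{-1}(t-r)^{\mu-\nu}$, and in the second regime $-1/2\le\nu<0<\mu\le 1/2$, the contribution $\tfrac{CC_*}{\mu|\nu|} t^{-1-\nu}(t-r)^{\mu}$. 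These are precisely the two decay rates that appear in the conclusion.

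It then remains to bound the two Schwarzschild contributions by $Cm_S/t$ inside $\Kcal$. For $u_{\text{Sch-D}}$ this is a direct consequence of the well-known pointwise decay of solutions to the linear wave equation with Schwarzschild-like stationary data: since $\del^IL^J h_{S,\alpha\beta}(2,\cdot)$ decays like $m_S\langle r\rangle^{-1-|I|-|J|}$ and is supported in $\{r\ge 1\}$, applying the standard Kirchhoff formula to the homogeneous Cauchy problem (and using that we evaluate inside $\Kcal$, so $t-r\ge 1$) gives exactly the $Cm_S/t$ decay. For $u_{\text{Sch-S}}$ the source is supported outside $\Kcal$; for a point $(t,x)\in\Kcal$, the backward light cone intersects $\text{supp}\,S^{W,I,J}_{\text{Ext},\alpha\beta}\subset\Kcal^c$ in a region where $r\gtrsim t$, so the integral of $m_S^2 r^{-4}$ weighted by the Kirchhoff kernel produces a further $Cm_S^2/t \le Cm_S/t$ contribution.

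The delicate point, and the one that requires the most care, is the careful bookkeeping at the interface $\del\Kcal$ between interior and exterior data, where the cut-off $\widetilde{\chi}$ is introduced: one must check that the commutator between $\widetilde{\chi}(r)$ and $\del^IL^J$ produces only bounded-order corrections that are absorbed into the $Cm_S/t$ term, and that the discontinuity in the characteristic function $\mathds{1}_\Kcal$ entering $S^{W,I,J}_{\text{Int}}$ and $S^{W,I,J}_{\text{Ext}}$ is harmless because, by Proposition~\ref{prop basic-localization}, $h_{\alpha\beta}-{\gSch}_{\alpha\beta}$ vanishes in a neighborhood of $\del_B\Kcal$, so the interior and exterior pieces agree smoothly across the cone. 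Once these localization issues are settled, combining the three bounds yields the stated estimate.
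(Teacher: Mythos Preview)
Your decomposition is essentially the paper's: the authors split into five pieces $h^{IJ,1},\ldots,h^{IJ,5}$, and your $u_{\text{Int}}$, $u_{\text{Sch-S}}$, $u_{\text{Sch-D}}$ are exactly $h^{IJ,1}+h^{IJ,2}+h^{IJ,4}$, $h^{IJ,3}$, and $h^{IJ,5}$ respectively. The interior piece is handled just as you say, by appeal to Proposition~3.1 of \cite{PLF-YM-one} (reproved in the Appendix here as Proposition~\ref{Linfini wave}).

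The gap is in your treatment of the two Schwarzschild pieces, which is where the actual work of this proposition lies. For $u_{\text{Sch-S}}=h^{IJ,3}$, your heuristic that the backward cone meets $\Kcal^c$ only ``where $r\gtrsim t$'' is false: at retarded time $\sbar$, the intersection consists of points with $|y|\ge \sbar-1$, and for $\sbar$ small this is far from $t$. The paper carries out the Kirchhoff integral of $m_S^2|y|^{-4}$ explicitly via a spherical change of variables and a case analysis on the relative sizes of $1-\lambda$, $r/t$, and $\lambda-1/t$; the bound $Cm_S^2/t$ comes out only after this computation. For $u_{\text{Sch-D}}=h^{IJ,5}$, the data $I^{0,I,J}_{\text{Ext},\alpha\beta}$ decays only like $m_S/r$ when $|I|=|J|=0$, which is the borderline rate for which the Kirchhoff formula yields $C/t$; this is not ``well-known'' in the sense of being immediate, and the paper proves it by hand in Lemmas~\ref{lem 3 sup-norm-W} and~\ref{lem 4 sup-norm-W} (showing $\int_{|y-x|=t}|y|^{-1}d\sigma\le Ct$ via the same parametrization).

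Finally, your ``delicate points'' are not issues. The cut-off $\widetilde\chi$ is applied to the initial data \emph{after} $\del^IL^J$ has already produced the linear equation for $u=\del^IL^Jh_{\alpha\beta}$, so there is no commutator to track; and the discontinuity of $\mathds{1}_{\Kcal}$ in the source splitting is harmless because Duhamel's formula for the flat wave equation only requires an $L^\infty$ (or even $L^1_{\rm loc}$) source, not a smooth one.
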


For the proof of this result, we will rely on the decomposition $\del^IL^Jh_{\alpha \beta} = \sum_{k=1}^5 h_{\alpha \beta}^{IJ,k}$ with 
\begin{subequations} 
\be
\aligned
&\Box h_{\alpha \beta}^{IJ,1} = S^{W,I,J}_{\text{Int}, \alpha \beta} ,
\qquad h_{\alpha \beta}^{IJ,1}(2, \cdot) = 0, 
\quad \del_t h_{\alpha \beta}^{IJ,1}(2, \cdot) = 0,
\endaligned
\ee
\be
\aligned
&\Box h_{\alpha \beta}^{IJ,2} = S^{KG,I,J}_{\alpha \beta},
\qquad
h_{\alpha \beta}^{IJ,2}(2, \cdot) = 0,
\quad \del_t h_{\alpha \beta}^{IJ,2}(2, \cdot) = 0,
\endaligned
\ee
\be
\aligned
&\Box h_{\alpha \beta}^{IJ,3} = S^{W,I,J}_{\text{Ext}, \alpha \beta},
\qquad
h_{\alpha \beta}^{IJ,3}(2, \cdot) = 0, 
\quad \del_t h_{\alpha \beta}^{IJ,3}(2, \cdot) = 0,
\endaligned
\ee
\be
\aligned
&
\Box h_{\alpha \beta}^{IJ,4} = 0,
\qquad \quad
h_{\alpha \beta}^{IJ,4}(2, \cdot) = I^{0,I,J}_{\text{Int}, \alpha \beta}, 
\qquad
\del_th_{\alpha \beta}^{IJ,4}(2, \cdot) = I^{1,I,J}_{\alpha \beta},
\endaligned
\ee
\be
\aligned
&\Box h_{\alpha \beta}^{IJ,5} = 0,
\qquad
h_{\alpha \beta}^{IJ,5}(2, \cdot) = I^{0,I,J}_{\text{Ext}, \alpha \beta}, 
\quad 
\del_th_{\alpha \beta}^{IJ,5}(2, \cdot) =0. 
\endaligned
\ee
\end{subequations} 
The proof of Proposition \ref{prop 1 sup-norm-W} is immediate once we control each term. 

First of all, the estimates for $h_{\alpha \beta}^{IJ,1}$ and $h_{\alpha \beta}^{IJ,2}$ are immediate from Proposition~3.1 in \cite{PLF-YM-one}, since they concern compactly supported sources. The control of $h_{\alpha \beta}^{IJ,4}$ is standard for the homogeneous wave equation with {\sl compact} initial data. 

\begin{lemma} 
\label{lem 2 sup-norm-W}
The metric coefficients satisfy the inequality
\bel{ineq lem 2 sup-norm-W}
|h_{\alpha \beta}^{IJ,4}(t, x)| 
\leq 
C t^{-1} \Big(
\|\del^IL^J h_{\alpha \beta} (2, \cdot) \|_{W^{1,\infty}(\{r\leq 1\})} 
+ \| \del_t  \del^IL^J h_{\alpha \beta} (2, \cdot)  \|_{L^{\infty}(\{r\leq 1\})} 
\Big) \mathds{1}_{\{|t+2-r| \leq 1\}}(t, x).
\ee
\end{lemma}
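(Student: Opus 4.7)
The plan is to apply Kirchhoff's representation formula in $\RR^{3+1}$ to the homogeneous wave equation satisfied by $h^{IJ,4}_{\alpha\beta}$. Write $f := I^{0,I,J}_{\text{Int},\alpha\beta}$ and $g := I^{1,I,J}_{\alpha\beta}$, both compactly supported in a bounded neighborhood of the unit ball $\{r\leq 1\}$ (the support of the cut-off $\widetilde{\chi}$, together with the initial data structure of a compact Schwarzschild perturbation). Since the spatial dimension is three, the strong Huygens principle applies and the unique solution reads
$$
h^{IJ,4}_{\alpha\beta}(t,x) = \frac{1}{4\pi(t-2)^{2}}\int_{\del B(x,t-2)}g(y)\,dS(y) + \frac{\del}{\del t}\!\left(\frac{1}{4\pi(t-2)}\int_{\del B(x,t-2)}f(y)\,dS(y)\right).
$$
Both integrals vanish unless the sphere $\del B(x,t-2)$ meets the unit ball, which by the triangle inequality happens precisely when $|r-(t-2)|\leq 1$; this produces the indicator factor in \eqref{ineq lem 2 sup-norm-W}.

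The central step is a geometric estimate for the surface area of the intersection of this sphere with the unit ball. Parametrizing points on the sphere as $y = x + (t-2)\omega$ with $\omega\in S^{2}$, the condition $|y|\leq 1$ reduces to $\cos\theta \leq (1-r^{2}-(t-2)^{2})/\bigl(2r(t-2)\bigr)$, where $\theta$ is the angle between $\omega$ and $x/r$. Whenever the sphere does intersect the ball (so $|r-(t-2)|\leq 1$), a direct expansion shows that the right-hand side has the form $-1+O((t-2)^{-2})$, so the admissible $\omega$ lie in a polar cap of angular radius $O((t-2)^{-1})$, hence of solid angle $O((t-2)^{-2})$; the corresponding surface area on the large sphere of radius $t-2$ is therefore bounded by an absolute constant $C$, uniformly for $t$ away from $2$ (for $t$ in a bounded interval around $2$ the same bound is trivial). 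I expect this cap-area estimate to be the only genuinely technical step; all remaining work is bookkeeping.

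Inserting this bound into Kirchhoff's formula, the $g$-term is immediately controlled by $C(t-2)^{-2}\|g\|_{L^{\infty}}$. For the $f$-term, I would change variables to obtain $\frac{1}{4\pi(t-2)}\int_{\del B(x,t-2)} f\,dS = \frac{t-2}{4\pi}\int_{S^{2}}f(x+(t-2)\omega)\,dS(\omega)$, and differentiate in $t$: one piece is $\frac{1}{4\pi}\int_{S^{2}}f(x+(t-2)\omega)\,dS(\omega)$, bounded by $C(t-2)^{-2}\|f\|_{L^{\infty}}$ via the cap estimate, and the other is $\frac{t-2}{4\pi}\int_{S^{2}}\nabla f(x+(t-2)\omega)\cdot\omega\,dS(\omega)$, bounded by $C(t-2)^{-1}\|\nabla f\|_{L^{\infty}}$ by the same area bound. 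Summing all contributions yields
$$
|h^{IJ,4}_{\alpha\beta}(t,x)| \leq \frac{C}{t-2}\bigl(\|f\|_{W^{1,\infty}}+\|g\|_{L^{\infty}}\bigr)\mathds{1}_{\{|t-2-r|\leq 1\}}(t,x).
$$

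Finally, on the support of the indicator one has $r\geq t-3$, so for $t\geq 3$ we get $(t-2)^{-1}\leq 3t^{-1}$; for $t\in[2,3]$ the factor $t^{-1}$ is bounded below by a constant and the desired estimate follows from the standard local existence/energy theory. Using boundedness of $\widetilde{\chi}$ and $\nabla\widetilde{\chi}$, one can replace $\|f\|_{W^{1,\infty}}$ and $\|g\|_{L^{\infty}}$ by $\|\del^{I}L^{J}h_{\alpha\beta}(2,\cdot)\|_{W^{1,\infty}(\{r\leq 1\})}$ and $\|\del_{t}\del^{I}L^{J}h_{\alpha\beta}(2,\cdot)\|_{L^{\infty}(\{r\leq 1\})}$ respectively (the complementary annulus $\{1\leq r\leq 2\}$ contains only Schwarzschild tail, whose smallness is absorbed into $m_{S}$ and treated separately in $h^{IJ,3}_{\alpha\beta}$ and $h^{IJ,5}_{\alpha\beta}$). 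This completes the proof of \eqref{ineq lem 2 sup-norm-W}.
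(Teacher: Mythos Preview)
Your approach via Kirchhoff's formula and the strong Huygens principle is exactly the ``standard'' argument the paper alludes to (the paper states this lemma without proof, simply noting that the control of $h_{\alpha\beta}^{IJ,4}$ is standard for the homogeneous wave equation with compact initial data). The spherical-cap area estimate is the right mechanism for extracting the $t^{-1}$ decay, and your handling of the support condition and of the annulus $\{1\le r\le 2\}$ is appropriate.

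One small correction: in your Kirchhoff formula the $g$-term should carry a prefactor $\frac{1}{4\pi(t-2)}$, not $\frac{1}{4\pi(t-2)^{2}}$ (equivalently, $\frac{t-2}{4\pi}\int_{S^{2}}g(x+(t-2)\omega)\,d\omega$). With the correct prefactor and your cap-area bound the $g$-term is $O((t-2)^{-1}\|g\|_{L^{\infty}})$, which is precisely what the lemma requires; your stated $O((t-2)^{-2})$ bound is too strong and comes from this slip. This does not affect the overall argument.
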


We thus need to study the behavior of $h_{\alpha \beta}^{IJ,3}$ and $h_{\alpha \beta}^{IJ,5}$. 
We treat first the function $h_{\alpha \beta}^{IJ,5}$ and observe that 
\bel{eq 1 prop 1 sup-norm-W}
\aligned
& h_{\alpha \beta}^{IJ,5}(t, x)
\\
& =  \frac{1}{4\pi (t-2)^2} \int_{|y-x| = t-2} \left(I_{\text{Ext}, \alpha \beta}^{0,I,J}(y) - \langle \nabla I_{\text{Ext}, \alpha \beta}^{0,I,J}(y), x-y\rangle \right) d\sigma(y)
\\
& =  \frac{1}{4\pi (t-2)^2} \int_{|y-x| = t-2}I_{\text{Ext}, \alpha \beta}^{0,I,J}(y) d\sigma(y)
    -  \frac{1}{4\pi (t-2)^2} \int_{|y-x|=t-2} \langle \nabla I_{\text{Ext}, \alpha \beta}^{0,I,J}(y), x-y\rangle d\sigma(y). 
\endaligned
\ee
We now estimate  the two integral terms successively. 

\begin{lemma} \label{lem 3 sup-norm-W}
One has
$
\left|\int_{|y-x| = t}I_{\text{Ext}, \alpha \beta}^{0,I,J}(y) d\sigma(y) \right| \leq Cm_St.
$
\end{lemma}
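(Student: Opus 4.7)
The plan is to exploit the identification of $I^{0,I,J}_{\text{Ext}, \alpha\beta}$ with (a cutoff of) the restriction of the Schwarzschild curved metric part $h_{S,\alpha\beta} := {\gSch}_{\alpha\beta} - m_{\alpha\beta}$ to the initial slice $\{t=2\}$. Since by hypothesis the initial data coincides with the Schwarzschild data for $r \geq 1$, and the cutoff $1 - \widetilde\chi(|y|)$ is supported in $\{|y| \geq 1/3\}$, one has
\begin{equation*}
I^{0,I,J}_{\text{Ext}, \alpha\beta}(y) \;=\; \big(1 - \widetilde\chi(|y|)\big)\, \del^I L^J h_{S, \alpha\beta}(2, y).
\end{equation*}
The assertion will then follow from the pointwise bound
\begin{equation}\label{eq:pw-plan}
\bigl|\del^I L^J h_{S,\alpha\beta}(2, y)\bigr| \;\leq\; \frac{C\, m_S}{1 + |y|}, \qquad |y| \geq 1/3,
\end{equation}
combined with the Kirchhoff-type sphere estimate $\int_{|y-x|=t} (1+|y|)^{-1}\, d\sigma(y) \leq Ct$.

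The key input for \eqref{eq:pw-plan} is the fact that $h_{S,\alpha\beta}$ is time-independent in wave coordinates, so $\del_t h_{S,\alpha\beta} \equiv 0$. Unwinding the definition $L_a = x^a \del_t + t\del_a$ by induction on $|J|$, one obtains a decomposition
\begin{equation*}
L^J h_{S,\alpha\beta}(t, x) \;=\; \sum_{|\beta| \leq |J|} p_{J, \beta}(t, x)\, \del_x^\beta h_{S, \alpha\beta}(x),
\end{equation*}
where each $p_{J,\beta}(t, \cdot)$ is a polynomial in $x$ of degree at most $|J| - |\beta|$. Indeed, when the $x^a\del_t$ part of $L_a$ falls on a factor $\del_x^\beta h_{S, \alpha\beta}$ directly its contribution vanishes by staticness; it survives only by hitting a previously generated polynomial coefficient, which inserts a factor $x^a$ at the cost of one power of $t$. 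From the explicit form \eqref{eq Sch-wave} one checks $|\del_x^\beta h_{S,\alpha\beta}(x)| \leq C m_S \, r^{-1 - |\beta|}$ for $r \geq 1/2$ and $m_S$ sufficiently small, so the polynomial growth of $p_{J, \beta}$ is exactly compensated by the extra decay of the spatial derivatives of $h_{S,\alpha\beta}$. Applying the extra $\del^I$ only improves or preserves this rate, producing \eqref{eq:pw-plan}.

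It remains to verify the sphere integral bound, which is a purely geometric computation. Parametrizing $y = x + t\omega$ with $\omega \in S^2$ and using rotational symmetry around $x$, the change of variable $u = \cos\theta$ followed by $v = \sqrt{|x|^2 + 2|x|tu + t^2}$ reduces the integral to $\frac{2\pi t}{|x|}\int_{||x|-t|}^{|x|+t} \frac{v}{1+v}\, dv$. A case split on the regimes $|x| \leq t/2$, $t/2 \leq |x| \leq 2t$ (where the endpoint $v = ||x|-t|$ approaches $0$ and a logarithmic factor must be absorbed), and $|x| \geq 2t$ shows that this quantity is bounded by $Ct$ uniformly in $x$. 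Multiplying by $C m_S$ yields the claim.

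\textbf{Main obstacle.} The delicate point is the induction producing the decomposition of $L^J h_{S,\alpha\beta}$: although each Lorentz boost $L_a$ can introduce a factor $x^a$ of size $r$, this always occurs at the cost of losing one unit of differentiation on $h_{S,\alpha\beta}$ (equivalently, of gaining one power of $r^{-1}$), so the net decay $m_S/r$ is preserved uniformly in $|I|, |J|$. Once this bookkeeping is done, the spherical integral estimate is standard.
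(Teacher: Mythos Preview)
Your overall strategy coincides with the paper's: reduce to the pointwise bound $|I^{0,I,J}_{\text{Ext},\alpha\beta}(y)| \leq Cm_S(1+|y|)^{-1}$ and then estimate the spherical integral $\int_{|y-x|=t}(1+|y|)^{-1}\,d\sigma(y)$. The paper simply asserts the pointwise bound and devotes the proof to the spherical integral; your computation of the latter (change of variable $u=\cos\theta$, then $v=|y|$, followed by a case split in $|x|/t$) is essentially the paper's argument, and in fact your treatment of the regime $|x|>t$ is more explicit than the paper's.

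There is, however, a genuine gap in your derivation of \eqref{eq:pw-plan}. Your claim that $p_{J,\beta}(t,\cdot)$ has $x$-degree at most $|J|-|\beta|$ is correct, but it does \emph{not} yield the compensation you assert: combining it with $|\del_x^\beta h_{S,\alpha\beta}| \lesssim m_S r^{-1-|\beta|}$ gives only $m_S\,r^{|J|-2|\beta|-1}$, which grows in $r$ whenever $|\beta|<|J|/2$ (already for $|J|=4$, $|\beta|=1$). The bound you actually need---and which follows from the same induction---is that $p_{J,\beta}$ is homogeneous of total degree exactly $|\beta|$ in $(t,x)$. Indeed, in your step-by-step analysis the only operation that raises the total $(t,x)$-degree of the coefficient is the $t\del_a$ part of $L_a$ landing on $h_S$ (your ``case 3''), and this simultaneously increments $|\beta|$; the other two cases leave the total degree unchanged. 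Hence at $t=2$ one has $|p_{J,\beta}(2,x)| \leq C(1+|x|)^{|\beta|}$, which compensates $r^{-1-|\beta|}$ exactly and gives \eqref{eq:pw-plan}. With this correction your argument is complete. (Minor slip: the support of $1-\widetilde\chi$ is $\{|y|\geq 1\}$, not $\{|y|\geq 1/3\}$; this is harmless.)
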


\begin{proof} Since $g_{\alpha \beta}$ coincides with the Schwarzschild metric outside $\{r\geq 1\}$, we have immediately 
$|I_{\text{Ext}, \alpha \beta}^{0,I,J}| \leq Cm_S(1+r)^{-1}$ 
and thus
\bel{ineq 1 proof lem 3 sup-norm-W}
\left|\int_{|y-x| = t}I_{\text{Ext}, \alpha \beta}^{0,I,J}(y) d\sigma(y) \right| \leq Cm_S \int_{|y-x|=t} \frac{d\sigma(y)}{1+|y|}
=: Cm_S \, \Theta(t,x).
\ee
Assume that $r>0$ and, without loss of generality, $x = (r,0,0)$. Introduce the parametrization of the sphere $\{|y-x|=t\}$ such that: 
\begin{itemize}

\item $\theta \in[0, \pi]$ is the angle from $(-1,0,0)$ to $y-x$. 

\item $\varphi\in[0,2\pi)$ is the angle from the plane determined by $(1,0,0)$ and $(0,1,0)$ to the plane determined by $y-x$ and $(1,0,0)$.
\end{itemize}
With this parametrization, $d\sigma(y) = t^2\sin\theta d\theta d\varphi$ and the above integral reads 
$$
\Theta(t,x)
= 
\int_{|y-x|=t} \frac{d\sigma(y)}{1+|y|}
= t^2\int_0^{2\pi} \int_0^{\pi} \frac{\sin\theta d\theta d\varphi}{1+t\big(1+(r/t)^2 - (2r/t) \cos\theta \big)^{1/2}}, 
$$
where the law of cosines was applied to $|y|$. Then, we have
$$
\aligned
\Theta(t,x) 
& = 2\pi t^2\int_0^\pi \frac{\sin\theta d\theta}{1+t\big(1+(r/t)^2 - (2r/t) \cos\theta \big)^{1/2}} 
\\
& = 2\pi t^2\int_{-1}^1\frac{d\sigma}{1+t|1+(r/t)^2 - (2r/t) \sigma|^{1/2}},
\endaligned
$$
with the change of variable $\sigma:= \cos \theta$, so that $\lambda: = t|1+(r/t)^2 - (2r/t) \sigma|^{1/2}$ and 
$$
\aligned
\Theta(t,x)
& =  2\pi tr^{-1} \int_{t-r}^{t+r} \frac{\lambda d\lambda}{1+\lambda}
=4\pi t - 2\pi tr^{-1} \ln\left(\frac{t+r+1}{t-r+1} \right). 
\\
\endaligned
$$
The second term is bounded by the following observation. When $r\geq t/2$, this term is bounded by $\ln(t+1)$. When $r\leq t/2$, according to the mean value theorem, there exists $\xi$ such that 
$$
r^{-1} \ln\left(\frac{t+r+1}{t-r+1} \right) = 2 \, \frac{\left(\ln(1+t+r) - \ln(1+t-r) \right)}{2r}  = \frac{2}{1+t+\xi}.
$$  
By recalling $r \leq t/2$, we deduce that 
$
\left|r^{-1} \ln\left(\frac{t+r+1}{t-r+1} \right) \right| \leq \frac{C}{1+t} 
$
and we conclude that the first term in the right-hand side of \eqref{ineq 1 proof lem 3 sup-norm-W} is bounded by
$$
 Cm_S \int_{|y-x|=t} \frac{d\sigma(y)}{1+|y|} \leq Cm_St.
$$
We also observe that, when $r=0$, we have 
$\int_{|y|=t} \frac{d\sigma(y)}{1+|y|} = \frac{4\pi t^2}{1+t}$
and thus 
$
 Cm_S \int_{|y-x|=t} \frac{d\sigma(y)}{1+|y|} \leq Cm_St.
$
\end{proof}

The proof of the following lemma is similar to the one abve and we omit the proof. 

\begin{lemma} \label{lem 4 sup-norm-W} One has 
$$
\left|\int_{|y-x|=t} \langle \nabla I_{\text{Ext}, \alpha \beta}^{0,I,J}(y), x-y\rangle d\sigma(y) \right| \leq Cm_S t.
$$
\end{lemma}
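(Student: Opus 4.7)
The plan is to follow the strategy of Lemma~\ref{lem 3 sup-norm-W}, parametrizing the sphere $\{|y-x|=t\}$ via $\lambda=|y|$, but with one added ingredient. From the explicit Schwarzschild form \eqref{eq Sch-wave} one has $|\nabla I_{\text{Ext},\alpha\beta}^{0,I,J}(y)|\leq Cm_S(1+|y|)^{-2}$; however, the naive bound $|\langle \nabla I,x-y\rangle|\leq t|\nabla I|$ combined with this parametrization produces $\tfrac{2\pi t}{r}\int_{|t-r|}^{t+r}\tfrac{\lambda\,d\lambda}{(1+\lambda)^2}$, which grows logarithmically in $(t-r)^{-1}$ near the cone $r\approx t$ and thus only yields $Cm_S t\ln t$, not the sharp $Cm_S t$.

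To recover the required cancellation, I would use the spherical-mean identity
\[
\int_{|y-x|=t}\langle \nabla f(y), x-y\rangle\,d\sigma(y) \;=\; -t\,\Phi'(t)+2\,\Phi(t),\qquad \Phi(t):=\int_{|y-x|=t}f(y)\,d\sigma(y),
\]
which is obtained by parametrizing $y=x+t\omega$ with $\omega\in S^2$, using $\omega\cdot\nabla f(x+t\omega)=\tfrac{d}{dt}f(x+t\omega)$, and differentiating $\Phi(t)=t^2\int_{|\omega|=1}f(x+t\omega)\,dS(\omega)$ in $t$. Applied with $f=I_{\text{Ext},\alpha\beta}^{0,I,J}$, Lemma~\ref{lem 3 sup-norm-W} already gives $|\Phi(t)|\leq Cm_S t$, so the task reduces to proving $|\Phi'(t)|\leq Cm_S$.

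I would establish this bound by revisiting the explicit calculation from the proof of Lemma~\ref{lem 3 sup-norm-W}. After decomposing $\partial^I L^J h_{\alpha\beta}^S(2,\cdot)$ into a finite sum of scalar pieces of the form (bounded angular weight)\,$\times\,g(|y|)$ with $|g(\lambda)|\leq Cm_S(1+\lambda)^{-1}$ (using \eqref{eq Sch-wave} together with the fact that $L_a|_{t=2}=x^a\partial_t+2\partial_a$ acts on the static Schwarzschild data as $2\partial_a$), each such piece contributes $\tfrac{2\pi t}{r}\int_{|t-r|}^{t+r}\lambda G(\lambda)\,d\lambda$ to $\Phi(t)$. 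Differentiating in $t$ produces $\Phi(t)/t$ (bounded by $Cm_S$) together with a boundary contribution of the form $\tfrac{2\pi t}{r}\bigl[(t+r)G(t+r)-\operatorname{sgn}(t-r)(t-r)G(|t-r|)\bigr]$. For $r\leq t$, the algebraic identity $(t+r)(1+t-r)-(t-r)(1+t+r)=2r$ reduces this to $\tfrac{4\pi Cm_S t}{(1+t+r)(1+t-r)}\leq Cm_S$ via $(1+t+r)(1+t-r)=(1+t)^2-r^2\geq 1+2t$; the case $r>t$ is handled analogously with $(1+r)^2-t^2\geq 1+2r$. Hence $|\Phi'(t)|\leq Cm_S$, and the identity yields $|{-t\Phi'(t)+2\Phi(t)}|\leq Cm_S t$, as claimed.

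The main technical obstacle is the non-radial tensorial structure of $\partial^I L^J h_{\alpha\beta}^S$ at $t=2$: the Schwarzschild components \eqref{eq Sch-wave} carry angular weights $\omega_a\omega_b$ and $\delta_{ab}-\omega_a\omega_b$, and iterated Lorentz-boost applications produce polynomial factors in $x$ coupled to spatial derivatives. The required decomposition into scalar radial pieces is routine but combinatorially nontrivial; after expansion, every radial factor obeys at least the base-case decay $Cm_S(1+\lambda)^{-1}$ (with strictly better decay $Cm_S(1+\lambda)^{-1-k}$ when extra spatial derivatives are taken), so the above algebraic estimate on $\Phi'$ applies uniformly to each piece, and the $Cm_S t$ bound is preserved.
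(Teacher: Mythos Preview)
Your spherical-mean identity $\int_{|y-x|=t}\langle\nabla f,\,x-y\rangle\,d\sigma = -t\Phi'(t)+2\Phi(t)$ is correct, and you are right that the crude bound $|\langle\nabla I,\,x-y\rangle|\le t\cdot Cm_S(1+|y|)^{-2}$ followed by the $\lambda$-parametrization of Lemma~\ref{lem 3 sup-norm-W} loses a logarithm. The paper's omitted ``similar'' proof therefore cannot be just that; presumably it computes the inner product explicitly (for a radial piece $g(|y|)$ one finds $\langle\nabla g,\,x-y\rangle=g'(\lambda)\tfrac{r^2-t^2-\lambda^2}{2\lambda}$, and the resulting $\lambda$-integral is $O(m_S r)$ after an integration by parts that produces exactly the $\Phi$ of Lemma~\ref{lem 3 sup-norm-W} plus boundary terms). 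Your identity is a clean reformulation of the same cancellation.

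The real gap is your handling of angular weights and of the boundary term in $\Phi'$. Your formula $\Phi(t)=\tfrac{2\pi t}{r}\int_{|t-r|}^{t+r}\lambda G(\lambda)\,d\lambda$ holds only for a \emph{radial} integrand, and your algebraic identity for the boundary term assumes $G(\lambda)=c/(1+\lambda)$ exactly, not merely $|G|\le c/(1+\lambda)$. But for $|J|\ge 2$ the leading $1/r$ part of $\del^I L^J h_S|_{t=2}$ is genuinely non-radial: for instance $L_bL_a h_{S,00}|_{t=2}=-2m_S\omega_a\omega_b/r+O(m_S/r^3)$. The good news is that pieces decaying like $r^{-2}$ or faster cause no log (the naive bound already gives $\int(1+|y|)^{-3}d\sigma\le C$), so only the $r^{-1}$ pieces matter. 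For those, what actually controls the boundary term in $\Phi'$ is the structural fact $(\lambda g(\lambda))'=O(m_S\lambda^{-2})$, which the explicit Schwarzschild profiles satisfy; and for non-radial $r^{-1}$ pieces like $y_ay_b/r^3$ one must compute the spherical mean honestly (expanding $y=x+t\nu$ and using $\int_{S^2}\nu_a\nu_b\,dS=\tfrac{4\pi}{3}\delta_{ab}$), after which the same $(\lambda g)'$ bound applies. The step you label ``routine but combinatorially nontrivial'' is precisely where the logarithm is killed, and it needs an actual argument rather than deferral.
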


From the above two lemmas, we conclude that $\big|h_{\alpha \beta}^{IJ,5}(t, x) \big| \leq Cm_St^{-1}$
as expected, and we can finally turn our attention to the last term $h_{\alpha \beta}^{IJ,3}$. 

\begin{lemma} One has 
$
|h_{\alpha \beta}^{IJ,3}(t, x)| \leq Cm_S^2 t^{-1}.
$
\end{lemma}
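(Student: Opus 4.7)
The plan is to treat $h^{IJ,3}_{\alpha \beta}$ via the retarded Kirchhoff representation for the inhomogeneous wave equation $\Box h^{IJ,3}_{\alpha\beta} = S^{W,I,J}_{\text{Ext},\alpha\beta}$ with vanishing Cauchy data at $t=2$, namely
$$
h^{IJ,3}_{\alpha \beta}(t, x) \, = \, \frac{1}{4\pi} \int_2^t \frac{1}{t-s} \int_{|y-x| = t-s} S^{W,I,J}_{\text{Ext}, \alpha \beta}(s, y)\, d\sigma(y)\, ds.
$$
Substituting the pointwise bound $|S^{W,I,J}_{\text{Ext}, \alpha \beta}(s,y)| \leq C m_S^2 \, \mathds{1}_{\{|y|\geq s-1\}}\, |y|^{-4}$ furnished by Lemma \ref{lem 1 sup-norm-W} reduces the problem to estimating a scalar integral depending only on $t$ and $r = |x|$.

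The first key step is the explicit evaluation of the inner spherical integral. Following the parametrization used in the proof of Lemma \ref{lem 3 sup-norm-W} (polar angles $(\theta,\varphi)$ on $\{|y-x|=t-s\}$ with $\theta$ measured from $-x/r$), and then changing variables to $\lambda = |y|$ via $\lambda^2 = r^2 + (t-s)^2 - 2r(t-s)\cos\theta$, one obtains
$$
\int_{|y-x|=t-s,\; |y|\geq s-1} \frac{d\sigma(y)}{|y|^4} \, = \, \frac{\pi(t-s)}{r}\left(\frac{1}{\lambda_*(s)^2} - \frac{1}{\bigl(r+(t-s)\bigr)^2}\right), \qquad \lambda_*(s) := \max \bigl(|r-(t-s)|,\, s-1 \bigr),
$$
and the integrand is supported in $s \in [2,(t+r+1)/2]$. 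The next step is to substitute this identity into the Kirchhoff integral, split the $s$-range according to whether $\lambda_*(s)=|r-(t-s)|$ (for $s \leq (t-r+1)/2$) or $\lambda_*(s)=s-1$ (for $s \geq (t-r+1)/2$), and explicitly evaluate the two resulting elementary one-dimensional integrals. Tallying the four boundary contributions yields the announced estimate $|h^{IJ,3}_{\alpha\beta}(t,x)| \leq Cm_S^2 t^{-1}$ (in fact, a stronger $Cm_S^2 t^{-2}$ decay).

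The main obstacle is that one must retain \emph{both} endpoint contributions $\lambda_*^{-2}$ and $(r+(t-s))^{-2}$ of the $\lambda$-integration and exploit the cancellation between them. The naive monotone bound $\int_{\lambda_*}^{\infty}\lambda^{-3}\,d\lambda \leq \lambda_*^{-2}/2$ discards the second endpoint and only delivers $|h^{IJ,3}| \leq Cm_S^2 (rt)^{-1}$, which degenerates as $r\to 0$. Keeping the telescoping difference $\lambda_*^{-2} - (r+(t-s))^{-2}$ produces, after the $s$-integration, an additional factor of order $r/t^2$, which is precisely what is needed to cancel the $1/r$ prefactor generated by the polar Jacobian. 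This effect is analogous to the logarithmic cancellation exploited in the treatment of $\Theta(t,x)$ in Lemma \ref{lem 3 sup-norm-W} and is elementary in nature but must be identified with some care. A secondary bookkeeping point is that when $(t,x)\notin \Kcal$ the upper limit of the $s$-integration has to be adjusted, but the same cancellation mechanism still drives the decay.
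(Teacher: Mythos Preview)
Your approach is essentially the same as the paper's: Kirchhoff representation, the pointwise source bound from Lemma~\ref{lem 1 sup-norm-W}, reduction of the spherical integral to a one-dimensional integral via the change of variable to the radial distance, and a case splitting according to which branch of $\lambda_*(s)=\max(|r-(t-s)|,\,s-1)$ is active. The paper performs the same computation after first normalizing $\lambda := s/t$ and using $\tau := |y - x/t|$ as the radial variable, leading to its auxiliary integral $I(\lambda)$; the two parametrizations are equivalent and the resulting boundary terms match exactly. You also correctly isolate the key point: discarding the upper endpoint $(r+(t-s))^{-2}$ would leave an unwanted $r^{-1}$ factor, and the cancellation between the two endpoints is what removes it.

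One correction: your parenthetical claim of a stronger $C m_S^2 t^{-2}$ decay is not valid uniformly. It does hold in the interior region $r \lesssim t$ (e.g.\ $r\le (t-1)/3$), where all denominators are comparable to $t^2$. But near the cone, say $1\le t-r\le 3$, your Part~1 interval is empty and Part~2 (with lower limit $s=2$) evaluates to
\[
-\frac{4}{t+r-1} + 1 + \frac{1}{t+r-2},
\]
which is of order $1$; after dividing by $r\sim t$ you recover exactly $t^{-1}$, not $t^{-2}$. Similarly, in the intermediate regime $t-r\ge 3$ with $t-r$ bounded, your combined boundary contributions simplify to $-2/[(t-2)^2-r^2] + 8/[(t-r-1)(t+r-1)]$, which is of order $t^{-1}$ rather than $t^{-2}$. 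So the sharp uniform bound is indeed $C m_S^2 t^{-1}$, as stated in the lemma and as the paper's case analysis confirms.
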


\begin{proof} This estimate is based on Lemma~\ref{lem 1 sup-norm-W} and on the explicit formula 
$$
h_{\alpha \beta}^{IJ,3}(t, x) = \frac{1}{4\pi} \int_2^t\frac{1}{t-s} \int_{|y|=t-s} S^{W,I,J}_{\text{Ext}, \alpha \beta} d\sigma(y)ds, 
$$
which yields us 
$$
\aligned
|h_{\alpha \beta}^{IJ,3}(t, x)| 
& \leq Cm_S^2\int_2^t\frac{1}{t-s} \int_{|y|=t-s} \frac{\mathds{1}_{\{|x-y|\geq s-1\}}d\sigma}{|x-y|^4}ds
\\
& = Cm_S^2 t^{-2} \int_{2/t}^1\frac{1}{1- \lambda} \int_{|y|=1- \lambda} \frac{\mathds{1}_{\{|y-x/t|\geq \lambda -1/t\}}d\sigma}{|y-x/t|^4}d\lambda
\endaligned
$$
thanks to the change of variable $\lambda: = s/t$. 
Without loss of generality, we set $x = (r,0,0)$ and introduce the following parametrization of the sphere $\{|y| = 1- \lambda \}$:
\begin{itemize}

\item $\theta$ denotes the angle from $(1,0,0)$ to $y$. 

\item $\varphi$ denotes the angle from the plane determined by $(1,0,0)$ and $(0,1,0)$ to the plane determined  by $(1,0,0)$ and $y$.
\end{itemize}
We have $d\sigma(y) = (1- \lambda)^2\sin\theta d\theta d\varphi$ and we must evaluate the integral
$$
\aligned
|h_{\alpha \beta}^{IJ,3}(t, x)|
& \leq  Cm_S^2t^{-2} \int_{2/t}^1\frac{d\lambda}{1- \lambda} \int_0^{2\pi} \int_0^\pi\frac{\mathds{1}_{\{|y-x/t|\geq \lambda - 1/t\}}(1- \lambda)^2\sin\theta d\theta d\varphi}{|(r/t)^2 + (1- \lambda)^2 - 2(r/t)(1- \lambda) \cos\theta|^2}
\\
& \leq  Cm_S^2t^{-2} \int_{2/t}^1\frac{d\lambda}{1- \lambda} \int_0^\pi\frac{\mathds{1}_{\{|y-x/t|\geq \lambda - 1/t\}}(1- \lambda)^2\sin\theta d\theta}{|(r/t)^2 + (1- \lambda)^2 - 2(r/t)(1- \lambda) \cos\theta|^2}.
\endaligned
$$

Consider the integral expression 
$$
\aligned
I(\lambda)
:=& \int_0^\pi\frac{\mathds{1}_{\{|y-x/t|\geq \lambda - 1/t\}}(1- \lambda)^2\sin\theta d\theta}{|(r/t)^2 + (1- \lambda)^2 - 2(r/t)(1- \lambda) \cos\theta|^2}
\\
=&
(1- \lambda)tr^{-1} \int^{1- \lambda+r/t}_{|1- \lambda - r/t|} \frac{\mathds{1}_{\{\tau\geq \lambda - 1/t\}}d\tau}{\tau^3}, 
\endaligned
$$
where we used the change of variable $\tau : = |(r/t)^2 + (1- \lambda)^2 - 2(r/t)(1- \lambda) \cos\theta|^{1/2}$. 
We see that when $1- \lambda + r/t \leq \lambda - 1/t$, $I(\lambda) =0$. We only need to discuss the case $1- \lambda+r/t\geq \lambda-1/t$ which is equivalent to $\lambda \leq \frac{t+r+1}{2t}$. We distinguish between the following cases: 

$\bullet$ Case $1\leq t-r\leq 3$. In this case, when $\lambda \in[2/t,(t+r+1)/2t]$, we observe that $|1- \lambda-r/t| \leq \lambda - 1/t$. Then, we find 
    $
    I(\lambda) = (1- \lambda)tr^{-1} \int^{1- \lambda+r/t}_{\lambda-1/t} \frac{\mathds{1}_{\{\tau\geq \lambda - 1/t\}}d\tau}{\tau^3},
    $
    which leads us to
$$
\aligned
    I(\lambda) & =  (1- \lambda)tr^{-1} \int^{1- \lambda+r/t}_{\lambda-1/t} \frac{d\tau}{\tau^3}
 =  \frac{t(1- \lambda)}{ 2 r} \left((\lambda-1/t)^{-2} - (1- \lambda+r/t)^{-2} \right). 
\endaligned
$$
    Then we conclude that
$$
\aligned
    |h_{\alpha \beta}^{IJ,3}(t, x)| & \leq  Cm_S^2t^{-2} \int_{2/t}^{(t+r+1)/2t}(1- \lambda)^{-1}I(\lambda)d\lambda
\\
& =  Cm_S^2 r^{-1}t^{-1} \int_{2/t}^{(t+r+1)/2t} \left((\lambda-1/t)^{-2} -(1- \lambda + r/t)^{-2} \right)d\lambda
\\
& =  Cm_S^2 r^{-1} \left(1- \frac{1}{t+r-2} \right) \leq Cm_S^2t^{-1}.
\endaligned
$$

$\bullet$ Case $t-r> 3$ and $\frac{t-r}{t} \leq \frac{t+r+1}{2t} \Leftrightarrow r\geq \frac{t-1}{3}$. In this case the interval $\left[2/t, \frac{t+r+1}{2t} \right]$ is divided into two parts: $\left[2/t, \frac{t-r}{t} \right]\cup [\frac{t-r}{t}, \frac{t+r+1}{2t}]$. In the first subinterval, $|1- \lambda - r/t| = 1- \lambda-r/t$ while in the second $|1- \lambda-r/t| = r/t-1+\lambda$

    Again in the subinterval $\left[2/t, \frac{t-r}{t} \right]$, we see that when $2/t\leq \lambda \leq \frac{t-r+1}{2t}$, $\lambda-1/t\leq 1- \lambda-r/t$, when $\frac{t-r+1}{2t} \leq \lambda \leq \frac{t-r}{t}$, $\lambda-1/t\geq 1- \lambda-r/t$. In the subinterval $[\frac{t-r}{t}, \frac{t+r+1}{2t}]$, we see that $\lambda-1/t\geq  r/t-1+\lambda$.

\

    \noindent {\bf Case 1}. When $\lambda \in \left[2/t, \frac{t-r+1}{2t} \right]$, we have 
$$
    I(\lambda) = (1- \lambda)tr^{-1} \int_{1- \lambda-r/t}^{1- \lambda+r/t} \frac{d\tau}{\tau^3}
    = \frac{2(1- \lambda)^2}{\left((1- \lambda)^2-(r/t)^2\right)^2}. 
$$
    
    \noindent {\bf Case 2}. When $\lambda \in\left[\frac{t-r+1}{2t}, \frac{t-r}{t} \right]$, we have 
$$
    I(\lambda) = (1- \lambda)tr^{-1} \int_{\lambda-1/t}^{1- \lambda+r/t} \frac{d\tau}{\tau^3}
    = \frac{t(1- \lambda)}{2r} \left((\lambda-1/t)^{-2} - (1- \lambda + r/t)^{-2} \right).
$$
\\
    \noindent {\bf Case 3}. When $\lambda \in[\frac{t-r}{t}, \frac{t+r+1}{2t}]$, we have 
$$
    I(\lambda) = (1- \lambda)tr^{-1} \int_{\lambda-1/t}^{1- \lambda+r/t} \frac{d\tau}{\tau^3}
    = \frac{t(1- \lambda)}{2r} \left((\lambda-1/t)^{-2} - (1- \lambda + r/t)^{-2} \right).
$$
    We obtain 
$$
\aligned
&    |h_{\alpha \beta}^{IJ,3}(t, x)| \leq  Cm_S^2t^{-2} \int_{2/t}^{(t+r+1)/2t}(1- \lambda)^{-1}I(\lambda)d\lambda \
\\
& = Cm_S^2t^{-2} \int_{2/t}^{\frac{t-r+1}{2t}}+ \int_{\frac{t-r+1}{2t}}^{\frac{t+r+1}{2t}} (1- \lambda)^{-1}I(\lambda)d\lambda
    = Cm_S^2t^{-2} \int_{2/t}^{\frac{t-r+1}{2t}} \frac{2(1- \lambda)}{\left((1- \lambda)^2-(r/t)^2\right)^2}d\lambda
\\
& \quad +  Cm_S^2r^{-1}t^{-1} \int_{\frac{t-r+1}{2t}}^{\frac{t+r+1}{2t}} \left((\lambda-1/t)^{-2} - (1- \lambda + r/t)^{-2} \right)d\lambda
\endaligned
$$
and we observe that 
$$
    \int_{2/t}^{\frac{t-r+1}{2t}} \frac{(1- \lambda)d\lambda}{\left((1- \lambda)^2-(r/t)^2\right)^2} = \frac{2t^2}{(t-r-1)(t+3r-1)} - \frac{t^2}{2(t-r-2)(t+r-2)} \simeq Ct
$$
    and
$$
\aligned
    \int_{\frac{t-r+1}{2t}}^{\frac{t+r+1}{2t}} \left((\lambda-1/t)^{-2} - (1- \lambda + r/t)^{-2} \right)d\lambda
& =  \frac{4rt}{(t-r-1)(t+r-1)} - \frac{4tr}{(t+r-1)(t+3r-1)}
\\
&\simeq Cr.
\endaligned
$$
    We conclude that
    $
     |h_{\alpha \beta}^{IJ,3}(t, x)| \leq  Cm_S^2t^{-1}.
    $

$\bullet$ Case $1-r/t\geq \frac{t+r+1}{2t} \Leftrightarrow r\leq \frac{t-1}{3}$. In this case, for $\lambda \in \left[2/t, \frac{t+r+1}{2t} \right]$, $|1- \lambda - r/t| = 1- \lambda - r/t$. We also observe that when $2/t\leq \lambda \leq \frac{t-r+1}{2t}$, $|1- \lambda-r/t|\geq\lambda -1/t$ and when $\frac{t-r+1}{2t} \leq \lambda \leq \frac{t+r+1}{2t}$, $|1- \lambda-r/t| \leq\lambda -1/t$. So, similarly to the above case, we find 
$$
\aligned
    |h_{\alpha \beta}^{IJ,3}(t, x)| 
& \leq  Cm_S^2t^{-2} \int_{2/t}^{(t+r+1)/2t}(1- \lambda)^{-1}I(\lambda)d\lambda \
     = Cm_S^2t^{-2} \int_{2/t}^{\frac{t-r+1}{2t}}+ \int_{\frac{t-r+1}{2t}}^{\frac{t+r+1}{2t}} (1- \lambda)^{-1}I(\lambda)d\lambda
\\
& = Cm_S^2t^{-2} \int_{2/t}^{\frac{t-r+1}{2t}} \frac{(1- \lambda)}{\left((1- \lambda)^2-(r/t)^2\right)^2}d\lambda
\\
& \quad +  Cm_S^2r^{-1}t^{-1} \int_{\frac{t-r+1}{2t}}^{\frac{t+r+1}{2t}} \left((\lambda-1/t)^{-2} - (1- \lambda + r/t)^{-2} \right)d\lambda,
\endaligned
$$ 
$$
    \int_{2/t}^{\frac{t-r+1}{2t}} \frac{(1- \lambda)d\lambda}{\left((1- \lambda)^2-(r/t)^2\right)^2} = \frac{2t^2}{(t-r-1)(t+3r-1)} - \frac{t^2}{2(t-r-2)(t+r-2)} \simeq C,
$$
    and
$$
\aligned
  &  \int_{\frac{t-r+1}{2t}}^{\frac{t+r+1}{2t}} \left((\lambda-1/t)^{-2} - (1- \lambda + r/t)^{-2} \right)d\lambda
  \\  & =  \frac{4rt}{(t-r-1)(t+r-1)} - \frac{4tr}{(t+r-1)(t+3r-1)}
    \simeq C.
\endaligned
$$
    So, we obtain
    $
     |h_{\alpha \beta}^{IJ,3}(t, x)| \leq Cm_S^2t^{-1}$, which completes the proof. 
\end{proof}


\subsection{Sup-norm estimate for Klein-Gordon equations}
\label{subsec KG-sup}

Our next statement, first presented in \cite{PLF-YM-one}, was motivated by a pioneering work by Klainerman \cite{Klainerman85} for Klein-Gordon equations. In more recent years, Katayama \cite{Katayama12a,Katayama12b} also made some important contribution on the global existence problem for Klein-Gordon eqations. Furthermore, a related estimate in two spatial dimensions in Minkowski spacetime was established earlier by Delort  et al. \cite{Delort04}. (Our approach below could also be applied \cite{Ma} in $2+1$ dimensions.)

For compact Schwarz\-schild perturbations, the scalar field $\phi$ is supported in $\Kcal$,  and the sup-norm estimate in \cite{PLF-YM-one} remains valid for our purpose and we only need to state the corresponding result. Namely, let us consider the Klein-Gordon problem on a curved spacetime
\bel{Linfini KG eq}
\aligned
& \quad -  \Boxt_g v + c^2 v = f,
\qquad 
&v|_{\Hcal_2} = v_0,
\quad
\del_t v|_{\Hcal_2} = v_1,
\endaligned
\ee
with initial data $v_0, v_1$ which are prescribed on the hyperboloid $\Hcal_2$ and are assumed to be compactly supported in $\Hcal_2\cap \Kcal$, while the curved metric has the form $g^{\alpha \beta} = m^{\alpha \beta} + h^{\alpha \beta}$
with  $\sup |\hb^{00}| \leq 1/3$.

We consider the coefficient $\hb^{00}$ along lines from the origin and, more precisely, we set 
$$
h_{t, x}(\lambda):= \hb^{00}\Big(\lambda {t \over s}, \lambda {x \over s} \Big), \qquad s = \sqrt{t^2 - r^2}, 
$$
while $h_{t, x}'(\lambda)$ stands for the derivative with respect to the variable $\lambda$. We also set 
$$
s_0 :=\left
\{
\aligned
& 2, \quad &&0\leq r/t \leq 3/5,
\\
& \sqrt{\frac{t+r}{t-r}}, \quad &&3/5\leq r/t\leq 1,
\endaligned
\right.
$$
Fixing some constant $C>0$, we introduce the following function $V$ by distinguishing between the regions ``near" and ``far" from the light cone: 
$$
V:=
\left\{
\aligned
& \Big( \|v_0\|_{L^\infty(\Hcal_2)} + \|v_1\|_{L^\infty(\Hcal_2)} \Big)
\Big(1+\int_2^s|h_{t, x}'(\sbar)|e^{C\int_\sbar^s|h_{t, x}'(\lambda)|d\lambda} \, d\sbar \Big)
\\
& \hskip4.cm  + F(s) + \int_2^s F(\sbar)|h_{t, x}'(\sbar)|e^{C\int_\sbar^s|h_{t, x}'(\lambda)|d\lambda} \, d\sbar,
\hskip.cm  && 0\leq r/t\leq 3/5,
\\
& 
F(s) + \int_{s_0}^s F(\sbar)|h_{t, x}'(\sbar)|e^{C\int_\sbar^s|h_{t, x}'(\lambda)|d\lambda} \, d\sbar,
\hskip2.cm  &&3/5<r/t<1,
\endaligned
\right.
$$ 
where the function $F$ takes the right-hand side of the Klein-Gordon equation into account, as well as the curved part of the metric (except the $\hb^{00}$ contribution), that is, 
$$
F(\sbar):
= \int_{s_0}^\sbar \Big( |R_1[v]| + |R_2[v]| + |R_3[v]| + \lambda^{3/2} |f| \Big) \, (\lambda t/s, \lambda x/s)  \, d\lambda
$$
with
$$
\aligned
R_1[v] &= s^{3/2} \sum_a \delb_a \delb_a v  + \frac{x^ax^b}{s^{1/2}} \delb_a \delb_b v + \frac{3}{4s^{1/2}} v + \sum_a \frac{3x^a}{s^{1/2}} \delb_a v,
\\
R_2[v] &=\hb^{00} \bigg(\frac{3v}{4s^{1/2}} + 3s^{1/2} \delb_0 v\bigg)
+ s^{3/2} \big(2\hb^{0b} \delb_0\delb_bv + \hb^{ab} \delb_a \delb_bv + h^{\alpha \beta} \del_\alpha \Psib^{\beta'}_{\beta} \, \delb_{\beta'}v\big),
\\
R_3[v] &= \hb^{00} \bigg(2x^as^{1/2} \delb_0\delb_a v + \frac{2x^a}{s^{1/2}} \delb_a v +\frac{x^ax^b}{s^{1/2}} \delb_a \delb_bv\bigg).
\endaligned
$$ 

\begin{proposition}[A sup-norm estimate for Klein-Gordon equations on a curved spacetime]
\label{Linfini KG}
Spatially compact solutions $v$ to the Klein-Gordon problem \eqref{Linfini KG eq}
defined the region $\Kcal_{[2, +\infty)}$ satisfy the decay estimate (for all relevant $(t, x)$)
\label{Linfty KG ineq}
\bel{Linfty KG ineq a}
s^{3/2}|v(t, x)| + (s/t)^{-1}s^{3/2}|\delu_{\perp} \ v(t, x)| \leq C \, V(t, x). 
\ee 
\end{proposition}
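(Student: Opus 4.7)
\medskip

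\noindent\textbf{Proof proposal.} The plan is to reduce the Klein--Gordon problem, along each ray through the origin, to a second-order ODE of the form $(1+\hb^{00})w'' + c^2 w = \text{tangential errors} + \lambda^{3/2} f$, and then apply a Gronwall-type argument. The factor $s^{3/2}$ in \eqref{Linfty KG ineq a} corresponds exactly to the rescaling that turns the hyperboloidal $\delb_0 \delb_0$ operator into $\del_\lambda^2$.

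First, I fix $(t, x)\in \Kcal_{[2,+\infty)}$, set $s := \sqrt{t^2 - r^2}$, and parametrize the ray from the origin through $(t, x)$ by $\lambda \mapsto (\lambda t/s, \lambda x/s)$; note that $\lambda$ coincides with the hyperbolic radius $\sqrt{t^2-r^2}$ at the corresponding point, and that the tangent $\del_\lambda$ along this ray is proportional to the scaling field (so $\delu_\perp v$ at $(t, x)$ can be recovered from $w'(s)$). I then introduce
$$
w(\lambda) := \lambda^{3/2}\, v(\lambda t/s,\, \lambda x/s),
$$
and compute $-\Box v + c^2 v$ in the hyperboloidal frame. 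Since $\minb^{00} = -1$, the principal part contributes $\delb_0\delb_0 v$, and a direct change of variables yields an identity of the form
$$
\lambda^{3/2}\bigl( -\Box v + c^2 v\bigr)\bigl|_{(\lambda t/s, \lambda x/s)} \;=\; w''(\lambda) + c^2 w(\lambda) \;-\; R_1[v]\bigl|_{(\lambda t/s, \lambda x/s)},
$$
the remainder $R_1[v]$ collecting precisely the tangential $\delb_a\delb_b v$, $\delb_a v$, and $v$ terms that arise because the rays are not orthogonal to the hyperboloids. The tangential terms are ``good'' because $\delb_a$ involves Lorentz-boost-type derivatives.

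Next, I transfer the curved part: write $\Boxt_g = \Box + H^{\alpha \beta}\del_\alpha \del_\beta$, pass to the hyperboloidal frame via $\del_\alpha = \Psib^{\beta}_\alpha \delb_\beta$, and isolate the contribution of $\hb^{00}\delb_0\delb_0 v$, which under the rescaling becomes $\hb^{00}w''(\lambda)/\lambda^{3/2}$ plus lower-order terms. The remaining curved contributions ($\hb^{0a}\delb_0\delb_a v$, $\hb^{ab}\delb_a\delb_b v$, connection terms $h^{\alpha\beta}\del_\alpha\Psib^{\beta'}_\beta\delb_{\beta'}v$, and the mixed correction terms involving $\hb^{00}$ and $\delb_a$) assemble exactly into $R_2[v] + R_3[v]$. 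The Klein--Gordon equation $-\Boxt_g v + c^2 v = f$ therefore becomes the scalar ODE
$$
\bigl(1 + \hb^{00}\bigr)\, w''(\lambda) \;+\; c^2\, w(\lambda) \;=\; R_1[v] + R_2[v] + R_3[v] + \lambda^{3/2} f,
$$
evaluated along the ray. Since $|\hb^{00}|\le 1/3$ by hypothesis, the principal coefficient $1+\hb^{00}$ is uniformly bounded away from zero.

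The third step is the ODE energy estimate. Set $E(\lambda) := w'(\lambda)^2 + c^2 w(\lambda)^2$. Differentiating and using the equation yields
$$
\bigl|E'(\lambda)\bigr| \;\lesssim\; \bigl|h'_{t, x}(\lambda)\bigr|\, E(\lambda) \;+\; E(\lambda)^{1/2}\bigl( |R_1[v]| + |R_2[v]| + |R_3[v]| + \lambda^{3/2}|f| \bigr),
$$
where the $|h'_{t, x}(\lambda)|$ factor arises when $\del_\lambda$ falls on the $\hb^{00}$ coefficient. A Gronwall argument gives
$$
E(s)^{1/2} \;\lesssim\; e^{C\int_{s_0}^s |h'_{t, x}(\lambda)|\,d\lambda} \Bigl( E(s_0)^{1/2} + F(s) \Bigr) \;+\; \int_{s_0}^s F(\sbar)\, |h'_{t, x}(\sbar)|\, e^{C\int_{\sbar}^s |h'_{t, x}|}\, d\sbar,
$$
which, together with $|w(s)|\le E(s)^{1/2}/c$, yields exactly the bound $s^{3/2}|v| + (s/t)^{-1}s^{3/2}|\delu_\perp v|\lesssim V(t, x)$.

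Finally, I handle the initial slice and the boundary of $\Kcal$ separately. When $0\le r/t \le 3/5$, the ray from the origin hits $\Hcal_2$, so $s_0 = 2$ and $E(2)^{1/2}$ is controlled by $\|v_0\|_{L^\infty(\Hcal_2)} + \|v_1\|_{L^\infty(\Hcal_2)}$, which produces the first line of the definition of $V$. When $3/5 < r/t < 1$, the ray enters $\Kcal$ through its lateral boundary $\{r = t-1\}$; solving $\lambda t/s - \lambda r/s = 1$ gives $s_0 = \sqrt{(t+r)/(t-r)}$. Since $v$ is spatially compactly supported inside $\Kcal$, one has $w(s_0) = w'(s_0) = 0$, so no initial-data contribution appears, matching the second line of $V$. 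The main technical obstacle is Step~2, namely the careful bookkeeping of the change-of-variables so that the tangential remainders coincide exactly with the prescribed $R_1, R_2, R_3$; once the ODE is in this clean form, the Gronwall step is routine.
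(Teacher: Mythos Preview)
Your overall strategy coincides with the paper's: rescale along rays by $w_{t,x}(\lambda)=\lambda^{3/2}v(\lambda t/s,\lambda x/s)$, reduce \eqref{Linfini KG eq} to the scalar ODE $w''+\frac{c^2}{1+\hb^{00}}w=(1+\hb^{00})^{-1}(R_1+R_2+R_3+\lambda^{3/2}f)$, estimate its solution, and split into the cases $r/t\le 3/5$ (ray meets $\Hcal_2$, $s_0=2$) and $r/t>3/5$ (ray meets $\del\Kcal$, $s_0=\sqrt{(t+r)/(t-r)}$, $w(s_0)=w'(s_0)=0$).

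There is, however, a real gap in Step~3. With the energy you define, $E=(w')^2+c^2w^2$, substituting $w''=(1+\hb^{00})^{-1}(-c^2w+\text{RHS})$ gives
\[
E'=\frac{2w'\cdot\text{RHS}}{1+\hb^{00}}+\frac{2c^2\,\hb^{00}}{1+\hb^{00}}\,ww',
\]
so $|E'|\lesssim|\hb^{00}|\,E+E^{1/2}|\text{RHS}|$. No derivative falls on $\hb^{00}$ here; the coefficient in front of $E$ is $|\hb^{00}|$, not $|h'_{t,x}|$. Since only $|\hb^{00}|\le 1/3$ is assumed, Gronwall produces an uncontrolled factor $e^{C(s-s_0)}$ and the estimate fails.

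The fix is immediate: use the adapted energy $\widetilde E=(1+\hb^{00})(w')^2+c^2w^2$, for which a direct computation gives $\widetilde E'=h'_{t,x}\,(w')^2+2w'\cdot\text{RHS}$, hence $|\widetilde E'|\lesssim|h'_{t,x}|\,\widetilde E+\widetilde E^{1/2}|\text{RHS}|$, exactly what you claimed. The paper reaches the same conclusion by an equivalent but different device: it writes the ODE as a first-order system $b'=A(\lambda)b+(0,k)^T$, diagonalizes $A=PQP^{-1}$ with purely imaginary $Q$, and observes that $|(P^{-1})'|\lesssim|h'_{t,x}|$ while $e^{\int Q}$ is unitary; Gronwall on $|P^{-1}b|$ then yields the stated form of $V(t,x)$. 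Either route closes the argument.
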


We postpone the proof to the Appendix.  


\subsection{Weighted Hardy inequality along the hyperboloidal foliation}

We now derive a modified version of the Hardy inequality, formulated on hyperboloids, which is nothing but a weighted version of Proposition 5.3.1 in \cite{PLF-YM-book}. This inequality will play an essential role in our derivation of a key $L^2$ estimate for the metric component $\hu^{00}$. (Cf. Section \ref{subsubsec L-2-h00}, below.)

\begin{proposition}[Weighted Hardy inequality on hyperboloids]\label{prop Hardy 2}
For every smooth function $u$ supported in the cone $\Kcal$, one has 
(for any given $0\leq \sigma \leq 1$):
\bel{ineq Hardy 2}
\aligned
\|(s/t)^{- \sigma}s^{-1}u\|_{L_f^2(\Hcal_s)} & \leq  C\|(s_0/t)^{- \sigma}s_0^{-1}u\|_{L^2(\Hcal_{s_0})} + C\sum_a \|\delu_a u\|_{L_f^2(\Hcal_s)}
\\
& \quad + C\sum_a \int_{s_0}^s\tau^{-1} 
\Big(\|(s/t)^{1- \sigma} \del_au\|_{L^2(\Hcal_\tau)} + \|\delu_a u\|_{L^2(\Hcal_\tau)} \Big) \, d\tau.
\endaligned
\ee
\end{proposition}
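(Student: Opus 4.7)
The plan is to reduce the weighted inequality to the unweighted Hardy inequality on hyperboloids (Proposition 5.3.1 of \cite{PLF-YM-book}) via the substitution $v := (s/t)^{-\sigma} u$, treated as a smooth function on $\Kcal$ (recall $s/t = \sqrt{1 - r^2/t^2}$ defines a smooth positive spacetime function). By construction,
$$
\|(s/t)^{-\sigma} s^{-1} u\|_{L^2_f(\Hcal_s)} = \|s^{-1} v\|_{L^2_f(\Hcal_s)},
\qquad
\|(s_0/t)^{-\sigma} s_0^{-1} u\|_{L^2(\Hcal_{s_0})} = \|s_0^{-1} v\|_{L^2(\Hcal_{s_0})},
$$
so that both the LHS and the initial data term of the claim are expressed in terms of $v$. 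A direct computation using $\delu_a(s/t) = -x^a s/t^3$ yields $\delu_a[(s/t)^{-\sigma}] = \sigma\, x^a\, s^{-\sigma} t^{\sigma-2}$, whence
$$
|\delu_a v| \lesssim (s/t)^{-\sigma}\, |\delu_a u| + \sigma\, (s/t)^{-\sigma}\, s^{-1}\, |u|.
$$

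The next step is to apply the unweighted Hardy inequality of \cite{PLF-YM-book} to $v$. The commutator term $\sigma (s/t)^{-\sigma} s^{-1} |u|$ generated by the weight produces, after applying the inequality to $v$, a quantity bounded by $C\sigma$ times the LHS itself; this can be absorbed provided $C\sigma < 1/2$. To reach the full range $\sigma \in [0,1]$, I would iterate the argument dyadically, splitting $\sigma = \sigma_1 + \cdots + \sigma_N$ with each $\sigma_i$ small, each step being a change of unknown $u \mapsto (s/t)^{-\sigma_i} u$ composed with the previous one; after $N$ iterations one covers the whole range.

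The remaining task is to rewrite the residual weighted derivative $(s/t)^{-\sigma}\delu_a u$ in terms of the two quantities actually appearing in the right-hand side, namely the unweighted $\|\delu_a u\|_{L^2_f(\Hcal_s)}$ and the time integral of $(s/t)^{1-\sigma}\|\partial_a u\|$. For this I would decompose $\delu_a u = \partial_a u + (x^a/t)\partial_t u$, note that the normal part $\partial_t u = (t/s)\delb_0 u$ is an evolution derivative across hyperboloids, and integrate it in $s$ via $u(\tau,x) = u(s_0,x) + \int_{s_0}^\tau \partial_{s'} u(s',x)\, ds'$. Moving the weight $(s/t)^{-\sigma}(x^a/t)\partial_t u = (x^a/s)(s/t)^{-\sigma}\delb_0 u$ under this integral, using $|x^a|/s \lesssim (s/t)^{-1}$, and applying Cauchy--Schwarz and Fubini in $s$ produces precisely the integral term $\int_{s_0}^s \tau^{-1}\|(\tau/t)^{1-\sigma}\partial_a u\|_{L^2(\Hcal_\tau)}\, d\tau$, while the tangential part gives the companion $\int_{s_0}^s\tau^{-1}\|\delu_a u\|_{L^2(\Hcal_\tau)}\, d\tau$.

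\textbf{Main obstacle.} I expect the dyadic absorption step to be the delicate point, particularly at the endpoint $\sigma=1$, where the commutator generated by the weight is exactly as large as the target LHS and no cheap absorption is available in a single step. A second technical subtlety is that, unlike the setting of \cite{PLF-YM-book}, solutions here are merely supported in $\Kcal$ (coinciding with the Schwarzschild background at infinity on $\Hcal_{s_0}$), not compactly supported on individual hyperboloids; the integration-by-parts and the radial integration on $\Hcal_s$ must therefore be accompanied by boundary terms on $\del \Kcal$, which are then absorbed thanks to the smallness of $m_S$ in a manner analogous to the treatment in Proposition~\ref{prop 1 energy-W}.
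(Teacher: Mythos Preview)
Your substitution $v=(s/t)^{-\sigma}u$ followed by the unweighted inequality has a genuine gap that the dyadic iteration and the subsequent decomposition do not close. After applying the $\sigma=0$ inequality to $v$, the tangential-derivative terms on the right are $\|\delu_a v\|_{L^2_f(\Hcal_s)}$ and $\int_{s_0}^s\tau^{-1}\|\delu_a v\|_{L^2(\Hcal_\tau)}\,d\tau$, and expanding $\delu_a v$ produces $(s/t)^{-\sigma}\delu_a u$ plus the commutator. The commutator is indeed bounded by $C\sigma$ times the left-hand side and can be absorbed for small $\sigma$, but the principal piece $\|(s/t)^{-\sigma}\delu_a u\|$ is \emph{strictly larger} than the unweighted $\|\delu_a u\|$ that the target inequality allows, and no dyadic iteration changes this: at every step you still owe a weighted tangential term. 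Your proposed fix via $\delu_a u=\del_a u+(x^a/t)\del_t u$ and time integration does not help either: $(s/t)^{-\sigma}\del_a u$ is one power of $(s/t)$ worse than the $(s/t)^{1-\sigma}\del_a u$ permitted inside the integral, and writing $\del_t u=(t/s)\delb_0 u$ and integrating $u$ in $s$ does not convert an $L^2$ norm of a derivative on a fixed hyperboloid into a time integral of a smaller quantity.

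The paper's proof avoids this obstruction entirely by building the weight into a multiplier vector field
\[
W=\Big(0,\,-(s/t)^{-2\sigma}\tfrac{x^a t\,u^2\chi(r/t)^2}{(1+r^2)s^2}\Big)
\]
and integrating $\mathrm{div}\,W$ over $\Kcal_{[s_0,s]}$. The crucial point is that the two terms arising from differentiating the weight and from the $s$-derivative of the volume factor, called $T_3$ and $T_4$ in the paper, have a \emph{favorable sign} and are simply discarded; what would appear in your approach as an unabsorbable commutator is here a nonpositive contribution. This yields a differential inequality for the weighted norm in the outer region (where $\chi=1$), which is then integrated in $s$; the inner region $r\leq t/3$, where $s/t$ is bounded below, is handled separately by the ordinary Hardy inequality $\|r^{-1}u\|_{L^2_f(\Hcal_s)}\leq C\sum_a\|\delu_a u\|_{L^2_f(\Hcal_s)}$. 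Finally, your concern about boundary terms on $\del\Kcal$ is not relevant here: the proposition assumes $u$ is supported in $\Kcal$, so the flux vanishes near the cone boundary; the Schwarzschild-tail issue you have in mind arises in the separate Lemma~\ref{lem 0 Hardy}, not in this proposition.
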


The proof is similar to that of Proposition 5.3.1 in \cite{PLF-YM-book} (but we must now cope with the parameter $\sigma$) and uses the following inequality, established in \cite[Chapter 5, Lemma 5.3.1]{PLF-YM-book}.

\begin{lemma} \label{lem Hardy 1}
For all (sufficiently regular) functions $u$ supported in the cone $\Kcal$, one has 
\bel{ineq Hardy 3}
\|r^{-1}u\|_{L_f^2(\Hcal_s)} \leq C\sum_a \|\delu_a u\|_{L_f^2(\Hcal_s)}.
\ee
\end{lemma}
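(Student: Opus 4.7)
The plan is to reduce the inequality to the classical three-dimensional Hardy inequality in $\RR^3$ by pulling $u$ back to the Euclidean space via the hyperboloidal parametrization. The crucial point is that the vector fields $\delu_a = \frac{x^a}{t}\del_t + \del_a$ are precisely designed to agree with Euclidean partial derivatives when restricted to a hyperboloid.

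More concretely, I would first parametrize $\Hcal_s$ by $x \in \RR^3$ through $t = \sqrt{s^2 + |x|^2}$ and define the pull-back $\widetilde{u}(x) := u(\sqrt{s^2 + |x|^2},\, x)$. A direct application of the chain rule gives
\bel{eq hardy chain}
\del_{x^a}\widetilde{u}(x) = (\del_a u)(\sqrt{s^2+|x|^2},x) + \frac{x^a}{\sqrt{s^2+|x|^2}}\,(\del_t u)(\sqrt{s^2+|x|^2},x) = (\delu_a u)\big|_{\Hcal_s}.
\ee
Since $u$ is supported in the cone $\Kcal = \{r < t-1\}$, the function $\widetilde u$ is supported in the Euclidean ball $\{|x| \leq (s^2-1)/2\} \subset \RR^3$, so in particular it is compactly supported. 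Moreover, by definition of the flat norm, $\|r^{-1}u\|_{L_f^2(\Hcal_s)} = \|r^{-1}\widetilde{u}\|_{L^2(\RR^3,dx)}$ and $\|\delu_a u\|_{L_f^2(\Hcal_s)} = \|\del_{x^a}\widetilde u\|_{L^2(\RR^3,dx)}$.

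Consequently the claim is reduced to the standard Hardy inequality on $\RR^3$,
\bel{eq classical hardy}
\int_{\RR^3} \frac{|\widetilde u|^2}{r^2}\, dx \leq 4\sum_{a=1}^3 \int_{\RR^3} |\del_{x^a}\widetilde u|^2\, dx,
\ee
for compactly supported $\widetilde u$. This is obtained from the identity $\operatorname{div}(x/r^2) = 1/r^2$ valid in dimension three (indeed, $\del_a(x^a/r^2) = 3/r^2 - 2 x^a x^a / r^4 = 1/r^2$), followed by integration by parts and Cauchy-Schwarz:
\bel{eq hardy IBP}
\int_{\RR^3} \frac{|\widetilde u|^2}{r^2}\, dx = -2\int_{\RR^3} \frac{\widetilde u\,(x\cdot\nabla \widetilde u)}{r^2}\, dx \leq 2 \left(\int_{\RR^3}\frac{|\widetilde u|^2}{r^2}\,dx\right)^{1/2} \left(\int_{\RR^3}|\nabla \widetilde u|^2\,dx\right)^{1/2},
\ee
where the last step uses $|x\cdot\nabla\widetilde u| \leq r|\nabla\widetilde u|$. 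Dividing by the square root of the left-hand side yields \eqref{eq classical hardy}, which combined with \eqref{eq hardy chain} gives the desired bound.

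There is essentially no analytic obstacle here: the whole content of the lemma is the algebraic observation \eqref{eq hardy chain} that the semi-hyperboloidal tangential fields $\delu_a$ pull back to Euclidean partial derivatives on $\Hcal_s$ under the obvious parametrization, so no curvature correction or Jacobian factor appears. The only minor points to verify are the compact support of $\widetilde u$ (inherited from $u$ being supported in $\Kcal$) and the justification of the integration by parts \eqref{eq hardy IBP}, which is routine given sufficient regularity of $u$ (and can be handled by standard mollification if needed since $\widetilde u$ need not be smooth at the origin).
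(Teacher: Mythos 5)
Your proof is correct and follows the natural argument: the key algebraic identity \eqref{eq hardy chain} that the tangential fields $\delu_a$ pull back to Euclidean partials under the graph parametrization of $\Hcal_s$ is exactly the observation used (the paper itself makes this observation explicitly in the proof of the Sobolev inequality, Proposition \ref{pre lem sobolev}, and Lemma \ref{lem Hardy 1} is cited from \cite{PLF-YM-book} without proof here). Everything in the reduction is sound, including the support calculation and the standard Hardy integration by parts. One small, harmless inaccuracy in your final remark: $\widetilde u(x) = u(\sqrt{s^2+|x|^2},x)$ \emph{is} smooth at the origin whenever $u$ is smooth, because $x\mapsto\sqrt{s^2+|x|^2}$ is smooth at $x=0$ for $s>0$; the only potential singularity at $r=0$ comes from the weight $r^{-1}$, and the usual excision-of-a-small-ball argument already handles that, so no mollification is required.
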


\begin{proof}[Proof of Proposition \ref{prop Hardy 2}]
Consider the vector field 
$
W:= \big(0,-(s/t)^{-2\sigma} \frac{x^at u^2 \chi(r/t)^2}{(1+r^2)s^2} \big)
$ defined on $\RR^4$ 
and, similarly to what we did in the proof of Proposition 5.3.1 in \cite{PLF-YM-book}, let us calculate its divergence:  
$$
\aligned
\text{div} \hskip.1cm W
& =  -2s^{-1} (s/t)^{- \sigma} \sum_a \del_a u  (s/t)^{- \sigma} \frac{r\chi(r/t) u}{(1+r^2)^{1/2}s}  \frac{x^at\chi(r/t)}{r(1+r^2)^{1/2}}
\\
& \quad - 2 s^{-1} (s/t)^{- \sigma}r^{-1}u  (s/t)^{- \sigma} \frac{r\chi(r/t) u}{(1+r^2)^{1/2}s}   \frac{\chi'(r/t)r}{(1+r^2)^{1/2}}
\\
& \quad - (s/t)^{-2\sigma} \big(u\chi(r/t) \big)^2\bigg(\frac{r^2t + 3t}{(1+r^2)^2s^2} + \frac{2r^2t}{(1+r^2)s^4} \bigg)
 \\
& \quad - 2\sigma(s/t)^{-1-2\sigma} \big(u\chi(r/t) \big)^2\frac{r^2}{(1+r^2)s^3}.
\endaligned
$$
We integrate this identity within $\Kcal_{[s_0,s_1]}$ and, after recalling the relation $dxdt = (s/t) \, dxds$, we obtain 
$$
\aligned
\int_{\Kcal_{[s_0,s_1]}} \text{div} \hskip.1cm W dxdt
& =  -2\int_{\Kcal_{[s_0,s_1]}} s^{-1} (s/t)^{1- \sigma} \sum_a \del_a u  (s/t)^{- \sigma} \frac{r\chi(r/t) u}{(1+r^2)^{1/2}s}  \frac{x^at\chi(r/t)}{r(1+r^2)^{1/2}} \, dxds
\\
& \quad - 2\int_{\Kcal_{[s_0,s_1]}} s^{-1} (s/t)^{1- \sigma}r^{-1}u  (s/t)^{- \sigma} \frac{r\chi(r/t) u}{(1+r^2)^{1/2}s}   \frac{\chi'(r/t)r}{(1+r^2)^{1/2}} \, dxds
\\
& \quad -  \int_{\Kcal_{[s_0,s_1]}}(s/t)^{1-2\sigma} \big(u\chi(r/t) \big)^2\bigg(\frac{r^2t + 3t}{(1+r^2)^2s^2} + \frac{2r^2t}{(1+r^2)s^4} \bigg) \, dxds
\\
& \quad - 2\sigma \int_{\Kcal_{[s_0,s_1]}}(s/t)^{-2\sigma} \big(u\chi(r/t) \big)^2\frac{r^2}{(1+r^2)s^3} \, dxds.
\endaligned
$$
We thus find 
$$
\aligned
\int_{\Kcal_{[s_0,s_1]}} \text{div} \hskip.1cm W dxdt
& =  -2\int_{s_0}^{s_1} ds\int_{\Hcal_s}s^{-1} (s/t)^{1- \sigma} \sum_a \del_a u  (s/t)^{- \sigma} \frac{r\chi(r/t) u}{(1+r^2)^{1/2}s}  \frac{x^at\chi(r/t)}{r(1+r^2)^{1/2}} \, dx
\\
& \quad - 2\int_{s_0}^{s_1} ds\int_{\Hcal_s} s^{-1} (s/t)^{1- \sigma}r^{-1}u  (s/t)^{- \sigma} \frac{r\chi(r/t) u}{(1+r^2)^{1/2}s}   \frac{\chi'(r/t)r}{(1+r^2)^{1/2}} \, dx
\\
& \quad -  \int_{s_0}^{s_1} ds\int_{\Hcal_s}(s/t)^{1-2\sigma} \big(u\chi(r/t) \big)^2\bigg(\frac{r^2t + 3t}{(1+r^2)^2s^2} + \frac{2r^2t}{(1+r^2)s^4} \bigg) \, dx
\\
& \quad - 2\sigma \int_{s_0}^{s_1} ds\int_{\Hcal_s}(s/t)^{-2\sigma} \big(u\chi(r/t) \big)^2\frac{r^2}{(1+r^2)s^3} dx
\\
& =: \int_{s_0}^{s_1} \big(T_1 +T_2 +T_3 + T_4\big) ds.
\endaligned
$$
On the other hand, we apply Stokes' formula to the left-hand side of this identity. Recall that the flux vector vanishes in a neighborhood of the boundary of $\Kcal_{[s_0,s_1]}$, which is $\{r=t-1,s_0\leq \sqrt{t^2-r^2} \leq s_1\}$ and, by a calculation similar to the one in the proof of Lemma \ref{lem energy 1},
$$
\aligned
\bigg{\|}(s/t)^{- \sigma} \frac{r\chi(r/t) u}{(1+r^2)^{1/2}s} \bigg{\|}_{L^2(\Hcal_{s_1})}^2
 - \bigg{\|}(s/t)^{- \sigma} \frac{r\chi(r/t) u}{(1+r^2)^{1/2}s} \bigg{\|}_{L^2(\Hcal_{s_0})}^2
= \int_{s_0}^{s_1} \big(T_1 +T_2 +T_3 + T_4\big) ds.
\endaligned
$$
After differentiation with respect to $s$, we obtain 
\bel{ineq 1 proof prop Hardy 2}
2\bigg{\|}(s/t)^{- \sigma} \frac{r\chi(r/t) u}{(1+r^2)^{1/2}s} \bigg{\|}_{L^2(\Hcal_{s_1})}
\frac{d}{ds} \bigg{\|}(s/t)^{- \sigma} \frac{r\chi(r/t) u}{(1+r^2)^{1/2}s} \bigg{\|}_{L^2(\Hcal_{s_1})} = T_1 +T_2 +T_3 + T_4.
\ee

We observe that
$$
\aligned
|T_1| & \leq  2\sum_a \int_{\Hcal_s}s^{-1} (s/t)^{1- \sigma}|\del_a u|  (s/t)^{- \sigma} \frac{r\chi(r/t) |u|}{(1+r^2)^{1/2}s}  \frac{|x^a|t\chi(r/t)}{r(1+r^2)^{1/2}} \, dx
\\
& \leq  2\sum_a s^{-1} \|(s/t)^{1- \sigma} \del_au\|_{L_f^2(\Hcal_s)}
 \bigg{\|}(s/t)^{- \sigma} \frac{r\chi(r/t) u}{(1+r^2)^{1/2}s} \bigg{\|}_{L_f^2(\Hcal_s)}
  \bigg{\|} \frac{x^at\chi(r/t)}{r(1+r^2)^{1/2}} \bigg{\|}_{L^\infty(\Hcal_s)}
\\
& \leq  Cs^{-1} \sum_a \|(s/t)^{1- \sigma} \del_au\|_{L_f^2(\Hcal_s)}
  \bigg{\|}(s/t)^{- \sigma} \frac{r\chi(r/t) u}{(1+r^2)^{1/2}s} \bigg{\|}_{L_f^2(\Hcal_s)}, 
\endaligned
$$
where we have observed that
$
\bigg{\|} \frac{x^at\chi(r/t)}{r(1+r^2)^{1/2}} \bigg{\|}_{L^\infty(\Hcal_s)} \leq C, 
$
since the support of $\chi(\cdot)$ is contained in $\{r\geq t/3\}$. Similarly, we find 
$$
\aligned
|T_2| & \leq  Cs^{-1} \|(s/t)^{1- \sigma}r^{-1}u\|_{L_f^2(\Hcal_s)}
  \bigg{\|}(s/t)^{- \sigma} \frac{r\chi(r/t) u}{(1+r^2)^{1/2}s} \bigg{\|}_{L_f^2(\Hcal_s)}
\\
& \leq Cs^{-1} \|r^{-1}u\|_{L_f^2(\Hcal_s)}  \bigg{\|}(s/t)^{- \sigma} \frac{r\chi(r/t) u}{(1+r^2)^{1/2}s} \bigg{\|}_{L_f^2(\Hcal_s)}
\\
& \leq  Cs^{-1} \sum_a \|\delu_a u\|_{L_f^2(\Hcal_s)}  \bigg{\|}(s/t)^{- \sigma} \frac{r\chi(r/t) u}{(1+r^2)^{1/2}s} \bigg{\|}_{L_f^2(\Hcal_s)}, 
\endaligned
$$
where we have applied \eqref{ineq Hardy 3}. We also observe that $T_3\leq 0$ and $T_4\leq 0$. Then,  \eqref{ineq 1 proof prop Hardy 2} leads us to
\bel{ineq 2 proof prop Hardy 2}
\frac{d}{ds} \bigg{\|}(s/t)^{- \sigma} \frac{r\chi(r/t) u}{(1+r^2)^{1/2}s} \bigg{\|}_{L^2(\Hcal_{s_1})} \leq Cs^{-1} \sum _a \big(\|(s/t)^{1- \sigma} \del_a u\|_{L_f^2(\Hcal_s)} + \|\delu_a u\|_{L_f^2(\Hcal_s)} \big). 
\ee
Then  by integrating on the interval $[s_0,s]$, we have
\bel{ineq 3 proof prop Hardy 2}
\aligned
\bigg{\|}(s/t)^{- \sigma} \frac{r\chi(r/t) u}{(1+r^2)^{1/2}s} \bigg{\|}_{L_f^2(\Hcal_s)}
& \leq  \bigg{\|}(s/t)^{- \sigma} \frac{r\chi(r/t) u}{(1+r^2)^{1/2}s} \bigg{\|}_{L^2(\Hcal_{s_0})}
\\
& \quad +  C\sum _a \int_{s_0}^s\tau^{-1} \big(\|(s/t)^{1- \sigma} \del_a u\|_{L^2(\Hcal_\tau)} + \|\delu_a u\|_{L^2(\Hcal_\tau)} \big) \, d\tau, 
\endaligned
\ee
which is the desired estimate in the outer part of $\Hcal_s$.

For the inner part, $r\leq t/3$ leads us to $\frac{2\sqrt{2}}{3} \leq s/t\leq 1$. Then by Lemma \ref{lem Hardy 1}, we find 
\bel{ineq 4 proof prop Hardy 2}
\bigg{\|}(s/t)^{- \sigma} \frac{r\big(1- \chi(r/t) \big) u}{(1+r^2)^{1/2}s} \bigg{\|}_{L_f^2(\Hcal_s)} \leq \|r^{-1}u\|_{L_f^2(\Hcal_s)} \leq C\sum_a \|\delu_a u\|_{L_f^2(\Hcal_s)}
\ee
and it remains to combine \eqref{ineq 3 proof prop Hardy 2} and \eqref{ineq 4 proof prop Hardy 2}.
\end{proof}


\subsection{Sobolev inequality on hyperboloids}

In order to turn an $L^2$ energy estimate into an $L^\infty$ estimate, we will rely on the following version of the Sobolev inequality (Klainerman \cite{Klainerman85}, H\"ormander \cite[Lemma 7.6.1]{Hormander}; see also 
LeFloch and Ma \cite[Section 5]{PLF-YM-book}).  

\begin{proposition}[Sobolev-type estimate on hyperboloids]
\label{pre lem sobolev}
For any sufficiently smooth function $u=u(t,x)$ which is defined in the future of $\Hcal_2$ and 
is spatially compactly supported, one has 
\be
\label{eq 1 sobolev}
\sup_{(t,x) \in \Hcal_s} t^{3/2} \, |u(t,x)|
\lesssim
\sum_{|I|\leq 2} \| L^I u (t, \cdot) \|_{L_f^2(\Hcal_s)},  \qquad s \geq 2,
\ee
where the implied constant is uniform in $s \geq 2$, and one recalls that $t = \sqrt{s^2 + |x|^2}$ on $\Hcal_s$.
\end{proposition}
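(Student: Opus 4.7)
The plan is to reduce the claim to the standard Sobolev embedding $H^2(\RR^3) \hookrightarrow L^\infty(\RR^3)$ applied on a unit ball, with a careful rescaling that manufactures the weight $t^{3/2}$ and converts the flat spatial derivatives into the Lorentz boosts $L_a$.

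First I would fix an arbitrary point $(t_0, x_0) \in \Hcal_s$, so $t_0 = \sqrt{s^2+|x_0|^2}$, and parametrize $\Hcal_s$ by the spatial coordinate $x$ via $t(x) = \sqrt{s^2+|x|^2}$. Define the rescaled function
\[
w(y) := u\bigl(t(x_0+t_0 y),\, x_0+t_0 y\bigr), \qquad |y| < 1/2.
\]
Standard Sobolev gives $|w(0)|^2 \lesssim \sum_{|I|\leq 2}\int_{|y|<1/2}|\partial_y^I w|^2\, dy$. The key geometric observation is that on the ball $|x-x_0| < t_0/2$ one has the uniform two-sided bound $\tfrac12 t_0 \leq t(x) \leq \tfrac32 t_0$: the upper bound is immediate from $|x|\leq |x_0|+t_0/2 \leq 3t_0/2$, and the lower bound follows by splitting into $|x_0|\leq t_0/2$ (in which case $s \geq \tfrac{\sqrt 3}{2}t_0$ and hence $t\geq s$) versus $|x_0|>t_0/2$ (in which case $t^2 \geq s^2 + (|x_0|-t_0/2)^2 = t_0^2(5/4-|x_0|/t_0)\geq t_0^2/4$).

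Next I would express the $y$-derivatives of $w$ in terms of Lorentz boosts. A direct computation yields $\partial_{y^a} w = t_0\, \delb_a u = (t_0/t)\, L_a u$, evaluated at the corresponding point of $\Hcal_s$. For second derivatives, using $\delb_a t = x^a/t$ one obtains schematically $t_0^2\, \delb_a \delb_b u = (t_0/t)^2 L_a L_b u \;-\; (t_0/t)^2 (x^a/t)\, L_b u$, and both prefactors are uniformly bounded on the ball (since $t_0/t \in [2/3,2]$ and $|x|/t \leq 1$). Therefore $|\partial_y^I w| \lesssim \sum_{|J|\leq 2}|L^J u|$ pointwise; note that although rotations $\Omega_{ab}$ may formally appear when commuting, the identity $[L_a,L_b]=\Omega_{ab}$ absorbs them into second-order boost products.

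Finally I would undo the rescaling: the change of variables $x=x_0+t_0 y$ gives $dy = t_0^{-3}\,dx$, so
\[
|u(t_0,x_0)|^2 \;=\; |w(0)|^2 \;\lesssim\; t_0^{-3}\sum_{|J|\leq 2}\int_{|x-x_0|<t_0/2}|L^J u|^2\,dx \;\leq\; t_0^{-3}\sum_{|J|\leq 2}\|L^J u\|_{L_f^2(\Hcal_s)}^2.
\]
Multiplying by $t_0^3$ and taking the supremum over $(t_0,x_0)\in\Hcal_s$ yields \eqref{eq 1 sobolev}; the spatial compact support of $u$ guarantees that $w$ is well-defined (by extension by zero) and the integrals on the right-hand side are finite. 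The main technical point in the argument, and the only one requiring attention, is the bookkeeping for the two-sided equivalence $t\sim t_0$ on the rescaled ball, since without it the conversion $\partial_y \leftrightarrow L_a/t$ would not produce a scale-invariant factor. Everything else reduces to the flat Sobolev embedding and elementary commutator identities.
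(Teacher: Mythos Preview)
Your proof is correct and follows essentially the same route as the paper: rescale a neighborhood of $(t_0,x_0)$ on $\Hcal_s$ by the factor $t_0$, apply the flat Sobolev embedding on a unit ball, convert the $y$-derivatives into boosts via $\partial_{y^a}=t_0\delu_a=(t_0/t)L_a$, and use a two-case argument (splitting on $|x_0|\lessgtr t_0/2$) to show $t_0/t$ is uniformly bounded on the rescaled ball. The only differences are cosmetic (you use radius $1/2$ rather than $1/3$, and you record the two-sided bound $t\sim t_0$ while only the upper bound on $t_0/t$ is actually needed); your aside about rotations $\Omega_{ab}$ is unnecessary, since the direct computation $t_0^2\delu_a\delu_b u=(t_0/t)^2 L_aL_bu-(t_0/t)^2(x^a/t)L_bu$ involves only boost products.
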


\begin{proof} Consider the function $w_s(x) := u(\sqrt{s^2+|x|^2},x)$. 
Fix $s_0$ and a point $(t_0,x_0)$ in $\Hcal_{s_0}$ (with
$t_0 = \sqrt{s_0^2 + |x_0|^2}$), and observe that
\bel{eq:gh6007}
\del_aw_{s_0}(x) = \delu_au\big(\sqrt{s_0^2+|x|^2},x\big) = \delu_au(t,x),
\ee
with $t = \sqrt{s_0^2 + |x|^2}$ and
$
t\del_aw_{s_0}(x) = t\delu_au\big(\sqrt{s_0^2+|x|^2},t\big) = L_au(t,x).
$
Then, introduce $g_{s_0,t_0}(y) := w_{s_0}(x_0 + t_0\,y)$ and write
$$
g_{s_0,t_0}(0) = w_{s_0}(x_0) = u\big(\sqrt{s_0^2+|x_0|^2},x_0\big)=u(t_0,x_0).
$$
From the standard Sobolev inequality applied to the function $g_{s_0,t_0}$, we get 
$$
\big|g_{s_0,t_0}(0)\big|^2\leq C\sum_{|I|\leq 2}\int_{B(0,1/3)}|\del^Ig_{s_0,t_0}(y)|^2 \, dy,
$$
$B(0, 1/3) \subset \RR^3$ being the ball centered at the origin with radius $1/3$.

In view of (with $x = x_0 + t_0y$)
$$
\aligned
\del_ag_{s_0,t_0}(y)
& = t_0\del_aw_{s_0}(x_0 + t_0y) 
\\
& = t_0\del_aw_{s_0}(x) = t_0\delu_au\big(t,x)
\endaligned
$$
in view of \eqref{eq:gh6007}, we have (for all $I$)
$\del^Ig_{s_0,t_0}(y) = (t_0\delu)^I u(t,x)$ and, therefore, 
$$
\aligned
\big|g_{s_0,t_0}(0)\big|^2 \leq& C\sum_{|I|\leq 2}\int_{B(0,1/3)}\big|(t_0\delu)^I u\big(t,x)\big)\big|^2dy
\\
= & C t_0^{-3}\sum_{|I|\leq 2}\int_{B((t_0,x_0),t_0/3)\cap \Hcal_{s_0}}\big|(t_0\delu)^I u\big(t,x)\big)\big|^2dx.
\endaligned
$$

Note that
$$
\aligned
(t_0\delu_a(t_0\delu_b w_{s_0}))
& = t_0^2\delu_a\delu_bw_{s_0}
\\
& = (t_0/t)^2(t\delu_a)(t\delu_b) w_{s_0} - (t_0/t)^2 (x^a/t)L_b w_{s_0}
\endaligned
$$
and $x^a/t = x^a_0/t + yt_0/t = (x^a_0/t_0 + y)(t_0/t)$. In the region $y\in B(0,1/3)$, the factor $|x^a/t|$ is bounded by $C(t_0/t)$ and thus (for $|I| \leq 2$)
$$
|(t_0 \delu)^I u| \leq \sum_{|J| \leq |I|} | L^J u| (t_0/t)^2.
$$

In the region $|x_0|\leq t_0/2$, we have $t_0\leq \frac{2}{\sqrt{3}}s_0$ so 
$$
t_0 \leq C s_0 \leq C \sqrt{|x|^2 + s_0^2} = Ct
$$
for some $C>0$.  When $|x_0|\geq t_0/2$, in the region $B((t_0,x_0),t_0/3)\cap \Hcal_{s_0}$ we get
$t_0 \leq C|x| \leq C\sqrt{|x|^2 + s_0^2} =Ct$ and thus 
$$
|(t_0 \delu)^I u| \leq C \, \sum_{|J| \leq |I|} | L^J u|
$$
and
$$
\aligned
\big|g_{s_0,t_0}(y_0)\big|^2
\leq& Ct_0^{-3}\sum_{|I|\leq 2}\int_{B(x_0,t_0/3)\cap\Hcal_{s_0}}\big|(t\delu)^I u\big(t,x)\big)\big|^2 \, dx
\\
\leq& Ct_0^{-3}\sum_{|I|\leq 2}\int_{\Hcal_{s_0}}\big|L^I u(t,x)\big|^2 \, dx. 
\endaligned
$$ 
\end{proof}


\subsection{Hardy inequality for hyperboloids}

We now bound the norm $
\|r^{-1} \del^IL^Jh_{\alpha \beta} \|_{L^2(\Hcal_s^*)}.
$
If $\del^IL^Jh_{\alpha \beta}$ were compactly supported in $\Hcal_s\cap\Kcal$, we could directly apply the standard Hardy inequality to the function
$u_s(x): = \big( \del^IL^Jh_{\alpha \beta} \big) (\sqrt{s^2 +r^2}, x)$ 
and we would obtain 
$$
\|r^{-1} \del^IL^Jh_{\alpha \beta} \|_{L^2(\Hcal_s^*)} \leq C\|\delu \del^IL^Jh_{\alpha \beta} \|_{L^2(\Hcal_s^*)}.
$$
However, since $\del^IL^Jh_{\alpha \beta}$ is not compactly supported in $\Kcal$, we must take a boundary term into account. 

\begin{lemma}[Adapted Hardy inequality]\label{lem 0 Hardy}
Let $(h_{\alpha \beta}, \phi)$ be a solution to the Einstein-massive field system associated with a compact Schwarzschild
perturbation. Then, one has 
\bel{ineq 0 Hardy}
\|r^{-1} \del^IL^Jh_{\alpha \beta} \|_{L^2(\Hcal_s^*)} \leq C\sum_a \|\delu_a \del^IL^Jh_{\alpha \beta} \|_{L^2(\Hcal_s^*)} + Cm_Ss^{-1}.
\ee
\end{lemma}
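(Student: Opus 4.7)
The plan is to adapt the classical $\RR^3$ proof of Hardy's inequality to the ball $\Omega_s := \{x \in \RR^3 : |x| \leq R_s\}$ with $R_s := (s^2-1)/2$, being careful to track the boundary contribution at $|x| = R_s$ which the usual proof discards via compact support. Working with the spatial profile $w_s(x) := (\del^IL^J h_{\alpha \beta})\bigl(\sqrt{s^2+r^2},\, x\bigr)$, the chain rule gives $\del_a w_s = (\delu_a u)(\sqrt{s^2+r^2},x)$, so that $\|\nabla w_s\|_{L^2(\Omega_s)} = \sum_a \|\delu_a \del^IL^J h_{\alpha \beta}\|_{L^2(\Hcal_s^*)}$, which will match the right-hand side of \eqref{ineq 0 Hardy}.

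Next I would apply the divergence theorem to the vector field $w_s^2\, x/r^2$ on $\Omega_s$. Since $\nabla\cdot(x/r^2)=1/r^2$ in three dimensions and the outward unit normal on $\partial\Omega_s$ is $\nu=x/r$, so that $\nu\cdot(x/r^2)=1/r=1/R_s$ on the sphere, one obtains the identity
\begin{equation*}
\int_{\Omega_s} r^{-2} w_s^2\, dx = -2\int_{\Omega_s} w_s\,(x/r^2)\cdot\nabla w_s\, dx + R_s^{-1}\int_{|x|=R_s} w_s^2\, d\sigma.
\end{equation*}
Cauchy--Schwarz followed by Young's inequality absorbs the bulk term into the left-hand side, leaving
\begin{equation*}
\|r^{-1}w_s\|_{L^2(\Omega_s)}^2 \leq C \|\nabla w_s\|_{L^2(\Omega_s)}^2 + C R_s^{-1} \int_{|x|=R_s} w_s^2\, d\sigma.
\end{equation*}

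It then remains to bound the surface integral by $C m_S^2/s^2$. By Proposition~\ref{prop basic-localization}, $h_{\alpha \beta}-\gSch_{\alpha \beta}$ vanishes near $\del_B\Kcal$, hence on $|x|=R_s$ the function $w_s$ equals $\del^IL^J h_{S,\alpha\beta}$, which can be computed from the explicit formula \eqref{eq Sch-wave}. Since the Schwarzschild metric in wave coordinates is static, $\del_t h_S = 0$ and therefore $L_a h_S = t\,\del_a h_S$; as $t=R_s+1\simeq R_s$ on this sphere, the factor $t/r$ stays bounded. Inducting on $|I|+|J|$ then yields $|\del^IL^J h_{S,\alpha\beta}| \leq C m_S / r$ uniformly on $|x|=R_s$, whence $R_s^{-1}\int_{|x|=R_s} w_s^2\, d\sigma \lesssim R_s^{-1}\cdot R_s^{2}\cdot m_S^2 R_s^{-2} = m_S^2/R_s \lesssim m_S^2/s^2$. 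Taking square roots gives \eqref{ineq 0 Hardy}.

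The only genuinely delicate point is the last one: $L_a = x^a\del_t + t\del_a$ naively introduces a factor $t$ under each boost, and only the combination of staticity of $\gSch$ in wave gauge with the bound $t\lesssim r$ along $\del_B\Kcal\cap\Hcal_s$ preserves the decay rate $m_S/r$ under iteration of $\del^IL^J$. Once this bookkeeping is in place, the boundary term is harmless of order $m_S/s$, and the rest of the argument is entirely parallel to the standard compactly supported Hardy inequality.
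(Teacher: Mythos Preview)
Your proof is correct and follows essentially the same route as the paper's: pass to the spatial trace $w_s$ on the ball of radius $R_s=(s^2-1)/2$, run the standard Hardy argument with a boundary term on $|x|=R_s$, and bound that term using the fact that $h_{\alpha\beta}=\gSch_{\alpha\beta}$ there. The paper uses the one-dimensional identity $r^{-2}u_s^2=-\del_r(r^{-1}u_s^2)+2u_sr^{-1}\del_ru_s$ integrated radially (equivalent to your divergence theorem on $w_s^2\,x/r^2$) and then solves the quadratic inequality $\|r^{-1}u_s\|^2\leq 2\|r^{-1}u_s\|\|\del_r u_s\|+Cm_S^2s^{-2}$, whereas you absorb via Young; these are interchangeable. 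Your boundary discussion (staticity of $\gSch$ plus $t\simeq r$ on $\del_B\Kcal$) is in fact more detailed than the paper's, which simply asserts $\int_{r=R_s}r^{-1}u_s^2\,d\sigma\leq Cm_S^2s^{-2}$.
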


\begin{proof} With the notation
$u_s (x) := \big( \del^IL^Jh_{\alpha \beta} \big) (\sqrt{s^2 +r^2}, x )$, we obtain 
$$
\del_a u_s(x) = \delu_a \del^IL^Jh_{\alpha \beta} \left(\sqrt{s^2 +r^2}, x\right).
$$
Consider the identity
$
r^{-2}u_s^{-2} = - \del_r\left(r^{-1}u_s^2\right) + 2u_s r^{-1} \del_r u_s
$
and integrate it in the region
$
C_{[\vep,( s^2-1)/2]} :=\left\{\vep \leq r\leq  \frac{s^2-1}{2} \right\}
$
with spherical coordinates. We have
\bel{eq 1 proof lem 0 Hardy}
\int_{C_{[\vep,( s^2-1)/2]}}|r^{-1}u_s|^2 dx 
= \int_{r= (s^2-1)/2}r^{-1}u_s^2 d\sigma  
- \int_{r=\vep}r^{-1}u_s^2d\sigma
+ 2\int_{C_{[\vep,( s^2-1)/2]}}u_s r^{-1} \del_r u_s dx. 
\ee
Letting now $\vep \to 0^+$, we have $\int_{r=\vep}r^{-1}u_s^2 d\sigma \to 0$. Observe that on the sphere
$r = (s^2-1)/2$,
$$
\sqrt{s^2 +r^2} - r = \frac{s^2 +1}{2} - \frac{s^2-1}{2} = 1,
$$
that is the point $\left(\sqrt{s^2 +r^2}, x\right)$ is on the cone $\{r = t-1\}$. We know that, on this cone, $h_{\alpha \beta}$ coincides with the Schwarzschild metric, so that
$$
\int_{r= (s^2-1)/2}r^{-1}u_s^2d\sigma \leq Cm_S^2 s^{-2}.
$$
Then, \eqref{eq 1 proof lem 0 Hardy} yields us
$$
\|r^{-1}u_s \|^2_{L^2(C_{[0,(s^2-1)/2]})} \leq 2\|r^{-1}u_s \|_{L^2(C_{[0,(s^2-1)/2]})}  \|\del_r u_s \|_{L^2(C_{[0,(s^2-1)/2]})} + Cm_S^2s^{-2}.
$$
And this inequality leads us to
$$
\|r^{-1}u_s \|_{L^2(C_{[0,(s^2-1)/2]})} \leq  C\|\del_r u_s \|_{L^2(C_{[0,(s^2-1)/2]})} + Cm_Ss^{-1}.
$$
By recalling that
$$
\aligned
\|r^{-1}u_s \|^2_{L^2(C_{[0,(s^2-1)/2]})} & 
=  \int_{r\leq (s^2-1)/2} \big|r^{-1} \del^IL^Jh_{\alpha \beta} \left(\sqrt{s^2 +r^2}, x\right) \big|^2dx
\\
& = \int_{\Kcal\cap \Hcal_s} \big|r^{-1} \del^IL^Jh_{\alpha \beta}(t, x) \big|^2dx = \|r^{-1} \del^IL^Jh_{\alpha \beta} \|^2_{L^2(\Hcal_s^*)}
\endaligned
$$
and
$
\del_r u_s = \frac{x^a}{r} \del_au_s = \frac{x^a}{r} \delu_a \del^IL^Jh_{\alpha \beta}(\sqrt{s^2 +r^2}, x),
$
the proof is completed. 
\end{proof}
 

\subsection{Commutator estimates for admissible vector fields}

We recall the following identities first established in \cite{PLF-YM-one}; see also Appendix~\ref{appendix-COM} at the end of this monograph. 

\begin{lemma}[Algebraic decomposition of commutators]  
One has
\bel{pre lem commutator pr0}
[\del_t, \delu_a] = - \frac{x^a}{t^2} \del_t, \quad [\delu_a, \delu_b] = 0.
\ee
There exist constants $\lambda_{aJ}^I$ such that
\bel{pre lem commutator pr1-ZZ}
[\del^I, L_a] = \sum_{|J| \leq|I|} \lambda^I_{aJ} \del^J.
\ee
There exist constants $\theta_{\alpha J}^{I\gamma}$ such that
\be
\label{pre lem commutator pr2-ZZ}
[L^I, \del_\alpha] = \sum_{|J|<|I|, \gamma} \theta_{\alpha J}^{I\gamma} \del_{\gamma}L^J.
\ee
In the future light-cone $\Kcal$, the following identity holds:
\bel{pre lem commutator pr2 NEW-ZZ}
[\del^IL^J, \delu_{\beta}]
 = \sum_{|J'| \leq |J|\atop|I'| \leq|I|} \thetau_{\beta I'J'}^{IJ \gamma} \del_{\gamma} \del^{I'}L^{J'},
\ee
where the coefficients $\thetau_{\beta I'J'}^{IJ\gamma}$ are smooth functions
and satisfy (in $\Kcal$)
\bel{pre lem commutator pr4a-ZZ}
\aligned
\big|\del^{I_1}L^{J_1} \thetau_{\beta I'J'}^{IJ\gamma} \big| & \leq  C\big(|I|,|J|,|I_1|,|J_1|\big) \, t^{-|I_1|}, \quad |J'|<|J|,
\\
\big|\del^{I_1}L^{J_1} \thetau_{\beta I'J'}^{IJ\gamma} \big| & \leq  C\big(|I|,|J|,|I_1|,|J_1|\big) \, t^{-|I_1|-1}, \quad |I'|<|I|.
\endaligned
\ee
Within the future light-cone $\Kcal$, the following identity holds:
\bel{pre lem commutator pr3-ZZ}
[L^I, \delu_c]
 = \sum_{|J|<|I|} \sigma^{Ia}_{cJ} \delu_aL^J,
\ee
where the coefficients $\sigma_{c J}^{Ia}$ are
smooth functions and
satisfy (in $\Kcal$)
\bel{pre lem commutator pr3b-ZZ}
\big|\del^{I_1}L^{J_1} \sigma_{c J}^{Ia} \big| \leq C(|I|,|J|,|I_1|,|J_1|)t^{-|I_1|}.
\ee
Within the future light-cone $\Kcal$, the following identity holds:
\bel{pre lem commutator pr4-ZZ}
[\del^I, \delu_c]
=  t^{-1} \!\!\!\!\sum_{|J| \leq|I|} \rho_{cJ}^{I} \del^{J},
\ee
where the coefficients $\rho_{cJ}^I$ are smooth functions 
and satisfy (in $\Kcal$)
\bel{pre lem commutator pr4b-ZZ}
\big|\del^{I_1}L^{J_1} \rho_{cJ}^{I} \big| \leq C(|I|,|J|,|I_1|,|J_1|)t^{-|I_1|}.
\ee
\end{lemma}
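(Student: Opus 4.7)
The plan is to establish all six identities by induction on the number of factors in the product, with base cases provided by one-step commutator computations and the inductive step powered by the Leibniz rule $[AB,C]=A[B,C]+[A,C]B$. The first identity \eqref{pre lem commutator pr0} is immediate: using $\delu_a=(x^a/t)\del_t+\del_a$ we get $[\del_t,\delu_a]=\del_t(x^a/t)\del_t=-(x^a/t^2)\del_t$, and $[\delu_a,\delu_b]=0$ follows from expanding the six resulting bilinear terms and observing that each pair cancels by the symmetry $a\leftrightarrow b$. For \eqref{pre lem commutator pr1-ZZ} and \eqref{pre lem commutator pr2-ZZ}, the base relations $[\del_t,L_a]=\del_a$ and $[\del_b,L_a]=\delta_{ab}\del_t$ are elementary; induction on $|I|$ then yields constant-coefficient combinations of $\del^J$ with $|J|\le|I|$, respectively of $\del_\gamma L^J$ with $|J|<|I|$.

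For \eqref{pre lem commutator pr3-ZZ} the crucial base computation is
\begin{equation*}
[L_a,\delu_c]=-\tfrac{x^c}{t}\,\delu_a,
\end{equation*}
which one checks directly after the second-order terms in the expansion cancel. The induction then decomposes $L^I=L_aL^{I'}$, applies Leibniz to get $L_a[L^{I'},\delu_c]+[L_a,\delu_c]L^{I'}$, substitutes the inductive representation into the first term, and commutes each $L_a$ past the surviving $\delu_b$ factors using the base formula again. The coefficient bound \eqref{pre lem commutator pr3b-ZZ} holds because every coefficient produced is a polynomial in the quantities $x^c/t$, and the class of smooth functions $f$ on $\Kcal$ satisfying $|\del^{I_1}L^{J_1}f|\le C\,t^{-|I_1|}$ is closed under products, under application of $L_a$, and under multiplication by $x^c/t$.

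For \eqref{pre lem commutator pr4-ZZ} the base cases $[\del_t,\delu_c]=-(x^c/t^2)\del_t$ and $[\del_b,\delu_c]=(\delta_{bc}/t)\del_t$ already display the $t^{-1}$ prefactor and admissible coefficients. Induction on $|I|$ via $\del^I=\del_\alpha\del^{I'}$ propagates the representation, since each new $\del_\alpha$ acting on an existing coefficient either preserves its decay or improves it. Finally \eqref{pre lem commutator pr2 NEW-ZZ} is obtained from the splitting
\begin{equation*}
[\del^IL^J,\delu_\beta]=\del^I[L^J,\delu_\beta]+[\del^I,\delu_\beta]L^J,
\end{equation*}
by applying \eqref{pre lem commutator pr3-ZZ} (and, when $\beta=0$, \eqref{pre lem commutator pr2-ZZ}) to the first bracket, then converting each $\delu_a$ produced into $(x^a/t)\del_t+\del_a$, distributing $\del^I$ by Leibniz, and applying \eqref{pre lem commutator pr4-ZZ} to the second bracket. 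The two regimes in \eqref{pre lem commutator pr4a-ZZ} match exactly the two origins of the resulting coefficients: those descended from the boost commutator are degree-zero homogeneous in $x/t$ and therefore satisfy $|\del^{I_1}L^{J_1}(\cdot)|\le C\,t^{-|I_1|}$, whereas those descended from the partial-derivative commutator inherit the extra $t^{-1}$ prefactor from \eqref{pre lem commutator pr4-ZZ}.

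The main obstacle is organizational rather than conceptual: one must verify uniformly in all multi-indices that the coefficients produced lie in the stated decay class. I would handle this once and for all by recording at the start that the class $\{f\in C^\infty(\Kcal):|\del^{I_1}L^{J_1}f|\lesssim t^{-|I_1|}\}$ is closed under products, under $L_a$, and under multiplication by $x^\alpha/t$; every coefficient appearing in the inductive expansions is built by these operations out of constants and factors $x^c/t$, so the uniform bounds \eqref{pre lem commutator pr4a-ZZ}, \eqref{pre lem commutator pr3b-ZZ}, and \eqref{pre lem commutator pr4b-ZZ} follow automatically from the construction.
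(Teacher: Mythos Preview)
Your proposal is correct and follows essentially the same strategy as the paper's Appendix: induction on the length of the product via the Leibniz identity $[AB,C]=A[B,C]+[A,C]B$, grounded in the one-step commutators $[\del_\alpha,L_a]$, $[\del_\alpha,\delu_c]$, and $[L_a,\delu_c]=-(x^c/t)\delu_a$, together with the observation that all coefficients produced are smooth homogeneous functions of degree $\le 0$ in $(t,x)$.

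There is one organizational difference worth noting for \eqref{pre lem commutator pr2 NEW-ZZ}. You split the operator $\del^IL^J$ via $[\del^IL^J,\delu_\beta]=\del^I[L^J,\delu_\beta]+[\del^I,\delu_\beta]L^J$ and then feed in \eqref{pre lem commutator pr3-ZZ} and \eqref{pre lem commutator pr4-ZZ}. The paper instead splits the target operator, writing $\delu_\beta=\Phi_\beta^{\gamma}\del_{\gamma}$ and expanding $[\del^IL^J,\Phi_\beta^{\gamma}\del_{\gamma}]$ by Leibniz over the product of the homogeneous coefficient $\Phi_\beta^{\gamma}$ and the translation $\del_{\gamma}$, then invoking only \eqref{pre lem commutator pr2-ZZ}. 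Both routes are short and yield the two-regime bound \eqref{pre lem commutator pr4a-ZZ}: in your version the two regimes arise from the two brackets in your splitting, while in the paper's version they arise from whether the Leibniz distribution hits $\Phi_\beta^{\gamma}$ (giving the extra $t^{-1}$) or not. Your abstraction of the ``degree-zero coefficient class'' closed under products, boosts, and multiplication by $x^c/t$ is a clean way to package the coefficient bounds; the paper handles this case by case via homogeneity.
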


\begin{lemma} \label{lem com 1}
For all indices $I$, the function
\bel{ineq com 1.1}
\Xi^I  := (t/s) \del^IL^J(s/t)
\ee
defined in the closed cone $\overline{\Kcal} = \{|x| \leq t-1\}$, is smooth and all of its derivatives (of any order)
are bounded in $\overline{\Kcal}$. Furthermore, it is homogeneous of degree $\eta$ with $\eta \leq 0$ (in the sense recalled in Definition \ref{def 1} below).
\end{lemma}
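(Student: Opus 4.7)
The strategy is to determine explicitly the algebraic form of $\partial^I L^J(s/t)$ and then to invoke the elementary geometric inequality $s^2 \geq t$ which is valid throughout $\overline{\Kcal}$.

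I introduce the scale-invariant variables $u^a := x^a/t$, so that $s/t = \sqrt{1 - |u|^2} =: f(u)$. A direct computation gives $L_a(u^b) = \delta_{ab} - u^a u^b$ (polynomial in $u$) and consequently $L_a(s/t) = -u^a (s/t)$; hence, by induction on $|J|$,
\begin{equation*}
L^J(s/t) = Q_J(u)\,(s/t)
\end{equation*}
for some polynomial $Q_J$ in $u$. Since $\partial_t u^b = -u^b/t$ and $\partial_c u^b = \delta_{cb}/t$, each Cartesian partial derivative applied to a function of $u$ produces a factor $1/t$. Using $\partial_\alpha(1-|u|^2) = (\text{poly in }u)/t$ and the chain rule, a further induction on $|I|$ yields the structural identity
\begin{equation*}
\partial^I L^J(s/t) \;=\; (s/t)\,\sum_{k=0}^{|I|}\frac{P_k(u)}{t^{|I|}\,(1-|u|^2)^k}
\end{equation*}
for certain polynomials $P_k$ in $u$. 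Substituting $1 - |u|^2 = s^2/t^2$, this rewrites as
\begin{equation*}
\Xi \;:=\; (t/s)\,\partial^I L^J(s/t) \;=\; \sum_{k=0}^{|I|} P_k(u)\,\frac{t^{2k-|I|}}{s^{2k}}.
\end{equation*}

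Boundedness of $\Xi$ in $\overline{\Kcal}$ follows from the elementary estimate $s^2 = (t-r)(t+r) \geq 2t - 1 \geq t$, valid for $r \leq t-1$ and $t \geq 1$: indeed, $s^{2k} \geq t^k$ gives $t^{2k - |I|}/s^{2k} \leq t^{k - |I|} \leq 1$ for $0 \leq k \leq |I|$, while $|P_k(u)| \leq C$ on the region $\{|u| \leq 1\}$. To control arbitrary higher derivatives $\partial^K \Xi$, I track this template: applying $\partial_\alpha$ to a single term $P(u)\,t^a/s^b$, and using $\partial_t s = t/s$ together with $\partial_c s = -x^c/s$, produces a finite sum of terms of the same form with $(a, b)$ replaced by either $(a-1, b)$ or $(a+1, b+2)$. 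In each case the balancing condition $b - 2a \geq 0$ is preserved, so the boundedness estimate applies inductively to all orders.

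The homogeneity assertion is immediate from the scaling $(t, x) \mapsto (\lambda t, \lambda x)$: since $s \mapsto \lambda s$, the quotient $s/t$ is homogeneous of degree $0$, the boosts $L_a$ preserve degree, and each $\partial_\alpha$ lowers degree by one, so $\Xi$ has degree $\eta = -|I| \leq 0$. The chief organizational obstacle is the inductive verification of the structural formula, where $\partial_\alpha$ must be distributed carefully over the three factors $P_k(u)$, $t^{-|I|}$, and $(1-|u|^2)^{-k}$ so that each resulting summand fits the template with $|I|$ advanced to $|I| + 1$ and $k$ advanced by at most one; once this bookkeeping is in place, the entire analytic content reduces to the single inequality $s^2 \geq t$ in $\overline{\Kcal}$.
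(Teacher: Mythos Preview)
Your argument is correct. The paper itself does not supply a proof of this lemma: it is stated without proof both in the main text and in Appendix~D (where it is restated as a ``technical observation'' from \cite{PLF-YM-book}), so there is no in-paper argument to compare against directly.

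Your approach is clean and self-contained. The identity $L_a(s/t) = -(x^a/t)(s/t)$ (which follows from $L_a s = 0$) immediately gives $L^J(s/t) = Q_J(u)\,(s/t)$ with $Q_J$ polynomial in $u = x/t$, and then your structural formula
\[
\Xi \;=\; (t/s)\,\del^I L^J(s/t) \;=\; \sum_{k=0}^{|I|} P_k(u)\,\frac{t^{2k-|I|}}{s^{2k}}
\]
reduces all of the boundedness claims to the single inequality $s^2 \ge t$ on $\overline{\Kcal}$. The bookkeeping for higher derivatives via the invariants $(a,b)$ with $b - 2a \ge 0$ and $b \ge 0$ is tidy and does the job. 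The homogeneity degree $\eta = -|I|$ is immediate from scaling, as you say.

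One small remark: your boundedness argument uses $t \ge 1$ and $s \ge 1$ on $\overline{\Kcal}$; both hold since $r \le t-1$ forces $t \ge 1$ and $s^2 = (t-r)(t+r) \ge 2t-1 \ge 1$. This is implicit in what you wrote but worth stating once, since it is what makes $t^{k-|I|} \le 1$ and $s^{-b} \le 1$ legitimate.
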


\begin{lemma}[Commutator estimates]\label{lem com 2} 
For all sufficiently smooth functions $u$ defined in the cone $\Kcal$, the following identities hold:
\bel{ineq com 2.1}
\big|[\del^IL^J, \del_{\alpha}]u\big| \leq C(|I|, |J|) \sum_{|J'|<|J|, \beta}|\del_{\beta} \del^IL^{J'}u|,
\ee
\bel{ineq com 2.2}
\big|[\del^IL^J, \delu_c]u\big| \leq C(|I|,|J|) \sum_{|J'|<|J|,a \atop |I'| \leq |I|} |\delu_a \del^{I'}L^{J'}u|
+ C(|I|,|J|)t^{-1} \!\!\!\!\sum_{|I| \leq|I'|\atop |J| \leq|J'|}|\del^{I'}L^{J'}u|.
\ee
\bel{ineq com 2.3}
\big|[\del^IL^J, \delu_{\alpha}] u\big| \leq C(|I|,|J|)t^{-1} \sum_{\beta,|I'|<|I|\atop |J'| \leq|J|} \big|\del_{\beta} \del^{I'}L^{J'} u\big|
+ C(|I|,|J|) \sum_{\beta,|I'| \leq|I|\atop |J'|<|J|} \left|\del_\beta \del^{I'}L^{J'}u\right|,
\ee 
\bel{ineq com 2.4}
\big|[\del^IL^J, \del_\alpha \del_{\beta}] u \big|
\leq  C(|I|,|J|) \sum_{\gamma, \gamma'\atop |I| \leq|I'|,|J'|<|I|} \big|\del_{\gamma} \del_{\gamma'} \del^{I'}L^{J'} u\big|,
\ee
\bel{ineq com 2.5}
\aligned
&\big|[\del^IL^J, \delu_a \delu_{\beta}] u\big| + \big|[\del^IL^J, \delu_\alpha \delu_b] u\big|
\\
&\leq C(|I|,|J|) \Bigg(
\sum_{c, \gamma,|I'| \leq |I|\atop |J'| < |J|} \big|\delu_c \delu_{\gamma} \del^{I'}L^{J'}u\big|
+
 \sum_{c, \gamma,|I'| < |I|\atop |J'| \leq |J|}  t^{-1} \big|\delu_c \delu_{\gamma} \del^{I'}L^{J'}u\big|
+ \sum_{\gamma,|I'| \leq|I|\atop |J'| \leq|J|}   t^{-1} \big|\del_{\gamma} \del^{I'}L^{J'}u\big|
\Bigg).
\endaligned
\ee
\end{lemma}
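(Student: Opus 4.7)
The plan is to deduce each of the five estimates \eqref{ineq com 2.1}--\eqref{ineq com 2.5} from the algebraic decompositions recalled just above, namely \eqref{pre lem commutator pr1-ZZ}--\eqref{pre lem commutator pr4b-ZZ}, via induction on $|I|+|J|$ together with the Leibniz rule. The homogeneity bounds on the transition coefficients (notably \eqref{pre lem commutator pr3b-ZZ} and \eqref{pre lem commutator pr4b-ZZ}) together with the fact that the coefficients $\theta_{\alpha J}^{I\gamma}$ in \eqref{pre lem commutator pr2-ZZ} are constants will provide all the weights that appear in the statement.

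First I would dispose of \eqref{ineq com 2.1}: since $\del^I$ and $\del_\alpha$ commute, it suffices to commute $L^J$ with $\del_\alpha$, and \eqref{pre lem commutator pr2-ZZ} gives the result immediately with constant coefficients (no $t$-weight is present because the boosts have constant coefficients in Cartesian coordinates). Estimate \eqref{ineq com 2.4} then follows by applying \eqref{ineq com 2.1} twice (first push $\del_\beta$ through $\del^IL^J$, then push $\del_\alpha$), which is safe because the weights remain constant.

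Next I would turn to \eqref{ineq com 2.3}, which is the direct $L^\infty$-consequence of the identity \eqref{pre lem commutator pr2 NEW-ZZ}: splitting the sum according to whether $|I'|<|I|$ or $|J'|<|J|$ and applying the bound \eqref{pre lem commutator pr4a-ZZ} on the coefficients $\thetau_{\beta I'J'}^{IJ\gamma}$ produces exactly the two kinds of terms in \eqref{ineq com 2.3}, the $t^{-1}$ weight coming from the ``$|I'|<|I|$'' case. For \eqref{ineq com 2.2}, I would decompose $\del^IL^J\delu_c u$ using first \eqref{pre lem commutator pr3-ZZ} to move the $\delu_c$ through the boosts, picking up terms $\sigma_{cJ'}^{Ja}\delu_a L^{J'}$ with $|J'|<|J|$ and bounded coefficients; then I would apply \eqref{pre lem commutator pr4-ZZ} to commute $\del^I$ past $\delu_c$ (or $\delu_a$), which produces the $t^{-1}$ prefactor. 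The Leibniz rule generates derivatives falling on $\sigma$ and $\rho$, but these are bounded by \eqref{pre lem commutator pr3b-ZZ}--\eqref{pre lem commutator pr4b-ZZ}, so such terms only contribute to the second sum in \eqref{ineq com 2.2}.

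Finally, for \eqref{ineq com 2.5}, I would write $\delu_a\delu_\beta u = \delu_a(\delu_\beta u)$ and apply \eqref{ineq com 2.2} (or \eqref{ineq com 2.3} if $\beta=0$) successively, treating $\delu_\beta u$ as the new function and then commuting $\delu_a$ through, keeping track of at most one $t^{-1}$ factor per commutation. The three types of output terms in \eqref{ineq com 2.5} correspond to: (i) both $\delu$'s surviving unchanged when $|I'|\leq|I|$ and $|J'|<|J|$, (ii) one $t^{-1}$ weight from the $\del^{I'}$ case, and (iii) the residual contribution where a second derivative got replaced by a single $\del$ with an extra $t^{-1}$, which arises when a coefficient of the form $t^{-1}\rho$ is differentiated once. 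The main obstacle will be \textbf{bookkeeping}: carefully tracking which summand produces which weight so that the three separate sums on the right of \eqref{ineq com 2.5} capture all commutator contributions without double-counting, and confirming that no term of the form $t^{-2}\,|\del^{I'}L^{J'}u|$ actually appears (which would be worse than stated). The homogeneity Lemma~\ref{lem com 1} plus the coefficient bounds \eqref{pre lem commutator pr3b-ZZ}--\eqref{pre lem commutator pr4b-ZZ} are exactly what guarantees this does not happen, since every extra derivative of $\rho$ or $\sigma$ costs at most one additional power of $t^{-1}$ and that power is already absorbed in the stated estimates.
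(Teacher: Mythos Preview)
Your proposal is correct and follows essentially the same route as the paper. In the paper these estimates are stated to be ``immediate in view of'' the algebraic decomposition identities \eqref{pre lem commutator pr1-ZZ}--\eqref{pre lem commutator pr4b-ZZ} (with details deferred to Appendix~\ref{appendix-COM}), and your sketch spells out exactly how each of the five bounds is read off from those identities together with the homogeneity bounds on the coefficients; one minor quibble is that a hypothetical $t^{-2}|\del^{I'}L^{J'}u|$ term would be \emph{better}, not worse, than what is stated (and is in any case absorbed since $t\geq 1$ in $\Kcal$).
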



\section{Quasi-Null Structure of the Einstein-Massive Field System on Hyperboloids}
\label{section-3-QUASI}

\subsection{Einstein equations in wave coordinates}
\label{section-3}

Our next task is to derive an explicit expression for the curvature. We set $\Gamma^\gamma := 
g^{\alpha \beta} \Gamma_{\alpha \beta}^\gamma = 0$ and $\Gamma_\alpha := 
g_{\alpha \beta} \Gamma^\beta$. 

\begin{lemma}[Ricci curvature of a $4$-manifold]
\label{lem Ricci}
In arbitrary local coordinates, one has the decomposition:
$$
R_{\alpha \beta} = - \frac{1}{2}g^{\lambda \delta} \del_{\lambda} \del_{\delta}g_{\alpha \beta}
 + \frac{1}{2} \big(\del_{\alpha} \Gamma_{\beta} + \del_{\beta} \Gamma_{\alpha} \big)
  +\frac{1}{2}F_{\alpha \beta},
$$
where $F_{\alpha \beta} := P_{\alpha \beta} + Q_{\alpha \beta} + W_{\alpha \beta}$ is a sum of null terms, that is,
$$
\aligned
Q_{\alpha \beta} :& = 
  g^{\lambda \lambda'}g^{\delta \delta'} \del_{\delta}g_{\alpha \lambda'} \del_{\delta'}g_{\beta \lambda}
-g^{\lambda \lambda'}g^{\delta \delta'} \big
(\del_{\delta}g_{\alpha \lambda'} \del_{\lambda}g_{\beta \delta'} - \del_{\delta}g_{\beta \delta'} \del_{\lambda}g_{\alpha \lambda'} \big)
\\
& \quad + g^{\lambda \lambda'}g^{\delta \delta'}
\big(\del_{\alpha}g_{\lambda'\delta'} \del_{\delta}g_{\lambda \beta} - \del_{\alpha}g_{\lambda \beta} \del_{\delta}g_{\lambda'\delta'} \big)
+\frac{1}{2}g^{\lambda \lambda'}g^{\delta \delta'}
\big(\del_{\alpha}g_{\lambda \beta} \del_{\lambda'}g_{\delta \delta'} - \del_{\alpha}g_{\delta \delta'} \del_{\lambda'}g_{\lambda \beta} \big)
\\
& \quad + g^{\lambda \lambda'}g^{\delta \delta'}
\big(\del_{\beta}g_{\lambda'\delta'} \del_{\delta}g_{\lambda \alpha} - \del_{\beta}g_{\lambda \alpha} \del_{\delta}g_{\lambda'\delta'} \big)+\frac{1}{2}g^{\lambda \lambda'}g^{\delta \delta'}
\big(\del_{\beta}g_{\lambda \alpha} \del_{\lambda'}g_{\delta \delta'} - \del_{\beta}g_{\delta \delta'} \del_{\lambda'}g_{\lambda \alpha} \big),
\endaligned
$$
quasi-null term (as they are called by the authors)
$$
P_{\alpha \beta} := - \frac{1}{2}g^{\lambda \lambda'}g^{\delta \delta'} \del_{\alpha}g_{\delta \lambda'} \del_{\beta}g_{\lambda \delta'}
+\frac{1}{4}g^{\delta \delta'}g^{\lambda \lambda'} \del_{\beta}g_{\delta \delta'} \del_{\alpha}g_{\lambda \lambda'}
$$
and a remainder
$
W_{\alpha \beta} := g^{\delta \delta'} \del_{\delta}g_{\alpha \beta} \Gamma_{\delta'} - \Gamma_{\alpha} \Gamma_{\beta}.
$
\end{lemma}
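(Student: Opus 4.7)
My plan is to start from the standard coordinate expression for the Ricci tensor, namely
$$
R_{\alpha\beta} \;=\; \del_\lambda \Gamma^\lambda_{\alpha\beta} \;-\; \del_\beta \Gamma^\lambda_{\alpha\lambda} \;+\; \Gamma^\lambda_{\alpha\beta}\,\Gamma^\mu_{\lambda\mu} \;-\; \Gamma^\mu_{\alpha\lambda}\,\Gamma^\lambda_{\beta\mu},
$$
expand every Christoffel symbol by
$\Gamma^\lambda_{\alpha\beta} = \tfrac{1}{2}g^{\lambda\sigma}(\del_\alpha g_{\beta\sigma} + \del_\beta g_{\alpha\sigma} - \del_\sigma g_{\alpha\beta})$,
separate the contributions containing two derivatives of $g$ from those quadratic in first derivatives, and then reorganise the two groups separately. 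Symmetrising the second-derivative terms in $(\alpha,\beta)$ is free of charge because $R_{\alpha\beta}$ is symmetric, so I will symmetrise from the outset in order to match the target expression.

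\textbf{Principal part.} In the first step I collect, inside $\del_\lambda \Gamma^\lambda_{\alpha\beta}$ and $\del_\beta \Gamma^\lambda_{\alpha\lambda}$, all terms in which the outer $\del$ hits a derivative of $g$. One checks that exactly one contraction, $-\tfrac{1}{2}g^{\lambda\delta}\del_\lambda\del_\delta g_{\alpha\beta}$, fails to reconstitute either $\Gamma^\gamma$ or a $\del g\cdot\del g$ quadratic term; the remaining second-order contributions reassemble (after symmetrisation in $\alpha\leftrightarrow\beta$) precisely into $\tfrac{1}{2}\bigl(\del_\alpha(g_{\beta\mu}\Gamma^\mu)+\del_\beta(g_{\alpha\mu}\Gamma^\mu)\bigr) = \tfrac{1}{2}(\del_\alpha\Gamma_\beta+\del_\beta\Gamma_\alpha)$ plus quadratic terms obtained by letting $\del_\alpha$ or $\del_\beta$ fall on the factors $g_{\beta\mu}$, $g^{\mu\nu}$ that appear when one lowers/raises indices on $\Gamma^\mu$. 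The identity $\del_\mu g^{\lambda\sigma} = -g^{\lambda\lambda'}g^{\sigma\sigma'}\del_\mu g_{\lambda'\sigma'}$ is applied systematically at this stage.

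\textbf{Quadratic part.} All remaining terms are quadratic in $\del g$, contracted with two copies of $g^{-1}$; I sort them by their index pattern. Terms in which a derivative index $\del_\mu$ is contracted with the up-index of another $\del g$ are combined, using the skew-symmetrisations already familiar from the Klainerman null-form calculus, into the expression $Q_{\alpha\beta}$. Terms in which both free indices $\alpha,\beta$ appear as differentiation indices, so that neither is contracted against the inverse metric (and consequently the classical null structure is unavailable), are isolated as the quasi-null contribution $P_{\alpha\beta}$; here the coefficient $\tfrac14$ in front of $g^{\lambda\lambda'}g^{\delta\delta'}\del_\beta g_{\delta\delta'}\,\del_\alpha g_{\lambda\lambda'}$ comes from the trace-of-$g$ factors generated when reassembling $\Gamma_\alpha,\Gamma_\beta$. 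Finally, the contributions containing an explicit $\Gamma^\mu$ (arising either from $\Gamma^\lambda_{\alpha\beta}\Gamma^\mu_{\lambda\mu}$ or from derivatives of the lowering factors in step two) are written as $W_{\alpha\beta} = g^{\delta\delta'}\del_\delta g_{\alpha\beta}\,\Gamma_{\delta'} - \Gamma_\alpha\Gamma_\beta$.

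\textbf{Main obstacle.} The computation itself is elementary but the bookkeeping is substantial: after the dust settles one has on the order of a dozen distinct $(g^{-1})^2(\del g)^2$ terms, and the content of the lemma is that their sum is \emph{precisely} $\tfrac12(P+Q+W)$ with the stated coefficients. The real difficulty is therefore to carry out the reorganisation without losing any term or double-counting any symmetrisation — in particular, to verify that the pieces assigned to $P_{\alpha\beta}$ are truly those in which both $\alpha$ and $\beta$ sit on derivations (so that the standard null structure is unavailable and the wave-gauge condition must later be invoked), while every remaining quadratic contribution belongs either to $Q_{\alpha\beta}$ (genuine null forms) or to $W_{\alpha\beta}$ (absorbable into the wave-gauge vector). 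No genuinely new idea is needed beyond the derivation of the harmonic-gauge reduced Einstein equations in, e.g., Lindblad--Rodnianski \cite{LR2}; the novelty here is only the explicit split of the residual quadratic terms into $P+Q+W$, which is what the subsequent chapters rely on.
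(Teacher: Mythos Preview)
Your outline is correct and follows essentially the same route as the paper: start from the coordinate Ricci formula, expand the Christoffel symbols, isolate the second-order piece $-\tfrac12 g^{\lambda\delta}\del_\lambda\del_\delta g_{\alpha\beta}$ and the combination $\tfrac12(\del_\alpha\Gamma_\beta+\del_\beta\Gamma_\alpha)$, and then sort the remaining $(g^{-1})^2(\del g)^2$ terms into $P$, $Q$, $W$ by their index patterns, repeatedly using $\del_\mu g^{\lambda\sigma}=-g^{\lambda\lambda'}g^{\sigma\sigma'}\del_\mu g_{\lambda'\sigma'}$ and the identity $g^{\alpha\beta}\del_\alpha g_{\beta\delta}-\tfrac12 g^{\alpha\beta}\del_\delta g_{\alpha\beta}=\Gamma_\delta$. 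The paper carries this out term by term (with several explicit cancellations that you correctly flag as the main bookkeeping hazard), so your sketch and the paper's proof differ only in presentation, not in substance.
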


Let us make some observations based on this lemma. Note that the Einstein equation $R_{\alpha \beta} = 0$ now reads
\be
\Boxt_g h_{\alpha \beta} = P_{\alpha \beta} + Q_{\alpha \beta} + W_{\alpha \beta}
 +  \big(\del_{\alpha} \Gamma_{\beta} + \del_{\beta} \Gamma_{\alpha} \big).
\ee
Furthermore, if the coordinates are assumed to satisfy the wave condition
$
\Gamma^\gamma = 0,
$
so that $\Gamma_{\beta} = 0$ and, by specifying the dependence of the right-hand sides in $(g;\del h)$,
\bel{eq Einstein wave}
\Boxt_{g}g_{\alpha \beta} = P_{\alpha \beta}(g;\del h)  + Q_{\alpha \beta}(g;\del h),
\ee
which is a standard result.

For the Einstein-massive field system
\be
\aligned
G_{\alpha \beta} & =  8\pi T_{\alpha \beta},
\\
T_{\alpha \beta} & =  \del_{\alpha} \phi\del_{\beta} \phi - \frac{1}{2}g_{\alpha \beta} \left(g^{\mu\nu} \del_{\mu} \phi\del_{\nu} \phi + c^2\phi^2\right),
\endaligned
\ee
we obtain
$$
R_{\alpha \beta} = 8\pi\left(\nabla_{\alpha} \phi\nabla_{\beta} \phi + \frac{1}{2}c^2\phi^2g_{\alpha \beta} \right)
$$
and, by the above lemma, the Einstein-massive field system in a wave coordinate system reads
\bel{eq main PDE'}
\aligned
&\Boxt_g g_{\alpha \beta} = P_{\alpha \beta}(g;\del h)  + Q_{\alpha \beta}(g;\del h) - 16\pi\del_{\alpha} \phi\del_{\beta} \phi - 8\pi c^2\phi^2g_{\alpha \beta},
\\
&\Boxt_g \phi - c^2\phi = 0.
\endaligned
\ee

\begin{proof}[Proof of Lemma \ref{lem Ricci}] We need to perform straightforward but very tedious calculations, starting from the definitions
$$
\aligned
R_{\alpha \beta}
& = \del_{\lambda} \Gamma_{\alpha \beta}^{\lambda} - \del_{\alpha} \Gamma_{\beta \lambda}^{\lambda} + \Gamma_{\alpha \beta}^{\lambda} \Gamma_{\lambda \delta}^{\delta} - \Gamma_{\alpha \delta}^{\lambda} \Gamma_{\beta \lambda}^{\delta},
\\
\Gamma_{\alpha \beta}^{\lambda}
& = \frac{1}{2}g^{\lambda \lambda'} \big(\del_{\alpha}g_{\beta \lambda'}
+ \del_{\beta}g_{\alpha \lambda'} - \del_{\lambda'}g_{\alpha \beta} \big).
\endaligned
$$ 
Only the first two terms in the expression $R_{\alpha \beta}$ involves second-order derivatives of the metric, and we focus on those terms first.
In view of
$$
\aligned
&\del_{\lambda} \Gamma_{\alpha \beta}^{\lambda} = - \frac{1}{2}g^{\lambda \delta} \del_{\lambda} \del_{\delta}g_{\alpha \beta} + \frac{1}{2}g^{\lambda \delta} \del_{\lambda} \del_{\alpha}g_{\beta \delta} + \frac{1}{2}g^{\lambda \delta} \del_{\lambda} \del_{\beta}g_{\alpha \delta}
+\frac{1}{2} \del_{\lambda}g^{\lambda \delta} \big(\del_{\alpha}g_{\beta \delta} + \del_{\beta}g_{\alpha \delta} - \del_{\delta}g_{\alpha \beta} \big),
\\
&\del_{\alpha} \Gamma_{\beta \lambda}^{\lambda} = \frac{1}{2} \del_{\alpha} \del_{\beta}g_{\lambda \delta} +\frac{1}{2} \del_{\alpha}g^{\lambda \delta} \del_{\beta}g_{\lambda \delta},
\endaligned
$$
we can write 
\bel{eq pr 4-1-1}
\aligned
\del_{\lambda} \Gamma_{\alpha \beta}^{\lambda} - \del_\alpha \Gamma_{\beta \lambda}^{\lambda}
& =   - \frac{1}{2}g^{\lambda \delta} \del_{\lambda} \del_{\delta}g_{\alpha \beta}
   + \frac{1}{2}g^{\lambda \delta} \del_\alpha \del_{\lambda}g_{\delta \beta}
   +\frac{1}{2}g^{\lambda \delta} \del_{\beta} \del_{\lambda}g_{\delta \alpha}
   - \frac{1}{2}g^{\lambda \delta} \del_\alpha \del_{\beta}g_{\lambda \delta}
\\
& \quad -  \frac{1}{2} \del_{\lambda}g^{\lambda \delta} \del_{\delta}g_{\alpha \beta}
 + \frac{1}{2} \del_{\lambda}g^{\lambda \delta} \del_{\alpha}g_{\beta \delta}
 + \frac{1}{2} \del_{\lambda}g^{\lambda \delta} \del_{\beta}g_{\alpha \delta}
 - \frac{1}{2} \del_{\alpha}g^{\lambda \delta} \del_{\beta}g_{\lambda \delta},
\endaligned
\ee
in which the first line contains second-order terms and the second line contains quadratic products of first-order terms.

Let us next compute the term $\del_{\alpha} \Gamma_{\beta} + \del_{\beta} \Gamma_{\alpha}$ (which appears in our decomposition). We have
$$
\aligned
\Gamma^{\gamma} = g^{\alpha \beta} \Gamma_{\alpha \beta}^{\gamma}
&= \frac{1}{2}g^{\alpha \beta}g^{\gamma \delta} \big(\del_\alpha g_{\beta \delta} + \del_{\beta}g_{\alpha \delta} - \del_{\delta}g_{\alpha \beta} \big)
\\
&=g^{\gamma \delta}g^{\alpha \beta} \del_\alpha g_{\beta \delta} - \frac{1}{2}g^{\alpha \beta}g^{\gamma \delta} \del_{\delta}g_{\alpha \beta}
\endaligned
$$
and, therefore,
$\Gamma_{\lambda} = g_{\lambda \gamma} \Gamma^{\gamma} = g^{\alpha \beta} \del_\alpha g_{\beta \lambda}
- \frac{1}{2}g^{\alpha \beta} \del_{\lambda}g_{\alpha \beta}$, so that,
after differentiating,
$$
\aligned
\del_\alpha \Gamma_{\beta}
& =  \del_\alpha \big(g^{\delta \lambda} \del_{\delta}g_{\lambda \beta} \big)
- \frac{1}{2} \del_\alpha \big(g^{\lambda \delta} \del_{\beta}g_{\lambda \delta} \big)
\\
& =   g^{\delta \lambda} \del_\alpha \del_{\delta}g_{\lambda \beta}
- \frac{1}{2} g^{\lambda \delta} \del_\alpha \del_{\beta}g_{\lambda \delta}
- \frac{1}{2} \del_\alpha g^{\lambda \delta} \del_{\beta}g_{\lambda \delta} + \del_\alpha g^{\delta \lambda} \del_{\delta}g_{\lambda \beta}.
\endaligned
$$
The term of interest is thus found to be
\bel{eq pr 4-1-2}
\aligned
\del_\alpha \Gamma_{\beta} + \del_{\beta} \Gamma_\alpha
& =  g^{\lambda \delta} \del_\alpha \del_{\lambda}g_{\delta \beta}
+ g^{\lambda \delta} \del_{\beta} \del_{\lambda}g_{\delta \alpha}
- g^{\lambda \delta} \del_{\alpha} \del_{\beta}g_{\lambda \delta}
\\
 & \quad +  \del_{\alpha}g^{\lambda \delta} \del_{\delta}g_{\lambda \beta}
  + \del_{\beta}g^{\lambda \delta} \del_{\delta}g_{\lambda \alpha}
  - \frac{1}{2} \del_{\beta}g^{\lambda \delta} \del_{\alpha}g_{\lambda \delta}
  - \frac{1}{2} \del_{\alpha}g^{\lambda \delta} \del_{\beta}g_{\lambda \delta}.
\endaligned
\ee

We observe that the last term in \eqref{eq pr 4-1-2} coincides with the last term in \eqref{eq pr 4-1-1}.
By noting also that the second-order terms in
 $\del_\alpha \Gamma_{\beta} + \del_{\beta} \Gamma_\alpha$
 are exactly three of the (four) second-order terms arising in the expression of
$\del_{\lambda} \Gamma_{\alpha \beta}^{\lambda} - \del_\alpha \Gamma_{\beta \lambda}^{\lambda}$,
 we see that
$$
\aligned
\del_{\lambda} \Gamma_{\alpha \beta}^{\lambda} - \del_\alpha \Gamma_{\beta \lambda}^{\lambda}
& = - \frac{1}{2} g^{\lambda \delta} \del_{\lambda} \del_{\delta}g_{\alpha \beta} + \frac{1}{2} \big(\del_\alpha \Gamma_{\beta} + \del_{\beta} \Gamma_\alpha \big)
\\
& \quad -  \frac{1}{2} \del_{\lambda}g^{\lambda \delta} \del_{\delta}g_{\alpha \beta}
 + \frac{1}{2} \del_{\lambda}g^{\lambda \delta} \del_{\alpha}g_{\beta \delta}
 + \frac{1}{2} \del_{\lambda}g^{\lambda \delta} \del_{\beta}g_{\alpha \delta}
\\
& \quad -  \frac{1}{2} \del_{\alpha}g^{\lambda \delta} \del_{\delta}g_{\lambda \beta}
  - \frac{1}{2} \del_{\beta}g^{\lambda \delta} \del_{\delta}g_{\lambda \alpha}
 - \frac{1}{4} \del_{\alpha}g^{\lambda \delta} \del_{\beta}g_{\lambda \delta}
  +\frac{1}{4} \del_{\beta}g^{\lambda \delta} \del_{\alpha}g_{\lambda \delta}
\\
& = - \frac{1}{2} \del_{\lambda}g^{\lambda \delta} \del_{\delta}g_{\alpha \beta} + \frac{1}{2} \big(\del_\alpha \Gamma_{\beta} + \del_{\beta} \Gamma_\alpha \big)
\\
& \quad +  \frac{1}{2}g^{\lambda \lambda'}g^{\delta \delta'} \del_{\lambda}g_{\lambda'\delta'} \del_{\delta}g_{\alpha \beta}
 - \frac{1}{2}g^{\lambda \lambda'}g^{\delta \delta'} \del_{\lambda}g_{\lambda'\delta'} \del_{\alpha}g_{\beta \delta}
\\
& \quad -  \frac{1}{2}g^{\lambda \lambda'}g^{\delta \delta'} \del_{\lambda}g_{\lambda'\delta'} \del_{\beta}g_{\alpha \delta}
 + \sout{\frac{1}{4}g^{\lambda \lambda'}g^{\delta \delta'} \del_{\alpha}g_{\lambda'\delta'} \del_{\beta}g_{\lambda \delta}}
\\
& \quad +  \frac{1}{2}g^{\lambda \lambda'}g^{\delta \delta'} \del_{\alpha}g_{\lambda'\delta'} \del_{\delta}g_{\lambda \beta}
  +\frac{1}{2}g^{\lambda \lambda'}g^{\delta \delta'} \del_{\beta}g_{\lambda'\delta'} \del_{\delta}g_{\lambda \alpha}
  - \sout{\frac{1}{4}g^{\lambda \lambda'}g^{\delta \delta'} \del_{\beta}g_{\lambda'\delta'} \del_{\alpha}g_{\lambda \delta}},
\endaligned
$$
where we have used the identity
$\del_{\alpha}g^{\lambda \delta} = -g^{\lambda \lambda'}g^{\delta \delta'} \del_{\alpha}g_{\lambda'\delta'}$. Note that
 the two underlined terms above cancel each other. So, the quadratic terms in $\del_{\lambda} \Gamma_{\alpha \beta}^{\lambda} - \del_\alpha \Gamma_{\beta \lambda}^{\lambda}$  are
$$
\aligned
&\frac{1}{2}g^{\lambda \lambda'}g^{\delta \delta'} \del_{\lambda}g_{\lambda'\delta'} \del_{\delta}g_{\alpha \beta},
\quad
- \frac{1}{2}g^{\lambda \lambda'}g^{\delta \delta'} \del_{\lambda}g_{\lambda'\delta'} \del_{\alpha}g_{\beta \delta},
\quad
- \frac{1}{2}g^{\lambda \lambda'}g^{\delta \delta'} \del_{\lambda}g_{\lambda'\delta'} \del_{\beta}g_{\alpha \delta},
 \\
&\frac{1}{2}g^{\lambda \lambda'}g^{\delta \delta'} \del_{\alpha}g_{\lambda'\delta'} \del_{\delta}g_{\lambda \beta},
\quad
\frac{1}{2}g^{\lambda \lambda'}g^{\delta \delta'} \del_{\beta}g_{\lambda'\delta'} \del_{\delta}g_{\lambda \alpha}.
\endaligned
$$

Next, let us return to the expression of the Ricci curvature and consider
$$
\aligned 
\Gamma_{\alpha \beta}^{\lambda} \Gamma_{\lambda \delta}^{\delta}
=
& \frac{1}{4}g^{\lambda \lambda'}g^{\delta \delta'} \big(\del_{\lambda}g_{\delta \delta'} \del_{\alpha}g_{\beta \lambda'} + \del_{\beta}g_{\alpha \lambda'} \del_{\lambda}g_{\delta \delta'} - \del_{\lambda'}g_{\alpha \beta} \del_{\lambda}g_{\delta \delta'} \big),
\\
\Gamma_{\alpha \delta}^{\lambda} \Gamma_{\beta \lambda}^{\delta}
=
& \frac{1}{4}g^{\lambda \lambda'}g^{\delta \delta'} \big(
\del_{\alpha}g_{\delta \lambda'} \del_{\beta}g_{\lambda \delta'} + \del_{\alpha}g_{\delta \lambda'} \del_{\lambda}g_{\beta \delta'} - \del_{\alpha}g_{\delta \lambda'} \del_{\delta'}g_{\beta \lambda}
\\
&
+\del_{\delta}g_{\alpha \lambda'} \del_{\beta}g_{\lambda \delta'} + \del_{\delta}g_{\alpha \lambda'} \del_{\lambda}g_{\beta \delta'} - \del_{\delta}g_{\alpha \lambda'} \del_{\delta'}g_{\beta \lambda}
\\&
- \del_{\lambda'}g_{\alpha \delta} \del_{\beta}g_{\lambda \delta'} - \del_{\lambda'}g_{\alpha \delta} \del_{\lambda}g_{\beta \delta'} + \del_{\lambda'}g_{\alpha \delta} \del_{\delta'}g_{\beta \lambda}
\big)
\endaligned
$$
and deduce that
\bel{eq pr Ricci 1}
\aligned
&\Gamma_{\alpha \beta}^{\lambda} \Gamma_{\lambda \delta}^{\delta} - \Gamma_{\alpha \delta}^{\lambda} \Gamma_{\beta \lambda}^{\delta}
\\
&=- \frac{1}{4}g^{\lambda \lambda'}g^{\delta \delta'} \del_{\lambda'}g_{\alpha \beta} \del_{\lambda}g_{\delta \delta'}
  +\frac{1}{4}g^{\lambda \lambda'}g^{\delta \delta'} \del_{\delta}g_{\alpha \lambda'} \del_{\delta'}g_{\beta \lambda}
  +\frac{1}{4}g^{\lambda \lambda'} \del_{\lambda'}g_{\alpha \delta} \del_{\lambda}g_{\beta \delta'}
\\
& \quad - \frac{1}{4}g^{\lambda \lambda'}g^{\delta \delta'} \del_{\alpha}g_{\delta \lambda'} \del_{\beta}g_{\lambda \delta'}
\\
&\quad +\sout{\frac{1}{4}g^{\lambda \lambda'}g^{\delta \delta'} \del_{\lambda}g_{\delta \delta'} \del_{\alpha}g_{\beta \lambda'}}
 +\sout{\frac{1}{4}g^{\lambda \lambda'}g^{\delta \delta'} \del_{\lambda}g_{\delta \delta'} \del_{\beta}g_{\alpha \lambda'}}
 - \frac{1}{2}g^{\lambda \lambda'}g^{\delta \delta'} \del_{\delta}g_{\alpha \lambda'} \del_{\lambda}g_{\beta \delta'}.
\endaligned
\ee
Observe that the first three terms are null terms, while the fourth term is a quasi-null term.
The two underlined terms are going to cancel out with the two underlined terms in \eqref{eq pr Ricci 2}, derived below.
Hence, there remains only the last term to be treated.

In other words, we need to consider the following six terms:
\bel{eq pr Ricci 3}
\aligned
&\frac{1}{2}g^{\lambda \lambda'}g^{\delta \delta'} \del_{\lambda}g_{\lambda'\delta'} \del_{\delta}g_{\alpha \beta},
\qquad
- \frac{1}{2}g^{\lambda \lambda'}g^{\delta \delta'} \del_{\lambda}g_{\lambda'\delta'} \del_{\alpha}g_{\beta \delta},
\qquad
- \frac{1}{2}g^{\lambda \lambda'}g^{\delta \delta'} \del_{\lambda}g_{\lambda'\delta'} \del_{\beta}g_{\alpha \delta},
 \\
&\frac{1}{2}g^{\lambda \lambda'}g^{\delta \delta'} \del_{\alpha}g_{\lambda'\delta'} \del_{\delta}g_{\lambda \beta},
\qquad
\frac{1}{2}g^{\lambda \lambda'}g^{\delta \delta'} \del_{\beta}g_{\lambda'\delta'} \del_{\delta}g_{\lambda \alpha},
\qquad
- \frac{1}{2}g^{\lambda \lambda'}g^{\delta \delta'} \del_{\delta}g_{\alpha \lambda'} \del_{\lambda}g_{\beta \delta'}.
\endaligned
\ee
In view of the identities
\be
g^{\alpha \beta} \del_{\alpha}g_{\beta \delta} - \frac{1}{2}g^{\alpha \beta} \del_{\delta}g_{\alpha \beta} = \Gamma_{\delta},
\qquad
g_{\beta \delta} \del_{\alpha}g^{\alpha \beta} - \frac{1}{2}g_{\alpha \beta} \del_{\delta}g^{\alpha \beta} = \Gamma_{\delta},
\ee
the first three terms in \eqref{eq pr Ricci 3} can be decomposed as follows:
\bel{eq pr Ricci 2}
\aligned
\frac{1}{2}g^{\lambda \lambda'}g^{\delta \delta'} \del_{\lambda}g_{\lambda'\delta'} \del_{\delta}g_{\alpha \beta}
&=  \frac{1}{2}g^{\delta \delta'} \del_{\delta}g_{\alpha \beta} \Gamma_{\delta'} + \frac{1}{4}g^{\lambda \lambda'}g^{\delta \delta'} \del_{\delta}g_{\alpha \beta} \del_{\delta'}g_{\lambda \lambda'}
\\
- \frac{1}{2}g^{\lambda \lambda'}g^{\delta \delta'} \del_{\lambda}g_{\lambda'\delta'} \del_{\alpha}g_{\beta \delta}
&= - \frac{1}{2}g^{\delta \delta'} \del_{\alpha}g_{\beta \delta} \Gamma_{\delta'} - \sout{\frac{1}{4}g^{\lambda \lambda'}g^{\delta \delta'} \del_{\delta'}g_{\lambda \lambda'} \del_{\alpha}g_{\beta \delta}}
\\
- \frac{1}{2}g^{\lambda \lambda'}g^{\delta \delta'} \del_{\lambda}g_{\lambda'\delta'} \del_{\beta}g_{\alpha \delta}
&=- \frac{1}{2}g^{\delta \delta'} \del_{\beta}g_{\alpha \delta} \Gamma_{\delta'} - \sout{\frac{1}{4}g^{\lambda \lambda'}g^{\delta \delta'} \del_{\delta'}g_{\lambda \lambda'} \del_{\beta}g_{\alpha \delta}}. 
\endaligned
\ee
The last term in the first line is one of the quasi-null term stated in the proposition.
As mentioned earlier, the two underlined terms cancel out with the two underlined terms in \eqref{eq pr Ricci 1}. 
The fourth term in \eqref{eq pr Ricci 3} is treated as follows:
$$
\aligned
&\hskip-.3cm \frac{1}{2}g^{\lambda \lambda'}g^{\delta \delta'} \del_{\alpha}g_{\lambda'\delta'} \del_{\delta}g_{\lambda \beta}
\\
& = \frac{1}{2}g^{\lambda \lambda'}g^{\delta \delta'}
\big(\del_{\alpha}g_{\lambda'\delta'} \del_{\delta}g_{\lambda \beta} - \del_{\alpha}g_{\lambda \beta} \del_{\delta}g_{\lambda'\delta'} \big)
+\frac{1}{2}g^{\lambda \lambda'}g^{\delta \delta'} \del_{\alpha}g_{\lambda \beta} \del_{\delta}g_{\lambda'\delta'}
\\
& = \frac{1}{2}g^{\lambda \lambda'}g^{\delta \delta'}
\big(\del_{\alpha}g_{\lambda'\delta'} \del_{\delta}g_{\lambda \beta} - \del_{\alpha}g_{\lambda \beta} \del_{\delta}g_{\lambda'\delta'} \big)
+\frac{1}{2}g^{\lambda \lambda'} \del_{\alpha}g_{\lambda \beta} \Gamma_{\lambda'}
+\frac{1}{4}g^{\lambda \lambda'}g^{\delta \delta'} \del_{\alpha}g_{\lambda \beta} \del_{\lambda'}g_{\delta \delta'}
\\
& = \frac{1}{2}g^{\lambda \lambda'}g^{\delta \delta'}
\big(\del_{\alpha}g_{\lambda'\delta'} \del_{\delta}g_{\lambda \beta} - \del_{\alpha}g_{\lambda \beta} \del_{\delta}g_{\lambda'\delta'} \big)
+\frac{1}{4}g^{\lambda \lambda'}g^{\delta \delta'}
\big(\del_{\alpha}g_{\lambda \beta} \del_{\lambda'}g_{\delta \delta'} - \del_{\alpha}g_{\delta \delta'} \del_{\lambda'}g_{\lambda \beta} \big)
\\
& \quad + \frac{1}{2}g^{\lambda \lambda'} \del_{\alpha}g_{\lambda \beta} \Gamma_{\lambda'}
+ \frac{1}{4}g^{\lambda \lambda'}g^{\delta \delta'} \del_{\alpha}g_{\delta \delta'} \del_{\lambda'}g_{\lambda \beta}
\\
& = \frac{1}{2}g^{\lambda \lambda'}g^{\delta \delta'}
\big(\del_{\alpha}g_{\lambda'\delta'} \del_{\delta}g_{\lambda \beta} - \del_{\alpha}g_{\lambda \beta} \del_{\delta}g_{\lambda'\delta'} \big)
+\frac{1}{4}g^{\lambda \lambda'}g^{\delta \delta'}
\big(\del_{\alpha}g_{\lambda \beta} \del_{\lambda'}g_{\delta \delta'} - \del_{\alpha}g_{\delta \delta'} \del_{\lambda'}g_{\lambda \beta} \big)
\\
& \quad + \frac{1}{2}g^{\lambda \lambda'} \del_{\alpha}g_{\lambda \beta} \Gamma_{\lambda'}
+\frac{1}{4}g^{\delta \delta'} \del_{\alpha}g_{\delta \delta'} \Gamma_{\beta}
+\frac{1}{8}g^{\delta \delta'}g^{\lambda \lambda'} \del_{\alpha}g_{\delta \delta'} \del_{\beta}g_{\lambda \lambda'},
\endaligned
$$
while, for the fifth term, we have
$$
\aligned
&\hskip-.3cm \frac{1}{2}g^{\lambda \lambda'}g^{\delta \delta'} \del_{\beta}g_{\lambda'\delta'} \del_{\delta}g_{\lambda \alpha}
\\
& = \frac{1}{2}g^{\lambda \lambda'}g^{\delta \delta'}
\big(\del_{\beta}g_{\lambda'\delta'} \del_{\delta}g_{\lambda \alpha} - \del_{\beta}g_{\lambda \alpha} \del_{\delta}g_{\lambda'\delta'} \big)
+\frac{1}{4}g^{\lambda \lambda'}g^{\delta \delta'}
\big(\del_{\beta}g_{\lambda \alpha} \del_{\lambda'}g_{\delta \delta'} - \del_{\beta}g_{\delta \delta'} \del_{\lambda'}g_{\lambda \alpha} \big)
\\
& \quad + \frac{1}{2}g^{\lambda \lambda'} \del_{\beta}g_{\lambda \alpha} \Gamma_{\lambda'}
+\frac{1}{4}g^{\delta \delta'} \del_{\beta}g_{\delta \delta'} \Gamma_{\alpha}
+\frac{1}{8}g^{\delta \delta'}g^{\lambda \lambda'} \del_{\beta}g_{\delta \delta'} \del_{\alpha}g_{\lambda \lambda'}.
\endaligned
$$
For the last term in \eqref{eq pr Ricci 3}, we perform the following calculation:
$$
\aligned
&\hskip-.35cm - \frac{1}{2}g^{\lambda \lambda'}g^{\delta \delta'} \del_{\delta}g_{\alpha \lambda'} \del_{\lambda}g_{\beta \delta'}
\\
& = - \frac{1}{2}g^{\lambda \lambda'}g^{\delta \delta'} \big(\del_{\delta}g_{\alpha \lambda'} \del_{\lambda}g_{\beta \delta'} - \del_{\delta}g_{\beta \delta'} \del_{\lambda}g_{\alpha \lambda'} \big)
- \frac{1}{2}g^{\lambda \lambda'}g^{\delta \delta'} \del_{\delta}g_{\beta \delta'} \del_{\lambda}g_{\alpha \lambda'}
\\
& = - \frac{1}{2}g^{\lambda \lambda'}g^{\delta \delta'} \big(\del_{\delta}g_{\alpha \lambda'} \del_{\lambda}g_{\beta \delta'} - \del_{\delta}g_{\beta \delta'} \del_{\lambda}g_{\alpha \lambda'} \big)
- \frac{1}{2}g^{\lambda \lambda'} \del_{\lambda}g_{\alpha \lambda'} \Gamma_{\beta}
 - \frac{1}{4}g^{\lambda \lambda'}g^{\delta \delta'} \del_{\beta}g_{\delta \delta'} \del_{\lambda}g_{\alpha \lambda'}
\\
& = - \frac{1}{2}g^{\lambda \lambda'}g^{\delta \delta'} \big
(\del_{\delta}g_{\alpha \lambda'} \del_{\lambda}g_{\beta \delta'} - \del_{\delta}g_{\beta \delta'} \del_{\lambda}g_{\alpha \lambda'} \big)
- \frac{1}{2}g^{\lambda \lambda'} \del_{\lambda}g_{\alpha \lambda'} \Gamma_{\beta}
- \frac{1}{4}g^{\delta \delta'} \del_{\beta}g_{\delta \delta'} \Gamma_{\alpha}
\\
& \quad -  \frac{1}{8}g^{\lambda \lambda'}g^{\delta \delta'} \del_{\alpha}g_{\lambda \lambda'} \del_{\beta}g_{\delta \delta'}
\\
& = - \frac{1}{2}g^{\lambda \lambda'}g^{\delta \delta'} \big
(\del_{\delta}g_{\alpha \lambda'} \del_{\lambda}g_{\beta \delta'} - \del_{\delta}g_{\beta \delta'} \del_{\lambda}g_{\alpha \lambda'} \big)
- \frac{1}{2} \Gamma_{\alpha} \Gamma_{\beta}
- \frac{1}{4}g^{\delta \delta'} \del_{\alpha}g_{\delta \delta'} \Gamma_{\beta}
- \frac{1}{4}g^{\delta \delta'} \del_{\beta}g_{\delta \delta'} \Gamma_{\alpha}
\\
& \quad -  \frac{1}{8}g^{\lambda \lambda'}g^{\delta \delta'} \del_{\alpha}g_{\lambda \lambda'} \del_{\beta}g_{\delta \delta'}.
\endaligned
$$

In conclusion, the quadratic terms in $R_{\alpha \beta}$ read
$$
\aligned
& \frac{1}{2}g^{\lambda \lambda'}g^{\delta \delta'} \del_{\delta}g_{\alpha \lambda'} \del_{\delta'}g_{\beta \lambda}
\\
& \quad -  \frac{1}{2}g^{\lambda \lambda'}g^{\delta \delta'} \big
(\del_{\delta}g_{\alpha \lambda'} \del_{\lambda}g_{\beta \delta'} - \del_{\delta}g_{\beta \delta'} \del_{\lambda}g_{\alpha \lambda'} \big)
\\
& \quad + \frac{1}{2}g^{\lambda \lambda'}g^{\delta \delta'}
\big(\del_{\alpha}g_{\lambda'\delta'} \del_{\delta}g_{\lambda \beta} - \del_{\alpha}g_{\lambda \beta} \del_{\delta}g_{\lambda'\delta'} \big)
+\frac{1}{4}g^{\lambda \lambda'}g^{\delta \delta'}
\big(\del_{\alpha}g_{\lambda \beta} \del_{\lambda'}g_{\delta \delta'} - \del_{\alpha}g_{\delta \delta'} \del_{\lambda'}g_{\lambda \beta} \big)
\\
& \quad + \frac{1}{2}g^{\lambda \lambda'}g^{\delta \delta'}
\big(\del_{\beta}g_{\lambda'\delta'} \del_{\delta}g_{\lambda \alpha} - \del_{\beta}g_{\lambda \alpha} \del_{\delta}g_{\lambda'\delta'} \big)
+\frac{1}{4}g^{\lambda \lambda'}g^{\delta \delta'}
\big(\del_{\beta}g_{\lambda \alpha} \del_{\lambda'}g_{\delta \delta'} - \del_{\beta}g_{\delta \delta'} \del_{\lambda'}g_{\lambda \alpha} \big)
\\
& \quad -  \frac{1}{4}g^{\lambda \lambda'}g^{\delta \delta'} \del_{\alpha}g_{\delta \lambda'} \del_{\beta}g_{\lambda \delta'}
+\frac{1}{8}g^{\delta \delta'}g^{\lambda \lambda'} \del_{\beta}g_{\delta \delta'} \del_{\alpha}g_{\lambda \lambda'}
\\ 
& \quad +  \frac{1}{2}g^{\delta \delta'} \del_{\delta}g_{\alpha \beta} \Gamma_{\delta'}
- \frac{1}{2} \Gamma_{\alpha} \Gamma_{\beta}. 
\endaligned
$$
Finally, collecting all the terms above and observing that several cancellations take place, we arrive at the desired identity.
\end{proof}


\subsection{Analysis of the support} 
\label{subsec proof-localization}

We provide here a proof of Proposition \ref{prop basic-localization}. 

\vskip.3cm

\noindent {\bf Step I.}
We recall the structure of $F_{\alpha \beta}$ presented in Lemma \ref{lem Ricci}. We observe that both $P_{\alpha \beta}$ and $Q_{\alpha \beta}$ are linear combinations of the multi-linear terms which are product of a quadratic term in $g^{\alpha \beta}$ and a quadratic term in $\del g_{\alpha \beta}$. For convenience, we write 
$
F_{\alpha \beta} = F_{\alpha \beta}(g,g;\del g, \del g)
$
and 
$$
p_{\alpha \beta}(t, x): = \big({\gSch}_{\alpha \beta}-m_{\alpha \beta} \big)(t, x) \xi(t-r) + m_{\alpha \beta}, 
$$
where $\xi$ is a smooth function defined on $\RR$, with $\xi (r)=1$ for $r \leq 1$, while $\xi(r) = 0$ for $r \geq 3/2$. 
Hence, for $r\geq t-1$, $p_{\alpha \beta}$ coincides with the Schwarzschild metric while $r\leq t-3/2$, $p_{\alpha \beta}$ coincides with the Minkowski metric.  We also set 
\be
q_{\alpha \beta} := g_{\alpha \beta} - p_{\alpha \beta}.
\ee
So the desired result is equivalent to the following statement: {\sl If $(g_{\alpha \beta}, \phi)$ is a solution of \eqref{eq main PDE'} associated with a compact Schwarzschild perturbation, then the tensor $q_{\alpha \beta}$ above is supported in $\Kcal$.}

To establish this result, we write down the equation satisfied by $q_{\alpha \beta}$ and introduce 
$$
\aligned
(p^{\alpha \beta}) :=& (p_{\alpha \beta})^{-1}, 
\quad 
\\
q^{\alpha \beta} :=& g^{\alpha \beta} - p^{\alpha \beta} = (p_{\alpha'\beta'}-g_{\alpha'\beta'})p^{\alpha'\beta}g_{\alpha \beta'}=q_{\alpha'\beta'}p^{\alpha'\beta}g^{\alpha \beta'}.
\endaligned
$$
We observe that for $r\geq t-1$, when $q_{\alpha \beta}(t, x) =0$, then $q^{\alpha \beta}(t, x) = 0$.
In view of 
$$
\Boxt_gg_{\alpha \beta} = F_{\alpha \beta}(g,g, \del g, \del g) -16\pi\del_{\alpha} \phi\del_{\beta} \phi - 8\pi c^2\phi^2 g_{\alpha\beta},
$$
we have 
$$
\Boxt_{p+q}(p_{\alpha \beta} + q_{\alpha \beta})
= F_{\alpha \beta} \big(p+q,p+q, \del (p+q), \del (p+q) \big) -16\pi\del_{\alpha} \phi\del_{\beta} \phi - 8\pi c^2\phi^2 g_{\alpha\beta}. 
$$
By multi-linearity, the above equation leads us to
\bel{eq 1 pr prop basic-localization}
\aligned
\Boxt_p q_{\alpha \beta}  & =  - \Boxt_pp_{\alpha \beta} + F_{\alpha \beta} \big(p,p, \del p, \del p\big)
\\
& \quad + F_{\alpha \beta} \big(p,p, \del p, \del q\big) + F_{\alpha \beta} \big(p,p, \del q, \del(p+q) \big)
\\
& \quad + F_{\alpha \beta} \big(p,q, \del(p+q), \del(p+q) \big) + F_{\alpha \beta} \big(q,p+q, \del(p+q), \del(p+q) \big)
\\
& \quad -  q^{\mu\nu} \del_{\mu} \del_{\nu} \big(p_{\alpha \beta}+q_{\alpha \beta} \big)  -16\pi \del_{\alpha} \phi\del_{\beta} \phi - 8\pi c^2\phi^2 g_{\alpha\beta}.
\endaligned
\ee
Observe that for $r\geq t-1$, $p_{\alpha \beta} = \big({\gSch}_{\alpha \beta}-m_{\alpha \beta} \big) \xi(t-r) + m_{\alpha \beta}$ coincides with the Schwarzschild metric, which is a solution to the Einstein equation (in the wave gauge), so for $r\geq t-1$we have 
$
\Boxt_pp_{\alpha \beta} = F_{\alpha \beta}(p,p, \del p, \del p).
$
Setting
$
E_{\alpha \beta} = - \Boxt_pp_{\alpha \beta} + F_{\alpha \beta} \big(p,p, \del p, \del p\big)
$, 
we have obtained $E_{\alpha \beta} = 0$ for $r\geq t-1$.

Then we also observe that the third to the sixth terms are multi-linear terms, each of them contain $q$ or $\del q$ as a factor. Furthermore, we observe that the seventh term is written as
$$
- q^{\mu\nu} \del_{\mu} \del_{\nu} \big(p_{\alpha \beta}+q_{\alpha \beta} \big)
= -q_{\mu'\nu'}p^{\mu'\nu}g^{\mu\nu'} \del_{\mu} \del_{\nu} \big(p_{\alpha \beta}+q_{\alpha \beta} \big)
$$
So, the third to the seventh terms can be written in the form
$$
\del q\cdot G_1(p, \del p, q, \del q) + q\cdot G_2(p, \del p, \del\del p,q, \del q), 
$$
where $G_i$ are (sufficiently regular) multi-linear forms. 

For the equation of $\phi$, we have the decomposition 
$$
\Boxt_g\phi  = \Box_p\phi + \Boxt_q\phi = \Boxt_p\phi + q_{\mu'\nu'}p^{\mu'\nu}g^{\mu\nu'} \del_{\mu} \del_{\nu} \phi.
$$
We conclude that the metric $q_{\alpha \beta}$  satisfies 
\bel{eq 2 pr prop basic-localization}
\aligned
&\Boxt_p q_{\alpha \beta} = E_{\alpha \beta}
+ \del q\cdot G_1(p, \del p, q, \del q) + q\cdot G_2(p, \del p, \del\del p,q, \del q)
-16\pi \del_{\alpha} \phi\del_{\beta} \phi - 8\pi c^2\phi^2 g_{\alpha\beta},
\\
&\Boxt_p\phi-c^2\phi = -q_{\mu'\nu'}p^{\mu'\nu}g^{\mu\nu'} \del_{\mu} \del_{\nu} \phi.
\endaligned
\ee
Furthermore, observe that since $(g, \phi)$ describes a compact Schwarzschild perturbation, the restriction of both $q_{\alpha \beta}$ and $\phi$ on the hyperplane $\{t=2\}$ are compactly supported in the unit ball $\{r\leq 1\}$. Thus, $(q_{\alpha \beta}, \phi)$ is a regular solution to the linear wave system \eqref{eq 2 pr prop basic-localization} with initial data
$$
\aligned
q_{\alpha \beta}(2, x), \quad \phi(2, x) \qquad \text{ supported in the ball } \{r\leq 1\}.
\endaligned
$$
We want to prove that $(q_{\alpha \beta})$ and $\phi$ vanish outside $\Kcal$. This leads us to the analysis on the
domain of determinacy associated with the metric $p^{\alpha \beta}$, which is determined by the characteristics the operator $\Boxt_p$. 

\vskip.15cm

\noindent{\bf Step II. Characteristics of $\Boxt_p$.} We now analyze the domain of determinacy of a spacetime point $(t, x) \notin \Kcal$. We will prove that all characteristics passing this point do not intersect the domain $\Kcal\cap\{t\geq 2\}$. Once this is proved, we apply the standard argument on domain of determinacy (also observe that $E_{\alpha \beta}(t, x)$ vanishes outside $\Kcal$), we conclude that $q_{\alpha \beta}$ and $\phi$ vanish outside $\Kcal$.

To do so, we will prove that the  
boundary of $\Kcal$ is strictly spacelike with respect to the metric $p^{\alpha \beta}$. We observe that any vector $v$ tangent to $\{r=t-1\}$ at point $(t, x)$ satisfies  
$
v^0 = \frac{1}{r} \sum_a x^av^a = \omega_a v^a.
$
So, in view of \eqref{eq Sch-wave}, we have for all $|v|>0$
$$
\aligned
(v,v)_p(t, x) & =  (v,v)_{\gSch} =  v^0v^0g_{00} + v^av^bg_{ab}
\\
 & =  - \frac{r-m_S}{r+m_S} \omega_av^a \omega_bv^b + \omega_av^a \omega_bv^b\left(\frac{r+m_S}{r-m_S} - \frac{(r+m_S)^2}{r^2} \right) + \sum_a|v^a|^2
 \\
 & =  - \left(\frac{r-m_S}{r+m_S}- \frac{r+m_S}{r-m_S}+\frac{(r+m_S)^2}{r^2} \right) \omega_av^a \omega_bv^b + \sum_a|v^a|^2
 \\
& \geq  \left(1- \left(\frac{r+m_S}{r-m_S}- \frac{r-m_S}{r+m_S}+\frac{r^2}{(r+m_S)^2} \right) \omega_av^a \omega_bv^b\right) \sum_a|v^a|^2
 \\
 & = \frac{3r^2m_S+4rm_S^2 +m_S^3}{(r+m_S)^2(r-m_S)} \sum_a|v^a|^2>0, 
\endaligned
$$
where we have used  
$
|\omega_av^a| \leq |v| = \big(\sum_a|v^a|^2\big)^{1/2}.
$

A characteristic curve is a null curve, so a characteristic passing through $(t, x)$ with $r\geq t-1$ cannot intersect the  
boundary $\{r=t-1\}$ in the past direction (since $(t, x)$ is already in the past of  $\{r=t-1\}$). Hence, a characteristic passing through $(t, x)$ never intersects the region $\Kcal$ in the past direction, which leads to the conclusion that the domain of determinacy of $(t, x)$ does not intersect $\Kcal$ and, therefore, does not intersect $\{t=2,r\leq t-1\}$. We conclude that $q_{\alpha \beta}(t, x) =\phi(t, x) =0$.
 

\subsection{A classification of nonlinearities in the Einstein-massive field system}

First, we introduce a class of functions of particular interest.

\begin{definition} \label{def 1}
A {\rm smooth and homogeneous function} (defined in $\{r < t\}$) of degree $\alpha$ is, by definition, 
 a smooth function $\Phi$ defined in $\{r < t\}$ at least and satisfying 
\begin{itemize}

\item $\Phi(\lambda t, \lambda x) = \lambda^{\alpha} \Phi(t, x)$, for a fixed $\alpha \in \RR$ and for all $\lambda>0$,

\item $\sup_{|x| \leq 1}|\del^I\Phi(1, x)|< + \infty$ (for large enough $| I |$). 
\end{itemize}
\end{definition}

For instance, constant functions are smooth and homogeneous functions of degree $0$. We also observe that the elements of the transition matrix $\Phi_{\alpha}^{\beta}$ are smooth and homogeneous of degree $0$.

\begin{lemma}
If $\Phi$ is a smooth and homogeneous function defined in $\{r\leq t\}$ of degree $\alpha$, then there exists a constant $C$ determined by $\Phi$ and $N$ such that
$$
|\del^IL^J \Phi(t, x)| \leq Ct^{\alpha - |I|}.  
$$
Furthermore, if $\Phi$ and $\Psi$ are smooth and homogenous functions of degree $\alpha$ and $\beta$,  respectively, then the product $\Phi \, \Psi$ is smooth and homogeneous of degree $(\alpha+\beta)$.
\end{lemma}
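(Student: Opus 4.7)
The plan is to establish the derivative bound by combining two structural observations: partial derivatives lower the homogeneity degree by one, while Lorentz boosts preserve it. Then a simple scaling argument reduces everything to a supremum on the compact ball $\{|x|\leq 1\}$ at time $t=1$, which is finite by the smoothness assumption in Definition~\ref{def 1}.

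\medskip

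First I would verify the two key commutation/scaling properties. Differentiating the identity $\Phi(\lambda t,\lambda x)=\lambda^{\alpha}\Phi(t,x)$ in $t$ and $x^a$ shows that $\del_{0}\Phi$ and $\del_{a}\Phi$ are smooth and homogeneous of degree $\alpha-1$. On the other hand, evaluating
$L_{a}\Phi=x^{a}\del_{t}\Phi+t\del_{a}\Phi$
at $(\lambda t,\lambda x)$ and using the previous relation,
\[
(L_{a}\Phi)(\lambda t,\lambda x)=\lambda x^{a}\lambda^{\alpha-1}\del_{t}\Phi(t,x)+\lambda t\,\lambda^{\alpha-1}\del_{a}\Phi(t,x)=\lambda^{\alpha}(L_{a}\Phi)(t,x),
\]
so $L_{a}\Phi$ is again smooth and homogeneous of degree $\alpha$. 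Moreover, the vector fields $L_{a}$ have smooth coefficients, so $L^{J}\Phi$ remains smooth. By induction on $|I|+|J|$, the function $\del^{I}L^{J}\Phi$ is smooth and homogeneous of degree $\alpha-|I|$.

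\medskip

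Next I would fix any point $(t,x)$ in the domain (with $r<t$ so that $|x/t|<1$) and apply the homogeneity identity to $\del^{I}L^{J}\Phi$ with the scaling parameter $\lambda=1/t$, giving
\[
(\del^{I}L^{J}\Phi)(t,x) \;=\; t^{\alpha-|I|}\,(\del^{I}L^{J}\Phi)(1,x/t).
\]
Since $|x/t|\leq 1$ and, by Definition~\ref{def 1}, the derivatives of $\Phi$ are bounded on $\{|y|\leq 1\}$ at $t=1$, the quantity $(\del^{I}L^{J}\Phi)(1,\cdot)$ is bounded by a constant $C=C(\Phi,N)$ on this set (here one uses that $L^{J}$ is a polynomial-coefficient combination of Cartesian derivatives, so bounds on $\del^{K}\Phi(1,\cdot)$ for $|K|\leq |I|+|J|$ are enough). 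This yields the desired estimate $|\del^{I}L^{J}\Phi(t,x)|\leq Ct^{\alpha-|I|}$.

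\medskip

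For the second assertion, the product formula is immediate:
\[
(\Phi\Psi)(\lambda t,\lambda x)=\Phi(\lambda t,\lambda x)\,\Psi(\lambda t,\lambda x)=\lambda^{\alpha+\beta}\Phi(t,x)\Psi(t,x),
\]
smoothness of $\Phi\Psi$ is automatic, and the required bound on derivatives of $\Phi\Psi$ at $t=1$ on $\{|x|\leq 1\}$ follows from the Leibniz rule applied to the bounds for $\Phi$ and $\Psi$ separately. There is no genuine obstacle in this lemma; the only point requiring a line of care is the verification that $L_a$ preserves homogeneity (which is what makes this framework compatible with boosts throughout the Monograph), and that $L^{J}$ applied to a smooth function remains smooth with bounded derivatives on compact subsets of $\{|x|\leq 1\}$ at $t=1$.
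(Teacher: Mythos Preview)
Your proof is correct and follows essentially the same approach as the paper: both verify that $\del_\alpha$ lowers the homogeneity degree by one while $L_a$ preserves it, and then conclude by induction together with the boundedness assumption on $\{|x|\leq 1\}$ at $t=1$. Your version is in fact more explicit than the paper's, which simply writes ``By induction, we get the desired estimate'' without spelling out the scaling step $\lambda=1/t$.
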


\begin{proof} Observe that if $\Phi$ is homogeneous of degree $\alpha$, then
$
\Phi(\lambda t, \lambda x) = \lambda^{\alpha} \Phi(t, x). 
$
We differentiate the above equation with respect to $x^a$: 
$
\lambda \del_a \Phi(\lambda t, \lambda x) = \lambda^{\alpha} \del_a \Phi(t, x), 
$
which leads to
$
\del_a \Phi(\lambda t, \lambda x) = \lambda^{\alpha-1} \del_a \Phi(t, x).
$
In the same way, we obtain 
$
\del_t\Phi(\lambda t, \lambda x) = \lambda^{\alpha-1} \del_t\Phi(t, x).
$
For $L_a$, we have 
$$
\aligned
L_a \Phi(\lambda t, \lambda x) & =  (\lambda x^a) \del_t\Phi(\lambda t, \lambda x) + (\lambda t) \del_a \Phi(\lambda t, \lambda x)
\\
& =  (\lambda x^a) \lambda^{\alpha-1} \del_t\Phi(t, x) + (\lambda t) \lambda^{\alpha-1} \del_a \Phi(t, x)
  \lambda^{\alpha}L_a \Phi(t, x).
\endaligned
$$
We conclude that, after differentiation by $\del_{\alpha}$, the degree of a homogeneous function will be reduced by one while when derived by $L_a$ the degree does not change. By induction, we get the desired estimate. Finally, we observe that the relation between homogeneity and multiplication is trivial.
\end{proof}

In the following, the nonlinear terms such as $F_{\alpha \beta}$ and $[\del^IL^J,h^{\mu\nu} \del_\mu\del_\nu]h_{\alpha \beta}$ are expressed as linear combinations of some basic nonlinear terms (presented below) with smooth and homogeneous coefficients of non-positive degrees. We provide first a general classification of such nonlinear terms:
\begin{itemize}

\item
${\sl QS}_h(p,k)$ refers to at most {\sl $p$-order quadratic semi-linear terms in $h_{\alpha \beta}$.} They are linear combinations of the following terms with smooth and homogeneous coefficients of degree $\leq 0$:
$$
\del^IL^J\big(\del_{\mu}h_{\alpha \beta} \del_{\nu}h_{\alpha'\beta'} \big)
$$
with $|I|+|J| \leq p, |J| \leq k$.

\item
${\sl QS}_\phi(p,k)$ refers to {\sl $p$-order quadratic semi-linear terms in $\phi$.} They are linear combinations of the following terms with smooth and homogeneous coefficients of degree $\leq 0$:
$$
\del^IL^J\big(\del_{\mu} \phi\del_{\nu} \phi\big), \quad \del^IL^J(\phi^2 g_{\mu\nu})
$$
with
$
|I|+|J| \leq  p, |J| \leq k.
$

\item
${\sl QQ}_{hh}(p,k)$ refers to {\sl $p$-order quadratic quasi-linear terms in $h$,} which arise from the expression $[\del^IL^J,h^{\mu\nu} \del_{\mu} \del_{\nu}]h_{\alpha \beta}$. They are linear combinations of the following terms with smooth and homogeneous coefficients of degree $\leq 0$:
$$
\aligned 
\del^{I_1}L^{J_1}h_{\alpha'\beta'} \del^{I_2}L^{J_2} \del_{\mu} \del_{\nu}h_{\alpha \beta},
\qquad
h_{\alpha'\beta'} \del_{\mu} \del_{\nu} \del^IL^{J'}h_{\alpha \beta}
\endaligned
$$
with $|I_1|+|I_2| \leq p-k$, $|J_1|+|J_2| \leq k$ and $|I_2|+|J_2| \leq p-1$ and $|J'|<|J|$.

\item
${\sl QQ}_{h\phi}(p,k)$ refers to {\sl $p$-order quadratic quasi-linear terms in $h$ and $\phi$.} These terms come from the commutator $[\del^IL^J,h^{\mu\nu} \del_\mu\del_\nu]\phi$. They are linear combination of the following terms with smooth and homogeneous coefficients of degree $\leq 0$:
$$
\aligned 
\del^{I_1}L^{J_1}h_{\alpha'\beta'} \del^{I_2}L^{J_2} \del_{\mu} \del_{\nu} \phi,
\qquad
h_{\alpha'\beta'} \del_{\mu} \del_{\nu} \del^IL^{J'} \phi
\endaligned
$$
with $|I_1|+|I_2| \leq p-k$, $|J_1|+|J_2| \leq k$ and $|I_2|+|J_2| \leq p-1$, $|J'|<|J|$.
\end{itemize}

\noindent Next, we provide a list of ``good'' nonlinear terms:
\begin{itemize}
\item
${\sl Cub}(p,k)$ refers to {\sl higher-order terms of at least cubic order}, {\sl except} the cubic term $h_{\alpha \beta}h_{\gamma \delta}h_{\mu\nu}$ which does not appear in our system. This class covers all cubic terms of interest, in view of the structure of the system under consideration. Moreover, these terms are ``negligible'' as far as the analysis of global existence is concerned.

\item
${\sl GQS}_h(p,k)$ refers to {\sl ``good'' quadratic semi-linear terms in $\del h$,} that are linear combinations of the following terms with smooth and homogeneous coefficients of degree $\leq 0$:
$$
\aligned
&\del^IL^J\big(\delu_a h_{\alpha \beta} \delu_{\gamma}h_{\alpha'\beta'} \big), \quad
(s/t)^2\del^IL^J\big(\del_th_{\alpha \beta} \del_th_{\alpha'\beta'} \big)
\endaligned
$$
with $|I|+|J| \leq p$ and $|J| \leq k$.

\item
${\sl GQQ}_{hh}(p,k)$ refers to {\sl ``good'' quadratic quasi-linear terms,} that are linear combinations of the following terms with smooth and homogeneous coefficients of degree $\leq 0$:
$$
\aligned
&\del^{I_1}L^{J_1}h_{\alpha'\beta'} \del^{I_2}L^{J_2} \delu_a \delu_{\mu}h_{\alpha \beta}, \quad
 &&\del^{I_1}L^{J_1}h_{\alpha'\beta'} \del^{I_2}L^{J_2} \delu_{\mu} \delu_bh_{\alpha \beta}, 
\\
&h_{\alpha'\beta'} \del^IL^{J'} \delu_a \delu_{\mu}h_{\alpha \beta}, \quad
 &&h_{\alpha'\beta'} \del^IL^{J'} \delu_{\mu} \delu_bh_{\alpha \beta}
\endaligned
$$
with $|I_1|+|I_2| \leq p-k$, $|J_1|+|J_2| \leq k$ and $|I_2|+|J_2| \leq p-1$, $|J'|<|J|$.

\item
${\sl GQQ}_{h\phi}(p,k)$ refers to {\sl ``good'' quadratic quasi-linear terms,} that are linear combinations of the following terms with smooth and homogeneous coefficients of degree $\leq 0$:
$$
\aligned
&\del^{I_1}L^{J_1}h_{\alpha'\beta'} \del^{I_2}L^{J_2} \delu_a \delu_{\mu} \phi, \quad
&& \del^{I_1}L^{J_1}h_{\alpha'\beta'} \del^{I_2}L^{J_2} \delu_{\mu} \delu_b\phi, 
\\
&h_{\alpha'\beta'} \del^{I}L^{J'} \delu_a \delu_{\mu} \phi, \quad
 &&h_{\alpha'\beta'} \del^{I}L^{J'} \delu_{\mu} \delu_b\phi
\endaligned
$$
with $|I_1|+|I_2| \leq |I|= p-k$, $|J_1|+|J_2| \leq k$ and $|I_2|+|J_2| \leq p-1$, $|J'|<|J|$.

\item
${\sl Com}(p,k)$. These terms arise when we express a second-order derivative written in the canonical frame into the semi-hyperboloidal frame. Since the coefficients of the transition matrix $\Phi_{\alpha}^\beta$ and $\Psi_{\alpha}^\beta$ are homogeneous of degree zero, and the commutators contain at least one derivative of these coefficients as a factor, these terms are linear combinations of the following terms with homogeneous coefficients of degree $\leq 0$: 
$$
\aligned
&t^{-1}{\sl QS}_h(p,k), \quad 
&&t^{-1}{\sl QS}_{\phi}(p,k), \quad 
&&&t^{-1} \del^{I_1}L^{J_1} \del_{\mu}h_{\alpha \beta} \del^{I_2}L^{J_2} \del_{\nu} \phi, \quad
\\
&t^{-1} \del^{I_1}L^{J_1}h_{\mu\nu} \del^{I_2}L^{J_2} \del_{\gamma}h_{\mu'\nu'}, \quad
&&t^{-2} \del^{I_1}L^{J_1}h_{\mu\nu} \del^{I_2}L^{J_2} \phi, \quad
&&&t^{-2} \del^{I_1}L^{J_1}h_{\mu\nu} \del^{I_2}L^{J_2}h_{\mu'\nu'},  
\endaligned
$$
where $|I| \leq p-k, |J| \leq k$ and $|I_1|+|J_1| \leq p-1$, $|I_1|+|I_2| \leq p-k,|J_1|+|J_2| \leq k$.
\end{itemize}

With the above notation, we can decompose the commutator $[\del^IL^J,h^{\mu\nu} \del_\mu\del_\nu]u$, as follows.  

\begin{lemma}[Decomposition of quasi-linear terms]
\label{lem 1 nonlinear}
Let $|I|=p-k$ and $|J|=k$. Suppose $h^{\mu\nu} \del_\mu\del_\nu$ is a second-order operator with  sufficiently regular coefficients. Then $[\del^IL^J,h^{\mu\nu} \del_\mu\del_\nu]h_{\alpha \beta}$ is a linear combination of the following terms with smooth and homogeneous coefficients of degree $0$:
\bel{eq 1 lem 1 nolinear}
\aligned
&{\sl GQQ}_{hh}(p,k), \quad
&&
t^{-1} \del^{I_3}L^{J_3}h_{\mu\nu} \del^{I_4}L^{J_4} \del_{\gamma}h_{\mu'\nu'},
\\
&\del^{I_1}L^{J_1} \hu^{00} \del^{I_2}L^{J_2} \del_t\del_t h_{\alpha \beta}, \quad
&&L^{J_1'} \hu^{00} \del^IL^{J_2'} \del_t\del_t h_{\alpha \beta}, \quad
&&&\hu^{00} \del_{\gamma} \del_{\gamma'} \del^IL^{J'}h_{\alpha \beta},
\endaligned
\ee
where $I_1+I_2=I,J_1+J_2=J$ with $|I_1|\geq 1$, $J_1'+J_2'=J$ with $|J_1'|\geq 1$ and $|J'|<|J|$,  $|I_3|+|I_4| \leq |I|, |J_3|+|J_4| \leq |J|$.
\end{lemma}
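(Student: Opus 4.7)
My plan is to expand the commutator by Leibniz, decompose it into the semi-hyperboloidal frame, and then isolate the single bad ingredient $\hu^{00}\del_t\del_t$; every other piece will turn out to be either a ``good'' term of type ${\sl GQQ}_{hh}$, a transition-matrix term carrying an explicit $t^{-1}$, or a strictly lower-order commutator remainder that fits the scheme \eqref{eq 1 lem 1 nolinear}.

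First I would write the standard Leibniz expansion
\be
[\del^I L^J, h^{\mu\nu}\del_\mu\del_\nu] h_{\alpha\beta} = \sum_{\substack{I_1+I_2=I, \, J_1+J_2=J \\ (|I_1|,|J_1|) \neq (0,0)}} C_{I_1 J_1}^{I_2 J_2} \, (\del^{I_1}L^{J_1} h^{\mu\nu}) \, (\del^{I_2}L^{J_2}\del_\mu\del_\nu h_{\alpha\beta}) + \mathrm{Comm},
\ee
where $\mathrm{Comm}$ collects the commutators arising from ordering $L^J$ past $\del_\mu\del_\nu$; by \eqref{ineq com 2.4} together with the fact that the coefficients $\theta$, $\lambda$ are smooth of non-positive degree, each term of $\mathrm{Comm}$ is of the form $h^{\mu\nu}\del_\gamma\del_{\gamma'}\del^I L^{J'} h_{\alpha\beta}$ with $|J'|<|J|$, which accounts precisely for the fifth family in \eqref{eq 1 lem 1 nolinear}.

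Next I would convert the second factor to the semi-hyperboloidal frame. Writing $h^{\mu\nu}\del_\mu\del_\nu = \hu^{\mu\nu}\delu_\mu\delu_\nu + h^{\mu\nu}\del_\mu(\Psi_\nu^{\nu'})\delu_{\nu'}$, and similarly acting with $\del^{I_2}L^{J_2}$, I split off the $(0,0)$ contribution:
\be
\hu^{\mu\nu}\delu_\mu\delu_\nu h_{\alpha\beta} = \hu^{00}\,\del_t\del_t h_{\alpha\beta} + \bigl(\hu^{0a}\delu_a\del_t + \hu^{a0}\del_t\delu_a + \hu^{ab}\delu_a\delu_b\bigr)h_{\alpha\beta}.
\ee
All the off-$(0,0)$ pieces contain at least one tangential derivative $\delu_a$, so after applying $\del^{I_2}L^{J_2}$ and invoking the commutator identity \eqref{ineq com 2.5} (which produces either another $\delu_a$ factor or a $t^{-1}$ loss), they fit into either ${\sl GQQ}_{hh}(p,k)$ or the family $t^{-1}\del^{I_3}L^{J_3}h_{\mu\nu}\del^{I_4}L^{J_4}\del_\gamma h_{\mu'\nu'}$. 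The transition-matrix remainder $h^{\mu\nu}\del_\mu(\Psi_\nu^{\nu'})\delu_{\nu'} h_{\alpha\beta}$ carries an explicit $t^{-1}$ (since $\Psi_\nu^{\nu'}$ is smooth and homogeneous of degree $0$, so $\del\Psi$ is homogeneous of degree $-1$) and so also lands in the $t^{-1}\cdots$ family.

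It remains to organize the $(0,0)$-contribution $\hu^{00}\del_t\del_t h_{\alpha\beta}$. Applying $\del^I L^J$ and Leibniz, I get a sum of $(\del^{I_1}L^{J_1}\hu^{00})(\del^{I_2}L^{J_2}\del_t\del_t h_{\alpha\beta})$ with $I_1+I_2=I$, $J_1+J_2=J$. If $|I_1|\ge 1$, this is exactly the first bad family in \eqref{eq 1 lem 1 nolinear}; if $I_1=0$ but $|J_1|\ge 1$, it is the second; if $(I_1,J_1)=(0,0)$ the term is absorbed into the principal part of $h^{\mu\nu}\del_\mu\del_\nu\,\del^I L^J h_{\alpha\beta}$ (not a commutator) and does not appear. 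Commuting $\del^{I_2}L^{J_2}$ through $\del_t\del_t$ again costs at most $t^{-1}$ by \eqref{ineq com 2.4}, producing further terms already in our list.

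The main obstacle is the bookkeeping in step three: ensuring that, whenever a tangential $\delu_a$ becomes a Cartesian $\del_a$ after commutation with $\del^{I_2}L^{J_2}$, the resulting error carries either a compensating $t^{-1}$ or another tangential derivative, and that the index constraints $|I_1|+|I_2|\le p-k$, $|J_1|+|J_2|\le k$, $|I_2|+|J_2|\le p-1$ are respected. This is precisely what the commutator estimates of Lemma~\ref{lem com 2} guarantee, and invoking them term by term finishes the classification.
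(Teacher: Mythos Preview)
Your overall strategy and the ingredients you invoke are correct --- decompose into the semi-hyperboloidal frame and isolate the single bad component $\hu^{00}\del_t\del_t$ --- but the order of operations matters, and yours introduces a gap. You first Leibniz-expand $[\del^IL^J, h^{\mu\nu}\del_\mu\del_\nu]$, obtaining a residual $\mathrm{Comm}=h^{\mu\nu}[\del^IL^J,\del_\mu\del_\nu]h_{\alpha\beta}$, and then claim this ``accounts precisely for the fifth family'' $\hu^{00}\del_\gamma\del_{\gamma'}\del^IL^{J'}h_{\alpha\beta}$. It does not: the coefficient sitting in front of $\del_\gamma\del_{\gamma'}\del^IL^{J'}h_{\alpha\beta}$ in your $\mathrm{Comm}$ is a generic constant-coefficient combination of the $h^{\mu\nu}$, not specifically $\hu^{00}$. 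This is not merely notational --- every later use of this lemma (for instance Lemmas~\ref{lem 1 commtator-sup I} and \ref{lem 1 commtator-L2 I}) exploits the wave-gauge-improved decay of $\hu^{00}$ from \eqref{ineq h00-sup 1.5}, so the fifth family must carry $\hu^{00}$ as written, and the terms $\hu^{a\beta}\del_\gamma\del_{\gamma'}\del^IL^{J'}h_{\alpha\beta}$ that your $\mathrm{Comm}$ also produces do not fit any of the five listed types.

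The paper avoids this by reversing the order: it first writes
\[
h^{\mu\nu}\del_\mu\del_\nu = \hu^{00}\del_t\del_t + \hu^{a0}\delu_a\del_t + \hu^{0a}\del_t\delu_a + \hu^{ab}\delu_a\delu_b + h^{\mu\nu}\del_\mu(\Psi_\nu^{\nu'})\delu_{\nu'}
\]
and only then commutes each piece with $\del^IL^J$. The last four pieces all carry a tangential $\delu_a$ (or an explicit $t^{-1}$ from $\del\Psi$), so their full commutators land in ${\sl GQQ}_{hh}(p,k)$ or the $t^{-1}$ family. The only commutator involving a pure $\del_t\del_t$ is $[\del^IL^J,\hu^{00}\del_t\del_t]h_{\alpha\beta}$, whose Leibniz expansion together with the remainder $\hu^{00}[\del^IL^J,\del_t\del_t]h_{\alpha\beta}$ produce exactly the three ``bad'' families with the correct $\hu^{00}$ coefficient. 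Your argument is repaired by making this switch: decompose first, commute second.
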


\begin{proof} We have 
\bel{eq pr 1 lem 1 nonlinear}
\aligned
\,[\del^IL^J,h^{\mu\nu} \del_\mu\del_\nu]h_{\alpha \beta}
& =  [\del^IL^J, \hu^{\mu\nu} \delu_\mu\delu_\nu]h_{\alpha \beta} + [\del^IL^J,h^{\mu\nu} \del_{\mu} \Psi^{\nu'}_{\nu} \delu_{\nu'}]h_{\alpha \beta}
\\
& =  [\del^IL^J, \hu^{00} \del_t\del_t]h_{\alpha \beta}
\\
& \quad +  [\del^IL^J, \hu^{a0} \delu_a \del_t]h_{\alpha \beta} + [\del^IL^J, \hu^{0a} \del_t\delu_a]h_{\alpha \beta} + [\del^IL^J, \hu^{ab} \delu_a \delu_b]h_{\alpha \beta}
\\
& \quad +  [\del^IL^J,h^{\mu\nu} \del_{\mu} \Psi^{\nu'}_{\nu} \delu_{\nu'}]h_{\alpha \beta}. 
\endaligned
\ee
The second, third, and fourth terms are in class ${\sl GQQ}_{hh}(p,k)$ ($\hu^{\alpha \beta}$ being linear combinations of $h^{\alpha \beta}$ with smooth and homogeneous coefficients of degree zero) and, for the last term, we see that
$$
\aligned
\,[\del^IL^J,h^{\mu\nu} \del_{\mu} \Phi^{\nu'}_{\nu} \delu_{\nu'}]h_{\alpha \beta}
& =  \sum_{{I_1+I_2 +I_3=I\atop J_1+J_2 +J_3=J} \atop |I_3|+|J_3|<|I|+|J|} \del^{I_1}L^{J_1}h^{\mu\nu} \del^{I_2}L^{J_2} \del_{\mu} \Psi^{\nu'}_{\nu} \del^{I_3}L^{J_3} \delu_{\nu'}h_{\alpha \beta}
\\
 & \quad +  h^{\mu\nu} \del_{\mu} \Psi^{\nu'}_{\nu}[\del^IL^J, \delu_{\nu'}]h_{\alpha \beta}.
\endaligned
$$
Then by the homogeneity of $\Psi_{\nu}^{\nu'}$, the above term can be expressed as $t^{-1} \del^{I_3}L^{J_3}h_{\mu\nu} \del^{I_4}L^{J_4} \del_{\gamma}h_{\mu'\nu'}$.

Next, we treat the first term in the right-hand side of \eqref{eq pr 1 lem 1 nonlinear} :
$$
\aligned
\,[\del^IL^J, \hu^{00} \del_t\del_t]h_{\alpha \beta}
& =  \sum_{I_1+I_2=I\atop J_1+J_2= J,|I_1|\geq 1} \del^{I_1}L^{J_1} \hu^{00} \del^{I_2}L^{J_2} \del_t\del_th_{\alpha \beta}
 + \sum_{J_1+J_2=J\atop |J_1|\geq 1}L^{J_1} \hu^{00} \del^IL^{J_2} \del_t\del_th_{\alpha \beta}
\\
& \quad + \hu^{00}[\del^IL^J, \del_t\del_t]h_{\alpha \beta}.
\endaligned
$$
We observe that $[\del^IL^J, \del_t\del_t]h_{\alpha \beta}$ is a linear combination of the terms  $\del_{\alpha'} \del_{\beta'} \del^I L^{J'}h_{\alpha \beta}$ with $|J'|<|J|$. We apply the commutator identity \eqref{pre lem commutator pr2-ZZ}:
$$
\aligned
\,[\del^IL^J, \del_t\del_t]h_{\alpha \beta} & =  \del^I[L^J, \del_t\del_t]h_{\alpha \beta}
  = \del^I\left([L^J, \del_t]\del_th_{\alpha \beta} \right) + \del^I\del_t\left([L^J, \del_t]h_{\alpha \beta} \right)
\\
& = \theta_{0J'}^{J\gamma} \del_\gamma \del_t L^{J'}h_{\alpha \beta}
 + \theta_{0J'}^{J\gamma} \theta_{0J''}^{J'\gamma'} \del_{\gamma'}L^{J''}h_{\alpha \beta}
 + \theta_{0J'}^{J\gamma} \del_t\del_tL^{J'}h_{\alpha \beta}, 
\endaligned
$$
where $|J''|<|J'|<|J|$. This completes the proof.
\end{proof}

A similar decomposition is available for the commutator $[\del^IL^J,h^{\mu\nu} \del_\mu\del_\nu]\phi$: It is a linear combination of the following terms with smooth and homogeneous coefficients of degree $\leq 0$:
\bel{eq 2 lem 1 nolinear}
\aligned
&{\sl GQQ}_{h\phi}(p,k), \quad
t^{-1} \del^{I_1}L^{J_1}h_{\mu\nu} \del^{I_2}L^{J_2} \del_\gamma \phi,
\\
&\del^{I_1}L^{J_1} \hu^{00} \del^{I_2}L^{J_2} \del_t\del_t \phi, \quad
L^{J_1'} \hu^{00} \del^IL^{J_2'} \del_t\del_t \phi, \quad
\hu^{00} \del_{\alpha} \del_{\beta} \del^IL^{J'} \phi, 
\endaligned
\ee
where $I_1+I_2=I,J_1+J_2=J$ with $|I_1|\geq 1$, $J_1'+J_2'=J$ with $|J_1'|\geq 1$ and $|J'|<|J|$ and $|I_3|+|I_4| \leq |I|, |J_3|+|J_4| \leq |J|$.
In our analysis of the commutator estimates, we will make use of the decompositions \eqref{eq 1 lem 1 nolinear} and \eqref{eq 2 lem 1 nolinear}.


\subsection{Estimates based on commutators and homogeneity}
\label{subsec ineq-homo}

Let $u$ be a smooth function defined in $\Kcal$ and vanishing near the  
boundary $\{r = t-1\}$. In view of 
$
\delu_a = t^{-1}L_a, 
$
we have 
$$
\del^IL^J\delu_a u = \del^IL^J\big(t^{-1}L_a u\big) = \sum_{I_1+I_2=I\atop J_1+J_2=J} \del^{I_1}L^{J_1} \big(t^{-1} \big) \del^{I_2}L^{J_2}L_au.
$$
Since $t^{-1}$ is a smooth and homogeneous coefficient of degree $-1$, we have 
\bel{ineq 1 homo}
\big|\del^IL^J\delu_a u \big| \leq Ct^{-1} \sum_{|I'| \leq|I|\atop |J'| \leq |J|} \big|\del^{I'}L^{J'}L_au\big|.
\ee
As a direct application, for instance we have
$$
\aligned
\big|\del^IL^J\delu_a \delu_{\nu}u\big| & \leq  Ct^{-1} \sum_{|I'| \leq|I|\atop |J'| \leq |J|} \big|\del^{I'}L^{J'}L_a \delu_{\nu}u\big|
= Ct^{-1} \sum_{|I'| \leq|I|\atop |J'| \leq |J|} \big|\del^{I'}L^{J'}L_a \big(\Phi_{\nu}^{\nu'} \del_{\nu'}u\big) \big|. 
\endaligned
$$
The function $\Phi^{\nu'}_{\nu}$ is smooth and homogeneous of degree $0$, so that 
\bel{ineq 2 homo}
\big|\del^IL^J\delu_a \delu_{\nu}u\big|
\leq C(I,J)t^{-1} \sum_{\gamma,|I'| \leq|I|\atop |J'| \leq|J|}|\del^{I'}L^{J'}L_a \del_{\gamma}u|.
\ee
A similar argument holds for
\bel{ineq 3 homo}
\big|\del^IL^J\delu_\nu\delu_au\big|
\leq C(I,J)t^{-1} \sum_{\gamma,a,|I'| \leq|I|\atop |J'| \leq|J|}|\del^{I'}L^{J'}L_a \del_{\gamma}u|.
\ee
Furthermore, when there are two ``good'' derivatives, we consider 
$$
\aligned
\del^IL^J\big(\delu_a \delu_b u\big) & =  \del^IL^J\big(t^{-1}L_a(t^{-1}L_b) u\big)
= \del^IL^J\big(t^{-2}L_aL_b u\big) + \del^IL^J\big(t^{-1}L_a(t^{-1})u\big)
\\
& = \sum_{I_1+I_2=I\atop J_1+J_2=J} \del^{I_1}L^{J_1} \big(t^{-2} \big) \del^{I_2}L^{J_2}L_aL_bu
+ \sum_{I_1+I_2=I\atop J_1+J_2=J} \del^{I_1}L^{J_1} \big(t^{-1}L_a(t^{-1}) \big) \del^{I_2}L^{J_2}L_au, 
\endaligned
$$
and we find 
\bel{ineq 4 homo}
\aligned
\big|\del^IL^J\big(\delu_a \delu_b u\big) \big| 
&= \big|\del^IL^J\big(t^{-1}L_a(t^{-1}L_b) u\big) \big|
\\
& \leq Ct^{-2} \sum_{|I'| \leq|I|\atop |J'| \leq |J|} \big|\del^{I'}L^{J'}L_aL_bu\big| + Ct^{-2} \sum_{|I'| \leq|I|\atop|J'| \leq|J|} \big|\del^{I'}L^{J'}L_bu\big|. 
\endaligned
\ee


\subsection{Basic structure of the quasi-null terms}

In this section we consider the quasi-null terms $P_{\alpha \beta}$ and emphasize some important properties:

\noindent 1. The expression $P_{\alpha \beta}$ is a $2$-tensor and this tensorial structure plays a role in our analysis. 

\noindent 2. In explicit form, it reads 
$$
\aligned
P_{\alpha \beta} & =  \frac{1}{4}g^{\gamma \gamma'}g^{\delta \delta'} \del_{\alpha}h_{\gamma \delta} \del_{\beta}h_{\gamma'\delta'}
- \frac{1}{2}g^{\gamma \gamma'}g^{\delta \delta'} \del_{\alpha}h_{\gamma \gamma'} \del_{\beta}h_{\delta \delta'}, 
\endaligned
$$ 
and, in the semi-hyperboloidal frame,
$$
\Pu_{\alpha \beta} = \frac{1}{4}g^{\gamma \gamma'}g^{\delta \delta'} \delu_{\alpha}h_{\gamma \delta} \delu_{\beta}h_{\gamma'\delta'}
- \frac{1}{2}g^{\gamma \gamma'}g^{\delta \delta'} \delu_{\alpha}h_{\gamma \gamma'} \delu_{\beta}h_{\delta \delta'},
$$
so the only term to be concerned about is the $00$-component:
$$
\aligned
\Pu_{00} & =  \frac{1}{4}g^{\gamma \gamma'}g^{\delta \delta'} \del_th_{\gamma \delta} \del_th_{\gamma'\delta'}
- \frac{1}{2}g^{\gamma \gamma'}g^{\delta \delta'} \del_th_{\gamma \gamma'} \del_th_{\delta \delta'}
\\
& = \frac{1}{4} \gu^{\gamma \gamma'} \gu^{\delta \delta'} \del_t\hu_{\gamma \delta} \del_t\hu_{\gamma'\delta'}
- \frac{1}{2} \gu^{\gamma \gamma'} \gu^{\delta \delta'} \del_t\hu_{\gamma \gamma'} \del_t\hu_{\delta \delta'} + {\sl Com}(0,0).
\endaligned
$$
Here ${\sl Com}(0,0)$ represents the commutator terms:
$$
\aligned
{\sl Com}(0,0) & = \frac{1}{4}g^{\gamma \gamma'}g^{\delta \delta'} \hu_{\gamma''\delta''}
\del_t\big(\Psi_{\gamma}^{\gamma''} \Psi_{\delta}^{\delta''} \big)
\del_t\big(\Psi_{\gamma'}^{\gamma'''} \Psi_{\delta'}^{\delta'''} \big) \hu_{\gamma'''\delta'''}
\\
& \quad + \frac{1}{4}g^{\gamma \gamma'}g^{\delta \delta'}
\Psi_{\gamma}^{\gamma''} \Psi_{\delta}^{\delta''} \del_t\hu_{\gamma''\delta''}
\del_t\big(\Psi_{\gamma'}^{\gamma'''} \Psi_{\delta'}^{\delta'''} \big) \hu_{\gamma'''\delta'''}
\\
& \quad + \frac{1}{4}g^{\gamma \gamma'}g^{\delta \delta'}
\del_t\big(\Psi_{\gamma}^{\gamma''} \Psi_{\delta}^{\delta''} \big) \hu_{\gamma''\delta''}
\Psi_{\gamma'}^{\gamma''} \Psi_{\delta'}^{\delta''} \del_t\hu_{\gamma'''\delta'''}
\\
& \quad -  \frac{1}{2}g^{\gamma \gamma'}g^{\delta \delta'}
\del_t\big(\Psi_{\gamma}^{\gamma''} \Psi_{\delta}^{\delta''} \big) \hu_{\gamma''\gamma'''}
\del_t\big(\Psi_{\gamma'}^{\gamma'''} \Psi_{\delta'}^{\delta'''} \big) \hu_{\delta''\delta'''}
\\
& \quad -  \frac{1}{2}g^{\gamma \gamma'}g^{\delta \delta'}
\Psi_{\gamma}^{\gamma''} \Psi_{\delta}^{\delta''} \del_t\hu_{\gamma''\gamma'''}
\del_t\big(\Psi_{\gamma'}^{\gamma'''} \Psi_{\delta'}^{\delta'''} \big) \hu_{\delta''\delta'''}
\\
& \quad -  \frac{1}{2}g^{\gamma \gamma'}g^{\delta \delta'}
\del_t\big(\Psi_{\gamma}^{\gamma''} \Psi_{\delta}^{\delta''} \big) \hu_{\gamma''\gamma'''}
\Psi_{\gamma'}^{\gamma''} \Psi_{\delta'}^{\delta''} \del_t\hu_{\delta''\delta'''}.
\endaligned
$$
We see that
$$
\aligned
\Pu_{00}
& = \frac{1}{4} \gu^{\gamma \gamma'} \gu^{\delta \delta'} \del_t\hu_{\gamma \delta} \del_t\hu_{\gamma'\delta'}
- \frac{1}{2} \gu^{\gamma \gamma'} \gu^{\delta \delta'} \del_t\hu_{\gamma \gamma'} \del_t\hu_{\delta \delta'} + {\sl Com}(0,0)
\\
& =  \frac{1}{4} \minu^{\gamma \gamma'} \minu^{\delta \delta'} \del_t\hu_{\gamma \delta} \del_t\hu_{\gamma'\delta'}
- \frac{1}{2} \gu^{\gamma \gamma'} \gu^{\delta \delta'} \del_t\hu_{\gamma \gamma'} \del_t\hu_{\delta \delta'} + {\sl Com}(0,0) + {\sl Cub}(0,0).
\endaligned
$$
Here the terms ${\sl Cub}(0,0)$ stands for the high-order terms:
$$
\aligned
{\sl Cub}(0,0) & = \frac{1}{4} \hu^{\gamma \gamma'} \minu^{\delta \delta'} \del_t\hu_{\gamma \delta} \del_t\hu_{\gamma'\delta'}
+\frac{1}{4} \minu^{\gamma \gamma'} \hu_{\gamma \delta'} \del_t\hu_{\gamma \delta} \del_t\hu_{\gamma'\delta'}
+\frac{1}{4} \hu^{\gamma \gamma'} \hu^{\delta \delta'} \del_t\hu_{\gamma \delta'} \del_t\hu_{\gamma \delta'}.
\endaligned
$$
We summarize our conclusion.

\begin{lemma}[Structure of the quasi-null terms]\label{lem P 1}
The quasi-null term $\Pu_{00}$ are linear combinations of the following terms with smooth and homogeneous coefficients of degree $\leq 0$:
\bel{eq P 1}
{\sl GQS}_h(0,0), \quad {\sl Cub}(0,0), \quad {\sl Com}(0,0), \quad  \gu^{\gamma \gamma'} \gu^{\delta \delta'} \del_t\hu_{\gamma \gamma'} \del_t\hu_{\delta \delta'}, \quad
\minu^{\gamma \gamma'} \minu^{\delta \delta'} \del_t\hu_{\gamma \delta} \del_t\hu_{\gamma'\delta'}.
\ee
The quasi-null term $\Pu_{a \beta}$ are linear combinations of ${\sl GQS}_h(0,0)$ and ${\sl Cub}(0,0)$ terms.
\end{lemma}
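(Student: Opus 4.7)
The plan is to treat the two cases separately, with the spatially indexed case $\Pu_{a\beta}$ being essentially immediate and the case $\Pu_{00}$ requiring a careful frame-conversion bookkeeping.

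First I would dispose of $\Pu_{a\beta}$ (where $a\in\{1,2,3\}$). Because $\delu_a=t^{-1}L_a$ is tangent to the hyperboloid, it qualifies as a ``good'' derivative in the sense of the class ${\sl GQS}_h$. Starting from
$$
\Pu_{a\beta} \;=\; \tfrac{1}{4} g^{\gamma\gamma'}g^{\delta\delta'} \delu_a h_{\gamma\delta}\,\delu_\beta h_{\gamma'\delta'} \;-\; \tfrac{1}{2} g^{\gamma\gamma'}g^{\delta\delta'} \delu_a h_{\gamma\gamma'}\,\delu_\beta h_{\delta\delta'},
$$
I decompose $g^{\gamma\gamma'}=m^{\gamma\gamma'}-H^{\gamma\gamma'}$. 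The leading term (with two copies of $m$) has constant coefficients, hence smooth and homogeneous of degree $0$, and is manifestly a ${\sl GQS}_h(0,0)$ term because one factor $\delu_a h_{\gamma\delta}$ is good. The remaining contributions contain $H^{\gamma\gamma'}$ as a factor, and since $H^{\alpha\beta}(h)$ is a power series in $h$ vanishing to first order at the origin, these are at least cubic in $h$, so they fall into ${\sl Cub}(0,0)$.

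For $\Pu_{00}$, which is the more delicate case, I would carry out the frame conversion in two stages. Stage one: replace each $h_{\gamma\delta}$ by its representation in the semi-hyperboloidal frame via $h_{\gamma\delta}=\Psi_\gamma^{\gamma''}\Psi_\delta^{\delta''}\hu_{\gamma''\delta''}$ and differentiate by the Leibniz rule. The stray terms involving $\del_t\Psi_\gamma^{\gamma''}$ or $\del_t\Psi_\delta^{\delta''}$ produce coefficients that are smooth and homogeneous of degree $-1$, whose product with another $h$-derivative is of the form $t^{-1}h\cdot\del h$ or $t^{-2}h\cdot h$; all such contributions are placed in the class ${\sl Com}(0,0)$. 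Doing the analogous conversion $g^{\gamma\gamma'}=\gu^{\alpha\beta}\Psi^\gamma_\alpha\Psi^{\gamma'}_\beta$ and using $\Phi_\alpha^\mu\Psi_\mu^\beta=\delta_\alpha^\beta$ to collapse the transition matrices produces the identity
$$
\Pu_{00} \;=\; \tfrac{1}{4}\gu^{\gamma\gamma'}\gu^{\delta\delta'}\del_t\hu_{\gamma\delta}\,\del_t\hu_{\gamma'\delta'} \;-\; \tfrac{1}{2}\gu^{\gamma\gamma'}\gu^{\delta\delta'}\del_t\hu_{\gamma\gamma'}\,\del_t\hu_{\delta\delta'} \;+\; {\sl Com}(0,0),
$$
where every commutator-type remainder has already been absorbed into ${\sl Com}(0,0)$ by the degree-counting above.

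Stage two is the decomposition of the first quadratic form. Writing $\gu^{\gamma\gamma'}=\minu^{\gamma\gamma'}-\Hu^{\gamma\gamma'}$ and expanding
$$
\gu^{\gamma\gamma'}\gu^{\delta\delta'} \;=\; \minu^{\gamma\gamma'}\minu^{\delta\delta'} \;+\;\Bigl(\text{terms with at least one factor of }\Hu\Bigr),
$$
the first piece contributes exactly $\minu^{\gamma\gamma'}\minu^{\delta\delta'}\del_t\hu_{\gamma\delta}\del_t\hu_{\gamma'\delta'}$ as listed in the lemma; the remaining $\Hu$-containing pieces are cubic in $h$ and go into ${\sl Cub}(0,0)$. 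The second quadratic form is left untouched with its $\gu$ coefficients (it appears in the conclusion as stated, and is precisely the trace-trace term that will later be controlled via the wave-gauge condition). This yields the stated decomposition. The only mildly delicate step is the first one: checking that all derivatives of the transition matrices $\Phi,\Psi$ meet the structural criteria for ${\sl Com}(0,0)$, but this follows immediately from the homogeneity of degree $0$ of these matrices together with the bookkeeping of $t^{-1}$-weights introduced by each differentiation.
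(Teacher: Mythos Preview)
Your proposal is correct and follows essentially the same route as the paper: for $\Pu_{a\beta}$ you use that $\delu_a$ is a good derivative to land directly in ${\sl GQS}_h(0,0)$ modulo cubic corrections, and for $\Pu_{00}$ you perform the same two-stage frame conversion (first $h\mapsto\hu$ via $\Psi$ picking up ${\sl Com}(0,0)$, then $\gu\mapsto\minu$ in the first quadratic form picking up ${\sl Cub}(0,0)$) that the paper carries out explicitly. The paper simply writes out the commutator and cubic remainders in full, whereas you absorb them structurally by degree-counting; the content is the same.
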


So, the only problematic terms in $P_{\alpha \beta}$ are $\gu^{\gamma \gamma'} \gu^{\delta \delta'} \del_t\hu_{\gamma \gamma'} \del_t\hu_{\delta \delta'}$ and $\minu^{\gamma \gamma'} \minu^{\delta \delta'} \del_t\hu_{\gamma \delta} \del_t\hu_{\gamma'\delta'}$. They will be controlled by using the wave gauge condition.


\subsection{Metric components in the semi-hyperboloidal frame}

In this subsection, we derive the equation satisfied by the metric components within the semi-hyperboloidal frame. To do so, we need the identity
$$
\Boxt_g(uv) = u\Boxt_gv + v\Boxt_gu + 2g^{\alpha \beta} \del_{\alpha}u\del_{\beta}v.
$$
Then, we have 
$$
\Boxt_g\hu_{\alpha \beta} = \Boxt_g\big(\Phi_{\alpha}^{\alpha'} \Phi_{\beta}^{\beta'}h_{\alpha'\beta'} \big) = \Phi_{\alpha}^{\alpha'} \Phi_{\beta}^{\beta'} \Boxt_g h_{\alpha'\beta'} + 2g^{\mu\nu} \del_{\mu} \big(\Phi_{\alpha}^{\alpha'} \Phi_{\beta}^{\beta'} \big) \del_{\nu}h_{\alpha'\beta'}
+h_{\alpha'\beta'} \Boxt_g\big(\Phi_{\alpha}^{\alpha'} \Phi_{\beta}^{\beta'} \big).
$$ 
Then we calculate explicitly the correction terms concerning the derivatives of $\Phi_{\alpha}^{\alpha'} \Phi_{\beta}^{\beta'}$:
\begin{itemize}

\item Case $\alpha = \beta = 0$:
$$
\Phi_0^0\Phi_0^0 = 1, \text{ the other ones vanish, }
$$
$$
\Box \big(\Phi_0^{\alpha'} \Phi_0^{\beta'} \big) = 0, \quad \del \big(\Phi_0^0\Phi_0^0\big) = 0.
$$

\item Case $\alpha = a>0, \beta = 0$:
$$
\Phi_a^0\Phi_0^0 = x^a/t, \quad \Phi_a^a \Phi_0^0 = 1,
$$
$$
\Box \big(\Phi_a^0\Phi_0^0 \big) = - \frac{2x^a}{t^3}, \quad \del_t\big(\Phi_a^0\Phi_0^0 \big) = - \frac{x^a}{t^2}, \quad \del_a \big(\Phi_a^0\Phi_0^0 \big) = \frac{1}{t}.
$$

\item Case $\alpha = a>0, \beta = b>0$:
$$
\Phi_a^0\Phi_b^0 = x^ax^b/t^2, \quad \Phi_a^0\Phi_b^b = x^a/t, \quad \Phi_a^a \Phi_b^b = 1.
$$
$$
\aligned
&\Box \big(\Phi_a^0\Phi_b^0\big) = - \frac{6x^ax^b}{t^4} + \frac{2\delta_{ab}}{t^2}, \quad
\del_t \big(\Phi_a^0\Phi_b^0\big) = - \frac{2x^ax^b}{t^3}, \quad
\del_c\big(\Phi_a^0\Phi_b^0\big) = \frac{\delta_{ca}x^b + \delta_{cb}x^a}{t^2},
\\
&\Box \big(\Phi_a^0\Phi_b^b\big) = - \frac{2x^a}{t^3}, \quad \del_t\big(\Phi_a^0\Phi_b^b\big) = - \frac{x^a}{t^2}, \quad \del_a \big(\Phi_a^0\Phi_b^b\big) = \frac{1}{t},
\\
&\text{while the other ones vanish. }
\endaligned
$$
\end{itemize}

Then we calculate the remaining terms (up to second-order):
$$
\aligned
\Boxt_g\hu_{00} & =  \Phi_0^{\alpha'} \Phi_0^{\beta'}Q_{\alpha'\beta'} + \Pu_{00} - 16\pi\del_t\phi\del_t\phi - 8\pi c^2 \minu_{00} \phi^2 + {\sl Cub}(0,0),
\\
\Boxt_g\hu_{0a} & =  \Phi_0^{\alpha'} \Phi_a^{\beta'}Q_{\alpha'\beta'} + \Pu_{0a} - 16\pi \delu_a \phi\del_t\phi - 8\pi c^2 \minu_{a0} \phi^2
+ \frac{2}{t} \delu_a h_{00} - \frac{2x^a}{t^3}h_{00} + {\sl Cub}(0,0),
\\
\Boxt_g\hu_{aa} & =  \Phi_a^{\alpha'} \Phi_a^{\beta'}Q_{\alpha'\beta'} + \Pu_{aa} - 16\pi \delu_a \phi\delu_a \phi - 8\pi c^2\minu_{aa} \phi^2,
\\
& \quad +  \frac{4x^a}{t^2} \delu_ah_{00} + \frac{4}{t} \delu_ah_{0a} - \frac{4x^a}{t^3}h_{0a}  + \bigg(\frac{2}{t^2} - \frac{6|x^a|^2}{t^4} \bigg)h_{00} + {\sl Cub}(0,0),
\\
\Boxt_g\hu_{ab} & =  \Phi_a^{\alpha'} \Phi_b^{\beta'}Q_{\alpha'\beta'} + \Pu_{ab} - 16\pi \delu_a \phi\delu_b\phi - 8\pi c^2\minu_{ab} \phi^2,
\\
& \quad +  \frac{2x^b}{t^2} \delu_ah_{00}  + \frac{2x^a}{t^2} \delu_bh_{00}  + \frac{2}{t} \delu_ah_{0b} + \frac{2}{t} \delu_bh_{0a} - \frac{6x^ax^b}{t^4}h_{00} 
\\
& \quad - \frac{2x^a}{t^3}h_{0b} - \frac{2x^b}{t^3}h_{0a}+ {\sl Cub}(0,0)
\\
&(a \neq b).
\endaligned
$$
The most important point is that for the components $\hu_{a \beta}$, the quasi-null terms $P_{\alpha \beta}$ become {\sl null terms}. This tensorial structure will lead us to the fact that these metric components do have better decay rate compared to $\hu_{00}$. In Section \ref{section-7}, these equations will be used to derive sharp decay estimates for these components. For clarity, we state the following conclusion:
\bel{eq tensorial 1}
\aligned
\Boxt_g\hu_{0a} & = \frac{2}{t} \delu_a h_{00} - \frac{2x^a}{t^3}h_{00}  + {\sl GQS}_h(0,0) + {\sl GQS}_\phi(0,0) + {\sl Cub}(0,0),
\\
\Boxt_g\hu_{aa} & = \frac{4x^a}{t^2} \delu_ah_{00}
+ \bigg(\frac{2}{t^2} - \frac{6|x^a|^2}{t^4} \bigg)h_{00} + \frac{4}{t} \delu_ah_{0a} - \frac{4x^a}{t^3}h_{0a}
\\
 & \quad +  {\sl GQS}_h(0,0) + {\sl GQS}_\phi(0,0) + {\sl Cub}(0,0),
\\
\Boxt_g\hu_{ab} & = \frac{2x^b}{t^2} \delu_ah_{00}  + \frac{2x^a}{t^2} \delu_bh_{00} - \frac{6x^ax^b}{t^4}h_{00}
+ \frac{2}{t} \delu_ah_{0b} - \frac{2x^a}{t^3}h_{0b} + \frac{2}{t} \delu_ah_{0a} - \frac{2x^b}{t^3}h_{0a}
\\
& \quad +  {\sl GQS}_h(0,0) + {\sl GQS}_{\phi}(0,0) + {\sl Cub}(0,0). 
\endaligned
\ee


\subsection{Wave gauge condition in the semi-hyperboloidal frame}

Our objective in the rest of this section is to establish some estimates based on the wave condition
$g^{\alpha \beta} \Gamma_{\alpha \beta}^{\gamma} = 0$, 
which is equivalent to saying 
\bel{eq wave-condition 0}
g_{\beta \gamma} \del_{\alpha}g^{\alpha \beta} = \frac{1}{2}g_{\alpha \beta} \del_{\gamma}g^{\alpha \beta}.
\ee
We have introduced 
\be
\aligned
&h^{\alpha \beta} = g^{\alpha \beta} - m^{\alpha \beta}, \quad h_{\alpha \beta} = g_{\alpha \beta} - m_{\alpha \beta}, 
\\
&\hu^{\alpha \beta} = \gu^{\alpha \beta} - \minu^{\alpha \beta}, \quad \hu_{\alpha \beta} = \gu_{\alpha \beta} - \minu_{\alpha \beta}, 
\endaligned
\ee
in which
$
\hu^{\alpha \beta} = h^{\alpha'\beta'} \Psi_{\alpha'}^{\alpha} \Psi_{\beta'}^{\beta}$
and $\hu_{\alpha \beta} = h_{\alpha'\beta'} \Phi_{\alpha}^{\alpha'} \Phi_{\beta}^{\beta'}$.

\begin{lemma} \label{lem wave-condition 1}
Let $(g_{\alpha \beta})$ be a metric satisfying the wave gauge condition \eqref{eq wave-condition 0}.  Then $\del_t\hu^{00}$ is a linear combination of the following terms with smooth and homogeneous coefficients of degree $\leq 0$:
\bel{eq wave-condition 0.5}
(s/t)^2\del_{\alpha} \hu^{\beta \gamma}, \quad \delu_a \hu^{\beta \gamma}, \quad t^{-1} \hu^{\alpha \beta}, \quad \hu^{\alpha \beta} \del_{\gamma} \hu^{\alpha'\beta'}, \quad t^{-1}h_{\alpha \beta} \hu^{\alpha'\beta'}.
\ee
\end{lemma}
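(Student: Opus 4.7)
The plan is to derive the identity from the wave gauge condition $g_{\beta\gamma}\del_\alpha g^{\alpha\beta} = \frac{1}{2} g_{\alpha\beta}\del_\gamma g^{\alpha\beta}$ by combining two of its four components and then passing to the semi-hyperboloidal frame. First, I set $\gamma = 0$ and substitute $g^{\alpha\beta} = m^{\alpha\beta} + h^{\alpha\beta}$. Separating the Minkowski part (whose derivatives vanish) and isolating $\del_t h^{00}$ on the left gives, at leading order,
\[
\del_t h^{00} = -\del_t h^{aa} - 2\,\del_b h^{b0} + O(h\,\del h),
\]
with all correction terms of the form $h_{\mu\nu}\,\del_\gamma h^{\alpha\beta}$ carrying smooth homogeneous coefficients of degree zero. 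This relation alone is insufficient, since $\del_t h^{aa}$ would, after conversion to $\hu$, produce a $\del_t \hu^{aa}$ lying outside the five allowed classes.

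To manufacture the missing $(s/t)^2$ weight, I next contract the $\gamma = c$ components of the wave gauge against $x^c/t$ and sum over $c$. Using the identity $(x^c/t)\,\del_c = (x^c/t)\,\delu_c - (r^2/t^2)\,\del_t$, a second relation emerges:
\[
(r^2/t^2)\bigl(\del_t h^{00} - \del_t h^{aa}\bigr) = (x^c/t)\,\delu_c h^{00} - (x^c/t)\,\delu_c h^{aa} + 2(x^c/t)\,\del_t h^{0c} + 2(x^c/t)\,\del_b h^{bc} + O(h\,\del h).
\]
Eliminating $\del_t h^{aa}$ between the two relations and substituting $(r/t)^2 = 1 - (s/t)^2$, I obtain a formula of the shape $\del_t h^{00} = (s/t)^2\,\del_t h^{00} + \mathcal{G}$ in which $\mathcal{G}$ contains only $\delu_c$-derivatives of $h^{\alpha\beta}$, weighted $(x^c/t)\,\del_t h^{0c}$ and $(x^c/t)\,\del_b h^{bc}$ pieces, and quadratic corrections.

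The last step is the conversion to the semi-hyperboloidal frame via $h^{\alpha\beta} = \Phi^\alpha_\mu \Phi^\beta_\nu \hu^{\mu\nu}$ and $\del_b = \delu_b - (x^b/t)\,\del_t$. Because $\Phi, \Psi$ are smooth homogeneous of degree $0$, their derivatives contribute only $t^{-1}\hu$ terms. The main obstacle is an algebraic cancellation: the expansion of $h^{00}$ in terms of $\hu^{\alpha\beta}$, as well as the substitutions performed on $(x^c/t)\,\del_t h^{0c}$, $(x^c/t)\,\del_b h^{bc}$, and $\del_b h^{b0}$, all generate auxiliary $\del_t \hu^{a0}$ and $(x^ax^b/t^2)\,\del_t \hu^{ab}$ contributions \emph{without} the $(s/t)^2$ factor. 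The $\del_t \hu^{a0}$ pieces cancel by the symmetry $\hu^{a0} = \hu^{0a}$, while the $(x^ax^b/t^2)\,\del_t \hu^{ab}$ pieces cancel in pairs of opposite sign. Once these cancellations are verified, every surviving summand manifestly belongs to one of the five allowed classes, with the quadratic remainders absorbed into $\hu^{\alpha\beta}\,\del_\gamma \hu^{\alpha'\beta'}$ and $t^{-1} h_{\alpha\beta}\hu^{\alpha'\beta'}$.
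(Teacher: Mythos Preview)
Your approach is correct but follows a different route from the paper's. The paper writes the wave gauge condition directly in the semi-hyperboloidal frame, takes only the spatial components $\gamma=c$ (no $\gamma=0$ component is needed), and contracts with $\minu^{0c}$. The identities $\minu^{0c}\minu_{0c}=(r/t)^2$ and $\minu^{0c}\minu_{bc}=-(s/t)^2(x^b/t)$ then place the $(s/t)^2$ weight on $\delu_0\hu^{0b}$ automatically, so no cancellations need to be verified by hand. Your route---working in the Cartesian frame, combining the $\gamma=0$ and $\gamma=c$ components to eliminate $\del_t h^{aa}$, then converting to $\hu$ at the end---is more elementary but trades the frame-adapted contraction for explicit bookkeeping. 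The cancellation you assert does hold: collecting the unweighted $\del_t h^{0b}$ contributions from the $\del_b h^{b0}$ conversion, the explicit $(x^c/t)\del_t h^{0c}$ term, and the expansion of $\hu^{00}$ gives coefficients $(r/t)^2+1-2=-(s/t)^2$, while the two $(x^ax^b/t^2)\del_t h^{ab}$ contributions cancel exactly. Your stated reason (``by the symmetry $\hu^{a0}=\hu^{0a}$'') is not quite the mechanism---the symmetry only lets you combine like terms; the actual cancellation is the arithmetic identity on the coefficients. Also note that in the paper's conventions the conversion is $h^{\alpha\beta}=\hu^{\mu\nu}\Psi^\alpha_\mu\Psi^\beta_\nu$ rather than with $\Phi$, though this does not affect the structure of the argument.
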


\begin{proof} The wave gauge condition \eqref{eq wave-condition 0} can be written in the semi-hyperboloidal frame as
\bel{eq wave-condition 1}
\gu_{\beta \gamma} \delu_{\alpha} \hu^{\alpha \beta} + g_{\beta'\gamma'} \Phi_{\gamma}^{\gamma'} \hu^{\alpha \beta} \del_{\alpha'} \big(\Phi_{\alpha}^{\alpha'} \Phi_{\beta}^{\beta'} \big)
=
\frac{1}{2} \gu_{\alpha \beta} \delu_{\gamma} \hu^{\alpha \beta} +  \frac{1}{2}g_{\alpha \beta} \hu^{\alpha'\beta'} \delu_{\gamma} \big(\Phi_{\alpha'}^{\alpha} \Phi_{\beta'}^{\beta} \big).
\ee
This leads us to
\bel{eq wave-condition 2}
\minu_{\beta \gamma} \delu_{\alpha} \hu^{\alpha \beta} =
\frac{1}{2} \gu_{\alpha \beta} \delu_{\gamma} \hu^{\alpha \beta} +  \frac{1}{2}g_{\alpha \beta} \hu^{\alpha'\beta'} \delu_{\gamma} \big(\Phi_{\alpha'}^{\alpha} \Phi_{\beta'}^{\beta} \big)
- g_{\beta'\gamma'} \Phi_{\gamma}^{\gamma'} \hu^{\alpha \beta} \del_{\alpha'} \big(\Phi_{\alpha}^{\alpha'} \Phi_{\beta}^{\beta'} \big)
- \hu_{\beta \gamma} \delu_{\alpha} \hu^{\alpha \beta}. 
\ee
Taking $\gamma = c =1,2,3$,
we analyze the left-hand side and observe that
$$
\aligned
\minu_{\beta c} \delu_{\alpha} \hu^{\alpha \beta}& =  \minu_{0c} \delu_0\hu^{00} + m_{bc} \delu_0\hu^{0b}
+\minu_{\beta c} \delu_a \hu^{a \beta}, 
\endaligned
$$
which leads us to
$
\minu_{0c} \delu_0\hu^{00} = \minu_{\beta c} \delu_{\alpha} \hu^{\alpha \beta} - \minu_{bc} \delu_0\hu^{0b}
 - \minu_{\beta c} \delu_a \hu^{a \beta}, 
$
so that 
$$
\aligned
\minu^{0c} \minu_{0c} \delu_0\hu^{00} & =  \minu^{0c} \minu_{\beta c} \delu_{\alpha} \hu^{\alpha \beta}
- \minu^{0c} \minu_{bc} \delu_0\hu^{0b} - \minu^{0c} \minu_{\beta c} \delu_a \hu^{a \beta'}.
\endaligned
$$
An explicit calculation shows that
$
\minu^{0c} \minu_{0c} = \frac{r^2}{t^2}, \quad \minu^{0c} \minu_{bc} = -(s/t)^2(x^b/t)
$
and thus
\bel{eq wave-condition 3}
\aligned
(r/t)^2\delu_0\hu^{00} & =  \minu^{0c} \minu_{\beta c} \delu_{\alpha} \hu^{\alpha \beta}
+(s/t)^2\sum_b(x^b/t) \delu_0\hu^{0b} - \minu^{0c} \minu_{\beta c} \delu_a \hu^{a \beta'}.
\endaligned
\ee
Combining \eqref{eq wave-condition 2} and \eqref{eq wave-condition 3}, we find 
\be
\aligned
&(r/t)^2\delu_0\hu^{00} = 
(s/t)^2\sum_b(x^b/t) \delu_0\hu^{0b} - \minu^{0c} \minu_{\beta c} \delu_a \hu^{a \beta'}
\\
& + \minu^{0c} \bigg(\frac{1}{2} \gu_{\alpha \beta} \delu_c\hu^{\alpha \beta} +  \frac{1}{2}g_{\alpha \beta} \hu^{\alpha'\beta'} \delu_c\big(\Phi_{\alpha'}^{\alpha} \Phi_{\beta'}^{\beta} \big)
- g_{\beta'\gamma'} \Phi_c^{\gamma'} \hu^{\alpha \beta} \del_{\alpha'} \big(\Phi_{\alpha}^{\alpha'} \Phi_{\beta}^{\beta'} \big)
- \hu_{\beta c} \delu_{\alpha} \hu^{\alpha \beta} \bigg),
\endaligned
\ee
which leads us to the terms in \eqref{eq wave-condition 0.5}.
\end{proof}


We now proceed by deriving some estimates based on the wave gauge condition.
For convenience, we introduce the notation
$$
\aligned
&\big|\hu\big| := \max_{\alpha, \beta} \big|\hu_{\alpha \beta} \big|, \qquad \big|\del\hu\big|:=\max_{\alpha, \beta, \gamma} \big|\del_{\gamma} \hu_{\alpha \beta} \big|,
\qquad
\big|\delu \hu\big| := \max_{c, \alpha, \beta} \big|\delu_c\hu_{\alpha \beta} \big|, \quad c= 1,2,3.
\endaligned
$$
Observe that $\big|\delu \hu\big|$ contains only the ``good'' derivatives of $\hu_{\alpha \beta}$. 
When $\big|\del \hu\big|$ and $\big| \hu\big|$ are supposed to be small enough, 
and, the rest of this section, 
we express the corresponding bound in the form $\vep_w\leq 1$, the algebraic relation  between $\hu^{\alpha \beta}$ and $\hu_{\alpha \beta}$ leads us to the following basic estimates:
\bel{eq wave-condition 4}
\max_{\alpha, \beta} \big|\hu^{\alpha \beta} \big| \leq C\big|\hu\big|,
\qquad
\max_{\alpha, \beta, \gamma} \big|\del_{\gamma} \hu^{\alpha \beta} \big| \leq C\big|\del \hu\big|,
\qquad
\max_{c, \alpha, \beta} \big|\delu_c\hu^{\alpha \beta} \big| \leq C\big|\delu\hu\big|. 
\ee
With the above preparation, the following estimate is immediate from Lemma \ref{lem wave-condition 1}.

\begin{lemma}[Zero-order wave coordinate estimate]\label{lem wave-condition 3}
Let $g^{\alpha \beta} = m^{\alpha \beta} + h^{\alpha \beta}$ be a metric satisfying the wave gauge condition \eqref{eq wave-condition 0}. We suppose furthermore that  $\big|\del \hu\big|$ and $\big| \hu\big|$ are small enough so \eqref{eq wave-condition 4} hold. Then the following estimate holds:
\bel{ineq wave-condition 1}
\big|\delu_t\hu^{00} \big| \leq C(s/t)^2\big|\del \hu\big| + C\big|\delu\hu\big| + Ct^{-1} \big|\hu\big| + C\big|\del\hu\big| \, \big|\hu\big|.
\ee
\end{lemma}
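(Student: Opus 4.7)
\medskip

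\noindent\emph{Proof proposal for Lemma \ref{lem wave-condition 3}.}
The plan is to read off the estimate directly from the structural decomposition already provided by Lemma \ref{lem wave-condition 1}, and then bound each of the five families of terms in \eqref{eq wave-condition 0.5} term by term, using only the elementary pointwise inequalities \eqref{eq wave-condition 4}. First I would observe that, since $\delu_0 = \del_t$, the quantity $\delu_t \hu^{00}$ coincides with $\del_t \hu^{00}$; hence Lemma \ref{lem wave-condition 1} tells us that $\delu_t \hu^{00}$ is a finite linear combination, with coefficients that are smooth and homogeneous of degree $\leq 0$ (and therefore uniformly bounded on $\Kcal$, by the definition in Section~3.3), of the five schematic terms
\[
 (s/t)^2 \del_\alpha \hu^{\beta\gamma}, \qquad \delu_a \hu^{\beta\gamma}, \qquad t^{-1}\hu^{\alpha\beta}, \qquad \hu^{\alpha\beta}\del_\gamma \hu^{\alpha'\beta'}, \qquad t^{-1} h_{\alpha\beta}\hu^{\alpha'\beta'}.
\]

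Next I would estimate each family in the obvious way. Using \eqref{eq wave-condition 4}, the first family contributes $\lesssim (s/t)^2 |\del\hu|$, the second contributes $\lesssim |\delu\hu|$, and the third contributes $\lesssim t^{-1}|\hu|$. The fourth is literally $\lesssim |\hu|\,|\del\hu|$. The only mildly subtle one is the last: since the transition matrices $\Phi_\alpha^{\alpha'}$ and $\Psi_\alpha^{\alpha'}$ have entries $x^a/t$ bounded by $1$ on $\overline{\Kcal}$, the pointwise equivalence $|h| \lesssim |\hu|$ and $|\hu|\lesssim |h|$ holds, so $|h_{\alpha\beta}\hu^{\alpha'\beta'}| \lesssim |\hu|^2$; the smallness hypothesis $\vep_w \leq 1$ then gives $|h_{\alpha\beta}\hu^{\alpha'\beta'}| \lesssim |\hu|$, and this family contributes at most $C t^{-1}|\hu|$, which can be absorbed into the third contribution. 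Summing the five bounds yields precisely \eqref{ineq wave-condition 1}.

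The main (and essentially only) obstacle is the bookkeeping of the last term $t^{-1} h_{\alpha\beta}\hu^{\alpha'\beta'}$: one must verify that the smallness assumption is genuinely used to turn the quadratic $|\hu|^2$ into a linear $|\hu|$, and that the passage between covariant and contravariant components (and between the standard and semi-hyperboloidal frames) really is controlled by a constant on $\Kcal$, which is where the bound $\vep_w \leq 1$ combined with the structure of the matrices $\Phi,\Psi$ is needed. Once this is in place, the remaining work is purely algebraic and no further analytic input (no differentiation, no integration, no Hardy-type inequality) is required; the estimate is essentially a direct transcription of Lemma \ref{lem wave-condition 1}.
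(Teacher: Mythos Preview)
Your proposal is correct and matches the paper's approach exactly: the paper simply states that the estimate ``is immediate from Lemma \ref{lem wave-condition 1}'' without spelling out the term-by-term bounds, and what you have written is precisely the routine verification that makes this immediacy explicit. Your handling of the fifth term $t^{-1}h_{\alpha\beta}\hu^{\alpha'\beta'}$ via the equivalence $|h|\simeq|\hu|$ and the smallness $|\hu|\leq\vep_w\leq 1$ is the natural way to absorb it into the $Ct^{-1}|\hu|$ contribution.
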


The interest of this estimate is as follows: the ``bad'' derivative of $\hu^{00}$ is bounded by the ``good'' derivatives arising in the right-hand side of \eqref{ineq wave-condition 1}. Of course, the ``bad'' term $\big|\del \hu\big|$ still arise, but it is multiplied by the factor $(s/t)^2$ which provides us with extra decay and turns this term into a ``good'' term.

\begin{lemma}[$k$-order wave coordinate estimates]\label{lem wave-condition 4}
Let $g^{\alpha \beta} = m^{\alpha \beta} + h^{\alpha \beta}$ be a smooth metric satisfying the wave gauge condition \eqref{eq wave-condition 0}. We suppose furthermore that for a product $\del^IL^J$ with $|I|+|J| \leq N$, $\big|\del \del^IL^J \hu\big|$ and $\big|\del^IL^J \hu\big|$ are small enough so that the following bounds hold:
$
\max_{\alpha, \beta} \big|\del^IL^J\hu^{\alpha \beta} \big| \leq C\big|\del^IL^J \hu\big|$, 
$ 
\max_{\alpha, \beta, \gamma} \big|\del_{\gamma} \del^IL^J\hu^{\alpha \beta} \big| \leq C\big|\del \del^IL^J \hu\big|$, 
and 
$
\max_{c, \alpha, \beta} \big|\delu_c\del^IL^J\hu^{\alpha \beta} \big| \leq C\big|\delu\del^IL^J \hu\big|$.
Then the following estimate holds:
\bel{ineq wave-condition 2a}
\aligned
\big|\del^IL^J\del_t\hu^{00} \big| + \big|\del_t \del^IL^J\hu^{00} \big| 
\leq 
& C\sum_{|I'|+|J'| \leq |I|+|J|\atop |J'| \leq |J|} \big((s/t)^2\big|\del \del^{I'}L^{J'} \hu\big| + \big|\del^{I'}L^{J'} \delu\hu\big| + t^{-1} \big|\del^{I'}L^{J'} \hu\big|\big)
\\
&  + C\sum_{|I_1|+|I_2| \leq|I|\atop|J_1|+|J_2| \leq |J|} \big|\del^{I_1}L^{J_1} \hu\big| \, \big|\del\del^{I_2}L^{J_2} \hu\big|.
\endaligned
\ee 
\end{lemma}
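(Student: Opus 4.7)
The plan is to take the zero-order identity supplied by Lemma \ref{lem wave-condition 1}, differentiate it by $\del^IL^J$, and then convert the resulting expression into the schematic bound \eqref{ineq wave-condition 2a}. More precisely, Lemma \ref{lem wave-condition 1} asserts that $\del_t\hu^{00}$ can be written as a linear combination, with smooth and homogeneous coefficients of degree $\leq 0$, of the five term-types in \eqref{eq wave-condition 0.5}, namely
$$
(s/t)^2\del_\alpha\hu^{\beta\gamma},\quad \delu_a\hu^{\beta\gamma},\quad t^{-1}\hu^{\alpha\beta},\quad \hu^{\alpha\beta}\del_\gamma\hu^{\alpha'\beta'},\quad t^{-1}h_{\alpha\beta}\hu^{\alpha'\beta'}.
$$
Applying $\del^IL^J$ to this identity and using the Leibniz rule will reduce everything to (a) terms in which $\del^IL^J$ lands on a single factor, each multiplied by a derivative of a smooth homogeneous coefficient, plus (b) genuinely quadratic terms with $\del^IL^J$ split across both factors. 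For class (a) the key fact is Lemma \ref{lem com 1} together with the basic property that a derivative of a smooth homogeneous function of degree $\eta\leq 0$ by $\del^K L^M$ is still smooth and homogeneous, of degree $\eta-|K|\leq 0$; in particular the $(s/t)^2$ factor in front of $\del_\alpha\hu^{\beta\gamma}$ is preserved (up to adding homogeneous coefficients of degree $\leq 0$) under $L^{J}$-differentiation because $(t/s)\del^IL^J(s/t)$ is bounded and homogeneous of degree $\leq 0$, while each $\del^I$-derivative gains an extra $t^{-1}$ which can be absorbed in the $t^{-1}|\del^{I'}L^{J'}\hu|$-type terms.

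Next I would use the commutator estimates of Lemma \ref{lem com 2} to pull $\del^IL^J$ through $\delu_a$ in the second term. Specifically, identity \eqref{ineq com 2.2} gives
$$
\bigl|\del^IL^J\delu_c\hu^{\beta\gamma}\bigr|\leq C\!\!\!\sum_{|J'|<|J|,a\atop|I'|\leq|I|}\!\!|\delu_a\del^{I'}L^{J'}\hu^{\beta\gamma}|+Ct^{-1}\!\!\!\sum_{|I'|\leq|I|\atop|J'|\leq|J|}\!\!|\del^{I'}L^{J'}\hu^{\beta\gamma}|,
$$
each contribution of which is already of one of the two forms appearing on the right of \eqref{ineq wave-condition 2a}. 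The linear term $t^{-1}\hu^{\alpha\beta}$ is treated identically: differentiation by $\del^IL^J$ (combined with the smallness assumption converting $\del^{I'}L^{J'}\hu^{\alpha\beta}$ into $|\del^{I'}L^{J'}\hu|$) yields one of the three ``linear'' terms on the right of \eqref{ineq wave-condition 2a}. For the two quadratic term-types $\hu^{\alpha\beta}\del_\gamma\hu^{\alpha'\beta'}$ and $t^{-1}h_{\alpha\beta}\hu^{\alpha'\beta'}$ the Leibniz rule distributes $\del^IL^J$ over the product and produces exactly sums of $|\del^{I_1}L^{J_1}\hu|\cdot|\del\del^{I_2}L^{J_2}\hu|$ with $|I_1|+|I_2|\leq|I|$, $|J_1|+|J_2|\leq|J|$ (the extra $t^{-1}$ in the second quadratic type gets absorbed). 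This completes the bound for $|\del^IL^J\del_t\hu^{00}|$.

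Finally, to pass from $\del^IL^J\del_t\hu^{00}$ to $\del_t\del^IL^J\hu^{00}$ one writes
$$
\del_t\del^IL^J\hu^{00}=\del^IL^J\del_t\hu^{00}+[\del_t,\del^IL^J]\hu^{00},
$$
and invokes \eqref{pre lem commutator pr2-ZZ}, which expresses $[L^J,\del_t]$ as a combination of $\del_\gamma L^{J'}$ with $|J'|<|J|$; this commutator contributes only terms of the form $|\del\del^{I'}L^{J'}\hu|$ with $|J'|<|J|$, and such contributions are already dominated by the right-hand side of \eqref{ineq wave-condition 2a} (indeed they are precisely of the $(s/t)^2|\del\del^{I'}L^{J'}\hu|$-type once we bound $1\leq (t/s)^2$ and then reabsorb using the smallness of $\hu$, or alternatively one can simply enlarge the right-hand side by such lower-order terms from the outset).

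The only genuinely delicate point, and the main obstacle to writing this out cleanly, is bookkeeping: one must verify that when $\del^IL^J$ falls on the homogeneous transition coefficients $\Phi,\Psi$ buried inside the definition of $\hu^{\alpha\beta}$ (relating $\hu^{\alpha\beta}$ to $h^{\alpha\beta}$) and inside the products $\delu_a\hu^{\beta\gamma}=t^{-1}L_a\hu^{\beta\gamma}$, the extra derivatives neither worsen the degree of homogeneity nor produce spurious ``bad'' derivative terms outside the structure of \eqref{ineq wave-condition 2a}. This is precisely what Lemma \ref{lem com 1} together with the homogeneity argument of Section \ref{subsec ineq-homo} guarantees, so the proof reduces to combining Lemmas \ref{lem wave-condition 1}, \ref{lem com 1}, and \ref{lem com 2} in the order above.
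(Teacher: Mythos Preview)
Your overall strategy is exactly the one the paper uses: apply $\del^IL^J$ to the decomposition of Lemma~\ref{lem wave-condition 1}, control the resulting pieces using the homogeneity of the coefficients (Lemma~\ref{lem com 1}) and the commutator estimates (Lemma~\ref{lem com 2}), and then pass from $\del^IL^J\del_t\hu^{00}$ to $\del_t\del^IL^J\hu^{00}$ by commuting. Your treatment of the five term-types in \eqref{eq wave-condition 0.5} is correct and, if anything, more detailed than the paper's sketch.

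There is, however, a genuine gap in your final paragraph. The commutator $[\del_t,\del^IL^J]\hu^{00}$ produces terms of the form $\del_\gamma\del^IL^{J'}\hu^{00}$ with $|J'|<|J|$, and these are \emph{not} bounded by $(s/t)^2|\del\,\del^{I'}L^{J'}\hu|$: the inequality $1\le(t/s)^2$ goes the wrong way (since $(s/t)^2\le 1$ in $\Kcal$), so you cannot recover the $(s/t)^2$ factor that appears on the right-hand side of \eqref{ineq wave-condition 2a}. Nor can you ``enlarge the right-hand side from the outset'' without changing the statement of the lemma. The correct argument is an induction on $|J|$: for $\gamma=0$ the term $\del_t\del^IL^{J'}\hu^{00}$ is exactly the left-hand side at strictly smaller $|J'|$ and is handled by the inductive hypothesis (the base case $|J|=0$ being trivial since $\del_t$ commutes with $\del^I$); for $\gamma=a$ write $\del_a=\delu_a-(x^a/t)\del_t$, reduce the $\del_t$ part to the previous case, and bound $\delu_a\del^IL^{J'}\hu^{00}$ by $|\del^{I'}L^{J''}\delu\hu|$-type terms via \eqref{ineq com 2.2}. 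This is what the paper means by ``deduced from \eqref{ineq wave-condition 2a} and the commutator estimates.''
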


\begin{proof}
This result is also a direct consequence of Lemma \ref{lem wave-condition 1}. We derive the expression of $\del_t \hu^{00}$ which is a linear combination of the terms in \eqref{eq wave-condition 0.5} with smooth and homogeneous coefficients of degree $\leq 0$. So, $\del^IL^J\del_t \hu^{00}$ is again a linear combination of the following terms with smooth and homogeneous coefficients of degree $\leq |I|$ (since $\del^IL^J$ acts on a $0$-homogeneous function gives a $|I|$-homogeneous function):

$\del^{I'}L^{J'} \big((s/t)^2\del_{\alpha} \hu^{\beta \gamma} \big)$,
$\del^{I'}L^{J'} \big(\delu_a \hu^{\beta \gamma} \big)$,
$t^{-1} \del^{I'}L^{J'} \big(\hu^{\alpha \beta} \big)$,
$\del^{I'}L^{J'} \big(\hu^{\alpha \beta} \del_{\gamma} \hu^{\alpha'\beta'} \big)$, 
$t^{-1} \del^{I'}L^{J'} \big(h_{\alpha \beta} \hu^{\alpha'\beta'} \big)$

\vskip.15cm

\noindent with $|I'| \leq |I|$ and $|J'| \leq |J|$. We observe that 
$$
|\del^{I'}L^{J'} \big((s/t)^2\del_{\alpha} \hu^{\beta \gamma} \big)| \leq C(s/t)^2\sum_{|I''| \leq|I'|\atop |J''| \leq |J'|}|\del^{I''}L^{J''} \big(\del_{\alpha} \hu^{\beta \gamma} \big)|. 
$$
The second, fourth, and last terms are to be bounded by the commutator estimates in Lemma \ref{lem com 2}.
The estimate for $\del_t\del^IL^J\hu^{00}$ is deduced from \eqref{ineq wave-condition 2a} and the commutator estimates.
\end{proof}


\subsection{Revisiting the structure of the quasi-null terms}

In this section, we consider the estimates on quasi-null terms $P_{\alpha \beta}$ together with the wave gauge condition and we use wave coordinate estimates. We treat first the term $\gu^{\alpha \alpha'} \del_t\gu_{\beta \beta'}$ and formulate the wave gauge condition in the form:
\bel{eq wave-condition 5} 
g^{\alpha \beta} \del_{\alpha}h_{\beta \gamma} = \frac{1}{2}g^{\alpha \beta} \del_{\gamma}h_{\alpha \beta}.
\ee

\begin{lemma} \label{lem wave-condition 2}
There exists a positive constant $\vep_w\geq 0$ such that if
$
|h| + |\del h| \leq \vep_w,
$
and the wave gauge condition \eqref{eq wave-condition 5} holds, then the quasi-null term $\gu^{\alpha \alpha'} \gu^{\beta \beta'} \del_t\gu_{\alpha \alpha'} \del_t\gu_{\beta \beta'}$ is a linear combination of terms  
\bel{eq wave-condition lem 2}
{\sl GQS}_h(0,0), \quad  {\sl Com}(0,0), \quad {\sl Cub}(0,0), \quad \gu^{0a} \delu_0\gu_{0a} \gu^{0b} \delu_0\gu_{0b}
\ee
with smooth and homogeneous coefficients of degree $\leq 0$.
\end{lemma}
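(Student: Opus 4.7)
The quasi-null expression is a perfect square: introducing the scalar $T := \gu^{\alpha\alpha'}\del_t\gu_{\alpha\alpha'}$, one has
\[
\gu^{\alpha\alpha'}\gu^{\beta\beta'}\del_t\gu_{\alpha\alpha'}\del_t\gu_{\beta\beta'} = T^2.
\]
The plan is therefore to first decompose the single trace $T$ into a sum of ``good'' contributions (in the sense of the classes ${\sl GQS}_h$, ${\sl Com}$, ${\sl Cub}$ introduced above) plus a single irreducible bad term proportional to $\gu^{0a}\delu_0\gu_{0a}$, and then to expand $T^2$ and check that every cross-product falls into one of the declared classes.

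Two algebraic identities will drive the decomposition. First, differentiating $\gu^{\alpha\beta}\gu_{\beta\gamma}=\delta^\alpha_\gamma$ in $t$ and taking the trace gives $T = -\gu_{\alpha\alpha'}\del_t\gu^{\alpha\alpha'}$. Second, the wave-gauge identity \eqref{eq wave-condition 1} evaluated at $\gamma=0$, modulo ${\sl Com}(0,0)$ corrections arising from the transition matrix $\Phi^{\alpha'}_\alpha$, yields $\gu_{\beta 0}\delu_\alpha\gu^{\alpha\beta} = \tfrac{1}{2}\gu_{\alpha\beta}\delu_0\gu^{\alpha\beta}$. Combining the two and splitting the derivative index produces
\[
T = -2\gu_{00}\delu_0\gu^{00} - 2\gu_{b0}\delu_0\gu^{0b} - 2\gu_{\beta 0}\delu_a\gu^{a\beta} + {\sl Com}(0,0).
\]
The last sum carries the tangential derivative $\delu_a$ and therefore contributes to ${\sl GQS}_h(0,0)$ upon multiplication by $T$. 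In the first term, writing $\gu_{00}=-1+\hu_{00}$ isolates $2\delu_0\gu^{00}$, which is itself one of the ``good'' linear combinations controlled by Lemma~\ref{lem wave-condition 1}, while the correction $-2\hu_{00}\delu_0\gu^{00}$ gives a ${\sl Cub}(0,0)$ contribution once multiplied by $T$.

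The only irreducible piece is $-2\gu_{b0}\delu_0\gu^{0b}$. I would convert this into $\gu^{0a}\delu_0\gu_{0a}$ by differentiating the matrix-inversion identity $\gu^{00}\gu_{00}+\gu^{0a}\gu_{a0}=1$ in $t$ and using $\gu^{00}=-1+\hu^{00}$, $\gu_{00}=-1+\hu_{00}$ together with the algebraic expansion $\hu^{00}=-\hu_{00}+O(h^2)$ (which gives $\delu_0\hu^{00}+\delu_0\hu_{00}=O(h\,\del h)$). This yields $-2\gu_{b0}\delu_0\gu^{0b} = 2\gu^{0a}\delu_0\gu_{0a} + (\text{good})_1$, where $(\text{good})_1\in {\sl GQS}_h+{\sl Com}+{\sl Cub}$ after invoking Lemma~\ref{lem wave-condition 1} to handle $\delu_0\hu^{00}$. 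Collecting gives the decomposition $T = 2\gu^{0a}\delu_0\gu_{0a} + R$ with $R\in{\sl GQS}_h(0,0)+{\sl Com}(0,0)+{\sl Cub}(0,0)$, so that
\[
T^2 = 4\gu^{0a}\delu_0\gu_{0a}\,\gu^{0b}\delu_0\gu_{0b} + 4R\cdot\gu^{0a}\delu_0\gu_{0a} + R^2,
\]
which is the claimed decomposition once the two cross-products are verified to lie in the good classes.

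The main technical obstacle lies in this last verification: $R$ is a linear combination of five structurally different term-types provided by Lemma~\ref{lem wave-condition 1} (namely $(s/t)^2\del\hu$, $\delu\hu$, $t^{-1}\hu$, $\hu\,\del\hu$, $t^{-1}h\,\hu$), and each must be paired with $\gu^{0a}\delu_0\gu_{0a}\simeq -(x^a/t)\delu_0\hu_{0a}+\hu^{0a}\delu_0\hu_{0a}$ and allocated to the correct class. The essential checks are that the $(s/t)^2\del\hu$ and $\delu\hu$ contributions combine with $\delu_0\hu_{0a}$ to yield ${\sl GQS}_h(0,0)$; that the $t^{-1}\hu$ and $t^{-1}h\,\hu$ contributions generate $t^{-1}$-weighted products of the form $h\cdot\del h$ or $h\cdot h$ that fit the definition of ${\sl Com}(0,0)$ after absorbing bounded homogeneous factors of degree $\leq 0$; and that the $\hu\,\del\hu$ contributions, being already quadratic in $h$, pair with the $h$-order factor $\gu^{0a}\delu_0\gu_{0a}$ to deliver genuine ${\sl Cub}(0,0)$ terms.
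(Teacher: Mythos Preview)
Your overall strategy---recognizing the perfect-square structure $T^2$, decomposing $T$ via the wave gauge, and then expanding---is sound and parallels the paper's. However, your conversion step contains frame-dependent errors that, while they happen not to invalidate the final conclusion, mean the argument as written is incorrect.

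In the semi-hyperboloidal frame one has $\minu^{00}=-(s/t)^2$, not $-1$; so $\gu^{00}=-(s/t)^2+\hu^{00}$, contrary to what you assert. More importantly, the relation $\hu^{00}=-\hu_{00}+O(h^2)$ is a Cartesian-frame identity and fails here: the correct linearization is $\hu^{00}=-\minu^{0\alpha}\minu^{0\beta}\hu_{\alpha\beta}+O(h^2)$, which mixes in $\hu_{0a}$ and $\hu_{ab}$. Consequently your claimed cancellation $\delu_0\hu^{00}+\delu_0\hu_{00}=O(h\,\del h)$ is false. The matrix-inversion conversion still works, but for a different reason: differentiating $\gu^{0\alpha}\gu_{\alpha0}=1$ correctly gives
\[
-2\gu_{a0}\delu_0\gu^{0a}=2\gu^{0a}\delu_0\gu_{a0}+2\gu_{00}\delu_0\gu^{00}+2\gu^{00}\delu_0\gu_{00},
\]
and the last two terms yield $-2\delu_0\hu^{00}$ (handled by Lemma~\ref{lem wave-condition 1}) plus $-2(s/t)^2\delu_0\hu_{00}$ (a genuine ${\sl GQS}_h$ contribution you have missed) plus $t^{-1}$- and $h\,\del h$-type corrections.

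The paper avoids this detour entirely by working with the \emph{covariant} form~\eqref{eq wave-condition 5} of the wave gauge from the outset. Writing it in the semi-hyperboloidal frame at $\gamma=0$ gives directly
\[
\gu^{\alpha\beta}\del_t\hu_{\alpha\beta}=2\minu^{\alpha\beta}\delu_\alpha\hu_{\beta0}+(\text{terms generating }{\sl Cub}+{\sl Com}),
\]
so after squaring, the only non-null piece is $4(\minu^{0a}\delu_0\hu_{0a})(\minu^{0b}\delu_0\hu_{0b})$: the term $\minu^{00}\delu_0\hu_{00}$ already carries the factor $(s/t)^2$, and $\minu^{a\beta}\delu_a\hu_{\beta0}$ carries a tangential derivative---both ${\sl GQS}_h$. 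No matrix inversion, no appeal to Lemma~\ref{lem wave-condition 1}, and no covariant/contravariant conversion are needed. Your route through the contravariant components and back is correct in principle but introduces exactly the algebraic pitfalls you fell into.
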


\begin{proof} The relation \eqref{eq wave-condition 5} can be written in the semi-hyperboloidal frame in the form:
\be
\gu^{\alpha \beta} \delu_{\alpha} \hu_{\beta \gamma} + \Phi_{\gamma}^{\gamma'}g^{\alpha \beta} \del_{\alpha} \left(\Psi_{\beta}^{\beta'} \Psi_{\gamma'}^{\gamma''} \right) \hu_{\beta'\gamma''}
= \frac{1}{2} \gu^{\alpha \beta} \delu_{\gamma} \hu_{\alpha \beta} + \frac{1}{2}g^{\alpha \beta} \delu_{\gamma} \left(\Psi_{\alpha}^{\alpha'} \Psi_{\beta}^{\beta'} \right) \hu_{\alpha'\beta'}.
\ee
We fix $\gamma = 0$ and see that
$$
\aligned
\gu^{\alpha \beta} \del_t\hu_{\alpha \beta} & =  2\gu^{\alpha \beta} \delu_{\alpha} \hu_{0\beta} + 2\Phi_0^{\gamma'}g^{\alpha \beta} \del_{\alpha} \left(\Psi_{\beta}^{\beta'} \Psi_{\gamma'}^{\gamma''} \right) \hu_{\beta'\gamma''} - g^{\alpha \beta} \del_t\left(\Psi_{\alpha}^{\alpha'} \Psi_{\beta}^{\beta'} \right) \hu_{\alpha'\beta'}.
\endaligned
$$
This identity can be written as
\be
\aligned
\gu^{\alpha \beta} \del_t\hu_{\alpha \beta} & =  2\minu^{\alpha \beta} \delu_{\alpha} \hu_{\beta 0}
 +  2\hu^{\alpha \beta} \delu_\alpha \hu_{\beta 0}
+2\Phi_0^{\gamma'}m^{\alpha \beta} \del_{\alpha} \left(\Psi_{\beta}^{\beta'} \Psi_{\gamma'}^{\gamma''} \right) \hu_{\beta'\gamma''}
- m^{\alpha \beta} \del_t\left(\Psi_{\alpha}^{\alpha'} \Psi_{\beta}^{\beta'} \right) \hu_{\alpha'\beta'}
\\
& \quad + 2\Phi_0^{\gamma'}h^{\alpha \beta} \del_{\alpha} \left(\Psi_{\beta}^{\beta'} \Psi_{\gamma'}^{\gamma''} \right) \hu_{\beta'\gamma''}
- h^{\alpha \beta} \del_t\left(\Psi_{\alpha}^{\alpha'} \Psi_{\beta}^{\beta'} \right) \hu_{\alpha'\beta'}. 
\endaligned
\ee
In the right-hand side, except for the first term, we have at least quadratic terms or terms containing an extra decay factor such as $\del_{\alpha} \left(\Psi_{\beta}^{\beta'} \Psi_{\gamma}^{\gamma''} \right)$. So, we see that in
$\gu^{\alpha \alpha'} \gu^{\beta \beta'} \del_t\gu_{\alpha \alpha'} \del_t\gu_{\beta \beta'}$
the only term to be concerned about is
$$
4\minu^{\alpha \alpha'} \minu^{\beta \beta'} \delu_\alpha \hu_{\alpha' 0} \delu_{\beta} \hu_{\beta'0}.
$$
The remaining terms are quadratic in $\hu^{\alpha \beta}$, $\hu_{\alpha \beta}$ or linear terms on $\hu_{\alpha \beta}$ with decreasing coefficients such as $\del_{\alpha} \left(\Psi_{\beta}^{\beta'} \Psi_{\gamma}^{\gamma''} \right)$. Then we also see that when $|\hu|$ sufficiently small,  $\hu^{\alpha \beta}$ can be expressed as a power series of $\hu_{\alpha \beta}$  (without zero order), which is itself a linear combination of $h_{\alpha \beta}$ with smooth and homogeneous coefficients of degree $\leq 0$. So, when $|h|$ sufficiently small, $\hu^{\alpha \beta}$ can be expressed as a power series of $h_{\alpha \beta}$ (without $0$ order) with smooth and homogeneous coefficients of degree $\leq 0$. We conclude that in the product $\gu^{\alpha \alpha'} \gu^{\beta \beta'} \del_t\gu_{\alpha \alpha'} \del_t\gu_{\beta \beta'}$, the remaining terms apart from $4\minu^{\alpha \alpha'} \minu^{\beta \beta'} \delu_\alpha \hu_{\alpha' 0} \delu_{\beta} \hu_{\beta'0}$ are contained in ${\sl Cub}(0,0)$ or ${\sl Com}(0,0)$.

We focus on the term $4\minu^{\alpha \alpha'} \minu^{\beta \beta'} \delu_\alpha \hu_{\alpha' 0} \delu_{\beta} \hu_{\beta'0}$. We have 
$$
\aligned
&4\big(\minu^{\alpha \alpha'} \delu_{\alpha} \hu_{\alpha'0} \big) \big(\minu^{\beta \beta'} \delu_{\beta} \hu_{\beta'0} \big)
\\
& =  4\big(\minu^{a \alpha'} \delu_a \hu_{\alpha'0} + \minu^{00} \delu_0\hu_{00} + \minu^{0a'} \delu_0\hu_{0a'} \big)
\times
\big(\minu^{b\beta'} \delu_b\hu_{\beta'0} + \minu^{00} \delu_0\hu_{00} + \minu^{0b} \delu_0\hu_{0b} \big)
\\
& =  4\big(\minu^{a \alpha'} \delu_a \hu_{\alpha'0} + \minu^{00} \delu_0\hu_{00} \big) \big(\minu^{b\beta'} \delu_b\hu_{\beta'0} + \minu^{00} \delu_0\hu_{00} + \minu^{0b} \delu_0\hu_{0b} \big)
\\
& \quad + 4\minu^{0a'} \delu_0\hu_{0a'} \big(\hu^{b\beta'} \delu_b\hu_{\beta'0} + \minu^{00} \delu_0\hu_{00} \big)
   + 4 \minu^{0a'} \delu_0\hu_{0a'} \minu^{0b} \delu_0\hu_{0b}.
\endaligned
$$
The last term is already presented in the \eqref{eq wave-condition lem 2}. The remaining terms are null quadratic terms (recall that $\minu^{00} = (s/t)^2$). 
\end{proof}

Now we combine Lemma \ref{lem P 1} with Lemmas \ref{lem wave-condition 1} and \ref{lem wave-condition 2}.

\begin{lemma} \label{lem P 2}
There exists a positive constant $\vep_w>0$ such that if
$
|h| + |\del h| \leq \vep_w,
$
then the quasi-null term $\Pu_{00}$ is a linear combination of the following terms with smooth and homogeneous coefficients of order $\leq 0$:
\bel{eq P 2}
{\sl GQS}_h(0,0), \quad {\sl Cub}(0,0), \quad {\sl Com}(0,0), \quad \del_t\hu_{a \alpha} \del_t\hu_{b\beta}.
\ee
The term $\Pu_{a \beta}$ is a linear combination of the following terms with smooth and homogeneous coefficients of order $\leq 0$:
\bel{eq P 2.5}
{\sl GQS}_h(0,0), \quad {\sl Cub}(0,0), \quad {\sl Com}(0,0).
\ee
\end{lemma}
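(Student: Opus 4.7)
The plan is to read Lemma~\ref{lem P 2} as a wave-gauge refinement of Lemma~\ref{lem P 1}: I start from the decomposition supplied by the latter and then eliminate the two contractions it leaves uncontrolled. For $\Pu_{a\beta}$ the assertion is essentially immediate, since Lemma~\ref{lem P 1} already states that $\Pu_{a\beta}$ is a linear combination of ${\sl GQS}_h(0,0)$ and ${\sl Cub}(0,0)$ terms, both of which lie inside the admissible classes in the present statement.

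For $\Pu_{00}$, Lemma~\ref{lem P 1} reduces matters to controlling the two residues $A := \gu^{\gamma\gamma'}\gu^{\delta\delta'}\del_t\hu_{\gamma\gamma'}\del_t\hu_{\delta\delta'}$ (the squared trace) and $B := \minu^{\gamma\gamma'}\minu^{\delta\delta'}\del_t\hu_{\gamma\delta}\del_t\hu_{\gamma'\delta'}$ (the Minkowski full contraction), and I would dispose of them separately. For $A$ I would apply Lemma~\ref{lem wave-condition 2} directly (after trading $\del_t\gu$ for $\del_t\hu$, which costs only homogeneous coefficients of degree $-1$ that land in ${\sl Com}(0,0)$): that lemma represents $A$ modulo ${\sl GQS}_h(0,0)$, ${\sl Com}(0,0)$ and ${\sl Cub}(0,0)$ by the single term $\gu^{0a}\delu_0\gu_{0a}\,\gu^{0b}\delu_0\gu_{0b}$, and expanding $\gu = \minu + \hu$, $\delu_0 = \del_t$ exhibits this remainder as a smooth homogeneous coefficient of degree $0$ times $\del_t\hu_{0a}\del_t\hu_{0b}$ plus further commutator and cubic pieces, i.e.\ precisely of the allowed form $\del_t\hu_{a\alpha}\del_t\hu_{b\beta}$ with $\alpha = \beta = 0$.

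For $B$ I would proceed by case analysis over the sixteen index configurations $(\gamma,\gamma',\delta,\delta')$, using the explicit components $\minu^{00} = -(s/t)^2$, $\minu^{0a} = -x^a/t$, $\minu^{ab} = \delta^{ab}$ in the semi-hyperboloidal frame. Any configuration carrying a factor of $\minu^{00}$ inherits the $(s/t)^2$ weight and lands in ${\sl GQS}_h(0,0)$; configurations in which every $\minu^{\cdot\cdot}$ factor has a spatial upper index force both contracted derivatives $\del_t\hu_{\gamma\delta}$ and $\del_t\hu_{\gamma'\delta'}$ to carry at least one spatial index, so such terms already have the form $\del_t\hu_{a\alpha}\del_t\hu_{b\beta}$. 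The only delicate configuration is $(\gamma,\delta) = (0,0)$ paired with $(\gamma',\delta') = (a,b)$, which contributes the cross-product $(x^a x^b / t^2)\,\del_t\hu_{00}\,\del_t\hu_{ab}$ in which the first derivative factor carries no spatial index. To absorb it I would invoke the algebraic identity, valid for $|\hu|$ small, $(x^a x^b / t^2)\hu_{ab} = -\hu^{00} - (s/t)^4\hu_{00} - 2(s/t)^2(x^a/t)\hu_{0a} + O(\hu^2)$, which follows from expanding $\hu^{\alpha\beta}$ in terms of $\hu_{\lambda\sigma}$ and using $(s/t)^2 + (r/t)^2 = 1$. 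Differentiating in $t$, multiplying by $\del_t\hu_{00}$, and then invoking Lemma~\ref{lem wave-condition 1} to re-express $\del_t\hu^{00}$ as a combination of $(s/t)^2\del\hu$, $\delu\hu$, $t^{-1}\hu$ and $\hu\cdot\del\hu$ converts the cross-product into pieces of types ${\sl GQS}_h(0,0)$, ${\sl Com}(0,0)$, ${\sl Cub}(0,0)$, and $\del_t\hu_{a\alpha}\del_t\hu_{b\beta}$.

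The main obstacle is exactly this last step, namely, verifying that every piece arising from the wave-gauge substitution of $\del_t\hu^{00}$ in the product $\del_t\hu_{00}\cdot\del_t\hu^{00}$, together with the derivatives of the homogeneous coefficients $(x^a x^b/t^2)$ and $(s/t)^2(x^a/t)$ and the $O(\hu^2)$ tail of the inversion formula (legitimate thanks to $|h| + |\del h|\leq \vep_w$), truly lies in one of the four prescribed classes. The key mechanism is that, by Lemma~\ref{lem wave-condition 1}, every ingredient of $\del_t\hu^{00}$ carries either an explicit $(s/t)^2$ weight, a tangential $\delu$-derivative, a $t^{-1}$ decay, or an extra factor of $\hu$; each of these traits precisely matches one of the admissible classes when paired with the bad factor $\del_t\hu_{00}$, which is exactly the structural point making the wave gauge work here.
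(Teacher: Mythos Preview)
Your overall strategy coincides with the paper's: start from Lemma~\ref{lem P 1}, dispose of the trace term $A=\gu^{\gamma\gamma'}\gu^{\delta\delta'}\del_t\hu_{\gamma\gamma'}\del_t\hu_{\delta\delta'}$ via Lemma~\ref{lem wave-condition 2}, and case-split the contraction $B=\minu^{\gamma\gamma'}\minu^{\delta\delta'}\del_t\hu_{\gamma\delta}\del_t\hu_{\gamma'\delta'}$ according to which superscript pairs equal $(0,0)$. The paper's proof stops there, simply asserting that whenever $(\gamma,\gamma')\neq(0,0)$ and $(\delta,\delta')\neq(0,0)$ one lands on $\del_t\hu_{a\alpha}\del_t\hu_{b\beta}$. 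You go further and notice a subtlety that this brief argument does not address: the configurations $(\gamma,\gamma',\delta,\delta')=(0,a,0,b)$ and $(a,0,b,0)$ lie in that case yet yield $\minu^{0a}\minu^{0b}\del_t\hu_{00}\del_t\hu_{ab}=(x^ax^b/t^2)\del_t\hu_{00}\del_t\hu_{ab}$, where the factor $\del_t\hu_{00}$ carries no spatial index and no $(s/t)^2$ weight.

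Your remedy for this cross-term is correct. The identity $(x^ax^b/t^2)\hu_{ab}=-\hu^{00}-(s/t)^4\hu_{00}-2(s/t)^2(x^a/t)\hu_{0a}+O(\hu^2)$ follows from $\hu^{00}=-\minu^{0\alpha}\minu^{0\beta}\hu_{\alpha\beta}+O(\hu^2)$ with the explicit values $\minu^{00}=-(s/t)^2$, $\minu^{0a}=-x^a/t$. Differentiating and multiplying by $\del_t\hu_{00}$ produces, besides ${\sl GQS}_h(0,0)$, ${\sl Com}(0,0)$ and ${\sl Cub}(0,0)$ pieces, the single product $\del_t\hu_{00}\cdot\del_t\hu^{00}$; Lemma~\ref{lem wave-condition 1} then rewrites $\del_t\hu^{00}$ as a combination of $(s/t)^2\del\hu$, $\delu\hu$, $t^{-1}\hu$ and $\hu\cdot\del\hu$, and each of these paired with $\del_t\hu_{00}$ falls into one of the four admissible classes after converting $\hu^{\alpha\beta}$ back to $h_{\alpha\beta}$ (linear part with degree-$0$ coefficients plus higher order). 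So your extra step is a genuine, and sound, completion of the case analysis for $B$.
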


\begin{proof} In view of Lemma \ref{lem P 1}, we need to focus on 
$
\gu^{\gamma \gamma'} \gu^{\delta \delta'} \del_t\hu_{\gamma \gamma'} \del_t\hu_{\delta \delta'}$
and 
$\minu^{\gamma \gamma'} \minu^{\delta \delta'} \del_t\hu_{\gamma \delta} \del_t\hu_{\gamma'\delta'}$. 
The first term is covered by Lemma \ref{lem wave-condition 2} and the second term is bounded as follows: we recall that
$$
\big|\del^IL^J \minu^{00} \big| = C(I,J)(s/t)^2, \quad \big|\minu^{\alpha \beta} \big| \leq C.
$$
Then, when $(\gamma, \gamma') =(0,0)$ or $(\delta, \delta') =(0,0)$, we have $\minu^{\gamma \gamma'} \minu^{\delta \delta'} \del_t\hu_{\gamma \delta} \del_t\hu_{\gamma'\delta'}$ becomes a null term. When $(\gamma, \gamma') \neq (0,0)$ and $(\delta, \delta') \neq (0,0)$, we denote by $(\gamma, \gamma') =(a, \alpha)$ and $(\delta, \delta') =(b, \beta)$, so we see that $\minu^{\gamma \gamma'} \minu^{\delta \delta'} \del_t\hu_{\gamma \delta} \del_t\hu_{\gamma'\delta'}$ is a linear combination of $\del_t\hu_{a \alpha} \del_t\hu_{b\beta}$ with homogeneous coefficients of degree zero.
\end{proof}

Finally, we emphasize that, in order to control the quasi-null terms, we must control the term $\del_t\hu_{a \alpha} \del_t\hu_{b\beta}$ which is {\sl not} a null term. This term will be bounded by refined decay estimates on $\del\hu_{a \alpha}$, and we refer to our forthcoming analysis in Section \ref{section-7}. 


\section{Initialization of the Bootstrap Argument}
\label{section-4}

\subsection{The bootstrap assumption and the basic estimates}

\subsubsection*{The bootstrap assumption}

 From now on, we assume that in a hyperbolic time interval $[2,s^*]$, the following energy bounds hold for $|I|+|J| \leq N$. Here $N\geq 14$, $(C_1, \vep)$ is a pair of positive constants and $1/50\leq \delta \leq 1/20$, say.
\begin{subequations} \label{ineq energy assumption 1}
\bel{ineq energy assumption 11}
E^*_M(s, \del^IL^Jh_{\alpha \beta})^{1/2} \leq C_1\vep s^{\delta},
\ee
\bel{ineq energy assumption 12}
E_{M, c^2}(s, \del^IL^J\phi)^{1/2} \leq C_1\vep s^{1/2 +\delta}. 
\ee
\end{subequations}
For $|I|+|J| \leq N-4$ we have (in which \eqref{ineq energy assumption 21} is repeated from \eqref{ineq energy assumption 11} for clarity in the presentation) 
\begin{subequations} \label{ineq energy assumption 2}
\bel{ineq energy assumption 21}
E^*_M(s, \del^IL^Jh_{\alpha \beta})^{1/2} \leq C_1\vep s^{\delta},
\ee
\bel{ineq energy assumption 22}
E_{M, c^2}(s, \del^IL^J\phi)^{1/2} \leq C_1\vep s^{\delta}.
\ee
\end{subequations}
In combination with Lemma \ref{lem energy 3}, we see that the total energy of $h_{\alpha \beta}$ on the hyperboloid $\Hcal_s$ is bounded by
\bel{ineq energy assumption 3}
E_M(s, \del^IL^Jh_{\alpha \beta}) \leq CC_1\vep s^{\delta} + C m_S\leq 2C_1\vep s^{\delta},
\ee 
where we take $m_S\leq \vep$. In the following discussion, except if specified otherwise, the letter $C$ always represents a constant depending only on $N$. This constant may change at each occurrence.

\subsubsection*{Basic $L^2$ estimates of the first generation}
These estimates come directly from the above energy bounds.

For $|I|+|J| \leq N$, we have 
\begin{subequations} \label{ineq basic-L2 1 generation 1}
\bel{ineq basic-L2 1 generation 1 a}
\|(s/t) \del_\gamma \del^IL^J h_{\alpha \beta} \|_{L_f^2(\Hcal_s)} + \|\delu_a \del^IL^J h_{\alpha \beta} \|_{L_f^2(\Hcal_s)} \leq CC_1\vep s^{\delta},
\ee
\bel{ineq basic-L2 1 generation 1 b}
\|(s/t) \del_{\alpha} \del^IL^J \phi\|_{L_f^2(\Hcal_s)} + \|\delu_a \del^IL^J \phi\|_{L_f^2(\Hcal_s)} \leq CC_1\vep s^{1/2 +\delta},
\ee
\bel{ineq basic-L2 1 generation 1 c}
\|\del^IL^J \phi\|_{L_f^2(\Hcal_s)} \leq CC_1\vep s^{1/2 +\delta}.
\ee
\end{subequations}
For $|I|+|J| \leq N-1$, we have 
\bel{ineq basic-L2 1 generation 2}
\|\del_{\alpha} \del^IL^J \phi\|_{L_f^2(\Hcal_s)} \leq CC_1\vep s^{1/2 +\delta}.
\ee
For $|I|+|J| \leq N-4$, we have 
\bel{ineq basic-L2 1 generation 3}
\|(s/t) \del_{\alpha} \del^IL^J \phi\|_{L_f^2(\Hcal_s)} + \|\delu_a \del^IL^J \phi\|_{L_f^2(\Hcal_s)} \leq CC_1\vep s^{\delta}
\ee
and, for $|I|+|J| \leq N-5$, 
\bel{ineq basic-L2 1 generation 4}
\|\del_{\alpha} \del^IL^J \phi\|_{L_f^2(\Hcal_s)} \leq CC_1\vep s^{\delta}.
\ee


\subsubsection*{Basic $L^2$ estimates of the second generation}

These estimates come from the above $L^2$ bounds of the first generation combined with the commutator estimates presented in Lemma \ref{lem com 2}. 
For $|I|+|J| \leq N$, we obtain 
\begin{subequations} \label{ineq basic-L2 2 generation 1}
\bel{ineq basic-L2 2 generation 1 a}
\|(s/t) \del^IL^J\del_\gamma h_{\alpha \beta} \|_{L_f^2(\Hcal_s)} + \|\del^IL^J \delu_ah_{\alpha \beta} \|_{L_f^2(\Hcal_s)} \leq CC_1\vep s^{\delta},
\ee
\bel{ineq basic-L2 2 generation 1 b}
\|(s/t) \del^IL^J\del_{\alpha} \phi\|_{L_f^2(\Hcal_s)} + \|\del^IL^J \delu_a \phi\|_{L_f^2(\Hcal_s)} \leq CC_1\vep s^{1/2 +\delta},
\ee
\end{subequations}
while for $|I|+|J| \leq N-1$ (the second term in the left-hand side being bounded by \eqref{ineq 1 homo}) 
\bel{ineq basic-L2 2 generation 2}
\|\del^IL^J\del_{\alpha} \phi\|_{L_f^2(\Hcal_s)} + \|t \del^IL^J\delu_a \phi\|_{L_f^2(\Hcal_s)} \leq CC_1\vep s^{1/2 +\delta}. 
\ee

For $|I|+|J| \leq N-4$, we have 
\bel{ineq basic-L2 2 generation 3}
\|(s/t) \del^IL^J \del_{\alpha} \phi\|_{L_f^2(\Hcal_s)} + \|\del^IL^J\delu_a \phi\|_{L_f^2(\Hcal_s)} \leq CC_1\vep s^{\delta},
\ee
while, for $|I|+|J| \leq N-5$, again from \eqref{ineq 1 homo})
\bel{ineq basic-L2 2 generation 4}
\|\del^IL^J\del_{\alpha} \phi\|_{L_f^2(\Hcal_s)} + \|t \del^IL^J\delu_a \phi\|_{L_f^2(\Hcal_s)} \leq CC_1\vep s^{\delta}.
\ee


\subsubsection*{Basic $L^\infty$ estimates of the first generation}

For $|I|+|J| \leq N-2$, we obtain 
\begin{subequations} \label{ineq basic-sup 1 generation 1}
\bel{ineq basic-sup 1 generation 1 a}
\sup_{\Hcal_s^*} \big( t^{3/2} (s/t) \del_\gamma \del^IL^J h_{\alpha \beta} \big)
+ \sup_{\Hcal_s^*} \big( t^{3/2}  \delu_a \del^IL^J h_{\alpha \beta} \big) \leq CC_1\vep s^{\delta},
\ee
\bel{ineq basic-sup 1 generation 1 b}
\sup_{\Hcal_s} \big(t^{3/2}  (s/t) \del_{\alpha} \del^IL^J \phi\big)
+ \sup_{\Hcal_s} \big(t^{3/2}  \delu_a \del^IL^J \phi\big) \leq CC_1\vep s^{1/2 +\delta},
\ee
\bel{ineq basic-sup 1 generation 1 c}
\sup_{\Hcal_s} \big(t^{3/2} \del^IL^J \phi\big) \leq CC_1\vep s^{1/2 +\delta}.
\ee
\end{subequations}
For $|I|+|J| \leq N-3$, we have 
\bel{ineq basic-sup 1 generation 2}
\sup_{\Hcal_s} \big( t^{3/2}  \del_{\alpha} \del^IL^J \phi\big)
+\sup_{\Hcal_s} \big( t^{5/2} \delu_a \del^IL^J \phi\big) \leq CC_1\vep s^{1/2 +\delta}.
\ee
Here, the second term in the left-hand side is bounded by applying \eqref{ineq 1 homo} once more.
For $|I|+|J| \leq N-6$, we have 
\bel{ineq basic-sup 1 generation 3}
\sup_{\Hcal_s} \big(t^{3/2}  (s/t) \del_{\alpha} \del^IL^J \phi\big)
+\sup_{\Hcal_s} \big( t^{3/2}  \delu_a \del^IL^J \phi\big) \leq CC_1\vep s^{\delta}, 
\ee
while, for $|I|+|J| \leq N-7$,
\bel{ineq basic-sup 1 generation 4}
\sup_{\Hcal_s} \big( t^{3/2}  \del_{\alpha} \del^IL^J \phi\big) + \sup_{\Hcal_s} \big(t^{5/2} \delu_a \del^IL^J \phi\big) \leq CC_1\vep s^{\delta}.
\ee


\subsubsection*{Basic $L^\infty$ estimates of the second generation}

For $|I|+|J| \leq N-2$, we obtain 
\begin{subequations} \label{ineq basic-sup 2 generation 1}
\bel{ineq basic-sup 2 generation 1 a}
\sup_{\Hcal_s^*} \big(t^{1/2}|\del^IL^J\del_\gamma h_{\alpha \beta}|\big) \leq CC_1\vep s^{-1 + \delta},
\qquad
\sup_{\Hcal_s^*} \big(t^{3/2}|\del^IL^J\delu_a h_{\alpha \beta}|\big) \leq CC_1\vep s^{\delta},
\ee
\bel{ineq basic-sup 2 generation 1 b}
\sup_{\Hcal_s} \big(t^{1/2}|\del^IL^J\del_{\alpha} \phi|\big) \leq CC_1\vep s^{-1/2 + \delta},
\qquad
\sup_{\Hcal_s} \big(t^{3/2}|\del^IL^J\delu_a \phi|\big) \leq CC_1\vep s^{1/2 +\delta},
\ee
\bel{ineq basic-sup 2 generation 1 c}
\sup_{\Hcal_s} \big(t^{3/2}|\del^IL^J \phi|\big) \leq CC_1\vep s^{1/2 +\delta}.
\ee
\end{subequations}
For $|I|+|J| \leq N-3$, we have 
\bel{ineq basic-sup 2 generation 2}
\sup_{\Hcal_s} \big(t^{3/2}|\del^IL^J\del_{\alpha} \phi|\big)+\sup_{\Hcal_s} \big(t^{5/2}|\del^IL^J\delu_a \phi|\big) \leq CC_1\vep s^{1/2 +\delta}, 
\ee
while, for $|I|+|J| \leq N-6$,
\begin{subequations} \label{ineq basic-sup 2 generation 3}
\bel{ineq basic-sup 2 generation 3 a}
\sup_{\Hcal_s} \big(t^{1/2}|\del^IL^J \del_{\alpha} \phi|\big) \leq CC_1\vep s^{-1+\delta},
 \qquad
\sup_{\Hcal_s} \big(t^{3/2}|\del^IL^J \delu_a \phi|\big) \leq CC_1\vep s^{\delta},
\ee
\bel{ineq basic-sup 2 generation 3 b}
\sup_{\Hcal_s} \big(t^{3/2}|\del^IL^J \phi|\big) \leq CC_1\vep s^{\delta}.
\ee
\end{subequations}
For $|I|+|J| \leq N-7$, we find 
\bel{ineq basic-sup 2 generation 4}
\sup_{\Hcal_s} \big(t^{3/2}|\del^IL^J \del_{\alpha} \phi|\big)+
\sup_{\Hcal_s} \big(t^{5/2}|\del^IL^J\delu_a \phi|\big) \leq CC_1\vep s^{\delta}.
\ee
By \eqref{ineq 2 homo} and \eqref{ineq 3 homo}, the following bounds hold:
\bel{ineq L-2 homo-ineq 1}
\|\del^IL^J\delu_a \del_{\beta'}h_{\alpha \beta} \|_{L^2(\Hcal_s^*)} + \|\del^IL^J\del_{\beta'} \delu_ah_{\alpha \beta} \|_{L^2(\Hcal_s^*)} \leq CC_1\vep s^{-1+\delta},
\ee
\bel{ineq sup homo-ineq 1}
\sup_{\Hcal_s^*} \left(t^{3/2} \big|\del^IL^J\delu_a \del_{\beta'}h_{\alpha \beta} \big|\right) + \sup_{\Hcal_s^*} \left(t^{3/2} \big|\del^IL^J\del_{\beta'} \delu_ah_{\alpha \beta} \big|\right) \leq CC_1\vep s^{-1+\delta}.
\ee


\subsection{Estimates based on integration along radial rays} \label{subsec ray-inte}
For $|I|+|J| \leq N-2$,
\bel{ineq basic-sup-h}
|\del^IL^J h_{\alpha \beta}(t, x)| \leq CC_1\vep (s/t)t^{-1/2}s^{\delta} + Cm_St^{-1} \leq CC_1\vep(s/t)t^{-1/2}s^{\delta}.
\ee
This estimate is based on the following observation: 
$$
\big|\del_r\del^IL^J h_{\alpha \beta}(t, x) \big| \leq C\big|\del_\gamma \del^IL^J h_{\alpha \beta}(t, x) \big| 
\leq CC_1\vep t^{-1/2}s^{-1+\delta} \simeq CC_1\vep t^{-1+\delta/2}(t-r)^{-1/2 +\delta/2}.
$$
Then we integrate $\del_r\del^IL^J h_{\alpha \beta}$ along the radial rays $\{(t, \lambda x)|1\leq \lambda \leq (t-1)/|x|\}$. We see when $\lambda = (t-1)/|x|$, $\del_r\del^IL^J h_{\alpha \beta}(t, \lambda x) \simeq Cm_St^{-1}$ since $h_{\alpha \beta}$ coincides with the Schwarzschild metric and, by integration, \eqref{ineq basic-sup-h} holds. 


\section{Direct Control of Nonlinearities in the Einstein Equations}
\label{section-5}

\subsection{$L^\infty$ estimates}

With the above estimates, we are in a position to control the good nonlinear terms:
${\sl GQQ}_{hh}$, ${\sl GQQ}_{h\phi}$, ${\sl GQS}_h$, ${\sl QS}_{\phi}$, ${\sl Com}$, and ${\sl Cub}$.  

\begin{lemma} \label{lem 0 biliner}
When the basic sup-norm estimates hold, the following sup-norm estimates are valid for $k\leq N-2$: 
\bel{ineq 1 bilinear-sup}
\aligned
|{\sl GQS}_h(N-2,k)| \leq C(C_1\vep)^2t^{-2}s^{-1+2\delta}, \quad
|{\sl GQQ}_{hh}(N-2,k)| \leq C(C_1\vep)^2t^{-3}s^{2\delta},
\endaligned
\ee
\bel{ineq 2 bilinear-sup}
|{\sl QS}_{\phi}(N-2,k)| \leq C(C_1\vep)^2t^{-2}s^{-1/2 +2\delta},
\ee
\bel{ineq 2.5 bilinear-sup}
|{\sl GQQ}_{h\phi}(N-2,k)| \leq C(C_1\vep)^2 t^{-3}s^{2\delta}, 
\ee
\bel{ineq 3 bilinear-sup}
|{\sl Com}(N-2,k)| \leq C(C_1\vep)^2t^{-5/2}s^{-1+2\delta},
\ee
\bel{ineq 4 bilinear-sup}
%
|{\sl Cub}(N-2,k)| \leq C(C_1\vep)^2 t^{-5/2}s^{3\delta}.
\ee
\end{lemma}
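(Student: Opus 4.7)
The plan is to establish each of the six sup-norm bounds by applying the generalized Leibniz rule to the multi-derivative $\del^IL^J$ (with $|I|+|J|\leq N-2$) acting on the generating product, and then estimating each factor pointwise using the basic $L^\infty$ estimates of Section~\ref{section-4}. Since the total order is at most $N-2$, a Leibniz decomposition $(I,J)=(I_1,J_1)+(I_2,J_2)$ has at least one factor of order $\leq (N-2)/2\leq N-7$, so we may freely use both the sharp estimates \eqref{ineq basic-sup 2 generation 1 a}--\eqref{ineq basic-sup 2 generation 1 c} (on the high-order factor) and the improved ones \eqref{ineq basic-sup 2 generation 3 a}--\eqref{ineq basic-sup 2 generation 4} (on the low-order factor). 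The homogeneous coefficients of non-positive degree that multiply each term are uniformly bounded in $\Kcal$ by the lemma preceding Definition~\ref{def 1}, so they contribute only a harmless multiplicative constant. Throughout, I will use the geometric identity $s \geq \sqrt{t}$ valid on $\Kcal\cap \Hcal_s^\ast$ (since $r\leq t-1$ implies $s^2=t^2-r^2\geq 2t-1$) to convert between the various combinations of $s$- and $t$-powers.

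Concretely, for ${\sl GQS}_h(N-2,k)$ the decisive input is $|\del^{I_i}L^{J_i}\delu_a h_{\alpha\beta}|\lesssim C_1\vep\, t^{-3/2}s^{\delta}$ from \eqref{ineq basic-sup 2 generation 1 a}, and for the $(s/t)^2(\del_t h)(\del_t h)$ piece I group the $(s/t)$ with each $\del_t h$ and invoke $|(s/t)\del^{I_i}L^{J_i}\del_t h|\lesssim C_1\vep\, t^{-3/2}s^{\delta}$; multiplying gives $(C_1\vep)^2 t^{-3}s^{2\delta}$, which in turn is absorbed into $(C_1\vep)^2 t^{-2}s^{-1+2\delta}$ using $s\leq t$. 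For ${\sl GQQ}_{hh}(N-2,k)$ I combine \eqref{ineq basic-sup-h} ($|\del^{I_1}L^{J_1} h|\lesssim C_1\vep (s/t) t^{-1/2}s^{\delta}$) with the ``two good derivatives'' bound obtained from \eqref{ineq 2 homo}--\eqref{ineq 4 homo} and \eqref{ineq sup homo-ineq 1}, yielding $(C_1\vep)^2 (s/t)t^{-2}s^{-1+2\delta}\leq (C_1\vep)^2 t^{-3}s^{2\delta}$ after using $s\leq t$. For ${\sl QS}_\phi(N-2,k)$ the $\del_\mu\phi\,\del_\nu\phi$ piece gives $(C_1\vep)^2 t^{-1}s^{-1+2\delta}$ via \eqref{ineq basic-sup 2 generation 1 b}, and the $\phi^2 g_{\alpha\beta}$ piece gives $(C_1\vep)^2 t^{-3}s^{1+2\delta}$ via \eqref{ineq basic-sup 2 generation 1 c}; both are dominated by $(C_1\vep)^2 t^{-2}s^{-1/2+2\delta}$ after applying $s\geq\sqrt{t}$ (respectively $s\leq t$). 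The treatment of ${\sl GQQ}_{h\phi}(N-2,k)$ mirrors that of ${\sl GQQ}_{hh}$, using \eqref{ineq basic-sup 2 generation 1 b} in place of \eqref{ineq basic-sup 2 generation 1 a} on the $\phi$ factor.

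The commutator class ${\sl Com}(N-2,k)$ is a sum of the previously-treated quadratic terms multiplied by at least one explicit factor of $t^{-1}$; multiplying the raw estimates by this $t^{-1}$ and invoking $s^2\geq t$ to trade $t^{-1/2}$ for $s^{-1}$ reduces each monomial to the claimed form $C(C_1\vep)^2 t^{-5/2}s^{-1+2\delta}$. For ${\sl Cub}(N-2,k)$ I estimate exactly one factor of the cubic product pointwise by $|h|\lesssim 1$ or $|\phi|\lesssim 1$ (absorbing the additional $(C_1\vep)$ into the constant, which is legitimate since $C_1\vep\leq 1$), reducing the problem to a quadratic expression to which the previous bounds and the auxiliary estimate $|h|\leq CC_1\vep(s/t)t^{-1/2}s^{\delta}$ from \eqref{ineq basic-sup-h} are applied; the additional $(s/t)$ accounts for the gap between $t^{-5/2}s^{3\delta}$ and the underlying quadratic rate.

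The main obstacle is purely bookkeeping: reconciling the several competing power-counts in $s$ and $t$ produced by the different factors and translating them into the uniform form of the lemma via the two geometric inequalities $s\leq t$ and $s\geq\sqrt{t}$ valid in the interior region $\Kcal\cap\Hcal_s^\ast$. A secondary subtlety is that the ``good derivative'' factors $\delu_a\delu_\mu$ and $\delu_\mu\delu_b$ appearing in ${\sl GQQ}_{hh}$ and ${\sl GQQ}_{h\phi}$ are not a priori of the same form as the quantities controlled by the basic sup-norm estimates; this is remedied by the homogeneity identities \eqref{ineq 2 homo}--\eqref{ineq 4 homo}, which express these second-order mixed derivatives in terms of $t^{-1}L^J\del_\gamma(\cdot)$ with coefficients of non-positive degree, after which the basic estimates apply directly.
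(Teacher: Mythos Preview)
Your overall strategy matches the paper's: apply Leibniz to $\del^IL^J$ and estimate the factors pointwise using the basic sup-norm estimates, with one factor always landing at order $\leq N-7$ so that the sharper low-order bounds \eqref{ineq basic-sup 2 generation 3 a}--\eqref{ineq basic-sup 2 generation 4} are available. The arguments for ${\sl GQS}_h$, ${\sl GQQ}_{hh}$, ${\sl GQQ}_{h\phi}$, ${\sl Com}$, and ${\sl Cub}$ are essentially the same as the paper's, up to harmless rephrasing.

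There is, however, a genuine slip in your treatment of ${\sl QS}_\phi$. You state that the $\del_\mu\phi\,\del_\nu\phi$ piece yields $(C_1\vep)^2 t^{-1}s^{-1+2\delta}$ via \eqref{ineq basic-sup 2 generation 1 b} and that this is dominated by $(C_1\vep)^2 t^{-2}s^{-1/2+2\delta}$ using $s\geq\sqrt{t}$. That inequality is false: $t^{-1}s^{-1}\leq t^{-2}s^{-1/2}$ would require $t\leq s^{1/2}$, whereas in $\Kcal$ one has $s\leq t$. The same problem afflicts your $\phi^2$ bound: $t^{-3}s^{1+2\delta}\leq t^{-2}s^{-1/2+2\delta}$ would need $s^{3/2}\leq t$, which fails near the center of the hyperboloid where $s\sim t$. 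What has gone wrong is that in this step you have abandoned your own plan and used the $(N-2)$-level estimate on \emph{both} factors. The fix is exactly what you announced at the outset: put \eqref{ineq basic-sup 2 generation 4} (respectively \eqref{ineq basic-sup 2 generation 3 b}) on the low-order factor to get $t^{-3/2}s^{\delta}$ there, and \eqref{ineq basic-sup 2 generation 1 b} (respectively \eqref{ineq basic-sup 2 generation 1 c}) on the high-order factor; the product is then $t^{-2}s^{-1/2+2\delta}$ (respectively $t^{-3}s^{1/2+2\delta}\leq t^{-2}s^{-1/2+2\delta}$ via $s\leq t$) on the nose. The paper carries out precisely this pairing, with a short case analysis in the intermediate ranges $N-6\leq|I_1|+|J_1|\leq N-3$ using \eqref{ineq basic-sup 2 generation 2} and \eqref{ineq basic-sup 2 generation 3 a}; your simpler observation that one factor is always $\leq(N-2)/2\leq N-7$ makes that case split unnecessary, but you must actually invoke the low-order estimate on that factor.
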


\begin{proof} We directly substitute the basic $L^\infty$ estimates, and we begin
$$
|{\sl GQS}_h(N-2,k) |
\leq  |(s/t)^2 \del_t h \del_t h |
+ \sum_{I_1+I_2 = I\atop J_1+J_2=J} 
|\del^{I_1}L^{J_1} \delu_ah_{\alpha \beta} \del^{I_2}L^{J_2} \delu_{\nu}h_{\alpha'\beta'}|. 
$$
By the basic decay estimate \eqref{ineq basic-sup 2 generation 1 a}, $|{\sl GQS}_h(N-2,k)|$ is bounded by $C(C_1\vep)^2t^{-2}s^{-1+2\delta}$.
The estimate for ${\sl GQQ}_{hh}$ is similar, where \eqref{ineq sup homo-ineq 1} is applied, and we omit the details. 
The estimate for ${\sl QS}_{\phi}$ is more delicate and we have 
$
\del^IL^J\left(\del_{\mu} \phi\del_{\nu} \phi\right)
= \sum_{I_1+I_2=I\atop J_1+J_2=J} \del^{I_1}L^{J_1} \del_{\mu} \phi\del^{I_2}L^{J_2} \del_{\nu} \phi$.
\begin{itemize}

\item $I_1=I, J_1=J$ then $|I_2|=|J_2|=0\leq N-7$. Then we apply \eqref{ineq basic-sup 2 generation 1 b} and
\eqref{ineq basic-sup 2 generation 4} we have
$$
\left|\del^{I_1}L^{J_1} \del_{\mu} \phi\del^{I_2}L^{J_2} \del_{\nu} \phi\right| \leq C(C_1\vep)^2t^{-2}s^{-1/2 +2\delta}.
$$

\item $N-3\geq |I_1|+|J_1|\geq N-5$ then $|I_2|+|J_2| \leq 3\leq N-6$, then we apply \eqref{ineq basic-sup 2 generation 2} and
\eqref{ineq basic-sup 2 generation 3 a}.

\item $|I_1|+|J_1| = N-6$, this leads us to $|I_2|+|J_2| \leq 4\leq N-3$, then we apply \eqref{ineq basic-sup 2 generation 3 a} and \eqref{ineq basic-sup 2 generation 2}.

\item $|I_1|+|J_1| \leq N-7$, this leads us to $|I_2|+|J_2| \leq N-2$, then we apply \eqref{ineq basic-sup 2 generation 4} and \eqref{ineq basic-sup 2 generation 1 b}.
\end{itemize}
The estimate of $\del^IL^J\left(\phi^2\right)$ is similar and we omit the details.

The estimate for ${\sl Com}$ is much simpler, due to the additional decay $t^{-1}$. We apply the above estimates to ${\sl QS}_\phi$ and the basic sup-norm estimate directly.
For the cubic term, we will not analyze each type but point out that the worst higher-order term is $h_{\alpha \beta}(\del\phi)^2$, since $\del^IL^J\del_{\alpha} \phi$ has a decay $\simeq t^{-3/2}s^{1/2 +\delta}$, but this term is found to be bounded by $t^{-5/2}(s/t)s^{3\delta}$.
\end{proof}


\subsection{$L^2$ estimates}

\begin{lemma} one has 
\bel{ineq 1 bilinear-L2}
\|{\sl GQQ}_{hh}(N,k) \|_{L^2(\Hcal_s^*)} \leq C(C_1\vep)^2s^{-3/2 +2\delta}, 
\ee
\bel{ineq 2 bilinear-L2}
\|{\sl GQS}_h(N,k) \|_{L^2(\Hcal_s^*)} \leq C(C_1\vep)^2s^{-3/2 +2\delta}, 
\ee
\bel{ineq 3 bilinear-L2}
\|{\sl QS}_\phi(N-4,k) \|_{L^2(\Hcal_s^*)} \leq C(C_1\vep)^2s^{-3/2 +2\delta}, 
\ee
\bel{ineq 4 bilinear-L2}
\|{\sl GQQ}_{h\phi}(N-4,k) \|_{L^2(\Hcal_s^*)} \leq C(C_1\vep)^2s^{-3/2 +2\delta},
\ee
\bel{ineq 5 bilinear-L2}
\|{\sl Cub} \|_{L_f^2(\Hcal_s)} \leq C(C_1\vep)^2s^{-3/2 +3\delta}.
\ee 
\end{lemma}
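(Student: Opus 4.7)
\medskip

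\noindent\textbf{Proof plan.} The strategy is entirely standard: each term listed in the classes $\mathrm{GQQ}_{hh}$, $\mathrm{GQS}_h$, $\mathrm{QS}_\phi$, $\mathrm{GQQ}_{h\phi}$, and $\mathrm{Cub}$ is a sum of products of two (or three) factors, each of which is of the form $\del^{I'}L^{J'}(\cdot)$ with a total order budget. I would expand every product by the Leibniz rule, then for each Leibniz summand split into two cases according to whether the first or the second factor has derivative order $\leq N-2$ (respectively $\leq N-6$ when dealing with the $\phi$-classes). Since $N\geq 14$, at least one factor always falls into the regime where the sup-norm bounds of Lemma~\ref{lem 0 biliner} (or the basic $L^\infty$ estimates of Section~\ref{section-4}) apply, and the remaining factor can then be controlled in $L^2(\Hcal_s^*)$ by the basic $L^2$ estimates collected in Section~\ref{section-4}. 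On $\Hcal_s^*$ we systematically use $t \geq s$ to convert every surplus factor of $t^{-1}$ into $s^{-1}$.

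\medskip

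\noindent\textbf{Representative computations.} For $\mathrm{GQS}_h(N,k)$, a generic summand is $\del^{I_1}L^{J_1}(\delu h)\cdot \del^{I_2}L^{J_2}(\delu h)$ or $(s/t)^2\del^{I_1}L^{J_1}(\del_t h)\cdot \del^{I_2}L^{J_2}(\del_t h)$ with $|I_1|+|J_1|+|I_2|+|J_2|\leq N$. Assume (without loss of generality, by symmetry) $|I_1|+|J_1|\leq N/2\leq N-2$; then \eqref{ineq basic-sup 2 generation 1 a} yields $|\del^{I_1}L^{J_1}\delu h|\leq CC_1\vep\, t^{-3/2}s^{\delta}\leq CC_1\vep\, s^{-3/2+\delta}$, while \eqref{ineq basic-L2 2 generation 1 a} gives $\|\del^{I_2}L^{J_2}\delu h\|_{L^2_f(\Hcal_s)}\leq CC_1\vep\, s^{\delta}$. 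Multiplying produces $C(C_1\vep)^2 s^{-3/2+2\delta}$, as claimed. The term $(s/t)^2\del_th\cdot\del_th$ is handled by rewriting it as $\big((s/t)\del_th\big)\cdot\big((s/t)\del_th\big)$ so that the weighted sup-norm and weighted $L^2$ bounds of \eqref{ineq basic-sup 2 generation 1 a} and \eqref{ineq basic-L2 2 generation 1 a} apply directly. The treatment of $\mathrm{GQQ}_{hh}(N,k)$ is completely parallel but uses the sup-norm estimate \eqref{ineq sup homo-ineq 1} for $\delu\delu h$ (or $\delu\del h$, $\del\delu h$) on the low-order factor together with the $L^2$ estimate \eqref{ineq L-2 homo-ineq 1} on the high-order factor; the two derivatives borne by the quasi-linear piece produce exactly the same net $t^{-3/2}$-decay as the semi-linear case.

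\medskip

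\noindent\textbf{The $\phi$-classes and the cubic term.} For $\mathrm{QS}_\phi(N-4,k)$ and $\mathrm{GQQ}_{h\phi}(N-4,k)$ I would again apply Leibniz and split into low-order vs.\ high-order factors, but here the top-order budget is $N-4$ rather than $N$, which is precisely what is needed to ensure that both factors admit at least one of the bounds \eqref{ineq basic-sup 2 generation 3 a}--\eqref{ineq basic-sup 2 generation 4} or \eqref{ineq basic-L2 2 generation 3}--\eqref{ineq basic-L2 2 generation 4} (i.e.\ the ``low-regularity'' family of $\phi$-estimates, with $s^{\delta}$ rather than $s^{1/2+\delta}$). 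Concretely, if $|I_1|+|J_1|\leq(N-4)/2\leq N-6$, then \eqref{ineq basic-sup 2 generation 3 a}--\eqref{ineq basic-sup 2 generation 3 b} give $t^{-3/2}s^{\delta}$ decay on the low-order factor, while \eqref{ineq basic-L2 2 generation 3} gives $s^{\delta}$ growth in $L^2$ on the high-order factor, producing the announced $s^{-3/2+2\delta}$. For $\mathrm{Cub}$, a third factor appears; as already noted in Lemma~\ref{lem 0 biliner} the worst case is $h_{\alpha\beta}(\del\phi)^2$. I would put the two $\phi$-factors of lowest order in $L^\infty$ via \eqref{ineq basic-sup 2 generation 3 a} and place the remaining factor (either $h$ or $\del\phi$) in $L^2$ via \eqref{ineq basic-L2 2 generation 1 a} or \eqref{ineq basic-L2 2 generation 1 b}; the cubic structure contributes one extra factor of $s^{\delta}$, explaining the final $s^{-3/2+3\delta}$.

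\medskip

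\noindent\textbf{Main obstacle.} The only genuinely delicate bookkeeping concerns the $\phi$-nonlinearities: the scalar field only obeys the weaker $s^{1/2+\delta}$-type bounds at top order, so one has to verify that in every Leibniz splitting at least one factor sits in the regime $|I|+|J|\leq N-6$ (or $\leq N-7$ to access $\del\phi$) where the improved $s^{\delta}$-bounds apply. The condition $|I|+|J|\leq N-4$ in the statement of \eqref{ineq 3 bilinear-L2}--\eqref{ineq 4 bilinear-L2} combined with $N\geq 14$ is exactly what makes this dichotomy work, and beyond this derivative-counting there are no further difficulties.
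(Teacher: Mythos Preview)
Your overall strategy matches the paper's: Leibniz expand, put the low-order factor in $L^\infty$ and the high-order factor in $L^2(\Hcal_s^*)$. For $\mathrm{GQS}_h$, $\mathrm{QS}_\phi(N-4,k)$, $\mathrm{GQQ}_{h\phi}(N-4,k)$ and $\mathrm{Cub}$ your sketch is essentially the paper's argument (for $\mathrm{GQQ}_{h\phi}$ the paper is even simpler: since the total budget is $N-4$, the $h$-factor always has order $\leq N-2$, so one just puts it in $L^\infty$ via \eqref{ineq basic-sup-h} and the $\delu\delu\phi$-factor in $L^2$ --- no dichotomy is needed).

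There is, however, a genuine gap in your treatment of $\mathrm{GQQ}_{hh}(N,k)$. The generic term is $\del^{I_1}L^{J_1}h_{\alpha'\beta'}\cdot\del^{I_2}L^{J_2}\delu_a\delu_\mu h_{\alpha\beta}$, where the first factor is an \emph{undifferentiated} metric perturbation. When $|I_1|+|J_1|\geq N-1$ you cannot place this factor in $L^\infty$, so it must go into $L^2$; but there is no energy bound for $h$ itself, only for $(s/t)\del h$ and $\delu h$. Your citation of \eqref{ineq L-2 homo-ineq 1} does not help, since that estimate controls $\delu_a\del_\beta h$, not $h$. The paper resolves this by a further case split that you are missing: if $|I_1|\geq 1$ one writes $\del^{I_1}L^{J_1}h = \del_\gamma\del^{I_1'}L^{J_1}h$ and uses $\|(s/t)\del_\gamma\del^{I_1'}L^{J_1}h\|_{L^2}\leq CC_1\vep s^\delta$, absorbing the extra $(t/s)$ into the sup-norm of the second factor (which then has order $\leq 1$, so \eqref{ineq sup homo-ineq 1} gives $t^{-3/2}s^{-1+\delta}$). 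If $|I_1|=0$ one instead writes $L^{J_1}=L_aL^{J_1'}$ and uses $L_a=t\,\delu_a$, so that $L^{J_1}h = t\,\delu_aL^{J_1'}h$; the factor $t$ is again absorbed by the $t^{-3/2}$ from the sup-norm of the low-order $\delu\delu h$-factor, and $\|\delu_aL^{J_1'}h\|_{L^2}$ is controlled by the energy. Without this trick the $\mathrm{GQQ}_{hh}$ estimate does not close at top order.
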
 

\begin{proof}
For the term ${\sl GQQ}_{hh}$, we will only write the estimate of $\del^{I_1}L^{J_1}h_{\alpha'\beta'} \del^{I_2}L^{J_2} \delu_a \delu_{\nu}h_{\alpha \beta}$ in detail and, to this end, we distinguish between two main cases:

\vskip.15cm

\noindent {\sl Case 1.} $|I_1|\geq 1$.   {\sl Subcase 1.1} : When $|I_1|+|J_1| \leq N-2$, we obtain 
$$
\aligned
\big\|\del^{I_1}L^{J_1}h_{\alpha'\beta'} \del^{I_2}L^{J_2} \delu_a \delu_{\nu}h_{\alpha \beta} \big\|_{L^2(\Hcal_s^*)}
& \leq CC_1\vep \big\|t^{-1/2}s^{-1+\delta} \del^{I_2}L^{J_2} \delu_a \delu_{\nu}h_{\alpha \beta} \big\|_{L^2(\Hcal_s^*)}
\\
& \leq CC_1\vep s^{-3/2 +\delta}E_M^*(s, \del^{I_2}L^{J_2} \delu h)^{1/2}
\\
& \leq C(C_1\vep)^2s^{-3/2 +2\delta}.
\endaligned
$$

\noindent {\sl Subcase 1.2} : When $N\geq |I_1|+|J_2|\geq N-1$, we have $|I_2|+|J_2| \leq 1\leq N-3$,
then in view of \eqref{ineq L-2 homo-ineq 1}
$$
\aligned
\big\|\del^{I_1}L^{J_1}h_{\alpha'\beta'} \del^{I_2}L^{J_2} \delu_a \delu_{\nu}h_{\alpha \beta} \big\|_{L^2(\Hcal_s^*)}
& \leq  CC_1\vep \big\|t^{-3/2}s^{-1+\delta}(t/s) \big|(s/t) \del^{I_1}L^{J_1}h_{\alpha'\beta'} \big\|_{L^2(\Hcal_s^*)}
\\
& \leq  CC_1\vep s^{-3/2 +\delta} \big\|(s/t) \del^{I_1}L^{J_1}h_{\alpha'\beta'} \big\|_{L^2(\Hcal_s^*)}
  \leq  C(C_1\vep)^2 s^{-3/2 +2\delta}.
\endaligned
$$

\vskip.15cm

\noindent {\sl Case 2.} $|I_1|=0$. 
{\sl Subcase 2.1} : When $|J_1| \leq N-2$, then in view of \eqref{ineq L-2 homo-ineq 1} we obtain 
$$
\aligned
\big\|L^{J_1}h_{\alpha'\beta'} \del^IL^{J_2} \delu_a \delu_{\nu}h_{\alpha \beta} \big\|_{L^2(\Hcal_s^*)}
& \leq  CC_1\vep\big\|\big((s/t)t^{-1/2}s^{\delta} + t^{-1} \big) \del^IL^{J_2} \delu_a \delu_{\nu}h_{\alpha'\beta'} \|_{L^2(\Hcal_s^*)}
\\
& \leq  CC_1\vep\big\|\big((s/t)t^{-1/2}s^{\delta} + t^{-1} \big)s^{-1} \,  |s\del^IL^{J_2} \delu_a \delu_{\nu}h_{\alpha'\beta'}|\|_{L^2(\Hcal_s^*)}
\\
& \leq  CC_1\vep s^{-3/2 +\delta}E_M^*(s, \del^IL^{J_2} \delu h)^{1/2} \leq C(C_1\vep)^2s^{-3/2 +2\delta}.
\endaligned
$$

\noindent {\sl Subcase 2.2} : When $N\geq |J_1|\geq N-1\geq 1$, then we denote by $L^{J_1} = L_aL^{J_1'}$, we have $|I|+|J_2| \leq 1\leq N-3$. Then in view of \eqref{ineq sup homo-ineq 1}
$$
\aligned
\big\|L^{J_1}h_{\alpha'\beta'} \del^IL^{J_2} \delu_a \delu_{\nu}h_{\alpha \beta} \big\|_{L^2(\Hcal_s^*)}
& \leq CC_1\vep \big\|t^{-3/2}s^{-1+\delta}L_aL^{J_1'}h_{\alpha'\beta'} \big\|_{L^2(\Hcal_s^*)}
\\
& \leq  CC_1\vep \big\|t^{-1/2}s^{-1+\delta} \delu_aL^{J_1'}h_{\alpha'\beta'} \big\|_{L^2(\Hcal_s^*)}
\leq C(C_1\vep)^2 s^{-3/2 +2\delta}.
\endaligned
$$
The estimate on the term ${\sl GQS}_h$ is similar, and we omit the details.
For the estimate for ${\sl QS}_{\phi}(N-4,k)$, we will only writhe the proof on $\del^IL^J\left(\del_{\alpha} \phi\del_{\beta} \phi\right)$.  For $N\geq 9$, we have $\left[\frac{N-4}{2} \right]\leq N-7$. So, at least $|I_1|+|J_1| \leq N-7$ or $|I_2|+|J_2| \leq N-7$:
$$
\aligned
\big\|\del^{I_1}L^{J_1} \del_{\alpha} \phi \,   \del^{I_2}L^{J_2} \phi\big\|_{L^2(\Hcal_s^*)}
& \leq  CC_1\vep \big\|t^{-3/2}s^{\delta}(t/s) \,   (s/t) \del^{I_2}L^{J_2} \phi \big\|_{L^2(\Hcal_s^*)}
  \leq  C(C_1\vep)^2 s^{-3/2 +2\delta}.
\endaligned
$$
As far as ${\sl GQQ}_{h\phi}(N-4,k)$ is concerned, we only treat 
$\del^{I_1}L^{J_1}h_{\alpha'\beta'} \del^{I_2}L^{J_2} \delu_a \delu_{\mu} \phi.$
We observe that $|I_1|+|J_1| \leq N-4$ and by applying \eqref{ineq basic-sup-h}
$$
\aligned
\big\|\del^{I_1}L^{J_1}h_{\alpha'\beta'} \del^{I_2}L^{J_2} \delu_a \delu_{\mu} \phi\big\|_{L^2(\Hcal_s^*)}
& \leq  \Big\|\big((s/t)t^{-1/2}s^{\delta} + t^{-1} \big)s^{-1} \,  \big(s\del^{I_2}L^{J_2} \delu_a \delu_{\mu} \phi\big) \Big\|_{L^2(\Hcal_s^*)}
\\
& \leq  CC_1\vep s^{-3/2 +\delta} \big\|s\del^{I_2}L^{J_2} \delu_a \delu_{\mu} \phi\big\|_{L^2(\Hcal_s^*)}
\\
& \leq  CC_1\vep s^{-3/2 +\delta}E_{M,c^2} \big(s, \del^{I_2}L^{J_2} L_a \delu_{\mu} \phi\big)^{1/2} \leq C(C_1\vep)^2s^{-3/2 +2\delta}.
\endaligned
$$
The higher-order terms ${\sl Cub}$ are bounded as we did for the sup-norm: just observe that the worst term is again $h(\del\phi)^2$ and can be bounded as stated.
\end{proof}

\begin{lemma}
For $N\geq 7$, one has 
\bel{ineq 5 bilinear-L2-DEUX}
\|{\sl QS}_\phi(N,k) \|_{L^2(\Hcal_s^*)} \leq C(C_1\vep)^2s^{-1+2\delta}.
\ee
\end{lemma}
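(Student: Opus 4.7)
The plan is to expand $\del^IL^J(\del_\mu\phi\,\del_\nu\phi)$ and $\del^IL^J(\phi^2 g_{\mu\nu})$ by the Leibniz rule with $I_1+I_2=I$, $J_1+J_2=J$, exactly as in the proof of \eqref{ineq 3 bilinear-L2}, and then obtain the stronger decay rate $s^{-1+2\delta}$ (in place of $s^{-3/2+2\delta}$) by a more careful pairing of sup-norm and $L^2$ estimates. For the second product I would first split $g_{\mu\nu} = m_{\mu\nu} + h_{\mu\nu}$; the $\phi^2 h_{\mu\nu}$ piece is cubic and belongs to ${\sl Cub}$, so up to a smooth homogeneous coefficient of degree zero the task reduces to controlling $\|\del^IL^J(\phi^2)\|_{L^2(\Hcal_s^*)}$ and $\|\del^IL^J(\del_\mu\phi\,\del_\nu\phi)\|_{L^2(\Hcal_s^*)}$. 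In every product I assume $|I_1|+|J_1|\leq|I_2|+|J_2|$ so that $|I_1|+|J_1|\leq\lfloor N/2\rfloor\leq N-4$; this is where the hypothesis $N\geq 7$ enters. The single recurring technical input is the inequality $t\geq s$ on $\Hcal_s^*$, which allows free conversion of $t$-decay into $s$-decay.

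For the bilinear term $\del\phi\cdot\del\phi$ I intend to distinguish three regimes of $|I_1|+|J_1|$. When $|I_1|+|J_1|\leq N-7$ I pair the sharp low-order sup-norm $t^{3/2}|\del^{I_1}L^{J_1}\del\phi|\lesssim C_1\vep s^{\delta}$ from \eqref{ineq basic-sup 2 generation 4} with the top-order weighted $L^2$ bound \eqref{ineq basic-L2 2 generation 1 b}, which produces
\[
\sup\bigl(t^{3/2}|\del^{I_1}L^{J_1}\del\phi|\cdot t^{-1/2}s^{-1}\bigr)\cdot\|(s/t)\del^{I_2}L^{J_2}\del\phi\|_{L^2(\Hcal_s^*)}\lesssim C_1\vep s^{\delta}\cdot s^{-3/2}\cdot C_1\vep s^{1/2+\delta}
\]
and delivers the target $s^{-1+2\delta}$. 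In the range $N-6\leq |I_1|+|J_1|\leq N-5$ the very sharp sup-norm above is no longer available; instead I switch to the sup-norm of \eqref{ineq basic-sup 2 generation 3 a} paired with the unweighted $L^2$ bound \eqref{ineq basic-L2 1 generation 2}, which combine (using $t^{-1/2}\leq s^{-1/2}$) to give the same rate. Finally, in the borderline range $|I_1|+|J_1|\in\{N-4,N-3\}$ I swap the roles of sup-norm and $L^2$: the high-order factor is placed in $L^\infty$ via \eqref{ineq basic-sup 2 generation 1 b} or \eqref{ineq basic-sup 2 generation 2}, while the low-order factor is placed in $L^2$ via \eqref{ineq basic-L2 2 generation 3} or \eqref{ineq basic-L2 2 generation 4}, with the $(s/t)$-weight absorbed on the sup-norm side as $(t/s)|\del^{I_2}L^{J_2}\del\phi|\lesssim C_1\vep s^{-1+\delta}$.

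The treatment of $\del^IL^J(\phi^2)$ is essentially the same but easier, since the $L^2$ and sup-norm bounds for $\phi$ in \eqref{ineq basic-L2 1 generation 1 c}, \eqref{ineq basic-sup 2 generation 1 c}, and \eqref{ineq basic-sup 2 generation 3 b} are direct and do not require the $(s/t)$-weight. The only subtlety, and the main obstacle of the entire argument, is the bookkeeping at the borderline $|I_1|+|J_1|\simeq N/2$, where the naive pairing $\sup|\phi|\cdot\|\phi\|_{L^2}\lesssim t^{-3/2}s^{1/2+\delta}\cdot s^{1/2+\delta}$ only yields $s^{-1/2+2\delta}$, falling short by a factor $s^{1/2}$. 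Overcoming this requires assigning, case by case, the sup-norm to the high-order factor (thus accepting the weaker decay $t^{-1/2}s^{-1/2+\delta}$) and the $L^2$ norm to the low-order factor (thus gaining the sharper bound $s^{\delta}$), a swap that is possible precisely because $\lfloor N/2\rfloor\leq N-4$ under the assumption $N\geq 7$.
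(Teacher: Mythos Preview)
Your approach is essentially the paper's: expand by Leibniz and pair sup-norm with $L^2$ case by case, switching which factor carries which bound as you move through the range of $|I_1|+|J_1|$. The paper indexes cases by $|I_1|+|J_1|=N,N-1,N-2,N-3$ and then swaps, while you take the WLOG $|I_1|+|J_1|\leq|I_2|+|J_2|$ first; these are the same argument in different bookkeeping.

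One genuine indexing slip: in your Case~2 you invoke \eqref{ineq basic-sup 2 generation 3 a} for $|I_1|+|J_1|$ up to $N-5$, but that estimate only holds for $|I|+|J|\leq N-6$. The case $|I_1|+|J_1|=N-5$ (which does occur for small $N$, e.g.\ $N=7$) is not covered by your Case~2 as written. The fix is to push $|I_1|+|J_1|=N-5$ into your Case~3 instead: put the low-order factor in $L^2$ via \eqref{ineq basic-L2 2 generation 4} (valid for $\leq N-5$) and the high-order factor in $L^\infty$ via \eqref{ineq basic-sup 2 generation 1 b} (valid since then $|I_2|+|J_2|\leq 5\leq N-2$), giving $CC_1\vep s^{\delta}\cdot CC_1\vep t^{-1/2}s^{-1/2+\delta}\leq C(C_1\vep)^2 s^{-1+2\delta}$. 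This is exactly the pairing the paper uses (its case $|I_1|+|J_1|=N-2$, with your roles of $I_1,I_2$ reversed). Also, your Case~3 value $|I_1|+|J_1|=N-3$ is vacuous under your WLOG since $\lfloor N/2\rfloor\leq N-4$ for $N\geq 7$; harmless but unnecessary.
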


\begin{proof} We discuss the following cases:
\begin{itemize}

\item $|I_1|+|J_1| = N$, $N-7\geq 0$. So, in view of \eqref{ineq basic-L2 2 generation 1 b} and \eqref{ineq basic-sup 2 generation 4} :
$$
\aligned
\left\|\del^{I_1}L^{J_1} \del_{\gamma} \phi\del^{I_2}L^{J_2} \del_{\gamma'} \phi\right\|_{L^2(\Hcal_s^*)}
& \leq  CC_1\vep\left\|t^{-3/2}s^{\delta}(t/s) \,   (s/t) \del^{I_1}L^{J_1} \del_{\gamma} \phi\right\|_{L^2(\Hcal_s^*)}
\\
& \leq  CC_1\vep s^{-3/2 +\delta} \,   CC_1\vep s^{1/2 +\delta} \leq C(C_1\vep)^2 s^{-1+2\delta}.
\endaligned
$$

\item $|I_1|+|J_1|= N-1$, then $|I_2|+|J_2|= 1\leq N-6$. So, in view of \eqref{ineq basic-L2 2 generation 2} and \eqref{ineq basic-sup 2 generation 3 a}, we have 
$$
\aligned
\left\|\del^{I_1}L^{J_1} \del_{\gamma} \phi\del^{I_2}L^{J_2} \del_{\gamma'} \phi\right\|_{L^2(\Hcal_s^*)}
& \leq  CC_1\vep\left\|t^{-1/2}s^{-1+\delta} \,   \del^{I_1}L^{J_1} \del_{\gamma} \phi\right\|_{L^2(\Hcal_s^*)}
\\
& \leq  CC_1\vep s^{-3/2 +\delta} \,   CC_1\vep s^{1/2 +\delta} \leq C(C_1\vep)^2s^{-1+2\delta}.
\endaligned
$$

\item $|I_1|+|J_1| = N-2$, then $|I_2|+|J_2| =2\leq N-5$. So, in view of \eqref{ineq basic-sup 2 generation 1 a} and \eqref{ineq basic-L2 2 generation 4}, we have 
$$
\aligned
\left\|\del^{I_1}L^{J_1} \del_{\gamma} \phi\del^{I_2}L^{J_2} \del_{\gamma'} \phi\right\|_{L^2(\Hcal_s^*)}
& \leq  CC_1\vep\left\|t^{-1/2}s^{-1/2 +\delta} \,   \del^{I_2}L^{J_2} \del_{\gamma} \phi\right\|_{L^2(\Hcal_s^*)}
\\
& \leq  CC_1\vep s^{-1+\delta} \,   CC_1\vep s^{\delta} \leq C(C_1\vep)^2s^{-1+2\delta}.
\endaligned
$$

\item $|I_1|+|J_1| = N-3$, then $|I_2|+|J_2| =3\leq N-4$. So,  in view of \eqref{ineq basic-sup 2 generation 2} and \eqref{ineq basic-L2 2 generation 3}, we have 
$$
\aligned
\left\|\del^{I_1}L^{J_1} \del_{\gamma} \phi\del^{I_2}L^{J_2} \del_{\gamma'} \phi\right\|_{L^2(\Hcal_s^*)}
& \leq  CC_1\vep\left\|t^{-3/2}s^{1/2 +\delta}(t/s) \,   (s/t) \del^{I_2}L^{J_2} \del_{\gamma} \phi\right\|_{L^2(\Hcal_s^*)}
\\
& \leq  CC_1\vep s^{-1+\delta} \,   CC_1\vep s^{\delta} \leq C(C_1\vep)^2s^{-1+2\delta}.
\endaligned
$$

\item When $|I_1|+|J_1| \leq N-4\leq 3$, we exchange the role of $I_1,I_2$ and $J_1, J_2$, and apply the arguments above again.
\end{itemize} 
\end{proof}


\section{Direct Consequences of the Wave Gauge Condition}
\label{sect--7}

\subsection{$L^\infty$ estimates}
 
We now use the wave coordinate estimates \eqref{ineq wave-condition 1} and \eqref{ineq wave-condition 2a}. Combined with Proposition \ref{prop Hardy 2}, they provide us with rather precise $L^2$ estimates and $L^\infty$ estimate on the gradient of the metric coefficient $\hu^{00}$. In view of these estimates, we can say (as in \cite{PLF-YM-one}) that the quasi-linear terms
${\sl QQ}_{hh}$ and ${\sl QQ}_{h\phi}$ are essentially null terms.
In $\Kcal$, the gradient of a function $u$ can be written in the semi-hyperboloidal frame, that is
$
\del_\alpha u = \Psi_{\alpha}^{\alpha'} \delu_{\alpha'} u = \Psi_{\alpha}^{0} \del_t u + \Psi_{\alpha}^a \delu_a u.
$
The coefficients $\Psi_{\alpha}^{\beta}$ are smooth and homogeneous of degree $0$. And we observe that the derivatives $\delu_a$ are ``good'' derivatives. So our task is 
{\sl to get refined estimates on $\del_t u$,} which is the main purpose of the next subsections.
We begin with the $L^\infty$ estimates, whose derivation is simpler than the derivation of the $L^2$ estimates.

\begin{lemma} \label{lem h00-sup 1}
Assume that the bootstrap assumption \eqref{ineq energy assumption 1} holds with $C_1\vep$ sufficiently small so that Lemma \ref{lem wave-condition 4} holds, then the following estimates hold for $|I|+|J| \leq N-2$:
\bel{ineq h00-sup 1}
|\del^IL^J \del_\alpha \hu^{00}| + |\del_\alpha \del^IL^J \hu^{00}| \leq CC_1\vep t^{-3/2}s^{\delta},
\ee
\bel{ineq h00-sup 1.5}
|\del^IL^J \hu^{00}| \leq CC_1\vep t^{-1/2}(s/t)^2s^\delta + Cm_s t^{-1}.
\ee
\end{lemma}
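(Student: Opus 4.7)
The plan is to combine the wave-gauge identity for $\del_t\hu^{00}$ with the basic pointwise estimates already derived from the bootstrap, then promote the gradient bound to a bound on $\hu^{00}$ itself by integrating along radial rays.

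First I would prove the gradient estimate \eqref{ineq h00-sup 1}. The ``good'' tangential derivatives are cheap: since $\hu^{\alpha\beta}=\Psi^{\alpha}_{\alpha'}\Psi^{\beta}_{\beta'}h^{\alpha'\beta'}$ with $\Psi$ homogeneous of degree $0$, the basic bound \eqref{ineq basic-sup 2 generation 1 a} together with the commutator estimates of Lemma~\ref{lem com 2} yields $|\delu_a\del^I L^J\hu^{00}|\lesssim C_1\vep\,t^{-3/2}s^\delta$. For $\del_t\del^I L^J\hu^{00}$ one applies the $k$-order wave coordinate estimate \eqref{ineq wave-condition 2a} from Lemma~\ref{lem wave-condition 4}, which expresses it as a sum of (i)~$(s/t)^2|\del\del^{I'}L^{J'}\hu|$, (ii)~$|\del^{I'}L^{J'}\delu\hu|$, (iii)~$t^{-1}|\del^{I'}L^{J'}\hu|$, and (iv)~quadratic products $|\del^{I_1}L^{J_1}\hu|\,|\del\del^{I_2}L^{J_2}\hu|$. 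I substitute \eqref{ineq basic-sup 2 generation 1 a} and the ray-based estimate \eqref{ineq basic-sup-h}:
\begin{itemize}
\item term (i): $(s/t)^2\cdot C_1\vep\,t^{-1/2}s^{-1+\delta}=C_1\vep\,s^{1+\delta}/t^{5/2}\le C_1\vep\,t^{-3/2}s^\delta$ using $s\le t$;
\item term (ii): already of the required order by \eqref{ineq basic-sup 2 generation 1 a};
\item term (iii): $t^{-1}\cdot C_1\vep\,(s/t)t^{-1/2}s^\delta\le C_1\vep\,t^{-3/2}s^\delta$;
\item term (iv): bounded by $(C_1\vep)^2(s/t)\,t^{-1}s^{-1+2\delta}$, which is absorbed into $C_1\vep\,t^{-3/2}s^\delta$ once $C_1\vep$ is small.
\end{itemize}
The full Cartesian derivative is recovered from $\del_a=\delu_a-(x^a/t)\del_t$, and $|\del^IL^J\del_\alpha\hu^{00}|$ follows from $|\del_\alpha\del^IL^J\hu^{00}|$ via \eqref{ineq com 2.1}.

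Next I would establish \eqref{ineq h00-sup 1.5} by integrating $\del_r\del^IL^J\hu^{00}$ along the radial segment joining $(t,x)$ to the boundary point $(t,(t-1)x/|x|)$. Writing $\del_r=(x^a/r)\delu_a-(r/t)\del_t$ and invoking \eqref{ineq h00-sup 1} just proved, we obtain along the ray the uniform bound $|\del_r\del^IL^J\hu^{00}(t,\tau x/|x|)|\le CC_1\vep\,t^{-3/2}s(t,\tau)^{\delta}$, where $s(t,\tau)=\sqrt{t^2-\tau^2}$. At the endpoint $\tau=t-1$ the metric coincides with Schwarzschild and $|\del^IL^J\hu^{00}|\lesssim m_S/t$. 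The gain of the factor $(s/t)^2$ stems purely from the geometry: using $s(t,\tau)^2=(t-\tau)(t+\tau)\le 2t(t-\tau)$,
\begin{equation*}
\int_r^{t-1}s(t,\tau)^\delta\,d\tau\le (2t)^{\delta/2}\!\!\int_r^{t-1}(t-\tau)^{\delta/2}\,d\tau\le C\,t^{\delta/2}(t-r)^{1+\delta/2},
\end{equation*}
and then $(t-r)^{1+\delta/2}\le (s^2/t)^{1+\delta/2}$ from $t-r=s^2/(t+r)\le s^2/t$. Multiplying by $t^{-3/2}$ yields exactly $CC_1\vep\,s^{2+\delta}/t^{5/2}=CC_1\vep\,(s/t)^2 t^{-1/2}s^\delta$, which combined with the Schwarzschild boundary contribution $Cm_S/t$ gives \eqref{ineq h00-sup 1.5}.

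The main technical obstacle is the first step: the wave gauge identity of Lemma~\ref{lem wave-condition 4} is the \emph{only} mechanism producing the improved $t^{-3/2}s^\delta$ decay for the ``bad'' derivative $\del_t\hu^{00}$, since the generic estimate only gives $t^{-1/2}s^{-1+\delta}$. Verifying that each of the four categories of terms in \eqref{ineq wave-condition 2a} does fit under $CC_1\vep\,t^{-3/2}s^\delta$ is the heart of the proof; once that is in hand, the second estimate follows almost mechanically by a Schwarzschild-boundary-to-interior integration that exploits $t-r\simeq s^2/t$ to extract the extra $(s/t)^2$.
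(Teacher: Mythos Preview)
Your proposal is correct and follows essentially the same approach as the paper: the paper's proof simply says to substitute the basic sup-norm estimates into the wave-coordinate inequality \eqref{ineq wave-condition 2a} to get \eqref{ineq h00-sup 1}, and then integrate along radial rays (as in Section~\ref{subsec ray-inte}) to get \eqref{ineq h00-sup 1.5}. Your write-up supplies the details of both steps---in particular the explicit verification that each of the four categories of terms in \eqref{ineq wave-condition 2a} fits under $CC_1\vep\,t^{-3/2}s^\delta$, and the computation showing how the radial integration of $t^{-3/2}s^\delta$ produces the extra $(s/t)^2$ factor via $t-r\le s^2/t$---but the strategy is identical.
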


\begin{proof} We derive
\eqref{lem h00-sup 1} by substituting the basic sup-norm estimates into \eqref{ineq wave-condition 2a}. Then we integrate \eqref{lem h00-sup 1} along radial rays, as we did in Section~\ref{subsec ray-inte} and we obtain \eqref{ineq h00-sup 1.5}.
\end{proof}

The following statements are direct consequences of the above sup-norm estimates and play an essential role in our analysis. Roughly speaking, these lemmas guarantee that the curved metric $g$ is sufficiently close to the Minkowski metric, so that the energy estimates in  Propositions \ref{prop 1 energy-W} and \ref{prop energy 2KG} hold, as well as a sup-norm estimate for the Klein-Gordon equation (discussed in Appendix~C).

\begin{lemma}[Equivalence between the curved energy and flat energy functionals]\label{lem h00-sup 2}
Under the bootstrap assumption with $C_1\vep$ sufficiently small so that Lemma \ref{lem wave-condition 3} holds, there exists a constant $\kappa>1$ such that
\be
\aligned
&\kappa^{-2}E^*_{M}(s, \del^IL^Jh_{\alpha \beta}) \leq E^*_{g}(s, \del^IL^Jh_{\alpha \beta}) \leq \kappa^{2}E^*_{M}(s, \del^IL^Jh_{\alpha \beta}),
\\
&\kappa^{-2} E_{M,c^2}(s, \del^IL^J\phi) \leq E_{g,c^2}(s, \del^IL^J\phi) \leq \kappa^2E_{M,c^2}(s, \del^IL^J\phi).
\endaligned
\ee
\end{lemma}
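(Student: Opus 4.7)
\textbf{Proof plan for Lemma \ref{lem h00-sup 2}.} The approach is to compute the difference between the curved and flat energy integrands \emph{pointwise} and rewrite it in the semi-hyperboloidal frame, so that the coefficients that appear are exactly the hyperboloidal components $\hu^{00}$ and $\hu^{ab}$ for which we already have good sup-norm control. Starting from the definitions of $E^*_g$ and $E^*_M$ and applying the substitution $\del_a = -(x^a/t)\del_t + \delu_a$ (so $\delu_a = (x^a/t)\del_t + \del_a$) to every occurrence of a Cartesian derivative, a direct computation shows that the cross terms $\delu_a u\,\del_t u$ cancel and the integrand of $E^*_g$ reduces to
\[
-\gu^{00}|\del_t u|^2 + \gu^{ab}\delu_a u\,\delu_b u,
\qquad
\gu^{00} = g^{00} - 2(x^a/t)g^{0a} + (x^ax^b/t^2)g^{ab},\quad \gu^{ab}=g^{ab}.
\]
The Minkowski case is identical with $\gu^{00}$ replaced by $\minu^{00}=-(s/t)^2$ and $\gu^{ab}$ by $\delta^{ab}$, which reproduces the alternative form of $E_M$ recorded in the paper. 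Subtracting, we obtain the clean identity
\[
E^*_g(s,u) - E^*_M(s,u) = \int_{\Hcal_s^*}\Big( -\hu^{00}|\del_t u|^2 + \hu^{ab}\delu_a u\,\delu_b u\Big)\,dx.
\]

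The second step is to bound each of the two terms by a small multiple of $E^*_M$. For the $\hu^{00}$ term I will use the refined estimate \eqref{ineq h00-sup 1.5} (with $|I|=|J|=0$) which already contains the crucial weight $(s/t)^2$, writing
\[
\bigl|\hu^{00}|\del_t u|^2\bigr| \le \frac{|\hu^{00}|}{(s/t)^2}\cdot (s/t)^2|\del_t u|^2
\le \Big(CC_1\vep\, t^{-1/2}s^{\delta} + Cm_S\,t/s^2\Big)\,(s/t)^2|\del_t u|^2.
\]
On $\Hcal_s^*$ we have $t\ge s$ and $r\le (s^2-1)/2$, hence $t\le s^2/2+s\le s^2$ for $s\ge 2$, so both $t^{-1/2}s^{\delta}$ and $t/s^2$ are bounded; consequently $|\hu^{00}|/(s/t)^2 \le C(C_1\vep + m_S)$. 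For the $\hu^{ab}$ term I use the basic sup-norm bound \eqref{ineq basic-sup-h} on $h_{\alpha\beta}$ together with the algebraic relation between the hyperboloidal components and $h_{\alpha\beta}$, yielding $|\hu^{ab}|\le CC_1\vep(s/t)t^{-1/2}s^{\delta}\le CC_1\vep$. Integrating over $\Hcal_s^*$ and using $(s/t)^2|\del_t u|^2 + \sum_a|\delu_a u|^2 \le E_M^*$ pointwise, we get
\[
\bigl|E^*_g(s,u) - E^*_M(s,u)\bigr| \le C(C_1\vep + m_S)\, E^*_M(s,u).
\]
Choosing $C_1\vep$ and $m_S$ sufficiently small makes the right-hand side $\le (1-\kappa^{-2})E^*_M$ for a suitable $\kappa>1$, which gives the asserted two-sided bound for the wave energies.

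For the Klein-Gordon energies the same identity applies verbatim, since the mass term $c^2u^2$ is common to $E_{g,c^2}$ and $E_{M,c^2}$ and therefore cancels in the difference; in particular $E_{g,c^2}-E_{M,c^2}=E^*_g - E^*_M$ restricted to $\Hcal_s$ (the scalar field is supported in $\Kcal$), and $E_{M,c^2}\ge E_M$ absorbs the estimate above to give $\kappa^{-2}E_{M,c^2}\le E_{g,c^2}\le \kappa^2 E_{M,c^2}$. The only genuine obstacle in this argument is the $|\del_t u|^2$ contribution, which is not individually controlled by $E_M$; this is precisely what forces one to use the refined bound \eqref{ineq h00-sup 1.5} for $\hu^{00}$ carrying the weight $(s/t)^2$, and what makes the tensorial/wave-gauge structure of $\hu^{00}$ (established in Lemmas \ref{lem wave-condition 3}--\ref{lem wave-condition 4}) indispensable — a naive application of \eqref{ineq basic-sup-h} to $h^{00}$ would not suffice.
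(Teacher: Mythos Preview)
Your proof is correct and follows essentially the same approach as the paper: you derive the identity $E^*_g(s,u)-E^*_M(s,u)=\int_{\Hcal_s^*}\big(-\hu^{00}|\del_t u|^2+\hu^{ab}\delu_a u\,\delu_b u\big)\,dx$ (the paper obtains it by manipulating the difference directly, you by first rewriting each integrand in the semi-hyperboloidal frame, but the computation is the same), and then bound $(t/s)^2|\hu^{00}|$ via \eqref{ineq h00-sup 1.5} and $|\hu^{ab}|$ via \eqref{ineq basic-sup-h}, exactly as the paper does. Your treatment of the Klein-Gordon case and your explicit check that $t/s^2$ is bounded on $\Hcal_s^*$ are correct and make the argument slightly more transparent than the paper's presentation.
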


\begin{proof} We only show the first statement, since the proof of the second one is similar.  
From the identity 
$$
\aligned
E^*_{g}(s,u) - E^*_{M}(s,u)
& = \int_{\Hcal_s^*} \Big(-h^{00}|\del_t u|^2 + h^{ab} \del_au\del_bu + \sum_a \frac{2x^a}{t}h^{a \beta} \del_{\beta}u\del_t u \Big) \, dx
\\
& = \int_{\Hcal_s^*} \Big(h^{\alpha \beta} \del_{\alpha}u\del_{\beta}u + 2\sum_a \frac{x^a}{t}h^{a \beta} \del_t u\del_{\beta}u - 2h^{0\beta} \del_t u\del_{\beta}u\Big) \, dx
\\
& = \int_{\Hcal_s^*} \Big(\hu^{\alpha \beta} \delu_{\alpha}u\delu_{\beta}u + \sum_a \frac{2x^a}{t} \hu^{\alpha'\beta'} \Phi^a_{\alpha'} \Phi_{\beta'}^{\beta} \del_t u\del_{\beta}u -2\hu^{\alpha'\beta'} \Phi_{\alpha'}^0\Phi_{\beta'}^{\beta} \del_t u\del_{\beta}u\Big) \, dx
\endaligned
$$
and then 
$$
\aligned
& E^*_{g}(s,u) - E^*_{M}(s,u)
\\
& = \int_{\Hcal_s^*} \hu^{\alpha \beta} \delu_{\alpha}u\delu_{\beta}udx
+\int_{\Hcal_s^*} \Big(\frac{2x^a}{t} \hu^{a0}|\del_t u|^2 + \frac{2x^a}{t} \hu^{ab} \del_t u\delu_b u\Big) \, dx
\\
& \quad +  \int_{\Hcal_s^*} \Big(-2\hu^{00}|\del_t u|^2 - 2\hu^{0b} \del_t u\delu_b u - \frac{2 x^a}{t} \hu^{a0}|\del_t u|^2 - \frac{2x^a}{t} \hu^{ab} \del_t u\delu_b u\Big) \, dx
\\
& = \int_{\Hcal_s^*} \big(- \hu^{00}|\del_t u|^2 + \hu^{ab} \delu_au \delu_b u\big) \, dx
 = \int_{\Hcal_s^*} \big(-(t/s)^2\hu^{00}|(s/t) \del_t u|^2 + \hu^{ab} \delu_au \delu_b u\big) \, dx, 
\endaligned
$$
we obtain 
$$
|E^*_{g}(s,u) - E^*_{M}(s,u)| \leq C\Big(\|(t/s)^2\hu^{00} \|_{L^{\infty(\Hcal_s^*)}} + \sum_{a,b} \|\hu^{ab} \|_{L^\infty(\Hcal_s^*)} \Big) E^*_{M}(s,u). 
$$
Then, recall that in view of \eqref{ineq h00-sup 1.5}, $|h| \leq CC_1\vep (s/t)t^{-1/2}s^{\delta} + Cm_St^{-1}$. When $C_1\vep$ is sufficiently small, we have
\bel{ineq proof 1 lem h00-sup 2}
|\hu^{\alpha \beta}| \leq C\max_{\alpha, \beta}|h_{\alpha \beta}| \leq CC_1\vep (s/t)t^{-1/2}s^{\delta} + Cm_St^{-1}.
\ee
On the other hand, from \eqref{ineq h00-sup 1.5}, we obtain 
$
|\hu^{00}| \leq CC_1\vep (s/t)^2t^{-1/2}s^{\delta} + Cm_St^{-1},
$
which implies
\bel{ineq proof 2 lem h00-sup 2}
|(t/s)^2\hu^{00}| \leq CC_1\vep t^{-1/2}s^{\delta} + Cm_S.
\ee
Now, when $C_1\vep$ is sufficiently small, \eqref{ineq proof 1 lem h00-sup 2} and \eqref{ineq proof 2 lem h00-sup 2} imply that
$
|E^*_{g}(s,u) - E^*_{M}(s,u)| \leq (1/2) E^*_{M}(s,u), 
$
which leads us to the desired result.
\end{proof}

\begin{lemma}[Derivation of the uniform bound on $M_{\alpha \beta}$]\label{lem h00-sup 3}
Under the energy assumption \eqref{ineq energy assumption 2}, the following estimate holds:
\bel{ineq lem h00-sup 3}
M_{\alpha \beta}[\del^IL^J h]\leq C (C_1\vep)^2 s^{-3/2 +2\delta}, \quad |I|+|J| \leq N, 
\ee
and 
\begin{subequations} 
\label{ineq lem h00-sup 3.5}
\bel{ineq lem h00-sup 3.5 a}
M[\del^IL^J \phi]\leq C (C_1\vep)^2 s^{-3/2 +2\delta}, \quad |I|+|J| \leq N-4, 
\ee
\bel{ineq lem h00-sup 3.5 b}
M[\del^IL^J \phi]\leq C (C_1\vep)^2 s^{-1+2\delta}, \quad |I|+|J| \leq N. 
\ee
\end{subequations}
\end{lemma}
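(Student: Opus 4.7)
The plan is to isolate the one dangerous algebraic structure in the integrand defining $M_{\alpha\beta}[\del^IL^J h](s)$ and control it via the wave-gauge improvement of Lemma~\ref{lem h00-sup 1}, with everything else handled by sup-norm times $L^2$ pairings calibrated to $E_M^*$. Setting $u = \del^I L^J h_{\alpha\beta}$ and using $\del_a u = \delu_a u - (x^a/t)\del_t u$, a direct algebraic expansion yields
\begin{equation*}
\del_\mu g^{\mu\nu}\del_\nu u\,\del_t u - \tfrac12 \del_t g^{\mu\nu}\del_\mu u\,\del_\nu u = B\,(\del_t u)^2 + R_1\cdot \delu u\,\del_t u + R_2\cdot \delu u\,\delu u,
\end{equation*}
where $R_1, R_2$ are linear in $\del g$. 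After rewriting $\del_a g^{a\nu}$ in the semi-hyperboloidal frame and using the identity $\tfrac12\del_t\minu^{00} = -r^2/t^3$ coming from $\minu^{00} = -(s/t)^2$, the purely Minkowski part of the $(\del_t u)^2$-coefficient cancels exactly, leaving
\begin{equation*}
B = \tfrac12 \del_t \hu^{00} - (x^a/t^2)h^{0a} + (x^ax^b/t^3)h^{ab} + \delu_a h^{a0} - (x^a/t)\delu_b h^{ba}.
\end{equation*}

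Each of the five terms in $B$ decays as $O(C_1\vep\, t^{-3/2} s^\delta)$ on $\Hcal_s^*$: the first by Lemma~\ref{lem h00-sup 1}; the good derivatives $\delu h$ by \eqref{ineq basic-sup 2 generation 1 a}; the two zeroth-order contributions by \eqref{ineq basic-sup-h} combined with their explicit $1/t$-factor. Using $t\geq s$ on $\Hcal_s$ one obtains $\|(t/s)B\|_{L^\infty(\Hcal_s^*)} \leq CC_1\vep s^{-3/2+\delta}$, so that
\begin{equation*}
\int_{\Hcal_s^*}(s/t)|B|(\del_t u)^2\,dx \leq \|(t/s)B\|_{L^\infty}\,\|(s/t)\del_t u\|_{L^2}^2 \leq CC_1\vep s^{-3/2+\delta}\,E_M^*(s,u).
\end{equation*}
For the $R_1, R_2$ pieces, Cauchy--Schwarz with the basic bound $\|\del g\|_{L^\infty(\Hcal_s^*)} \leq CC_1\vep s^{-3/2+\delta}$ (which follows from \eqref{ineq basic-sup 2 generation 1 a} and $t\geq s$) produces the same upper bound. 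Factoring one $E_M^*(s,u)^{1/2}$ using the bootstrap estimate $E_M^*(s,u)^{1/2}\leq CC_1\vep s^\delta$ then yields \eqref{ineq lem h00-sup 3}. The Klein--Gordon case proceeds by the identical decomposition of the corresponding integrand for $u=\del^IL^J\phi$, now paired with $E_{M,c^2}$: the bootstrap value $E_{M,c^2}(s,u)^{1/2}\leq CC_1\vep s^\delta$ valid for $|I|+|J|\leq N-4$ delivers \eqref{ineq lem h00-sup 3.5 a}, while the weaker $CC_1\vep s^{1/2+\delta}$ at top order produces \eqref{ineq lem h00-sup 3.5 b} with an $s^{1/2}$ loss.

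The main obstacle is the derivation of the explicit algebraic form of $B$ displayed above: without the exact Minkowski cancellation, a non-$\vep$-small geometric contribution of size $O(1/t)$ would persist in $B$ and reduce the decay of $M$ to only $s^{-1+2\delta}$, which is the threshold of integrability on $[2,\infty)$ and would not suffice to close the energy inequality of Proposition~\ref{prop 1 energy-W}. The wave-gauge estimate on $\del\hu^{00}$ enters precisely here, replacing the generic $|\del g|\sim t^{-1/2}s^{-1+\delta}$ by the substantially better $|\del_t\hu^{00}|\sim t^{-3/2}s^\delta$ that guarantees integrability.
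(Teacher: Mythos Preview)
Your argument is correct and reaches the same conclusion as the paper by the same essential mechanism: isolate the coefficient of $(\del_t u)^2$, recognize that it is governed by $\del_t\hu^{00}$, and invoke the wave-gauge estimate \eqref{ineq h00-sup 1}. Your explicit formula for $B$ is right, and the sup-norm/$L^2$ pairings and the final bookkeeping with the bootstrap energies are exactly what is needed.

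One organizational point: the ``Minkowski cancellation'' you describe is a self-inflicted detour. In Cartesian coordinates $m^{\mu\nu}$ is constant, so $\del_\mu g^{\mu\nu}=\del_\mu h^{\mu\nu}$ and $\del_t g^{\mu\nu}=\del_t h^{\mu\nu}$ from the outset; there is no Minkowski contribution to the integrand at all, hence nothing to cancel and no need to invoke $\del_t\minu^{00}$. The paper exploits this immediately (see the first displayed line of its proof, where $\del_\mu g^{\mu\nu}$ is replaced by $\del_\mu h^{\mu\nu}$), and then passes to the semi-hyperboloidal frame via $\del_\mu h^{\mu\nu}\del_\nu = \delu_\mu\hu^{\mu\nu}\delu_\nu - h^{\mu\nu}\del_{\mu'}(\Psi_\mu^{\mu'}\Psi_\nu^{\nu'})\del_{\nu'}$, which directly exhibits the dangerous $\del_t\hu^{00}\del_t u\,\del_t u$ piece, the null pieces with a $\delu_a$, and the $t^{-1}$-decaying correction. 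Your route---expanding $\del_a u=\delu_a u-(x^a/t)\del_t u$ on the $u$-factors and then reassembling the $\del_t h$-terms into $\tfrac12\del_t\hu^{00}$---is equivalent but more computational; the paper's frame identity gets you there in one step.
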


\begin{proof} We only provide the proof of the third inequality, since the other two are easier. 
Recall the definition of $M[\del^IL^J\phi]$
\be
\aligned
& \int_{\Hcal_s}(s/t) \big|\del_{\mu}g^{\mu\nu} \del_{\nu} \big(\del^IL^J\phi\big) \del_t\big(\del^IL^J\phi\big)
- \frac{1}{2} \del_tg^{\mu\nu} \del_{\mu} \big(\del^IL^J\phi\big) \del_{\nu} \big(\del^IL^J\phi\big) \big| \, dx
\\
& \leq M[\del^IL^J\phi](s)E_M(s, \del^IL^J\phi)^{1/2}.
\endaligned
\ee
We perform the following calculation:
\bel{eq 1 proof lem h00-sup 3}
\aligned
&(s/t) \del_{\mu}g^{\mu\nu} \del_{\nu} \left(\del^IL^J\phi\right) \del_t\left(\del^IL^J\phi\right)
 =  (s/t) \del_{\mu}h^{\mu\nu} \del_{\nu} \left(\del^IL^J\phi\right) \del_t\left(\del^IL^J\phi\right)
\\
& =  (s/t) \delu_{\mu} \hu^{\mu\nu} \delu_{\nu} \left(\del^IL^J\phi\right) \del_t\left(\del^IL^J\phi\right)
- (s/t) \del_{\mu'} \left(\Psi_{\mu}^{\mu'} \Psi_{\nu}^{\nu'} \right)h^{\mu\nu} \del_{\nu'} \left(\del^IL^J\phi\right) \del_t\left(\del^IL^J\phi\right)
\\
& =  (s/t) \del_t\hu^{00} \del_t\left(\del^IL^J\phi\right) \del_t\left(\del^IL^J\phi\right)
\\
& \quad +  (s/t) \del_t\hu^{0a} \delu_a \left(\del^IL^J\phi\right) \del_t\left(\del^IL^J\phi\right)
 + (s/t) \delu_b\hu^{b0} \del_t\left(\del^IL^J\phi\right) \del_t\left(\del^IL^J\phi\right)
\\
& \quad +  (s/t) \delu_a \hu^{ab} \delu_b\left(\del^IL^J\phi\right) \del_t\left(\del^IL^J\phi\right)
\\
& \quad -  (s/t) \del_{\mu'} \left(\Psi_{\mu}^{\mu'} \Psi_{\nu}^{\nu'} \right)h^{\mu\nu} \del_{\nu'} \left(\del^IL^J\phi\right) \del_t\left(\del^IL^J\phi\right)
\endaligned
\ee
and then observe that  
$$
\aligned
&\int_{\Hcal_s}(s/t) \left|\del_t\hu^{00} \del_t\left(\del^IL^J\phi\right) \del_t\left(\del^IL^J\phi\right) \right| \, dx
 = \int_{\Hcal_s}(t/s) \left|\del_t\hu^{00} \right| \, \left|(s/t) \del_t\left(\del^IL^J\phi\right) \right|^2dx
\\
& \leq CC_1\vep \int_{\Hcal_s}(t/s)t^{-3/2}s^{\delta} \, \left|(s/t) \del_t\left(\del^IL^J\phi\right) \right|^2dx
\\
& \leq CC_1\vep s^{-3/2 +\delta}E_M(s, \del^IL^J\phi)
\\
& \leq 
\left\{
\aligned
 &C(C_1\vep)^2s^{-3/2 +2\delta}E_M(s, \del^IL^J\phi)^{1/2}, \quad |I|+|J| \leq N-4,
 \\
 &C(C_1\vep)^2s^{-1+2\delta}E_M(s, \del^IL^J\phi)^{1/2}, N-3\leq |I|+|J| \leq N,
 \endaligned
 \right.
\endaligned
$$
where we have used \eqref{ineq h00-sup 1}, \eqref{ineq energy assumption 12} and \eqref{ineq energy assumption 22}.
The second, third, and fourth terms in the right-hand side of \eqref{eq 1 proof lem h00-sup 3} are null terms, we observe that the second term is bounded as follows:
$$
\aligned
&\int_{\Hcal_s} \left|(s/t) \del_t\hu^{0a} \delu_a \left(\del^IL^J\phi\right) \del_t\left(\del^IL^J\phi\right) \right| \, dx
 \leq \int_{\Hcal_s} \left|\del_t\hu^{0a} \right| \, \left|\delu_a \left(\del^IL^J\phi\right)(s/t) \del_t\left(\del^IL^J\phi\right) \right| \, dx
\\
& \leq  CC_1\vep s^{-3/2 +\delta}E_M(s, \del^IL^J\phi)
\\
& \leq 
\left\{
\aligned
 &C(C_1\vep)^2s^{-3/2 +2\delta}E_M(s, \del^IL^J\phi)^{1/2}, \quad |I|+|J| \leq N-4,
 \\
 &C(C_1\vep)^2s^{-1+2\delta}E_M(s, \del^IL^J\phi)^{1/2}, \, N-3\leq |I|+|J| \leq N.
 \endaligned
 \right.
\endaligned
$$
The third  and fourth terms are bounded similarly and we omit the details.

The last term is bounded by applying the additional decay provided by $\del_{\mu'} \left(\Psi_{\mu}^{\mu'} \Psi_{\nu}^{\nu'} \right)$. This term is bounded by $t^{-1}$. We have
$$
\aligned
&\int_{\Hcal_s} \left|(s/t) \del_{\mu'} \left(\Psi_{\mu}^{\mu'} \Psi_{\nu}^{\nu'} \right)h^{\mu\nu} \del_{\nu'} \left(\del^IL^J\phi\right) \del_t\left(\del^IL^J\phi\right) \right| \, dx
\\
& \leq CC_1\vep \int_{\Hcal_s}t^{-1}(t/s)|h^{\mu\nu}| \, \left|(s/t) \del_{\nu'} \left(\del^IL^J\phi\right)(s/t) \del_t\left(\del^IL^J\phi\right) \right| \, dx
\\
& \leq CC_1\vep \int_{\Hcal_s}s^{-1} \big(t^{-1} + t^{-1/2}(s/t)s^{\delta} \big)
\left|(s/t) \del_{\nu'} \left(\del^IL^J\phi\right)(s/t) \del_t\left(\del^IL^J\phi\right) \right| \, dx
\\
& \leq  CC_1\vep s^{-3/2 +\delta}E_M(s, \del^IL^J\phi)
\\
& \leq 
\left\{
\aligned
 &C(C_1\vep)^2s^{-3/2 +2\delta}E_M(s, \del^IL^J\phi)^{1/2}, \quad |I|+|J| \leq N-4,
 \\
 &C(C_1\vep)^2s^{-1+2\delta}E_M(s, \del^IL^J\phi)^{1/2}, N-3\leq |I|+|J| \leq N.
 \endaligned
 \right.
\endaligned
$$
We conclude that
$$
\aligned
& \int_{\Hcal_s} \left|(s/t) \del_{\mu}g^{\mu\nu} \del_{\nu} \left(\del^IL^J\phi\right) \del_t\left(\del^IL^J\phi\right) \right| \, dx
\\
& \leq
\left\{
\aligned
 &C(C_1\vep)^2s^{-3/2 +2\delta}E_M(s, \del^IL^J\phi)^{1/2}, \quad |I|+|J| \leq N-4,
 \\
 &C(C_1\vep)^2s^{-1+2\delta}E_M(s, \del^IL^J\phi)^{1/2}, N-3\leq |I|+|J| \leq N.
 \endaligned
 \right.
\endaligned
$$
The term $\del_tg^{\mu\nu} \del_{\mu} \big(\del^IL^J\phi\big) \del_{\nu} \big(\del^IL^J\phi\big)$ is bounded similarly and we omit the details.
\end{proof}

\begin{lemma} \label{lem h00-sup 4} 
Following the notation in Proposition \ref{Linfini KG}. When the bootstrap assumption \eqref{ineq energy assumption 1} holds, the following estimate holds:
\bel{ineq lem h00-sup 4 1}
|h_{t, x}'(\lambda)| \leq CC_1\vep (s/t)^{1/2} \lambda^{-3/2 +\delta} + CC_1\vep (s/t)^{-1} \lambda^{-2}.
\ee
\end{lemma}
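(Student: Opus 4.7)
The plan is to express $h_{t,x}'(\lambda)$ explicitly as a combination of derivatives of $\hu^{00}$ at the point $p_\lambda := (\lambda t/s,\lambda x/s)$, and then to invoke Lemma~\ref{lem h00-sup 1} in a careful way. Since $\tau/\bar s = t/s$ is constant along the ray $\lambda\mapsto p_\lambda$, the identity $\hb^{00} = (\tau/\bar s)^2\hu^{00}$ simplifies $h_{t,x}(\lambda)$ to $(t/s)^2\,\hu^{00}(p_\lambda)$. The chain rule together with $\del_a = \delu_a - (x^a/t)\del_\tau$ at $p_\lambda$ (valid because $y^a/\tau = x^a/t$ along the ray) yields, after the identity $t^2-r^2=s^2$ combines the $(t/s)^3\del_\tau$ and $(tr^2/s^3)\del_\tau$ contributions into a single $(t/s)\del_\tau$, the decomposition
\[
h_{t,x}'(\lambda) = \frac{t}{s}\,\del_\tau \hu^{00}(p_\lambda) + \frac{t^2 x^a}{s^3}\,\delu_a \hu^{00}(p_\lambda).
\]

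The first term is handled immediately: Lemma~\ref{lem h00-sup 1} gives $|\del_\alpha\hu^{00}(p_\lambda)|\leq CC_1\vep\,\tau^{-3/2}\bar s^\delta = CC_1\vep(s/t)^{3/2}\lambda^{-3/2+\delta}$, so multiplication by $(t/s)$ produces $CC_1\vep(s/t)^{1/2}\lambda^{-3/2+\delta}$. For the second term, the naive bound $|\delu_a\hu^{00}|\leq CC_1\vep t^{-3/2}s^\delta$ read off from the $\del_\alpha\hu^{00}$ estimate is too weak: it would introduce a spurious factor $(s/t)^{-3/2}\lambda^{-3/2+\delta}$ incompatible with the conclusion near the light cone. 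Instead I would write $\delu_a = L_a/\tau$ at $p_\lambda$ and use the sharper bound
\[
|L_a\hu^{00}(p_\lambda)| \leq CC_1\vep\,\tau^{-1/2}(\bar s/\tau)^2\,\bar s^\delta + Cm_S\,\tau^{-1},
\]
which is the $|J|=1$ case of the $\hu^{00}$-estimate in Lemma~\ref{lem h00-sup 1} and carries the critical extra $(s/t)^2$ factor. Substituting $\tau=\lambda t/s$, $\bar s=\lambda$ and using $|x^a|/s\leq r/s\leq t/s$ in the coefficient $(t^2 x^a/s^3)=(t/s)^2(x^a/s)$, a brief calculation yields $CC_1\vep(s/t)^{1/2}\lambda^{-3/2+\delta}$ from the main piece and $Cm_S(s/t)^{-1}\lambda^{-2}$ from the Schwarzschild tail, the latter absorbed into $CC_1\vep(s/t)^{-1}\lambda^{-2}$ since $m_S\leq\vep$.

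\textbf{Main obstacle.} The heart of the argument is the reorganization in the second term: one must not estimate $\delu_a\hu^{00}$ from the pointwise bound on $\del_\alpha\hu^{00}$, but instead rewrite the tangential derivative as $L_a/\tau$ so as to exploit the conformal weight $(s/t)^2$ present in the $\hu^{00}$-bound itself, which was produced by the wave gauge condition in Lemma~\ref{lem wave-condition 4}. Once this structural improvement is identified, the two announced pieces emerge naturally from the main contribution and from the Schwarzschild tail respectively, and the remaining computations are routine.
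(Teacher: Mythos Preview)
Your proof is correct and follows essentially the same approach as the paper. The paper computes $h_{t,x}'(\lambda)=(t/s)^3\,\delu_\perp\hu^{00}(p_\lambda)$ and then decomposes $\delu_\perp=(s/t)^2\del_t+(x^a/t^2)L_a$, which is algebraically the same splitting you reach after writing $\delu_a=L_a/\tau$; both arguments then invoke \eqref{ineq h00-sup 1} for the $\del_t$ piece and \eqref{ineq h00-sup 1.5} for the $L_a$ piece, and you correctly single out the need for the extra $(s/t)^2$ gain in the latter.
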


\begin{proof}
Following the notation in Proposition \ref{Linfini KG}, we have 
$
h_{t, x}(\lambda) = \hb^{00} \bigg(\frac{\lambda t}{s}, \frac{\lambda x}{s} \bigg)
$
Recalling that
$
\hb^{00}=(t/s)^2\hu^{00}
$
we find
$
h_{t, x}(\lambda) = (t/s)^2\hu^{00} \bigg(\frac{\lambda t}{s}, \frac{\lambda x}{s} \bigg)
$
which leads us to
\bel{ineq 1 proof lem h00-sup 4}
h_{t, x}'(\lambda) = (t/s)^3\delu_{\perp} \hu^{00} \bigg(\frac{\lambda t}{s}, \frac{\lambda x}{s} \bigg).
\ee
Here we recall also that
$
\delu_{\perp} \hu^{00} = \frac{s^2}{t^2} \del_t\hu^{00} + \frac{x^a}{t} \delu_a \hu^{00} = \frac{s^2}{t^2} \del_t\hu^{00} + \frac{x}{t^2}L_a \hu^{00}.
$
We see that, in view of \eqref{ineq h00-sup 1},
$\big|(t/s) \del_t\hu^{00} \big| \leq CC_1\vep (s/t)^{1/2}s^{-3/2 +\delta}$
and, in view of \eqref{ineq h00-sup 1.5},
$$
\big|(t/s)^2s^{-1}L_a \hu^{00} \big| \leq CC_1\vep (s/t)^{1/2}s^{-3/2 +\delta} + Cm_Sts^{-3}.
$$ 
By combining this result with \eqref{ineq 1 proof lem h00-sup 4}, the desired conclusion is reached.
\end{proof}


\subsection{$L^2$ estimates}
\label{subsubsec L-2-h00}

We first establish an $L^2$ estimate on the gradient of $\del^IL^J \hu^{00}$.

\begin{lemma} \label{prop h00-sup 0}
Under the bootstrap assumptions \eqref{ineq energy assumption 1} and \eqref{ineq energy assumption 2}, the following estimate holds:
\bel{ineq h00-L-2 wave}
\left\|\del^IL^J\del_\alpha \hu^{00} \right\|_{L^2(\Hcal_s^*)} + \left\|\del_\alpha \del^IL^J\hu^{00} \right\|_{L^2(\Hcal_s^*)} \leq CC_1\vep s^{2\delta}.
\ee
\end{lemma}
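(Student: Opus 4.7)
The plan is to exploit the wave gauge condition to trade the bad derivative $\del_t\hu^{00}$ for quantities that are directly controlled by the bootstrap energies, since the flat energy $E_M^*$ only provides a bound on $\|(s/t)\del_t u\|_{L^2(\Hcal_s^*)}$ and not on $\|\del_t u\|_{L^2(\Hcal_s^*)}$. It suffices to bound $\|\del^IL^J\del_t\hu^{00}\|_{L^2(\Hcal_s^*)}$: the remaining quantities $\|\del^IL^J\del_a\hu^{00}\|_{L^2(\Hcal_s^*)}$ and $\|\del_\alpha\del^IL^J\hu^{00}\|_{L^2(\Hcal_s^*)}$ follow from the semi-hyperboloidal decomposition $\del_a=\delu_a-(x^a/t)\del_t$ (which represents a spatial derivative as a good derivative plus a bounded multiple of $\del_t$) together with the commutator estimates \eqref{ineq com 2.1}--\eqref{ineq com 2.4} of Lemma \ref{lem com 2}.

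For the core bound, I would apply the higher-order wave coordinate estimate \eqref{ineq wave-condition 2a} of Lemma \ref{lem wave-condition 4} to majorize $|\del^IL^J\del_t\hu^{00}|$ pointwise by a sum of three linear contributions
$(s/t)^2|\del\del^{I'}L^{J'}\hu|$, $|\del^{I'}L^{J'}\delu\hu|$, and $t^{-1}|\del^{I'}L^{J'}\hu|$
with $|I'|+|J'|\le|I|+|J|$, $|J'|\le|J|$, plus quadratic terms $|\del^{I_1}L^{J_1}\hu|\cdot|\del\del^{I_2}L^{J_2}\hu|$. Taking the $L^2(\Hcal_s^*)$ norm, the first linear term is controlled by $\|(s/t)\del\del^{I'}L^{J'}\hu\|_{L^2(\Hcal_s^*)}\lesssim C_1\vep s^\delta$ from \eqref{ineq basic-L2 2 generation 1 a} (the extra $(s/t)$ factor being absorbed into the weighted energy norm), while the second is literally the good-derivative estimate \eqref{ineq basic-L2 2 generation 1 a}, giving again $C_1\vep s^\delta$.

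The delicate term is $\|t^{-1}\del^{I'}L^{J'}\hu\|_{L^2(\Hcal_s^*)}$, which carries no derivative on the outer factor: this is precisely where the adapted Hardy inequality of Lemma \ref{lem 0 Hardy} is needed. Writing $t^{-1}\le Cr^{-1}$ in the portion $\{r\ge 1\}\cap\Hcal_s^*$, the Hardy bound of Lemma \ref{lem 0 Hardy} yields $\|r^{-1}\del^{I'}L^{J'}\hu\|_{L^2(\Hcal_s^*)}\lesssim C_1\vep s^\delta+m_Ss^{-1}$; in the complementary bounded region $\{r\le 1\}\cap\Hcal_s^*$ one has $t\simeq s$ and the basic sup-norm estimate \eqref{ineq basic-sup-h} together with the finite volume gives a bound of the form $C_1\vep s^{-1/2+\delta}$. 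The Schwarzschild boundary term $m_Ss^{-1}$ is harmless. This is the main obstacle of the argument, and it is exactly here that one sees why the adapted Hardy inequality (rather than the classical one for compactly supported functions) is essential: in the compact Schwarzschild perturbation setting $\hu_{\alpha\beta}$ is not supported in $\Kcal$.

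Finally, the quadratic contributions are handled by an $L^\infty$--$L^2$ split. When $|I_1|+|J_1|\le N-2$, one pairs the sup-norm bound \eqref{ineq basic-sup-h} on the undifferentiated factor with \eqref{ineq basic-L2 2 generation 1 a} applied to the differentiated factor, absorbing the loss via $(s/t)^{-1}$ and $t^{-1/2}\le s^{-1/2}$ on $\Hcal_s^*$; when $|I_1|+|J_1|\ge N-1$, which forces $|I_2|+|J_2|$ small, one swaps the roles and uses \eqref{ineq basic-sup 2 generation 1 a} on the second factor (which carries an explicit derivative). Each such pairing produces a bound of order $(C_1\vep)^2 s^{2\delta}$ with additional negative powers of $s$, and hence is controlled by $CC_1\vep s^{2\delta}$ once $C_1\vep$ is small enough. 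The exponent $s^{2\delta}$ in the statement originates precisely from this quadratic contribution, while the linear contributions only produce $s^\delta$. Combining these bounds with the reduction outlined above completes the proof.
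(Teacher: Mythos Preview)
Your proposal is correct and follows essentially the same route as the paper: apply the wave-coordinate estimate \eqref{ineq wave-condition 2a}, bound the two linear ``good'' terms by the basic $L^2$ energies, control $\|t^{-1}\del^{I'}L^{J'}\hu\|_{L^2(\Hcal_s^*)}$ via the adapted Hardy inequality \eqref{ineq 0 Hardy}, and treat the quadratic remainder by an $L^\infty$--$L^2$ split. Two small remarks: your split into $\{r\ge 1\}$ and $\{r\le 1\}$ is unnecessary since $t^{-1}\le r^{-1}$ throughout $\Hcal_s^*$ and Lemma~\ref{lem 0 Hardy} already absorbs the origin; and in the quadratic case $|I_1|+|J_1|\ge N-1$ the paper, after putting $L^\infty$ on the differentiated factor, bounds the remaining $\|t^{-1/2}s^{-1+\delta}\del^{I_1}L^{J_1}\hu\|_{L^2}$ by $s^{\delta}\|t^{-1}\del^{I_1}L^{J_1}\hu\|_{L^2}$ and invokes Hardy once more---this is the step that produces the $s^{2\delta}$ and that your sketch leaves implicit.
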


\begin{proof} The estimate is immediate in view of \eqref{ineq wave-condition 2a}. Namely, thanks to the basic $L^2$ estimates, we have 
$$
\|(s/t)^2\del\del^{I'}L^{J'} \hu\|_{L^2(\Hcal_s^*)} + \|\delu\del^{I'}L^{J'}  \hu \|_{L^2(\Hcal_s^*)} \leq CC_1\vep s^{\delta}.
$$
By \eqref{ineq 0 Hardy}, we get 
\bel{eq pr1 pro h00-sup 0}
\|t^{-1} \del^IL^J  \hu^{00} \|_{L^2(\Hcal_s^*)} \leq C\sum_a \|\delu_a \del^IL^J \hu^{00} \|_{L^2(\Hcal_s^*)}
 + Cm_Ss^{-1} \leq CC_1\vep s^{\delta}.
\ee
Now, from \eqref{ineq wave-condition 2a}, we need to control the term
$
|\del^{I_1}L^{J_1} \hu\del\del^{I_2}L^{J_2} \hu|$.
When $|I_1|+|J_1| \leq N-2$, we apply \eqref{ineq basic-sup-h} and \eqref{ineq basic-L2 1 generation 1 a} :
$$
\|\del^{I_1}L^{J_1} \hu\del\del^{I_2}L^{J_2} \hu \|_{L^2(\Hcal_s^*)} 
\leq CC_1\vep s^{\delta} \|(s/t)t^{-1/2} \del^{I_2}L^{J_2} \hu\|_{L^2(\Hcal_s^*)} \leq CC_1\vep s^{\delta}.
$$
When $N-1\leq |I_1|+|J_1| \leq N$, we see that $|I_2|+|J_2| \leq 1$. We have 
$$
\aligned
\|\del^{I_1}L^{J_1} \hu\del\del^{I_2}L^{J_2} \hu \|_{L^2(\Hcal_s^*)} 
& \leq CC_1\vep s^{\delta} \|t^{-1/2}s^{-1} \del^{I_1}L^{J_1} \hu\|_{L^2(\Hcal_s^*)}
\\
& \leq CC_1\vep s^{\delta} \|t^{-1} \del^{I_1}L^{J_1} \hu\|_{L^2(\Hcal_s^*)}
\leq CC_1\vep s^{2\delta}, 
\endaligned
$$
where we have used \eqref{eq pr1 pro h00-sup 0}.  
\end{proof}

We are going now to derive the $L^2$ estimate on (the ``essential part'' of) $\del^IL^J \hu^{00}$. This is one of the most challenging terms and we first decompose $\hu^{00}$ as follows: 
$$
\hu^{00} := \chi(r/t) \hu^{00}_0 + \hu^{00}_1,
$$
where $\hu_0^{00} = \hu_S^{00}$ is the corresponding component of the Schwarzschild metric and the function $\chi$ is smooth
 with $\chi(\tau) = 0$ for $\tau \in [0, 1/3]$ while 
$\chi(\tau) = 1$ for $\tau \geq 2/3$.  
We introduce the notation
$
\hu^{00}_0:= \chi(r/t) \hu_S^{00}
$
and an explicit calculation shows that in $\Kcal_{[2,+\infty)}$
$$
\aligned
|\hu^{00}_0|  & \leq Cm_S t^{-1} \leq Cm_S(1+r)^{-1},
\qquad 
|\del_{\alpha} \hu_0^{00}| \leq Cm_S t^{-2} \leq Cm_S (1+r^2)^{-1}.
\endaligned
$$
This leads us to the estimate
\be
\|\del_a \hu^{00}_0\|_{L_f^2(\Hcal_s)} \leq Cm_S, \qquad \|\delu_a \hu^{00}_0\|_{L_f^2(\Hcal_s)} \leq Cm_S
\ee
and we are ready to establish the following result.

\begin{proposition} \label{prop h00-sup 1}
Assume that the bootstrap assumptions \eqref{ineq energy assumption 1} and \eqref{ineq energy assumption 2} hold  
with $C_1\vep$ sufficiently small (so that Lemma \ref{lem wave-condition 4} holds). Then, one has 
\be
|\del^IL^J \hu^{00}| \leq Cm_St^{-1} + |\del^IL^J \hu^{00}_1|
\ee
and 
\be 
\aligned
\|(s/t)^{-1+\delta}s^{-1} \del^IL^J \hu_1^{00} \|_{L^2(\Hcal_s^*)}
& \leq  CC_0 \, \vep
+ C\sum_{|I'| \leq |I|,|J'| \leq|J|\atop \alpha, \beta}E_M^*(s, \del^IL^J h_{\alpha \beta})^{1/2}
\\
& \quad +  C\sum_{|I'| \leq |I|,|J'| \leq|J|\atop \alpha, \beta} \int_2^s\tau^{-1}E_M^*(\tau, \del^{I'}L^{J'}h_{\alpha \beta})^{1/2}d\tau
  \leq  CC_1\vep s^{\delta}.
\endaligned
\ee
\end{proposition}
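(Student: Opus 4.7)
The first inequality is immediate: by definition $\del^IL^J \hu^{00} = \del^IL^J\big(\chi(r/t)\hu^{00}_0\big) + \del^IL^J \hu^{00}_1$, and since $\chi(r/t)$ is smooth and homogeneous of degree $0$ while $\hu_S^{00}$ decays like $m_S/r$ together with its $\del^IL^J$-derivatives inside $\Kcal$, one has $|\del^IL^J(\chi(r/t)\hu_S^{00})| \leq C m_S t^{-1}$; the triangle inequality then gives the claimed pointwise bound.

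For the $L^2$ bound, the plan is to apply the weighted Hardy inequality of Proposition~\ref{prop Hardy 2} with $\sigma = 1-\delta$ to the function $u = \del^IL^J \hu^{00}_1$. Because the initial data is a compact Schwarzschild perturbation and the solution equals the Schwarzschild metric in a neighborhood of the boundary $\{r = t-1\}$ (cf.\ Proposition~\ref{prop basic-localization}), in that neighborhood $\hu^{00}_1 = \hu^{00} - \chi(r/t)\hu^{00}_S = \hu^{00}_S - \hu^{00}_S = 0$ (since $\chi(r/t)=1$ there). Hence $\del^IL^J\hu^{00}_1$ is supported away from $\del\Kcal$, and Proposition~\ref{prop Hardy 2} is applicable and yields
\begin{equation*}
\aligned
\|(s/t)^{-1+\delta}s^{-1}\del^IL^J\hu^{00}_1\|_{L^2_f(\Hcal_s)}
&\leq C\|(s_0/t)^{-1+\delta}s_0^{-1}\del^IL^J\hu^{00}_1\|_{L^2(\Hcal_{s_0})}
+ C\sum_a\|\delu_a \del^IL^J\hu^{00}_1\|_{L^2_f(\Hcal_s)} \\
&\quad + C\sum_a\int_{s_0}^s \tau^{-1}\Big(\|(s/t)^{\delta}\del_a\del^IL^J\hu^{00}_1\|_{L^2(\Hcal_\tau)} + \|\delu_a\del^IL^J\hu^{00}_1\|_{L^2(\Hcal_\tau)}\Big)d\tau.
\endaligned
\end{equation*}

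Next, the three terms on the right must be absorbed into the stated bound. The initial data term is controlled by $CC_0\vep$ using the smallness hypothesis on the data on $\Hcal_2$, combined with the Schwarzschild subtraction (which at worst contributes $Cm_S \leq C\vep$). For the ``good derivative'' terms $\|\delu_a \del^IL^J\hu^{00}_1\|_{L^2}$, the observation is that $\delu_a$ commutes well with $\del^IL^J$ modulo homogeneous coefficients of degree $\leq 0$ (Lemma~\ref{lem com 2}), that $\hu^{00}$ is an algebraic function of $h_{\alpha\beta}$ with smooth, $0$-homogeneous coefficients, and that $\delu_a \chi(r/t)\hu^{00}_S$ only contributes an $O(m_S)$ term; so the $L^2$ norm of the good derivative is bounded by $\sum_{|I'|\leq|I|,|J'|\leq|J|} E_M^*(s,\del^{I'}L^{J'}h_{\alpha\beta})^{1/2} + Cm_S$.

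The delicate step, and the main obstacle, is the integral term involving $(s/t)^\delta \del_a \del^IL^J \hu^{00}_1$. Writing $\del_a = \delu_a - (x^a/t)\del_t$ reduces matters to bounding $\|(s/t)^\delta \del_t\del^IL^J\hu^{00}_1\|_{L^2(\Hcal_\tau)}$. Here the wave-coordinate estimate of Lemma~\ref{lem wave-condition 4} is decisive: it controls $\del_t\del^IL^J \hu^{00}$ by terms of the form $(s/t)^2|\del\del^{I'}L^{J'}\hu|$ (whose $(s/t)^2$ factor combines with $(s/t)^\delta$ to dominate $(s/t)\del\hu$ in the flat energy), plus $|\del^{I'}L^{J'}\delu\hu|$, $t^{-1}|\del^{I'}L^{J'}\hu|$, and quadratic remainders $|\del^{I_1}L^{J_1}\hu||\del\del^{I_2}L^{J_2}\hu|$. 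The first three are directly bounded by $E_M^*(\tau, \del^{I'}L^{J'}h_{\alpha\beta})^{1/2}$ (after applying the standard Hardy inequality Lemma~\ref{lem 0 Hardy} to the $t^{-1}|\del^{I'}L^{J'}\hu|$ piece), while the quadratic term is absorbed using the bootstrap bound $|h| \lesssim C_1\vep (s/t)t^{-1/2}s^\delta$ from \eqref{ineq basic-sup-h}, which gives smallness and produces a higher-order contribution $\lesssim (C_1\vep)^2\tau^\delta$ that is harmless. Assembling these bounds produces the desired majoration by the initial data, by the energy on $\Hcal_s$, and by the $\tau^{-1}$-weighted integral of the energy; the final $CC_1\vep s^{\delta}$ estimate then follows from the bootstrap assumption \eqref{ineq energy assumption 11} after a direct logarithmic computation of $\int_2^s \tau^{-1}\tau^\delta d\tau$.
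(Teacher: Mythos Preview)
Your proposal is correct and follows essentially the same route as the paper: apply the weighted Hardy inequality (Proposition~\ref{prop Hardy 2}) with $\sigma=1-\delta$ to the compactly supported function $\del^IL^J\hu^{00}_1$, and control the resulting $(s/t)^\delta\del_a$ term by invoking the wave-gauge estimate of Lemma~\ref{lem wave-condition 4} for $\del_t\del^IL^J\hu^{00}$, treating the $t^{-1}|\del^{I'}L^{J'}\hu|$ contribution via the adapted Hardy inequality (Lemma~\ref{lem 0 Hardy}) and the quadratic remainders by the bootstrap sup-norm. The paper spells out the quadratic case analysis (distinguishing $|I_1|+|J_1|\le N-2$ versus $|I_2|+|J_2|\le N-2$, and within the latter $|I_1|\ge1$ versus $|I_1|=0$) and explicitly records the Schwarzschild contribution $\|\del_\alpha\del^IL^J\hu_0^{00}\|_{L^2(\Hcal_s^*)}\le Cm_Ss^{-1}$, but these are exactly the details your outline is pointing to.
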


\begin{proof} In the decomposition of $\hu^{00}$, the term $\del_{\alpha} \del^IL^J\hu_1^{00}$ vanishes near the  
boundary of $\Kcal_{[2,s^*]}$, since in a neighborhood of this boundary, $\hu^{00} = \hu^{00}_S = \hu^{00}_0$. Furthermore, we have 
\bel{ineq 1 proof prop h00-sup 1}
\|(s/t)^{\delta} \del_\alpha \del^IL^J \hu_1^{00} \|_{L^2(\Hcal_s^*)} \leq \|(s/t)^{\delta} \del_{\alpha} \del^IL^J\hu^{00} \|_{L^2(\Hcal_s^*)} + \|(s/t)^{\delta} \del_{\alpha} \del^IL^J\hu_0^{00} \|_{L^2(\Hcal_s^*)}.
\ee

We recall that
$\del_a = - \frac{x^a}{t} \del_t + \delu_a$,
that is, $\del_{\alpha}$ is a linear combination of $\del_t$ and $\delu_a$ with homogeneous coefficients of degree $0$, so the following estimates are direct in view of \eqref{ineq wave-condition 2a} :
\be
\aligned
&\|(s/t)^{\delta} \del_{\alpha} \del^IL^J \hu^{00} \|_{L^2(\Hcal_s^*)}
\\
& \leq  C\sum_{|I'|+|J'| \leq |I|+|J|\atop |J'| \leq |J|} \Big(\|(s/t)^2\del\del^{I'}L^{J'} h\|_{L^2(\Hcal_s^*)} + \|\delu\del^{I'}L^{J'} h\|_{L^2(\Hcal_s^*)} +\|t^{-1} \del^{I'}L^{J'} h\|_{L^2(\Hcal_s^*)} \Big)
\\
& \quad + C\sum_{|I_1|+|I_2| \leq|I|\atop|J_1|+|J_2| \leq |J|} \big\|(s/t)^{\delta} \del^{I_1}L^{J_1} \hu\del\del^{I_2}L^{J_2} \hu\big\|_{L_f^2(\Hcal_s)}.
\endaligned
\ee
Here the first sum in the right-hand side is easily controlled by
$$
\sum_{|I'| \leq |I|,|J'| \leq|J|\atop \alpha, \beta}E_M^*(s, \del^{I'}L^{J'}h_{\alpha \beta})^{1/2} + C\|t^{-1} \del^{I'}L^{J'} h\|_{L^2(\Hcal_s^*)}.
$$
For the last term, we observe that when $N\geq3$, either $|I_1|+|J_1| \leq N-2$ or else $|I_2|+|J_2| \leq N-2$. When $|I_1|+|J_1| \leq N-2$, in view of \eqref{ineq basic-sup-h},
$$
\aligned
&\big\|(s/t)^{\delta} \del^{I_1}L^{J_1} \hu\del\del^{I_2}L^{J_2} \hu\big\|_{L_f^2(\Hcal_s)}
 \leq  CC_1\vep\big\|\big((s/t)t^{-1/2}s^{\delta} + t^{-1} \big) \del^{I_2}L^{J_2} \del\hu\big\|_{L_f^2(\Hcal_s)}
\\
& \leq  CC_1\vep\big\|(s/t) \del^{I_2}L^{J_2} \del\hu\big\|_{L^(\Hcal_s)}
 \leq CC_1\vep \sum_{|I'| \leq |I|,|J'| \leq|J|\atop \alpha, \beta}E_M^*(s, \del^{I'}L^{J'}h_{\alpha \beta})^{1/2}.
\endaligned
$$
When $|I_2|+|J_2| \leq N-2$, we see that $|I_1|+|J_1|\geq 1$. Then we need to distinguish between two different cases. If $|I_1|\geq 1$, then
$$
\aligned
&
\big\|(s/t)^{\delta} \del^{I_1}L^{J_1} \hu\del\del^{I_2}L^{J_2} \hu\big\|_{L_f^2(\Hcal_s)}
 \leq  CC_1\vep\big\|t^{-1/2}s^{-1+\delta}(s/t)^{\delta} \del^{I_1}L^{J_1} \hu\big\|_{L_f^2(\Hcal_s)}
\\
& \leq  CC_1\vep\|t^{1/2}s^{-2 +\delta}(s/t)^{\delta}(s/t) \del^{I_1}L^{J_1} \hu\|_{L^2(\Hcal_s^*)}
  \leq  CC_1\vep s^{-1} \sum_{|I'| \leq |I|,|J'| \leq|J|\atop \alpha, \beta}E_M^*(s, \del^{I'}L^{J'}h_{\alpha \beta})^{1/2}.
\endaligned
$$
When $|I_1|=0$, we see that $|J_1|\geq 1$. In this case we set $L^{J_1} = L_aL^{J_1'}$ with $|J_1'|\geq 1$. Then
$$
\aligned
&
\big\|(s/t)^{\delta} \del^{I_1}L^{J_1} \hu\del\del^{I_2}L^{J_2} \hu\big\|_{L_f^2(\Hcal_s)}
\\
& \leq  CC_1\vep\big\|(s/t)^{\delta}t^{-1/2}s^{-1+\delta}L_aL^{J_1'} \hu\big\|_{L_f^2(\Hcal_s)}
 =  CC_1\vep\big\|(s/t)^{\delta}t^{-1/2}s^{-1+\delta}t\delu_aL^{J_1'} \hu\big\|_{L_f^2(\Hcal_s)}
\\
& =  CC_1\vep\big\|t^{1/2- \delta}s^{-1+2\delta} \delu_aL^{J_1'} \hu\big\|_{L_f^2(\Hcal_s)}
 \leq CC_1\vep\sum_{|I'| \leq |I|,|J'| \leq|J|\atop \alpha, \beta}E_M^*(s, \del^{I'}L^{J'}h_{\alpha \beta})^{1/2}.
\endaligned
$$
Then the above discussion leads us to
\bel{eq pr1 prop h00-sup 1}
\|(s/t)^{\delta} \del_{\alpha} \del^IL^J \hu^{00} \|_{L^2(\Hcal_s^*)} \leq \sum_{|I'| \leq |I|,|J'| \leq|J|\atop \alpha, \beta}E_M^*(s, \del^{I'}L^{J'}h_{\alpha \beta})^{1/2} + C\|t^{-1} \del^{I'}L^{J'} h\|_{L^2(\Hcal_s^*)}
\ee

Now based on \eqref{eq pr1 prop h00-sup 1}, we continue our discussion. Recalling the adapted Hardy inequality \eqref{ineq 0 Hardy}, we obtain 
$$
\|t^{-1} \del^{I'}L^{J'} h\|_{L^2(\Hcal_s^*)} \leq \|r^{-1} \del^{I'}L^{J'} h\|_{L^2(\Hcal_s^*)}
\leq C\|\delu\del^IL^J h\|_{L^2(\Hcal_s^*)} + Cm_Ss^{-1}, 
$$
so that
$$
\|(s/t)^{\delta} \del_{\alpha} \del^IL^J \hu^{00} \|_{L^2(\Hcal_s^*)}
\leq  C\sum_{|I'| \leq |I|,|J'| \leq|J|\atop \alpha, \beta}E_M^*(s, \del^{I'}L^{J'}h_{\alpha \beta})^{1/2} + Cm_Ss^{-1}. 
$$
On the other hand,  by explicit calculation we have
$
\|\del_{\alpha} \del^IL^J\hu_0^{00} \|_{L^2(\Hcal_s^*)} \leq Cm_S s^{-1}.
$
So in view of \eqref{ineq 1 proof prop h00-sup 1}
$$
\|(s/t)^{\delta} \del_\alpha \del^IL^J \hu_1^{00} \|_{L^2(\Hcal_s^*)}
\leq C\sum_{|I'| \leq |I|,|J'| \leq|J|\atop \alpha, \beta}E_M^*(s, \del^{I'}L^{J'}h_{\alpha \beta})^{1/2} + Cm_Ss^{-1}.
$$
We also recall that by the basic $L^2$ estimate, $\|\delu_a \del^IL^J \hu_1^{00} \|_{L_f^2(\Hcal_s)} \leq CC_1\vep s^{\delta}$. By Proposition \ref{prop Hardy 2} with $\sigma  = 1- \delta$, the desired result is established.
\end{proof}


\subsection{Commutator estimates}

Next, we use the basic estimates and the estimate for $\hu^{00}$ in order to control the commutators
$
[\del^IL^J,h^{\mu\nu} \del_\mu\del_\nu]h_{\alpha \beta}.
$

\begin{lemma} \label{lem 1 commtator-sup I}
Assume that the bootstrap assumptions \eqref{ineq energy assumption 1} and \eqref{ineq energy assumption 2} holds, then for $|I|+|J| \leq N-2$, the following estimate holds in $\Kcal$:
\bel{ineq lem 1 commtator-sup I}
\aligned
&\left|[\del^IL^J,h^{\mu\nu} \del_\mu\del_\nu]h_{\alpha \beta} \right| 
\\
&\leq C(C_1\vep)^2t^{-2}s^{-1+2\delta}
+ CC_1\vep\left(t^{-1} + (s/t)^2t^{-1/2}s^{\delta} \right) \sum_{|J'|<|J|} \left|\del_t\del_t\del^IL^{J'} h_{\alpha \beta} \right|.
\endaligned
\ee
\end{lemma}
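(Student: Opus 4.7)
The strategy is to apply the decomposition of Lemma \ref{lem 1 nonlinear} to $[\del^IL^J,h^{\mu\nu}\del_\mu\del_\nu]h_{\alpha\beta}$, which expresses the commutator as a linear combination (with smooth $0$-homogeneous coefficients) of five families of terms:
(a) \emph{good} quasi-linear terms ${\sl GQQ}_{hh}(N-2,k)$;
(b) $t^{-1}\del^{I_3}L^{J_3}h_{\mu\nu}\,\del^{I_4}L^{J_4}\del_\gamma h_{\mu'\nu'}$;
(c) $\del^{I_1}L^{J_1}\hu^{00}\,\del^{I_2}L^{J_2}\del_t\del_t h_{\alpha\beta}$ with $|I_1|\ge 1$;
(d) $L^{J_1'}\hu^{00}\,\del^IL^{J_2'}\del_t\del_t h_{\alpha\beta}$ with $|J_1'|\ge 1$ and $|J_2'|<|J|$;
(e) $\hu^{00}\,\del_\gamma\del_{\gamma'}\del^IL^{J'}h_{\alpha\beta}$ with $|J'|<|J|$. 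Families (d) and (e) will furnish the retained term in the stated inequality, while (a), (b), (c) together with the ``good'' part of (e) will all be absorbed into $C(C_1\vep)^2 t^{-2}s^{-1+2\delta}$.

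For family (a), Lemma \ref{lem 0 biliner} directly yields the bound $C(C_1\vep)^2 t^{-3}s^{2\delta}$, which is dominated by $C(C_1\vep)^2 t^{-2}s^{-1+2\delta}$ since $s\le t$. For (b), both indices are $\le N-2$, so the factor $\del^{I_3}L^{J_3}h$ is estimated via the radial-ray bound \eqref{ineq basic-sup-h} by $CC_1\vep(s/t)t^{-1/2}s^\delta$ and $\del^{I_4}L^{J_4}\del_\gamma h$ via \eqref{ineq basic-sup 2 generation 1 a} by $CC_1\vep t^{-1/2}s^{-1+\delta}$, multiplying together with $t^{-1}$ to give the claimed bound. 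For (c), the constraint $|I_1|\ge 1$ lets us write $\del^{I_1}L^{J_1}\hu^{00}=\del\,\del^{I_1-1}L^{J_1}\hu^{00}$ and apply \eqref{ineq h00-sup 1} to obtain $|\del^{I_1}L^{J_1}\hu^{00}|\le CC_1\vep t^{-3/2}s^\delta$; the same constraint forces $|I_2|+|J_2|\le N-3$, so a commutator exchange and \eqref{ineq basic-sup 2 generation 1 a} bound the second factor by $CC_1\vep t^{-1/2}s^{-1+\delta}$, and the product is again $C(C_1\vep)^2 t^{-2}s^{-1+2\delta}$.

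The genuine work is concentrated in families (d) and (e). For (d), the sup-norm bound \eqref{ineq h00-sup 1.5} combined with $m_S\le\vep$ gives $|L^{J_1'}\hu^{00}|\le CC_1\vep\big((s/t)^2 t^{-1/2}s^\delta + t^{-1}\big)$, which is precisely the coefficient appearing in the statement, multiplying $|\del^IL^{J_2'}\del_t\del_t h_{\alpha\beta}|$ with $|J_2'|<|J|$. For (e) we decompose $\del_\gamma\del_{\gamma'}$ in the semi-hyperboloidal frame via $\del_a=\delu_a-(x^a/t)\del_t$, writing $\del_\gamma\del_{\gamma'}=c_{\gamma\gamma'}\,\del_t\del_t+R_{\gamma\gamma'}$ with $c_{\gamma\gamma'}$ a bounded smooth homogeneous function of degree $0$ and $R_{\gamma\gamma'}$ a combination of terms each carrying at least one factor $\delu_a=t^{-1}L_a$. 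The $\del_t\del_t$ piece merges into the retained term of (d) with the same coefficient bound. For the $R_{\gamma\gamma'}$ piece, we use $\delu_a=t^{-1}L_a$ together with the commutator estimates of Lemma \ref{lem com 2} to move $L_a$ past $\del^IL^{J'}$, picking up an extra $t^{-1}$ factor; since $|J'|<|J|$ implies $|I|+|J'|\le N-3$, the first-generation sup-norm \eqref{ineq basic-sup 2 generation 1 a} applies to give $|R_{\gamma\gamma'}\del^IL^{J'}h_{\alpha\beta}|\le CC_1\vep t^{-3/2}s^{-1+\delta}$, and multiplication by $|\hu^{00}|\le CC_1\vep(s/t)^2 t^{-1/2}s^\delta+Cm_St^{-1}$ yields $C(C_1\vep)^2 t^{-2}s^{-1+2\delta}$, as required.

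The main obstacle is the last step: correctly separating the ``bad'' $\hu^{00}\del_t\del_t$ structure (which is kept) from the ``good'' mixed second-order pieces of $\hu^{00}\del_\gamma\del_{\gamma'}$ (which must be absorbed). The gain of an extra $t^{-1}$ from $\delu_a=t^{-1}L_a$, combined with the improved sup-norm bound on $\hu^{00}$ coming from the wave-gauge estimate (Lemma \ref{lem h00-sup 1}), is exactly what is needed to close the estimate — and the restriction $|J'|<|J|$ in (e) is crucial to have enough regularity margin to apply the first-generation sup-norms to the lower-order factor.
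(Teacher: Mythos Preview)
Your approach is essentially the paper's, and the treatment of families (a), (b), (c) and (e) is correct. There is, however, a genuine gap in your handling of family (d).

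You write that (d) produces the coefficient $CC_1\vep\big((s/t)^2t^{-1/2}s^\delta+t^{-1}\big)$ multiplying $\big|\del^IL^{J_2'}\del_t\del_t h_{\alpha\beta}\big|$ with $|J_2'|<|J|$. But the lemma's retained term has the derivatives in the opposite order: $\big|\del_t\del_t\del^IL^{J'}h_{\alpha\beta}\big|$. These are not the same, and converting one to the other is precisely the step the paper carries out: one first invokes the commutator estimate \eqref{ineq com 2.4} to obtain
\[
\big|\del^IL^{J_2'}\del_t\del_t h_{\alpha\beta}\big|\ \le\ C\sum_{\gamma,\gamma'\atop |J''|\le|J_2'|}\big|\del_\gamma\del_{\gamma'}\del^IL^{J''}h_{\alpha\beta}\big|,
\]
and \emph{then} applies the semi-hyperboloidal decomposition $\del_\gamma\del_{\gamma'}=c_{\gamma\gamma'}\del_t\del_t+R_{\gamma\gamma'}$ that you already use for (e). The $\del_t\del_t$ part feeds the retained sum (since $|J''|\le|J_2'|<|J|$), while the $R_{\gamma\gamma'}$ part is absorbed by the same argument you give for (e), yielding an additional $C(C_1\vep)^2t^{-2}s^{-1+2\delta}$ contribution. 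Without this commutation step your (d) does not match the form of the conclusion.

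This is the only missing ingredient; once you insert it, your argument coincides with the paper's. Note also that the paper's written proof is itself terse at (c): your explicit completion of that case (bounding $\del^{I_2}L^{J_2}\del_t\del_t h$ via $|I_2|+|J_2|\le N-3$ and \eqref{ineq basic-sup 2 generation 1 a}) is correct and more detailed than what the paper records.
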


\begin{proof}
We recall Lemma \ref{lem 1 nonlinear}, to estimate $[\del^IL^J,h^{\mu\nu} \del_\mu\del_\nu]h_{\alpha \beta}$, we need to control the terms listed in \eqref{eq 1 lem 1 nolinear}. We see first that, in view of \eqref{ineq 1 bilinear-sup}, 
$\left|{\sl GQQ}_{hh}(p,k) \right| \leq C(C_1\vep)^2 t^{-3}s^{2\delta}$. 
For the term $t^{-1} \del^{I_3}L^{J_3}h_{\mu\nu} \del^{I_4}L^{J_4} \del_{\gamma}h_{\mu'\nu'}$, we observe that $|I_3|+|I_4| \leq N-2$ and $|I_4|+|J_4| \leq N-2$, so 
$$
\left|t^{-1} \del^{I_3}L^{J_3}h_{\mu\nu} \del^{I_4}L^{J_4} \del_{\gamma}h_{\mu'\nu'} \right| \leq C(C_1\vep)^2\left(t^{-1}+(s/t)t^{-1/2}s^\delta \right)t^{-1/2}s^{-1+\delta} \leq C(C_1\vep)^2t^{-3}s^{2\delta}.
$$
For the term $\del^{I_1}L^{J_1} \hu^{00} \del^{I_2}L^{J_2} \del_t\del_t h_{\alpha \beta}$, we see that $|I_1|+|J_1| \leq N-2$ and $|I_1|\geq 1$, $|I_2|+|J_2| \leq N-3$, so in view of \eqref{ineq h00-sup 1}
\bel{ineq pr1 lem 1 commtator-sup I}
\aligned
\left|\del^{I_1}L^{J_1} \hu^{00} \del^{I_2}L^{J_2} \del_t\del_t h_{\alpha \beta} \right|
& \leq  CC_1\vep s^{\delta}t^{-3/2}|\del^{I_2}L^{J_2} \del_t\del_th_{\alpha \beta}|.
\endaligned
\ee
For terms $L^{J_1'} \hu^{00} \del^IL^{J_2'} \del_t\del_t h_{\alpha \beta}$ and $\hu^{00} \del_{\gamma} \del_{\gamma'} \del^IL^{J'}h_{\alpha \beta}$, we first observe that by the condition $|J_2'|< |J|$ and $|J'|<|J|$, $|I|+|J_2'| \leq N-3, |I|+|J'| \leq N-3$. Then they are bounded by applying \eqref{ineq h00-sup 1.5}. We only write in detail $L^{J_1'} \hu^{00} \del^IL^{J_2'} \del_t\del_t h_{\alpha \beta}$: 
\bel{ineq 1 pr lem 1 commtator-sup I}
\aligned
\left|L^{J_1'} \hu^{00} \del^IL^{J_2'} \del_t\del_t h_{\alpha \beta} \right| & \leq  CC_1\vep \left((s/t)^2t^{-1/2}s^{\delta} + t^{-1} \right) \,   \sum_{|J'|<|J|} \left|\del^IL^{J'} \del_t\del_t h_{\alpha \beta} \right|.
\endaligned
\ee
In view of the commutator estimate \eqref{ineq com 2.4}, we have 
$
\left|\del^IL^{J'} \del_t\del_t h_{\alpha \beta} \right|
\leq C\sum_{\gamma, \gamma'\atop |J''| \leq|J'|} \left|\del_{\gamma} \del_{\gamma'} \del^IL^{J''}h_{\alpha \beta} \right|.
$
We observe that (and this is an argument frequently applied in the following discussion, as it says that $\del_t\del_t$ is the only ``bad'' component of the Hessian):
\bel{eq second-order-frame-1}
\aligned
\del_t\del_a u & =  \del_a \del_t u = \delu_a \del_t u - \frac{x^a}{t} \del_t\del_t u,
\\
\del_a \del_b u & =  \delu_a \delu_b u - \frac{x^a}{t} \del_t\delu_b u- \frac{x^b}{t} \delu_a \del_t u +\frac{x^ax^b}{t^2} \del_t\del_t u
   -  \delu_a \left(x^b/t\right) \del_t u +\frac{x^a}{t} \del_t\left(x^b/t\right) \del_t u. 
\endaligned
\ee
Here we observe that the term $\del_{\gamma} \del_{\gamma'} \del^IL^{J''}h_{\alpha \beta}$ is bounded by $\del_t\del_t\del^IL^{J''} h_{\alpha \beta}$ plus other ``good'' terms. We see that, in $\Kcal$,
$\left|\del_t\left(x^b/t\right) \right| \leq Ct^{-1}, \quad \delu_a \left(x^b/t\right) \leq Ct^{-1},$
so that 
$$
\left|\delu_a \left(x^b/t\right) \del_t\del^IL^{J''}h_{\alpha \beta} \right|
+ \left|\frac{x^a}{t} \del_t\left(x^b/t\right) \del_t\del^IL^{J''}h_{\alpha \beta} \right| \leq CC_1\vep t^{-3/2}s^{-1+\delta}.
$$
The terms $\delu_a \del_t\del^IL^{J''} h_{\alpha \beta}$, $\del_t\delu_a \del^IL^{J''}h_{\alpha \beta}$ and $\delu_a \delu_b\del^IL^{J''} h_{\alpha \beta}$ are the second-order derivatives, where at least one derivative is ``good'' (i.e. $\delu_a$). We apply \eqref{ineq 2 homo}, \eqref{ineq 3 homo} and \eqref{ineq 4 homo} and basic sup-norm estimate, then we conclude that these terms are bounded by $CC_1\vep t^{-3/2}s^{-1+\delta}$. 
We conclude that
\bel{ineq 2 pr lem 1 commtator-sup I}
\left|\del_{\gamma} \del_{\gamma'} \del^IL^{J''}h_{\alpha \beta} \right| \leq CC_1\vep t^{-3/2}s^{-1+\delta} + \left|\del_t\del_t\del^IL^{J''}h_{\alpha \beta} \right|.
\ee

Now we substitute this into \eqref{ineq 1 pr lem 1 commtator-sup I} and obtain 
$$
\left|L^{J_1'} \hu^{00} \del^IL^{J_2'} \del_t\del_t h_{\alpha \beta} \right| \leq C(C_1\vep)^2 t^{-3}s^{2\delta} + CC_1\vep \left((s/t)^2t^{-1/2}s^{\delta} + t^{-1} \right) \sum_{|J'|<|J|} \left|\del_t\del_t\del^IL^{J'} \right|.
$$
By combining the estimates above, the desired result is proven.
\end{proof}

\begin{lemma} \label{lem 1 commtator-L2 I} For $|I|+|J| \leq N$, one has 
\bel{ineq 1 lem 1 commtator-L2 I}
\aligned
\left\|s[\del^IL^J,h^{\mu\nu} \del_\mu\del_\nu]h_{\alpha \beta} \right\|_{L^2(\Hcal_s^*)}
& \leq  C(C_1\vep)^2s^{2\delta}
\\
& \quad +  CC_1\vep s^{\delta} \sum_{|J'| \leq 1}
\left\|s^2(s/t)^{1- \delta} \del^IL^{J'} \del_t\del_th_{\alpha \beta} \right\|_{L^\infty(\Hcal_s^*)}
\\
& \quad +  CC_1\vep s^{1/2 +\delta} \sum_{|J'|<|J|} \left\|(s/t)^{5/2} \del_t\del_t\del^IL^{J'}h_{\alpha \beta} \right\|_{L^2(\Hcal_s^*)}. 
\endaligned
\ee
\end{lemma}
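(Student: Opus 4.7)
The plan is to apply Lemma~\ref{lem 1 nonlinear}, which decomposes $[\del^IL^J,h^{\mu\nu}\del_\mu\del_\nu]h_{\alpha\beta}$ into five families: (a) ${\sl GQQ}_{hh}(N,k)$; (b) $t^{-1}\del^{I_3}L^{J_3}h\cdot\del^{I_4}L^{J_4}\del_\gamma h$; (c) $\del^{I_1}L^{J_1}\hu^{00}\cdot\del^{I_2}L^{J_2}\del_t\del_th_{\alpha\beta}$ with $|I_1|\ge 1$; (d) $L^{J_1'}\hu^{00}\cdot\del^IL^{J_2'}\del_t\del_th_{\alpha\beta}$ with $|J_1'|\ge 1$; and (e) $\hu^{00}\cdot\del_\gamma\del_{\gamma'}\del^IL^{J'}h_{\alpha\beta}$ with $|J'|<|J|$. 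After multiplying by $s$ and taking $L^2(\Hcal_s^*)$, I will handle each family in turn. Families (a) and (b) are already controlled by \eqref{ineq 1 bilinear-L2}, yielding $s\,\|\cdot\|_{L^2}\le C(C_1\vep)^2 s^{-1/2+2\delta}\le C(C_1\vep)^2 s^{2\delta}$.

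For family (c), the key is that at least one translational derivative acts on $\hu^{00}$, so the refined bounds \eqref{ineq h00-sup 1} and \eqref{ineq h00-L-2 wave} are available. I split by the size of $|I_1|+|J_1|$: if $|I_1|+|J_1|\le N-2$, I take sup-norm on $\del^{I_1}L^{J_1}\hu^{00}$ (giving $CC_1\vep t^{-3/2}s^\delta$) and pair it with the basic $L^2$ estimate on $\del^{I_2}L^{J_2}\del_t\del_t h$, where $|I_2|+|J_2|\le N-1$ allows us to use \eqref{ineq basic-L2 2 generation 1 a} in the form $\|(s/t)\del^{I_2}L^{J_2}\del_t\del_t h\|_{L^2}\le CC_1\vep s^\delta$; the pointwise bound $t^{-3/2}(t/s)\le s^{-3/2}$ on $\Hcal_s^*$ then absorbs the extra factor of $s$. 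If instead $|I_1|+|J_1|\ge N-1$, then $|I_2|+|J_2|\le 1$ and I sup-norm the second factor via \eqref{ineq basic-sup 2 generation 1 a} while using Proposition~\ref{prop h00-sup 0} for $L^2$ on $\del^{I_1}L^{J_1}\hu^{00}$. Both sub-cases contribute $\le C(C_1\vep)^2 s^{2\delta}$.

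Family (d) is the main obstacle: no $\del$ is applied to $\hu^{00}$, so only the slower decay of \eqref{ineq h00-sup 1.5}, namely $|L^{J_1'}\hu^{00}|\le CC_1\vep t^{-1/2}(s/t)^2 s^\delta + Cm_St^{-1}$, is available. When $|J_1'|\le N-2$, I factor out $(s/t)^{5/2}$ from the product, using the crucial identity $t^{-1/2}(s/t)^{-1/2}=s^{-1/2}$ to get $L^{J_1'}\hu^{00}\cdot(s/t)^{-5/2}\le CC_1\vep s^{-1/2+\delta}$ on $\Hcal_s^*$; this yields exactly the third RHS term with $|J_2'|<|J|$. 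When $|J_1'|\ge N-1$ (so $|J_2'|\le 1$), I instead factor $s^{-2}(s/t)^{-1+\delta}$ off $\hu^{00}$ and sup-norm the remainder $s^2(s/t)^{1-\delta}\del^IL^{J_2'}\del_t\del_t h$; the $L^2$ bound on $s^{-2}(s/t)^{-1+\delta}L^{J_1'}\hu^{00}$ is supplied by the decomposition $\hu^{00}=\chi\hu_0^{00}+\hu_1^{00}$ together with Proposition~\ref{prop h00-sup 1} for the dynamical part and the explicit Schwarzschild bound $|L^{J_1'}\hu_0^{00}|\le Cm_St^{-1}$ for the tail, producing exactly the second RHS term. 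Family (e) is treated in the same spirit as the first sub-case of (d): expand $\del_\gamma\del_{\gamma'}$ via \eqref{eq second-order-frame-1} (the tangential pieces are semi-linear terms absorbed into (a)--(b)) and, for the remaining $\hu^{00}\cdot\del_t\del_t\del^IL^{J'}h$, factor $(s/t)^{5/2}$ as above. Commutator estimates from Lemma~\ref{lem com 2} allow us to rewrite $\del^{I_2}L^{J_2}\del_t\del_t$ as $\del_t\del_t\del^{I_2}L^{J_2}$ plus lower-order terms of the same type, which reduce to already-controlled families. This closes the estimate.
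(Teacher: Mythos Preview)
Your strategy matches the paper's proof closely: decompose via Lemma~\ref{lem 1 nonlinear} and handle each family by the indicated case splits, invoking \eqref{ineq h00-sup 1}, \eqref{ineq h00-sup 1.5}, Lemma~\ref{prop h00-sup 0}, and Proposition~\ref{prop h00-sup 1} in the same places. There is, however, a genuine gap in your treatment of the Schwarzschild tail $Cm_St^{-1}$ appearing in \eqref{ineq h00-sup 1.5}. The pointwise claim $|L^{J_1'}\hu^{00}|\cdot(s/t)^{-5/2}\le CC_1\vep s^{-1/2+\delta}$ is false: the tail contributes $Cm_St^{-1}(s/t)^{-5/2}=Cm_St^{3/2}s^{-5/2}$, and since $t$ reaches $\sim s^2$ on $\Hcal_s^*$ this is of order $m_Ss^{1/2}$, not $s^{-1/2+\delta}$. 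Likewise, in the sub-case $|J_1'|\ge N-1$, putting the tail piece into the weighted $L^2$ norm fails: one computes $\|(s/t)^{-1+\delta}s^{-1}\cdot m_St^{-1}\|_{L^2(\Hcal_s^*)}\sim m_Ss^{1-\delta}$, far too large to yield the coefficient $CC_1\vep s^\delta$ on the second right-hand term.

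The fix, as in the paper, is to split off the tail \emph{before} the weighted factorization. From $|L^{J_1'}\hu^{00}|\le CC_1\vep(s/t)^2t^{-1/2}s^\delta+Cm_St^{-1}$, the $m_St^{-1}$ piece gives directly
\[
\big\|s\cdot m_St^{-1}\cdot\del^IL^{J_2'}\del_t\del_th_{\alpha\beta}\big\|_{L^2(\Hcal_s^*)}
=m_S\big\|(s/t)\,\del^IL^{J_2'}\del_t\del_th_{\alpha\beta}\big\|_{L^2(\Hcal_s^*)}\le C(C_1\vep)^2s^\delta,
\]
using the basic energy bound together with $m_S\le\vep$; this is absorbed into the first right-hand term $C(C_1\vep)^2s^{2\delta}$. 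Only the main piece $(s/t)^2t^{-1/2}s^\delta$ is then put through your $(s/t)^{5/2}$ factorization, and in the $|J_1'|\ge N-1$ sub-case only $\hu_1^{00}$ is fed into Proposition~\ref{prop h00-sup 1}. The same correction applies to family~(e). A minor additional remark: family~(b) is not literally covered by \eqref{ineq 1 bilinear-L2} (which concerns only ${\sl GQQ}_{hh}$) and requires a short direct case analysis, as in the paper's proof.
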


\begin{proof}
The proof relies on Lemma \ref{lem 1 nonlinear} and we need to estimate the terms listed in \eqref{eq 1 lem 1 nolinear}. The term ${\sl GQQ}_{hh}$ is already bounded in view of \eqref{ineq 1 bilinear-L2}. For the term $t^{-1} \del^{I_1}L^{J_1}h_{\mu\nu} \del^{I_2}L^{J_2} \del_{\gamma}h_{\mu'\nu'}$, we have the following estimates. When $|I_1|+|J_1| \leq N-2$, we see that
$$
\aligned
\big\|st^{-1} \del^{I_1}L^{J_1}h_{\mu\nu} \del^{I_2}L^{J_2} \del_{\gamma}h_{\mu'\nu'} \big\|_{L_f^2(\Hcal_s)}
& \leq \big\|\left(t^{-1}+t^{-1/2}(s/t)s^{\delta} \right) \,  (s/t) \del^{I_2}L^{J_2} \del_{\gamma}h_{\mu'\nu'} \big\|_{L_f^2(\Hcal_s)}
\\
& \leq  C(C_1\vep)^2s^{-1/2 +2\delta}.
\endaligned
$$
When $|I_1|+|J_1|\geq N-1\geq 1$, we have $|I_2|+|J_2| \leq 1\leq N-2$. We distinguish between two subcases: when $|I_1|\geq 1$, we obtain 
$$
\aligned
\big\|st^{-1} \del^{I_1}L^{J_1}h_{\mu\nu} \del^{I_2}L^{J_2} \del_{\gamma}h_{\mu'\nu'} \big\|_{L_f^2(\Hcal_s)}
& \leq  CC_1\vep\big\|st^{-1} \del^{I_1}L^{J_1}h_{\mu\nu}t^{-1/2}s^{-1+\delta} \big\|_{L_f^2(\Hcal_s)}
\\
& \leq C(C_1\vep)^2s^{-3/2 +2\delta}.
\endaligned
$$
When $|I_1|=0$, then $|J_1|\geq 1$. We denote by $L^{J_1}= L_aL^{J_1'}$ and 
$$
\aligned
&\big\|st^{-1} \del^{I_1}L^{J_1}h_{\mu\nu} \del^{I_2}L^{J_2} \del_{\gamma}h_{\mu'\nu'} \big\|_{L_f^2(\Hcal_s)}
 = \big\|s\delu_aL^{J_1'}h_{\mu\nu} \del^{I_2}L^{J_2} \del_{\gamma}h_{\mu'\nu'} \big\|_{L_f^2(\Hcal_s)}
\\
& \leq CC_1\vep\big\|s\delu_aL^{J_1'}h_{\mu\nu}t^{-1/2}s^{-1+\delta} \big\|_{L_f^2(\Hcal_s)} \leq C(C_1\vep)^2s^{-1/2 +2\delta}.
\endaligned
$$

For the term $\del^{I_1}L^{J_1} \hu^{00} \del^{I_2}L^{J_2} \del_t\del_th_{\alpha \beta}$ with $|I_1|\geq 1$, we observe that
\begin{itemize}

\item When $1\leq |I_1|+|J_1| \leq N-1$ we apply \eqref{ineq h00-sup 1} :
$$
\aligned
    \left\|s\del^{I_1}L^{J_1} \hu^{00} \del^{I_2}L^{J_2} \del_t\del_th_{\alpha \beta} \right\|_{L^2(\Hcal_s^*)}
& \leq  CC_1\vep \left\|st^{-3/2}s^{\delta}(t/s) \,   (s/t) \del^{I_2}L^{J_2} \del_t\del_th_{\alpha \beta} \right\|_{L^2(\Hcal_s^*)}
\\
& \leq  C(C_1\vep)^2s^{-1/2 +2\delta}.
\endaligned
$$

\item When $|I_1|+|J_1|= N$, then $|I_2|+|J_2|=0 \leq N-3$. So
$$
\aligned
& \left\|s\del^{I_1}L^{J_1} \hu^{00} \del^{I_2}L^{J_2} \del_t\del_th_{\alpha \beta} \right\|_{L^2(\Hcal_s^*)}
\leq  CC_1\vep\left\|st^{-1/2}s^{-1+\delta} \,   \del^{I_1}L^{J_1} \hu^{00} \right\|_{L^2(\Hcal_s^*)}
\\
& \leq  CC_1\vep s^{-1/2 +\delta} \left\| \del^{I_1}L^{J_1} \hu^{00} \right\|_{L^2(\Hcal_s^*)} 
    \leq  C(C_1\vep)^2s^{-1/2 +3\delta}, 
\endaligned
$$
where we have applied \eqref{ineq h00-L-2 wave}.
\end{itemize}

For the term $L^{J_1'} \hu^{00} \del^IL^{J_2'} \del_t\del_th_{\alpha \beta}$, we apply the energy estimate to $L^J\hu^{00}$  by Proposition \ref{prop h00-sup 1} and the sup-norm estimate provided by Lemma \ref{lem h00-sup 1}.
\begin{itemize}

\item When $|J_1'| \leq N-2$, we apply \eqref{ineq h00-sup 1.5}
$$
\aligned
&
\left\|sL^{J_1'} \hu^{00} \del^IL^{J_2'} \del_t\del_th_{\alpha \beta} \right\|_{L^2(\Hcal_s^*)}
 \leq  CC_1\vep \left\|s\left(t^{-1}+(s/t)^2t^{-1/2}s^{\delta} \right) \,   \del^IL^{J_2'} \del_t\del_th_{\alpha \beta} \right\|_{L^2(\Hcal_s^*)}
\\
& \leq  CC_1\vep\left\|(s/t) \del^IL^{J_2'} \del_t\del_th_{\alpha \beta} \right\|_{L^2(\Hcal_s^*)}
 +  CC_1\vep s^{1/2 +\delta} \left\|(s/t)^{5/2} \del^IL^{J_2'} \del_t\del_th_{\alpha \beta} \right\|_{L^2(\Hcal_s^*)}
\\
& \leq  C(C_1\vep)^2 s^{\delta}
+ CC_1\vep s^{1/2 +\delta} \sum_{|J'|<|J|} \left\|(s/t)^{5/2} \del^IL^{J'} \del_t\del_th_{\alpha \beta} \right\|_{L^2(\Hcal_s^*)}
\endaligned
$$

\item When $|J_1'|\geq N-1$, we apply Proposition \ref{prop h00-sup 1}
$$
\aligned
&\left\|sL^{J_1'} \hu^{00} \del^IL^{J_2'} \del_t\del_th_{\alpha \beta} \right\|_{L^2(\Hcal_s^*)}
\leq 
CC_1\vep \left\|st^{-1} \del^IL^{J_2'} \del_t\del_th_{\alpha \beta} \right\|_{L^2(\Hcal_s^*)}
 + \left\|sL^{J_1'} \hu_1^{00} \del^IL^{J_2'} \del_t\del_th_{\alpha \beta} \right\|_{L^2(\Hcal_s^*)}
 \\
& \leq  C(C_1\vep)^2s^{\delta} + \left\|sL^{J_1'} \hu_1^{00} \del^IL^{J_2'} \del_t\del_th_{\alpha \beta} \right\|_{L^2(\Hcal_s^*)}
\\
&\leq  C(C_1\vep)^2s^{\delta} + \left\|(s/t)^{-1+\delta}s^{-1}L^{J_1'} \hu^{00}_1\right\|_{L^2(\Hcal_s^*)}
 \,  \left\|s^2(s/t)^{1- \delta} \del^IL^{J_2'} \del_t\del_th_{\alpha \beta} \right\|_{L^\infty(\Hcal_s^*)}
\\
&\leq C(C_1\vep)^2s^{\delta} + CC_1\vep s^{\delta}
\sum_{|J'| \leq 1} \left\|s^2(s/t)^{1- \delta} \del^IL^{J'} \del_t\del_th_{\alpha \beta} \right\|_{L^\infty(\Hcal_s^*)}.
\endaligned
$$
\end{itemize}

For the term $\hu^{00} \del_{\gamma} \del_{\gamma'} \del^IL^{J'}h_{\alpha \beta}$, the estimate is similar. We apply \eqref{ineq h00-sup 1.5} and 
$$
\aligned
&\left\|s\hu^{00} \del_{\gamma} \del_{\gamma'} \del^IL^{J'}h_{\alpha \beta} \right\|_{L^2(\Hcal_s^*)}
\\
& \leq CC_1\vep\left\|(s/t) \del_{\gamma} \del_{\gamma'} \del^IL^{J'}h_{\alpha \beta} \right\|_{L^2(\Hcal_s^*)}
+ \left\|(s/t)^2t^{-1/2}s^{1+\delta} \del_{\gamma} \del_{\gamma'} \del^IL^{J'}h_{\alpha \beta} \right\|_{L^2(\Hcal_s^*)}
\\
&\leq  C(C_1\vep)^2s^{\delta}
+ CC_1\vep s^{1/2 +\delta} \sum_{|J'|<|J|} \left\|(s/t)^{5/2} \del_{\gamma} \del_{\gamma'} \del^IL^{J'}h_{\alpha \beta} \right\|_{L^2(\Hcal_s^*)}.
\endaligned
$$
Now we need to treat the last term and bound it by $\|(s/t)^{5/2} \del_t\del_t\del^IL^{J'}h_{\alpha \beta} \|_{L^2(\Hcal_s^*)}$. We rely on the discussion after \eqref{eq second-order-frame-1} and conclude that
$$
\aligned
&\left\|\hu^{00} \del_{\gamma} \del_{\gamma'} \del^IL^{J'}h_{\alpha \beta} \right\|_{L^2(\Hcal_s^*)}
\\
& \leq \sum_{a, \mu\atop |J''|<|J'|} \|\hu^{00} \delu_a \del_\mu\del^IL^{J''}h_{\alpha \beta} \|_{L^2(\Hcal_s^*)} + C\sum_{|J''|<|J'|} \|\hu^{00} \del_t\del_t\del^IL^{J''}h_{\alpha \beta} \|_{L^2(\Hcal_s^*)}
\\
&\leq C(C_1\vep)^2s^{-1+\delta}
 + CC_1\vep s^{-1/2 +\delta} \sum_{|J''|<|J'|} \left\|(s/t)^{5/2} \del_t\del_t\del^IL^{J''}h_{\alpha \beta} \right\|_{L^2(\Hcal_s^*)}. \qedhere
\endaligned
$$ 
\end{proof}


\section{Second-Order Derivatives of the Spacetime Metric} \label{section-6.5}
\subsection{Preliminary}

We now establish $L^2$ and $L^\infty$ bounds for the terms
$
\del_t\del_t\del^IL^J h_{\alpha \beta}$ and $\del^IL^J \del_t\del_t h_{\alpha \beta}$, 
which contain at least two partial derivatives $\del_t$ and which we refer informally to as ``second-order derivatives''. We can now apply the method in \cite[Chapter 8]{PLF-YM-book}. However, we are here in a simpler situation, since the system is diagonalized with respect to second-order derivative terms. We recall the decomposition of the flat wave operator in the semi-hyperboloidal frame:
\bel{eq 0 second-order}
- \Box u = (s/t)^2\del_t\del_t u + 2\sum_a (x^a/t) \delu_a \del_t u - \sum_a \delu_a \delu_a u + \frac{r^2}{t^3} \del_t u + \frac{3}{t} \del_t u.
\ee
We also have the decomposition 
$
h^{\mu\nu} \del_{\mu} \del_{\nu}h_{\alpha \beta} = \hu^{\mu\nu} \delu_\mu\delu_\nu h_{\alpha \beta} + h^{\mu\nu} \del_{\mu} \left(\Psi_\nu^{\nu'} \right) \delu_{\nu'}h_{\alpha \beta} 
$
of the curved part of the reduced wave operator. 
The main equation \eqref{eq main PDE a} leads us to
\bel{eq 1 second-order}
\aligned
\left((s/t)^2- \hu^{00} \right) \del_t\del_th_{\alpha \beta}
& =   - 2\sum_a (x^a/t) \delu_a \del_t h_{\alpha \beta} + \sum_a \delu_a \delu_a h_{\alpha \beta} - \frac{r^2}{t^3} \del_th_{\alpha \beta} - \frac{3}{t} \del_th_{\alpha \beta}
\\
& \quad +  \hu^{0a} \del_t\delu_ah_{\alpha \beta} + \hu^{a0} \delu_a \del_th_{\alpha \beta}
+ \hu^{ab} \delu_a \delu_b h_{\alpha \beta} + h^{\mu\nu} \del_\mu\left(\Psi_\nu^{\nu'} \right) \delu_{\nu'}h_{\alpha \beta}
\\
& \quad -  F_{\alpha \beta} + 16\pi \del_{\alpha} \phi\del_{\beta} \phi + 8\pi c^2\phi^2g_{\alpha \beta}.
\endaligned
\ee

Let us differentiate the equation \eqref{eq main PDE a} with respect to $\del^IL^J$, then by a similar procedure in the above discussion,
\bel{eq 2 second-order}
\aligned
&\left((s/t)^2- \hu^{00} \right) \del_t\del_t\del^IL^J h_{\alpha \beta}
\\
& =   - 2\sum_a (x^a/t) \delu_a \del_t \del^IL^Jh_{\alpha \beta} + \sum_a \delu_a \delu_a \del^IL^Jh_{\alpha \beta}- \frac{r^2}{t^3} \del_t\del^IL^Jh_{\alpha \beta} - \frac{3}{t} \del_t\del^IL^Jh_{\alpha \beta}
\\
& \quad +  \hu^{0a} \del_t\delu_a \del^IL^Jh_{\alpha \beta} + \hu^{a0} \delu_a \del_t\del^IL^Jh_{\alpha \beta}
+ \hu^{ab} \delu_a \delu_b \del^IL^Jh_{\alpha \beta} + h^{\mu\nu} \del_\mu\left(\Psi_\nu^{\nu'} \right) \delu_{\nu'} \del^IL^Jh_{\alpha \beta}
\\
& \quad -  \del^IL^JF_{\alpha \beta} + [\del^IL^J,h^{\mu\nu} \del_\mu\del_\nu]h_{\alpha \beta}
 + 16\pi \del^IL^J\left(\del_{\alpha} \phi\del_{\beta} \phi\right)
 + 8\pi c^2\del^IL^J\left(\phi^2g_{\alpha \beta} \right).
\endaligned
\ee
For convenience, we introduce the notation
$$
\aligned
{Sc}_1[\del^IL^J u] :& =  - 2\sum_a (x^a/t) \delu_a \del_t \del^IL^Ju + \sum_a \delu_a \delu_a  \del^IL^Ju- \frac{r^2}{t^3} \del_t \del^IL^Ju - \frac{3}{t} \del_t \del^IL^Ju,
\\
{Sc}_2[\del^IL^J u] :& =   \hu^{0a} \del_t\delu_a \del^IL^Ju + \hu^{a0} \delu_a \del_t\del^IL^Ju
+ \hu^{ab} \delu_a \delu_b \del^IL^Ju + h^{\mu\nu} \del_\mu\left(\Psi_\nu^{\nu'} \right) \delu_{\nu'} \del^IL^Ju 
\endaligned
$$
and \eqref{eq 1 second-order} becomes
\bel{eq 3 second-order}
\aligned
& \left((s/t)^2- \hu^{00} \right) \del_t\del_t\del^IL^J h_{\alpha \beta}  =  {Sc}_1[\del^IL^J h_{\alpha \beta}] + {Sc}_2[\del^IL^J h_{\alpha \beta}]
\\
& \quad -  \del^IL^JF_{\alpha \beta} + [\del^IL^J,h^{\mu\nu} \del_\mu\del_\nu]h_{\alpha \beta}
  +  16\pi \del^IL^J\left(\del_{\alpha} \phi\del_{\beta} \phi\right)
 + 8\pi c^2\del^IL^J\left(\phi^2g_{\alpha \beta} \right).
\endaligned
\ee

Now we apply the estimate \eqref{ineq h00-sup 1.5} to $\hu^{00}$ and see that when $t\geq 2$ (which is the case if we are in $\Kcal$) and $C_1\vep$ sufficiently small, then 
$$
\aligned
(s/t)^2- \hu^{00}
&  \geq (s/t)^2 - CC_1\vep\big((s/t)^2t^{-1/2}s^{\delta} + t^{-1} \big) 
\\
&= (s/t)^2\left(1-CC_1\vep t^{-1/2}s^\delta - CC_1\vep ts^{-2} \right) \geq \frac{1}{2}(s/t)^2.
\endaligned
$$
This leads us to the following estimate. Later, this equation will be used to control the $L^2$ and $L^\infty$ norms of $\del_t\del_t\del^IL^J h_{\alpha \beta}$.

\begin{lemma} \label{lem 1 second-roder}
When $C_1\vep$ is sufficiently small, the following estimate holds for all multi-indices $(I,J)$:
\bel{eq 1 lem 1 second-order}
\aligned
\left|(s/t)^2\del_t\del_t\del^IL^Jh_{\alpha \beta} \right|
& \leq C\left(\left|Sc_1[\del^IL^J h_{\alpha \beta}]\right| + \left|Sc_2[\del^IL^J h_{\alpha \beta}]\right|\right) + \left|\del^IL^JF_{\alpha \beta} \right| + \left|{\sl QS}_\phi(p,k) \right|
\\
& \quad +  \left|[\del^IL^J,h^{\mu\nu} \del_\mu\del_\nu]h_{\alpha \beta} \right| + |{\sl Cub}(p,k)|.
\endaligned
\ee
\end{lemma}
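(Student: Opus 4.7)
The plan is to read the desired bound directly off identity \eqref{eq 3 second-order}, once we have verified that the scalar coefficient $(s/t)^2-\hu^{00}$ in front of $\del_t\del_t\del^IL^J h_{\alpha\beta}$ is uniformly bounded below by a multiple of $(s/t)^2$. Concretely, I would proceed in three short steps.

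First, I establish the coercivity estimate $(s/t)^2-\hu^{00}\ge \tfrac12 (s/t)^2$ in $\Kcal_{[2,s^*]}$. This is where the smallness of $C_1\vep$ (and of $m_S\le \vep$) is used: by the pointwise bound \eqref{ineq h00-sup 1.5} of Lemma~\ref{lem h00-sup 1},
\[
|\hu^{00}|\ \le\ CC_1\vep\,(s/t)^{2}t^{-1/2}s^{\delta}+Cm_S\,t^{-1}
\ =\ (s/t)^2\Bigl(CC_1\vep\,t^{-1/2}s^{\delta}+Cm_S\,t\,s^{-2}\Bigr),
\]
and since $t\ge 2$ and $s\le t$ in $\Kcal$, the factor in parentheses is $\le 1/2$ once $C_1\vep$ is chosen small enough. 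Thus division by $(s/t)^2-\hu^{00}$ is legitimate and costs only an absolute constant.

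Second, I divide \eqref{eq 3 second-order} by $(s/t)^2-\hu^{00}$, multiply by $(s/t)^2$, and use the coercivity above together with $|(s/t)^2/((s/t)^2-\hu^{00})|\le 2$ to conclude
\begin{equation*}
\bigl|(s/t)^2\del_t\del_t\del^IL^J h_{\alpha\beta}\bigr|
\ \le\ 2\Bigl(|Sc_1[\del^IL^J h_{\alpha\beta}]|+|Sc_2[\del^IL^J h_{\alpha\beta}]|+|\del^IL^JF_{\alpha\beta}|+|[\del^IL^J,h^{\mu\nu}\del_\mu\del_\nu]h_{\alpha\beta}|\Bigr)+R,
\end{equation*}
where $R$ collects the contributions of $16\pi\,\del^IL^J(\del_\alpha\phi\,\del_\beta\phi)$ and $8\pi c^2\del^IL^J(\phi^2 g_{\alpha\beta})$.

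Third, I identify $R$ with the two classes on the right-hand side of \eqref{eq 1 lem 1 second-order}. The term $\del^IL^J(\del_\alpha\phi\,\del_\beta\phi)$ is exactly of the form defining ${\sl QS}_\phi(p,k)$. For $\del^IL^J(\phi^2 g_{\alpha\beta})$ I split $g_{\alpha\beta}=m_{\alpha\beta}+h_{\alpha\beta}$: the Minkowski piece $\del^IL^J(\phi^2 m_{\alpha\beta})$ sits in ${\sl QS}_\phi(p,k)$ (the coefficients $m_{\alpha\beta}$ are smooth and homogeneous of degree $0$), while the remainder $\del^IL^J(\phi^2 h_{\alpha\beta})$ is a cubic expression of the type covered by ${\sl Cub}(p,k)$ (it is not of the excluded form $h\,h\,h$). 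Collecting these identifications yields exactly \eqref{eq 1 lem 1 second-order}.

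The only genuinely non-routine step is the first one, namely extracting the lower bound on $(s/t)^2-\hu^{00}$ uniformly in $\Kcal$; once the pointwise control \eqref{ineq h00-sup 1.5} of $\hu^{00}$ is in hand, everything else is algebraic rearrangement and bookkeeping against the nonlinearity classes introduced in Section~\ref{section-3-QUASI}.
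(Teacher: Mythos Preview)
Your proposal is correct and follows essentially the same approach as the paper: the paper derives the coercivity bound $(s/t)^2-\hu^{00}\ge\tfrac12(s/t)^2$ from \eqref{ineq h00-sup 1.5} immediately before stating the lemma, and the lemma then follows at once from \eqref{eq 3 second-order}. One minor remark: in the paper's classification the term $\del^IL^J(\phi^2 g_{\alpha\beta})$ is already included in the definition of ${\sl QS}_\phi(p,k)$ with the full metric $g_{\mu\nu}$, so your explicit splitting $g=m+h$ in Step~3 is not strictly needed---though it does give a clean explanation for why the ${\sl Cub}(p,k)$ slot appears on the right-hand side of \eqref{eq 1 lem 1 second-order}.
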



\subsection{$L^\infty$ estimates}

In this section, we apply \eqref{eq 3 second-order} and the estimates of nonlinear terms presented in Lemma \ref{lem 0 biliner}. First we need to establish the following pointwise estimates
\begin{lemma} \label{lem 2 second-order}
For any $(I,J)$, the following pointwise estimate holds in $\Kcal$:
\bel{ineq 1 lem 2 second-order}
\left|Sc_1[\del^IL^J u]\right| + \left|Sc_2[\del^IL^J u]\right|
\leq Ct^{-1} \sum_{|I'| \leq|I|, \alpha} \left|\del_{\alpha} \del^{I'}L^{J} u\right| + Ct^{-1} \sum_{a, \alpha} \left|\del_\alpha \del^IL_aL^J u\right|. 
\ee
\end{lemma}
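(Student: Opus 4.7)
The plan is to handle each of the seven terms in $Sc_1[\del^IL^Ju]$ and $Sc_2[\del^IL^Ju]$ separately, exploiting three recurring ingredients: the algebraic identity $\delu_a = t^{-1}L_a$, which converts any ``good'' derivative into a Lorentz boost divided by $t$; the uniform pointwise bounds $|x^a/t|\leq 1$ and $|r^2/t^3|\leq t^{-1}$ valid in $\Kcal$; and the commutator identity \eqref{pre lem commutator pr1-ZZ}, namely $[\del^I,L_a] = \sum_{|I'|\leq |I|}\lambda^I_{aI'}\,\del^{I'}$, which allows us to push a boost past a string of partial derivatives at the cost of lower-order terms that are absorbed into the first sum on the right-hand side.

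First I would dispose of the two lowest-order terms in $Sc_1$: the contributions $-(r^2/t^3)\del_t\del^IL^Ju$ and $-(3/t)\del_t\del^IL^Ju$ are bounded directly by $Ct^{-1}|\del_t\del^IL^Ju|$, which already lies in the first sum. For the terms involving one good derivative, such as $(x^a/t)\delu_a\del_t\del^IL^Ju$, I would factor out $|x^a/t|\leq 1$ and use $\delu_a = t^{-1}L_a$ to write
\[
\big|(x^a/t)\delu_a\del_t\del^IL^Ju\big| \;\leq\; t^{-1}\,|L_a\del_t\del^IL^Ju|,
\]
then commute $L_a$ past $\del_t\del^I$ using $[L_a,\del_t]=-\del_a$ and \eqref{pre lem commutator pr1-ZZ}, producing a main term $t^{-1}|\del_t\del^IL_aL^Ju|$ together with lower-order terms of the form $t^{-1}|\del_\alpha\del^{I'}L^Ju|$, $|I'|\leq|I|$, which fit the two sums on the right. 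The doubly-good term $\sum_a\delu_a\delu_a\del^IL^Ju$ actually yields a $t^{-2}$ factor, strictly better than needed: writing one copy of $\delu_a$ as $t^{-1}L_a$, bounding $|\delu_aV|\leq C\max_\alpha|\del_\alpha V|$, and commuting $L_a$ through $\del^I$ produces the same two families of remainders, with an extra $t^{-1}$ to spare.

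For $Sc_2[\del^IL^Ju]$ the strategy is the same, only now each good-derivative term is multiplied by $\hu^{\alpha\beta}$. Under the bootstrap assumptions Lemma \ref{lem h00-sup 2} gives $|\hu^{\alpha\beta}|\leq C$ uniformly, so the three contributions $\hu^{0a}\del_t\delu_a\del^IL^Ju$, $\hu^{a0}\delu_a\del_t\del^IL^Ju$ and $\hu^{ab}\delu_a\delu_b\del^IL^Ju$ are treated exactly as the corresponding pieces of $Sc_1$: each good derivative supplies one factor of $t^{-1}$ via $\delu=t^{-1}L$, and the commutator identity then slides $L_a$ past $\del^I$. For the remaining term $h^{\mu\nu}\del_\mu(\Psi_\nu^{\nu'})\,\delu_{\nu'}\del^IL^Ju$, the coefficient $\Psi_\nu^{\nu'}$ is smooth and homogeneous of degree $0$, hence $|\del_\mu\Psi_\nu^{\nu'}|\leq Ct^{-1}$ in $\Kcal$; combined with $|h^{\mu\nu}|\leq C$ this already provides the $t^{-1}$ factor, and the residual $\delu_{\nu'}\del^IL^Ju$ is bounded trivially by $C\sum_\alpha|\del_\alpha\del^IL^Ju|$, which belongs to the first sum.

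No genuine obstacle should appear: the identity is essentially an algebraic book-keeping statement, resting on the fact that every piece of $Sc_1$ and $Sc_2$ contains either an explicit $t^{-1}$ factor, a good derivative, or a coefficient $\del_\mu\Psi$ of homogeneity degree $-1$. The only subtle point is the bookkeeping in the commutator step: one must verify that after applying \eqref{pre lem commutator pr1-ZZ} to $L_a\del^I$, the residual derivatives genuinely satisfy $|I'|\leq|I|$ and $L^{J'}=L^J$, so that nothing leaks outside the two right-hand-side sums. This is precisely guaranteed by the form of the commutator identity, and completes the argument.
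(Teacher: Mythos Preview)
Your proof is correct and follows essentially the same approach as the paper's (very terse) proof, which simply invokes \eqref{ineq 2 homo}--\eqref{ineq 3 homo} for $Sc_1$ and the boundedness of $\hu^{\alpha\beta}$ for $Sc_2$; you have spelled out the same mechanism (converting $\delu_a$ to $t^{-1}L_a$ and commuting $L_a$ past $\del^I$ via \eqref{pre lem commutator pr1-ZZ}) in more detail. One small correction: your reference to Lemma~\ref{lem h00-sup 2} for $|\hu^{\alpha\beta}|\leq C$ is slightly off, as that lemma concerns energy equivalence; the bound follows directly from \eqref{ineq basic-sup-h} (or \eqref{ineq proof 1 lem h00-sup 2}), together with the fact that $\hu^{\alpha\beta}$ are linear combinations of $h_{\alpha\beta}$ with homogeneous-degree-zero coefficients.
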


\begin{proof}
The estimate on the term $Sc_1$ is immediate by applying \eqref{ineq 2 homo} and \eqref{ineq 3 homo}. The bound on $Sc_2$ is due to the fact that $\hu^{\alpha \beta}$ are linear combinations of $h_{\alpha \beta}$ with smooth and homogeneous functions of degree zero plus higher-order corrections, which are bounded in $\Kcal$. 
\end{proof}

\begin{lemma} 
When the bootstrap assumption \eqref{ineq energy assumption 1} and \eqref{ineq energy assumption 2} hold, the following estimate holds in $\Kcal_{[2,s^*]}$:
\bel{ineq 1 second-sup}
|\del_t\del_t \del^IL^Jh_{\alpha \beta}| \leq CC_1\vep t^{1/2}s^{-3+2\delta}, \quad \text{for} \quad |I|+|J| \leq N-4.
\ee
\end{lemma}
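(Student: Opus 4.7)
\medskip

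\noindent\textbf{Proof proposal.} The plan is to use the identity \eqref{eq 1 lem 1 second-order} from Lemma~\ref{lem 1 second-roder}, which, after multiplication by $(t/s)^2$, reduces the problem to showing that each of the six contributions on the right-hand side is bounded by $CC_1\vep\,t^{-3/2}s^{-1+2\delta}$ (so that multiplication by $(t/s)^2$ produces the claimed bound $CC_1\vep\,t^{1/2}s^{-3+2\delta}$). Since the claim only requires $|I|+|J|\le N-4$, all auxiliary indices that appear below stay within the range $\le N-2$ (or $\le N-3$), in which the basic sup-norm estimates of Section~\ref{section-4} are available.

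\medskip

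The first step is to treat the ``semilinear'' source $Sc_1[\del^IL^Jh_{\alpha\beta}]$ and the ``quasilinear'' source $Sc_2[\del^IL^Jh_{\alpha\beta}]$. For $Sc_1$, Lemma~\ref{lem 2 second-order} gives the pointwise bound by $C t^{-1}$ times first-order derivatives of $\del^{I'}L^{J'}h_{\alpha\beta}$ with $|I'|+|J'|\le |I|+|J|+1\le N-3$; applying \eqref{ineq basic-sup 2 generation 1 a} yields $|Sc_1|\lesssim C_1\vep\, t^{-3/2}s^{-1+\delta}$. For $Sc_2$, the coefficients $\hu^{\alpha\beta}$ are controlled by \eqref{ineq h00-sup 1.5} and \eqref{ineq proof 1 lem h00-sup 2} while the second-order derivatives appearing all contain at least one tangential derivative $\delu_a$; using \eqref{ineq sup homo-ineq 1} and the homogeneity estimates \eqref{ineq 2 homo}--\eqref{ineq 4 homo}, one gets $|Sc_2|\lesssim (C_1\vep)^2 t^{-3/2}s^{-1+2\delta}$. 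The nonlinear terms $\del^IL^J F_{\alpha\beta}$, $\del^IL^J(\del_\alpha\phi\del_\beta\phi)$, $\del^IL^J(\phi^2 g_{\alpha\beta})$ and the cubic remainder $Cub$ are all controlled by Lemma~\ref{lem 0 biliner}: each of them is $\le C(C_1\vep)^2 t^{-2}s^{-1/2+2\delta}$ or better, which is far below our target.

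\medskip

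The only potentially delicate contribution is the commutator $[\del^IL^J, h^{\mu\nu}\del_\mu\del_\nu]h_{\alpha\beta}$, for which Lemma~\ref{lem 1 commtator-sup I} gives
\[
\bigl|[\del^IL^J,h^{\mu\nu}\del_\mu\del_\nu]h_{\alpha\beta}\bigr|
\le C(C_1\vep)^2 t^{-2}s^{-1+2\delta}
+ CC_1\vep\bigl(t^{-1}+(s/t)^2 t^{-1/2}s^{\delta}\bigr)\!\!\sum_{|J'|<|J|}\!\bigl|\del_t\del_t \del^IL^{J'}h_{\alpha\beta}\bigr|.
\]
The last sum reintroduces the very quantity we are trying to bound. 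The main obstacle is precisely to close this feedback loop, which I would handle by induction on $|J|$. The base case $|J|=0$ is immediate since the sum is empty and the remaining terms are already bounded by $CC_1\vep\,t^{-3/2}s^{-1+2\delta}$ as above. For the inductive step, assuming the desired bound holds for all $|J'|<|J|$, the coefficient $CC_1\vep(t^{-1}+(s/t)^2 t^{-1/2}s^\delta)$ multiplied by $CC_1\vep t^{1/2}s^{-3+2\delta}$ produces $C(C_1\vep)^2(s/t)^2 s^{-3+3\delta}$, which after multiplication by $(t/s)^2$ yields $C(C_1\vep)^2 s^{-3+3\delta}$ — strictly smaller than the target because $C_1\vep \ll 1$. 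Combining this with the controls on $Sc_1$, $Sc_2$, $F_{\alpha\beta}$, $QS_\phi$ and $Cub$ and multiplying through by $(t/s)^2$ closes the induction and delivers \eqref{ineq 1 second-sup}.
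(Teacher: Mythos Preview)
Your argument is correct and follows essentially the same route as the paper: bound each term on the right-hand side of \eqref{eq 1 lem 1 second-order} via the basic sup-norm estimates and Lemma~\ref{lem 1 commtator-sup I}, then close by induction on $|J|$. Two small bookkeeping points are worth flagging: Lemma~\ref{lem 0 biliner} does not literally cover $\del^IL^J F_{\alpha\beta}$ (it bounds $GQS_h$, not the full $QS_h$ appearing in $P_{\alpha\beta}$), so one should invoke the direct product estimate $|\del h|^2\lesssim (C_1\vep)^2 t^{-1}s^{-2+2\delta}$ from \eqref{ineq basic-sup 2 generation 1 a}, exactly as the paper does; and in your inductive step the $t^{-1}$ part of the coefficient actually contributes $C(C_1\vep)^2 t^{-1/2}s^{-3+2\delta}$ rather than $(s/t)^2 s^{-3+3\delta}$, but after multiplication by $(t/s)^2$ this is $C(C_1\vep)^2 t^{3/2}s^{-5+2\delta}\le C(C_1\vep)^2 t^{1/2}s^{-3+2\delta}$ (using $t\le s^2$ in $\Kcal$), so the induction still closes.
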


\begin{proof}
The proof is a direct application of \eqref{eq 1 lem 1 second-order}, where we neglect the higher-order term ${\sl Cub}$. We just need to estimate each term in the right-hand side.
We first observe that by the basic sup-norm estimate \eqref{ineq basic-sup 1 generation 1 a} combined with \eqref{ineq 1 lem 2 second-order}
$$
\left|Sc_1[\del^IL^J u]\right| + \left|Sc_2[\del^IL^J u]\right| \leq CC_1\vep t^{-3/2}s^{-1+\delta}.
$$

The estimate for $\del^IL^JF_{\alpha \beta}$ can be expressed as  
$
{\sl QS}_{h}(p,k), \quad {\sl Cub}(p,k), 
$
which is bounded by
$
|\del^IL^JF_{\alpha \beta}| \leq C(C_1\vep)^2t^{-1}s^{-2 +2\delta}.
$
The estimate on the commutator $[\del^IL^J,h^{\mu\nu} \del_\mu\del_\nu]h_{\alpha \beta}$ is obtained by applying \eqref{ineq lem 1 commtator-sup I} :
$$
|[\del^IL^J,h^{\mu\nu} \del_\mu\del_\nu]h_{\alpha \beta}|
\leq C(C_1\vep)^2t^{-2}s^{-1+2\delta}
+ CC_1\vep\left(t^{-1} + (s/t)^2t^{-1/2}s^{\delta} \right) \sum_{|J'|<|J|} \left|\del_t\del_t\del^IL^{J'} h_{\alpha \beta} \right|.
$$
The estimate for ${\sl QS}_{\phi}$ is derived as follows. We only estimate $\del^IL^J\left(\del_{\alpha} \phi\del_\beta \phi\right)$, since 
dealing with the term $\del^IL^J\left(\phi^2\right)$ is easier: 
$$
\left|\del^IL^J\left(\del_{\alpha} \phi\del_\beta \phi\right) \right|
\leq \sum_{|I_1|+|I_2|=I\atop |J_1|+|J_2|=J} \left|\del^{I_1}L^{J_1} \del_{\alpha} \phi \,   \del^{I_2}L^{J_2} \del_{\beta} \phi\right|. 
$$
Recalling that $|I|+|J| \leq N-4$, we obtain:   
\begin{itemize}

\item When $|I_1|+|J_1| \leq N-7$,
$$
\big|\del^{I_1}L^{J_1} \del_{\alpha} \phi \,   \del^{I_2}L^{J_2} \del_{\beta} \phi\big|
\leq CC_1\vep\big|t^{-3/2}s^{\delta} \big| \,   CC_1\vep \big|t^{-1/2}s^{-1/2 +\delta} \big| \leq C(C_1\vep)^2t^{-2}s^{-1/2 +2\delta}.
$$

\item When $N-6\leq |I_1|+|J_1| \leq N-4$, we see that $|I_2|+|J_2| \leq 2\leq N-7$ and 
$$
\big|\del^{I_1}L^{J_1} \del_{\alpha} \phi \,   \del^{I_2}L^{J_2} \del_{\beta} \phi\big|
\leq CC_1\vep \big|t^{-1/2}s^{-1/2 +\delta} \big| \,   CC_1\vep\big|t^{-3/2}s^{\delta} \big| \leq C(C_1\vep)^2t^{-2}s^{-1/2 +2\delta}.
$$
\end{itemize}
So, we conclude that
$
|{\sl QS}_{\phi}(N-4,k)| \leq C(C_1\vep)^2(s/t)^2s^{-5/2 +2\delta}.
$
We thus have 
\bel{ineq 1 pr ineq 1 second-sup}
\aligned
|(s/t)^2\del_t\del_t \del^IL^Jh_{\alpha \beta}| & \leq  CC_1\vep t^{-3/2}s^{-1+\delta} + C(C_1\vep)^2(s/t)^2s^{-5/2 +2\delta}
\\
& \quad + CC_1\vep\left(t^{-1}
+ (s/t)^2t^{-1/2}s^{\delta} \right) \sum_{|J'|<|J|} \big|\del_t\del_t\del^IL^{J'} h_{\alpha \beta} \big|.
\endaligned
\ee
Observe that when $|J|=0$, the last term in the above estimate disappears and we conclude with \eqref{ineq 1 second-sup}. We proceed by induction on $|J|$. Assume that \eqref{ineq 1 second-sup} holds for all $|J| \leq m-1<N-4$. We will prove that it still holds for $|J|=m\leq N-4$. We substitute \eqref{ineq 1 second-sup} (case $|J'|<|J|=m$) into the last term of \eqref{ineq 1 pr ineq 1 second-sup}.
\end{proof}


\subsection{$L^2$ estimates}

The following two estimates are direct in view of \eqref{ineq 2 homo} and \eqref{ineq 3 homo} combined with the expression of the energy $E_M^*$.

\begin{lemma}
For {\sl all} multi-indices $(I,J)$, one has 
\bel{ineq pr2 Sc-L2}
\aligned
& \left\|\delu_a \del_{\alpha} \del^IL^J h_{\alpha \beta} \right\|_{L^2(\Hcal_s^*)}
 + \left\|\del_{\alpha} \delu_a \del^IL^J h_{\alpha \beta} \right\|_{L^2(\Hcal_s^*)}
\\
& \leq  Cs^{-1}E_M^*(s, \del^IL_aL^Jh_{\alpha \beta})^{1/2}
 + Cs^{-1} \sum_{|I'| \leq|I|, \gamma}E_M^*(s, \del^{I'}L^Jh_{\alpha \beta})^{1/2}.
\endaligned
\ee
\end{lemma}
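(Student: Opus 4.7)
The plan is to reduce both $\delu_a \del_{\alpha} \del^IL^J h_{\alpha \beta}$ and $\del_{\alpha} \delu_a \del^IL^J h_{\alpha \beta}$ to $t^{-1}$ times an expression of the form $\del_\gamma \del^{I'} L_a L^J h_{\alpha\beta}$ or $\del_\gamma \del^{I'} L^J h_{\alpha\beta}$ with $|I'|\le |I|$, and then to convert the prefactor $t^{-1}$ into $s^{-1}(s/t)$ using the elementary bound $t\geq s$ on $\Hcal_s^*$. Once in this form, the $(s/t)$ weight combines with the partial derivative to produce precisely an integrand controlled by the energy density of $E_M^*$, yielding the stated $s^{-1} E_M^*(s,\cdot)^{1/2}$ bound.

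The first step is the algebraic identity $\delu_a = t^{-1} L_a$, which gives
\[
\delu_a \del_\alpha \del^I L^J h_{\alpha\beta}
= t^{-1}\, L_a \del_\alpha \del^I L^J h_{\alpha\beta}.
\]
I then commute $L_a$ successively through $\del_\alpha$ and through $\del^I$ by invoking the commutator identities \eqref{pre lem commutator pr1-ZZ} and \eqref{pre lem commutator pr2-ZZ}. Since $[L_a,\del_\alpha]$ is a bounded linear combination of partial derivatives, and $[L_a,\del^I] = \sum_{|I'|\le |I|}\lambda^I_{aI'} \del^{I'}$ with constant coefficients, the commutations produce
\[
L_a \del_\alpha \del^I L^J h_{\alpha\beta}
= \del_\alpha \del^I L_a L^J h_{\alpha\beta}
+ \sum_{|I'|\le |I|,\,\gamma} c^{I,a}_{I',\gamma}\, \del_\gamma \del^{I'} L^J h_{\alpha\beta},
\]
with bounded constants $c^{I,a}_{I',\gamma}$ and with the outer $L^J$ preserved (no $L^{J'}$ with $|J'|<|J|$ appears). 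For the second quantity $\del_\alpha \delu_a \del^I L^J h_{\alpha\beta}$, I write $\del_\alpha \delu_a = \delu_a \del_\alpha + [\del_\alpha,\delu_a]$; directly from $\delu_a = (x^a/t)\del_t + \del_a$ the commutator has the form $t^{-1}$ times a bounded function times a partial derivative, so it contributes only terms of the same schematic form $t^{-1} \del_\gamma \del^I L^J h_{\alpha\beta}$ and we reduce to the first case.

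For the final step, on $\Hcal_s^*$ we have $t\ge s$, hence $t^{-1} = s^{-1}(s/t)$. This converts each term above into
\[
\aligned
s^{-1} \bigl\| (s/t)\, \del_\alpha \del^I L_a L^J h_{\alpha\beta} \bigr\|_{L^2(\Hcal_s^*)}
&+ s^{-1} \sum_{|I'|\le |I|,\,\gamma}\bigl\| (s/t)\, \del_\gamma \del^{I'} L^J h_{\alpha\beta} \bigr\|_{L^2(\Hcal_s^*)}.
\endaligned
\]
Using $\del_a = \delu_a - (x^a/t)\del_t$ together with $|x^a/t|\le 1$, the quantity $(s/t)\del_\gamma v$ is controlled pointwise by $|(s/t)\del_t v| + |\delu_a v|$, whose $L^2(\Hcal_s^*)$ norm is precisely $E_M^*(s,v)^{1/2}$ up to a bounded constant. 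Applying this to $v=\del^I L_a L^J h_{\alpha\beta}$ and $v=\del^{I'} L^J h_{\alpha\beta}$ respectively yields \eqref{ineq pr2 Sc-L2}.

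The argument is entirely algebraic and there is no analytic obstacle; the only point requiring some care is bookkeeping, namely that commuting $L_a$ through $\del^I$ never raises the number of $L$-factors above $|J|+1$ in the principal term and never involves any $L^{J'}$ with $|J'|<|J|$ in the correction terms, which is what the structure of the statement demands.
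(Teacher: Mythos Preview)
Your proof is correct and follows the same approach the paper has in mind. The paper merely says the estimate is ``direct in view of \eqref{ineq 2 homo} and \eqref{ineq 3 homo} combined with the expression of the energy $E_M^*$'', which is shorthand for exactly the computation you spell out: write $\delu_a = t^{-1}L_a$, commute $L_a$ through $\del_\alpha\del^I$ using the constant-coefficient commutator identities \eqref{pre lem commutator pr1-ZZ}--\eqref{pre lem commutator pr2-ZZ}, then convert $t^{-1}$ to $s^{-1}(s/t)$ and absorb the weighted derivative into the energy.
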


A direct consequence of these bounds is that, 
for {\sl any} $(I,J)$,
\bel{ineq 1 Sc-L2}
\aligned
\left\|Sc_1[\del^IL^Jh_{\alpha \beta}]\right\|_{L^2(\Hcal_s^*)} 
& \leq Cs^{-1} \sum_aE_M^*(s, \del^IL_aL^Jh_{\alpha \beta})^{1/2} + Cs^{-1} \sum_{|I'| \leq|I|}E_M^*(s, \del^{I'}L^Jh_{\alpha \beta})^{1/2}.
\endaligned
\ee 
This estimate will play an essential role in our forthcoming analysis. 
Our next task is the derivation of an $L^2$ estimate for $Sc_2$. The term $h^{\mu\nu} \del_{\mu} \Psi_{\nu}^{\nu'}  \delu_{\nu'}h_{\alpha \beta}$ is bounded by the additional decay of $\left|\del_{\mu} \Psi_{\nu}^{\nu'}  \right| \leq t^{-1}$, and we thus focus on the first three quadratic terms. We provide the derive for the first term (but omit the second and third terms): 
$$
\aligned
& \left\|(t/s)^{3/2}h^{0a} \del_t\delu_a \del^IL^Jh_{\alpha \beta} \right\|_{L^2(\Hcal_s^*)}
\\
&\leq CC_1\vep \left\|(t/s)^{3/2} \left(t^{-1} + (s/t)t^{-1/2}s^{\delta} \right) \del_t\delu_a \del^IL^Jh_{\alpha \beta} \right\|_{L^2(\Hcal_s^*)}
\\
&\leq CC_1\vep s^{-1/2} \left\|\del_t\delu_a \del^IL^Jh_{\alpha \beta} \right\|_{L^2(\Hcal_s^*)}
+ CC_1\vep \left\|s^{-1/2 +\delta} \del_t\delu_a \del^IL^Jh_{\alpha \beta} \right\|_{L^2(\Hcal_s^*)}
\\
&\leq CC_1\vep s^{-1/2 +\delta} \left\|\del_t\delu_a \del^IL^Jh_{\alpha \beta} \right\|_{L^2(\Hcal_s^*)}. 
\endaligned
$$
Then we apply \eqref{ineq pr2 Sc-L2} and obtain 
\be
\aligned
\left\|(t/s)^{3/2}h^{0a} \del_t\delu_a \del^IL^Jh_{\alpha \beta} \right\|_{L^2(\Hcal_s^*)}
& \leq   CC_1\vep s^{-3/2 +\delta} \sum_a E_M^*(s, \del^IL_aL^Jh_{\alpha \beta})^{1/2}
\\
& \quad +  CC_1\vep s^{-3/2 +\delta} \sum_{|I'| \leq|I|, \gamma}E_M^*(s, \del^{I'}L^Jh_{\alpha \beta})^{1/2}. 
\endaligned
\ee 
We conclude that
\bel{ineq 2 Sc-L2}
\aligned
\big\|(t/s)^{3/2}Sc_2[\del^IL^Jh_{\alpha \beta}]\big\|_{L^2(\Hcal_s^*)} & \leq   CC_1\vep s^{-3/2 +\delta} \sum_a E_M^*(s, \del^IL_aL^Jh_{\alpha \beta})^{1/2}
\\
& \quad +  CC_1\vep s^{-3/2 +\delta} \sum_{|I'| \leq|I|, \gamma}E_M^*(s, \del^{I'}L^Jh_{\alpha \beta})^{1/2}.
\endaligned
\ee

With the above preparation,  in the rest of this subsection we will prove the following. 

\begin{lemma}
Under the bootstrap assumption \eqref{ineq energy assumption 1} and \eqref{ineq energy assumption 2}
\bel{ineq 1 second-L2}
\|s^3t^{-2} \del_t\del_t\del^IL^J h_{\alpha \beta} \|_{L^2(\Hcal_s^*)} \leq CC_1\vep s^{2\delta}, \qquad |I|+|J| \leq N-1.
\ee
\end{lemma}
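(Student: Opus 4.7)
\bigskip

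\noindent\textbf{Proof proposal for \eqref{ineq 1 second-L2}.} The idea is to run an induction on $|J|$, using Lemma~\ref{lem 1 second-roder} as the starting pointwise identity and systematically converting it into an $L^2$ bound with the weight $s^3 t^{-2} = s\,(s/t)^2$. For $|J|=0$ the commutator term in \eqref{eq 3 second-order} is absent, so the only inputs are $Sc_1$, $Sc_2$, $\del^IL^J F_{\alpha\beta}$, the Klein-Gordon product $QS_\phi$, and the cubic remainder; these are handled uniformly as explained below. Fix now $|J|\geq 1$ and assume inductively that \eqref{ineq 1 second-L2} holds for all $|J'|<|J|$ with the same $I$ (note $|I|+|J'|\leq N-1$).

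Multiplying the pointwise inequality \eqref{eq 1 lem 1 second-order} by $s$ and taking the $L^2(\Hcal_s^*)$ norm yields
\begin{equation*}
\aligned
\bigl\|s^3 t^{-2}\,\del_t\del_t\del^IL^J h_{\alpha\beta}\bigr\|_{L^2(\Hcal_s^*)}
&\lesssim \|s\, Sc_1\|_{L^2} + \|s\,Sc_2\|_{L^2} + \|s\,\del^IL^JF_{\alpha\beta}\|_{L^2} \\
&\quad + \|s\,QS_\phi(N,k)\|_{L^2} + \|s\,[\del^IL^J,h^{\mu\nu}\del_\mu\del_\nu]h_{\alpha\beta}\|_{L^2} + \|s\,Cub\|_{L^2}.
\endaligned
\end{equation*}
The first term is bounded by \eqref{ineq 1 Sc-L2}, which together with the bootstrap \eqref{ineq energy assumption 1} (applied with total order $|I|+|J|+1\leq N$) gives $\|s\,Sc_1\|_{L^2}\lesssim C_1\vep\, s^{\delta}$. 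For the second term we exploit $(s/t)^{3/2}\leq 1$ to get $\|Sc_2\|_{L^2}\leq \|(t/s)^{3/2}Sc_2\|_{L^2}$, and then \eqref{ineq 2 Sc-L2} plus the bootstrap yield $\|s\,Sc_2\|_{L^2}\lesssim (C_1\vep)^2 s^{-1/2+2\delta}\lesssim C_1\vep\, s^{2\delta}$. The term $\|s\,\del^IL^JF_{\alpha\beta}\|_{L^2}$ is of $QS_h$ type (after applying Lemma~\ref{lem P 2} to extract the quasi-null piece, whose only dangerous contribution $\del_t\hu_{a\alpha}\del_t\hu_{b\beta}$ still lies in the class $QS_h$ pointwise), and is controlled by \eqref{ineq 2 bilinear-L2}, giving $\lesssim (C_1\vep)^2 s^{-1/2+2\delta}$. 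The Klein--Gordon product is bounded by \eqref{ineq 5 bilinear-L2-DEUX}: $\|s\,QS_\phi(N,k)\|_{L^2}\lesssim (C_1\vep)^2 s^{2\delta}$. Finally, $\|s\,Cub\|_{L^2}\lesssim (C_1\vep)^2 s^{-1/2+3\delta}\lesssim C_1\vep\, s^{2\delta}$ by \eqref{ineq 5 bilinear-L2}.

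The main obstacle, and the step that forces the induction on $|J|$, is the commutator. Using Lemma~\ref{lem 1 commtator-L2 I} we get three contributions. The constant $C(C_1\vep)^2 s^{2\delta}$ is already of the required form. The $L^\infty$ term is estimated by the sup-norm bound \eqref{ineq 1 second-sup}: for $|J'|\leq 1$ and $|I|+|J'|\leq N-4$ (which is ensured whenever $|I|\leq N-5$; the higher $|I|$ range, forcing $|J|$ small, is dealt with by treating it as a base case of the induction where $|J|\leq 3$ and the commutator is short) we compute
\[
\|s^2(s/t)^{1-\delta}\del^IL^{J'}\del_t\del_t h\|_{L^\infty}\lesssim C_1\vep\, s^{1-\delta}\,t^{-1/2+\delta}\,s^{-1+2\delta}\lesssim C_1\vep\, s^{-1/2+2\delta},
\]
using $t\geq s$, and the prefactor $C_1\vep\,s^{\delta}$ produces an admissible $(C_1\vep)^2 s^{-1/2+3\delta}$. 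The last contribution is $C_1\vep\, s^{1/2+\delta}\sum_{|J'|<|J|}\|(s/t)^{5/2}\del_t\del_t\del^IL^{J'} h\|_{L^2}$. Using $(s/t)^{5/2}\leq(s/t)^2$ and then extracting a factor of $s$ we have
\[
\|(s/t)^{5/2}\del_t\del_t\del^IL^{J'} h\|_{L^2}\leq s^{-1}\,\|s^3 t^{-2}\del_t\del_t\del^IL^{J'} h\|_{L^2}\leq s^{-1}\cdot CC_1\vep\, s^{2\delta}
\]
by the inductive hypothesis, so the commutator contribution is $\lesssim (C_1\vep)^2 s^{-1/2+3\delta}\lesssim C_1\vep\, s^{2\delta}$, closing the induction and yielding \eqref{ineq 1 second-L2}.
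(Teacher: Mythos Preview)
Your overall approach matches the paper's: multiply \eqref{eq 1 lem 1 second-order} by $s$, bound $Sc_1$, $Sc_2$, $\del^IL^JF_{\alpha\beta}$, ${\sl QS}_\phi$, and ${\sl Cub}$ term by term, then handle the commutator via Lemma~\ref{lem 1 commtator-L2 I} and an induction on $|J|$. The bounds on $Sc_1$, $Sc_2$, ${\sl QS}_\phi$, ${\sl Cub}$, and the inductive reduction of the $(s/t)^{5/2}$ piece are all correct.

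Two points need correction. First, $F_{\alpha\beta}$ is a generic ${\sl QS}_h$ term (plus cubic), not ${\sl GQS}_h$, so \eqref{ineq 2 bilinear-L2} does not apply. One must estimate $\del^{I_1}L^{J_1}\del_\gamma h\cdot\del^{I_2}L^{J_2}\del_{\gamma'}h$ directly as the paper does, obtaining only $\|\del^IL^JF_{\alpha\beta}\|_{L^2(\Hcal_s^*)}\lesssim(C_1\vep)^2 s^{-1+2\delta}$; after multiplying by $s$ this is still $\leq CC_1\vep s^{2\delta}$, so the conclusion survives, but the exponent you claim is unjustified.

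Second, your handling of the $L^\infty$ term in Lemma~\ref{lem 1 commtator-L2 I} is flawed. Saying that for large $|I|$ one can ``treat it as a base case where $|J|\leq 3$ and the commutator is short'' does not dispose of the term: for any $|J|\geq 1$ the lemma's statement still formally includes it. The correct resolution (which the paper leaves implicit) is to trace the proof of Lemma~\ref{lem 1 commtator-L2 I}: the $L^\infty$ contribution arises only from the sub-case $|J_1'|\geq N-1$, which under $|I|+|J|\leq N-1$ forces $|I|=0$ and $|J_2'|\leq 1$, so $|I|+|J_2'|\leq 1\leq N-4$ and \eqref{ineq 1 second-sup} applies legitimately. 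For all smaller $|J|$ that sub-case is vacuous and the $L^\infty$ sum is empty.
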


\begin{proof}
{\sl Step I. Estimates for the nonlinear terms.}
The estimate of \eqref{ineq 1 second-L2} is also based on Lemma \ref{lem 1 second-roder}. 

\noindent 1.
This is done by direct application of \eqref{ineq 1 Sc-L2} combined with the energy assumption:
$$
\aligned
\left\|Sc_1[\del^IL^Jh_{\alpha \beta}]\right\|_{L^2(\Hcal_s^*)} 
& \leq  CC_1\vep s^{-1+\delta}.
\endaligned
$$ 

\noindent 2. For the term $Sc_2$ is bounded in view of \eqref{ineq 2 Sc-L2} combined with the energy assumption:
$$
\left\|Sc_1[\del^IL^Jh_{\alpha \beta}]\right\|_{L^2(\Hcal_s^*)} \leq C(C_1\vep)^2s^{-3/2 +2\delta}.
$$

\noindent 3. Now we are about to estimate $\del^IL^JF_{\alpha \beta}$. We observe that this term is a linear combination of
$
{\sl QS}_h(p,k)$ and ${\sl Cub}(p,k).
$
We see  that the term ${\sl QS}_h(p,k)$ is bounded as follows:
$$
\aligned
\left\|{\sl QS}_h(p,k) \right\|_{L^2(\Hcal_s^*)}
& \leq \sum_{\alpha, \beta, \alpha'\beta'\atop \gamma, \gamma'} \sum_{|I_1|+|I_2| \leq |I|\atop |J_1|+|J_2| \leq |J|}
\left\|\del^{I_1}L^{J_1} \del_{\gamma}h_{\alpha \beta} \,   \del^{I_2}L^{L_2} \del_{\gamma'}h_{\alpha'\beta'} \right\|_{L^2(\Hcal_s^*)}
\endaligned
$$
When $N\geq 3$, we must have either $|I_1|+|J_1| \leq N-2$ or $|I_2|+|J_2| \leq N-2$. So
$$
\aligned
&a \left\|\del^{I_1}L^{J_1} \del_{\gamma}h_{\alpha \beta} \,   \del^{I_2}L^{L_2} \del_{\gamma'}h_{\alpha'\beta'} \right\|_{L^2(\Hcal_s^*)}
\leq  CC_1\vep\left\|t^{-1/2}s^{-1+\delta} \del^{I_2}L^{L_2} \del_{\gamma'}h_{\alpha'\beta'} \right\|_{L^2(\Hcal_s^*)}
\\
& \leq  CC_1\vep s^{\delta} \left\|(t/s)t^{-1/2}s^{-1+\delta} \,   (s/t) \del^{I_2}L^{L_2} \del_{\gamma'}h_{\alpha'\beta'} \right\|_{L^2(\Hcal_s^*)}
\\
& \leq  CC_1\vep s^{-1+\delta}E_M^*(s, \del^{I_2}L^{J_2} h_{\alpha'\beta'})^{1/2} \leq C(C_1\vep)^2s^{-1+2\delta}.
\endaligned
$$
We can conclude that
$
\left\|\del^IL^JF_{\alpha \beta} \right\|_{L^2(\Hcal_s^*)} \leq C(C_1\vep)^2s^{-1+2\delta}.
$

\noindent 4. ${\sl QS}_\phi$ is bounded directly in view of \eqref{ineq 5 bilinear-L2}.

\noindent 5. The estimate on the commutator is the most difficult. We combine the sup-norm estimate \eqref{ineq 1 second-sup} and the estimate \eqref{ineq 1 lem 1 commtator-L2 I} :
$$
\aligned
\left\|s[\del^IL^J,h^{\mu\nu} \del_\mu\del_\nu]h_{\alpha \beta} \right\|_{L^2(\Hcal_s^*)}
& \leq  C(C_1\vep)^2s^{2\delta}
 + CC_1\vep s^{\delta} \sum_{|J'| \leq 1}
\left\|s^2(s/t)^{1- \delta} \del^IL^{J_2'} \del_t\del_th_{\alpha \beta} \right\|_{L^\infty(\Hcal_s^*)}
\\
& \quad +  CC_1\vep s^{1/2 +\delta} \sum_{|J'|<|J|} \left\|(s/t)^{5/2} \del_t\del_t\del^IL^{J'}h_{\alpha \beta} \right\|_{L^2(\Hcal_s^*)}
\\
& \leq  C(C_1\vep)^2s^{2\delta} + C(C_1\vep)^2s^{\delta} \|s^2(s/t)^{1- \delta}t^{1/2}s^{-3+2\delta} \|_{L^\infty(\Hcal_s^*)}
\\
& \quad +  CC_1\vep s^{1/2 +\delta} \sum_{|J'|<|J|} \left\|(s/t)^{5/2} \del_t\del_t\del^IL^{J'}h_{\alpha \beta} \right\|_{L^2(\Hcal_s^*)}
\\
& \leq C(CC_1\vep)^2s^{2\delta}
+ CC_1\vep s^{1/2 +\delta} \sum_{|J'|<|J|} \left\|(s/t)^{5/2} \del_t\del_t\del^IL^{J'}h_{\alpha \beta} \right\|_{L^2(\Hcal_s^*)}. 
\endaligned
$$

We thus conclude Step 1 with the inequality 
\bel{ineq 1 pr ineq 1 second-L2}
\left\|s^3t^{-2} \del^IL^J\del_t\del_th_{\alpha \beta} \right\|_{L^2(\Hcal_s^*)}
\leq CC_1\vep s^{2\delta}
+ CC_1\vep s^{1/2 +\delta} \sum_{|J'|<|J|} \left\|(s/t)^{5/2} \del_t\del_t\del^IL^{J'}h_{\alpha \beta} \right\|_{L^2(\Hcal_s^*)}
\ee
and we remark that when $|J|=0$ the last sum is empty.

\vskip.15cm

\noindent
{\sl Step II. Induction argument}
For $|I|+|J| \leq N-1$, we proceed by induction on $|J|$. When $|J|=0$, the last term in \eqref{ineq 1 pr ineq 1 second-L2} does not exist. Then in view of \eqref{eq 1 lem 1 second-order}, we have 
$$
\left\|s^3t^{-2} \del_t\del_t\del^IL^Jh_{\alpha \beta} \right\|_{L^2(\Hcal_s^*)} \leq C(C_1\vep)s^{2\delta}.
$$
Then we assume that \eqref{ineq 1 second-L2} holds for $|J| \leq n<N-1$, we want to prove that it still holds for $|J|=n$.  In this case, by our induction assumption, we have 
$$
\aligned
\left\|s^3t^{-2} \del^IL^J\del_t\del_th_{\alpha \beta} \right\|_{L^2(\Hcal_s^*)}
& \leq  C(CC_1\vep)^2s^{2\delta}
+ CC_1\vep s^{1/2 +\delta} \sum_{|J'|<|J|} \left\|(s/t)^{5/2} \del_t\del_t\del^IL^{J'}h_{\alpha \beta} \right\|_{L^2(\Hcal_s^*)}
\\
& \leq C(C_1\vep)^2s^{2\delta}.
\endaligned
$$
Then in view of \eqref{eq 1 lem 1 second-order}, the desired result is established.
\end{proof}


\subsection{Conclusion for general second-order derivatives}

In the above subsection we have only estimate the terms of the form $\del_t\del_t\del^IL^Jh_{\alpha \beta}$, but we observe that by the identities \eqref{eq second-order-frame-1} (and a similar argument below it in the proof of \eqref{lem 1 commtator-L2 I}) and the commutator estimates \eqref{ineq com 2.4} 
\bel{ineq 1 second-order}
| \del_\alpha \del_\beta \del^IL^Jh_{\alpha \beta}| \leq CC_1\vep t^{1/2}s^{-3+2\delta}, \quad  \quad |I|+|J| \leq N-4,
\ee
\bel{ineq 2 second-order}
\|s^3t^{-2} \del_\alpha \del_\beta \del^IL^J h_{\alpha \beta} \|_{L^2(\Hcal_s^*)} \leq CC_1\vep s^{2\delta}, \quad    \quad |I|+|J| \leq N-1,
\ee
\bel{ineq 3 second-order}
| \del^IL^J\del_\alpha \del_\beta h_{\alpha \beta}| \leq CC_1\vep t^{1/2}s^{-3+2\delta}, \quad   \quad |I|+|J| \leq N-4,
\ee
\bel{ineq 4 second-order}
\|s^3t^{-2} \del^IL^J\del_\alpha \del_\beta h_{\alpha \beta} \|_{L^2(\Hcal_s^*)} \leq CC_1\vep s^{2\delta}, \quad   \quad |I|+|J| \leq N-1.
\ee


\subsection{Commutator estimates}  
 
In this section, we improve the sup-norm and $L^2$ estimates for the commutators: our strategy is to apply Lemma \ref{lem 1 nonlinear}.

\begin{lemma} \label{lem 1 commtator sup-L2 II}
Assume that the energy assumptions \eqref{ineq energy assumption 1} and \eqref{ineq energy assumption 2} hold, then 
for all $|I|+|J| \leq N-4$
\bel{ineq sup lem 1 commtator II}
\left|[\del^IL^J,h^{\mu\nu} \del_\mu\del_\nu]h_{\alpha \beta} \right| \leq C(C_1\vep)^2 t^{-2}s^{-1+3\delta}
+ C(C_1\vep)^2t^{-1/2}s^{-3+2\delta}, 
\ee
while for all $|I|+|J| \leq N$
\bel{ineq L-2 lem 1 commtator II}
\left\|s[\del^IL^J,h^{\mu\nu} \del_\mu\del_\nu]h_{\alpha \beta} \right\|_{L^2(\Hcal_s^*)} \leq C(C_1\vep)^2 s^{-1/2 + 3\delta} + CC_1\vep \sum_{|J'|<|J|} \left\|s^3t^{-2} \del_t\del_t\del^IL^{J'}h_{\alpha \beta} \right\|_{L^2(\Hcal_s^*)}. 
\ee
\end{lemma}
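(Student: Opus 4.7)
We invoke the algebraic decomposition of Lemma~\ref{lem 1 nonlinear}, which reduces $[\del^IL^J,h^{\mu\nu}\del_\mu\del_\nu]h_{\alpha\beta}$ to five families: ${\sl GQQ}_{hh}$ terms, cross terms of the form $t^{-1}\del^{I_3}L^{J_3}h\,\del^{I_4}L^{J_4}\del_\gamma h$, the ``semi-good'' family $\del^{I_1}L^{J_1}\hu^{00}\,\del^{I_2}L^{J_2}\del_t\del_t h$ with $|I_1|\ge 1$, the family $L^{J_1'}\hu^{00}\,\del^IL^{J_2'}\del_t\del_t h$ with $|J_1'|\ge 1$, and the residual $\hu^{00}\,\del_\gamma\del_{\gamma'}\del^IL^{J'}h$ with $|J'|<|J|$. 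The first two are handled exactly as in the proof of Lemma~\ref{lem 1 commtator-L2 I}, using Lemma~\ref{lem 0 biliner}; they fit well inside both stated bounds.

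For the sup-norm inequality ($|I|+|J|\le N-4$), the decisive new input is the second-order sup-norm bound \eqref{ineq 1 second-sup}. In the third family, I would pull off one derivative of $\hu^{00}$ and use $|\del^{I_1'}L^{J_1}\del_\alpha\hu^{00}|\le CC_1\vep t^{-3/2}s^{\delta}$ from \eqref{ineq h00-sup 1}; pairing with \eqref{ineq 1 second-sup} gives a contribution of order $(C_1\vep)^2 t^{-1}s^{-3+3\delta}$, which rewrites as $(s/t)^2 s^{-3+3\delta}=t^{-2}s^{-1+3\delta}$ and is absorbed into the first summand of the statement. In the fourth and fifth families I would apply \eqref{ineq h00-sup 1.5} to get $|\hu^{00}|\le CC_1\vep(s/t)^2 t^{-1/2}s^{\delta}+Cm_St^{-1}$; multiplying by \eqref{ineq 1 second-sup} and reducing $\del_\gamma\del_{\gamma'}$ to $\del_t\del_t$ via \eqref{eq second-order-frame-1} produces precisely the two summands $t^{-2}s^{-1+3\delta}$ and $t^{-1/2}s^{-3+2\delta}$ of \eqref{ineq sup lem 1 commtator II}.

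For the $L^2$ inequality ($|I|+|J|\le N$), the scheme follows Lemma~\ref{lem 1 commtator-L2 I}, with the new ingredient being \eqref{ineq 4 second-order} whenever $\del^IL^{J'}\del_t\del_t h$ appears with $|I|+|J'|\le N-1$. In the third family I would split by the order of $\del^{I_1}L^{J_1}\hu^{00}$: for $|I_1|+|J_1|\le N-2$, apply the sup-norm \eqref{ineq h00-sup 1}; for $|I_1|+|J_1|\in\{N-1,N\}$, so that $|I_2|+|J_2|\le 1$, use instead the $L^2$ bound \eqref{ineq h00-L-2 wave} paired with the sup-norm \eqref{ineq 1 second-sup} on the derivative factor. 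The contribution of the fifth family is handled by factoring $s\hu^{00}=\bigl[s\hu^{00}\cdot s^{-3}t^2\bigr]\bigl[s^3t^{-2}\bigr]$: the uniform bound $|s\hu^{00}\cdot s^{-3}t^2|\le CC_1\vep$ delivered by \eqref{ineq h00-sup 1.5}, combined with the reduction of general Hessians to $\del_t\del_t$ via \eqref{eq second-order-frame-1}, recovers the residual term $\|s^3t^{-2}\del_t\del_t\del^IL^{J'}h\|_{L^2}$ displayed on the right-hand side of \eqref{ineq L-2 lem 1 commtator II}.

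The genuinely delicate step, and the principal obstacle, is the fourth family when $|J_1'|\ge N-1$, where no sup-norm is available for $L^{J_1'}\hu^{00}$. Here I would split $\hu^{00}=\chi(r/t)\hu_0^{00}+\hu_1^{00}$ as in Proposition~\ref{prop h00-sup 1}, dispose of the Schwarzschild part by the pointwise estimate $|L^{J_1'}\hu_0^{00}|\le Cm_St^{-1}$, and apply the weighted $L^2$ bound of Proposition~\ref{prop h00-sup 1} to $\hu_1^{00}$. By Hölder with the conjugate weights $(s/t)^{-1+\delta}s^{-1}$ and $s^{2}(s/t)^{1-\delta}$, the constraint $|J_1'|\ge N-1$ forces $|I|+|J_2'|\le 1$ so that \eqref{ineq 1 second-sup} applies to $\del^IL^{J_2'}\del_t\del_t h$; together with the inequality $t^{1/2}\le Cs$ valid in $\Kcal_{[2,\infty)}$, this yields a bound of order $(C_1\vep)^2 s^{-1/2+3\delta}$, matching the first summand of \eqref{ineq L-2 lem 1 commtator II}. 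The combinatorial bookkeeping that at each splitting of $(I_1,J_1,I_2,J_2)$ at least one factor lies in a range where a sup-norm bound is available is what forces $N$ to be taken sufficiently large.
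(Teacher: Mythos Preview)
Your overall strategy matches the paper's: invoke the decomposition of Lemma~\ref{lem 1 nonlinear}, use the second-order estimates \eqref{ineq 1 second-sup}--\eqref{ineq 4 second-order} as the new input, and refine the argument of Lemma~\ref{lem 1 commtator-L2 I}. The sup-norm part is fine (the paper does it in one line by substituting \eqref{ineq 1 second-order} into \eqref{ineq lem 1 commtator-sup I}, but your unpacking is equivalent), and your treatment of the fourth family when $|J_1'|\ge N-1$ via Proposition~\ref{prop h00-sup 1} is exactly right.

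There is, however, a genuine gap in your fifth-family argument, and it propagates to the fourth family when $|J_1'|\le N-2$. Your factoring $s\hu^{00}=[s\hu^{00}\cdot s^{-3}t^2][s^3t^{-2}]$ with the uniform bound $|s\hu^{00}\cdot s^{-3}t^2|\le CC_1\vep$ is correct, but when you then reduce $\del_\gamma\del_{\gamma'}$ to $\del_t\del_t$ via \eqref{eq second-order-frame-1}, the correction terms are of the form $CC_1\vep\|s^3t^{-2}\delu_a\del_\beta\del^IL^{J'}h_{\alpha\beta}\|_{L^2}$. Since $s^3t^{-2}\le s$ on $\Hcal_s^*$ and $\|\delu_a\del_\beta\del^IL^{J'}h_{\alpha\beta}\|_{L^2}\le CC_1\vep s^{-1+\delta}$ by \eqref{ineq pr2 Sc-L2}, these corrections are only $C(C_1\vep)^2 s^{\delta}$, not $C(C_1\vep)^2 s^{-1/2+3\delta}$ as the statement requires. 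The same issue arises if you simply ``follow Lemma~\ref{lem 1 commtator-L2 I}'' for the fourth family with $|J_1'|\le N-2$: the $(s/t)$-weighted piece there was bounded by $C(C_1\vep)^2 s^{\delta}$ in that lemma, and \eqref{ineq 4 second-order} only improves the $(s/t)^{5/2}$-weighted piece.

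The paper avoids this by \emph{not} collapsing the two scales in \eqref{ineq h00-sup 1.5}. It keeps the $t^{-1}$ piece of $\hu^{00}$ separate, which after multiplication by $s$ gives precisely the weight $(s/t)$; with this weight the Hessian-reduction corrections are $CC_1\vep\|(s/t)\delu_a\del_\beta\del^IL^{J'}h_{\alpha\beta}\|_{L^2}\le CC_1\vep\cdot CC_1\vep s^{-1+\delta}$, which \emph{is} absorbed. The remaining $(s/t)$-weighted $\del_t\del_t$ term is then converted to $s^3t^{-2}$ using $s/t\le s^3t^{-2}$ in $\Kcal$. The $(s/t)^2t^{-1/2}s^{\delta}$ piece of $\hu^{00}$ is handled entirely by \eqref{ineq 4 second-order}, giving $C(C_1\vep)^2 s^{-1/2+3\delta}$. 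Your single-step factoring merges these two scales and loses a factor of $s^{1/2}$ in the correction terms.
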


\begin{proof}
The proof of \eqref{ineq sup lem 1 commtator II} is immediate by combining \eqref{ineq 1 second-order} with \eqref{ineq lem 1 commtator-sup I}. 
The proof of \eqref{ineq L-2 lem 1 commtator II} relies on a refinement of the proof of \eqref{ineq 1 lem 1 commtator-L2 I}. We will improve upon our estimates on $L^{J_1'} \hu^{00} \del^IL^{J_2'} \del_t\del_th_{\alpha \beta}$ and $\hu^{00} \del^IL^{J'}h_{\alpha \beta}$. First we observe that for $L^{J_1'} \hu^{00} \del^IL^{J_2'} \del_t\del_th_{\alpha \beta}$
\begin{itemize}

\item When $1\leq |J_1'| \leq N-2$
$$
\aligned
& \left\|sL^{J_1'} \hu^{00} \del^IL^{J_2'} \del_t\del_th_{\alpha \beta} \right\|_{L^2(\Hcal_s^*)}
  \leq CC_1\vep \left\|s\left(t^{-1}+(s/t)^2t^{-1/2}s^{\delta} \right) \,   \del^IL^{J_2'} \del_t\del_th_{\alpha \beta} \right\|_{L^2(\Hcal_s^*)}
\\
&\leq CC_1\vep\left\|(s/t) \del^IL^{J_2'} \del_t\del_th_{\alpha \beta} \right\|_{L^2(\Hcal_s^*)}
   + CC_1\vep s^{1/2 +\delta} \left\|(s/t)^{5/2} \del^IL^{J_2'} \del_t\del_th_{\alpha \beta} \right\|_{L^2(\Hcal_s^*)}
\\
&\leq CC_1\vep \left\|(s/t) \del^IL^{J_2'} \del_t\del_th_{\alpha \beta} \right\|_{L^2(\Hcal_s^*)}
  + CC_1\vep s^{1/2 +\delta} \sum_{|J'|<|J|} \left\|(s/t)^{5/2} \del^IL^{J'} \del_t\del_th_{\alpha \beta} \right\|_{L^2(\Hcal_s^*)}
\\
&\leq C(C_1\vep)^2s^{-1/2 +3\delta}+ CC_1\vep \left\|(s/t) \del^IL^{J_2'} \del_t\del_th_{\alpha \beta} \right\|_{L^2(\Hcal_s^*)}.
\endaligned
$$

\item When $|J_1'|\geq  N-1$, then $|J_2'|+|I| \leq 1 \leq N-4$, we apply \eqref{prop h00-sup 1} to $\del^{J_1'} \hu^{00}$:
$$
\aligned
&
\left\|sL^{J_1'} \hu^{00} \del^IL^{J_2'} \del_t\del_th_{\alpha \beta} \right\|_{L^2(\Hcal_s^*)}
 \\
& \leq 
CC_1\vep \left\|st^{-1} \del^IL^{J_2'} \del_t\del_th_{\alpha \beta} \right\|_{L^2(\Hcal_s^*)}
 + \left\|sL^{J_1'} \hu_1^{00} \del^IL^{J_2'} \del_t\del_th_{\alpha \beta} \right\|_{L^2(\Hcal_s^*)}
 \\ 
& \leq  C(C_1\vep) \left\| (s/t) \del^IL^{J_2'} \del_t\del_th_{\alpha \beta} \right\|_{L^2(\Hcal_s^*)}
  +  \left\|s^{-1}(s/t)^{-1+\delta}L^{J_1'} \hu^{00}_1\right\|_{L^2(\Hcal_s^*)} \,  \left\|s^2(s/t)^{1- \delta} \del^IL^{J_2'} \del_t\del_th_{\alpha \beta} \right\|_{L^\infty(\Hcal_s^*)}
\\
& \leq  C(C_1\vep)^2s^{-1/2 +3\delta}
 +  CC_1\vep\left\|(s/t) \del^IL^{J_2'} \del_t\del_th_{\alpha \beta} \right\|_{L^2(\Hcal_s^*)}
\endaligned
$$
\end{itemize}
 
For the term $\hu^{00} \del_{\gamma} \del_{\gamma'} \del^IL^{J'}h_{\alpha \beta}$, the estimate is similar:
$$
\aligned
& \left\|s\hu^{00} \del_{\gamma} \del_{\gamma'} \del^IL^{J'}h_{\alpha \beta} \right\|_{L^2(\Hcal_s^*)}
\\
& \leq CC_1\vep\left\|(s/t) \del_{\gamma} \del_{\gamma'} \del^IL^{J'}h_{\alpha \beta} \right\|_{L^2(\Hcal_s^*)}
+ \left\|(s/t)^2t^{-1/2}s^{1+\delta} \del_{\gamma} \del_{\gamma'} \del^IL^{J'}h_{\alpha \beta} \right\|_{L^2(\Hcal_s^*)}
\\
&\leq CC_1\vep\left\|(s/t) \del_\gamma \del_{\gamma'} \del^IL^{J'}h_{\alpha \beta} \right\|_{L^2(\Hcal_s^*)}
 +  CC_1\vep s^{1/2 +\delta} \sum_{|J'|<|J|} \left\|(s/t)^{5/2} \del_{\gamma} \del_{\gamma'} \del^IL^{J'}h_{\alpha \beta} \right\|_{L^2(\Hcal_s^*)}
\\
& \leq C(C_1\vep)^2s^{-1/2 +3\delta} +  CC_1\vep\left\|(s/t) \del_\gamma \del_{\gamma'} \del^IL^{J'}h_{\alpha \beta} \right\|_{L^2(\Hcal_s^*)}.
\endaligned
$$

Now,  
$\big|\del^IL^J\del_t\del_t h_{\alpha \beta} \big|
\leq \sum_{|J'| \leq |J|\atop \gamma, \gamma'} \big|\del_{\gamma} \del_{\gamma'} \del^IL^{J'}h_{\alpha \beta} \big|$
in view of the commutator estimates \eqref{ineq com 2.4},  
and, by the same argument after \eqref{eq second-order-frame-1},
$$
\left\|(s/t) \del_\gamma \del_{\gamma'} \del^IL^J h_{\alpha \beta} \right\|_{L^2(\Hcal_s^*)}
\leq
\sum_{|J'| \leq |J|} \left\|(s/t) \del_t\del_t\del^IL^{J'}h_{\alpha \beta} \right\|_{L^2(\Hcal_s^*)} + CC_1\vep s^{-1+\delta}.
$$
So, we conclude that
$$
\aligned
& \left\|(s/t) \del^IL^J\del_t\del_t h_{\alpha \beta} \right\|_{L^2(\Hcal_s^*)}
 + \left\|(s/t) \del_{\gamma} \del_{\gamma'} \del^IL^J h_{\alpha \beta} \right\|_{L^2(\Hcal_s^*)} 
\\
& \leq C\sum_{|J'| \leq|J|} \left\|s^3t^{-2} \del_t\del_t\del^IL^{J'}h_{\alpha \beta} \right\|_{L^2(\Hcal_s^*)} + CC_1\vep s^{-1+\delta}.
  \qedhere \endaligned
$$ 
\end{proof}


\section{Sup-Norm Estimate Based on Characteristics}
\label{section-7}

\subsection{Main statement in this section}

Our goal in this section is to control null derivatives, as now stated. 

\begin{proposition} \label{prop Ch-sup 1}
Assume that \eqref{ineq energy assumption 1} and \eqref{ineq energy assumption 2} hold with $C_1\vep$ sufficiently small, then for $|I|+|J| \leq N-4$, 
\bel{ineq Ch-sup 1}
|(\del_t- \del_r) \del^IL^J\del_{\alpha} \hu_{a \beta}| \leq CC_1\vep t^{-1+ C\vep},
\ee
\bel{ineq Ch-sup 1.5}
|(\del_t- \del_r) \del^I\hu_{a \beta}| \leq CC_1\vep t^{-1}.
\ee
\end{proposition}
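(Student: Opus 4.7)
The plan is to apply Proposition \ref{prop LR-sup 1} (the sup-norm estimate via characteristic integration) to $u = \del^I L^J \del_\alpha \hu_{a\beta}$ for the first inequality and to $u = \del^I \hu_{a\beta}$ for the second. The essential feature that makes this work is the tensorial structure \eqref{eq tensorial 1}: unlike $\hu_{00}$, the components $\hu_{a\beta}$ satisfy a wave equation whose right-hand side contains no problematic quasi-null nonlinearity, only terms in ${\sl GQS}_h(0,0)$, ${\sl GQS}_\phi(0,0)$, ${\sl Cub}(0,0)$, together with linear source terms of the form $t^{-1}\delu h_{00}$, $t^{-2}h_{00}$, $t^{-1}\delu h_{0a}$, $t^{-2}h_{0a}$.

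First, I would differentiate \eqref{eq tensorial 1} by $\del^I L^J \del_\alpha$ and rewrite the result in the form $-\Box u - H^{\mu\nu}\del_\mu\del_\nu u = F^{IJ\alpha}_{a\beta}$, where $F^{IJ\alpha}_{a\beta}$ collects the differentiated source terms and the commutator $[\del^I L^J\del_\alpha, H^{\mu\nu}\del_\mu\del_\nu]\hu_{a\beta}$. To invoke Proposition \ref{prop LR-sup 1} I must verify the smallness hypothesis \eqref{ineq LR-sup 2} on $\Hu^{00}$; this follows immediately from Lemma \ref{lem h00-sup 1}, which shows $|\Hu^{00}| \lesssim C_1\vep \bigl(t^{-1/2}s^\delta + m_S t/s^2\bigr)$, and hence $|\Hu^{00}|\leq \vep_s(t-r)/t$ once $C_1\vep$ is small enough.

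Second, the real work is to estimate the two characteristic integrals
\[
t^{-1}\int_{a_0}^t \tau|F^{IJ\alpha}_{a\beta}|(\tau,\varphi(\tau))\, d\tau, \qquad t^{-1}\int_{a_0}^t |M_s[u,H]|(\tau,\varphi(\tau))\, d\tau
\]
appearing in \eqref{ineq LR-sup 1}. Using the basic $L^\infty$ estimates from Section \ref{section-4}, the commutator estimates from Lemma \ref{lem 1 commtator sup-L2 II}, and the refined $L^\infty$ bounds of Section \ref{sect--7} on $\del\hu^{00}$, each quadratic source in ${\sl GQS}_h$, ${\sl GQS}_\phi$, ${\sl Cub}$ decays like $\tau^{-2}s^{c\delta}$, while the linear source terms $t^{-1}\delu h$ and $t^{-2} h$ carry even better decay thanks to the factors $t^{-1}$ and the one good derivative. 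Multiplying by the $\tau$ weight, the integrand is of order $\tau^{-1+c\delta}$, so the integral produces the factor $t^{c\delta}$, which is absorbed into $t^{C\vep}$ (with $\delta$ comparable to $\vep$ from the bootstrap setup). The boundary contribution on $\del_B\Kical\cup\del\Kcal$ is controlled using the Schwarzschild asymptotics as in Lemma \ref{lem 3 sup-norm-W}, producing an $m_S t^{-1}$ contribution that is absorbed into the right-hand side.

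Third, for the lower-order bound \eqref{ineq Ch-sup 1.5} only $\del^I$ appears and no Lorentz boost is taken; in this regime each quadratic and commutator term can be bounded with strictly better decay $\tau^{-2}$ without any $s^\delta$ factor, so the characteristic integral gives the clean $t^{-1}$ bound.

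The main obstacle is the careful bookkeeping of the commutator $[\del^I L^J\del_\alpha, H^{\mu\nu}\del_\mu\del_\nu]\hu_{a\beta}$. Lemma \ref{lem 1 nonlinear} reduces it to good quadratic terms plus the dangerous pieces of the form $\hu^{00}\del_t\del_t\del^I L^{J'}\hu_{a\beta}$ and $L^{J_1}\hu^{00}\,\del^I L^{J_2}\del_t\del_t\hu_{a\beta}$; these must be handled by combining the sharp wave-coordinate pointwise estimate \eqref{ineq h00-sup 1}--\eqref{ineq h00-sup 1.5} with the second-order-derivative bound \eqref{ineq 1 second-sup}, which together yield integrand decay like $\tau^{-2+c\delta}$ along the characteristics. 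Without the tensorial cancellation (no bad $\del_t h\,\del_t h$ term on the right-hand side of the $\hu_{a\beta}$ equation in \eqref{eq tensorial 1}), one could at best get logarithmic growth; its presence is exactly what yields the polynomial $t^{-1+C\vep}$ bound claimed.
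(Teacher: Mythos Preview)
Your proposal has the right overall architecture (apply Proposition~\ref{prop LR-sup 1} to the tensorial equations \eqref{eq tensorial 1}, exploit that $\Pu_{a\beta}$ is a null term), but there is a genuine gap in the treatment of the commutator, and it stems from a false premise: you write that the integral ``produces the factor $t^{c\delta}$, which is absorbed into $t^{C\vep}$ (with $\delta$ comparable to $\vep$ from the bootstrap setup).'' In this paper $\delta$ is a \emph{fixed} constant in $[1/50,1/20]$ while $\vep$ is taken arbitrarily small, so $t^{c\delta}$ is \emph{not} absorbed into $t^{C\vep}$. If you estimate the dangerous commutator piece $\hu^{00}\,\del_t\del_t\del^IL^{J'}\hu_{a\beta}$ by combining \eqref{ineq h00-sup 1.5} with the second-order bound \eqref{ineq 1 second-sup} (equivalently, by invoking \eqref{ineq sup lem 1 commtator II}), along the characteristic near the cone you get an integrand $\sim (C_1\vep)^2\tau^{-1+\delta}$, and the characteristic integral yields only $CC_1\vep\,t^{-1+\delta}$. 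This weaker bound would propagate to \eqref{ineq Ch-sup 3} as $s^{-1+\delta}$ instead of $s^{-1+CC_1\vep}$, and the subsequent Gronwall argument in Proposition~\ref{prop 1 refined-energy-low-h} would then fail to improve the bootstrap exponent.

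What you are missing is the \emph{induction on $|J|$} that the paper carries out. For $|J|=0$ the last term in \eqref{ineq 1 proof prop Ch-sup 1} is absent, so the commutator has the good decay $t^{-2}s^{-1+3\delta}$ and the characteristic integral converges, giving \eqref{ineq Ch-sup 1.5}. For $|J|\geq 1$ one does not estimate $\del_t\del_t\del^IL^{J'}\hu_{a\beta}$ via \eqref{ineq 1 second-sup}; instead one uses the inductive hypothesis (\eqref{ineq Ch-sup 1} for $|J'|<|J|$, together with the identity expressing $\del_t$ in terms of $(\del_t-\del_r)$ and $\delu_a$) to get $|\del_\alpha\del_\beta\del^IL^{J'}\hu_{a\beta}|\leq CC_1\vep\,t^{-1+C\vep}$. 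Feeding this back, the commutator becomes $\leq C(C_1\vep)^2 t^{-2+C\vep}$, and now the characteristic integral $t^{-1}\int_{a_0}^t (C_1\vep)^2\tau^{-1+C\vep}\,d\tau$ gives precisely $CC_1\vep\,t^{-1+C\vep}$. This inductive self-improvement is the essential mechanism, and your outline does not contain it.
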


\begin{proof} The proof relies on our earlier estimate along characteristics. We first write the estimate on the components $\hu_{a0}$ in details, and then we sketch the proof on $\hu_{ab}$.

\noindent{\sl Step I. Estimates for the correction terms.} We observe that the equation satisfied by $\hu_{0a} $:
$$
\Boxt_g\hu_{0a} = \Phi_0^{\alpha'} \Phi_a^{\beta'}Q_{\alpha'\beta'} + \Pu_{0a} - 16\pi \delu_a \phi\del_t\phi - 8\pi \minu_{a0} \phi^2
+ \frac{2}{t} \delu_a h_{00} - \frac{2x^a}{t^3}h_{00} + {\sl Cub}(0,0).
$$
Differentiating this equation with respect to $\del^IL^J$, we have
\bel{eq Ch-sup 2}
\aligned
\Boxt_g\big(\del^IL^J\hu_{0a} \big) & =  \del^IL^J\big(\Phi_0^{\alpha'} \Phi_a^{\beta'}Q_{\alpha'\beta'} \big) + \del^IL^J\big(\Pu_{0a} \big)
- 16\pi\del^IL^J \big(\delu_a \phi\del_t\phi\big) - 8\pi \del^IL^J\big(\minu_{a0} \phi^2\big)
\\
& \quad -  [\del^IL^J,h^{\mu\nu} \del_{\mu} \del_{\nu}]\hu_{a0}
 + \del^IL^J\bigg(\frac{2}{t} \delu_a h_{00} - \frac{2x^a}{t^3}h_{00} \bigg) + \del^IL^J{\sl Cub}(0,0).
\endaligned
\ee
Then we apply Lemma \ref{prop LR-sup 1} to this equation. We need to estimate the $L^\infty$ norm of the terms in the right-hand side and  the corrective $M_s[\del^IL^J\hu_{a0},h]$.

First of all, in view of \eqref{ineq 1 bilinear-sup}, the null terms $\Phi_0^{\alpha'} \Phi_a^{\beta'}Q_{\alpha'\beta'}$ decay like 
 $C(C_1\vep)^2 t^{-2}s^{-1+2\delta}$ and in view of \eqref{ineq 2 bilinear-sup}, the quadratic terms ${\sl QS}_{\phi}$ is bounded by $C(C_1\vep)^2 t^{-2}s^{-1/2 +2\delta}$. 
We also observe that by the tensorial structure of the Einstein equation, the term $\del^IL^JP_{a \beta}$ is also a null term, so it is bounded by $C(C_1\vep)^2 t^{-2}s^{-1+2\delta} $. We also point out that the high-order terms $\del^IL^J{\sl Cub}(0,0)$ enjoy also the sufficient decay $C(C_1\vep)^2 t^{-2}s^{-1+2\delta}$.

We focus on the linear correction terms  $\del^IL^J\big(\frac{2}{t} \delu_a h_{00} - \frac{2x^a}{t^3}h_{00} \big)$. We observe that this term is a linear combination of  
$
t^{-1} \del^IL^J \delu_a h_{00}$ and $t^{-2} \del^IL^J h_{00}$ with $|I|+|J| \leq N-4$
with smooth and homogeneous coefficients of degree $\leq 0$. Then, these terms can be bounded by $CC_1\vep t^{-5/2}s^{\delta}$.

Then, we analyze the commutator term $[\del^IL^J, h^{\mu\nu} \del_{\mu} \del_{\nu}]\hu_{a0}$. We recall that $\hu_{a0}$ is a linear combination of $h_{\alpha \beta}$ with smooth and homogeneous coefficients of degree zero, then the estimate for this term relies on  Lemma \ref{lem 1 nonlinear}. In the list \eqref{eq 1 lem 1 nolinear}, we observe that we need only to estimate the terms
$\del^{I_1}L^{J_1} \hu^{00} \del^{I_2}L^{J_2} \del_t\del_th_{\alpha \beta}$, 
$L^{J_1'} \hu^{00} \del^IL^{J_2'} \del_t\del_th_{\alpha \beta}$,
$\hu^{00} \del_{\gamma} \del_{\gamma'} \del^IL^{J'}h_{\alpha \beta}$, 
since the remaining terms can be bounded by $C(C_1\vep)^2t^{-2}s^{-1+2\delta}$ (see the proof of Lemma \ref{lem 1 commtator-sup I}). 
For the above three terms, we apply \eqref{ineq 1 second-order}, \eqref{ineq 3 second-order} and \eqref{ineq h00-sup 1.5} :
$$
\aligned
\left|L^{J_1'} \hu^{00} \del^IL^{J_2'} \del_t\del_th_{\alpha \beta} \right|
& \leq  CC_1\vep\left|\left(t^{-1}+(s/t)^2t^{-1/2}s^{\delta} \right) \del^IL^{J_2'} \del_t\del_th_{\alpha \beta} \right|
\\
& \leq  CC_1\vep t^{-1} \left|\del^IL^{J_2'} \del_t\del_th_{\alpha \beta} \right| + C(C_1\vep)^2t^{-2}s^{-1+3\delta}
\\
& \leq CC_1\vep t^{-1} \sum_{|J_1'| \leq|J'|\atop \gamma, \gamma'} \left|\del_{\gamma} \del_{\gamma'} \del^IL^{J_2'}h_{\alpha \beta} \right| + C(C_1\vep)^2t^{-2}s^{-1+3\delta},
\endaligned
$$
and
$
\aligned
\left|\hu^{00} \del_{\gamma} \del_{\gamma'} \del^IL^{J'}h_{\alpha \beta} \right|
& \leq  CC_1\vep t^{-1} \left|\del_\gamma \del_{\gamma'} \del^IL^{J'}h_{\alpha \beta} \right| + C(C_1\vep)^2t^{-2}s^{-1+3\delta},
\endaligned
$
where in the last inequality we applied \eqref{ineq 1 second-order}. Then thanks to \eqref{eq second-order-frame-1}  and the discussion below these identities in the proof of Lemma \ref{lem 1 commtator-sup I},
$
\left|\del_{\gamma} \del_{\gamma'} \del^IL^{J'}h_{\alpha \beta} \right| \leq CC_1\vep t^{-3/2}s^{-1+\delta}
 + \left|\del_t\del_t \del^IL^{J'}h_{\alpha \beta} \right|,
$
so that 
$$
\left|\hu^{00} \del_{\gamma} \del_{\gamma'} \del^IL^{J'}h_{\alpha \beta} \right| \leq C(C_1\vep)^2t^{-2}s^{-1+3\delta}
+ CC_1\vep t^{-1} \left|\del_t\del_t \del^IL^{J'}h_{\alpha \beta} \right|.
$$
Then, by combining this with the commutator estimates, we obtain
\bel{ineq 1 proof prop Ch-sup 1}
\aligned
\left|[\del^IL^J,h^{\mu\nu} \del_\mu\del_\nu]\hu_{a0} \right|
& \leq  Cm_S t^{-1} \sum_{|J'|<|J|}|\del^IL^{J'} \del_{\alpha} \del_{\beta} \hu_{a0}| + C(C_1\vep)^2t^{-2}s^{-1+3\delta}.
\endaligned
\ee 

Finally we analyze the correction term $M_s[\del^IL^J\hu_{a0},h]$. We recall that
$$
M_s[\del^IL^J \hu_{a0},h] = r\sum_{a<b} \left(r\Omega_{ab} \right)^2u + \hu^{00}W_1[\del^IL^J \hu_{a0}] 
+ rR[\del^IL^J \hu_{a0},h]. 
$$
We see that
$
r^{-1} \Omega_{ab} =\frac{x^a}{r} \delu_b - \frac{x^b}{r} \delu_a
$
is a linear combination of the ``good'' terms. So by a similar argument to \eqref{ineq 4 homo}, we have 
$
\big|\left(r^{-1} \Omega_{ab} \right)^2\del^IL^J\hu_{a0} \big| \leq CC_1\vep t^{-5/2}s^{\delta}.
$
The term $W_1$ is a linear combination of first- and second-order derivatives with  coefficients bounded in $\Kcal\backslash \Kical$. We apply \eqref{ineq h00-sup 1.5} to $\hu^{00}$, and we get 
$
\big|\hu^{00}W_1[\del^IL^J\hu_{a0}]\big| \leq C(C_1\vep)^2t^{-2}s^{2\delta}.
$
The term $R[\del^IL^J\hu_{a0},h]$ is bounded similarly, and is a linear combination of the quadratic terms of the following form with homogeneous coefficients:
$\hu^{\alpha \beta} \delu_a \delu_\beta \del^IL^J\hu_{a0}$,
$t^{-1} \hu^{\alpha \beta} \delu_{\beta} \del^IL^J\hu_{a0}$. 
For the first term, we apply \eqref{ineq 4 homo} and \eqref{ineq basic-sup-h} : the linear part of $\hu^{\alpha \beta}$ is a linear combination of $h_{\alpha \beta}$ with smooth and homogeneous coefficients of degree zero. The second term is bounded by the additional decreasing factor $t^{-1}$ and therefore 
$
\big|R[\del^IL^J\hu_{a0},h]\big| \leq C(C_1\vep)^2 t^{-3}s^{2\delta}.
$
Then we conclude that
$$
|M_s[\del^IL^J \hu_{a0},h](t, x)| \leq CC_1\vep t^{-3/2}s^{2\delta}, 
\qquad 3/5\leq r/t\leq 1, \quad |I|+|J| \leq N-4.
$$

\vskip.15cm

\noindent {\sl Step II. Case of $|J|=0$.} Now we substitute the above estimate into the inequality \eqref{ineq LR-sup 1} and observe that when $|J| = 0$, the first term in the right-hand side of \eqref{ineq 1 proof prop Ch-sup 1} disappears. Then, we have 
$$
\aligned
|(\del_t- \del_r) \del^I\hu_{a0}| & \leq  Ct^{-1} \sup_{\del_B\Kical_{[2,s^*]} \cup \del\Kcal} \{|(\del_t- \del_r)(r\del^I\hu_{a0})|\}
+ Ct^{-1}|\del^I\hu_{a0}(t, x)|
\\
& \quad +  C(C_1\vep)^2 t^{-1} \int_{a_0}^t\tau^{-5/4+3\delta}d\tau + CC_1\vep t^{-1} \int_{a_0}^t\tau^{-3/2 +3\delta}d\tau
\\
& \leq  CC_1\vep t^{-1} +  Ct^{-1} \sup_{\del_B\Kical_{[2,s_0]} \cup \del\Kcal} \{|(\del_t- \del_r)(r\del^I\hu_{a0})|\}.
\endaligned
$$
Observe that on the boundary $\del_B\Kical_{[2,s_0]}$, $r=3t/5$. We have 
$$
\aligned
|(\del_t- \del_r)( r\del^I\hu_{a0})| \leq 
& r|(\del_r- \del_t) \del^I\hu_{a0}| + |\del^I\hu_{a0}| 
\\
& \leq CC_1\vep rt^{-1/2}s^{-1+\delta} + Cm_S\vep t^{-1} + CC_1\vep (s/t)t^{-1/2}s^\delta
\\
&\leq CC_1\vep rt^{-3/2 +\delta/2} + CC_1\vep t^{-1} + CC_1\vep (s/t)t^{-1/2}s^\delta \leq CC_1\vep.
\endaligned
$$
We also observe that on $\del\Kcal$, $\hu_{a0} = {\hu_s}_{a0}$,
$$
\aligned
|(\del_t- \del_r)(r\del^I\hu_{a0})| & \leq  r|(\del_r- \del_t) \hu_{a0}| + |\hu_{a0}| \leq Cm_S\vep rt^{-1} + Cm_S\vep t^{-1} \leq CC_1\vep.
\endaligned
$$
This leads us to \eqref{ineq Ch-sup 1.5} for $\hu_{0a}$.

\vskip.3cm

\noindent{\sl Step III. Induction on $|J|$.}
The proof of \eqref{ineq Ch-sup 1} is done by induction on $|J|$. The initial case $|J| = 0$ is already guaranteed in view of \eqref{ineq Ch-sup 1.5}. We assume that \eqref{ineq Ch-sup 1} holds for all $0\leq |J'| \leq n< N-4$ and we will prove it with $|J|=n$. First, based on \eqref{prop Ch-sup 1}, the following result is immediate:
\bel{ineq Ch-sup 2}
|\del_{\alpha} \del^IL^J \hu_{a0}| + |\del^IL^I\del_{\alpha} \hu_{0a}| \leq CC_1\vep t^{-1+ C\vep}, \quad |I|+|J| \leq N-4,
\ee
\bel{ineq Ch-sup 2.5}
|\del_{\alpha} \del^I\hu_{a0}| \leq CC_1\vep t^{-1}, \quad |I| \leq N-4.
\ee
These are based on the identity 
$
\del_t = \frac{t-r}{t} \del_t + \frac{x^a}{t+r} \delu_a + \frac{r}{t+r}(\del_t- \del_r), 
$
where $\del_t$ can be expressed by the ``good'' derivatives and $\del_t - \del_r$. Furthermore, we have
$
\del_a = \delu_a - \frac{x^a}{t} \del_t
$
and, then, based on the basic $L^\infty$ estimate of the ``good'' derivatives and \eqref{ineq Ch-sup 1} and \eqref{ineq Ch-sup 1.5}, the derivation of  \eqref{ineq Ch-sup 2} and \eqref{ineq Ch-sup 2.5} is immediate.

Then we substitute the above estimates on the source terms and corrective term into \eqref{ineq LR-sup 1}. Observe that by the inductive assumption, \eqref{ineq 1 proof prop Ch-sup 1} becomes
$$
|[\del^IL^J, \hu^{00} \del_t\del_t]\hu_{a0}| \leq C(C_1\vep)^2t^{-2}s^{-1+3\delta} 
  + C(C_1\vep)^2t^{-2 + C\vep}, 
$$
where we have noticed that $\sum_{|J'|<|J|} \big|\del^IL^J\del_{\alpha} \del_{\beta} \hu_{a0} \big| \leq CC_1\vep s^{-1+ C\vep}$ (by the commutator estimates  and \eqref{ineq Ch-sup 2}). 
This leads us to (in view of \eqref{ineq LR-sup 1})
$$
\aligned
|(\del_t- \del_r) \del^IL^J\hu_{a0}| & \leq  Ct^{-1} \sup_{\del_B\Kical_{[2,s^*]} \cup \del\Kcal} \{|(\del_t- \del_r)(r\del^IL^J\hu_{a0})|\}
+ Ct^{-1}|\del^IL^J\hu_{a0}(t, x)|
\\
& \quad +  C(C_1\vep)^2 t^{-1} \int_{a_0}^t\tau^{-1+ C\vep}d\tau + CC_1\vep t^{-1} \int_{a_0}^t\tau^{-3/2 +2\delta}d\tau
\\
& \leq  CC_1\vep t^{-1+ C\vep} +  Ct^{-1} \sup_{\del_B\Kical_{[2,s_0]} \cup \del\Kcal} \{|(\del_t- \del_r)(r\hu_{a0})|\}.
\endaligned
$$
Then, similarly as in the argument above,  \eqref{ineq Ch-sup 1} is proved for $\hu_{0a}$.

The estimate for $\hu_{ab}$ is similar, where we also observe that the quasi-null terms $\Pu_{ab}$ are eventually null terms, and the correction terms behave the same decay as in the case of $\hu_{a0}$.
\end{proof}


\subsection{Application to quasi-null terms}

Our main application of the refined sup-norm estimate concerns the terms $P_{\alpha \beta}$.

\begin{lemma}
Let $(I,J)$ be a multi-index and $|I|+|J| \leq N$. Then, one has 
\bel{ineq Ch-sup 3}
\aligned
\left\|\del^IL^JP_{\alpha \beta} \right\|_{L^2(\Hcal_s^*)}
& \leq  CC_1\vep s^{-1} \sum_{\alpha', \beta'}E_M^*(s, \del^IL^Jh_{\alpha'\beta'})^{1/2}
 + CC_1\vep s^{-1} \sum_{|I'|<|I|\atop \alpha', \beta'}E_M^*(s, \del^{I'}L^Jh_{\alpha'\beta'})^{1/2}
\\
& \quad +  CC_1\vep s^{-1+ CC_1\vep} \sum_{|I'| \leq|I|,|J'|<|J|\atop \alpha', \beta'}E_M^*(s, \del^{I'}L^{J'}h_{\alpha'\beta'})^{1/2}
\  + C(C_1\vep)^2s^{-3/2 +2\delta}.
\endaligned
\ee
\end{lemma}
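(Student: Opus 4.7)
The strategy is a reduction to the semi-hyperboloidal components via Lemma~\ref{lem P 2}, direct bilinear $L^2$ bounds on the ``good'' pieces, and an $L^\infty$--$L^2$ split on the single surviving quasi-null expression using the characteristic-integration estimate of Proposition~\ref{prop Ch-sup 1}.

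\emph{Step 1: reduction and easy terms.} Because $P_{\alpha\beta}=\Pu_{\alpha'\beta'}\Psi_\alpha^{\alpha'}\Psi_\beta^{\beta'}$ with transition coefficients smooth and homogeneous of degree $0$, it is enough to bound each $\del^IL^J\Pu_{\alpha'\beta'}$ in $L^2(\Hcal_s^*)$. By Lemma~\ref{lem P 2}, each component splits into elements of ${\sl GQS}_h(0,0)$, ${\sl Com}(0,0)$ and ${\sl Cub}(0,0)$, plus---only for $(\alpha',\beta')=(0,0)$---the residual expression $\del_t\hu_{a\alpha}\del_t\hu_{b\beta}$. After differentiation the first three classes produce terms of type ${\sl GQS}_h(|I|+|J|,|J|)$, ${\sl Com}(|I|+|J|,|J|)$, ${\sl Cub}(|I|+|J|,|J|)$, whose $L^2(\Hcal_s^*)$ norms are absorbed into $C(C_1\vep)^2 s^{-3/2+2\delta}$ by \eqref{ineq 2 bilinear-L2} and \eqref{ineq 5 bilinear-L2} (the ${\sl Com}$ terms even enjoy an extra $t^{-1}$).

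\emph{Step 2: the quasi-null term.} To bound $\|\del^IL^J(\del_t\hu_{a\alpha}\del_t\hu_{b\beta})\|_{L^2(\Hcal_s^*)}$ I expand by Leibniz and, exploiting $N\geq 14$, always place the factor with $|I_2|+|J_2|\leq N-4$ in $L^\infty$. Proposition~\ref{prop Ch-sup 1} and its immediate consequences \eqref{ineq Ch-sup 2}--\eqref{ineq Ch-sup 2.5} then give
\begin{equation*}
\bigl|\del^{I_2}L^{J_2}\del_t\hu_{b\beta}\bigr|\leq CC_1\vep\,t^{-1+C\vep},
\end{equation*}
the loss $C\vep$ disappearing when $|J_2|=0$. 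The complementary factor is kept in $L^2$ via the energy identity $\|(s/t)\del_t\del^{I_1}L^{J_1}\hu_{a\alpha}\|_{L^2(\Hcal_s^*)}\leq E_M^*(s,\del^{I_1}L^{J_1}h_{a\alpha})^{1/2}$, after using Lemma~\ref{lem com 2} to commute $\del_t$ past $\del^{I_1}L^{J_1}$ modulo lower-order contributions (which fit into the second and third sums of the target bound). Two elementary observations close the bookkeeping: $(t/s)t^{-1}=s^{-1}$, and on $\Hcal_s^*$ the constraint $r\leq t-1$ forces $s^2\geq 2t-1$, so $t^{C\vep}\lesssim s^{2C\vep}\leq s^{CC_1\vep}$. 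The three Leibniz subcases then match the three sums on the right-hand side exactly: $(I_2,J_2)=(0,0)$ contributes $CC_1\vep s^{-1}E_M^*(s,\del^IL^Jh_{\alpha'\beta'})^{1/2}$; $|I_2|>0$, $|J_2|=0$ contributes the sum with $|I'|<|I|$, $J'=J$ without exponent loss; and $|J_2|>0$ contributes the sum with $|I'|\leq|I|$, $|J'|<|J|$ together with the exponent inflation $s^{CC_1\vep}$.

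\emph{Main obstacle.} The chief difficulty is that $\del_t\hu_{a\alpha}\del_t\hu_{b\beta}$ is \emph{not} a null form---both $\del_t$-factors are ``bad'' derivatives---so the usual null cancellations are unavailable. The saving grace, peculiar to the spatial indices $a,b$, is that the tensorial structure \eqref{eq tensorial 1} pushes the quasi-null obstruction off the $\hu_{a\alpha}$-components; this enables the characteristic-integration argument of Proposition~\ref{prop LR-sup 1} to produce the sharp $t^{-1}$-decay of $(\del_t-\del_r)\hu_{a\beta}$ encoded in Proposition~\ref{prop Ch-sup 1}, which is precisely what replaces the missing null cancellation and makes the $L^\infty$--$L^2$ split succeed with the three-tiered structure above.
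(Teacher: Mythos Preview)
Your proposal is correct and follows essentially the same route as the paper's own proof: reduce to the semi-hyperboloidal components via Lemma~\ref{lem P 2}, absorb the ${\sl GQS}_h$, ${\sl Com}$ and ${\sl Cub}$ contributions into $C(C_1\vep)^2 s^{-3/2+2\delta}$, and for the residual quasi-null piece $\del_t\hu_{a\alpha}\del_t\hu_{b\beta}$ perform the Leibniz split placing the low-order factor in $L^\infty$ via \eqref{ineq Ch-sup 2}--\eqref{ineq Ch-sup 2.5}. Your three-case bookkeeping ($J_2=0$ with or without $|I_2|>0$, versus $|J_2|>0$) is exactly the paper's two-case split ($J_1=0$ versus $|J_1|\geq 1$) unwound one level further, and your observation $t^{C\vep}\lesssim s^{CC_1\vep}$ on $\Hcal_s^*$ makes explicit a step the paper leaves implicit.
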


\begin{proof} We apply Lemma \ref{lem P 2} combined with the estimates \eqref{ineq Ch-sup 2} and \eqref{ineq Ch-sup 2.5}. We first observe that due to its tensorial structure, the estimate for $P_{\alpha \beta}$ can be relined on the estimates on $\Pu_{\alpha \beta}$. Furthermore, the components $\Pu_{a \beta}$ or $\Pu_{\alpha b}$ are essentially  null terms (see  \eqref{eq P 2.5}), so that 
$
\left\|\del^IL^J\Pu_{a \beta} \right\|_{L^2(\Hcal_s^*)} \leq C(C_1\vep)^2s^{-3/2 +2\delta}.
$
We focus on $\Pu_{00}$. We see that in the list \eqref{eq P 2}, the non-trivial term are linear combinations of $\del_t\hu_{a \alpha} \del_t\hu_{b\beta}$ with smooth and homogeneous coefficients of degree zero. Then we only need to estimate $\left\|\del^IL^J\left(\del_t\hu_{a \alpha} \del_t\hu_{b\beta} \right) \right\|_{L^2(\Hcal_s^*)}$ for $|I|+|J| \leq N$. We have
$$
\aligned
\left\|\del^IL^J\left(\del_t\hu_{a \alpha} \del_t\hu_{b\beta} \right) \right\|_{L^2(\Hcal_s^*)}
& \leq  \sum_{I_1+I_2=I\atop J_1+J_2=J}
\left\|\del^{I_1}L^{J_1} \del_t\hu_{a \alpha} \,   \del^{I_2}L^{J_2} \del_t\hu_{b\beta} \right\|_{L^2(\Hcal_s^*)}. 
\endaligned
$$
Recall that $N\geq 7$ then either $|I_1|+|J_1| \leq N-4$ or $|I_2|+|J_2| \leq N-4$. Without loss of generality, we suppose that $|I_1|+|J_1| \leq N-4$. Then
\begin{itemize}

\item When $J_1=0$, we apply \eqref{ineq Ch-sup 2.5}: 
$$
\aligned
& \left\|\del^{I_1} \del_t\hu_{a \alpha} \,   \del^{I_2}L^J\del_t\hu_{b\beta} \right\|_{L^2(\Hcal_s^*)}
\leq  CC_1\vep \left\|t^{-1} \,   \del^{I_2}L^J\del_t\hu_{b\beta} \right\|_{L^2(\Hcal_s^*)}
\\
&
 \leq  CC_1\vep s^{-1} \left\|(s/t) \del^{I_2}L^J\del_t\hu_{b\beta} \right\|_{L^2(\Hcal_s^*)}
 \leq  CC_1\vep s^{-1} \sum_{|I'| \leq|I|,|J'| \leq|J|\atop \gamma, \gamma' }E_M^*(s, \del^{I'}L^{J'}h_{\gamma \gamma'})^{1/2}.
\endaligned
$$

\item When $|J_1|\geq 1,1\leq |I_1|+|J_1| \leq N-4$, we apply \eqref{ineq Ch-sup 2}: 
$$
\aligned
\left\|\del^{I_1}L^{J_1} \del_t\hu_{a \alpha} \,   \del^{I_2}L^{J_2} \del_t\hu_{b\beta} \right\|_{L^2(\Hcal_s^*)}
& \leq  CC_1\vep s^{-1+ CC_1\vep} \left\|(s/t) \del^{I_2}L^{J_2} \del_t\hu_{b\beta} \right\|_{L^2(\Hcal_s^*)}
\\
& \leq  CC_1\vep s^{-1+ CC_1\vep} \sum_{|I'| \leq|I_2|,|J'| \leq|J_2|\atop \alpha, \beta }E_M^*(s, \del^{I'}L^{J'}h_{\gamma, \gamma'})^{1/2}
\\
& \leq  CC_1\vep s^{-1+ CC_1\vep} \sum_{|I'| \leq|I_2|,|J'|<|J|\atop \alpha, \beta }E_M^*(s, \del^{I'}L^{J'}h_{\gamma, \gamma'})^{1/2}.
  \qedhere \endaligned
$$
\end{itemize} 
\end{proof}


\section{Low-Order Refined Energy Estimate for the Spacetime Metric}
\label{section-8}

\subsection{Preliminary}

In this section, we improve the energy bounds on $E_M^*(s, \del^IL^Jh_{\alpha \beta})$ for $|I|+|J| \leq N-4$. We apply Proposition \ref{prop 1 energy-W}. In this case the $L^2$ norm of $\del^IL^J\left(\del_{\alpha} \phi\del_{\beta} \phi + \phi^2\right)$ is integrable with respect to $s$. We need to focus on the estimate of $F_{\alpha \beta}$ and the commutators $[\del^IL^J,h^{\mu\nu} \del_{\mu} \del_{\nu}]h_{\alpha \beta}$.

\begin{lemma} \label{lem 0 refined-energy-low-h}
Under the bootstrap assumption \eqref{ineq energy assumption 1} and \eqref{ineq energy assumption 2} with $C_1\vep$ sufficiently small, one has for $|I|+|J| \leq N$:
\bel{ineq 1 lem 0 refined-energy-low-h}
\aligned
\left\|\del^IL^J F_{\alpha \beta} \right\|_{L^2(\Hcal_s^*)} & \leq  C(C_1\vep)^2 s^{-3/2 +2\delta}
+ CC_1\vep s^{-1} \sum_{\alpha', \beta'}E_M^*(s, \del^IL^Jh_{\alpha'\beta'})^{1/2}
\\
& \quad
+ CC_1\vep s^{-1} \sum_{|I'|<|I|\atop \alpha', \beta'}E_M^*(s, \del^{I'}L^Jh_{\alpha'\beta'})^{1/2}
\\
& \quad +  CC_1\vep s^{-1+ CC_1\vep} \sum_{|I'| \leq|I|,|J'|<|J|\atop \alpha', \beta'}E_M^*(s, \del^{I'}L^{J'}h_{\alpha'\beta'})^{1/2}. 
\endaligned
\ee
\end{lemma}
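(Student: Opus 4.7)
The plan is to decompose $F_{\alpha\beta}$ according to Lemma \ref{lem Ricci} into the null part $Q_{\alpha\beta}$, the quasi-null part $P_{\alpha\beta}$, and the remainder $W_{\alpha\beta}$, apply $\del^IL^J$, and then estimate each piece separately in $L^2(\Hcal_s^*)$. The wave gauge condition $\Gamma^\gamma=0$ kills the genuine $\Gamma_\alpha\Gamma_\beta$ contribution in $W_{\alpha\beta}$, while the term $g^{\delta\delta'}\del_\delta g_{\alpha\beta}\Gamma_{\delta'}$ is itself quadratic in $\del h$ (since $\Gamma$ is linear in $\del h$) and, once differentiated, falls into the classes already controlled.

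First I would handle $\del^IL^J Q_{\alpha\beta}$. Every term in $Q_{\alpha\beta}$ is a standard null form in $\del h$, so after expressing it in the semi-hyperboloidal frame it becomes a linear combination of terms in ${\sl GQS}_h(|I|+|J|,|J|)$ with smooth and homogeneous coefficients of degree $\le 0$, up to commutator corrections of ${\sl Com}$ type and cubic remainders in ${\sl Cub}$. The bound \eqref{ineq 2 bilinear-L2} gives
\[
\|\del^IL^J Q_{\alpha\beta}\|_{L^2(\Hcal_s^*)} \le C(C_1\vep)^2 s^{-3/2+2\delta},
\]
and the cubic leftovers are controlled by \eqref{ineq 5 bilinear-L2}, which also fits within $C(C_1\vep)^2 s^{-3/2+2\delta}$ since $3\delta \le 2\delta + 1/2$ for the small $\delta$ we work with.

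Next, and this is the main step, I would treat $\del^IL^J P_{\alpha\beta}$. The point is that here we cannot simply use sup-norm times $L^2$ on a generic quadratic term, because $P_{\alpha\beta}$ contains genuinely bad products of $\del_t h$ with $\del_t h$. Instead, one invokes the tensorial reduction of Lemma \ref{lem P 2}: the components $\Pu_{a\beta}$ are null and handled as above, while $\Pu_{00}$ reduces, modulo null, commutator and cubic terms already under control, to products of the form $\del_t\hu_{a\alpha}\,\del_t\hu_{b\beta}$. These are precisely the products to which the characteristic-integration estimate of Section~\ref{section-7} applies. Plugging \eqref{ineq Ch-sup 3} in directly yields
\[
\aligned
\|\del^IL^J P_{\alpha\beta}\|_{L^2(\Hcal_s^*)}
&\le C(C_1\vep)^2 s^{-3/2+2\delta}
+ CC_1\vep s^{-1}\!\!\!\sum_{\alpha',\beta'}\!\! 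E_M^*(s,\del^IL^Jh_{\alpha'\beta'})^{1/2} \\
&\quad + CC_1\vep s^{-1}\!\!\!\sum_{|I'|<|I|\atop \alpha',\beta'}\!\! E_M^*(s,\del^{I'}L^Jh_{\alpha'\beta'})^{1/2} \\
&\quad + CC_1\vep s^{-1+CC_1\vep}\!\!\!\sum_{|I'|\le|I|,|J'|<|J|\atop \alpha',\beta'}\!\! E_M^*(s,\del^{I'}L^{J'}h_{\alpha'\beta'})^{1/2},
\endaligned
\]
which is precisely the right-hand side of \eqref{ineq 1 lem 0 refined-energy-low-h}. The main obstacle is that the highest-order factor $\del_t\hu_{a\alpha}$ in these products has no extra smallness beyond the energy bootstrap, so one genuinely needs the refined $t^{-1+C\vep}$ pointwise decay of $(\del_t-\del_r)\del^IL^J\del_\gamma\hu_{a\beta}$ provided by Proposition~\ref{prop Ch-sup 1}; without it, one would lose a factor $s^\delta$ and break the scheme.

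Finally, for $\del^IL^J W_{\alpha\beta}$, I would substitute the wave gauge condition to recognize each summand as either a $Q$-type null form or a cubic remainder, and therefore absorb it into the previous two bounds. Summing the three contributions yields \eqref{ineq 1 lem 0 refined-energy-low-h}.
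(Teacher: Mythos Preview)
Your proposal is correct and follows essentially the same approach as the paper: decompose $F_{\alpha\beta}$, bound $\del^IL^J Q_{\alpha\beta}$ by the null-term estimate \eqref{ineq 2 bilinear-L2}, and bound $\del^IL^J P_{\alpha\beta}$ directly by \eqref{ineq Ch-sup 3}. One small simplification: in wave gauge $\Gamma_{\delta'}=0$ identically, so \emph{both} summands of $W_{\alpha\beta}=g^{\delta\delta'}\del_\delta g_{\alpha\beta}\Gamma_{\delta'}-\Gamma_\alpha\Gamma_\beta$ vanish, and there is no need to argue that the first one ``falls into the classes already controlled''; the paper accordingly just writes $F_{\alpha\beta}=Q_{\alpha\beta}+P_{\alpha\beta}$.
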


\begin{proof} We use here \eqref{ineq Ch-sup 3}. We observe that $F_{\alpha \beta} = Q_{\alpha \beta} + P_{\alpha \beta}$, where $Q_{\alpha \beta}$ are null terms combined with higher-order (cubic) terms. Then trivial substitution of the basic $L^2$ and sup-norm estimates (see the proof of \eqref{ineq 2 bilinear-L2}) shows that
$
\left\|\del^IL^JQ_{\alpha \beta} \right\|_{L^2(\Hcal_s^*)} \leq C(C_1\vep)^2 s^{-3/2 +2\delta}.
$
The estimate for $P_{\alpha \beta}$ is provided by \eqref{ineq Ch-sup 3}.
\end{proof}

\begin{lemma} \label{lem 0.5 refined-energy-low-h}
Under the bootstrap assumption \eqref{ineq energy assumption 1} and \eqref{ineq energy assumption 2}, the following estimates hold for $|I|+|J| \leq N-4$:
\bel{ineq 1 lem 0.5 refined-energy-low-h}
\aligned
 \left\|[\del^IL^J,h^{\mu\nu} \del_\mu \del_\nu] h_{\alpha \beta} \right\|_{L^2(\Hcal_s^*)} 
& \leq  C(C_1\vep)^2s^{-3/2 +2\delta}
 +  CC_1\vep s^{-1} \sum_{a, |J'|<|J|}E_M^*(s, \del^IL_aL^{J'}h_{\alpha \beta})^{1/2}
\\
& \quad +  CC_1\vep s^{-1+ CC_1\vep} \sum_{|I'| \leq|I|\atop |J'|<|J|} \sum_{\alpha', \beta'}E_M^*(s, \del^{I'}L^{J'}h_{\alpha'\beta'})^{1/2}. 
\endaligned
\ee
\end{lemma}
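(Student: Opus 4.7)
The plan is to use Lemma~\ref{lem 1 nonlinear} to express the commutator $[\del^IL^J, h^{\mu\nu}\del_\mu\del_\nu]h_{\alpha\beta}$ as a linear combination of the five families of terms listed in \eqref{eq 1 lem 1 nolinear} with smooth and homogeneous coefficients of non-positive degree, and then to bound each family in $L^2(\Hcal_s^*)$ by exploiting the refined estimates now available at the restricted order $|I|+|J|\leq N-4$: the Hessian sup-norm bound \eqref{ineq 1 second-sup}, the Hessian $L^2$ bound \eqref{ineq 1 second-L2}, the pointwise bounds \eqref{ineq h00-sup 1} and \eqref{ineq h00-sup 1.5} on $\hu^{00}$, the weighted Hardy bound of Proposition~\ref{prop h00-sup 1} on $\hu_1^{00}$, and the characteristic-based sup-norm estimates \eqref{ineq Ch-sup 2}--\eqref{ineq Ch-sup 2.5}. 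In spirit the argument is a refinement of Lemma~\ref{lem 1 commtator sup-L2 II} that avoids paying the universal $s^{-3/2+3\delta}$ price and instead reinjects certain contributions back into the energy sums on the right of \eqref{ineq 1 lem 0.5 refined-energy-low-h}.

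The two benign families are dispatched first. The ${\sl GQQ}_{hh}$ contribution is already controlled by the bilinear estimate \eqref{ineq 1 bilinear-L2}, which yields the target remainder $C(C_1\vep)^2 s^{-3/2+2\delta}$. The semi-linear terms $t^{-1}\del^{I_3}L^{J_3}h_{\mu\nu}\del^{I_4}L^{J_4}\del_\gamma h_{\mu'\nu'}$, of total order at most $N-4$, are handled by a pigeon-hole $L^\infty\times L^2$ split: one factor is always of order $\leq (N-4)/2$ and can be placed in $L^\infty$ via the basic sup-norm bounds of Section~\ref{section-4}, while the other is measured in $L^2$ through the energy, with the $t^{-1}$ factor converting to $s^{-1}$ on $\Hcal_s^*$. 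Next, for the quasi-linear term $\del^{I_1}L^{J_1}\hu^{00}\cdot\del^{I_2}L^{J_2}\del_t\del_t h_{\alpha\beta}$ with $|I_1|\geq 1$, the very presence of a $\del$-derivative on $\hu^{00}$ allows one to invoke \eqref{ineq h00-sup 1}, obtaining $|\del^{I_1}L^{J_1}\hu^{00}|\lesssim CC_1\vep t^{-3/2}s^\delta$, and pairing with the Hessian $L^2$ bound \eqref{ineq 1 second-L2} (legal here since $|I_2|+|J_2|\leq N-5\leq N-1$) together with the comparison $t^{-3/2}\lesssim s^{-2}\cdot (s^3t^{-2})$ on $\Hcal_s^*$ produces again a contribution of size $C(C_1\vep)^2 s^{-3/2+O(\delta)}$, absorbable in the first term of the statement.

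The two delicate families are $L^{J_1'}\hu^{00}\cdot\del^I L^{J_2'}\del_t\del_t h_{\alpha\beta}$ with $|J_1'|\geq 1$ and $\hu^{00}\cdot\del_\gamma\del_{\gamma'}\del^I L^{J'}h_{\alpha\beta}$ with $|J'|<|J|$, in which no $\del$-derivative falls on $\hu^{00}$. Here I would split $\hu^{00} = \chi\hu_S^{00} + \hu_1^{00}$ as in Proposition~\ref{prop h00-sup 1}. For the Schwarzschild piece the explicit pointwise bound $|L^{J_1'}(\chi\hu_S^{00})|\lesssim m_St^{-1}$ invites one to re-express $\del_t\del_t\del^I L^{J_2'}h_{\alpha\beta}$ (with $|J_2'|<|J|$) through the equation \eqref{eq 3 second-order}: the tangential part $Sc_1[\del^IL^{J_2'}h_{\alpha\beta}]$ contributes, by \eqref{ineq 1 Sc-L2} combined with the identity $\delu_a=t^{-1}L_a$, exactly the $L_aL^{J'}$ energy sum with the optimal $s^{-1}$ weight that appears as the second sum of \eqref{ineq 1 lem 0.5 refined-energy-low-h}, while $Sc_2$, the source $\del^IL^{J_2'}F_{\alpha\beta}$, the Klein-Gordon contribution and the commutator at the strictly lower level $|J_2'|<|J|$ are absorbed either in the $C(C_1\vep)^2 s^{-3/2+2\delta}$ remainder or inductively into the third energy sum. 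For the perturbation piece $\hu_1^{00}$ one uses $\|(s/t)^{-1+\delta}s^{-1}L^{J_1'}\hu_1^{00}\|_{L^2(\Hcal_s^*)}\lesssim CC_1\vep s^\delta$ from Proposition~\ref{prop h00-sup 1} together with $\|uv\|_{L^2}\leq \|u\|_{L^2}\|v\|_{L^\infty}$, paired with the dual weight $(s/t)^{1-\delta}s$ acting on the sup-norm Hessian bound \eqref{ineq 1 second-sup}; the small logarithmic loss $s^{CC_1\vep}$ in the third sum of the statement then encodes the residual growth one picks up when the characteristic sup-norm bound \eqref{ineq Ch-sup 2} on $\hu_{a\beta}$ is used, through the frame decomposition \eqref{eq second-order-frame-1}, to reduce $\del_\gamma\del_{\gamma'}\del^IL^{J'}h_{\alpha\beta}$ to $\del_t\del_t\del^IL^{J'}h_{\alpha\beta}$ plus good tangential pieces.

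The main obstacle is the self-consistency of this scheme with the induction on $|J|$ that already underlies \eqref{ineq 1 second-L2} and \eqref{ineq L-2 lem 1 commtator II}: the bound on $\del_t\del_t\del^IL^{J'}h_{\alpha\beta}$ at level $|J'|<|J|$ which one invokes for the last two families must be available at the current step, and this is the case precisely because one proceeds by induction on $|J|$ at fixed $|I|+|J|\leq N-4$. A second technical subtlety is the careful bookkeeping of which energy sum absorbs which contribution: the borderline case $|J_1'|=1$ produces a term with $\del^IL_aL^{J_2'}$ in which $|J_2'|=|J|-1$, which is why the statement separates out the $L_aL^{J'}$ summation with $|J'|<|J|$ and the sharp weight $s^{-1}$, whereas the case $|J_1'|\geq 2$ yields only energies $E_M^*(s,\del^{I'}L^{J'}h_{\alpha'\beta'})^{1/2}$ with $|I'|\leq|I|$ and $|J'|\leq|J|-2<|J|$, paying the small price $s^{CC_1\vep}$ and feeding into the third sum of \eqref{ineq 1 lem 0.5 refined-energy-low-h}.
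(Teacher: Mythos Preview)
Your strategy is broadly viable, but the paper takes a considerably shorter route and your account of where the two energy sums on the right of \eqref{ineq 1 lem 0.5 refined-energy-low-h} come from is not quite right.

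The paper does \emph{not} go back to the raw decomposition of Lemma~\ref{lem 1 nonlinear}, nor does it split $\hu^{00}=\chi\hu_S^{00}+\hu_1^{00}$. Instead it starts from the already-packaged estimate \eqref{ineq L-2 lem 1 commtator II}, which after dividing by $s$ reads
\[
\left\|[\del^IL^J,h^{\mu\nu}\del_\mu\del_\nu]h_{\alpha\beta}\right\|_{L^2(\Hcal_s^*)}
\leq C(C_1\vep)^2 s^{-3/2+3\delta}
+ CC_1\vep\sum_{|J'|<|J|}\left\|(s/t)^2\del_t\del_t\del^IL^{J'}h_{\alpha\beta}\right\|_{L^2(\Hcal_s^*)}.
\]
The whole point is then to control $\|(s/t)^2\del_t\del_t\del^IL^{J'}h_{\alpha\beta}\|_{L^2(\Hcal_s^*)}$ via the algebraic identity \eqref{eq 1 lem 1 second-order}: the $Sc_1$ contribution, through \eqref{ineq 1 Sc-L2}, produces the $L_aL^{J'}$ sum with the clean $s^{-1}$ weight; the $Sc_2$ and ${\sl QS}_\phi$ contributions land in the $C(C_1\vep)^2 s^{-3/2+2\delta}$ remainder; the $\del^IL^{J'}F_{\alpha\beta}$ contribution is bounded by Lemma~\ref{lem 0 refined-energy-low-h}, and \emph{this} is where the $s^{-1+CC_1\vep}$ sum over $|I'|\leq|I|$, $|J'|<|J|$ arises (the loss $s^{CC_1\vep}$ coming from the quasi-null terms via \eqref{ineq Ch-sup 3}). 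Finally, the lower-order commutator $[\del^IL^{J'},h^{\mu\nu}\del_\mu\del_\nu]h_{\alpha\beta}$ reappears on the right with $|J'|<|J|$, and a straightforward induction on $|J|$ closes the recursion.

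Your Schwarzschild/perturbation split is unnecessary at this order: since $|J_1'|\leq|J|\leq N-4\leq N-2$, the pointwise bound \eqref{ineq h00-sup 1.5} already gives $|L^{J_1'}\hu^{00}|\lesssim CC_1\vep\big(t^{-1}+(s/t)^2t^{-1/2}s^\delta\big)$, and both pieces dominate by $(s/t)^2$ on $\Hcal_s^*$ (using $t^{-1}\leq(s/t)^2$ there), so one arrives at exactly the same $(s/t)^2\del_t\del_t$ term as above without invoking Proposition~\ref{prop h00-sup 1}. More importantly, your last paragraph misidentifies the origin of the two energy sums: the $L_aL^{J'}$ sum with weight $s^{-1}$ is \emph{not} produced by the borderline case $|J_1'|=1$, but by $Sc_1[\del^IL^{J'}h_{\alpha\beta}]$ for every $|J'|<|J|$; and the sum with weight $s^{-1+CC_1\vep}$ comes from $\del^IL^{J'}F_{\alpha\beta}$ via Lemma~\ref{lem 0 refined-energy-low-h}, not from a distinction between $|J_1'|=1$ and $|J_1'|\geq 2$.
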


\begin{proof}
This is based on \eqref{ineq L-2 lem 1 commtator II}. We need to estimate the term
$\left\|(s/t)^2\del_t\del_t\del^IL^{J'}h_{\alpha \beta} \right\|_{L^2(\Hcal_s^*)}$ with $|J'|<|J|$. We are going to use \eqref{eq 1 lem 1 second-order}. We see that in view of \eqref{ineq 1 Sc-L2} :
$$
\left\|Sc_1[\del^IL^{J'}h_{\alpha \beta}]\right\|_{L^2(\Hcal_s^*)} \leq Cs^{-1} \sum_aE_M^*(s, \del^IL_aL^{J'}h_{\alpha \beta})^{1/2} + Cs^{-1} \sum_{|I'| \leq|I|}E_M^*(s, \del^{I'}L^{J'}h_{\alpha \beta})^{1/2}.
$$
The term $Sc_2$ is bounded in view of \eqref{ineq 2 Sc-L2} :
$
\left\|Sc_2[\del^IL^{J'}h_{\alpha \beta}]\right\|_{L^2(\Hcal_s^*)} \leq C(C_1\vep)^2s^{-3/2 +2\delta}.
$
The term $F_{\alpha \beta}$ is bounded by Lemma \ref{lem 0 refined-energy-low-h}.

For the term ${\sl QS}_{\phi}$, we will only analyze in detail the term $\del_{\alpha} \phi\del_{\beta} \phi$ and omit the proof on $\phi^2$. We see first that
$
\del^IL^{J'} \left(\del_{\alpha} \phi\del_{\beta} \phi\right) = \sum_{I_1+I_2=I\atop J_1+J_2 = J'} \del^{I_1}L^{J_1} \del_{\alpha} \phi \,   \del^{I_2}L^{J_2} \del_{\beta} \phi.
$
We then observe that, for $N\geq 7$ and $|I|+|J'| \leq N-5$, 
either $|I_1|+|J_1| \leq N-6$ or $|I_2|+|J_2| \leq N-6$. Suppose without loss of generality that $|I_1|+|J_1| \leq N-6$. Then we have 
$
\left\|\del^IL^{J'} \left(\del_{\alpha} \phi\del_{\beta} \phi\right) \right\|_{L^2(\Hcal_s^*)} \leq \left\|\del^{I_1}L^{J_1} \del_{\alpha} \phi \,  \del^{I_2}L^{J_2} \del_{\beta} \phi\right\|_{L^2(\Hcal_s^*)}.
$
\begin{itemize}

\item when $I_1 = J_1=0$, we see that $0\leq N-7$, then we have 
$$
\aligned
\left\|\del^IL^{J'} \left(\del_{\alpha} \phi\del_{\beta} \phi\right) \right\|_{L^2(\Hcal_s^*)}
& \leq  \left\|(t/s) \del_{\alpha} \phi \,  (s/t) \del^{I_2}L^{J_2} \del_{\beta} \phi\right\|_{L^2(\Hcal_s^*)}
\\
& \leq  CC_1\vep \left\|(t/s)t^{-3/2}s^{\delta} \,  (s/t) \del^{I_2}L^{J_2} \del_{\beta} \phi\right\|_{L^2(\Hcal_s^*)}
\\
& \leq  CC_1\vep s^{-3/2 +\delta} \left\|(s/t) \del^{I_2}L^{J_2} \del_{\beta} \phi\right\|_{L^2(\Hcal_s^*)} \leq C(C_1\vep)^2s^{-3/2 +2\delta}.
\endaligned
$$

\item when $1\leq |I_1|+|I_2| \leq N-6$, we see that $|I_2|+|J_2| \leq N-5$. So we have 
$$
\aligned
\left\|\del^IL^{J'} \left(\del_{\alpha} \phi\del_{\beta} \phi\right) \right\|_{L^2(\Hcal_s^*)}
& \leq  \left\|\del^{I_1}L^{J_1} \del_{\alpha} \phi\right\|_{L^\infty(\Hcal_s^*)}
\left\|\del^{I_2}L^{J_2} \del_{\beta} \phi\right\|_{L^2(\Hcal_s^*)}
\\
& \leq  CC_1\vep s^{-3/2} \,   CC_1\vep s^{\delta} \leq C(C_1\vep)^2s^{-3/2 +2\delta}.
\endaligned
$$
\end{itemize}
We conclude that
\bel{ineq pr0 lem 0.5 refined-energy-low-h}
\left\|{\sl QS}_{\phi}(p,k) \right\|_{L^2(\Hcal_s^*)} \leq C(C_1\vep)^2s^{-3/2 +2\delta}, \qquad p\leq N-4.
\ee
The term $[\del^IL^{J'},h^{\mu\nu} \del_\mu\del_\nu]h_{\alpha \beta}$ is conserved. Then we see the following estimate are established:
\bel{ineq pr1 lem 0.5 refined-energy-low-h}
\aligned
& \left\|[\del^IL^J,h^{\mu\nu}]h_{\alpha \beta} \right\|_{L^2(\Hcal_s^*)}
\\
& \leq  CC_1\vep s^{-1} \sum_{\alpha', \beta',a \atop|J'|<|J|}E_M^*(s, \del^IL_aL^{J'}h_{\alpha'\beta'})^{1/2}
 +  CC_1\vep s^{-1+ CC_1\vep} \sum_{{\alpha', \beta'\atop|I'| \leq |I|} \atop|J'|<|J|}E_M^*(s, \del^{I'}L^{J'}h_{\alpha'\beta'})^{1/2} 
\\
& \quad + \sum_{\alpha', \beta'\atop |J'|<|J|} \big\|[\del^IL^{J'},h^{\mu\nu} \del_{\mu} \del_{\nu}]h_{\alpha'\beta'} \big\|_{L_f^2(\Hcal_s)}
+ C(C_1\vep)^2s^{-3/2 +2\delta}.
\endaligned
\ee
We proceed by induction on $|J|$. In \eqref{ineq pr1 lem 0.5 refined-energy-low-h}, if we take $|J|=0$, then only the last term in the right-hand side exists, this concludes \eqref{ineq 1 lem 0.5 refined-energy-low-h}. Assume that \eqref{ineq 1 lem 0.5 refined-energy-low-h} holds for $|J| \leq n-1\leq N-5$, we will prove that it still holds for $|J|=n\leq N-4$. We substitute \eqref{ineq 1 lem 0.5 refined-energy-low-h} into the last term in the right-hand side of \eqref{ineq pr1 lem 0.5 refined-energy-low-h}.
\end{proof}


\subsection{Main estimate established in this section}

\begin{proposition}[Lower order refined energy estimate for $h_{\alpha \beta}$]\label{prop 1 refined-energy-low-h}
There exists a constant $\vep_1>0$ determined by $C_1>2C_0$ such that assume that the bootstrap assumption \eqref{ineq energy assumption 1} holds with $(C_1, \vep)$, $0\leq \vep\leq \vep_1$, then the following refined estimate holds
\bel{ineq 1 prop 1 refined-energy-low-h}
E_M(s, \del^IL^J h_{\alpha \beta})^{1/2} \leq \frac{1}{2}C_1\vep s^{CC_1\vep}, \quad  \qquad  \alpha, \beta \leq 3, \quad |I|+|J| \leq N-4.
\ee
\end{proposition}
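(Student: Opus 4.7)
My plan is to apply the energy estimate Proposition~\ref{prop 1 energy-W} to each $\del^I L^J h_{\alpha\beta}$ with $|I|+|J|\leq N-4$, and to control each source term on its right-hand side using the estimates of the present section together with those of Sections~\ref{section-4}--\ref{section-7}. The hypothesis \eqref{eq energy 1} required by Proposition~\ref{prop 1 energy-W} holds thanks to Lemma~\ref{lem h00-sup 2} provided $C_1\vep$ is sufficiently small. The initial-data piece is bounded by $CC_0\vep$ (with a further $Cm_S\leq C\vep$ contribution), and Lemma~\ref{lem energy 3} allows me to pass from $E_M$ to $E_M^*$ at the cost of adding $Cm_S$.

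Next I substitute Lemma~\ref{lem 0 refined-energy-low-h} for $\|\del^I L^J F_{\alpha\beta}\|_{L^2(\Hcal_\tau^*)}$, Lemma~\ref{lem 0.5 refined-energy-low-h} for the commutator, the bound \eqref{ineq lem h00-sup 3} of Lemma~\ref{lem h00-sup 3} for $M_{\alpha\beta}[\del^I L^J h]$, and inequality \eqref{ineq 3 bilinear-L2} (which requires $|I|+|J|\leq N-4$, whence the low-order restriction) together with a similar bound for $\|\del^I L^J(\phi^2 g_{\alpha\beta})\|_{L^2(\Hcal_\tau^*)}$ for the scalar source terms. All ``cubic'' pieces of the form $C(C_1\vep)^2\tau^{-3/2+2\delta}$ integrate to a uniform constant $C(C_1\vep)^2$, whereas the remaining ``coupling'' terms come in two flavours: a diagonal same-$|J|$ coupling of the form $CC_1\vep\,\tau^{-1}E_M^*(\tau, \del^{I'}L^J h_{\alpha'\beta'})^{1/2}$ with $|I'|\leq|I|$, and a strictly-lower-$|J|$ coupling of the form $CC_1\vep\,\tau^{-1+CC_1\vep}E_M^*(\tau, \del^{I'}L^{J'}h_{\alpha'\beta'})^{1/2}$ with $|J'|<|J|$. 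Setting
\[
\mathcal{E}_k(s) := \max\bigl\{\,E_M^*(s,\del^{I'}L^{J'}h_{\alpha'\beta'})^{1/2} \,:\, |I'|+|J'|\leq N-4,\ |J'|\leq k,\ \alpha',\beta'\leq 3\,\bigr\},
\]
I arrive at the master integral inequality
\[
\mathcal{E}_k(s) \;\leq\; CC_0\vep + C(C_1\vep)^2 + \int_2^s CC_1\vep\,\tau^{-1}\mathcal{E}_k(\tau)\,d\tau + \int_2^s CC_1\vep\,\tau^{-1+CC_1\vep}\mathcal{E}_{k-1}(\tau)\,d\tau.
\]

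I then induct on $k$ from $0$ to $N-4$. For the base case $k=0$ the last integral is empty, and Gronwall's inequality applied to the $\tau^{-1}$ kernel yields $\mathcal{E}_0(s)\leq C(C_0+C_1^2\vep)\vep\,s^{CC_1\vep}$. For the inductive step I substitute the hypothesis $\mathcal{E}_{k-1}(s)\leq \tfrac{1}{2}C_1\vep\,s^{CC_1\vep}$ into the last integral, which is then dominated by $C(C_1\vep)^2\int_2^s\tau^{-1+2CC_1\vep}\,d\tau\leq C(C_1\vep)\,s^{2CC_1\vep}$; applying Gronwall once more and collecting all constants in a single $N$-dependent $C$ gives $\mathcal{E}_k(s)\leq C(C_0+C_1\vep)\vep\,s^{CC_1\vep}$. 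Fixing $C_1>2CC_0$ and then choosing $\vep_1$ small enough that $CC_1\vep_1\leq C_0$ closes the induction with $\mathcal{E}_k(s)\leq \tfrac{1}{2}C_1\vep\,s^{CC_1\vep}$ at every level $k\leq N-4$, which proves \eqref{ineq 1 prop 1 refined-energy-low-h}.

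The main technical obstacle is keeping track of how the constant $C$ in the exponent $s^{CC_1\vep}$ propagates through the induction: each Gronwall step composes the previous exponent with a new factor arising from the $\tau^{-1+CC_1\vep}$ kernel, so I must verify at the outset that a single $N$-dependent constant can absorb all $N-4$ induction cycles. This is only possible because the length of the induction is fixed a priori and the successive Gronwall multipliers are uniformly bounded by a function of $N$; a secondary but important check is that $CC_1\vep$ (the exponent in the improved bound) is strictly less than $\delta$ (the exponent in the bootstrap assumption), so that the estimate genuinely improves upon \eqref{ineq energy assumption 21}, which holds automatically for $\vep_1$ small enough.
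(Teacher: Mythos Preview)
Your proposal is correct and follows essentially the same route as the paper's own proof: apply Proposition~\ref{prop 1 energy-W}, feed in Lemmas~\ref{lem 0 refined-energy-low-h}, \ref{lem 0.5 refined-energy-low-h}, \ref{lem h00-sup 3} and the $QS_\phi$ bound (the paper uses \eqref{ineq pr0 lem 0.5 refined-energy-low-h}, which is the same as your \eqref{ineq 3 bilinear-L2}), and then induct on $|J|$ with Gronwall at each step. Your use of the maximum $\mathcal{E}_k$ in place of the paper's summation over indices is purely cosmetic; note only that the term $E_M^*(s,\del^I L_a L^{J'}h_{\alpha\beta})$ with $|J'|<|J|$ arising from Lemma~\ref{lem 0.5 refined-energy-low-h} has $|L_aL^{J'}|\leq|J|$ and is therefore absorbed into your \emph{diagonal} $\mathcal{E}_k$-term (as you implicitly do), not into $\mathcal{E}_{k-1}$, and your closing remark about the exponent growing by a bounded $N$-dependent factor through the $N-4$ induction steps is exactly how the paper handles it as well.
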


\begin{proof}
The proof relies on a direct application of Proposition \ref{prop 1 energy-W}. We need to bound the terms presented in the right-hand side of \eqref{eq energy 2}. The term $F_{\alpha \beta}$ is bounded by Lemma \ref{lem 0 refined-energy-low-h}, the term ${\sl QS}_\phi$ is bounded in view of \eqref{ineq pr0 lem 0.5 refined-energy-low-h}. The estimate for $[\del^IL^J,h^{\mu\nu} \del_\mu\del_\nu]h_{\alpha \beta}$ is obtained in view of \eqref{ineq 1 lem 0.5 refined-energy-low-h}.
By \eqref{ineq lem h00-sup 3}, the term $M_{\alpha \beta}[\del^IL^J h]$ is bounded by $C(C_1\vep)^2 s^{-3/2 +2\delta}$.
Then in view of \eqref{eq energy 2} :
\bel{ineq -1 proof prop1 refined-energy-low-h}
\aligned
\sum_{\alpha, \beta}E_M(s, \del^IL^Jh_{\alpha \beta})^{1/2} & \leq  CC_0 \, \vep + C(C_1\vep)^2
+ CC_1\vep\sum_{\alpha, \beta} \int_2^s\tau^{-1}E_M^*(\tau, \del^IL^Jh_{\alpha \beta})^{1/2}d\tau
\\
&
+ CC_1\vep\sum_{|I'|<|I|\atop \alpha, \beta} \int_2^s\tau^{-1}E_M^*(\tau, \del^{I'}L^Jh_{\alpha \beta})^{1/2}d\tau
\\
& \quad +  CC_1\vep\sum_{|I'| \leq|I|,|J'|<|J|\atop \alpha, \beta} \int_2^s\tau^{-1+ CC_1\vep}E_M^*(\tau, \del^{I'}L^{J'}h_{\alpha \beta})^{1/2}d\tau
\\
& \quad +  CC_1\vep\sum_{\alpha, \beta,a \atop |J'|<|J|} \int_2^s \tau^{-1}E_M^*(\tau, \del^IL_aL^{J'}h_{\alpha \beta})^{1/2}d\tau.
\endaligned
\ee

The rest of the proof is based on \eqref{ineq -1 proof prop1 refined-energy-low-h}. When $|J|=0$, the last two terms in the right-hand side of \eqref{ineq -1 proof prop1 refined-energy-low-h} disappears. Then, we have 
$$
\aligned
\sum_{\alpha, \beta \atop |I| \leq N-4}E_M(s, \del^Ih_{\alpha \beta})^{1/2} & \leq  C\left(C_0 \, \vep + (C_1\vep)^2\right)
+ CC_1\vep\sum_{\alpha, \beta \atop |I| \leq N-4} \int_2^s\tau^{-1} E_M(\tau, \del^Ih_{\alpha \beta})^{1/2}d\tau.
\endaligned
$$
Then by Gronwall's inequality, we have
\bel{ineq 0 proof prop1 refined-energy-low-h}
\sum_{\alpha, \beta \atop |I| \leq N-4}E_M(s, \del^Ih_{\alpha \beta})^{1/2} \leq C\big(C_0 \, \vep + (C_1\vep)^2\big)s^{CC_1\vep}.
\ee
Here we can already ensure that 
$\sum_{\alpha, \beta}E_M(s, \del^Ih_{\alpha \beta})^{1/2} \leq \frac{1}{2}C_1\vep s^{CC_1 \vep}$
by choosing ${\vep_1}_0 = \frac{C_1-2CC_0}{2C_1^2}$ with $C_1$ sufficiently large. 

We proceed by induction on $|J|$ and suppose that
\bel{ineq 1 proof prop1 refined-energy-low-h}
\sum_{\alpha, \beta \atop |I| \leq N-4}E_M(s, \del^Ih_{\alpha \beta})^{1/2} \leq C\big(C_0 \, \vep + (C_1\vep)^2\big)s^{CC_1\vep}
\ee
holds for $|J|< n\leq N-4$, we will prove that it still holds for $|J|=n$. Substitute \eqref{ineq 1 proof prop1 refined-energy-low-h} into the last two terms of the right-hand side of \eqref{ineq -1 proof prop1 refined-energy-low-h}, we see that
$$
\aligned
&\sum_{\alpha, \beta}E_M(s, \del^IL^Jh_{\alpha \beta})^{1/2}
\leq  CC_0 \, \vep + C(C_1\vep)^2
+ CC_1\vep\sum_{\alpha, \beta} \int_2^s\tau^{-1} E_M(\tau, \del^IL^Jh_{\alpha \beta})^{1/2}d\tau
\\
& \quad + CC_1\vep\sum_{\alpha, \beta \atop |I'|<|I|} \int_2^s\tau^{-1} E_M(\tau, \del^{I'}L^Jh_{\alpha \beta})^{1/2}d\tau
 + CC_1\vep\left(C_0 \, \vep + (C_1\vep)^2\right) \int_2^s\tau^{-1+ CC_1\vep}d\tau
\\
& \quad+ CC_1\vep \sum_{a, \alpha, \beta \atop |J'|=|J|-1} \int_2^s\tau^{-1}E_M^*(\tau, \del^IL_aL^{J'}h_{\alpha \beta})^{1/2}d\tau, 
\endaligned
$$ thus 
$$
\aligned
&\sum_{\alpha, \beta}E_M(s, \del^IL^Jh_{\alpha \beta})^{1/2}
\leq C\left(C_0+ (C_1\vep)^2\right)s^{CC_1\vep}
+ CC_1\vep\sum_{\alpha, \beta} \int_2^s\tau^{-1} E_M(\tau, \del^IL^Jh_{\alpha \beta})^{1/2}d\tau
\\
& \quad+ CC_1\vep\sum_{\alpha, \beta \atop |I'|<|I|} \int_2^s\tau^{-1} E_M(\tau, \del^{I'}L^Jh_{\alpha \beta})^{1/2}d\tau
+ CC_1\vep \sum_{\alpha, \beta \atop |J'|=|J|} \int_2^s\tau^{-1}E_M^*(\tau, \del^IL^{J'}h_{\alpha \beta})^{1/2}d\tau
\endaligned
$$
This leads us to
$$
\aligned
& \sum_{\alpha, \beta,|J|=n\atop |I| \leq N-4-n}E_M(s, \del^IL^Jh_{\alpha \beta})^{1/2}
\\
& \leq C\left(C_0 \, \vep + (C_1\vep)^2\right)s^{CC_1\vep}
+ CC_1\vep\sum_{\alpha, \beta,|J|=n\atop |I| \leq N-4-n} \int_2^s \tau^{-1} E_M(\tau, \del^IL^Jh_{\alpha \beta})^{1/2}d\tau.
\endaligned
$$
Then by Gronwall's inequality, we have (by taking some constant $C$ larger than the one provided the above estimate)
$$
\sum_{\alpha, \beta \atop |I| \leq N-4-|J|}E_M(s, \del^IL^Jh_{\alpha \beta})^{1/2} \leq C\left(C_0 \, \vep + (C_1\vep)^2\right)s^{CC_1\vep}.
$$
By choosing ${\vep_1}_n = \frac{C_1-2CC_0}{2C_1^2}$, we see that
$
\sum_{\alpha, \beta \atop |I| \leq N-4-|J|}E_M(s, \del^IL^Jh_{\alpha \beta})^{1/2} \leq \frac{1}{2}C_1\vep s^{CC_1\vep}.$
Then, we choose $\vep_1 = \min_{0\leq n\leq N-4} \{{\vep_1}_n\}$ and conclude that for $\vep\leq \vep_1$, \eqref{ineq 1 prop 1 refined-energy-low-h} is thus proven. 
\end{proof}


\subsection{Application of the refined energy estimate}
\label{subsec energy-low-app}

The improved low-order energy estimates on $h_{\alpha \beta}$ will lead us to a series of estimates. Based on \eqref{prop 1 refined-energy-low-h}, the  sup-norm estimates are direct by the global Sobolev inequality (for $|I|+|J| \leq N-6$): 
\bel{ineq 4 reined-sup low}
|\del^IL^J\del_{\gamma}h_{\alpha \beta}| + |\del_{\gamma} \del^IL^Jh_{\alpha \beta}| \leq CC_1\vep t^{-1/2}s^{-1 + CC_1\vep},
\ee
\bel{ineq 5 reined-sup low}
|\del^IL^J\delu_ah_{\alpha \beta}| + |\delu_a \del^IL^Jh_{\alpha \beta}| \leq CC_1\vep t^{-3/2}s^{CC_1\vep}.
\ee
Based on this improved sup-norm estimate, the following estimates are direct by integration along the radial rays $\{(t, \lambda x)|1\leq \lambda \leq t/|x|\}$:
\bel{ineq 6 reined-sup low}
|\del^IL^Jh_{\alpha \beta}| \leq CC_1\vep \left(t^{-1}+(s/t)t^{1/2}s^{CC_1\vep} \right).
\ee
We take the above bounds and substitute them into the proof of Lemma \ref{lem wave-condition 4}, following exactly the same procedure, we obtain for $|I|+|J| \leq N-6$:
\bel{ineq 7 reined-sup low}
\left|\del^IL^J\del_{\alpha} \hu^{00} \right| + \left|\del^IL^J\del_{\alpha} \hu^{00} \right| \leq CC_1\vep t^{-3/2}s^{CC_1\vep}
\ee
and also by integration along the rays $\{(t, \lambda x)|1\leq \lambda \leq t/|x|\}$ (and taking into account the exterior Schwarzschild metric):
\bel{ineq 8 reined-sup low}
\left|\del^IL^J\hu^{00} \right| \leq CC_1\vep\left(t^{-1} + (s/t)^2t^{1/2}s^{CC_1\vep} \right).
\ee

Two more delicate applications of this improved energy estimate for $h_{\alpha \beta}$ are now obtained. We begin with $F_{\alpha \beta}$, in view of  \eqref{ineq 4 reined-sup low}.

\begin{lemma} \label{lem 0 refined-sup}
For $|I|+|J| \leq N-6$, one has 
\bel{ineq 1 lem 0 refined-sup}
|\del^IL^JF_{\alpha \beta}| \leq C(C_1\vep)^2t^{-2 + CC_1\vep}(t-r)^{-1+ CC_1\vep}.
\ee
\end{lemma}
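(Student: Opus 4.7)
The plan is to obtain the estimate by applying Leibniz's rule to the quadratic expression for $F_{\alpha\beta}$ and then feeding in the improved sup-norm bounds \eqref{ineq 4 reined-sup low}--\eqref{ineq 6 reined-sup low} that follow from the refined low-order energy estimate of Proposition~\ref{prop 1 refined-energy-low-h}. No fine structural information (null cancellation, tensorial structure, or wave gauge) is needed here; I only need the naive quadratic bound, which is sufficient once the refined sup-norm bounds are available.

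First, I would recall from Lemma~\ref{lem Ricci} together with the wave gauge condition $\Gamma^\gamma=0$ that $F_{\alpha\beta}=P_{\alpha\beta}+Q_{\alpha\beta}$, and that by the classification of Section~\ref{section-3-QUASI} this falls into ${\sl QS}_h(0,0)+{\sl Cub}(0,0)$, i.e.\ it is a linear combination (with coefficients that are smooth homogeneous functions of degree $\leq 0$, times powers of $g=m+h$) of terms of the form $\del_\mu h\,\del_\nu h$ plus cubic and higher corrections. Applying $\del^IL^J$ and expanding via Leibniz, $\del^IL^J F_{\alpha\beta}$ becomes a finite sum of terms
\[
\Omega(t,x)\,\del^{I_1}L^{J_1}(g^{*})\,\del^{I_2}L^{J_2}(g^{*})\,\del^{I_3}L^{J_3}(\del h)\,\del^{I_4}L^{J_4}(\del h)
\]
with $\sum_k(|I_k|+|J_k|)\leq |I|+|J|\leq N-6$ and $\Omega$ a smooth homogeneous coefficient of degree $\leq 0$ (hence bounded in $\Kcal$), plus analogous cubic corrections carrying one additional factor of $h$ or its derivative.

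Next, since $N\geq 14$, each individual index $|I_k|+|J_k|$ in the Leibniz sum is automatically $\leq N-6$, so the sup-norm bound \eqref{ineq 4 reined-sup low} applies to both $\del h$ factors:
\[
\bigl|\del^{I_k}L^{J_k}\del h\bigr|\leq CC_1\vep\, t^{-1/2}s^{-1+CC_1\vep}\qquad(k=3,4),
\]
while the coefficient factors $\del^{I_j}L^{J_j}(g^{*})$ are either bounded by a constant (when no derivative hits) or controlled by \eqref{ineq 6 reined-sup low} (when at least one derivative hits), in which case they provide only additional decay. Multiplying the two decay estimates yields
\[
|\del^IL^JF_{\alpha\beta}|\leq C(C_1\vep)^2\,t^{-1}\,s^{-2+CC_1\vep}.
\]

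To convert this into the form stated, I would use the elementary geometric inequalities valid in $\Kcal$, namely $s^2=(t-r)(t+r)\geq t(t-r)$ (giving $s^{-2}\leq t^{-1}(t-r)^{-1}$) and $s^2\leq 2t(t-r)$ (giving $s^{CC_1\vep}\leq C\,t^{CC_1\vep/2}(t-r)^{CC_1\vep/2}$). Substituting these yields
\[
|\del^IL^JF_{\alpha\beta}|\leq C(C_1\vep)^2\,t^{-2+CC_1\vep/2}\,(t-r)^{-1+CC_1\vep/2},
\]
which is \eqref{ineq 1 lem 0 refined-sup} after absorbing the factor $1/2$ into the implicit constant in the exponent. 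The cubic and higher corrections contain an extra factor of $h$ (or $\del h$) which, by \eqref{ineq 6 reined-sup low}, is bounded by $CC_1\vep\,(s/t)t^{1/2}s^{CC_1\vep}+CC_1\vep\,t^{-1}$; this factor is strictly subleading and is easily absorbed into the same bound. The only real bookkeeping item---and the nearest thing to a genuine obstacle---is checking that all the smooth homogeneous coefficients remain bounded and that the Leibniz split does not produce a factor whose index exceeds $N-6$; both follow immediately from the constraint $|I|+|J|\leq N-6$ and from the fact that the coefficients in $P_{\alpha\beta}$ and $Q_{\alpha\beta}$ depend smoothly on $g$, which is uniformly close to $m$ by the bootstrap.
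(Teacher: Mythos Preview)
Your proof is correct and takes essentially the same approach as the paper: both reduce to applying the refined sup-norm bound \eqref{ineq 4 reined-sup low} to each $\del h$ factor in the quadratic expression for $F_{\alpha\beta}$, obtaining $t^{-1}s^{-2+CC_1\vep}$, which then converts to the stated form via $s^2\simeq t(t-r)$. The paper additionally invokes the structural decomposition of Lemma~\ref{lem P 2} to isolate the quasi-null piece $\del_t\hu_{a\alpha}\del_t\hu_{b\beta}$ as the ``worst'' term, but as you correctly observe, this is not needed here---once the refined low-order bounds are available, the naive quadratic estimate already suffices.
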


\begin{proof} Observe that $F_{\alpha \beta}$ is a linear combination of ${\sl GQS}_{h}$ and $P_{\alpha \beta}$ and in $P_{\alpha \beta}$ the only term to be concerned about (by Lemma \ref{lem P 2}) is $\minu^{0a} \minu^{0b} \del_t\hu_{0a} \del_t\hu_{0b}$, the remaining terms are ${\sl GQS}_{h}$, ${\sl Cub}$ or ${\sl Com}$ which have better decay. We observe that in view of \eqref{ineq 4 reined-sup low},
$$
\left|\del^IL^J\left(\del_t\hu_{a \alpha} \del_t\hu_{b\beta} \right) \right| \leq C(C_1\vep)^2t^{-1}s^{-2 + CC_1\vep}. \qedhere
$$ 
\end{proof}

Then, a second refined estimate can be established.  

\begin{lemma} \label{lem 1 refined-energy-low-h}
For $|I|+|J| \leq N-7$, one has 
\bel{ineq 1 lem 1 refined-energy-low-h}
\left|\del_t\del_t\del^IL^J h_{\alpha \beta} \right| \leq CC_1\vep t^{1/2}s^{-3+ CC_1\vep}.
\ee
\end{lemma}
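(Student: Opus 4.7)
\medskip

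The plan is to deduce this pointwise bound from the diagonalized form of the reduced wave equation, \eqref{eq 1 lem 1 second-order}, namely
$$
\bigl|(s/t)^2\del_t\del_t\del^IL^Jh_{\alpha\beta}\bigr|
\leq C\bigl(|Sc_1[\del^IL^Jh_{\alpha\beta}]|+|Sc_2[\del^IL^Jh_{\alpha\beta}]|\bigr)+|\del^IL^JF_{\alpha\beta}|+|QS_\phi|+|[\del^IL^J,h^{\mu\nu}\del_\mu\del_\nu]h_{\alpha\beta}|+|\mathit{Cub}|,
$$
and then to use the improved low-order estimates of Section~\ref{subsec energy-low-app} in place of the bootstrap bounds. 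Since each term on the right must be controlled by $CC_1\vep\,t^{-3/2}s^{-1+CC_1\vep}$ (so that after dividing by $(s/t)^2$ one recovers the claim), the desired bound $t^{1/2}s^{-3+CC_1\vep}$ is obtained for $|I|+|J|\leq N-7$ provided every source term enjoys this refined decay.

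First I would treat the ``scalar'' pieces $Sc_1,Sc_2$: by Lemma~\ref{lem 2 second-order} they are controlled by first-order derivatives of $\del^IL^J h$ with an extra factor $t^{-1}$ (and, for $Sc_2$, by a factor $|\hu^{\alpha\beta}|$). Using \eqref{ineq 4 reined-sup low}--\eqref{ineq 5 reined-sup low} (valid through order $N-6$), each of them is bounded by $CC_1\vep\,t^{-3/2}s^{-1+CC_1\vep}$. Next, $|\del^IL^JF_{\alpha\beta}|$ is controlled by Lemma~\ref{lem 0 refined-sup}, which already gives $C(C_1\vep)^2t^{-2+CC_1\vep}(t-r)^{-1+CC_1\vep}$; this is better than needed. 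The cubic remainder $\mathit{Cub}$ and the Klein-Gordon semilinear piece $QS_\phi$ are handled exactly as in Lemma~\ref{lem 0 biliner}, using the basic sup-norm estimates on $\phi$ (which do not require the refined improvement); they contribute $C(C_1\vep)^2t^{-2}s^{-1+2\delta}$ and $C(C_1\vep)^2t^{-2}s^{-1/2+2\delta}$ respectively, both dominated by $CC_1\vep\,t^{-3/2}s^{-1+CC_1\vep}$ on $\Kcal$.

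The main obstacle is the commutator $[\del^IL^J,h^{\mu\nu}\del_\mu\del_\nu]h_{\alpha\beta}$, since it contains second-order derivatives of $h$ of comparable order and in particular the very term we are trying to bound. Revisiting the proof of Lemma~\ref{lem 1 commtator-sup I} with the refined inputs \eqref{ineq 4 reined-sup low}, \eqref{ineq 7 reined-sup low}, \eqref{ineq 8 reined-sup low}, all non-principal pieces listed in \eqref{eq 1 lem 1 nolinear} are bounded by $C(C_1\vep)^2t^{-2+CC_1\vep}s^{-1+CC_1\vep}$, and what remains is the contribution
$$
CC_1\vep\bigl(t^{-1}+(s/t)^2t^{-1/2}s^{CC_1\vep}\bigr)\sum_{|J'|<|J|}|\del_t\del_t\del^IL^{J'}h_{\alpha\beta}|,
$$
arising from the terms $L^{J_1'}\hu^{00}\del^IL^{J_2'}\del_t\del_t h_{\alpha\beta}$ and $\hu^{00}\del_\gamma\del_{\gamma'}\del^IL^{J'}h_{\alpha\beta}$ (the Hessian being reduced to $\del_t\del_t$ by \eqref{eq second-order-frame-1} together with \eqref{ineq 2 homo}--\eqref{ineq 4 homo}).

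To close the estimate I proceed by induction on $|J|$. When $|J|=0$ the last sum is empty, so collecting the estimates above gives
$|(s/t)^2\del_t\del_t\del^Ih_{\alpha\beta}|\leq CC_1\vep\,t^{-3/2}s^{-1+CC_1\vep}$ and division by $(s/t)^2=s^2/t^2$ yields \eqref{ineq 1 lem 1 refined-energy-low-h} for $|J|=0$. Assuming the claim for all orders $|J'|<|J|$ with $|I|+|J'|\leq N-7$, the right-hand side of the commutator bound becomes
$$
CC_1\vep\bigl(t^{-1}+(s/t)^2t^{-1/2}s^{CC_1\vep}\bigr)\cdot CC_1\vep\,t^{1/2}s^{-3+CC_1\vep}\leq C(C_1\vep)^2\,t^{-1/2}s^{-3+CC_1\vep}\leq CC_1\vep\,(s/t)^2\,t^{1/2}s^{-3+CC_1\vep},
$$
(using $(s/t)^2t^{-1/2}\cdot t^{1/2}=s^2/t$ and $t^{-1}\cdot t^{1/2}=t^{-1/2}\leq (s/t)^2 t^{1/2}s^{-2}$ in $\Kcal$). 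Substituting back into \eqref{eq 1 lem 1 second-order} and dividing by $(s/t)^2$ yields the desired bound for $|J|$, completing the induction.
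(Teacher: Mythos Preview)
Your proposal is correct and follows essentially the same route as the paper: apply \eqref{eq 1 lem 1 second-order}, upgrade the bounds on $Sc_1$ and $\del^IL^JF_{\alpha\beta}$ using the refined low-order sup-norm estimates \eqref{ineq 4 reined-sup low} and Lemma~\ref{lem 0 refined-sup}, keep the commutator in the form of Lemma~\ref{lem 1 commtator-sup I} with its residual sum over $|J'|<|J|$, and close by induction on $|J|$. The paper is slightly more terse (it leaves $Sc_2$, $QS_\phi$, and the commutator ``as in the proof of \eqref{ineq 1 second-sup}'' with exponent $\delta$ rather than $CC_1\vep$), but your more detailed treatment is equivalent.
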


\begin{proof} The proof is essentially a refinement of the proof of \eqref{ineq 1 second-sup}.
We see that when the energy is improved, in view of \eqref{ineq 4 reined-sup low}, $|Sc_1[\del^IL^Jh_{\alpha \beta}]|$ is bounded by $CC_1\vep t^{-3/2}s^{-1+ CC_1\vep}$
( in view of \eqref{ineq 1 lem 2 second-order}).
The term $F_{\alpha \beta}$ is bounded by the above estimate \eqref{ineq 1 lem 0 refined-sup}.
The terms $Sc_2$, ${\sl QS}_\phi$ and the commutator are bounded as in the proof of \eqref{ineq 1 second-sup}. Then we get the following estimate parallel to \eqref{ineq 1 pr ineq 1 second-sup} :
$$
\aligned
|(s/t)^2\del_t\del_t \del^IL^Jh_{\alpha \beta}| & \leq  CC_1\vep t^{-3/2}s^{-1+ CC_1\vep} + C(C_1\vep)^2 t^{-1}s^{-2 + CC_1\vep}
\\
& \quad + CC_1\vep\left(t^{-1}
+ (s/t)^2t^{-1/2}s^{\delta} \right) \sum_{|J'|<|J|} \left|\del_t\del_t\del^IL^{J'} h_{\alpha \beta} \right|.
\endaligned
$$
By induction, the desired result is thus established.
\end{proof}


\section{Low-Order Refined Sup-Norm Estimate for the Metric and Scalar Field}
\label{section-9}

\subsection{Main estimates established in this section}

Our  aim in this section is to establish the estimates:
$|I|+|J| \leq N-7$:
\bel{ineq 1 refined-sup low}
|L^Jh_{\alpha \beta}| \leq CC_1\vep t^{-1}s^{C(C_1\vep)^{1/2}},
\ee
\bel{ineq 2 refined-sup low}
(s/t)^{3\delta-2}|\del^IL^J\phi| + (s/t)^{3\delta-3}|\del^IL^J\delu_{\perp} \phi| \leq CC_1\vep s^{-3/2 + C(C_1\vep)^{1/2}},
\ee
\bel{ineq 3 refined-sup low}
(s/t)^{3\delta-2}|\del^I\phi| + (s/t)^{3\delta-3}|\delu_{\perp} \del^I\phi| \leq CC_1\vep s^{-3/2}.
\ee
Let us first point out some direct consequences of these three estimates, by noting the relations
$\del_t = (s/t)^{-2} \left(\delu_{\perp} - \frac{x^a}{t} \delu_a \right)$ and $\del_a = \delu_a - \frac{x^a}{t} \del_t$
and the sharp decay rate on $\delu_a$ (for $|I|+|J| \leq N-7$)
$$
|\delu_a \del^IL^J\phi(t, x)| \leq CC_1\vep t^{-5/2}s^{1/2 +\delta}.
$$ 
So, \eqref{ineq 1 refined-sup low}, \eqref{ineq 2 refined-sup low} and \eqref{ineq 3 refined-sup low} lead to
\bel{ineq 3.1 refined-sup low a}
\left|\del_\alpha \del^IL^J\phi(t, x) \right| \leq CC_1\vep (s/t)^{1-3\delta}s^{-3/2 + C(C_1\vep)^{1/2}}, \quad |I|+|J| \leq N-7, 
\ee 
\bel{ineq 3.1 refined-sup low b}
\left|\del_\alpha \del^IL^J\phi(t, x) \right| \leq CC_1\vep (s/t)^{2-3\delta}s^{-3/2 + C(C_1\vep)^{1/2}}, \quad |I|+|J| \leq N-8.
\ee 
We also have
\bel{ineq 3.2 refined-sup low a}
\left|\del_\alpha \del^I\phi(t, x) \right| \leq CC_1\vep (s/t)^{1-3\delta}s^{-3/2}, \quad |I| \leq N-7, 
\ee 
\bel{ineq 3.2 refined-sup low b}
\left|\del_\alpha \del^I\phi(t, x) \right| \leq CC_1\vep (s/t)^{2-3\delta}s^{-3/2}, \quad |I| \leq N-8.
\ee
In particular, we see that  
\bel{ineq 3.2 refined-sup low c}
\left|\del_\alpha \phi(t, x) \right| \leq CC_1\vep (s/t)^{2-3\delta}s^{-3/2}.
\ee
We observe that by the commutator estimates:
\bel{ineq 3.2 refined-sup low d}
\aligned
\big|\del^IL^J\del_{\alpha} \phi\big| & \leq  CC_1\vep (s/t)^{1-3\delta}s^{-3/2 + C(C_1\vep)^{1/2}}, \quad&&& |I|+|J| \leq N-7,
\\
\big|\del^IL^J\del_{\alpha} \phi\big| & \leq  CC_1\vep (s/t)^{2-3\delta}s^{-3/2}, \quad&&& |I|+|J| \leq N-8,
\\
\big|\del^IL^J\del_{\alpha} \del_{\beta} \phi\big| & \leq  CC_1\vep(s/t)^{1-3\delta}s^{-3/2 + C(C_1\vep)^{1/2}}
\quad &&& |I|+|J| \leq N-8.
\endaligned
\ee


\subsection{First refinement on the metric components}

We begin the proof of the refined sup-norm estimate by the following bound on $L^J\left(h^{\mu\nu} \del_{\mu} \del_{\nu}h_{\alpha \beta} \right)$.
\begin{lemma} \label{lem 0.5 refined-sup}
For all $|J| \leq N-7$, the following estimate holds:
\bel{ineq 1 lem 0.5 refined-sup}
\left|L^J\left(h^{\mu\nu} \del_{\mu} \del_{\nu}h_{\alpha \beta} \right) \right| \leq C(C_1\vep)^2t^{-2 + CC_1\vep}(t-r)^{-1+ CC_1\vep}. 
\ee
\end{lemma}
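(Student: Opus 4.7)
The strategy is to expand the expression $L^J(h^{\mu\nu}\del_\mu\del_\nu h_{\alpha\beta})$ via the Leibniz rule and decompose each resulting term in the semi-hyperboloidal frame, isolating the single ``bad'' component $\hu^{00}\del_t\del_t$ from all tangential contributions. More precisely, since
\[
h^{\mu\nu}\del_\mu\del_\nu h_{\alpha\beta}
= \hu^{00}\del_t\del_th_{\alpha\beta} + \hu^{0a}\delu_a\del_th_{\alpha\beta} + \hu^{a0}\del_t\delu_ah_{\alpha\beta} + \hu^{ab}\delu_a\delu_bh_{\alpha\beta} + h^{\mu\nu}\del_\mu\Psi_\nu^{\nu'}\,\delu_{\nu'}h_{\alpha\beta},
\]
the commutators $[L^J,\delu_a]$ produce only lower-order good-derivative contributions (by Lemma~\ref{lem com 2}), and the coefficients $\Psi_\nu^{\nu'}$ are smooth and homogeneous of degree $0$, so after differentiation by $L^J$ we obtain a sum of terms of two types: (i) pieces containing at least one factor $\delu_a$ (or an extra $t^{-1}$) acting on $h_{\alpha\beta}$, and (ii) the essential piece $L^{J_1}\hu^{00}\cdot L^{J_2}\del_t\del_th_{\alpha\beta}$ with $J_1+J_2=J$.

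For type (i), I would bound the tangential derivatives by \eqref{ineq 5 reined-sup low} (which gives $|\delu_a\del^IL^Jh|\lesssim C_1\vep\,t^{-3/2}s^{CC_1\vep}$), use the uniform estimate \eqref{ineq 6 reined-sup low} on $L^{J_1}h^{\mu\nu}$, and exploit the factor $\del_\mu\Psi_\nu^{\nu'}\lesssim t^{-1}$. In every case, one then uses $s^2\ge t(t-r)$ in $\Kcal$ to convert a factor $s^{-a}$ into $t^{-a/2}(t-r)^{-a/2}$; the resulting decay is already at least as strong as $t^{-2+CC_1\vep}(t-r)^{-1+CC_1\vep}$ up to constants.

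The main difficulty lies in the bad term $L^{J_1}\hu^{00}\cdot L^{J_2}\del_t\del_th_{\alpha\beta}$. Here I would combine the wave-gauge-based sup-norm estimate \eqref{ineq 8 reined-sup low}, which in $\Kcal$ reads
\[
|L^{J_1}\hu^{00}|\le CC_1\vep\bigl(t^{-1}+(s/t)^2 t^{1/2}s^{CC_1\vep}\bigr),
\]
with the refined second-order estimate \eqref{ineq 1 lem 1 refined-energy-low-h}
\[
|L^{J_2}\del_t\del_th_{\alpha\beta}|\le CC_1\vep\, t^{1/2}s^{-3+CC_1\vep},
\]
valid since $|J_2|\le|J|\le N-7$. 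The decisive point is that the factor $(s/t)^2$ coming from the wave gauge translates, via $(s/t)^2\sim(t-r)/t$, into the extra smallness in the $t-r$ variable that is needed. Multiplying the two bounds and using $s^2\ge t(t-r)$, one verifies that
\[
|L^{J_1}\hu^{00}\cdot L^{J_2}\del_t\del_th_{\alpha\beta}|\lesssim (C_1\vep)^2 t^{-2+CC_1\vep}(t-r)^{-1+CC_1\vep},
\]
which is precisely the claimed bound.

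The expected obstacle is to check in detail that every product generated by the Leibniz expansion (in particular the cases with $|J_1|$ large or small) is at worst controlled by the same expression $t^{-2+CC_1\vep}(t-r)^{-1+CC_1\vep}$; some terms (those with a good tangential derivative) decay faster, and one must verify this dominance uniformly in the multi-index decomposition. Once this is done, summation over the finitely many terms produced by $L^J$ yields the stated estimate.
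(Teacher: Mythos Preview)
Your proposal is correct and follows essentially the same route as the paper: the same semi-hyperboloidal decomposition of $h^{\mu\nu}\del_\mu\del_\nu h_{\alpha\beta}$, the same treatment of the tangential (null) pieces via \eqref{ineq 4 reined-sup low}--\eqref{ineq 6 reined-sup low}, and the same handling of the critical term $L^{J_1}\hu^{00}\cdot L^{J_2}\del_t\del_t h_{\alpha\beta}$ by combining the wave-gauge bound \eqref{ineq 8 reined-sup low} with the refined second-order estimate \eqref{ineq 1 lem 1 refined-energy-low-h}. Note only that the displayed form of \eqref{ineq 8 reined-sup low} you quote carries a typo ($t^{1/2}$ should read $t^{-1/2}$, consistently with \eqref{ineq h00-sup 1.5}); with the correct exponent your multiplication yields exactly $t^{-2+CC_1\vep}(t-r)^{-1+CC_1\vep}$ as claimed.
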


\begin{proof}
We have the following identity 
$$
\aligned
h^{\mu\nu} \del_\mu\del_\nu h_{\alpha \beta}
& =  \hu^{00} \del_t\del_t h_{\alpha \beta}
 + \hu^{a0} \delu_a \del_th_{\alpha \beta} + \hu^{0b} \del_t\delu_bh_{\alpha \beta} + \hu^{ab} \delu_a \delu_b h_{\alpha \beta}
   +  h^{\mu\nu} \del_\mu\left(\Psi^{\nu'}_{\nu} \right) \delu_{\nu'}h_{\alpha \beta}.
\endaligned
$$
We obtain
$$
\aligned
\left|L^J\left(h^{\mu\nu} \del_{\mu} \del_{\nu}h_{\alpha \beta} \right) \right|
& \leq  \left|L^J\left(\hu^{00} \del_t\del_th_{\alpha \beta} \right) \right| +  \left|L^J\left(\hu^{a0} \delu_a \del_th_{\alpha \beta} \right) \right| 
\\
& \quad + \left|L^J\left(\hu^{0b} \del_t\delu_bh_{\alpha \beta} \right) \right|
 + \left|L^J\left(\hu^{ab} \delu_a \delu_b h_{\alpha \beta} \right) \right|
 + \left|L^J\left(h^{\mu\nu} \del_\mu\left(\Psi^{\nu'}_{\nu} \right) \delu_{\nu'}h_{\alpha \beta} \right) \right|
\endaligned
$$
The second, third, and fourth terms are null terms, they contain at least one ``good'' derivative and can be bounded directly by applying the basic sup-norm estimates. We only treat $\hu^{a0} \delu_a \del_th_{\alpha \beta}$, since the third and fourth terms are bounded similarly: 
$$
\left|L^J\left(\hu^{a0} \delu_a \del_th_{\alpha \beta} \right) \right| \leq \sum_{J_1+J_2=J} \left|L^{J_1} \hu^{a0}L^{J_2} \delu_a \del_th_{\alpha \beta} \right|.
$$
We observe that
$$
\left|L^{J_2} \delu_a \del_th_{\alpha \beta} \right| = \left|L^{J_2} \left(t^{-1}L_a \del_th_{\alpha \beta} \right) \right|
\leq \sum_{J_3+J_4=J_2} \left|L^{J_3} \left(t^{-1} \right) L^{J_4}L_a \del_th_{\alpha \beta} \right|.
$$ 
Observe that $L^{J_3} \left(t^{-1} \right)$ is again smooth, homogenous of degree $-1$, which can be bounded by $Ct^{-1}$ in $\Kcal$. So the above sum is bounded by
$$
\sum_{|J'| \leq|J|+1}Ct^{-1} \left|L^{J'} \del_th_{\alpha \beta} \right| \leq CC_1\vep t^{-3/2}s^{-1+ CC_1\vep},
$$
where we have applied \eqref{ineq 4 reined-sup low}. On the other hand, in view of \eqref{ineq 6 reined-sup low}, we have 
$$
\left|L^{J_1} \hu^{a0} \right| \leq CC_1\vep\left(t^{-1} + (s/t)t^{-1/2}s^{CC_1\vep} \right),
$$
since $\hu^{a0}$ is a linear combination of $h_{\alpha \beta}$ with smooth and homogeneous coefficients of degree zero plus high order correction terms. We conclude that
$$
\left|L^J\left(\hu^{a0} \delu_a \del_th_{\alpha \beta} \right) \right| \leq C(C_1\vep)^2t^{-3}s^{CC_1\vep}.
$$
Furthermore, the term $\left|L^J\left(h^{\mu\nu} \del_\mu\left(\Psi^{\nu'}_{\nu} \right) \delu_{\nu'}h_{\alpha \beta} \right) \right|$ is bounded by making use of the additional decay provided by $\left|L^{J'} \del_\mu\left(\Psi^{\nu'}_{\nu} \right) \right| \leq C(J')t^{-1}$, and we omit the details and just state that 
$$
\left|L^J\left(h^{\mu\nu} \del_\mu\left(\Psi^{\nu'}_{\nu} \right) \delu_{\nu'}h_{\alpha \beta} \right) \right| \leq C(C_1\vep)^2 t^{-3}s^{CC_1\vep}.
$$

Now we focus on the most problematic term $L^J\left(\hu^{00} \del_t\del_th_{\alpha \beta} \right)$. We apply here the sharp decay of $\hu^{00}$ provided by \eqref{ineq 8 reined-sup low} and the refined second-order estimate \eqref{ineq 1 lem 1 refined-energy-low-h} :
$$
\aligned
&\left|L^J\left(\hu^{00} \del_t\del_th_{\alpha \beta} \right) \right|
\leq  \sum_{J_1+J_2 = J} \left|L^{J_1} \hu^{00}L^{J_2} \del_t\del_t h_{\alpha \beta} \right|
\\
& \leq  CC_1\vep \left(t^{-1} + (s/t)^2t^{-1/2}s^{CC_1\vep} \right) \,   CC_1\vep t^{1/2}s^{-3+ CC_1\vep}
\\
& \leq  C(C_1\vep)^2 t^{-1/2}s^{-3+ CC_1\vep} + C(C_1\vep)^2 t^{-2}s^{-1+ CC_1\vep}
\\
& \leq  C(C_1\vep)^2 t^{-2 + CC_1\vep}(t-r)^{-1+ CC_1\vep}.
\endaligned
$$ 
\end{proof}

\begin{lemma}[First refinement on $h_{\alpha \beta}$]\label{lem 1 refined-sup}
Assuming that the bootstrap assumption \eqref{ineq energy assumption 1} holds with $C_1\vep$ sufficiently small, one has
\bel{ineq 1 lem 1 refined-sup}
|h_{\alpha \beta}| \leq CC_1\vep  t^{-1}s^{2\delta}.
\ee
\end{lemma}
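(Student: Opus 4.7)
The plan is to apply the sup-norm estimate for the Einstein equations, Proposition \ref{prop 1 sup-norm-W}, to the wave equation
\be
\Box h_{\alpha\beta} = -h^{\mu\nu}\del_\mu\del_\nu h_{\alpha\beta} + F_{\alpha\beta} - 16\pi\del_\alpha\phi\del_\beta\phi - 8\pi c^2\phi^2 g_{\alpha\beta}
\ee
with $I=J=0$. Since the initial data coincide with the Schwarzschild slice outside the unit ball, only the interior wave source and the Klein-Gordon source have to be controlled.

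First I would bound the interior wave source. By Lemma \ref{lem 0 refined-sup} applied with $|I|=|J|=0$, one has $|F_{\alpha\beta}| \leq C(C_1\vep)^2 t^{-2+CC_1\vep}(t-r)^{-1+CC_1\vep}$, and Lemma \ref{lem 0.5 refined-sup} with $|J|=0$ gives the same bound for $|h^{\mu\nu}\del_\mu\del_\nu h_{\alpha\beta}|$. Hence $|S^{W,0,0}_{\text{Int},\alpha\beta}| \leq C(C_1\vep)^2 t^{-2+CC_1\vep}(t-r)^{-1+CC_1\vep}$, which fits the template $C_*\, t^{-2-\nu}(t-r)^{-1+\mu}$ with $C_*=C(C_1\vep)^2$, $\mu=\delta$, $\nu=-CC_1\vep$ (using $(t-r)^{CC_1\vep-\delta}\leq 1$ once $CC_1\vep\leq\delta$ and $t-r\geq 1$).

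Next, for the Klein-Gordon source, the basic sup-norm estimates \eqref{ineq basic-sup 1 generation 1 b}--\eqref{ineq basic-sup 1 generation 1 c} yield $|\phi|^2 \leq (CC_1\vep)^2 t^{-3}s^{1+2\delta}$ and $|\del\phi|^2 \leq (CC_1\vep)^2 t^{-1}s^{-1+2\delta}$. Translating to the $(t,t-r)$ form via $t(t-r)\leq s^2\leq 2t(t-r)$ and $s\geq\sqrt{t-r}$, and trading excess $(t-r)$-growth against $t$ using $t-r\leq t$, one brings both into a bound of the same shape $C(C_1\vep)^2\, t^{-2+CC_1\vep}(t-r)^{-1+\delta}$, so that the combined source satisfies $|S^{W,0,0}_{\text{Int}}|+|S^{KG,0,0}|\leq C(C_1\vep)^2\, t^{-2+CC_1\vep}(t-r)^{-1+\delta}$.

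Finally, I would invoke Proposition \ref{prop 1 sup-norm-W} in the regime $0<\mu\leq 1/2$, $-1/2\leq\nu<0$, with $\mu=\delta$ and $\nu=-CC_1\vep$. The prefactor evaluates to
\be
\frac{CC_*}{\mu|\nu|}=\frac{C(C_1\vep)^2}{\delta\cdot CC_1\vep}=\frac{C}{\delta}\,C_1\vep,
\ee
so that
\be
|h_{\alpha\beta}(t,x)| \leq \frac{C}{\delta}\,C_1\vep\,t^{-1+CC_1\vep}(t-r)^{\delta} + Cm_S t^{-1}.
\ee
Using $t(t-r)\leq s^2$ in $\Kcal$ one has $(t-r)^{\delta}\leq (s^2/t)^{\delta}=s^{2\delta}t^{-\delta}$, hence
\be
t^{-1+CC_1\vep}(t-r)^{\delta}\leq t^{-1+CC_1\vep-\delta}s^{2\delta}\leq t^{-1}s^{2\delta}
\ee
provided $CC_1\vep\leq\delta$, while $m_S\leq\vep\leq C_1\vep$ and $s\geq 1$ absorb the boundary term into $CC_1\vep\,t^{-1}s^{2\delta}$. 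This gives the desired bound.

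The main technical obstacle is the Klein-Gordon source step: neither $|\phi|^2\lesssim t^{-3}s^{1+2\delta}$ nor $|\del\phi|^2\lesssim t^{-1}s^{-1+2\delta}$ naturally factors as $t^{-2-\nu}(t-r)^{-1+\mu}$ with admissible $(\mu,\nu)$. One has to redistribute decay between $t$ and $t-r$ using $1\leq t-r\leq t$ together with the hyperboloidal relation $t(t-r)\leq s^2\leq 2t(t-r)$, and exploit the support restriction $\operatorname{supp}\phi\subset\Kcal$. The delicate balance $CC_1\vep\leq\delta$ used at the end, together with the choice $\mu=\delta$ (rather than $\mu=CC_1\vep$), is what produces the claimed $CC_1\vep$ prefactor instead of an $O(1)$ absolute constant.
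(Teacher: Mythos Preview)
Your overall strategy matches the paper's: apply Proposition~\ref{prop 1 sup-norm-W} to the wave equation for $h_{\alpha\beta}$, bound the interior wave source via Lemmas~\ref{lem 0 refined-sup} and~\ref{lem 0.5 refined-sup}, bound the Klein--Gordon source from basic sup-norm estimates, and conclude. The wave-source step and the final application of Proposition~\ref{prop 1 sup-norm-W} (with $\mu=\delta$, $\nu<0$) are handled correctly.

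The gap is in the Klein--Gordon step. You cite the $N-2$ level estimates \eqref{ineq basic-sup 1 generation 1 b}--\eqref{ineq basic-sup 1 generation 1 c}, giving $|\del\phi|^2\lesssim (C_1\vep)^2 t^{-1}s^{-1+2\delta}$ and $|\phi|^2\lesssim (C_1\vep)^2 t^{-3}s^{1+2\delta}$. Near $r=0$ (where $s\simeq t\simeq t-r$) the first decays like $t^{-2+2\delta}$, whereas your target $t^{-2+CC_1\vep}(t-r)^{-1+\delta}$ decays like $t^{-3+\delta+CC_1\vep}$. Since $-2+2\delta>-3+\delta+CC_1\vep$, no redistribution via $1\leq t-r\leq t$ can close this; the claimed bound is false with these inputs. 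You flag this step as delicate, but the redistribution you propose simply does not work here.

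The fix is to use the sharper low-order estimates available at $|I|=|J|=0\leq N-7$: from \eqref{ineq basic-sup 1 generation 4} (or \eqref{ineq basic-sup 2 generation 4}) one has $|\del\phi|+|\phi|\leq CC_1\vep\, t^{-3/2}s^{\delta}$, hence
\[
|\del\phi|^2+|\phi|^2\lesssim (C_1\vep)^2 t^{-3}s^{2\delta}\lesssim (C_1\vep)^2 t^{-3+\delta}(t-r)^{\delta}\leq (C_1\vep)^2 t^{-5/2+\delta}(t-r)^{-1/2+\delta},
\]
using $(t-r)^{1/2}\leq t^{1/2}$. This is precisely the paper's Klein--Gordon bound, and it is dominated by the wave-source bound $t^{-2+\delta}(t-r)^{-1+\delta}$ (taking $CC_1\vep\leq\delta$). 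With this correction your argument goes through and coincides with the paper's proof. Your observation about the prefactor is a minor variant: the paper simply takes $\nu=-\delta$, obtains a $(C_1\vep)^2/\delta^2$ prefactor, and uses $(C_1\vep)^2\leq C_1\vep$ at the end.
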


\begin{proof} We apply Proposition \ref{prop 1 sup-norm-W} and follow the notation therein. The wave equation satisfied by $h_{\alpha \beta}$
$$
\Boxt_g h_{\alpha \beta} = F_{\alpha \beta} - 16 \pi \phi\del_{\alpha} \phi\del_{\beta} \phi - 8\pi c^2\phi^2 _{\alpha \beta}
$$
leads us to
$$
\Box h_{\alpha \beta} = -h^{\mu\nu} \del_\mu\del_\nu h_{\alpha \beta} + F_{\alpha \beta} - 16 \pi \phi\del_{\alpha} \phi\del_{\beta} \phi - 8\pi c^2\phi^2 g_{\alpha \beta}.
$$
We can apply \eqref{ineq 1 lem 0.5 refined-sup} and \eqref{ineq 1 lem 0 refined-sup}, and we have
\bel{ineq pr2 lem 1 refined-sup}
|S_{I, \alpha \beta}^W| \leq C(C_1\vep)^2t^{-2 + CC_1\vep}(t-r)^{-1+ CC_1\vep}.
\ee
Second, by the basic sup-norm estimates, we have 
$$
|S_{\alpha \beta}^{KG,I,J}| \leq C(C_1\vep)^2t^{-2-1/2 +\delta}(t-r)^{-1/2 +\delta}, \quad |I|+|J| \leq N-6.
$$ 
We can choose $\vep_2>0$ sufficiently small so that $\vep\leq \vep_2$ and $CC_1\vep\leq \delta$, hence 
$$
|S_{I, \alpha \beta}^W[t, x, \del^IL^J]| \leq C(C_1\vep)^2 t^{-2 +\delta}(t-r)^{-1+\delta}
$$ 
and, by Proposition \ref{prop 1 sup-norm-W}, 
$$
\aligned
|h_{\alpha \beta}(t, x)| & \leq  C(C_1\vep)^2(t-r)^{2\delta}t^{-1} + CC_1\vep t^{-1} 
\leq  CC_1\vep (t-r)^{\delta}t^{-1+\delta}. \qedhere
\endaligned
$$
\end{proof}


\subsection{First refinement for the scalar field}

In this section, we apply Proposition \ref{Linfini KG} and consider first the correction terms.

\begin{lemma} \label{lem 2 refined-sup}
Assume the bootstrap assumption \eqref{ineq energy assumption 1}, \eqref{ineq energy assumption 2} and take the notation of Section \ref{subsec KG-sup} and Proposition \ref{Linfini KG}, then for $|I|+|J| \leq N-4$
\begin{subequations} \label{ineq 1 lem 2 refined-sup}
\be
|R_1[\del^IL^J\phi]| \leq CC_1\vep (s/t)^{3/2}s^{-3/2 +\delta},
\ee
\be
|R_2[\del^IL^J\phi]| \leq C(C_1\vep)^2(s/t)^{3/2}s^{-3/2 +3\delta},
\ee
\be
|R_3[\del^IL^J\phi]| \leq C(C_1\vep)^2(s/t)^{3/2}s^{-3/2 +3\delta}.
\ee
\end{subequations}
\end{lemma}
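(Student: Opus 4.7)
The proof proceeds by direct substitution of the basic sup-norm estimates into the three explicit expressions $R_1, R_2, R_3$ from Section \ref{subsec KG-sup}, for $v:=\del^IL^J\phi$ with $|I|+|J|\leq N-4$. I would exploit two structural identities throughout: $\delb_0=(s/t)\del_t$ and $\delb_a=\delu_a=t^{-1}L_a$ (relating the hyperboloidal frame to good derivatives), together with $\hb^{00}=(t/s)^2\hu^{00}$ (and the analogous decompositions of $\hb^{0b},\hb^{ab}$).

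For $R_1$, which contains no metric factor, I would use \eqref{ineq basic-sup 1 generation 1 b}--\eqref{ineq basic-sup 1 generation 1 c} combined with $\delu_a=t^{-1}L_a$ and the commutator calculus of Section \ref{subsec ineq-homo} to obtain, for $|I|+|J|+2\leq N-2$, the bounds
$$
|v|\leq CC_1\vep t^{-3/2}s^{1/2+\delta},\quad |\delb_a v|\leq CC_1\vep t^{-5/2}s^{1/2+\delta},\quad |\delb_a\delb_b v|\leq CC_1\vep t^{-7/2}s^{1/2+\delta}.
$$
Direct substitution into each of the four terms of $R_1$, using $|x^a|\leq t$ and $s\leq t$ on $\Hcal_s$, and observing that $(s/t)^{3/2}s^{-3/2+\delta}=t^{-3/2}s^\delta$, then produces the target bound term-by-term.

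For $R_2$ and $R_3$, the new ingredient is the metric factor $\hb^{00}=(t/s)^2\hu^{00}$. From \eqref{ineq h00-sup 1.5} I get $|\hb^{00}|\leq CC_1\vep t^{-1/2}s^\delta + Cm_S(t/s)^2t^{-1}$. The components $\hb^{0b},\hb^{ab}$ expand algebraically via $\Psib$ into linear combinations of $h^{\alpha\beta}$ with coefficients of homogeneity $\leq 0$, and are controlled via Lemma \ref{lem 1 refined-sup} ($|h_{\alpha\beta}|\leq CC_1\vep t^{-1}s^{2\delta}$); the "null" structure of $\hb^{0b}\delb_0\delb_bv$, $\hb^{ab}\delb_a\delb_b v$ together with the good-derivative bounds on $v$ then yield the target $C(C_1\vep)^2(s/t)^{3/2}s^{-3/2+3\delta}$. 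The mixed term $h^{\alpha\beta}\del_\alpha\Psib^{\beta'}_\beta\delb_{\beta'}v$ enjoys the extra decay $|\del_\alpha\Psib^{\beta'}_\beta|\leq Ct^{-1}$ and is straightforward. The analysis for $R_3$ is identical in structure, with the additional factors $x^a$ absorbed via $|x^a|\leq t$.

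The principal technical obstacle will be the Schwarzschild tail piece $Cm_S(t/s)^2 t^{-1}$ in $\hb^{00}$, which does not decay pointwise as $t\to\infty$ near the light cone. I would control it via the geometric inequality $t\leq s^2$ valid on $\Hcal_s\cap\Kcal$, which follows from $t-r\geq 1$ via $s^2=(t-r)(t+r)\geq t$, rendering $Cm_S(t/s)^2 t^{-1}=Cm_S t/s^2\leq Cm_S\leq CC_1\vep$. After this reduction, multiplication by the basic bounds on $\delb_0 v=(s/t)\del_t v$ and $v/s^{1/2}$ produces the contribution to $R_2$ matching the target, and the remainder of the proof amounts to routine bookkeeping using the basic sup-norm estimates of Section \ref{section-4} together with the first refinements established at the start of Section \ref{section-9}.
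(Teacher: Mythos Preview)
Your treatment of $R_1$ is fine and matches the paper. The gap is in your control of $\hb^{00}$ for $R_2$ and $R_3$. You propose to use the wave-gauge estimate \eqref{ineq h00-sup 1.5}, which after multiplying by $(t/s)^2$ gives
\[
|\hb^{00}|\leq CC_1\vep\, t^{-1/2}s^{\delta}+Cm_S\,\frac{t}{s^2}\,.
\]
Take the worst term in $R_2$, namely $\hb^{00}\cdot 3s^{1/2}\delb_0 v$, with $v=\del^IL^J\phi$ and $|I|+|J|=N-4$. The basic sup-norm \eqref{ineq basic-sup 1 generation 2} only gives $|\delb_0v|=(s/t)|\del_t v|\leq CC_1\vep\,(s/t)t^{-3/2}s^{1/2+\delta}$. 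In the interior $t\simeq s$ your first piece yields
\[
CC_1\vep\, s^{-1/2+\delta}\cdot s^{1/2}\cdot CC_1\vep\, s^{-1+\delta}
= C(C_1\vep)^2 s^{-1+2\delta},
\]
while the target $(s/t)^{3/2}s^{-3/2+3\delta}$ is $\simeq s^{-3/2+3\delta}$ there; you lose a full factor $s^{1/2-\delta}$. Your Schwarzschild piece, once you have thrown away the decay via $t/s^2\leq 1$, gives an even cruder $C(C_1\vep)^2 s^{-1/2+\delta}$. So the ``principal technical obstacle'' you identify is not the tail; it is the $s^{-1/2}$ versus $s^{-1}$ discrepancy between the wave-gauge bound and the needed decay in the interior.

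The paper does not use \eqref{ineq h00-sup 1.5} here at all. It bounds $\hu^{00}$ \emph{exactly as you bound $\hu^{0b},\hu^{ab}$}, via the first refinement Lemma~\ref{lem 1 refined-sup}: since the linear part of $\hu^{00}$ is a combination of $h_{\alpha\beta}$ with homogeneous degree-zero coefficients, $|h_{\alpha\beta}|\leq CC_1\vep\, t^{-1}s^{2\delta}$ yields $|\hb^{00}|\leq CC_1\vep\,(s/t)^{-1}s^{-1+2\delta}$. The $(s/t)^{-1}$ here is exactly cancelled by the $(s/t)$ in $\delb_0v$, giving $C(C_1\vep)^2 t^{-3/2}s^{3\delta}=(s/t)^{3/2}s^{-3/2+3\delta}$ on the nose. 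The same cancellation handles the $\hb^{00}\cdot 2x^as^{1/2}\delb_0\delb_a v$ term in $R_3$. The fix is simply to apply Lemma~\ref{lem 1 refined-sup} to $\hu^{00}$ as well; your detour through the wave-gauge condition is unnecessary and, as written, insufficient.
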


\begin{proof} We apply the basic sup-norm estimate to the corresponding expressions of $R_i$. For $R_1[\del^IL^J\phi]$, we apply \eqref{ineq 4 homo}. 
For the term $R_2[\del^IL^J\phi]$, we observe that
$\left|\hb^{00} \right| = \left|(t/s)^2\hu^{00} \right|$
and we recall that the linear part of $\hu^{00}$ is a linear combination of $h_{\alpha \beta}$ with smooth and homogeneous coefficients of degree zero. We see that, in view of \eqref{ineq 1 lem 1 refined-sup} (after neglecting the higher-order terms which vanish as $|h_{\alpha \beta}|^2$ at zero), 
$$
\big|\hb^{00} \big| \leq CC_1\vep (s/t)^{-1}s^{-1+2\delta}.
$$
Similarly, we have
$$
\big|\hb^{0b} \big| \leq \big|(t/s) \hu^{0b} \big|, 
$$
so that 
$$
\big|\hb^{0b} \big| \leq CC_1\vep s^{-1+2\delta}
$$
and, for $\hb^{ab} = \hu^{ab}$, we have 
$\big|\hb^{ab} \big| \leq CC_1\vep (s/t)^2s^{-1+2\delta}$. 
We also note that $\delb_0 \phi = (s/t) \del_t\phi$. Then, substituting the above bounds leads us to
$$
\left|R_2[\del^IL^J\phi]\right| \leq CC_1\vep (s/t)^{3/2}s^{-3/2 +3\delta}.
$$
A similar derivation allows us to control $\left|R_3[\del^IL^J\phi]\right| \leq CC_1\vep (s/t)^{3/2}s^{-3/2 +3\delta}$. 
\end{proof}

\begin{proposition}[Estimate on $\phi$ and $\del\phi$]\label{prop 1 refined-sup}
Assume the bootstrap assumption \eqref{ineq energy assumption 1} and \eqref{ineq energy assumption 2} hold with $C_1>C_0$ and $C_1\vep$ sufficiently small, then  
\bel{ineq 1 prop 1 refined-sup}
(s/t)^{3\delta - 2}|\phi(t, x)| + (s/t)^{3\delta-3}|\delu_{\perp} \phi(t, x)| \leq CC_1\vep s^{-3/2}.
\ee
\end{proposition}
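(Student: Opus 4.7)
My plan is to apply the sup-norm estimate of Proposition \ref{Linfini KG} directly to $v=\phi$. The Klein--Gordon equation \eqref{eq main PDE b} reads $-\Boxt_g\phi+c^2\phi=0$, so the source $f$ vanishes identically, and the initial data on $\Hcal_2$ is compactly supported with size at most $CC_0\vep$ by the local existence result. Before invoking the estimate, I would verify the hypothesis $\sup|\hb^{00}|\leq 1/3$: since $\hb^{00}=(t/s)^2\hu^{00}$, this follows from the refined bound \eqref{ineq 8 reined-sup low} derived from Lemma \ref{lem 1 refined-sup}, provided $C_1\vep$ is small enough. The task then reduces to bounding the function $V(t,x)$ appearing on the right-hand side of Proposition \ref{Linfini KG}, and I would handle separately the two regimes in its definition.

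In the interior region $0\leq r/t\leq 3/5$, where $s_0=2$ and the ratio $s/t$ is bounded below by a positive constant, Lemma \ref{lem 2 refined-sup} yields $|R_1[\phi]|+|R_2[\phi]|+|R_3[\phi]|\leq CC_1\vep\,\lambda^{-3/2+\delta}$ along the ray, which integrates to $F(\bar s)\leq CC_1\vep$. Lemma \ref{lem h00-sup 4} controls $\int_2^s|h_{t,x}'(\lambda)|\,d\lambda\leq CC_1\vep$, keeping the exponential factor bounded, and the initial data contributes at most $CC_0\vep$. Hence $V\leq CC_1\vep$ in this region, and since $(s/t)^{3\delta-2}$ is bounded, the estimate $s^{3/2}|\phi|+(s/t)^{-1}s^{3/2}|\delu_\perp\phi|\leq CV$ from \eqref{Linfty KG ineq a} translates into the claimed bound.

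In the near-light-cone region $3/5<r/t<1$, where $s_0=\sqrt{(t+r)/(t-r)}$, the key geometric identity is $s_0\cdot(s/t)=(t+r)/t\in[1,2]$. Since $s/t$ is constant along the ray from the origin, applying Lemma \ref{lem 2 refined-sup} at the image point gives
\be
F(\bar s)\leq CC_1\vep\,(s/t)^{3/2}\int_{s_0}^{\bar s}\lambda^{-3/2+\delta}\,d\lambda\leq CC_1\vep\,(s/t)^{3/2}s_0^{-1/2+\delta}\leq CC_1\vep\,(s/t)^{2-\delta},
\ee
where the last step uses $s_0\sim t/s$. Lemma \ref{lem h00-sup 4} again yields $\int_{s_0}^s|h_{t,x}'(\lambda)|\,d\lambda\leq CC_1\vep$, the two contributions in the pointwise bound being individually controlled via the identity $s_0(s/t)\sim 1$. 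Therefore $V\leq CC_1\vep\,(s/t)^{2-\delta}\leq CC_1\vep\,(s/t)^{2-3\delta}$, and rearranging the conclusion of Proposition \ref{Linfini KG} with the factors $(s/t)^{3\delta-2}$ and $(s/t)^{3\delta-3}$ yields the desired estimate.

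The main obstacle is the delicate balance near the light cone, where both the small factor $(s/t)^{3/2}$ from $R_1,R_2,R_3$ and the large weight $s_0^{-1/2+\delta}$ from the short integration window $[s_0,\bar s]$ must fit together with the exponential factor $e^{C\int|h'|}$; this works only because the refined wave-gauge estimate on $\hu^{00}$ in Lemma \ref{lem h00-sup 1} makes the $(s/t)^{-1}\lambda^{-2}$ term in $h_{t,x}'$ integrable to a small quantity, thereby keeping the exponential uniformly bounded.
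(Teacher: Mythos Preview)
Your proposal is correct and follows essentially the same approach as the paper: apply Proposition~\ref{Linfini KG} with $f=0$, bound the $R_i$ via Lemma~\ref{lem 2 refined-sup} and $h'_{t,x}$ via Lemma~\ref{lem h00-sup 4}, and split into the interior and near-cone regions using $s_0\simeq t/s$ in the latter. The only cosmetic discrepancy is that the sum $R_1+R_2+R_3$ carries exponent $3\delta$ rather than $\delta$ (coming from $R_2,R_3$), but since you ultimately weaken to $(s/t)^{2-3\delta}$ anyway, this does not affect the argument.
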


\begin{proof}
We apply Proposition \ref{Linfty KG ineq} and follow the notation there. Recall that Lemma \ref{lem 2 refined-sup} and Lemma \ref{lem h00-sup 4}, we have
$$
|F(\tau)| \leq  \int_{s_0}^{\tau} \big|\sum_{i}R_i[\phi](\lambda t/s, \lambda x/s) \big|d\lambda
\leq  CC_1\vep (s/t)^{3/2} \int_{s_0}^\tau\lambda^{-3/2 +3\delta}d\lambda \leq CC_1\vep (s/t)^{3/2}s_0^{-1/2 +3\delta},
$$
$$
|h'_{t, x}(\lambda)| \leq CC_1\vep (s/t)^{1/2} \lambda^{-3/2 +\delta} + CC_1\vep (t/s) \lambda^{-2}.
$$
We observe that, in the inequality \eqref{Linfty KG ineq a} we need  
$$
\aligned
\int_{\tau}^s|h'_{t, x}(\lambda)d\lambda| & \leq  CC_1\vep(s/t)^{1/2} \int_{s_0}^s \lambda^{-3/2 +\delta} d\lambda
+ CC_1\vep(s/t)^{-1} \int_{s_0}^s\lambda^{-2}d\lambda
\\
& \leq  CC_1\vep (s/t)^{1/2}s_0^{-1/2 +\delta} + CC_1\vep (s/t)^{-1}s_0^{-1}.
\endaligned
$$ 
By \eqref{Linfty KG ineq a}, we have 
$$
|s^{3/2} \phi(t, x)| + \left|(s/t)^{-1}s^{3/2} \delu_{\perp} \phi(t, x) \right| \leq V(t, x)
$$
with
$$
V(t, x) \leq
\left\{
\aligned
&\left(\|v_0\|_{L^\infty} + \|v_1\|_{L^\infty} \right)
\left(1+\int_2^s|h'_{t, x}(\sbar)|e^{C\int_{\sbar}^s|h'_{t, x}(\lambda)|d\lambda} \right)
\\
&\quad\quad\quad+ F(s) + \int_2^sF(\sbar)|h'_{t, x}(\sbar)|e^{C\int_{\sbar}^s|h'_{t, x}(\lambda)|d\lambda}d\sbar, &&0\leq r/t\leq 3/5,
\\
&F(s) + \int_{s_0}^sF(\sbar)|h'_{t, x}(\sbar)|e^{C\int_{\sbar}^s|h'_{t, x}(\lambda)d\lambda|}d\sbar, &&3/5<r/t<1.
\endaligned
\right.
$$

When $0\leq r/t\leq 3/5$, we get $4/5\leq s/t\leq 1$ and $s_0=2$. This leads us to 
$$
V(t, x) \leq CC_1\vep + CC_1\vep\leq CC_1\vep,
$$
where we recall that $C_0\leq C_1$.
When $3/5\leq r/t< 1$, the estimate is more delicate. In this case, we have $s_0 = \sqrt{\frac{t+r}{t-r}} \simeq (s/t)^{-1}$. This leads us to the following bounds:
$$
|F(\tau)| \leq CC_1\vep (s/t)^{2-3\delta}, \quad \int_{\tau}^s|h'_{t, x}(\lambda)d\lambda| \leq CC_1\vep.
$$
Substituting these bounds into \eqref{Linfty KG ineq a}, we obtain
$$
|s^{3/2} \phi(t, x)|+|(s/t)^{-1}s^{3/2} \delu_{\perp} \phi(t, x)| \leq CC_1\vep (s/t)^{2-3\delta}.
$$
\end{proof}


\subsection{Second refinement for the scalar field and the metric}

In this section, we establish the following result.

\begin{lemma}[Second sup-norm refinement]\label{lem 3 refined-sup}
Assume that the bootstrap assumption \eqref{ineq energy assumption 1} and \eqref{ineq energy assumption 2} hold with $C_1>C_0$ and $C_1\vep$ sufficiently small, then for all $0\leq |I| \leq N-7$,
\bel{ineq 1 lem 3 refined-sup}
(s/t)^{3\delta-2}|\del^I\phi| + (s/t)^{3\delta-3}|\delu_{\perp} \del^I\phi| \leq CC_1\vep s^{-3/2},
\ee
\bel{ineq 2 lem 3 refined-sup}
\left|h_{\alpha \beta} \right| \leq CC_1\vep t^{-1}s^{C(C_1\vep)^{1/2}}.
\ee
\end{lemma}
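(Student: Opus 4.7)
The plan is to establish \eqref{ineq 1 lem 3 refined-sup} first by induction on $|I|\leq N-7$ (the case $|I|=0$ being Proposition~\ref{prop 1 refined-sup}), and then use the resulting refined bounds on $\phi$ and $\del\phi$ (cf.~\eqref{ineq 3.2 refined-sup low c}) to derive \eqref{ineq 2 lem 3 refined-sup} via the general sup-norm estimate of Proposition~\ref{prop 1 sup-norm-W}.

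For the inductive step in \eqref{ineq 1 lem 3 refined-sup}, I would commute $\del^I$ with the Klein-Gordon operator and obtain
$$
\Boxt_g\del^I\phi-c^2\del^I\phi=-[\del^I,h^{\mu\nu}\del_\mu\del_\nu]\phi,
$$
then apply Proposition~\ref{Linfini KG} to $v=\del^I\phi$. The correction terms $R_1[\del^I\phi]$, $R_2[\del^I\phi]$, $R_3[\del^I\phi]$ are handled exactly as in Lemma~\ref{lem 2 refined-sup}, using the refined $L^\infty$-estimates on $h$ from Section~\ref{subsec energy-low-app} together with the fact that $R_1$ contains only tangential derivatives $\delu_a=t^{-1}L_a$. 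For the source $f=-[\del^I,h^{\mu\nu}\del_\mu\del_\nu]\phi$, I would expand as $\sum_{|I_1|\geq 1}\binom{I}{I_1}\del^{I_1}h^{\mu\nu}\,\del_\mu\del_\nu\del^{I_2}\phi$, use \eqref{ineq 4 reined-sup low} to bound $|\del^{I_1}h^{\mu\nu}|\lesssim C_1\vep\,t^{-1/2}s^{-1+CC_1\vep}$, and bound the second derivatives $\del_\mu\del_\nu\del^{I_2}\phi$ (with $|I_2|<|I|$) by solving the Klein-Gordon equation for $\del_t\del_t\del^{I_2}\phi$ as in Section~\ref{section-6.5}, so as to trade the only ``bad'' second derivative for $c^2\del^{I_2}\phi$ and terms with at least one tangential derivative or lower order. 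Once $\lambda^{3/2}|f|(\lambda t/s,\lambda x/s)$ is checked to be integrable along the characteristic ray with constant $C(C_1\vep)^2$, Proposition~\ref{Linfini KG} yields the desired bound, splitting the analysis according to $r/t\leq 3/5$ versus $3/5<r/t<1$ exactly as in the proof of Proposition~\ref{prop 1 refined-sup}.

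For \eqref{ineq 2 lem 3 refined-sup}, I would apply Proposition~\ref{prop 1 sup-norm-W} (with $I=J=0$) to the equation
$$
\Box h_{\alpha\beta}=-h^{\mu\nu}\del_\mu\del_\nu h_{\alpha\beta}+F_{\alpha\beta}-16\pi\del_\alpha\phi\del_\beta\phi-8\pi c^2\phi^2 g_{\alpha\beta},
$$
treating the two source pieces separately. For $S^{KG}_{\alpha\beta}$, the refined bound $|\phi|+|\del\phi|\lesssim C_1\vep(s/t)^{2-3\delta}s^{-3/2}$ gives $|S^{KG}_{\alpha\beta}|\lesssim (C_1\vep)^2(s/t)^{4-6\delta}s^{-3}\lesssim (C_1\vep)^2(s/t)^2 s^{-3}\lesssim(C_1\vep)^2 t^{-5/2}(t-r)^{-1/2}$, fitting the first case of Proposition~\ref{prop 1 sup-norm-W} with $\mu=\nu=\tfrac12$ and contributing $C(C_1\vep)^2 t^{-1}$. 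For $S^W_{\mathrm{Int},\alpha\beta}$, Lemmas~\ref{lem 0 refined-sup} and \ref{lem 0.5 refined-sup} give $|S^W_{\mathrm{Int}}|\lesssim (C_1\vep)^2 t^{-2+CC_1\vep}(t-r)^{-1+CC_1\vep}$, and since $t,\,t-r\geq 1$ on $\Kcal$ this may be further bounded by $(C_1\vep)^2 t^{-2+(C_1\vep)^{1/2}}(t-r)^{-1+(C_1\vep)^{1/2}}$ once $C_1\vep$ is small. The key observation is then to apply the second case of Proposition~\ref{prop 1 sup-norm-W} with the carefully balanced choice $\mu=(C_1\vep)^{1/2}$, $\nu=-(C_1\vep)^{1/2}$: the factor $\frac{CC_*}{\mu|\nu|}=\frac{C(C_1\vep)^2}{C_1\vep}=CC_1\vep$ produces the right prefactor, and the resulting $t^{-1+(C_1\vep)^{1/2}}(t-r)^{(C_1\vep)^{1/2}}\leq t^{-1}s^{2(C_1\vep)^{1/2}}$ yields the desired $s$-exponent.

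The main obstacle will be the second-derivative issue in the inductive step for \eqref{ineq 1 lem 3 refined-sup}: the natural induction hypothesis only controls $\del^{I'}\phi$ itself for $|I'|<|I|$, whereas the commutator source forces us to estimate $\del_\mu\del_\nu\del^{I'}\phi$. Resolving this requires a careful use of the Klein-Gordon equation itself as an algebraic relation for $\del_t\del_t\del^{I'}\phi$, in the spirit of \eqref{eq 3 second-order} but applied to $\phi$, together with the uniform ellipticity of the rescaled coefficient $(s/t)^2-\hu^{00}$ from \eqref{ineq h00-sup 1.5}. All remaining estimates (choice of $\mu,\nu$ in step~2, control of the exponentials $e^{C\int|h_{t,x}'|\,d\lambda}$ via Lemma~\ref{lem h00-sup 4}) are essentially bookkeeping once this technical point is secured.
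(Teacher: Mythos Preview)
Your treatment of \eqref{ineq 2 lem 3 refined-sup} is correct and matches the paper's argument exactly, including the key balanced choice $\mu=|\nu|=(C_1\vep)^{1/2}$ in Proposition~\ref{prop 1 sup-norm-W}.

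For \eqref{ineq 1 lem 3 refined-sup}, however, your commutator estimate does not close. Expanding $[\del^I,h^{\mu\nu}\del_\mu\del_\nu]\phi$ in Cartesian components and using the generic bound $|\del^{I_1}h^{\mu\nu}|\lesssim C_1\vep\, t^{-1/2}s^{-1+CC_1\vep}$ from \eqref{ineq 4 reined-sup low} is one full power of $s/t$ too weak. Even after your proposed fix---solving the Klein-Gordon equation algebraically for $\del_t\del_t\del^{I_2}\phi$ and feeding in the induction hypothesis to get $|\del_t\del_t\del^{I_2}\phi|\lesssim C_1\vep(s/t)^{-3\delta}s^{-3/2}$---the product is only $(C_1\vep)^2(s/t)^{1/2-3\delta}s^{-3}$, so that $\int_{s_0}^s\lambda^{3/2}|f|\,d\lambda\lesssim (C_1\vep)^2(s/t)^{1-3\delta}$, whereas closing Proposition~\ref{Linfini KG} requires $(s/t)^{2-3\delta}$. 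The bottleneck you identify is misplaced: it is not the $\phi$-factor but the metric factor, since the Cartesian components $h^{\mu\nu}$ do not individually carry the extra $(s/t)$ decay needed here.

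The paper's resolution (Lemma~\ref{lem 3.25 refined-sup}) is to pass to the semi-hyperboloidal frame \emph{before} commuting, writing $h^{\mu\nu}\del_\mu\del_\nu=\hu^{00}\del_t\del_t+(\text{terms with at least one }\delu_a)+(\text{lower order})$. The $\delu_a$-terms are genuinely good; for the sole dangerous piece $\del^{I_1}\hu^{00}\,\del^{I_2}\del_t\del_t\phi$ (necessarily $|I_1|\geq 1$ since $|J|=0$) one invokes the sharp wave-gauge estimate $|\del^{I_1}\hu^{00}|\lesssim C_1\vep\,t^{-3/2}s^\delta$ of Lemma~\ref{lem h00-sup 1}, which is exactly $(s/t)$ better than the generic $\del h$ bound. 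Together with the basic sup-norm on $\del^{I_2}\del_t\del_t\phi$ this gives $|[\del^I,h^{\mu\nu}\del_\mu\del_\nu]\phi|\lesssim (C_1\vep)^2(s/t)^2 s^{-3+3\delta}$ directly---no induction on $|I|$ is needed at all, since when $|J|=0$ the two residual sums in \eqref{ineq 1 lem 3.25 refined-sup} are empty.
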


We need to control the commutators first. 

\begin{lemma} \label{lem 3.25 refined-sup}
For $|I|+|J| \leq N-7$,
\bel{ineq 1 lem 3.25 refined-sup}
\aligned
\left|[\del^IL^J,h^{\mu\nu} \del_{\mu} \del_{\nu}]\phi\right| & \leq  C(C_1\vep)^2 (s/t)^2s^{-3+3\delta}
 \\
 & \quad +  \sum_{|J_1'|+|J_2'| \leq J\atop |J_2'|<|J|} \left|L^{J_1'} \hu^{00} \del_t\del_t\del^IL^{J_2'} \phi\right|
 + \sum_{ |J'|<|J|} \left|\hu^{00} \del_t\del_t\del^I L^{J'} \phi\right|. 
\endaligned
\ee
\end{lemma}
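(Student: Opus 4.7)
\medskip

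\noindent\textbf{Proof proposal.} The plan is to apply the algebraic decomposition \eqref{eq 2 lem 1 nolinear} for $[\del^IL^J,h^{\mu\nu}\del_\mu\del_\nu]\phi$, which expresses this commutator as a linear combination (with smooth, homogeneous coefficients of degree $\leq 0$) of the five types of terms listed there. I then group them into three classes: (i) the ``good'' quasi-linear/semi-linear terms ${\sl GQQ}_{h\phi}$ and $t^{-1}\del^{I_1}L^{J_1}h_{\mu\nu}\del^{I_2}L^{J_2}\del_\gamma\phi$, together with the quasi-linear term $\del^{I_1}L^{J_1}\hu^{00}\del^{I_2}L^{J_2}\del_t\del_t\phi$ with $|I_1|\geq 1$, which I will absorb into the universal error $C(C_1\vep)^2(s/t)^2 s^{-3+3\delta}$; (ii) the terms $L^{J_1'}\hu^{00}\del^IL^{J_2'}\del_t\del_t\phi$ with $|J_1'|\geq 1$ and $|J_2'|<|J|$, which are kept as the second sum; (iii) the terms $\hu^{00}\del_\alpha\del_\beta\del^IL^{J'}\phi$ with $|J'|<|J|$, where the identity \eqref{eq second-order-frame-1} is applied to convert $\del_\alpha\del_\beta$ into $\del_t\del_t$ plus derivatives involving at least one $\delu$, the latter being again absorbed into the universal error to yield the third sum.

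For class (i), the term ${\sl GQQ}_{h\phi}(N-7,k)$ is handled via the homogeneity bounds \eqref{ineq 2 homo}--\eqref{ineq 4 homo}: each second-order ``good'' derivative $\delu_a\delu_\nu$ applied to $\phi$ carries a factor $t^{-2}$ times Lorentz boosts, whose sup-norm we control via \eqref{ineq 3.2 refined-sup low d}; combined with the pointwise bound $|\del^{I_1}L^{J_1}h|\leq CC_1\vep t^{-1}s^{CC_1\vep}$ from \eqref{ineq 6 reined-sup low} (using $|I_1|+|J_1|\leq N-7\leq N-6$), we obtain a bound $\lesssim (C_1\vep)^2 t^{-3}s^{-3/2+CC_1\vep}$, which is well below $(C_1\vep)^2(s/t)^2 s^{-3+3\delta}$ in $\Kcal$. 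For the term $t^{-1}\del^{I_1}L^{J_1}h_{\mu\nu}\del^{I_2}L^{J_2}\del_\gamma\phi$, the explicit factor $t^{-1}$ together with \eqref{ineq 6 reined-sup low} and \eqref{ineq 3.2 refined-sup low d} gives the same kind of decay. For the third subtype, $|I_1|\geq 1$ means we can use the refined pointwise estimate \eqref{ineq 7 reined-sup low} on $\del \del^{I_1-1}L^{J_1}\hu^{00}$, which gives $t^{-3/2}s^{CC_1\vep}$; combined with $|\del_t\del_t\del^{I_2}L^{J_2}\phi|\lesssim C_1\vep(s/t)^{-1+3\delta}s^{-3/2+C(C_1\vep)^{1/2}}$ (obtained from \eqref{ineq 3.2 refined-sup low d} via the commutator identity \eqref{ineq com 2.4} as $|I_2|+|J_2|\leq N-8$), the result falls below the universal error.

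For class (iii), using \eqref{eq second-order-frame-1}, $\del_\alpha\del_\beta$ acting on $\del^IL^{J'}\phi$ decomposes as $\del_t\del_t$ multiplied by a smooth homogeneous coefficient of degree zero, plus terms each containing at least one semi-hyperboloidal derivative $\delu_a$ or a homogeneous coefficient of degree $-1$. For these latter terms, the combination $\hu^{00}\cdot(\text{good derivative})$ is bounded using \eqref{ineq 8 reined-sup low} together with the sharp decay on $\delu_a\del_t\del^IL^{J'}\phi$ and $\delu_a\delu_b\del^IL^{J'}\phi$ deduced from \eqref{ineq 2 homo}--\eqref{ineq 4 homo} and \eqref{ineq 3.2 refined-sup low d}; again these contributions fit into $C(C_1\vep)^2(s/t)^2 s^{-3+3\delta}$. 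What remains is exactly the sum $\sum_{|J'|<|J|}|\hu^{00}\del_t\del_t\del^IL^{J'}\phi|$ (absorbing the degree-zero coefficient into the constant $C$).

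The main obstacle I expect is bookkeeping in class (i): verifying that across all splittings $I_1+I_2=I$, $J_1+J_2=J$, at least one factor has sufficiently low order ($\leq N-8$) so that the refined pointwise estimates \eqref{ineq 6 reined-sup low}, \eqref{ineq 7 reined-sup low}, and \eqref{ineq 3.2 refined-sup low d} apply; this requires $N-7 \leq 2(N-8)$, i.e.\ $N\geq 9$, which is consistent with $N\geq 14$. A secondary but purely mechanical difficulty is organizing the class (iii) expansion so that the resulting ``good'' pieces have the right coefficient structure (smooth, homogeneous of degree $\leq 0$) to justify combining them with the universal error bound, rather than producing uncontrolled commutators with the coefficients $\Psi_\alpha^{\alpha'}$; this is handled as in the proof of Lemma \ref{lem 1 commtator-sup I}.
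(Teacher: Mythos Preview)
Your overall strategy matches the paper's exactly: apply the decomposition \eqref{eq 2 lem 1 nolinear}, absorb the ${\sl GQQ}_{h\phi}$ terms, the $t^{-1}$-weighted terms, and the $|I_1|\geq 1$ piece of $\del^{I_1}L^{J_1}\hu^{00}\del^{I_2}L^{J_2}\del_t\del_t\phi$ into the universal error, then use \eqref{eq second-order-frame-1} on the $\hu^{00}\del_\gamma\del_{\gamma'}\del^IL^{J'}\phi$ term to extract $\del_t\del_t$ and keep the remainder explicit.

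There is, however, a genuine circularity in your choice of pointwise inputs. You repeatedly invoke \eqref{ineq 3.2 refined-sup low d} to control $\del^{I_2}L^{J_2}\del_t\del_t\phi$, $\delu_a\del_t\del^IL^{J'}\phi$, and related $\phi$-factors. Those inequalities are listed at the opening of Section~\ref{section-9} as \emph{consequences} of \eqref{ineq 1 refined-sup low}--\eqref{ineq 3 refined-sup low}, and the latter are established only through Proposition~\ref{prop 2 refined-sup}, whose proof \emph{uses} the present lemma (and Lemma~\ref{lem 3 refined-sup}, which also calls the present lemma with $|J|=0$). So at the moment you are proving Lemma~\ref{lem 3.25 refined-sup}, the estimates \eqref{ineq 3.2 refined-sup low d} are not yet available.

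The paper avoids this by using only the \emph{basic} bootstrap sup-norm estimates for $\phi$ (e.g.\ \eqref{ineq basic-sup 2 generation 3 a}, \eqref{ineq basic-sup 2 generation 4}) together with \eqref{ineq basic-sup-h} for $h$ and \eqref{ineq h00-sup 1} for $\del\hu^{00}$. For instance, for the $|I_1|\geq 1$ term one has $|\del^{I_1}L^{J_1}\hu^{00}|\leq CC_1\vep\, t^{-3/2}s^{\delta}$ and $|\del^{I_2}L^{J_2}\del_t\del_t\phi|\leq CC_1\vep\, t^{-1/2}s^{-1+\delta}$, giving $(C_1\vep)^2(s/t)^2 s^{-3+2\delta}$ directly; and for ${\sl GQQ}_{h\phi}$ the combination of \eqref{ineq basic-sup-h} with the basic $\phi$-bounds already yields $(C_1\vep)^2(s/t)^3 s^{-3+2\delta}$. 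Your use of the Section~\ref{subsec energy-low-app} refinements \eqref{ineq 6 reined-sup low}--\eqref{ineq 8 reined-sup low} for the metric factors is legitimate (those come from Proposition~\ref{prop 1 refined-energy-low-h} and are already in hand), but it is unnecessary: the basic estimates suffice and keep the argument non-circular. Replace every appeal to \eqref{ineq 3.2 refined-sup low d} by the corresponding basic sup-norm bound and your proof goes through as written.
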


\begin{proof} We need to estimate all the terms listed in \eqref{eq 2 lem 1 nolinear}.
As far as the terms ${\sl GQQ}_{h\phi}$ are concerned, we will only treat in detail the term
$
\del^{I_1}L^{J_1}h_{\alpha'\beta'} \del^{I_2}L^{J_2} \delu_a \del_{\mu} \phi.
$
For $|I|+|J| \leq N-7$, we have 
$$
\aligned
& \left|\del^{I_1}L^{J_1}h_{\alpha'\beta'} \del^{I_2}L^{J_2} \delu_a \del_{\mu} \phi\right|
\leq  \left|\del^{I_1}L^{J_1}h_{\alpha'\beta'} \right| \,  \left|\del^{I_2}L^{J_2} \delu_a \del_{\mu} \phi\right|
\\
& \leq CC_1\vep \left((s/t)t^{-1/2}s^{\delta}+t^{-1} \right) \left|\del^{I_2}L^{J_2} \left(t^{-1}L_a \del_{\mu} \phi\right) \right|
\\
& \leq  CC_1\vep t^{-1} \left((s/t)t^{-1/2}s^{\delta}+t^{-1} \right)
\sum_{|I_2'| \leq |I_2|\atop |J_2'| \leq|J_2|} \left|\del^{I_2'}L^{J_2'}L_a \del_{\mu} \phi\right|
\\
& \leq  C(C_1\vep)^2t^{-3}s^{2\delta} = C(C_1\vep)^2(s/t)^3s^{-3+2\delta}.
\endaligned
$$
Other terms of ${\sl GQQ}_{h\phi}$ are bounded similarly, and we omit the details.

For the term $t^{-1} \del^{I_3}L^{J_3}h_{\alpha'\beta'} \del^{I_4}L^{J_4} \del_{\gamma} \phi$, due to its additional $t^{-1}$ decay, the basic sup-norm estimates are sufficient to get the following bound:
$$
\left|t^{-1} \del^{I_3}L^{J_3}h_{\alpha'\beta'} \del^{I_4}L^{J_4} \del_{\gamma} \phi\right| \leq C(C_1\vep)^2t^{-2}s^{-2 +\delta}= C(C_1\vep)^2(s/t)^2s^{-4+2\delta} \leq C(C_1\vep)^2 (s/t)^3s^{-3+2\delta}.
$$
For the term $\del^{I_1}L^{J_1} \hu^{00} \del^{I_2}L^{J_2} \del_t\del_t\phi$, we observe that $|I_1|\geq 1$, so it can be bounded in view of \eqref{ineq h00-sup 1} :
$$
\left|\del^{I_1}L^{J_1} \hu^{00} \del^{I_2}L^{J_2} \del_t\del_t\phi\right| \leq C(C_1\vep)^2 t^{-3/2}s^{\delta} \,   t^{-1/2}s^{-1+\delta} \leq C(C_1\vep)^2 (s/t)^2s^{-3+2\delta}.
$$

For the remaining terms in \eqref{eq 2 lem 1 nolinear} we observe that the term $\del^{I}L^{J_2'} \del_t\del_t\phi$ and $\del_\gamma \del_{\gamma'} \del^IL^{J'} \phi$ are bounded by $\del_t\del_t\del^{I'}L^{J'} \phi$ plus some corrections:
$
\left|\del^{I}L^{J_2'} \del_t\del_t\phi\right|
\leq C\sum_{\gamma, \gamma'\atop |J_2''| \leq|J_2'|} \left|\del_\gamma \del_{\gamma'} \del^IL^{J_2''} \phi\right|. 
$
Then in view of \eqref{eq second-order-frame-1} and the argument presented below it (but now $\phi$ plays the role of $h_{\alpha \beta}$ in \eqref{eq second-order-frame-1}), we have
$$
\left|\del^{I}L^{J_2'} \del_t\del_t\phi\right| \leq CC_1\vep t^{-5/2}s^{\delta}
+  C\sum_{|J_2''| \leq|J_2'|} \left|\del_t\del_t\del^IL^{J_2''} \phi\right|. 
$$
So the last two terms in \eqref{eq 2 lem 1 nolinear} is bounded by
$$
C(C_1\vep)^2t^{-3}s^{2\delta} + C\sum_{|J_1'|+|J_2'| \leq |I|\atop|J'|<|J| }|L^{J_1'} \hu^{00} \del_t\del_t\del^IL^{J_2'} \phi|
+ C\sum_{|J'|<|J| }|\hu^{00} \del_t\del_t\del^IL^{J'} \phi|.
$$
This yields us the conclusion.  
On the other hand, when $|J|=0$, the last two terms do not exist.
\end{proof}

\begin{proof}[Proof of Lemma \ref{lem 3 refined-sup}]
The proof of \eqref{ineq 1 lem 3 refined-sup} is similar to that of Proposition \ref{prop 1 refined-sup}. The only difference is that we need to bound the commutator $[\del^I,h^{\mu\nu} \del_\mu\del_\nu]\phi$ (which, with the notation in Proposition \ref{Linfini KG}, plays the role of $f$ in the definition of $F$). We apply \eqref{ineq 1 lem 3.25 refined-sup} with $|J|=0$ and, in this case, 
$
\left|[\del^I,h^{\mu\nu} \del_{\mu} \del_{\nu}]\phi\right| \leq C(C_1\vep)^2 (s/t)^2s^{-3+3\delta}.
$

Then (following the notation in Proposition \ref{Linfini KG}) in view of \eqref{lem 1 refined-sup} and by an argument similar to the one in the proof of Proposition \ref{prop 1 refined-sup}, we have 
$$
\aligned
|F(\tau)| &\leq CC_1\vep (s/t)^{3/2} s_0^{-1/2 +3\delta} + C(C_1\vep)^2(s/t)^2s_0^{-1/2 +3\delta}, 
\\
|h'_{t, x}(\lambda)| &\leq CC_1\vep (s/t)^{1/2} \lambda^{-3/2 +\delta} + CC_1\vep (t/s) \lambda^{-2}, 
\\
\int_{\tau}^s|h'_{t, x}(\lambda)d\lambda| 
& \leq CC_1\vep (s/t)^{1/2}s_0^{-1/2 +\delta} + CC_1\vep (s/t)^{-1}s_0^{-1}.
\endaligned
$$
In view of \eqref{Linfty KG ineq a}, the desired results are thus proven.

The proof of \eqref{ineq 2 lem 3 refined-sup} is an application of \eqref{ineq 1 lem 3 refined-sup}. We rely on the proof of Lemma \ref{lem 1 refined-sup} and we have that \eqref{ineq pr2 lem 1 refined-sup} still holds. We furthermore observe that in view of \eqref{ineq 1 lem 3 refined-sup},
$$
|S_{\alpha \beta}^{KG,I,J}| \leq C(C_1\vep)^2t^{-3}, \quad |I|+|J| \leq N-7.
$$
Furthermore, since $C_1\vep\leq 1$, we take, in view of \eqref{ineq pr2 lem 1 refined-sup}
$$
|S_{I, \alpha \beta}^W| \leq C(C_1\vep)^2t^{-2 + CC_1\vep}(t-r)^{-1+ CC_1\vep} \leq C(C_1\vep)^2t^{-2 + C(C_1\vep)^{1/2}}(t-r)^{-1+ C(C_1\vep)^{1/2}}. 
$$
In view of Proposition \ref{prop 1 sup-norm-W}, we arrive at 
$$
|h_{\alpha \beta}| \leq CC_1\vep t^{-1} + \frac{C(C_1\vep)^2}{CC_1\vep}t^{-1+ C(C_1\vep)^{1/2}}{(t-r)^{C(C_1\vep)^{1/2}}}
\leq C(C_1\vep)t^{-1}s^{C(C_1\vep)^{1/2}}. \qedhere
$$
\end{proof}


\subsection{A secondary bootstrap argument}

In this section, we improve the $L^\infty$ bounds of $\del^IL^J \phi$ and $\delu_{\perp} \del^IL^J\phi$ for $|I|+|J| \leq N-7$. 

\begin{proposition} \label{prop 2 refined-sup}
There exists a pair of positive constants $(C_1, \vep_2)$ with $C_1>C_0$ such that if \eqref{ineq energy assumption 1} and \eqref{ineq energy assumption 2} hold with $C_1$ and $0\leq \vep\leq \vep_2$, then for all $|I|+|J| \leq N-7$,  
\bel{ineq 1 prop 2 refined-sup}
(s/t)^{3\delta-2}|\del^IL^J \phi| + (s/t)^{3\delta - 3}|\delu_{\perp} \del^IL^J \phi| \leq CC_1\vep s^{-3/2 + C(C_1\vep)^{1/2}}, 
\ee
\bel{ineq 2 prop 2 refined-sup}
|L^Jh_{\alpha \beta}| \leq CC_1\vep t^{-1}s^{C(C_1\vep)^{1/2}}.
\ee
\end{proposition}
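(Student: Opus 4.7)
The plan is to extend Lemma \ref{lem 3 refined-sup} from $|J|=0$ to all $|J| \leq N-7$ by induction on $|J|$, establishing both \eqref{ineq 1 prop 2 refined-sup} and \eqref{ineq 2 prop 2 refined-sup} simultaneously. The base case $|J|=0$ is Lemma \ref{lem 3 refined-sup}, noting that $L^J h_{\alpha\beta} = h_{\alpha\beta}$ when $|J|=0$. Fix $|J| \geq 1$ and assume both bounds hold for every $|J'| < |J|$; I treat the two estimates in turn.

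For \eqref{ineq 1 prop 2 refined-sup}, I would apply $\del^I L^J$ to the Klein-Gordon equation $\Boxt_g\phi - c^2\phi = 0$ to get
$$
\Boxt_g(\del^I L^J \phi) - c^2\, \del^I L^J\phi = -[\del^I L^J, h^{\mu\nu}\del_\mu\del_\nu]\phi =: f,
$$
and then invoke Proposition \ref{Linfini KG} with $v = \del^I L^J\phi$. The commutator estimate \eqref{ineq 1 lem 3.25 refined-sup} decomposes
$$
|f| \leq C(C_1\vep)^2 (s/t)^2 s^{-3+3\delta} + \sum_{|J'_2|<|J|} |L^{J'_1}\hu^{00}\, \del_t\del_t \del^I L^{J'_2}\phi| + \sum_{|J'|<|J|} |\hu^{00}\, \del_t\del_t \del^I L^{J'}\phi|.
$$
For the residual terms I would combine the refined bound \eqref{ineq 8 reined-sup low}, namely $|L^{J'_1}\hu^{00}| \leq CC_1\vep(t^{-1} + (s/t)^2 t^{1/2} s^{CC_1\vep})$, with a second-order estimate for $\del_t\del_t \del^I L^{J'}\phi$ obtained by rearranging the differentiated Klein-Gordon equation in the semi-hyperboloidal frame exactly as in \eqref{eq 1 second-order},
$$
\big((s/t)^2 - \hu^{00}\big) \del_t\del_t \del^I L^{J'}\phi = Sc_1[\del^I L^{J'}\phi] + Sc_2[\del^I L^{J'}\phi] + c^2 \del^I L^{J'}\phi + [\del^I L^{J'}, h^{\mu\nu}\del_\mu\del_\nu]\phi,
$$
and using that $|\hu^{00}| \ll (s/t)^2$ by \eqref{ineq 7 reined-sup low} so that the prefactor remains comparable to $(s/t)^2$. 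Each term on the right is then bounded by the inductive hypothesis together with \eqref{ineq 1 Sc-L2}-type pointwise analogues. Feeding the resulting bound on $|f|$ into the definition of $F(\tau)$ in Section \ref{subsec KG-sup}, and combining with the estimate for $|h'_{t,x}(\lambda)|$ from Lemma \ref{lem h00-sup 4}, the sup-norm estimate \eqref{Linfty KG ineq a} yields the target bound, with the loss $s^{C(C_1\vep)^{1/2}}$ emerging from the exponential factor $e^{C\int |h'_{t,x}|d\lambda}$ together with the induction.

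For \eqref{ineq 2 prop 2 refined-sup}, I would apply $L^J$ to the Einstein equation \eqref{eq main PDE a} and invoke Proposition \ref{prop 1 sup-norm-W}, following the scheme of Lemma \ref{lem 1 refined-sup}. The refined nonlinearity estimate Lemma \ref{lem 0 refined-sup}, together with the commutator control along the lines of Lemma \ref{lem 0.5 refined-sup}, gives
$$
|L^J F_{\alpha\beta}| + |[L^J, h^{\mu\nu}\del_\mu\del_\nu]h_{\alpha\beta}| \leq C(C_1\vep)^2 t^{-2+CC_1\vep}(t-r)^{-1+CC_1\vep},
$$
where the quasi-null terms $P_{\alpha\beta}$ are handled by the tensorial decomposition of Lemma \ref{lem P 2} combined with the refined characteristic estimate \eqref{ineq Ch-sup 2}. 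For the Klein-Gordon source, the just-established bound \eqref{ineq 1 prop 2 refined-sup} yields $|L^J(\del_\alpha\phi\,\del_\beta\phi)| + c^2 |L^J(\phi^2 g_{\alpha\beta})| \leq C(C_1\vep)^2 t^{-3} s^{C(C_1\vep)^{1/2}}$, which is better than needed. Applying Proposition \ref{prop 1 sup-norm-W} with $\mu, \nu$ of order $C(C_1\vep)^{1/2}$ produces \eqref{ineq 2 prop 2 refined-sup}.

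The hard part will be closing the commutator loop in the Klein-Gordon step: the residual terms $\hu^{00}\del_t\del_t \del^I L^{J'}\phi$ in Lemma \ref{lem 3.25 refined-sup} involve \emph{second-order} derivatives of $\phi$ at order $|J'|<|J|$, whereas the inductive hypothesis directly yields only zero- and first-order sup-norm bounds. The equation-based reduction outlined above resolves this, but one must check that the factor $((s/t)^2 - \hu^{00})^{-1} \sim (t/s)^2$ does not destroy the weight $(s/t)^{2-3\delta}$ in \eqref{ineq 1 prop 2 refined-sup}; tracking this dependence carefully is what forces the mild loss $s^{C(C_1\vep)^{1/2}}$ compared with the sharper Lemma \ref{lem 3 refined-sup} bound, and is the mechanism by which the induction closes for every $|J| \leq N-7$ provided $\vep \leq \vep_2$ is chosen small enough.
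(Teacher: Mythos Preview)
Your overall induction scheme and the treatment of the metric part via Proposition~\ref{prop 1 sup-norm-W} are close to the paper's. The gap is in how you control the second-order factor $\del_t\del_t\del^IL^{J'}\phi$ appearing in the commutator from Lemma~\ref{lem 3.25 refined-sup}. You propose the equation-based reduction
\[
\big((s/t)^2-\hu^{00}\big)\,\del_t\del_t\del^IL^{J'}\phi
= Sc_1[\del^IL^{J'}\phi]+Sc_2[\del^IL^{J'}\phi]+c^2\,\del^IL^{J'}\phi+[\del^IL^{J'},h^{\mu\nu}\del_\mu\del_\nu]\phi,
\]
but the mass term $c^2\,\del^IL^{J'}\phi$ is fatal here: by the inductive hypothesis $|\del^IL^{J'}\phi|\lesssim (s/t)^{2-3\delta}s^{-3/2+\cdots}$, so after multiplying by $(t/s)^2$ you only get $|\del_t\del_t\del^IL^{J'}\phi|\lesssim (s/t)^{-3\delta}s^{-3/2+\cdots}$. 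Combining with $|L^{J_1'}\hu^{00}|\lesssim t^{-1}s^{\cdots}$ yields a commutator bound of size $(s/t)^{1-3\delta}s^{-5/2+\cdots}$, exactly one power of $(s/t)$ worse than the $(s/t)^{2-3\delta}s^{-5/2+\cdots}$ the paper obtains. Feeding this into $F(\tau)$ in Proposition~\ref{Linfini KG} and unwinding the weights, you end up trying to prove $(s/t)^{3\delta-2}s^{3/2}|\del^IL^J\phi|\lesssim (s/t)^{-1}\cdot(\text{small})$, and the unbounded factor $(t/s)$ near the light cone prevents the induction from closing. The growth $s^{C(C_1\vep)^{1/2}}$ you mention is a loss in $s$, not in $t/s$, and cannot absorb this.

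The paper avoids this by \emph{not} using the equation for $\del_t\del_t\phi$. Instead it absorbs the extra $\del_t$'s into the partial-derivative multi-index: since $[\del_t,\del^I]=0$, one has $\del_t\del_t\del^IL^{J'}\phi=\del_t\,\del^{I'}L^{J'}\phi$ with $|I'|=|I|+1$, and because $|J'|<|J|$ forces $|I|+|J'|\le N-8$, the inductive bound on $\delu_\perp\del^{I'}L^{J'}\phi$ (followed by $\del_t=(t/s)^2(\delu_\perp-(x^a/t)\delu_a)$) gives $|\del_t\del_t\del^IL^{J'}\phi|\lesssim (s/t)^{1-3\delta}s^{-3/2+\cdots}$, one full power of $(s/t)$ better than your route. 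The paper also runs a \emph{secondary bootstrap} at each induction step (constants $K_m$, $C_m=2C_{m-1}$, interval $[2,s^{**}]$) to handle the top-order factor $L^J\hu^{00}$ that appears when $|J_1'|=|J|$; your use of \eqref{ineq 8 reined-sup low} instead is reasonable for that particular point, but it does not rescue the $(s/t)$ loss from the mass term.
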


\begin{proof} We proceed by induction, by relying on a secondary bootstrap argument. Recall that the bootstrap assumptions \eqref{ineq energy assumption 1} and \eqref{ineq energy assumption 2} hold on $[2,s^*]$, and we suppose that there exist constants $K_{m-1}, C_{m-1}>0$ and $\vep'_{m-1}>0$ depending only on the structure of the main system such that
\bel{ineq pr0 prop 2 refined-sup}
(s/t)^{3\delta-2}|\del^IL^J \phi| + (s/t)^{3\delta - 3}|\delu_{\perp} \del^IL^J \phi| \leq K_{m-1}C_1\vep s^{-3/2 + C_{m-1}(C_1\vep)^{1/2}}, 
\ee
\bel{ineq pr0.5 prop 2 refined-sup}
|L^Jh_{\alpha \beta}(t, x)| \leq K_{m-1}C_1\vep t^{-1}s^{C_{m-1}(C_1\vep)^{1/2}}
\ee
holds on $[2,s^*]$ for all $0\leq \vep \leq \vep'_{m-1}$ and $|J| \leq m-1\leq N-7$ and $|I|+|J| \leq N-7$. 
This is true when $|J|=0$, guaranteed in view of \eqref{ineq 1 lem 3 refined-sup} and \eqref{ineq 2 lem 3 refined-sup} (since there the constant $C$ depends only on $N$ and the structure of the main system). We want prove that there exist constants $K_m,C_m, \vep'_m$ depending only on the structure of the main system such that
\bel{ineq pr1 prop 2 refined-sup}
(s/t)^{3\delta-2}|\del^IL^J \phi| + (s/t)^{3\delta - 3}|\delu_{\perp} \del^IL^J \phi| \leq K_mC_1\vep s^{-3/2 + C_m(C_1\vep)^{1/2}}, 
\ee
\bel{ineq pr2 prop 2 refined-sup}
|L^Jh_{\alpha \beta}(t, x)| \leq K_mC_1\vep t^{-1}s^{C_m(C_1\vep)^{1/2}}
\ee
hold for $0\leq\vep\leq \vep'_m$ and all $|J| \leq N-7$.

We observe that on the initial slice $\Hcal_2\cap \Kcal$, there exits a positive constant $K_{0,m}$ such that
$$
(s/t)^{3\delta-2}|\del^IL^J \phi| + (s/t)^{3\delta - 3}|\delu_{\perp} \del^IL^J \phi| \leq K_{0,m}C_0 \, \vep\leq K_{0,m}C_1\vep,
$$
We also denote by $K_{0,m}$ a positive constant such that
$
\sup_{t=2,|x| \leq 1} \{ts^{-C_m(C_1\vep)^{1/2}}|L^Jh_{\alpha \beta}(t, x)|\} \leq K_{0,m}C_0 \, \vep\leq K_{0,m}C_1\vep,
$
since we have chosen $C_1\geq C_0$. Here we observe that on $\{t=2\} \cap \Kcal$, $\sqrt{3} \leq s\leq 2$, so when $C_{m}>0$, the constant $K_{0,m}$ can be chosen independently of $C_m$.

So, first, we choose $K_m>K_{0,m}$ and set 
$s^{**} : = \sup_{s\in[2,s^*]} \big\{\eqref{ineq pr1 prop 2 refined-sup} \text{ and } \eqref{ineq pr2 prop 2 refined-sup} \text{ holds in } \Kcal_{[2,s^{**}]} \big\}.$
By continuity ($K_m>K_{0,m}$) we obtain $s^{**}>2$. We prove that if we choose $\vep'_m$ sufficiently small, then for all $\vep\leq \vep_m'$, $s^{**}=s^*$. This is done as follows. 

We take $K_m\geq K_{m-1}$, $C_m= 2C_{m-1}$ and see first that under the induction assumptions \eqref{ineq pr0 prop 2 refined-sup}, \eqref{ineq pr0.5 prop 2 refined-sup} and the bootstrap assumptions \eqref{ineq pr1 prop 2 refined-sup} and \eqref{ineq pr2 prop 2 refined-sup}, \eqref{ineq 1 lem 3.25 refined-sup} becomes (in $\Kcal_{[2,s^{**}]}$)
$$
\left|[\del^IL^J,h^{\mu\nu} \del_{\mu} \del_{\nu}]\phi\right| \leq C(C_1\vep)^2(s/t)^2 s^{-3+3\delta}
 + CK_m^2(C_1\vep)^2(s/t)^{2-3\delta}s^{-5/2 + C_m(C_1\vep)^{1/2}}.
$$
We observe that, in the right-hand side of \eqref{ineq 1 lem 3.25 refined-sup}, the last term is bounded directly by applying \eqref{ineq 2 lem 3 refined-sup} and \eqref{ineq pr2 prop 2 refined-sup}. The second term is more delicate. We distinguish between two different cases. When $|J_2'|=0$, we apply the bootstrap assumptions \eqref{ineq pr2 prop 2 refined-sup} and \eqref{ineq 1 lem 3 refined-sup}. When $0<|J_2'|<|J|$, we have $|J_1'| \leq m-1$, so we apply  \eqref{ineq pr0 prop 2 refined-sup} and \eqref{ineq pr0.5 prop 2 refined-sup} and observe that we have chosen $C_m= 2C_{m-1}$.

We then recall Lemma \ref{lem 2 refined-sup} and, by Proposition \ref{Linfini KG} (following the notation therein), we have in both cases $0\leq r/t\leq 3/5$ and $3/5<r/t<1$,
$$
\aligned
|F(s)| & \leq  CC_1\vep(s/t)^{3/2} \int_{s_0}^s \tau^{-3/2 +3\delta}d\tau + CK_m^2(C_1\vep)^2\int_{s_0}^s\tau^{-1+ C_m(C_1\vep)^{1/2}}d\tau
\\
& \leq  CC_1\vep (s/t)^{3/2}s_0^{-1/2 + 3\delta} + CC_m^{-1}K_m^2(C_1\vep)^{3/2}(s/t)^{2-3\delta} s^{C_m(C_1\vep)^{1/2}}
\\
& \leq  CC_1\vep (s/t)^{2-3\delta} + CC_m^{-1}K_m^2(C_1\vep)^{3/2}(s/t)^{2-3\delta} s^{C_m(C_1\vep)^{1/2}}.
\endaligned
$$
We also have, in view of \eqref{ineq lem h00-sup 4 1}, $\left|h_{t, x}(\lambda) \right| 
\leq CC_1\vep(s/t)^{1/2} \lambda^{-3/2 +\delta} + CC_1\vep(s/t)^{-1} \lambda^{-2}$
and then, in both cases $0\leq r/t\leq 3/5$ and $3/5<r/t<1$,
$$
\aligned
\int_{s_0}^s\left|h_{t, x}(\lambda) \right|
& \leq  CC_1\vep(s/t)^{1/2} \int_{s_0}^s\lambda^{-3/2 +\delta} d\lambda + CC_1\vep(s/t)^{-1} \int_{s_0}^s\lambda^{-2}d\lambda
\\
& \leq  CC_1\vep \left((s/t)^{1/2}s_0^{-1+\delta} + (s/t)^{-1}s_0^{-1} \right) \leq CC_1\vep.
\endaligned
$$
By Proposition \ref{Linfini KG}, we have 
$$
\aligned
& (s/t)^{3\delta - 2}s^{-3/2} \left|\del^IL^J\phi\right| + (s/t)^{3\delta-3}s^{-3/2} \left|\delu_{\perp} \del^IL^J\phi\right|
\\
& \leq CK_{0,m}C_1\vep + CC_1\vep + CC_m^{-1}K_m^2(C_1\vep)^{3/2}s^{C_m(C_1\vep)^{1/2}}. 
\endaligned
$$
We can choose $K_m$ sufficiently large and fix
$
\vep'_m = \frac{C_m^2}{C_1} \left(\frac{K_m - 2CK_{0,m}-2C}{2CK_m^2} \right)^2>0,
$
and 
then we see that on $[2,s^{**}]$:
\bel{ineq pr3 prop 2 refined-sup}
(s/t)^{3\delta - 2}s^{-3/2} \left|\del^IL^J\phi\right| + (s/t)^{3\delta-3}s^{-3/2} \left|\delu_{\perp} \del^IL^J\phi\right| \leq \frac{1}{2}K_mC_1\vep s^{C_m(C_1\vep)^{1/2}}.
\ee
Here we need to emphaze that $C_m$ is determined only by $N$ and the structure of the system: we have $C_0$, determined in view of \eqref{ineq 2 lem 3 refined-sup} where the constant $C$ is determined by $N$ and the main system. Then,  $C_m = 2C_{m-1}$ thus $C_m$ are determined only by $N$ and the structure of the system.

In the same way, we follow the notation in Proposition \ref{prop 1 sup-norm-W} combined with following estimates deduced from \eqref{ineq pr1 prop 2 refined-sup} : as $|I|+|J| \leq N-7$
$$
\aligned
|S_{\alpha \beta}^{KG,I,J}| & \leq  C_m(C_1\vep)^2(s/t)^{4-6\delta}s^{-3+ C_m(C_1\vep)^{1/2}}
\\
& \leq  C(K_mC_1\vep)^2t^{-3+3\delta+\frac{1}{2}C_m(C_1\vep)^{1/2}}(t-r)^{-3\delta+\frac{1}{2}C_m(C_1\vep)^{1/2}},
\endaligned
$$
where we rely on a similar argument for the estimate of $\left|[\del^IL^J,h^{\mu\nu} \del_{\mu} \del_{\nu}]\phi\right|$.

We also recall \eqref{ineq pr2 lem 1 refined-sup} for $|I|+|J| \leq N-7$
$$
|S_{I, \alpha \beta}^W| \leq C(C_1\vep)^2t^{-2 + CC_1\vep}(t-r)^{-1+ CC_1\vep} \leq C(C_1\vep)^2t^{-2 + C(C_1\vep)^{1/2}}(t-r)^{-1+ C(C_1\vep)^{1/2}}. 
$$
This leads us to (by  Proposition \ref{prop 1 sup-norm-W})
$$
\aligned
&\left|\del^IL^Jh_{\alpha \beta} \right|
\\
& \leq  Cm_S\vep t^{-1} + \frac{C(C_1\vep)^2}{CC_1\vep} t^{-1+ C(C_1\vep)^{1/2}}(t-r)^{C(C_1\vep)^{1/2}}
 + C(K_mC_1\vep)^2t^{-1}s^{C_m(C_1\vep)^{1/2}}
\\
& \leq  CC_1K_{0,m} \vep t^{-1} + CC_1\vep t^{-1+ C(C_1\vep)^{1/2}}(t-r)^{C(C_1\vep)^{1/2}} + C(K_mC_1\vep)^2t^{-1}(t-r)^{C_m(C_1\vep)^{1/2}}
\\
& \leq  CC_1\vep \left(K_{0,m} + 1 + K_m^2C_1\vep\right)t{-1+ C_m(C_1\vep)^{1/2}}(t-r)^{C_m(C_1\vep)^{1/2}}.
\endaligned
$$
We check that when $\vep\leq \vep_m'$, on $[2,s^{**}]$:
\bel{ineq pr4 prop 2 refined-sup}
\left|\del^IL^Jh_{\alpha \beta} \right| \leq \frac{1}{2}K_mC_1\vep.
\ee

Now, in view of \eqref{ineq pr3 prop 2 refined-sup} and \eqref{ineq pr4 prop 2 refined-sup}, we make the following observation: when $s^{**}<s^{*}$, by continuity we must have
\be
(s/t)^{3\delta-2}|\del^IL^J \phi| + (s/t)^{3\delta - 3}|\delu_{\perp} \del^IL^J \phi|= K_mC_1\vep s^{-3/2 + C(C_1\vep)^{1/2}}
\ee
or
\be
|L^Jh_{\alpha \beta}(t, x)|=K_mC_1\vep t^{-1}s^{C(C_1\vep)^{1/2}}. 
\ee
This is a contradiction with \eqref{ineq pr3 prop 2 refined-sup} together with \eqref{ineq pr4 prop 2 refined-sup}. We conclude that $s^{**}=s^{*}$. That is, \eqref{ineq 1 prop 2 refined-sup} and \eqref{ineq 2 prop 2 refined-sup} are proved for $|J| = m$. By induction, \eqref{ineq 1 prop 2 refined-sup} and \eqref{ineq 2 prop 2 refined-sup} are proved for $|J| \leq N-7$. This concludes the argument, by taking $\vep_2 = \vep_{N-7}'$.
\end{proof}
 

\section{High-Order Refined $L^2$ Estimates}
\label{section-10}

\subsection{Objective of this section and preliminary}

In this section we improve the energy bounds of both $h_{\alpha \beta}$ and $\phi$ for $N-4\leq |I|+|J| \leq N$. We rely on the energy estimates Proposition \ref{prop 1 energy-W} and Proposition \ref{prop energy 2KG}. In order to apply these two propositions, we need a control of the source terms:
\begin{itemize}

\item For $\del^IL^Jh_{\alpha \beta}$, we have the terms
$
\del^IL^JF_{\alpha \beta}, \quad {\sl QS}_{\phi}, \quad [\del^IL^J,h^{\mu\nu} \del_\mu\del_\nu]h_{\alpha \beta}.
$

\item For $\del^IL^J\phi$, we have the terms 
$
[\del^IL^J,h^{\mu\nu} \del_\mu\del_\nu]\phi.
$
\end{itemize}
In this section, we derive the $L^2$ bounds and apply them (in the next subsection) in the proof of the main estimate.
Note that the estimate for $F_{\alpha \beta}$ is already covered by Lemma \ref{lem 0 refined-energy-low-h}.  We begin with ${\sl QS}_{\phi}$.

\begin{lemma} \label{lem 1 pre refined-energy-higher}
Assume the bootstrap assumptions \eqref{ineq energy assumption 1} and \eqref{ineq energy assumption 2} hold. Then the following estimates hold for $|I|+|J| \leq N$:
\bel{ineq 1 lem 1 pre refined-energy-higher}
\aligned
& \left\|\del^IL^J\left(\del_{\alpha} \phi\del_{\beta} \phi\right) \right\|_{L^2(\Hcal_s^*)} + \left\|\del^IL^J\left(\phi^2\right) \right\|_{L^2(\Hcal_s^*)}
\\
 & \leq  CC_1\vep s^{-3/2} \sum_{|I'| \leq |I|}E_{M,c^2}(s, \del^{I'}L^J\phi)^{1/2}
 +  CC_1\vep s^{-3/2 + C(C_1\vep)^{1/2}} \sum_{|I'| \leq |I|\atop |J'|<|J|}E_{M,c^2}(s, \del^{I'}L^{J'} \phi)^{1/2}.
\endaligned
\ee
\end{lemma}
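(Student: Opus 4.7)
The plan is to expand both $\del^IL^J(\del_\alpha\phi\,\del_\beta\phi)$ and $\del^IL^J(\phi^2)$ via Leibniz's rule into bilinear sums, and in each resulting product place one factor in $L^\infty(\Hcal_s^*)$ using the refined sup-norm estimates of Proposition~\ref{prop 2 refined-sup}, while bounding the complementary factor in $L^2$ by the Klein-Gordon energy $E_{M,c^2}$. The dichotomy between the two terms on the right-hand side is driven by whether the $L^\infty$ factor carries any Lorentz boost.

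For the quadratic term, Leibniz gives
$$
\del^IL^J\bigl(\del_\alpha\phi\,\del_\beta\phi\bigr) \,=\, \sum_{\substack{I_1+I_2=I\\ J_1+J_2=J}}\binom{I}{I_1}\binom{J}{J_1}\,\del^{I_1}L^{J_1}\del_\alpha\phi\cdot\del^{I_2}L^{J_2}\del_\beta\phi.
$$
Since $N\geq 14$, at least one factor, say the first after relabelling, satisfies $|I_1|+|J_1|\leq N-8$. If $J_1=0$, apply \eqref{ineq 3.2 refined-sup low b} to obtain $|\del^{I_1}\del_\alpha\phi|\leq CC_1\vep(s/t)^{2-3\delta}s^{-3/2}$, with no $s^{C(C_1\vep)^{1/2}}$ growth. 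Using $(s/t)^{2-3\delta}\leq s/t$ on $\Hcal_s^*$ (since $2-3\delta\geq 1$ and $s/t\leq 1$), the $L^2$ norm of the complementary factor reduces to $\|(s/t)\del_\beta\del^{I_2}L^J\phi\|_{L^2(\Hcal_s^*)}$, absorbed by $E_{M,c^2}(s,\del^{I_2}L^J\phi)^{1/2}$ after commuting $\del_\beta$ through $\del^{I_2}L^J$ via Lemma~\ref{lem com 2}. This produces the first sum on the right-hand side, where the full $L^J$ survives on the energy factor. If instead $|J_1|\geq 1$, apply \eqref{ineq 3.1 refined-sup low b} to get $|\del^{I_1}L^{J_1}\del_\alpha\phi|\leq CC_1\vep(s/t)^{2-3\delta}s^{-3/2+C(C_1\vep)^{1/2}}$; the complementary factor then carries only $|J_2|=|J|-|J_1|<|J|$ boosts, contributing the second sum.

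For the potential term, the analogous expansion yields products $\del^{I_1}L^{J_1}\phi\cdot\del^{I_2}L^{J_2}\phi$. Place the $L^\infty$ norm on the low-order factor using \eqref{ineq 1 prop 2 refined-sup}, so that $|\del^{I_1}L^{J_1}\phi|\leq CC_1\vep(s/t)^{2-3\delta}s^{-3/2+C(C_1\vep)^{1/2}}$ (or \eqref{ineq 3 refined-sup low} when no boost is present, giving the sharper first RHS term), and control the $L^2$ norm of the remaining factor using the mass contribution $c\|\del^{I_2}L^{J_2}\phi\|_{L^2}\leq E_{M,c^2}(s,\del^{I_2}L^{J_2}\phi)^{1/2}$ inside the energy functional.

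The main obstacle will be the borderline equal-split case at the top level $|I|+|J|=N$ when $N=14$, for which $|I_1|+|J_1|=|I_2|+|J_2|=7>N-8$ violates the threshold for \eqref{ineq 3.1 refined-sup low b}. In that case one falls back on the weaker estimate \eqref{ineq 3.1 refined-sup low a} carrying only $(s/t)^{1-3\delta}$; reducing to the energy then costs an $s^{3\delta}$ factor, but since $\delta$ is fixed and the constant $C$ in the exponent $C(C_1\vep)^{1/2}$ is free to be enlarged, this loss is absorbed into the growth factor of the second sum, completing the estimate.
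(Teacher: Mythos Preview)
Your Leibniz expansion and sup-norm/$L^2$ splitting match the paper's approach, and your treatment of the generic case is sound. However, your handling of the borderline equal-split case contains a genuine error: the claim that an $s^{3\delta}$ loss can be ``absorbed into the growth factor $s^{C(C_1\vep)^{1/2}}$'' is false, because $\delta\in[1/50,1/20]$ is a \emph{fixed} positive constant independent of $\vep$, whereas $C(C_1\vep)^{1/2}\to 0$ as $\vep\to 0$. No enlargement of $C$ can make $3\delta\leq C(C_1\vep)^{1/2}$ hold for all small $\vep$, so the stated inequality would fail.

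The remedy, which is exactly how the paper proceeds, is to avoid the $(s/t)$ weight altogether whenever the sup-norm factor carries at least one derivative or boost (i.e.\ $|I_1|+|J_1|\geq 1$). In that situation the complementary factor $\del^{I_2}L^{J_2}\del_\beta\phi$ has total order at most $N-1$, so after commuting $\del_\beta$ to the front it becomes $\del^{I'}L^{J'}\phi$ with $|I'|+|J'|\leq N$; the \emph{mass term} in $E_{M,c^2}$ then controls $\|\del^{I'}L^{J'}\phi\|_{L^2}$ directly, and since $(s/t)^{1-3\delta}\leq 1$ on $\Hcal_s^*$ that prefactor can simply be discarded. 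You already invoke this mechanism for the $\phi^2$ term; applying it to $\del\phi\,\del\phi$ as well removes the obstacle entirely. With this device the paper works directly at the $N{-}7$ threshold using the weaker $(s/t)^{1-3\delta}$ estimates \eqref{ineq 3.2 refined-sup low a} and \eqref{ineq 3.1 refined-sup low a} from the outset, and never needs the sharper $N{-}8$ bounds.
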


\begin{proof}
We only treat $\del^IL^J\left(\del_{\alpha} \phi\del_{\beta} \phi\right)$ and omit the argument for $\del^IL^J\left(\phi^2\right)$ which is simpler. We have
$
\del^IL^J\left(\del_{\alpha} \phi\del_{\beta} \phi\right)
= \sum_{I_1+I_2=I\atop J_1+J_2=J} \del^{I_1}L^{J_1} \del_{\alpha} \phi \,   \del^{I_2}L^{J_2} \del_{\beta} \phi.
$
Assuming that $N\geq 13$, we have either $|I_1|+|J_1| \leq N-7$ or $|I_2|+|J_2| \leq N-7$. Without loss of generality, we suppose that $|I_1|+|J_1| \leq N-7$:
\begin{itemize}

\item When $|I_1| = |J_1|=0$. We apply \eqref{ineq 3.2 refined-sup low c} :
$$
\aligned
& \left\|\del^{I_1}L^{J_1} \del_{\alpha} \phi \,   \del^{I_2}L^{J_2} \del_{\beta} \phi\right\|_{L^2(\Hcal_s^*)}
 =  \left\|\del_{\alpha} \phi \,   \del^IL^J\del_{\beta} \phi\right\|_{L^2(\Hcal_s^*)}
\\
& \leq  CC_1\vep \left\|(s/t)^{2-3\delta}s^{-3/2}(t/s) \,   (s/t) \del^IL^J\del_{\beta} \phi\right\|_{L^2(\Hcal_s^*)}
 \leq  CC_1\vep s^{-3/2}E_{M,c^2}(s, \del^IL^J\phi)^{1/2}.
\endaligned
$$

\item When $|J_1|=0, 1\leq |I_1| \leq N-7$, then $|I_2|+|J_2| \leq N-1$. We apply \eqref{ineq 3.2 refined-sup low a} : 
$$
\aligned
& \left\|\del^{I_1}L^{J_1} \del_{\alpha} \phi \,   \del^{I_2}L^{J_2} \del_{\beta} \phi\right\|_{L^2(\Hcal_s^*)}
= \left\|\del^{I_1} \del_{\alpha} \phi \,   \del^{I_2}L^J\del_{\beta} \phi\right\|_{L^2(\Hcal_s^*)}
\\
& \leq CC_1\vep \left\|(s/t)^{1-3\delta}s^{-3/2} \,   \del^{I_2}L^J\del_{\beta} \phi\right\|_{L^2(\Hcal_s^*)}
\leq  CC_1\vep s^{-3/2} \sum_{|I'| \leq |I|} E_{M,c^2}(\del^{I'}L^J\phi)^{1/2}.
\endaligned
$$

\item When $1\leq|J_1|$ and $|I_1|+|J_1| \leq N-7$, then $|I_2|+|J_2| \leq N-1$ and $|J_2|<|J|$. We apply \eqref{ineq 3.1 refined-sup low a}
$$
\aligned
\left\|\del^{I_1}L^{J_1} \del_{\alpha} \phi \,   \del^{I_2}L^{J_2} \del_{\beta} \phi\right\|_{L^2(\Hcal_s^*)}
& \leq CC_1\vep \left\|(s/t)^{1-3\delta}s^{-3/2 + C(C_1\vep)^{1/2}} \,   \del^{I_2}L^{J_2} \del_{\beta} \phi\right\|_{L^2(\Hcal_s^*)}
\\
& \leq CC_1\vep s^{-1+ C(C_1\vep)^{1/2}} \,   s^{-1/2} \sum_{I'\leq |I|\atop |J'|<|J|}E_{M,c^2}(s, \del^{I'}L^{J'} \phi)^{1/2}. \qedhere
\endaligned 
$$
\end{itemize} 
\end{proof}

\begin{lemma} \label{lem 2 pre refined-energy-higher}
Under the bootstrap assumption, for $|I|+|J| \leq N$ one has 
\bel{ineq 1 lem 2 pre refined-energy-higher}
\aligned
&\left\|[\del^IL^J,h^{\mu\nu} \del_\mu\del_\nu]h_{\alpha \beta} \right\|_{L^2(\Hcal_s^*)}
\\
& \leq   CC_1\vep s^{-1} \hskip-.25cm  \sum_{\alpha', \beta',a,|I'| \leq|I|\atop|J'|<|J|}  \hskip-.25cm  E_M^*(s, \del^{I'}L_aL^{J'}h_{\alpha'\beta'})^{1/2}
+ CC_1\vep s^{-1+ C(C_1\vep)}  \hskip-.25cm \sum_{\alpha'\beta',|I'| \leq|I|\atop |J'|<|J|}  \hskip-.25cm  E_M^*(s, \del^{I'}L^{J'}h_{\alpha'\beta'})^{1/2}
\\
& \quad + CC_1\vep s^{-3/2} \sum_{|I'| \leq|I|}E_{M,c^2}^*(s, \del^{I'}L^{J} \phi)^{1/2}
+ CC_1\vep s^{-3/2 + C(C_1\vep)^{1/2}} \sum_{|I'| \leq|I|\atop |J'|<|J|}E_{M,c^2}^*(s, \del^{I'}L^{J'} \phi)^{1/2}
\\
& \quad + C(C_1\vep)^2s^{-3/2 +3\delta}
\endaligned
\ee
and, in particular, for $|J|=0$, 
$$
\left\|[\del^I,h^{\mu\nu} \del_\mu\del_\nu]h_{\alpha \beta} \right\|_{L^2(\Hcal_s^*)}
\leq CC_1\vep s^{-3/2} \sum_{|I'| \leq|I|}E_{M,c^2}^*(s, \del^{I'} \phi)^{1/2} + C(C_1\vep)^2s^{-3/2 +3\delta}.
$$
\end{lemma}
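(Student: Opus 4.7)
\medskip

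\noindent\textbf{Proof proposal.} The plan is to follow the same strategy as in the proof of Lemma~\ref{lem 1 commtator-L2 I}, but further reducing the ``second-order'' residue via the wave equation itself and exploiting the improved estimates derived in Sections~\ref{section-8}--\ref{section-9}. First I would apply Lemma~\ref{lem 1 nonlinear} to write $[\del^IL^J,h^{\mu\nu}\del_\mu\del_\nu]h_{\alpha\beta}$ as a linear combination of the five classes of terms listed in \eqref{eq 1 lem 1 nolinear}. The class ${\sl GQQ}_{hh}(N,k)$ is directly controlled by \eqref{ineq 1 bilinear-L2}, and the bilinear term $t^{-1}\del^{I_3}L^{J_3}h_{\mu\nu}\del^{I_4}L^{J_4}\del_\gamma h_{\mu'\nu'}$ is controlled exactly as in Step~1 of the proof of Lemma~\ref{lem 1 commtator-L2 I}; both contribute only $C(C_1\vep)^2 s^{-3/2+3\delta}$, which absorbs into the final term of \eqref{ineq 1 lem 2 pre refined-energy-higher}.

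\medskip

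Next I would handle the three remaining classes, which all multiply a component of $\hu^{00}$ by a second-order derivative of $h_{\alpha\beta}$. For $\del^{I_1}L^{J_1}\hu^{00}\,\del^{I_2}L^{J_2}\del_t\del_th_{\alpha\beta}$ with $|I_1|\geq 1$, I would use the pointwise bound \eqref{ineq h00-sup 1} in the regime $|I_1|+|J_1|\leq N-1$ and the $L^2$ bound \eqref{ineq h00-L-2 wave} paired with the $L^\infty$ second-order estimate from Lemma~\ref{lem 1 refined-energy-low-h} in the top-order case, exactly as done for Lemma~\ref{lem 1 commtator-L2 I}. For $L^{J_1'}\hu^{00}\,\del^IL^{J_2'}\del_t\del_th_{\alpha\beta}$ and $\hu^{00}\del_\gamma\del_{\gamma'}\del^IL^{J'}h_{\alpha\beta}$, the low-order cases ($|J_1'|\leq N-2$) use the sup-norm estimate \eqref{ineq h00-sup 1.5}, while the high-order cases ($|J_1'|\geq N-1$) use the decomposition $\hu^{00}=\hu^{00}_0+\hu^{00}_1$ of Proposition~\ref{prop h00-sup 1} paired with the sup-norm bound on $s^2(s/t)^{1-\delta}\del^IL^{J_2'}\del_t\del_th_{\alpha\beta}$ furnished by \eqref{ineq 1 lem 1 refined-energy-low-h}. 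In all cases the contribution reduces to $C(C_1\vep)^2 s^{-3/2+3\delta}$ plus a residual term of the form
\begin{equation*}
CC_1\vep \sum_{|J'|\leq |J|-1}\bigl\|(s/t)\,\del_t\del_t\del^IL^{J'}h_{\alpha\beta}\bigr\|_{L^2(\Hcal_s^*)}
\end{equation*}
(and a similar expression for mixed second derivatives, which by \eqref{eq second-order-frame-1} reduces to this one modulo ``good'' terms already under control).

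\medskip

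The core of the proof is then to convert this residual into quantities appearing on the right-hand side of \eqref{ineq 1 lem 2 pre refined-energy-higher}. For this I would invoke Lemma~\ref{lem 1 second-roder} applied to $\del^IL^{J'}h_{\alpha\beta}$ with $|J'|<|J|$: since $(s/t)^2-\hu^{00}\geq (s/t)^2/2$, equation~\eqref{eq 3 second-order} yields
\begin{equation*}
\bigl\|(s/t)\,\del_t\del_t\del^IL^{J'}h_{\alpha\beta}\bigr\|_{L^2(\Hcal_s^*)}
\lesssim \bigl\|(s/t)^{-1}\mathcal{R}\bigr\|_{L^2(\Hcal_s^*)},
\end{equation*}
where $\mathcal{R}$ collects $Sc_1$, $Sc_2$, $\del^IL^{J'}F_{\alpha\beta}$, the Klein-Gordon quadratic term ${\sl QS}_\phi$, the cubic remainder, and the commutator $[\del^IL^{J'},h^{\mu\nu}\del_\mu\del_\nu]h_{\alpha\beta}$ at order $|J'|<|J|$. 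The piece $Sc_1$ is bounded by \eqref{ineq 1 Sc-L2} in terms of $h$-energies with a Lorentz boost inserted, producing the first sum on the right-hand side of \eqref{ineq 1 lem 2 pre refined-energy-higher}; the piece $Sc_2$ is bounded by \eqref{ineq 2 Sc-L2} and contributes only to the $C(C_1\vep)^2 s^{-3/2+3\delta}$ term; the piece $\del^IL^{J'}F_{\alpha\beta}$ is controlled by Lemma~\ref{lem 0 refined-energy-low-h}, yielding the second $h$-energy sum (with $s^{CC_1\vep}$ weight from the quasi-null contribution, which is absorbed into $s^{C(C_1\vep)}$); and the piece ${\sl QS}_\phi$ is handled by the freshly-established Lemma~\ref{lem 1 pre refined-energy-higher}, which is precisely the mechanism by which the two Klein-Gordon energy sums enter the statement.

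\medskip

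The main obstacle is the apparent circularity: the commutator at $(I,J)$ is reduced to a second-derivative $L^2$ bound, which by the wave equation produces a commutator at $(I,J')$ with $|J'|<|J|$. This is resolved by induction on $|J|$: when $|J|=0$, the classes involving $L^{J_1'}\hu^{00}$ or $\del^{I_1}L^{J_1}\hu^{00}$ with $|J_1'|\geq 1$ or $|I_1|\geq 1$ and $|J_1|<|J|=0$ simply do not arise, the residual involves only $\del^I F$ and ${\sl QS}_\phi$, and one recovers the simpler bound stated at the end of the lemma; the inductive step then feeds the induction hypothesis into the residual and closes the estimate with constants depending only on $N$. A secondary bookkeeping issue is that the weight $s^{CC_1\vep}$ produced when applying Lemma~\ref{lem 0 refined-energy-low-h} must be absorbed into $s^{C(C_1\vep)}$ in the final statement; this is straightforward since $CC_1\vep \leq C(C_1\vep)$ for $C_1\vep$ small.
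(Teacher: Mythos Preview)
Your overall strategy is exactly the paper's: decompose the commutator via Lemma~\ref{lem 1 nonlinear}, absorb the good pieces, reduce to a second-order residual, feed the residual back through the wave equation via Lemma~\ref{lem 1 second-roder}, and induct on $|J|$. The paper does precisely this, citing \eqref{ineq L-2 lem 1 commtator II} as the black box that produces the residual, and then bounding $Sc_1$, $Sc_2$, $F_{\alpha\beta}$, and ${\sl QS}_\phi$ by \eqref{ineq 1 Sc-L2}, \eqref{ineq 2 Sc-L2}, Lemma~\ref{lem 0 refined-energy-low-h}, and Lemma~\ref{lem 1 pre refined-energy-higher} respectively, exactly as you describe.

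There is, however, a weight bookkeeping slip that would prevent the argument from closing as written. You state the residual as $CC_1\vep\sum_{|J'|<|J|}\|(s/t)\del_t\del_t\del^IL^{J'}h_{\alpha\beta}\|_{L^2(\Hcal_s^*)}$ and then invoke Lemma~\ref{lem 1 second-roder} in the form $\|(s/t)\del_t\del_t\ldots\|\lesssim\|(s/t)^{-1}\mathcal R\|$. But \eqref{ineq 1 Sc-L2} controls $\|Sc_1\|$, not $\|(t/s)Sc_1\|$; the latter contains, for instance, $s^{-1}\|L_a\del_t\del^IL^{J'}h\|$, which is \emph{not} dominated by the hyperboloidal energy. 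The correct residual weight is $(s/t)^2$, not $(s/t)$: when you strip off $\hu^{00}$ using the Schwarzschild bound $|\hu^{00}|\lesssim t^{-1}$, you should record that $t^{-1}\leq (s/t)^2$ inside $\Kcal$ (equivalently $t\leq s^2$, since $s^2=t^2-r^2>2t-1>t$ there). With the residual written as $CC_1\vep\|(s/t)^2\del_t\del_t\del^IL^{J'}h_{\alpha\beta}\|$, Lemma~\ref{lem 1 second-roder} gives $\lesssim\|\mathcal R\|$ with no $(t/s)$ loss, and then \eqref{ineq 1 Sc-L2} applies directly. This is exactly what \eqref{ineq L-2 lem 1 commtator II} packages for you (its residual $\|s^3t^{-2}\del_t\del_t\ldots\|$ becomes $\|(s/t)^2\del_t\del_t\ldots\|$ after dividing by $s$), so the cleanest fix is simply to cite that lemma rather than redo its proof.
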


\begin{proof} We rely on the estimate \eqref{ineq L-2 lem 1 commtator II} and \eqref{eq 1 lem 1 second-order} combined with \eqref{ineq 1 lem 1 pre refined-energy-higher}. In view of \eqref{ineq L-2 lem 1 commtator II}, we need to estimate
$
\left\|(s/t)^2\del_t\del_t\del^IL^{J'}h_{\alpha \beta} \right\|_{L^2(\Hcal_s^*)}
$
for $|J'|<|J|$. Then, in view of in view of \eqref{eq 1 lem 1 second-order}, the above quantity is to be bounded by the $L^2$ norm of
$
Sc_1[\del^IL^{J'}h_{\alpha \beta}]$,
$Sc_2[\del^IL^{J'}h_{\alpha \beta}]$,
$\del^IL^{J'}F_{\alpha \beta}$,
and $\del^IL^{J'}{\sl QS}_{\phi}$. 
These terms are bounded respectively in view of \eqref{ineq 1 Sc-L2}, \eqref{ineq 2 Sc-L2}, Lemma \ref{lem 0 refined-energy-low-h} and \eqref{ineq 1 lem 1 pre refined-energy-higher}. With all these estimate substitute into \eqref{eq 1 lem 1 second-order}, we have for $|J'|<|J|$,
\be
\aligned
&\left\|(s/t)^2\del_t\del_t\del^IL^{J'}h_{\alpha \beta} \right\|_{L^2(\Hcal_s^*)}
\\
& \leq  C s^{-1}   \hskip-.25cm  \sum_{\alpha', \beta',a,|I'| \leq|I|\atop|J'|<|J|}  \hskip-.25cm  E_M^*(s, \del^{I'}L_aL^{J'}h_{\alpha'\beta'})^{1/2}
+ CC_1\vep s^{-1+ C(C_1\vep)}   \hskip-.25cm  \sum_{\alpha'\beta',|I'| \leq|I|\atop |J'|<|J|}  \hskip-.25cm  E_M^*(s, \del^{I'}L^{J'}h_{\alpha'\beta'})^{1/2}
\\
& \quad + CC_1\vep s^{-3/2} \sum_{|I'| \leq|I|}E_{M,c^2}^*(s, \del^{I'}L^{J} \phi)^{1/2}
+ CC_1\vep s^{-3/2 + C(C_1\vep)^{1/2}} \sum_{|I'| \leq|I|\atop |J'|<|J|}E_{M,c^2}^*(s, \del^{I'}L^{J'} \phi)^{1/2}
\\
& \quad + \sum_{|J'|<|J|} \|[\del^IL^{J'},h^{\mu\nu} \del_{\mu} \del_{\nu}]h_{\alpha \beta} \|_{L_f^2(\Hcal_s)}
+ C(C_1\vep)^2s^{-3/2 +2\delta}.
\endaligned
\ee
That is, we have 
$$
\aligned
&\left\|[\del^IL^J,h^{\mu\nu} \del_\mu\del_\nu]h_{\alpha \beta} \right\|_{L^2(\Hcal_s^*)}
\\
& \leq  C C_1\vep s^{-1}   \hskip-.25cm  \sum_{\alpha', \beta',a,|I'| \leq|I|\atop|J'|<|J|}  \hskip-.25cm  E_M^*(s, \del^{I'}L_aL^{J'}h_{\alpha'\beta'})^{1/2}
+ CC_1\vep s^{-1+ C(C_1\vep)}   \hskip-.25cm  \sum_{\alpha'\beta',|I'| \leq|I|\atop |J'|<|J|}  \hskip-.25cm  E_M^*(s, \del^{I'}L^{J'}h_{\alpha'\beta'})^{1/2}
\\
& \quad + CC_1\vep s^{-3/2} \sum_{|I'| \leq|I|}E_{M,c^2}^*(s, \del^{I'}L^{J} \phi)^{1/2}
+ CC_1\vep s^{-3/2 + C(C_1\vep)^{1/2}} \sum_{|I'| \leq|I|\atop |J'|<|J|}E_{M,c^2}^*(s, \del^{I'}L^{J'} \phi)^{1/2}
\\
& \quad + \sum_{|J'|<|J|} \|[\del^IL^{J'},h^{\mu\nu} \del_{\mu} \del_{\nu}]h_{\alpha \beta} \|_{L_f^2(\Hcal_s)}
+ C(C_1\vep)^2s^{-3/2 +2\delta}.
\endaligned
$$
Then, we proceed by induction on $J$ and the desired result is reached.
When $|J|=0$, in the right-hand side of the above estimate there exist only the third and the last term, this proves the desired result in this case. Then, by induction on $|J|$, the desired result is established for $|I|+|J| \leq N$.
\end{proof}

\begin{lemma} \label{lem 3 pre refined-energy-higher}
Under the bootstrap assumption, for all $|I|+|J| \leq N$ one has 
\bel{ineq 1 lem 3 pre refined-energy-higher}
\aligned
& \left\|[\del^IL^J,h^{\mu\nu} \del_\mu\del_\nu]\phi\right\|_{L_f^2(\Hcal_s)}
\\
& \leq CC_1\vep s^{-1/2} \sum_{|J'|=|J|\atop \alpha, \beta}E_M^*(s,L^{J'}h_{\alpha \beta})^{1/2}
+ CC_1\vep s^{-1/2} \sum_{|J'|= |J|\atop \alpha \beta} \int_2^s\tau^{-1}E_M^*(\tau,L^{J'}h_{\alpha \beta})^{1/2}d\tau
\\
& \quad + CC_1\vep s^{-1+ C(C_1\vep)^{1/2}}   \hskip-.25cm  \sum_{|I'| \leq|I|+1\atop |J'|<|J|}  \hskip-.25cm  E_M^*(s, \del^{I'}L^{J'} \phi)^{1/2}
 + CC_1\vep s^{-1/2 + C(C_1\vep)^{1/2}}   \hskip-.25cm  \sum_{|J_1'|<|J|\atop \alpha', \beta'}  \hskip-.25cm  E_M^*(s,L^{J_1'}h_{\alpha'\beta'})^{1/2}
\\
& \quad + CC_1\vep s^{-1/2 + C(C_1\vep)^{1/2}} \sum_{|J_1'|<|J|\atop \alpha', \beta'} \int_2^s\tau^{-1}E_M^*(\tau,L^{J_1'}h_{\alpha'\beta'})^{1/2}d\tau
 + C(C_1\vep)^2s^{-1/2 + C(C_1\vep)^{1/2}}.
\endaligned
\ee
When $|J|=0$, one has 
\bel{ineq 2 lem 3 pre refined-energy-higher}
\left\|[\del^I,h^{\mu\nu} \del_\mu\del_\nu]\phi\right\|_{L_f^2(\Hcal_s)} \leq C(C_1\vep)^2s^{-1+3\delta}.
\ee
\end{lemma}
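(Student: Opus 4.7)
The plan is to apply Lemma \ref{lem 1 nonlinear} and the decomposition \eqref{eq 2 lem 1 nolinear} in order to write
\[
[\del^IL^J,h^{\mu\nu}\del_\mu\del_\nu]\phi
\]
as a linear combination of five families with smooth, homogeneous coefficients of non-positive degree: the ``good'' quasi-linear terms ${\sl GQQ}_{h\phi}(p,k)$; the $t^{-1}$-weighted terms $t^{-1}\del^{I_1}L^{J_1}h_{\mu\nu}\del^{I_2}L^{J_2}\del_\gamma\phi$; the generic family $\del^{I_1}L^{J_1}\hu^{00}\del^{I_2}L^{J_2}\del_t\del_t\phi$ with $|I_1|\geq 1$; the boost-only family $L^{J_1'}\hu^{00}\del^IL^{J_2'}\del_t\del_t\phi$ with $|J_1'|\geq 1$ and $|J_2'|<|J|$; and the pure-$\hu^{00}$ term $\hu^{00}\del_\alpha\del_\beta\del^IL^{J'}\phi$ with $|J'|<|J|$. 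Each family will then be bounded in $L^2(\Hcal_s^*)$ using a mix of the basic bootstrap assumptions, the refined low-order pointwise estimates of Section \ref{section-9}, and the wave-gauge estimates on $\hu^{00}$ from Section \ref{sect--7}.

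The first two families are handled exactly as in the proof of \eqref{ineq 4 bilinear-L2}: one distributes the derivatives, puts the factor carrying at most $[N/2]\leq N-7$ derivatives in sup-norm via \eqref{ineq 4 reined-sup low} and \eqref{ineq 3.2 refined-sup low d}, and the remaining factor in $L^2$ via \eqref{ineq basic-L2 2 generation 1 b} or \eqref{ineq basic-L2 2 generation 2}; their combined contribution enters the closing cubic term $C(C_1\vep)^2 s^{-1/2+C(C_1\vep)^{1/2}}$. For the generic family the pointwise bound \eqref{ineq h00-sup 1} is available on $\del^{I_1}L^{J_1}\hu^{00}$ because at least one partial derivative lands on it, so it remains to bound $\del^{I_2}L^{J_2}\del_t\del_t\phi$ in $L^2$. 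This is done by paralleling the derivation of \eqref{ineq 1 second-L2}, but starting from the Klein-Gordon equation $((s/t)^2-\hu^{00})\del_t\del_t\phi=c^2\phi+(\text{tangential and lower order})$ and differentiating by $\del^{I_2}L^{J_2}$; solving for the second-order time derivative yields
\[
\|(s/t)\del^{I_2}L^{J_2}\del_t\del_t\phi\|_{L^2(\Hcal_s^*)}\lesssim \sum_{|I'|\leq |I_2|+1}E_{M,c^2}^{\ast}(s,\del^{I'}L^{J_2}\phi)^{1/2}+(\text{acceptable remainders}),
\]
which is the origin of the index range $|I'|\leq |I|+1$ in the statement.

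The principal obstacle is the boost-only family, for which no $\del$-derivative is left to act on $\hu^{00}$ and \eqref{ineq h00-sup 1} cannot be used. The remedy is the decomposition $\hu^{00}=\chi(r/t)\hu^{00}_S+\hu^{00}_1$ introduced in Section \ref{subsubsec L-2-h00}: the Schwarzschild piece satisfies $|L^{J_1'}\hu^{00}_S|\lesssim m_S t^{-1}$ and, paired with the sup-norm decay of $\del^IL^{J_2'}\del_t\del_t\phi$ coming from \eqref{ineq 3.2 refined-sup low d} plus one more appeal to the Klein-Gordon equation (admissible since $|J_2'|<|J|$ keeps $|I|+|J_2'|\leq N-1$), produces the pure cubic contribution $C(C_1\vep)^2 s^{-1/2+C(C_1\vep)^{1/2}}$; for $\hu^{00}_1$ I would invoke Proposition \ref{prop h00-sup 1}, which gives
\[
\|(s/t)^{-1+\delta}s^{-1}L^{J_1'}\hu^{00}_1\|_{L^2(\Hcal_s^*)}\lesssim \sum_{|J'|\leq|J_1'|,\alpha'\beta'}E_M^{\ast}(s,L^{J'}h_{\alpha'\beta'})^{1/2}+\sum_{|J'|\leq|J_1'|,\alpha'\beta'}\int_2^s\tau^{-1}E_M^{\ast}(\tau,L^{J'}h_{\alpha'\beta'})^{1/2}\,d\tau.
\]
Combining this weighted $L^2$ bound with the sup-norm bound $|s^2(s/t)^{1-\delta}\del^IL^{J_2'}\del_t\del_t\phi|\lesssim CC_1\vep s^{C(C_1\vep)^{1/2}}$ (sharp decay plus Klein-Gordon) produces the four boost-dependent contributions in \eqref{ineq 1 lem 3 pre refined-energy-higher}: both a direct energy and a time integral at the full boost order $|J'|=|J|$ (with decay factor $s^{-1/2}$), and the same pair at strictly lower boost order $|J_1'|<|J|$ (with the admissible loss $s^{-1/2+C(C_1\vep)^{1/2}}$). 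The last family $\hu^{00}\del_\alpha\del_\beta\del^IL^{J'}\phi$ is handled analogously, again via Proposition \ref{prop h00-sup 1}.

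Finally, for the special case $|J|=0$ the boost-only family is empty and the constraint $|J'|<|J|$ leaves no commutator correction of $\del_\alpha\del_\beta$ beyond the identity; the only surviving $\hu^{00}$ contribution is $\hu^{00}\del_\alpha\del_\beta\del^I\phi$, which one now bounds with the stronger pointwise estimate \eqref{ineq h00-sup 1.5} on $\hu^{00}$ and the Klein-Gordon reduction on $\del_\alpha\del_\beta\del^I\phi$, yielding directly the sharpened bound $C(C_1\vep)^2 s^{-1+3\delta}$ of \eqref{ineq 2 lem 3 pre refined-energy-higher}. The hard part throughout is the balancing inside the boost-only family: just enough $L$-derivatives must remain on $\hu^{00}$ to trigger Proposition \ref{prop h00-sup 1}, while $\del^IL^{J_2'}\del_t\del_t\phi$ must keep enough regularity to permit the Klein-Gordon sup-norm step, and it is precisely this constraint that forces both the direct-energy and the time-integrated energy contributions to appear on the right-hand side.
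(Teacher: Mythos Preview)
Your decomposition via \eqref{eq 2 lem 1 nolinear} and overall plan match the paper, but the treatment of the boost-only family $L^{J_1'}\hu^{00}\,\del^IL^{J_2'}\del_t\del_t\phi$ and of the last family $\hu^{00}\del_\alpha\del_\beta\del^IL^{J'}\phi$ has a real gap. You propose to always put $\hu^{00}$ (after the $\hu^{00}_0/\hu^{00}_1$ split) in $L^2$ via Proposition~\ref{prop h00-sup 1} and put the $\phi$-factor in $L^\infty$ via \eqref{ineq 3.2 refined-sup low d}. But those sup-norm bounds need $|I|+|J_2'|\le N-8$; when $|J_1'|$ is small (say $1\le|J_1'|\le N-7$), the $\phi$-factor can carry up to $N+1$ derivatives and no useful sup-norm is available. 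The paper handles this range by the opposite allocation: since $|J_1'|\le N-7$, the refined pointwise estimate \eqref{ineq 2 prop 2 refined-sup} gives $|L^{J_1'}\hu^{00}|\le CC_1\vep\, t^{-1}s^{C(C_1\vep)^{1/2}}$ directly (no split needed), and $(s/t)\del^IL^{J_2'}\del_t\del_t\phi$ goes in $L^2$; likewise for the last family one uses \eqref{ineq 2 lem 3 refined-sup} on $\hu^{00}$ and $L^2$ on $\phi$. This is in fact the origin of the third term $\sum_{|I'|\le|I|+1,\,|J'|<|J|}E_M^*(s,\del^{I'}L^{J'}\phi)^{1/2}$ in the statement---not the generic family, where $|I_2|\le|I|-1$ would only give $|I'|\le|I|$. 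Proposition~\ref{prop h00-sup 1} is invoked only once $|J_1'|\ge N-6$, since then $|I|+|J_2'|\le 6\le N-8$ and the sup-norm on $\phi$ becomes legitimate.

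A smaller gap of the same type: in the generic family with $|I_1|\ge1$, the bound \eqref{ineq h00-sup 1} requires $|I_1|+|J_1|\le N-2$; for $|I_1|+|J_1|\in\{N-1,N\}$ the paper swaps and puts $\del^{I_1}L^{J_1}\hu^{00}$ in $L^2$ via \eqref{ineq h00-L-2 wave} while $\del^{I_2}L^{J_2}\del_t\del_t\phi$ (now with $|I_2|+|J_2|\le1$) goes in $L^\infty$. Also, for $|J|=0$ the surviving $\hu^{00}$ contribution is $\del^{I_1}\hu^{00}\,\del^{I_2}\del_t\del_t\phi$ with $|I_1|\ge1$ from the generic family, not $\hu^{00}\del_\alpha\del_\beta\del^I\phi$; the latter is the fifth family and is empty when $|J|=0$, so \eqref{ineq h00-sup 1} (not \eqref{ineq h00-sup 1.5}) is the relevant estimate there.
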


\begin{proof}
We need to estimate the terms listed in \eqref{eq 2 lem 1 nolinear}. The estimates on first two terms are trivial: one is a null term and the other has a additional decay $t^{-1}$. We just point out that for the first term we need to apply \eqref{ineq 2 homo}, \eqref{ineq 3 homo} combined with \eqref{ineq basic-sup-h} or \eqref{ineq 0 Hardy} and write down their $L^2$ bounds 
\be
\|\del^IL^J{\sl GQQ}_{h\phi} \|_{L^2(\Hcal_s^*)} + \|t^{-1} \del^{I_1}L^{J_1}h_{\mu\nu} \del^{I_2}L^{J_2} \del_\gamma \phi\|_{L^2(\Hcal_s^*)} \leq C(C_1\vep)^2 s^{-1+2\delta}.
\ee
We focus on the last three terms.

\vskip.15cm

\noindent {\sl Term 1.} $\del^{I_1}L^{J_1} \hu^{00} \del^{I_2}L^{J_2} \del_t\del_t\phi$. Recall that $|I_1|\geq 1$. The $L^2$ norm of this term is bounded by a discussion on the following cases:

$\bullet$ Case $1\leq |I_1|+|J_1| \leq N-2$. We apply \eqref{ineq h00-sup 1} combined with the basic energy estimate: 
$$
\left\|\del^{I_1}L^{J_1} \hu^{00} \del^{I_2}L^{J_2} \del_t\del_t\phi\right\|_{L^2(\Hcal_s^*)}
\leq CC_1\vep \left\|t^{-3/2}s^{\delta}(t/s) \,   (s/t) \del^{I_2}L^{J_2} \del_t\del_t\phi\right\|_{L^2(\Hcal_s^*)}
\leq C(C_1\vep)^2 s^{-1+3\delta}.
$$

$\bullet$ Case $N-1 \leq |I_1|+|J_1| \leq N$, then $|I_2|+|J_2| \leq 1\leq N-8$. Then we apply \eqref{ineq h00-L-2 wave} combined with the basic sup-norm estimate for $\del^{I_2}L^{J_2} \del_t\del_t\phi$:
$$
\aligned
\left\|\del^{I_1}L^{J_1} \hu^{00} \del^{I_2}L^{J_2} \del_t\del_t\phi\right\|_{L^2(\Hcal_s^*)}
& \leq  CC_1\vep \left\|(s/t) \del^{I_1}L^{J_1} \hu^{00} \,   (t/s)t^{-3/2}s^{\delta} \right\|_{L^2(\Hcal_s^*)}
\\
& \leq  CC_1\vep s^{-3/2 +\delta} \left\|(s/t) \del^{I_1}L^{J_1} \hu^{00} \right\|_{L^2(\Hcal_s^*)} \leq C(C_1\vep)^2s^{-3/2 +3\delta}.
\endaligned
$$

\vskip.15cm

\noindent {\sl Term 2.} $L^{J_1} \hu^{00} \del^IL^{J_2} \del_t\del_t\phi$. Recall that $|J_1|\geq 1$ so that $|J_2| \leq |J|-1\leq N-1$.

$\bullet$ Case $1\leq |J_1| \leq N-7$. In this case, we apply \eqref{ineq 2 prop 2 refined-sup} to $L^{J_1} \hu^{00}$ (seen as a linear combination of $L^{J_1'}h_{\alpha \beta}$ with $|J_1'|$ plus higher-order corrections): 
$$
\aligned
\left\|L^{J_1} \hu^{00} \del^IL^{J_2} \del_t\del_t\phi\right\|_{L^2(\Hcal_s^*)}
& \leq  CC_1\vep\left\|t^{-1}s^{C(C_1\vep)^{1/2}} \del^IL^{J_2} \del_t\del_t\right\|_{L^2(\Hcal_s^*)}
\\
& \leq  CC_1\vep s^{-1+ C(C_1\vep)^{1/2}} \left\|(s/t) \del^IL^{J_2} \del_t\del_t\right\|_{L^2(\Hcal_s^*)}
\\
& \leq  CC_1\vep s^{-1+ C(C_1\vep)^{1/2}} \sum_{|J'|<|J|}E_{M,c^2}(s, \del^IL^{J'} \phi)^{1/2}.
\endaligned
$$

$\bullet$ Case $N-6\leq |J_1| \leq |J|-1\leq N-1$ then $|I|+|J_2| \leq 6\leq N-8$. 
In this case we apply Proposition
\ref{prop h00-sup 1} to $L^{J_1} \hu^{00}$ and \eqref{ineq 3.1 refined-sup low a}. First of all, by the estimates \eqref{ineq com 2.4} of commutators and  \eqref{ineq 3.1 refined-sup low a}, we deduce that 
$
\left|\del^IL^{J_2} \del_t\del_t\phi\right| \leq CC_1\vep(s/t)^{1-3\delta}s^{-3/2 + C(C_1\vep)^{1/2}}.
$
Then, we have
$$
\aligned
&\left\|L^{J_1} \hu^{00} \del^IL^{J_2} \del_t\del_t\phi\right\|_{L^2(\Hcal_s^*)}
\\
& \leq  \left\|L^{J_1} \hu_0^{00} \del^IL^{J_2} \del_t\del_t\phi\right\|_{L^2(\Hcal_s^*)} +  \left\|L^{J_1} \hu_1^{00} \del^IL^{J_2} \del_t\del_t\phi\right\|_{L^2(\Hcal_s^*)}
\\
& \leq  CC_1\vep\left\|t^{-1} \del^IL^{J_2} \del_t\del_t\phi\right\|_{L^2(\Hcal_s^*)}
 + CC_1\vep \left\|L^{J_1} \hu_1^{00} \,   (s/t)^{1-3\delta}s^{-3/2 + C(C_1\vep)^{1/2}} \right\|_{L^2(\Hcal_s^*)}
\\
& \leq  CC_1s^{-1} \sum_{|I'| \leq |I|+1\atop |J'|<|J|}E_{M,c^2}(s, \del^{I'}L^{J'} \phi)^{1/2}
+ CC_1\vep s^{-1/2 + C(C_1\vep)^{1/2}} \left\|s^{-1}(s/t)^{-1+\delta}L^{J_1} \hu^{00}_1 \right\|_{L^2(\Hcal_s^*)}
\\
& \leq CC_1s^{-1} \sum_{|I'| \leq |I|+1\atop |J'|<|J|}E_{M,c^2}(s, \del^{I'}L^{J'} \phi)^{1/2}
+  CC_1\vep s^{-1/2 + C(C_1\vep)^{1/2}} \left\|s^{-1}(s/t)^{-1+\delta}L^{J_1} \hu^{00}_1 \right\|_{L^2(\Hcal_s^*)}
\\ 
& \quad + CC_1\vep s^{-1/2 + C(C_1\vep)^{1/2}} \sum_{|J'| \leq|J|\atop \alpha, \beta} \int_2^s\tau^{-1}E_M^*(\tau,L^{J'}h_{\alpha \beta})^{1/2}d\tau
+ C(C_1\vep)^2s^{-1/2 + C(C_1\vep)^{1/2}},
\endaligned
$$
where in the last inequality we applied Proposition \ref{prop h00-sup 1}. 

$\bullet$ Case $1\leq J_1=J$ then $|J_2|=0$.

When $|J|\geq N-6$, we see that $|I| \leq 6\leq N-7$ provided by $N\geq 13$. In this case we apply \eqref{ineq 3.2 refined-sup low a} to $\del^IL^{J_2} \del_t\del_t\phi$ and Proposition
\ref{prop h00-sup 1} on $L^{J_1} \hu^{00}$:
$$
\aligned
\left\|L^{J_1} \hu^{00} \del^IL^{J_2} \del_t\del_t\phi\right\|_{L^2(\Hcal_s^*)}
& =  \left\|L^J\hu^{00} \del^I\del_t\del_t\phi\right\|_{L^2(\Hcal_s^*)}
\\
& \leq  CC_1\vep \left\|t^{-1} \del^I\del_t\del_t\phi\right\|_{L^2(\Hcal_s^*)}
 + CC_1\vep\left\|(s/t)^{1-3\delta}s^{-3/2}L^J\hu^{00}_1\right\|_{L^2(\Hcal_s^*)}.
\endaligned
$$
The first term is bounded by $CC_1\vep s^{-1} \sum_{|I'| \leq|I|+1}E_{M,c^2}(\del^{I'} \phi)^{1/2}$. For the second term, by applying Proposition \ref{prop h00-sup 1}, we have 
$$
\aligned
&\left\|(s/t)^{1-3\delta}s^{-3/2}L^J\hu^{00}_1\right\|_{L^2(\Hcal_s^*)}
\\
& \leq  \left\|(s/t)^{1-3\delta}s^{-3/2}s(s/t)^{1- \delta} \,   s^{-1}(s/t)^{-1+\delta}L^J\hu^{00}_1\right\|_{L^2(\Hcal_s^*)}
\\
& \leq CC_1\vep s^{-1/2} \sum_{|J_1'| \leq|J|\atop \alpha, \beta}E_M^*(s,L^{J_1'}h_{\alpha \beta})^{1/2}
\\
& \quad + CC_1\vep s^{-1/2} \sum_{|J_1'| \leq|J|\atop \alpha, \beta} \int_2^s\tau^{-1}E_M^*(\tau,L^{J_1'}h_{\alpha \beta})^{1/2}d\tau
  +  C(C_1\vep)^2s^{-1/2}.
\endaligned
$$

When $|J| \leq N-7$, we apply \eqref{ineq 2 prop 2 refined-sup} to $L^J\hu^{00}$:
$$
\aligned
\left\|L^{J_1} \hu^{00} \del^IL^{J_2} \del_t\del_t\phi\right\|_{L^2(\Hcal_s^*)}
& \leq  CC_1\vep s^{-1+ C(C_1\vep)^{1/2}} \|(s/t) \del^I\del_t\del_t\phi\|_{L^2(\Hcal_s^*)}
\\
& \leq  CC_1\vep s^{-1+ C(C_1\vep)^{1/2}}E_{M,c^2}(\del^I\del_t\phi)^{1/2}.
\endaligned
$$ 
We emphasize that such a term does not exist  when $|J|=0$ since the condition $1\leq |J_1| \leq |J|$ is then never satisfied.
 
\vskip.15cm

\noindent {\sl Term 3.} $\hu^{00} \del_{\gamma} \del_{\gamma'} \del^IL^{J'}$ with $|J'|<|J|$. This term is easier. We apply \eqref{ineq 2 lem 3 refined-sup} to $\hu^{00}$: 
$$
\aligned
\left\|\hu^{00} \del_{\gamma} \del_{\gamma'} \del^IL^{J'} \phi\right\|_{L^2(\Hcal_s^*)}
& \leq  CC_1\vep s^{-1+ C(C_1\vep)^{1/2}} \left\|(s/t) \del_{\gamma} \del_{\gamma'} \del^IL^{J'} \phi\right\|_{L^2(\Hcal_s^*)}
\\
& \leq  CC_1\vep s^{-1+ C(C_1\vep)^{1/2}} \sum_{|I'| \leq |I|+1\atop |J'|<|J|}E_M^*(s, \del^{I'}L^{J'} \phi)^{1/2}.
\endaligned
$$

We now collect all the above estimates together and the desired result \eqref{ineq 1 lem 3 pre refined-energy-higher} is proved. Furthermore, when $|J|=0$, the condition $|J'|<|J|$ in the sum of the third, the fourth and fifth term in the right-hand side of \eqref{ineq 1 lem 3 pre refined-energy-higher} indicate that these three terms disappear. Furthermore, when $|J|=0$, the term $L^{J_1} \hu^{00} \del^IL^{J_2} \del_t\del_t\phi$ and $\hu^{00} \del_{\gamma} \del_{\gamma'} \del^IL^{J'}$ do not exist (since they demand $|J_1|\geq 1$ and $|J'|<|J|$). So,  the only existent terms are $\del^{I_1} \hu^{00} \del^{I_2} \del_t\del_t\phi$, the null terms and the commutative terms with additional $t^{-1}$ decay.
They can be bounded by $C(C_1\vep)^2s^{-1+2\delta}$ and this concludes the derivation of \eqref{ineq 2 lem 3 pre refined-energy-higher}.
\end{proof}


\subsection{Main estimates in this section}

\begin{proposition} \label{prop 1 refined-energy higher}
Let the bootstrap assumptions \eqref{ineq energy assumption 1} and \eqref{ineq energy assumption 2} hold with $C_1/C_0$ sufficiently large, then there exists a positive constant $\vep_3$ sufficiently small so that for all $\vep\leq \vep_3$ and
for $N-3\leq |I|+|J| \leq N$
\bel{ineq 1 prop 1 refined-energy higher}
E^*_{M}(s, \del^IL^J h_{\alpha \beta})^{1/2} \leq \frac{1}{2}C_1\vep s^{C(C_1\vep)^{1/2}},
\ee
\bel{ineq 2 prop 1 refined-energy higher}
E_{M, c^2}(s, \del^IL^J\phi)^{1/2} \leq \frac{1}{2}C_1\vep s^{1/2 + C(C_1\vep)^{1/2}}.
\ee
\end{proposition}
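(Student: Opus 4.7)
The plan is to apply the curved-spacetime energy estimates of Propositions~\ref{prop 1 energy-W} and \ref{prop energy 2KG} to $\del^IL^Jh_{\alpha\beta}$ and $\del^IL^J\phi$ respectively, substitute all the source-term bounds derived in Section~\ref{section-10} as well as those from Lemmas~\ref{lem 0 refined-energy-low-h} and~\ref{lem h00-sup 3}, and then close the system by an induction on $|J|$ together with Gronwall's inequality. The equivalence between flat and curved energies supplied by Lemma~\ref{lem h00-sup 2} ensures the constant $\kappa$ in the hypothesis of Proposition~\ref{prop 1 energy-W} is available.

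First I would carry out the argument for the metric. Applied to the $h$-equation, Proposition~\ref{prop 1 energy-W} gives
\[
E_M(s,\del^IL^Jh_{\alpha\beta})^{1/2}\leq CC_0\vep+Cm_S+\int_2^s\!\!\bigl(\|\del^IL^JF_{\alpha\beta}\|_{L^2(\Hcal_\tau^*)}+\|[\del^IL^J,H^{\mu\nu}\del_\mu\del_\nu]h_{\alpha\beta}\|_{L^2(\Hcal_\tau^*)}+\mathcal{N}\bigr)d\tau,
\]
with $\mathcal{N}$ collecting the matter source and the $M_{\alpha\beta}$ correction. The $F_{\alpha\beta}$ term is controlled by Lemma~\ref{lem 0 refined-energy-low-h}, the quasilinear commutator by Lemma~\ref{lem 2 pre refined-energy-higher}, the semilinear matter source by Lemma~\ref{lem 1 pre refined-energy-higher}, and $M_{\alpha\beta}$ by \eqref{ineq lem h00-sup 3}. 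All lower-order terms ($|I'|+|J'|\leq N-4$) that appear in those lemmas can be absorbed by the refined low-order bound \eqref{ineq 1 prop 1 refined-energy-low-h} of Proposition~\ref{prop 1 refined-energy-low-h}, yielding an $s^{CC_1\vep}$ factor which is subsumed in $s^{C(C_1\vep)^{1/2}}$. What remains are terms of the form $CC_1\vep\,\tau^{-1+C(C_1\vep)^{1/2}}E_M^*(\tau,\del^{I'}L^{J'}h)^{1/2}$ with $|I'|+|J'|=|I|+|J|$ and $|J'|\leq|J|$, plus an $\vep^2$-integrable remainder. An induction on $|J|$ (base case $|J|=0$ follows from Gronwall directly) then closes the $h$-estimate with bound $\tfrac12C_1\vep\,s^{C(C_1\vep)^{1/2}}$, provided $C_1/C_0$ is large enough and $\vep$ small enough, exactly as in the proof of Proposition~\ref{prop 1 refined-energy-low-h}.

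Next, I would apply Proposition~\ref{prop energy 2KG} to $\del^IL^J\phi$. The $M$-term is handled by \eqref{ineq lem h00-sup 3.5 b}, and the commutator by Lemma~\ref{lem 3 pre refined-energy-higher}. Inspecting \eqref{ineq 1 lem 3 pre refined-energy-higher}, the only dangerous contribution is the one proportional to $CC_1\vep\,s^{-1/2}\sum_{|J'|=|J|}E_M^*(s,L^{J'}h_{\alpha\beta})^{1/2}$ (and its time-integrated version), because it is only $\tau^{-1/2}$-decaying and it involves the metric energy at the \emph{same} total order. But at this stage the improved metric bound \eqref{ineq 1 prop 1 refined-energy higher} is already available from the previous step, so this term contributes at most $CC_1^2\vep^2\int_2^s\tau^{-1/2+C(C_1\vep)^{1/2}}d\tau\lesssim C_1^2\vep^2\,s^{1/2+C(C_1\vep)^{1/2}}$, which is of the desired size. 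The other summands in \eqref{ineq 1 lem 3 pre refined-energy-higher} produce a self-coupling through $E_{M,c^2}(s,\del^{I'}L^{J'}\phi)^{1/2}$ with coefficient $\tau^{-1+C(C_1\vep)^{1/2}}$, to which Gronwall applies and produces an extra factor $s^{CC_1\vep}\subset s^{C(C_1\vep)^{1/2}}$. Combining the base energy $CC_0\vep$ with these contributions and choosing $C_1/C_0$ large, one closes $E_{M,c^2}(s,\del^IL^J\phi)^{1/2}\leq\tfrac12C_1\vep\,s^{1/2+C(C_1\vep)^{1/2}}$.

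The main obstacle is the order-preserving coupling between the two fields in Lemma~\ref{lem 3 pre refined-energy-higher}: the top-order term $L^{|J|}h$ appears in the $\phi$-commutator weighted only by $s^{-1/2}$, so the metric estimate \emph{must} be closed before (and independently of) the scalar-field estimate. This is exactly why the statement accepts $s^{1/2+C(C_1\vep)^{1/2}}$ on $\phi$ rather than the smaller $s^{C(C_1\vep)^{1/2}}$ imposed on $h$, and it forces the induction on $|J|$ to treat $h$ before $\phi$ at each level. A secondary difficulty is the careful use of Proposition~\ref{prop h00-sup 1} to control the high-order $L^J\hu^{00}$ factor in the $\phi$-commutator, which is where the weighted Hardy inequality enters and where the integral term $\int_2^s\tau^{-1}E_M^*(\tau,L^{J'}h)^{1/2}d\tau$ originates.
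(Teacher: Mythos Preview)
There is a genuine gap in your decoupling strategy. You claim that, after absorbing the low-order contributions, the only terms left in the metric energy inequality are the self-coupling $CC_1\vep\,\tau^{-1+C(C_1\vep)^{1/2}}E_M^*(\tau,\del^{I'}L^{J'}h)^{1/2}$ and an ``$\vep^2$-integrable remainder''. But this is not so: both Lemma~\ref{lem 1 pre refined-energy-higher} and Lemma~\ref{lem 2 pre refined-energy-higher} produce a top-order scalar-field term $CC_1\vep\,\tau^{-3/2}\sum_{|I'|\leq|I|}E_{M,c^2}(\tau,\del^{I'}L^{J}\phi)^{1/2}$ at the \emph{same} level $|J|$. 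If you bound this by the bootstrap hypothesis $E_{M,c^2}^{1/2}\leq C_1\vep\,\tau^{1/2+\delta}$, the integrand becomes $C(C_1\vep)^2\tau^{-1+\delta}$, whose time integral is $\sim(C_1\vep)^2 s^{\delta}$; after Gronwall this yields $E_M^*(s,\del^IL^Jh)^{1/2}\lesssim(C_1\vep)^2 s^{\delta+CC_1\vep}$, which for small $\vep$ is strictly larger than the target $\tfrac12 C_1\vep\,s^{C(C_1\vep)^{1/2}}$ (since $\delta$ is fixed while $C(C_1\vep)^{1/2}\to0$). Thus the metric estimate cannot be ``closed before and independently of'' the scalar-field estimate.

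The paper resolves this by treating the two energies as a genuinely coupled system at each induction level $|J|=m$: one sets $W_m(s)=\sum E_M^*(s,\del^IL^Jh)^{1/2}$ and $K_m(s)=s^{-1/2}\sum E_{M,c^2}(s,\del^IL^J\phi)^{1/2}$ and derives the pair of inequalities \eqref{ineq 8 proof prop 1 refined-energy higher}, in which $W_m$ depends on $K_m$ through $\int\tau^{-1}K_m$ and $K_m$ depends on $W_m$ through $s^{-1/2}\int\tau^{-1/2}W_m$. The special ODE argument of Lemma~\ref{lem f refined-energy higher} then closes both simultaneously with the common growth $s^{C(C_1\vep)^{1/2}}$. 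Only at $|J|=0$ does the scalar equation decouple (by \eqref{ineq 2 lem 3 pre refined-energy-higher}), allowing the sequential order ``$\phi$ first, then $h$'' that you propose --- but in the opposite direction to yours.
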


The proof will be split into two parts. First, we will derive \eqref{ineq 1 prop 1 refined-energy higher} and \eqref{ineq 2 prop 1 refined-energy higher} in the case $|J|=0$. In a second part, we will propose an induction argument for the case $|J| \neq 0$.

\begin{proof}[Proof of Proposition \ref{prop 1 refined-energy higher} in the case $|J|=0$]
In this case, the following estimates are direct by Lemma \ref{lem 0 refined-energy-low-h}, \eqref{ineq 1 lem 1 pre refined-energy-higher}, \eqref{ineq 1 lem 2 pre refined-energy-higher} and \eqref{ineq 1 lem 3 pre refined-energy-higher} : 
$$
\|\del^IF_{\alpha \beta} \|_{L^2(\Hcal_s^*)}
\leq CC_1\vep s^{-1} \sum_{|I'| \leq |I|\atop \alpha', \beta'}E_M^*\big(s, \del^{I'}h_{\alpha'\beta'} \big)^{1/2} + C(C_1\vep)^2s^{-3/2 +2\delta}
$$ 
and 
$$
\aligned
&\left\|\del^I\big(\del_{\alpha} \phi\del_{\beta} \phi\big) \right\|_{L^2(\Hcal_s^*)}
+ \left\|\del^I\left(\phi^2\right) \right\|_{L^2(\Hcal_s^*)}
 \leq  C(C_1\vep)s^{-3/2} \sum_{|I'| \leq|I|}E_{M,c^2}(s, \del^{I'} \phi)^{1/2}
\\
& \leq C(C_1\vep)^2s^{-3/2 +\delta} + C(C_1\vep)s^{-3/2} \sum_{N-3\leq|I'| \leq|I|}E_{M,c^2}(s, \del^{I'} \phi)^{1/2}, 
\endaligned
$$
while 
$$
\|[\del^I,h^{\mu\nu} \del_\mu\del_\nu]h_{\alpha \beta} \|_{L^2(\Hcal_s^*)} \leq
C(C_1\vep)^2 s^{-3/2 +3\delta}+ CC_1\vep s^{-3/2} \sum_{N-3\leq |I'| \leq |I|}E_{M,c^2}(s, \del^{I'}L^{J} \phi)^{1/2}, 
$$ 
$$
\aligned
&\left\|[\del^I,h^{\mu\nu} \del_\mu\del_\nu]\phi\right\|_{L_f^2(\Hcal_s)} \leq
C(C_1\vep)^2 s^{-1+3\delta}.
\endaligned
$$
And by Lemma \ref{lem h00-sup 3}, we obtain 
$M_{\alpha \beta}[\del^IL^J h](s) \leq C(C_1\vep)^2s^{-3/2 +2\delta}$
and 
$$
M[\del^IL^J\phi](s) \leq C(C_1\vep)^2s^{-1+2\delta}.
$$ 
We conclude that in view of \eqref{eq energy 10} and \eqref{eq energy 2}
(by observe that \eqref{eq energy 1} is guaranteed by Lemma \ref{lem h00-sup 2}): 
\bel{ineq 2 proof prop 1 refined-energy higher}
\aligned
E_{M,c^2}(s, \del^I\phi)^{1/2}
\leq CC_0 \, \vep  + C(C_1\vep)^2s^{2\delta}.
\endaligned
\ee
\bel{ineq 1' proof prop 1 refined-energy higher}
\aligned
E_M^*(s, \del^Ih_{\alpha \beta})^{1/2}
& \leq  CC_0 \, \vep + C(C_1\vep)^2
 + CC_1\vep\sum_{|I'| \leq |I|\atop \alpha', \beta'} \int_2^s\tau^{-1}E_M^*\big(\tau, \del^{I'}h_{\alpha'\beta'} \big)^{1/2} d\tau
\\
& \quad +  CC_1\vep\sum_{ N-3\leq |I'| \leq |I|} \int_2^s\tau^{-3/2} E_{M,c^2}(\tau, \del^{I'} \phi)^{1/2}d\tau
\endaligned
\ee
Substituting \eqref{ineq 2 proof prop 1 refined-energy higher} into \eqref{ineq 1' proof prop 1 refined-energy higher}, we obtain
\bel{ineq 1 proof prop 1 refined-energy higher}
\aligned
E_M^*(s, \del^Ih_{\alpha \beta})^{1/2}
& \leq  CC_0 \, \vep + C(C_1\vep)^2
+ CC_1\vep\sum_{|I'| \leq |I|\atop \alpha', \beta'} \int_2^s\tau^{-1}E_M^*\big(\tau, \del^{I'}h_{\alpha'\beta'} \big)^{1/2} d\tau.
\endaligned
\ee

Now, in view of \eqref{ineq 1 proof prop 1 refined-energy higher}, we introduce the notation
$
Y(s) : = \sum_{|I| \leq N\atop \alpha, \beta}E_M^*(s, \del^I h_{\alpha \beta})^{1/2}. 
$
With this notation, the estimate \eqref{ineq 1 proof prop 1 refined-energy higher} transforms into  
\bel{ineq 3 proof prop 1 refined-energy higher a}
\aligned
Y(s)& \leq  CC_0 \, \vep + C(C_1\vep)^2
+ CC_1\vep \int_2^s \tau^{-1} Y(\tau) d\tau. 
\endaligned
\ee 
Then Gronwall's inequality leads us to
\bel{ineq 5 proof prop 1 refined-energy higher}
\sum_{|I| \leq N\atop \alpha, \beta}E_M(s, \del^{I'}h_{\alpha \beta})^{1/2} = Y(s) \leq C(C_0 \, \vep + (C_1\vep)^2)s^{CC_1\vep}. 
\ee 
In \eqref{ineq 2 proof prop 1 refined-energy higher} and \eqref{ineq 5 proof prop 1 refined-energy higher}, we take  ${\vep_2}_0  = \frac{C_1-2CC_0}{2C_1^2}$ and for all $0\leq \vep\leq {\vep_2}_0 $, we obtain
$$
E_M(s, \del^Ih_{\alpha \beta})^{1/2} \leq \frac{1}{2}C_1\vep s^{CC_1\vep}
$$
and 
$$
E_{M,c^2}(s, \del^Ih_{\alpha \beta})^{1/2} \leq \frac{1}{2}C_1\vep s^{CC_1\vep}.
$$ 
This yields the desired result for $|J|=0$.
\end{proof}

\begin{proof}[Proof of Proposition \ref{prop 1 refined-energy higher}, Case $1\leq |J| \leq N$]. We proceed by induction on $|J|$ and assume that for $|I|+|J'| \leq N-1$ and $|J'| \leq m-1<N$
\be
\label{eq induction refined-energy higher}
\aligned
E_M(s, \del^IL^{J'}h_{\alpha \beta})^{1/2}
& \leq C(C_0 \, \vep + (C_1\vep)^2)s^{C(C_1\vep)^{1/2}}, \quad
\\
E_{M,c^2}(s, \del^IL^{J'} \phi)^{1/2}
& \leq C(C_0 \, \vep + (C_1\vep))^2s^{1/2 + C(C_1\vep)^{1/2}}.
\endaligned
\ee
We will prove that it is again valid for $|J| =m \leq N$ by using Propositions \ref{prop 1 energy-W} and \ref{prop energy 2KG}. From the induction assumption,
$$
\aligned
\|\del^IL^JF_{\alpha \beta} \|_{L^2(\Hcal_s^*)}
& \leq  CC_1\vep s^{-1} \sum_{|I'| \leq|I|\atop \alpha, \beta} E_M^*(s, \del^{I'}L^Jh_{\alpha \beta})^{1/2}
  +  CC_1\vep \left(C_0 \, \vep + (C_1\vep)^2\right)s^{-1+ C(C_1\vep)^{1/2}}
\endaligned
$$
thanks to \eqref{ineq 1 lem 0 refined-energy-low-h}, 
$$
\aligned
& \left\|\del^IL^J\left(\del_{\alpha} \phi\del_\beta \phi\right) \right\|_{L^2(\Hcal_s^*)} + \left\|\del^IL^J\left(\phi^2\right) \right\|_{L^2(\Hcal_s^*)}
\\
& \leq  CC_1\vep s^{-3/2} \sum_{|I'| \leq|I|}E_{M,c^2}(s, \del^{I'}L^J\phi)^{1/2}
   +  CC_1\vep \left(C_0 \, \vep+ (C_1\vep)^2\right)s^{-1+ C(C_1\vep)^{1/2}} 
\endaligned
$$
thanks to \eqref{ineq 1 lem 1 pre refined-energy-higher}, and finally in view of \eqref{ineq 1 lem 2 pre refined-energy-higher}. 
$$
\aligned
& \left\|[\del^IL^J,h^{\mu\nu} \del_\mu\del_\nu]h_{\alpha \beta} \right\|_{L^2(\Hcal_s^*)}
\\
&
 \leq CC_1\vep s^{-1} \sum_{|J'|=|J|\atop |I'| \leq|I|}E_M^*(s, \del^IL^{J'}h_{\alpha \beta})^{1/2}
+ CC_1\vep\left(C_0 \, \vep + (C_1\vep)^2\right)s^{-1+ C(C_1\vep)^{1/2}}.
\endaligned
$$
On the other hand, in view of \eqref{ineq 1 lem 3 pre refined-energy-higher}, we have 
$$
\aligned
& \left\|[\del^IL^J,h^{\mu\nu} \del_\mu\del_\nu]\phi\right\|_{L^2(\Hcal_s^*)}
\\
& \leq CC_1\vep s^{-1/2} \sum_{|J'|=|J|\atop \alpha, \beta}E_M^*(s,L^{J'}h_{\alpha \beta})^{1/2}
+ CC_1\vep s^{-1/2} \sum_{|J'|=|J|\atop \alpha, \beta} \int_2^s\tau^{-1}E_M^*(\tau,J^{J'}h_{\alpha \beta})^{1/2}
\\
&\quad + CC_1\vep\left(C_0+(C_1\vep)^2\right)s^{-1/2 + C(C_1\vep)^{1/2}}
\\& \quad 
 + CC_1\vep\left(C_0 \, \vep + (C_1\vep)^2\right)s^{-1/2 + C(C_1\vep)^{1/2}}
\int_2^s\tau^{-1+ C(C_1\vep)^{1/2}}d\tau
 + C(C_1\vep)^2s^{-1/2 + C(C_1\vep)^{1/2}}
\\
& \leq CC_1\vep s^{-1/2} \sum_{|J'|=|J|\atop \alpha, \beta}E_M^*(s,L^{J'}h_{\alpha \beta})^{1/2}
+ CC_1\vep s^{-1/2} \sum_{|J'|=|J|\atop \alpha, \beta} \int_2^s\tau^{-1}E_M^*(\tau,J^{J'}h_{\alpha \beta})^{1/2}
\\
& \quad
+ C(C_1\vep)^2 s^{-1/2 + C(C_1\vep)^{1/2}}.
\endaligned
$$
Also, in view of \eqref{ineq lem h00-sup 3} we have
$M_{\alpha \beta}[\del^IL^J h]\leq C (C_1\vep)^2 s^{-3/2 +2\delta}$ for $|I|+|J| \leq N$.  

With 
$$
W_m(s) := \sum_{|J|=m, \alpha, \beta \atop |I|+|J| \leq N}E_M(s, \del^IL^Jh_{\alpha \beta})^{1/2}
$$
and 
$$
K_m(s) := s^{-1/2} \sum_{|J|=m\atop |I|+|J| \leq N}E_{M,c^2}(s, \del^IL^J\phi)^{1/2},
$$
the energy estimates \eqref{eq energy 2} and \eqref{eq energy 10} lead us to a system of integral inequalities:
\bel{ineq 8 proof prop 1 refined-energy higher}
\aligned
W_m(s)& \leq  C\left(C_0 \, \vep + (C_1\vep)^2\right)s^{C(C_1\vep)^{1/2}} + CC_1\vep\int_2^s\tau^{-1} \left(W_m(\tau) + K_m(\tau) \right) \, d\tau 
\\
K_m(s)& \leq C\left(C_0 \, \vep + (C_1\vep)^2\right)s^{ C(C_1\vep)^{1/2}}
+ CC_1\vep s^{-1/2} \int_2^s\tau^{-1/2}W_m(\tau) \, d\tau
\\
& \quad +  CC_1\vep s^{-1/2} \int_2^s \tau^{-1/2} \int_2^\tau \eta^{-1}W_m(\eta)d\eta d\tau. 
\endaligned
\ee
Lemma \ref{lem f refined-energy higher} stated and proven below will guarantee that \eqref{ineq 8 proof prop 1 refined-energy higher} leads us 
$$
W_m(s) + K_m(s) \leq C\left(C_0 \, \vep + (C_1\vep)^2\right) s^{C(C_1\vep)^{1/2}}.
$$
This leads us to the desired $|J|=m$ case. Then, by induction, \eqref{ineq 1 prop 1 refined-energy higher} is valid for all $|J|=m\leq N$. We see that we can choose $\vep_3 := \frac{C_1-2CC_0}{2CC_1^2}$ with $C_1> 2CC_0$, then   
$$
W_m(s) + K_m(s) \leq \frac{1}{2}C_1\vep s^{C(C_1\vep)^{1/2}}
$$
for $0\leq \vep \leq \vep_3$.
This concludes the proof of Proposition \ref{prop 1 refined-energy higher}.
\end{proof}

\begin{lemma} \label{lem f refined-energy higher}
Let $W$ and $K$ be positive, locally integrable functions defined in $[0, T]$, and suppose that
\bel{ineq 1 lem f refined-energy higher}
\aligned
W(s)& \leq  C\left(C_0 \, \vep + (C_1\vep)^2\right)s^{C(C_1\vep)^{1/2}} + CC_1\vep\int_2^s\tau^{-1} \left(W(\tau) + K(\tau) \right) \, d\tau, 
\\
K(s)& \leq C\left(C_0 \, \vep + (C_1\vep)^2\right)s^{C(C_1\vep)^{1/2}}
+ CC_1\vep s^{-1/2} \int_2^s\tau^{-1/2}W(\tau) \, d\tau
\\
&
\quad 
 +  CC_1\vep s^{-1/2} \int_2^s \tau^{-1/2} \int_2^\tau \eta^{-1}W(\eta)d\eta d\tau
\endaligned
\ee
hold for some constant $C>0$ and a sufficiently small constant $C_1\vep>0$.
Then, one has 
$$
W(s) + K(s) \leq C\left(C_1\vep + (C_1\vep)^2\right)s^{C(C_1\vep)^{1/2}}, \quad s\in[0,T].
$$
\end{lemma}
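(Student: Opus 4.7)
The plan is to set up a continuous bootstrap on $U(s) := W(s) + K(s)$ with the ansatz $U(s) \leq A s^\sigma$ on $[2,T]$, where $\sigma := C'(C_1\vep)^{1/2}$ is a growth exponent to be chosen with $C'$ large compared to the structural constant $C$ of \eqref{ineq 1 lem f refined-energy higher}, and $A_0 := C(C_0\vep + (C_1\vep)^2)$ is the forcing amplitude. The goal is to show that the choice $A := 4 A_0$ closes the bootstrap, at which point the conclusion follows (after relabelling $C' \to C$ and using $C_0 \leq C_1$).

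The first technical step would be to decouple the double integral appearing in the bound on $K$: by Fubini and $\int_\eta^s \tau^{-1/2}d\tau \leq 2 s^{1/2}$,
\begin{equation*}
\int_2^s \tau^{-1/2}\!\!\int_2^\tau \eta^{-1}W(\eta)\, d\eta\, d\tau \leq 2 s^{1/2}\int_2^s \eta^{-1}W(\eta)\, d\eta,
\end{equation*}
reducing the $K$-bound to a pure $W$-input. Inserting the ansatz $W(\tau) \leq A\tau^\sigma$ and using the elementary estimates $\int_2^s \tau^{-1/2 + \sigma}d\tau \leq 2 s^{1/2+\sigma}$ and $\int_2^s \tau^{\sigma - 1}d\tau \leq \sigma^{-1}s^\sigma$, I would obtain
\begin{equation*}
W(s) \leq A_0 s^\sigma + CC_1\vep \, A \, \sigma^{-1} s^\sigma, \qquad K(s) \leq A_0 s^\sigma + 2 CC_1\vep \, A \, s^\sigma + 2 CC_1\vep \, A \, \sigma^{-1} s^\sigma.
\end{equation*}

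Summing the two bounds and using $CC_1\vep \, \sigma^{-1} = (C/C')(C_1\vep)^{1/2}$, this gives
\begin{equation*}
U(s) \leq 2 A_0 s^\sigma + \bigl[\, 2 C C_1\vep + 3 (C/C')(C_1\vep)^{1/2} \,\bigr] A \, s^\sigma.
\end{equation*}
Choosing $C'$ large enough (relative to $C$) and $C_1\vep$ small enough, the bracket can be made at most $1/4$; with $A = 4 A_0$ this yields $U(s) \leq 3 A_0 s^\sigma < A s^\sigma$, strictly improving the ansatz. Combined with the trivial initial inequality $U(2) \leq 2 A_0 \cdot 2^\sigma < A \cdot 2^\sigma$, a standard continuous-induction argument then closes the bootstrap on the whole interval $[2, T]$, yielding $U(s) \leq 4 A_0 s^\sigma$, which is the required estimate.

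The only subtle point — and the main mild obstacle — is that the coupling integral $\int_2^s \tau^{-1}U(\tau)\, d\tau$ gains only a factor $\sigma^{-1}$, so the product $CC_1\vep \, \sigma^{-1}$ has size $(C_1\vep)^{1/2}$ rather than $C_1\vep$. This is precisely what forces the choice $\sigma \sim (C_1\vep)^{1/2}$ in the ansatz and explains the appearance of the square-root exponent in the conclusion, which matches the exponent already present in the forcing term of the hypothesis; one cannot hope to reduce this exponent further through this integral-inequality method alone.
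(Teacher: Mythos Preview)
Your proposal is correct and follows essentially the same idea as the paper. The paper packages the argument slightly differently: instead of a continuous bootstrap, it introduces the weighted suprema $W^*(s):=\sup_{\tau\le s}\tau^{-\sigma}W(\tau)$ and $K^*(s)$ (with the \emph{same} exponent $\sigma=C(C_1\vep)^{1/2}$ as in the hypothesis), substitutes these into the integral inequalities, and arrives directly at an algebraic inequality of the form $W^*+K^*\le C(C_0\vep+(C_1\vep)^2)+C(C_1\vep)^{1/2}(W^*+K^*)$, which is then solved by absorption. This avoids the continuity argument and the need to enlarge the exponent from $C$ to $C'$, but the crucial mechanism---the gain of only $\sigma^{-1}$ from the $\tau^{-1}$ integral forcing the coefficient to be of order $(C_1\vep)^{1/2}$---is exactly the one you identify. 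Your Fubini reduction of the double integral is a clean way to handle the $K$-inequality; the paper leaves that step implicit.
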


\begin{proof}
We define
$$
W^*(s) := \sup_{\tau\in[0,s]} \left\{\tau^{-C(C_1\vep)^{1/2}}W(\tau) \right\}
$$
 as well as 
$$
K^*(s) := \sup_{s\in[0,s]} \left\{\tau^{-C(C_1\vep)^{1/2}}K(\tau) \right\}.
$$
With this notation, \eqref{ineq 1 lem f refined-energy higher} yields us to (after taking the supremum over $s$)
$$
\aligned
W^*(s)& \leq  C\left(C_0 \, \vep + (C_1\vep)^2\right)
+ CC_1\vep s^{-C(C_1\vep)^{1/2}} \big( W^*(s) + K^*(s) \big) \int_2^s \tau^{-1+ C(C_1\vep)^{1/2}}d\tau, 
\endaligned
$$
which leads us to
$$
\aligned
W^*(s) & \leq  C\left(C_0 \, \vep + (C_1\vep)^2\right) + C(C_1\vep)^{1/2} \left(W^*(s) + K^*(s) \right).
\endaligned
$$
A similar argument can be applied to estimate $K$ and we also find 
\be
K^*(s) \leq C\left(C_0 \, \vep + (C_1\vep)^2\right) + CC_1\vep W^*(s) + C(C_1\vep)^{1/2}W^*(s).
\ee
By taking the sum of the above two estimates and when $(C_1\vep)$ is sufficiently small, there exists a constant $\vep _4>0$, such that if $\vep\leq C_1^{-1} \vep_4$,
\be
W^*(s) + K^*(s) \leq  C\left(C_0 \, \vep + (C_1\vep)^2\right) + C(C_1\vep)^{1/2} \left(W^*(s) + K^*(s) \right).
\ee
Since $C(C_1\vep)^{1/2} \leq 1/2$ (for $C_1\vep$ sufficiently small) we have
$$
W^*(s) + K^*(s) \leq  C\left(C_0 \, \vep + (C_1\vep)^2\right), 
$$
which leads us to the desired result.
\end{proof}


\subsection{Applications to the derivation of refined decay estimates}
\label{subsec energy-high-app}

With the refined energy at higher-order, we can establish some additional refined decay estimates. This subsection is totally parallel to Section \ref{subsec energy-low-app}. First, by the global Sobolev inequality, for $|I|+|J| \leq N-2$: 
\bel{ineq 1 app reined-energy high}
|\del^IL^J\del_{\gamma}h_{\alpha \beta}| + |\del_{\gamma} \del^IL^Jh_{\alpha \beta}| \leq CC_1\vep t^{-1/2}s^{-1 + C(C_1\vep)^{1/2}},
\ee
\bel{ineq 2 app reined-energy high}
|\del^IL^J\delu_ah_{\alpha \beta}| + |\delu_a \del^IL^Jh_{\alpha \beta}| \leq CC_1\vep t^{-3/2}s^{C(C_1\vep)^{1/2}}.
\ee
Based on this improved sup-norm estimate, the following estimates are direct by integration along the rays $\{(t, \lambda x)|1\leq \lambda \leq t/|x|\}$:
\bel{ineq 3 app reined-energy high}
|\del^IL^Jh_{\alpha \beta}| \leq CC_1\vep \left(t^{-1}+(s/t)t^{-1/2}s^{C(C_1\vep)^{1/2}} \right).
\ee
From the above estimates and Lemma \ref{lem wave-condition 4}, we have 
\bel{ineq 4 reined-energy high}
\left|\del^IL^J\del_{\alpha} \hu^{00} \right| + \left|\del^IL^J\del_{\alpha} \hu^{00} \right| \leq CC_1\vep t^{-3/2}s^{C(C_1\vep)^{1/2}}
\ee
and also by integration along the rays $\{(t, \lambda x)|1\leq \lambda \leq t/|x|\}$:
\bel{ineq 5 app reined-energy high}
\left|\del^IL^J\hu^{00} \right| \leq CC_1\vep\left(t^{-1} + (s/t)^2t^{-1/2}s^{C(C_1\vep)^{1/2}} \right).
\ee

Two more delicate applications of this higher-order, improved energy estimate are discussed in the following. They are also parallel to Lemmas \ref{lem 0 refined-sup} and \ref{lem 1 refined-energy-low-h}.

\begin{lemma} \label{lem 1 app refined-energy higher}
For $|I|+|J| \leq N-2$, one has 
\bel{ineq 3 lem 1 app refined-energy higher}
\left|\del^IL^J F_{\alpha \beta} \right| \leq C(C_1\vep)^2 t^{-1}s^{-2 + C(C_1\vep)^{1/2}}.
\ee
\end{lemma}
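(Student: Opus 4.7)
The plan is to reduce $\del^I L^J F_{\alpha\beta}$ to a sum of quadratic expressions in first derivatives of $h_{\alpha\beta}$, then apply the two refined sup-norm bounds \eqref{ineq 1 app reined-energy high}--\eqref{ineq 2 app reined-energy high} that were just derived as consequences of the higher-order refined energy estimate (Proposition \ref{prop 1 refined-energy higher}). By Lemma \ref{lem Ricci} we split $F_{\alpha\beta}= P_{\alpha\beta} + Q_{\alpha\beta} + W_{\alpha\beta}$ where $W_{\alpha\beta}$ vanishes identically in the wave gauge. The point is that, since both factors in any product arising from distributing $\del^IL^J$ satisfy $|I_i|+|J_i|\le N-2$, the estimate \eqref{ineq 1 app reined-energy high} applies to both of them and already yields the desired rate, as the product of two terms of size $CC_1\vep \, t^{-1/2}s^{-1+C(C_1\vep)^{1/2}}$ is exactly $C(C_1\vep)^2 t^{-1} s^{-2+C(C_1\vep)^{1/2}}$.

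First I would treat the genuinely null terms $Q_{\alpha\beta}$. After distributing $\del^IL^J$ and expanding in the semi-hyperboloidal frame, each such contribution carries at least one good derivative $\delu_a$ acting on $h_{\alpha\beta}$. Pairing the bound \eqref{ineq 2 app reined-energy high}, namely $|\del^{I_1}L^{J_1}\delu_a h_{\alpha\beta}|\le CC_1\vep \, t^{-3/2}s^{C(C_1\vep)^{1/2}}$, with \eqref{ineq 1 app reined-energy high} on the other factor gives a term bounded by $C(C_1\vep)^2 t^{-2}s^{-1+C(C_1\vep)^{1/2}}$, which dominates $C(C_1\vep)^2 t^{-1}s^{-2+C(C_1\vep)^{1/2}}$ since $s\le t$ in $\Kcal$. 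The commutator-type correction terms ${\sl Com}$ and the cubic remainders ${\sl Cub}$ inherit even more decay thanks to the extra $t^{-1}$ factors and the smallness of $|h|$, hence are harmless.

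Next I would handle the quasi-null piece $P_{\alpha\beta}$ via Lemma \ref{lem P 2}, which requires the smallness assumption $|h|+|\del h|\le \vep_w$; this is guaranteed by the basic sup-norm estimates \eqref{ineq basic-sup 1 generation 1 a}, \eqref{ineq basic-sup-h} together with the refined bounds \eqref{ineq 2 prop 2 refined-sup} and \eqref{ineq 1 app reined-energy high}. By that lemma, $\Pu_{a\beta}$ is already a sum of ${\sl GQS}_h$, ${\sl Com}$, and ${\sl Cub}$ terms, which are disposed of as in the preceding step. The only genuinely new contribution is the scalar component $\Pu_{00}$, which, modulo good/cubic/commutator pieces, reduces to linear combinations of $\del_t\hu_{a\alpha}\,\del_t\hu_{b\beta}$ with homogeneous coefficients of degree zero.

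The main (and essentially only) step requiring care is then this quasi-null product. Distributing $\del^IL^J$ yields
\[
\del^IL^J\bigl(\del_t\hu_{a\alpha}\del_t\hu_{b\beta}\bigr) = \sum_{\substack{I_1+I_2=I\\ J_1+J_2=J}} \del^{I_1}L^{J_1}\del_t\hu_{a\alpha}\cdot \del^{I_2}L^{J_2}\del_t\hu_{b\beta},
\]
and since $|I_i|+|J_i|\le |I|+|J|\le N-2$ automatically, I may apply \eqref{ineq 1 app reined-energy high} to \emph{each} factor (recalling that $\hu_{a\alpha}$ is a linear combination of the $h_{\mu\nu}$ with smooth, homogeneous coefficients of degree zero, up to higher-order corrections absorbed in ${\sl Cub}$). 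Each factor is then bounded by $CC_1\vep\,t^{-1/2}s^{-1+C(C_1\vep)^{1/2}}$, and their product is $C(C_1\vep)^2 t^{-1}s^{-2+C(C_1\vep)^{1/2}}$ as required. The mild challenge is book-keeping: one must verify that the commutator and cubic remainders produced when passing $\del^IL^J$ through homogeneous transition coefficients and through the gauge-reduction identities of Lemma \ref{lem P 2} remain under the same pointwise bound; this is routine given the homogeneity estimates of Section \ref{subsec ineq-homo} and the already-established refined bounds \eqref{ineq 4 reined-energy high}--\eqref{ineq 5 app reined-energy high} for $\hu^{00}$. Collecting the three steps finishes the proof.
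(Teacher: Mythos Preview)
Your proposal is correct and the core mechanism is exactly the paper's: apply the refined decay \eqref{ineq 1 app reined-energy high} to \emph{both} factors of the quadratic product $\del_\gamma h\cdot\del_{\gamma'}h$, yielding $C(C_1\vep)^2 t^{-1}s^{-2+C(C_1\vep)^{1/2}}$. The paper is simply more terse: it observes that, up to cubic corrections, \emph{every} quadratic piece of $F_{\alpha\beta}=Q_{\alpha\beta}+P_{\alpha\beta}$ is a linear combination of $\del_\gamma h_{\alpha\beta}\,\del_{\gamma'}h_{\alpha'\beta'}$, and applies \eqref{ineq 1 app reined-energy high} uniformly without invoking the null/quasi-null splitting or Lemma~\ref{lem P 2} at all. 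Your decomposition into $Q_{\alpha\beta}$ (null) and $\Pu_{00}$ (quasi-null) is the route taken in the lower-order analogue Lemma~\ref{lem 0 refined-sup}, and it gives slightly sharper intermediate bounds on the null pieces ($t^{-2}s^{-1+\cdots}$ instead of $t^{-1}s^{-2+\cdots}$), but this extra precision is not needed for the stated conclusion. In short: same idea, unnecessary case distinction on your side.
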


\begin{proof}
We focus on $F_{\alpha \beta}$. Recall that $F_{\alpha \beta} = Q_{\alpha \beta} + P_{\alpha \beta}$. We see that (omit cubic and higher-order terms, which have good decay), the quadratic part of $F_{\alpha \beta}$ are linear combinations of $\del_{\gamma}h_{\alpha \beta} \del_{\gamma'}h_{\alpha'\beta'}$. Then, we  apply \eqref{ineq 1 app reined-energy high} and see that, for $|I|+|J| \leq N-2$, we find 
$
\del^IL^J\left(\del_{\gamma}h_{\alpha \beta} \del_{\gamma'}h_{\alpha'\beta'} \right) \leq C(C_1\vep)^2t^{-1}s^{-2 + C(C_1\vep)^{1/2}}.
$
\end{proof}

A second refined estimate parallel to Lemma \ref{lem 1 refined-energy-low-h} can now be derived. 
The proof is essentially the same as that of Lemma \ref{lem 1 refined-energy-low-h}. The only difference is that we apply the sup-norm estimates presented in Lemma \ref{lem 1 app refined-energy higher} for $|I|+|J| \leq N-2$.

\begin{lemma} \label{lem 2 app refined-energy higher}
For $|I|+|J| \leq N-3$, one has 
\bel{ineq 1 lem 2 app refined-energy higher}
\left|\del_t\del_t\del^IL^J h_{\alpha \beta} \right| \leq CC_1\vep t^{1/2}s^{-3+(CC_1\vep)^{1/2}}.
\ee
\end{lemma}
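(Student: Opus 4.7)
The plan is to imitate the proof of Lemma \ref{lem 1 refined-energy-low-h}, but feeding in the \emph{refined} sup-norm inputs from Section \ref{subsec energy-high-app} in place of the lower-order ones used there. Concretely, I start from the pointwise identity \eqref{eq 1 lem 1 second-order} of Lemma \ref{lem 1 second-roder}, which gives
\[
|(s/t)^2\del_t\del_t\del^IL^Jh_{\alpha\beta}|
\lesssim |Sc_1[\del^IL^Jh_{\alpha\beta}]|
+|Sc_2[\del^IL^Jh_{\alpha\beta}]|
+|\del^IL^JF_{\alpha\beta}|
+|{\sl QS}_\phi|
+|[\del^IL^J,h^{\mu\nu}\del_\mu\del_\nu]h_{\alpha\beta}|
+|{\sl Cub}|.
\]
I then bound each of the six summands under the assumption $|I|+|J|\le N-3$.

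First, by Lemma \ref{lem 2 second-order}, $Sc_1$ and $Sc_2$ are controlled pointwise by $t^{-1}$ times first-order derivatives of $\del^{I'}L^{J'}h_{\alpha\beta}$ (with $|I'|\le |I|$, $|J'|\le |J|+1$); since $|I|+|J|+1\le N-2$, I can plug in the refined Klainerman--Sobolev bound \eqref{ineq 1 app reined-energy high} to get $|Sc_1|+|Sc_2|\lesssim C_1\vep\, t^{-3/2}s^{-1+C(C_1\vep)^{1/2}}$. Second, $|\del^IL^JF_{\alpha\beta}|$ is exactly the object estimated in Lemma \ref{lem 1 app refined-energy higher} (valid up to $|I|+|J|\le N-2$), giving $(C_1\vep)^2 t^{-1}s^{-2+C(C_1\vep)^{1/2}}$. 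Third, ${\sl QS}_\phi$ is handled as in the proof of \eqref{ineq 1 second-sup}: splitting $\del^IL^J(\del_\alpha\phi\del_\beta\phi)$ by Leibniz and using the basic sup-norm bounds \eqref{ineq basic-sup 2 generation 1 b}--\eqref{ineq basic-sup 2 generation 4} (together with the improved $L^\infty$ bound \eqref{ineq 3.2 refined-sup low d} for $\del^IL^J\del_\alpha\phi$ at order $\le N-8$) yields a bound dominated by $(C_1\vep)^2(s/t)^2 s^{-5/2+2\delta}$, which is absorbable in the target. The Cub contribution is treated as in Lemma \ref{lem 0 biliner}.

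For the commutator, I invoke Lemma \ref{lem 1 commtator-sup I}, which gives
\[
|[\del^IL^J,h^{\mu\nu}\del_\mu\del_\nu]h_{\alpha\beta}|
\lesssim (C_1\vep)^2 t^{-2}s^{-1+C(C_1\vep)^{1/2}}
+C_1\vep\bigl(t^{-1}+(s/t)^2 t^{-1/2}s^{C(C_1\vep)^{1/2}}\bigr)\sum_{|J'|<|J|}|\del_t\del_t\del^IL^{J'}h_{\alpha\beta}|.
\]
(The $t^{\delta}$ factor in \eqref{ineq lem 1 commtator-sup I} is upgraded to $s^{C(C_1\vep)^{1/2}}$ by rerunning the proof with \eqref{ineq 4 reined-energy high}--\eqref{ineq 5 app reined-energy high} in place of \eqref{ineq h00-sup 1}--\eqref{ineq h00-sup 1.5}, which is the sole point where the sup-norm refinement of Section \ref{subsec energy-high-app} matters.) Collecting all terms and dividing by $(s/t)^2$ (after bounding $(s/t)^{-2}=t^2/s^2$), one arrives at
\[
|\del_t\del_t\del^IL^Jh_{\alpha\beta}|
\le CC_1\vep\,t^{1/2}s^{-3+C(C_1\vep)^{1/2}}
+CC_1\vep\bigl(t+s^{C(C_1\vep)^{1/2}}t^{3/2}s^{-1}\bigr)s^{-2}\sum_{|J'|<|J|}|\del_t\del_t\del^IL^{J'}h_{\alpha\beta}|,
\]
where the prefactor of the residual sum is, up to constants, $C_1\vep$ smaller than the target decay $t^{1/2}s^{-3+C(C_1\vep)^{1/2}}$.

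The argument is then closed by induction on $|J|$. For $|J|=0$ the residual commutator sum is empty and the conclusion is immediate. For the inductive step, substituting the bound \eqref{ineq 1 lem 2 app refined-energy higher} for all $|J'|<|J|$ into the right-hand side above produces a contribution of size $C(C_1\vep)^2 t^{1/2}s^{-3+C(C_1\vep)^{1/2}}$, which is absorbed into the target (after possibly enlarging $C$). The main delicate point is the commutator: one must verify that every subterm listed in Lemma \ref{lem 1 nonlinear} admits the refinement from $s^\delta$ to $s^{C(C_1\vep)^{1/2}}$, which hinges precisely on replacing \eqref{ineq h00-sup 1.5} by the higher-order refined estimate \eqref{ineq 5 app reined-energy high} throughout; once this replacement is made, the rest of the bookkeeping is as in the proof of Lemma \ref{lem 1 refined-energy-low-h}. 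The range $|I|+|J|\le N-3$ (rather than $N-7$) reflects the extended validity range of the refined sup-norm inputs furnished by Section \ref{subsec energy-high-app}.
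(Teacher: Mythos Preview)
Your proposal is correct and follows essentially the same approach as the paper, which simply states that the proof is ``essentially the same as that of Lemma \ref{lem 1 refined-energy-low-h}'' with the sole difference that one substitutes the refined sup-norm estimates of Lemma \ref{lem 1 app refined-energy higher} (valid for $|I|+|J|\le N-2$) in place of their lower-order analogues. You have expanded this remark in full detail, correctly identifying the decomposition \eqref{eq 1 lem 1 second-order}, the refined inputs \eqref{ineq 1 app reined-energy high}--\eqref{ineq 5 app reined-energy high} for $Sc_1$, $Sc_2$, and the commutator, the bound \eqref{ineq 3 lem 1 app refined-energy higher} for $F_{\alpha\beta}$, and the closing induction on $|J|$.
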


By a similar argument as done below \eqref{eq second-order-frame-1}, we have
\bel{ineq 1' lem 2 app refined-energy higher}
|\del_{\alpha} \del_{\beta} \del^IL^J h_{\alpha \beta}| + |\del^IL^J\del_{\alpha} \del_{\beta} h_{\alpha \beta}|
 \leq CC_1\vep t^{1/2}s^{-3+(CC_1\vep)^{1/2}}.
\ee
Apart from the above refined decay on $h_{\alpha \beta}$, we also have the following refined decay for $\phi$, deduced from \eqref{ineq 2 prop 1 refined-energy higher}. For $|I|+|J| \leq N-2$, we have
\bel{ineq 6 app reined-energy high}
\aligned
\left|\del^IL^J\del_{\alpha} \phi\right| + \left|\del_{\alpha} \del^IL^J\phi\right| & \leq  CC_1\vep t^{-1/2}s^{-1/2 + C(C_1\vep)^{1/2}},
\\
\left|\del^IL^J\delu_a \phi\right| + \left|\delu_a \del^IL^J\phi\right| + \left|\del^IL^J\phi\right| & \leq  CC_1\vep t^{-3/2}s^{1/2 + C(C_1\vep)^{1/2}}, 
\endaligned
\ee
while, for $|I|+|J| \leq N-3$, we apply \eqref{ineq 1 homo}  and get 
\bel{ineq 7 app reined-energy high}
\left|\del^IL^J\delu_a \phi\right| + \left|\delu_a \del^IL^J\phi\right| \leq CC_1\vep t^{-5/2}s^{1/2 + C(C_1\vep)^{1/2}}. 
\ee
Finally, for $|I|+|J| \leq N-4$, we have 
\bel{ineq 8 app reined-energy high}
\left|\del^IL^J\del_{\beta} \delu_a \phi\right| + \left|\delu_a \del_{\beta} \del^IL^J\phi\right| \leq CC_1\vep t^{-5/2}s^{1/2 + C(C_1\vep)^{1/2}}, 
\ee
\bel{ineq 9 app reined-energy high}
\left|\del_\alpha \del_\beta \del^IL^J\phi\right| + \left|\del^IL^J\del_\alpha \del_\beta \phi\right| \leq CC_1\vep t^{-3/2}s^{1/2 + C(C_1\vep)^{1/2}}.
\ee


\section{High-Order Refined Sup-Norm Estimates}
\label{section-12}

\subsection{Preliminary}

We begin with our refined estimates for $\del^IL^J\left(h^{\mu\nu} \del_\mu\del_\nu h_{\alpha \beta} \right)$, ${\sl QS}_{\phi}$ and $[\del^IL^J,h^{\mu\nu} \del_\mu\del_\nu]\phi$ for $|I|+|J| \leq N-4$.

\begin{lemma} \label{lem 1 sup-norm-higher}
For all $|I|+|J| \leq N-4$, the following estimate holds:
\bel{ineq 1 sup-norm-higher}
\left|L^J\left(h^{\mu\nu} \del_\mu\del_\nu h_{\alpha \beta} \right) \right| \leq C(C_1\vep)^2t^{-2 + C(C_1\vep)^{1/2}}(t-r)^{-1+ C(C_1\vep)^{1/2}}.
\ee
\end{lemma}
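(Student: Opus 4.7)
My proof proposal would parallel the argument given for Lemma~\ref{lem 0.5 refined-sup}, but replacing the low-order refined sup-norm estimates with their high-order counterparts derived in Section~\ref{section-10}. The gain in differentiability ($|J|\le N-4$ instead of $|J|\le N-7$) costs us a loss of sharp decay on each individual factor, which is exactly compensated for by the improved bounds \eqref{ineq 1 app reined-energy high}--\eqref{ineq 5 app reined-energy high}, \eqref{ineq 1 lem 2 app refined-energy higher} that are now available for $|I|+|J|\le N-2,N-3$.

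First I would decompose, as in the proof of Lemma~\ref{lem 0.5 refined-sup},
\[
h^{\mu\nu}\del_\mu\del_\nu h_{\alpha\beta}
= \hu^{00}\del_t\del_t h_{\alpha\beta}
 + \hu^{a0}\delu_a\del_t h_{\alpha\beta}
 + \hu^{0b}\del_t\delu_b h_{\alpha\beta}
 + \hu^{ab}\delu_a\delu_b h_{\alpha\beta}
 + h^{\mu\nu}\del_\mu(\Psi^{\nu'}_\nu)\,\delu_{\nu'}h_{\alpha\beta},
\]
apply $L^J$, and distribute via the Leibniz rule. For the null-type terms (second through fourth) and the transition-coefficient term, each Leibniz product $L^{J_1}\hu^{\star\star}\cdot L^{J_2}(\delu_a\cdot)$ contains at least one tangential derivative $\delu_a$ (or an extra $t^{-1}$ factor). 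Writing $\delu_a=t^{-1}L_a$ as in \S\ref{subsec ineq-homo} and using \eqref{ineq 1 app reined-energy high}--\eqref{ineq 3 app reined-energy high} on $L^{J_1}\hu^{\star\star}$ and $L^{J_2}\delu_a\del_\beta h_{\alpha\beta}$, these terms are controlled by $C(C_1\vep)^2 t^{-3}s^{C(C_1\vep)^{1/2}}$, which is strictly better than the claimed bound since $t^{-3}\le t^{-2+\eta}(t-r)^{-1+\eta}$ in $\Kcal$ for the relevant small exponent $\eta=C(C_1\vep)^{1/2}$.

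The main (and only real) obstacle is the term $L^J\bigl(\hu^{00}\del_t\del_t h_{\alpha\beta}\bigr)$. Expanding by Leibniz, for each splitting $J_1+J_2=J$ (so $|J_1|,|J_2|\le N-4$), I would pair the high-order sup-norm estimate \eqref{ineq 5 app reined-energy high} for $L^{J_1}\hu^{00}$,
\[
\bigl|L^{J_1}\hu^{00}\bigr|
\le CC_1\vep\bigl(t^{-1} + (s/t)^2 t^{-1/2} s^{C(C_1\vep)^{1/2}}\bigr),
\]
with the refined Hessian estimate \eqref{ineq 1 lem 2 app refined-energy higher} applied to $L^{J_2}\del_t\del_t h_{\alpha\beta}$ (valid since $|J_2|\le N-4\le N-3$):
\[
\bigl|L^{J_2}\del_t\del_t h_{\alpha\beta}\bigr|
\le CC_1\vep\, t^{1/2} s^{-3+C(C_1\vep)^{1/2}}.
\]
Multiplying yields
\[
\bigl|L^{J_1}\hu^{00}\cdot L^{J_2}\del_t\del_t h_{\alpha\beta}\bigr|
\le C(C_1\vep)^2\bigl( t^{-1/2} s^{-3+C(C_1\vep)^{1/2}}
 + (s/t)^2 s^{-3+C(C_1\vep)^{1/2}}\bigr).
\]

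The final step is a purely algebraic conversion of these two summands into the form $t^{-2+\eta}(t-r)^{-1+\eta}$ with $\eta=C(C_1\vep)^{1/2}$, exactly as done at the end of the proof of Lemma~\ref{lem 0.5 refined-sup}. Using $s^2=(t-r)(t+r)\ge t-r$ together with $t+r\le 2t$ and $t-r\ge 1$ in $\Kcal$, one verifies separately in the two regions $r\le t/2$ (where $s\simeq t$) and $r>t/2$ (where $s^2\simeq t(t-r)$) that both $t^{-1/2}s^{-3+\eta}$ and $t^{-2}s^{-1+\eta}$ are bounded by $t^{-2+\eta}(t-r)^{-1+\eta}$ for $0\le\eta$ small. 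Collecting the bounds from all terms, the stated estimate \eqref{ineq 1 sup-norm-higher} follows. I anticipate no genuine difficulty beyond carefully tracking the ranges of indices so that the high-order refined estimates are each invoked within their domain of validity.
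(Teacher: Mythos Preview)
Your proposal is correct and follows essentially the same approach as the paper's own proof. The paper's argument is in fact just a one-sentence reference back to Lemma~\ref{lem 0.5 refined-sup}, noting that the refined decay estimates on $\del^IL^J\del_t\del_t h_{\alpha\beta}$ and $L^J\hu^{00}$ are now available for $|I|+|J|\le N-3$ via \eqref{ineq 1 lem 2 app refined-energy higher}, \eqref{ineq 1' lem 2 app refined-energy higher}, and \eqref{ineq 5 app reined-energy high}; you have simply spelled out those details explicitly, including the final algebraic conversion of $t^{-1/2}s^{-3+\eta}$ and $t^{-2}s^{-1+\eta}$ into the form $t^{-2+\eta}(t-r)^{-1+\eta}$.
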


\begin{proof}
The proof is is parallel to that of Lemma \ref{lem 0.5 refined-sup}. The only difference is that there we only have refined decay estimates on $\del^IL^J\del_t\del_th_{\alpha \beta}$ and $L^J\hu^{00}$ for $|I|+|J| \leq 7$ but here we have, in view of \eqref{ineq 1 lem 2 app refined-energy higher} and \eqref{ineq 1' lem 2 app refined-energy higher}, the parallel estimate for $|I|+|J| \leq N-3$.
\end{proof}

\begin{lemma} \label{lem 2 sup-norm-higher}
For $|I|+|J| \leq N-4$, the following estimate holds:
\bel{ineq 1 lem 2 sup-norm-higher}
\aligned
&\left|[\del^IL^J,h^{\mu\nu} \del_\nu\del_\nu]\phi\right|
 \leq  C(C_1\vep)^2(s/t)^{3}s^{-3+2\delta}
 + CC_1\vep(s/t)^{3/2}s^{-3/2 +\delta} \sum_{|I_2| \leq |I|-1\atop |J_2| \leq|J|} \left|\del_t\del_t\del^{I_2}L^{J_2} \phi\right|
\\
& \quad +  CC_1\vep t^{-1}s^{C(C_1\vep)^{1/2}} \sum_{|J'|<|J|} \left|\del_t\del_t\del^IL^{J'} \phi\right|
\\
& \quad +  CC_1\vep (s/t)^{1-3\delta}s^{-3/2 + C(C_1\vep)^{1/2}} \sum_{|J'|<|J|, \atop \alpha, \beta} \left|L^{J'}h_{\alpha \beta} \right|
  +  CC_1\vep (s/t)^{1-3\delta}s^{-3/2} \sum_{\alpha, \beta} \left|L^Jh_{\alpha \beta} \right|
\endaligned
\ee
and, when $|J|=0$,
\bel{ineq 2 lem 2 sup-norm-higher}
\left|[\del^I,h^{\mu\nu} \del_\nu\del_\nu]\phi\right| \leq C(C_1\vep)^2(s/t)^3s^{-3+2\delta}
 + CC_1\vep(s/t)^{3/2}s^{-3/2 +\delta} \sum_{|I_2| \leq |I|-1} \left|\del_t\del_t\del^{I_2} \phi\right|.
\ee
\end{lemma}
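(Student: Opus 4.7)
The plan is to apply the algebraic decomposition of Lemma~\ref{lem 1 nonlinear} (its $\phi$-analogue recorded in \eqref{eq 2 lem 1 nolinear}), which expresses $[\del^IL^J, h^{\mu\nu}\del_\mu\del_\nu]\phi$ as a linear combination, with smooth homogeneous coefficients of degree $\leq 0$, of five kinds of terms: (a) ${\sl GQQ}_{h\phi}(p,k)$ products, (b) the $t^{-1}$-weighted products $t^{-1}\del^{I_1}L^{J_1}h_{\mu\nu}\del^{I_2}L^{J_2}\del_\gamma\phi$, (c) $\del^{I_1}L^{J_1}\hu^{00}\del^{I_2}L^{J_2}\del_t\del_t\phi$ with $|I_1|\geq 1$, (d) $L^{J_1'}\hu^{00}\del^IL^{J_2'}\del_t\del_t\phi$ with $|J_1'|\geq 1$ and $|J_2'|<|J|$, and (e) $\hu^{00}\del_\gamma\del_{\gamma'}\del^IL^{J'}\phi$ with $|J'|<|J|$. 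Each type will be estimated by means of the refined pointwise bounds now available from Sections~\ref{section-9} and \ref{section-10}, while keeping $\del_t\del_t\del^{I_2}L^{J_2}\phi$ and $L^{J'}h_{\alpha\beta}$ explicit on the right-hand side wherever the target inequality requires.

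The easier contributions (a) and (b) are handled as in the proof of Lemma~\ref{lem 3.25 refined-sup}, except that the strengthened pointwise bounds \eqref{ineq 2 app reined-energy high}--\eqref{ineq 7 app reined-energy high}, now valid for $|I|+|J|\leq N-2$, replace the weaker ones used there; the overall contribution is bounded by $C(C_1\vep)^2(s/t)^3 s^{-3+2\delta}$ and is absorbed into the first term of \eqref{ineq 1 lem 2 sup-norm-higher}. For (c), I use \eqref{ineq 4 reined-energy high} to get $|\del^{I_1}L^{J_1}\hu^{00}|\leq CC_1\vep\, t^{-3/2}s^{C(C_1\vep)^{1/2}} = CC_1\vep\,(s/t)^{3/2}s^{-3/2+C(C_1\vep)^{1/2}}$; choosing $\vep$ small so that $C(C_1\vep)^{1/2}\leq\delta$, this is exactly the weight in the second line, and factoring out $|\del^{I_2}L^{J_2}\del_t\del_t\phi|$ (with $|I_2|\leq|I|-1$) yields that line. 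For (e), the bound $|\hu^{00}|\leq CC_1\vep\, t^{-1}s^{C(C_1\vep)^{1/2}}$ from \eqref{ineq 2 lem 3 refined-sup}, combined with the identities \eqref{eq second-order-frame-1} and the commutator estimates \eqref{ineq com 2.4}, converts $\del_\gamma\del_{\gamma'}\del^IL^{J'}\phi$ into $\del_t\del_t\del^IL^{J''}\phi$ plus lower-order correction terms, and thus contributes to the third line.

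The main obstacle is case (d), where the bookkeeping must reproduce the precise weights appearing in lines three through five of \eqref{ineq 1 lem 2 sup-norm-higher}. When $|J_1'|\leq N-7$, Proposition~\ref{prop 2 refined-sup} gives $|L^{J_1'}h_{\alpha\beta}|\leq CC_1\vep\, t^{-1}s^{C(C_1\vep)^{1/2}}$, hence the same bound for $|L^{J_1'}\hu^{00}|$ (up to cubic corrections that fall into line one), pairing cleanly with $|\del^IL^{J_2'}\del_t\del_t\phi|$ to produce the third line. When $|J_1'|\geq N-6$, the hypothesis $|I|+|J|\leq N-4$ forces $|I|+|J_2'|\leq 2\leq N-8$, which is exactly the regularity needed to invoke the sharp sup-norm bounds \eqref{ineq 3.2 refined-sup low d} and \eqref{ineq 3.2 refined-sup low a} on the $\phi$-factor; expanding $L^{J_1'}\hu^{00}$ as a linear combination of $L^{J_1'}h_{\alpha\beta}$ plus cubic corrections then yields either the fourth line (when $|J_1'|<|J|$, with weight $(s/t)^{1-3\delta}s^{-3/2+C(C_1\vep)^{1/2}}$) or the fifth line (when $|J_1'|=|J|$, hence $|J_2'|=0$, with the sharper weight $(s/t)^{1-3\delta}s^{-3/2}$ lacking the $s^{C(C_1\vep)^{1/2}}$ factor because \eqref{ineq 3.2 refined-sup low a} is cleaner than \eqref{ineq 3.2 refined-sup low d}). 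The reduced identity \eqref{ineq 2 lem 2 sup-norm-higher} for $|J|=0$ is then immediate: types (d) and (e) require $|J_1'|\geq 1$ or $|J'|<|J|$ respectively and therefore vanish, leaving only (a), (b), and (c), which together give the two stated terms.
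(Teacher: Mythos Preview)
Your proposal is correct and follows essentially the same approach as the paper: the decomposition \eqref{eq 2 lem 1 nolinear}, the treatment of the null and $t^{-1}$-weighted terms as ``easy'' contributions, and the three-way split of case (d) according to $|J_1'|\leq N-7$, $N-6\leq |J_1'|\leq |J|-1$, and $|J_1'|=|J|$ all match the paper's argument. Two minor differences are worth noting. First, for case (c) the paper invokes \eqref{ineq h00-sup 1} directly, which already carries the exponent $s^\delta$, so no smallness constraint $C(C_1\vep)^{1/2}\leq\delta$ is needed at this step; your route via \eqref{ineq 4 reined-energy high} works but is a slight detour. Second, in case (c) the target sum involves $\del_t\del_t\del^{I_2}L^{J_2}\phi$ rather than $\del^{I_2}L^{J_2}\del_t\del_t\phi$; the paper explicitly passes between the two using \eqref{ineq com 2.4} and the identities \eqref{eq second-order-frame-1}, absorbing the resulting correction terms into the first line of \eqref{ineq 1 lem 2 sup-norm-higher}. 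You mention this conversion for case (e) but should also apply it in case (c).
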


\begin{proof}
The proof relies on the decomposition presented in \eqref{eq 2 lem 1 nolinear} combined with the refined decay estimates on $\del h$, $\phi$ and $\del\phi$ presented in Section \ref{subsec energy-high-app}. We see that the null terms and the terms of commutators listed in $\eqref{eq 2 lem 1 nolinear}$ are bounded by trivial application of the refined decay estimates presented in Section \ref{subsec energy-high-app}. We only write the estimate on the null term $\del^{I_1}L^{J_1} \hu^{a0} \del^{I_2}L^{J_2} \delu_a \del_t\phi$ (and omit the treatement of the other terms). We see that $\hu^{a0}$ is a linear combination of $h_{\alpha \beta}$ with smooth and homogeneous coefficients plus higher-order correction terms:

\noindent {\sl Case 1.} When $|I_1|\geq 1$, we apply the basic sup-norm estimates \eqref{ineq basic-sup 1 generation 1 a} and \eqref{ineq 2 homo} : 
$$
\aligned
\left|\del^{I_1}L^{J_1} \hu^{a0} \del^{I_2}L^{J_2} \delu_a \del_t\phi\right|
& \leq  CC_1\vep t^{-1/2}s^{-1+\delta}  \,   CC_1\vep t^{-3/2}s^{1/2 + \delta}
  \leq  C(C_1\vep)^2(s/t)^2s^{-5/2 +2\delta}.
\endaligned
$$

\noindent{\sl Case 2.} When $|I_1|=0$, we apply \eqref{ineq basic-sup-h} and \eqref{ineq 2 homo} : 
$$
\aligned
&\left|\del^{I_1}L^{J_1} \hu^{a0} \del^{I_2}L^{J_2} \delu_a \del_t\phi\right|
 =  \left|L^{J_1} \hu^{a0} \del^{I}L^{J_2} \delu_a \del_t\phi\right|
\\
& \leq  CC_1\vep \left((s/t)t^{-1/2}s^{\delta} + t^{-1} \right) CC_1\vep t^{-5/2}s^{1/2 +\delta}
  \leq C(C_1\vep)^2(s/t)^4s^{-5/2 +2\delta}.
\endaligned
$$ 
We then focus on the estimates of the last three terms.

$\bullet$ We treat first the term $\del^{I_1}L^{J_1} \hu^{00} \del^{I_2}L^{J_2} \del_t\del_t\phi$ with $|I_1|\geq 1$. We apply the sharp estimate to $\del^{I_1}L^{J_1} \hu^{00}$ provided by \eqref{ineq h00-sup 1} : 
$$
\left|\del^{I_1}L^{J_1} \hu^{00} \del^{I_2}L^{J_2} \del_t\del_t\phi\right|
 \leq CC_1\vep (s/t)^{3/2}s^{-3/2 +\delta} \sum_{|I_2| \leq |I|\atop |J_2| \leq|J|} \left|\del^{I_2}L^{J_2} \del_t\del_t\phi\right|. 
$$
By the commutator estimate \eqref{ineq com 2.4}, we have
$$
\left|\del^{I_2}L^{J_2} \del_t\del_t\phi\right| \leq C\sum_{|J_2'| \leq |J_2|} \left|\del_{\gamma} \del_{\gamma'} \del^IL^{J_2'} \phi\right|.
$$
Then we rely on the decomposition \eqref{eq second-order-frame-1} and a similar argument and obtain
$$
\left|\del_{\gamma} \del_{\gamma'} \del^IL^{J_2'} \phi\right|
\leq \left|\del_t\del_t\del^IL^{J_2'} \phi\right| + CC_1\vep t^{-5/2}s^{1/2 +\delta}, 
$$
so that 
$$
\aligned
& \left|\del^{I_1}L^{J_1} \hu^{00} \del^{I_2}L^{J_2} \del_t\del_t\phi\right|
\\
& \leq CC_1\vep(s/t)^{3/2}s^{-3/2 +\delta} \sum_{|I_2| \leq |I|-1\atop |J_2| \leq|J|} \left|\del_t\del_t\del^{I_2}L^{J_2} \phi\right|
+ C(C_1\vep)^2(s/t)^4s^{-7/2 +2\delta}.
\endaligned
$$ 

$\bullet$ The term $L^{J_1'} \hu^{00} \del^IL^{J_2'} \phi$ is bounded as follows. We see that $|J_2'|<|J|$ and we will discuss the following cases:

\noindent {\sl Case 1.} When $1\leq |J_1'| \leq N-7$, we apply \eqref{ineq 2 prop 2 refined-sup} :
$$
\left|L^{J_1'} \hu^{00} \del^IL^{J_2'} \del_t\del_t\phi\right|
\leq CC_1\vep t^{-1}s^{C(C_1\vep)^{1/2}} \,   CC_1\vep \sum_{|J'|<|J|} \left|\del^IL^{J'} \del_t\del_t\phi\right|.
$$
Apply the same estimate for $\left|\del^IL^{J'} \del_t\del_t\phi\right|$ as above, we conclude that
$$
\left|L^{J_1'} \hu^{00} \del^IL^{J_2'} \del_t\del_t\phi\right|
\leq  CC_1\vep t^{-1}s^{C(C_1\vep)^{1/2}} \sum_{|J'|<|J|} \left|\del_t\del_t\del^IL^{J'} \phi\right|
 + C(C_1\vep)^2(s/t)^{7/2}s^{-3+ C(C_1\vep)^{1/2}}.
$$

\noindent{\sl Case 2.} When $N-6\leq |J_1'| \leq |J|-1$, we have $|I|+|J_2'| \leq 2\leq N-8$, then we apply the last inequality of \eqref{ineq 3.2 refined-sup low d} to $\del^IL^{J_2'} \del_t\del_t\phi$: 
$$
\left|L^{J_1'} \hu^{00} \del^IL^{J_2'} \del_t\del_t\phi\right|
\leq CC_1\vep (s/t)^{1-3\delta}s^{-3/2 + C(C_1\vep)^{1/2}} \sum_{|J'|<|J|, \atop \alpha, \beta} \left|L^{J'}h_{\alpha \beta} \right|.
$$
 
\noindent{\sl Case 3.} When $N-6\leq |J_1'|$ and $J_1'=J$, we have $|I| \leq 2\leq N-8$ and $|J_2'|=0$. We apply \eqref{ineq 3.2 refined-sup low a} :  
$$
\aligned
\left|L^{J_1'} \hu^{00} \del^IL^{J_2'} \del_t\del_t\phi\right| & =  \left|L^J\hu^{00} \del^I\del_t\del_t\phi\right|
\leq CC_1\vep (s/t)^{1-3\delta}s^{-3/2} \sum_{\alpha, \beta} \left|L^Jh_{\alpha \beta} \right|.
\endaligned
$$
The term $\hu^{00} \del_{\gamma} \del_{\gamma'} \del^IL^{J'} \phi$ is bounded by
$$
CC_1\vep t^{-1}s^{C(C_1\vep)^{1/2}} \sum_{|J'|<|J|} \left|\del_t\del_t\del^IL^{J'} \phi\right| + C(C_1\vep)^2(s/t)^{7/2}s^{-3+ C(C_1\vep)^{1/2}}.
$$
We omit the details of the proof which are essentially the same as in {\sl Case 1} for $\del^{I_1}L^{J_1} \hu^{00} \del^{I_2}L^{J_2} \phi$. Therefore, we have established \eqref{ineq 1 lem 2 sup-norm-higher}.

For \eqref{ineq 2 lem 2 sup-norm-higher}, when $|J|=0$, the third and fourth terms in the right-hand side of \eqref{ineq 1 lem 2 sup-norm-higher} disappear. The last term also disappear since, if we follow the proof of \eqref{ineq 1 lem 2 sup-norm-higher}, we see that when $|J|=0$, and the {\sl case 3} of $L^{J_1'} \hu^{00} \del^IL^{J_2'} \phi$ does not exist ($N-6\leq J_1'$ and $J_1=J$ is contradictory). This is the only place that the last term in the right-hand side of \eqref{ineq 1 lem 2 sup-norm-higher} appears. Therefore, we have established \eqref{ineq 2 lem 2 sup-norm-higher}.
\end{proof} 


\subsection{Main estimate in this section}  

\begin{proposition} \label{prop 1 refined-sup-higher}
There exist constants $C_1, \vep_4 >0$ such that if the bootstrap condition \eqref{ineq energy assumption 1}-\eqref{ineq energy assumption 2} holds with $C_1>C_0$ sufficiently large, then there exists a constant $\vep_4>0$ such that for any $\vep\in (0, \vep_4)$ and $N-6\leq |I|+|J| \leq N-4$:
\bel{ineq 1 prop 1 refined-sup-higher}
\left|L^Jh_{\alpha \beta} \right| \leq CC_1\vep t^{-1}s^{C(C_1\vep)^{1/2}},
\ee
\bel{ineq 2 prop 1 refined-sup-higher}
(s/t)^{3\delta-2}|\del^IL^J\phi| + (s/t)^{3\delta - 3}|\del^IL^J\delu_{\perp} \phi| \leq CC_1\vep t^{-3/2}s^{C(C_1\vep)^{1/2}}.
\ee
\end{proposition}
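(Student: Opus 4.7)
The proof is a direct adaptation of the secondary bootstrap argument of Proposition \ref{prop 2 refined-sup}, now performed in the higher-order range $N-6\leq|I|+|J|\leq N-4$, with the high-order refined $L^2$ estimates of Proposition \ref{prop 1 refined-energy higher} and the derived sup-norm bounds \eqref{ineq 1 app reined-energy high}--\eqref{ineq 9 app reined-energy high} as input. I proceed by finite induction on $|J|$, nested inside which is a continuity bootstrap on the interval $[2,s^*]$ on which the primary bootstrap assumption holds.

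Base case $|J|=0$. For the scalar field I apply Proposition \ref{Linfini KG} to $\del^I\phi$ with $|I|\leq N-4$, treating the right-hand side $f=[\del^I,h^{\mu\nu}\del_\mu\del_\nu]\phi$ by the commutator bound \eqref{ineq 2 lem 2 sup-norm-higher}. The auxiliary quantities $R_1,R_2,R_3$ entering the function $F$ of Section~\ref{subsec KG-sup} are estimated exactly as in Lemma \ref{lem 2 refined-sup} but using the sharpened decay \eqref{ineq 1 app reined-energy high}--\eqref{ineq 5 app reined-energy high}, and $h'_{t,x}$ is handled by Lemma \ref{lem h00-sup 4}. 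The residual $\del_t\del_t\del^{I_2}\phi$ contribution in \eqref{ineq 2 lem 2 sup-norm-higher} is absorbed by an inner induction on $|I|$: writing the Klein-Gordon equation in the hyperboloidal frame as in \eqref{eq 1 lem 1 second-order} (with $\phi$ in place of $h_{\alpha\beta}$) expresses $\del_t\del_t \del^{I_2}\phi$ in terms of objects already controlled when $|I_2|<|I|$. For the metric, I apply Proposition \ref{prop 1 sup-norm-W} to $\del^I h_{\alpha\beta}$: the quasilinear source $\del^I(h^{\mu\nu}\del_\mu\del_\nu h_{\alpha\beta})$ is bounded by Lemma \ref{lem 1 sup-norm-higher}, the Ricci term $\del^I F_{\alpha\beta}$ by Lemma \ref{lem 1 app refined-energy higher}, and the Klein-Gordon source $\del_\alpha\phi\,\del_\beta\phi+c^2\phi^2 g_{\alpha\beta}$ by the sharp $\phi$-decay just obtained. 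Choosing $\mu,\nu$ comparable to $C(C_1\vep)^{1/2}$ in Proposition \ref{prop 1 sup-norm-W} delivers \eqref{ineq 1 prop 1 refined-sup-higher} at $|J|=0$.

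Inductive step. Suppose \eqref{ineq 1 prop 1 refined-sup-higher}--\eqref{ineq 2 prop 1 refined-sup-higher} hold for $|J'|\leq m-1$. Following the template of Proposition \ref{prop 2 refined-sup} I introduce constants $K_m,C_m$ with $C_m=2C_{m-1}$, set
\[
s^{**}:=\sup\bigl\{s\in[2,s^*] \, / \, \eqref{ineq 1 prop 1 refined-sup-higher}\text{ and }\eqref{ineq 2 prop 1 refined-sup-higher}\text{ hold on }[2,s]\text{ with constants }K_m,C_m\bigr\},
\]
and observe $s^{**}>2$ by continuity since the initial data on $\Hcal_2\cap\Kcal$ are of size $K_{0,m}C_0\vep$ with $K_m>K_{0,m}$. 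On $[2,s^{**}]$ the full commutator estimate \eqref{ineq 1 lem 2 sup-norm-higher} now applies: the terms $|L^{J'}h_{\alpha\beta}|$ with $|J'|<|J|$ are handled by the inductive hypothesis, and the single critical term $CC_1\vep(s/t)^{1-3\delta}s^{-3/2}|L^J h_{\alpha\beta}|$ carries a small prefactor that can be absorbed after running Proposition \ref{Linfini KG}. Likewise, Lemma \ref{lem 1 sup-norm-higher} supplies the wave-source bound required to apply Proposition \ref{prop 1 sup-norm-W} to $L^J h_{\alpha\beta}$ at the current order. Running both estimates simultaneously yields
\[
(s/t)^{3\delta-2}|\del^I L^J\phi|+(s/t)^{3\delta-3}|\del^I L^J\delu_\perp\phi|\leq \tfrac{1}{2}K_mC_1\vep\, s^{-3/2+C_m(C_1\vep)^{1/2}},\qquad |L^J h_{\alpha\beta}|\leq \tfrac{1}{2}K_mC_1\vep\, t^{-1}s^{C_m(C_1\vep)^{1/2}},
\]
after choosing $K_m$ large in terms of $K_{0,m}$ and $\vep_4$ small in terms of $K_m,C_m,N$. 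The standard continuity argument then forces $s^{**}=s^*$, closing the induction.

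The principal obstacle is the self-referential coupling visible in Lemma \ref{lem 2 sup-norm-higher}: the commutator $[\del^I L^J,h^{\mu\nu}\del_\mu\del_\nu]\phi$ involves $L^J\hu^{00}$ and hence $L^J h$, while the decay of $L^J h$ itself requires sharp second-order control of $\del_t\del_t L^J h$ which feeds back through the Klein-Gordon source $\del\phi\,\del\phi$. The closure mechanism is identical to that of Proposition \ref{prop 2 refined-sup}: every self-interaction term carries a prefactor $CC_1\vep$, so smallness of $\vep$ allows absorption into the left-hand side; what remains is linear in the inductive-hypothesis quantities plus a constant of size $C(C_0\vep+(C_1\vep)^2)$. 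Selecting $C_1>2CC_0$ and $\vep_4$ small completes the contraction and yields the claimed estimates.
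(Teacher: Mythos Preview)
Your overall architecture—induction on $|J|$ with a nested secondary continuity bootstrap, invoking Proposition~\ref{Linfini KG} for $\phi$ and Proposition~\ref{prop 1 sup-norm-W} for the metric, with Lemmas~\ref{lem 1 sup-norm-higher}--\ref{lem 3 sup-norm-higher} as inputs—matches the paper, and your description of the inductive step $|J|\geq 1$ is accurate. Two points in the base case $|J|=0$ need correction, the second of which is a genuine gap.

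First, a minor misreading: estimate \eqref{ineq 1 prop 1 refined-sup-higher} concerns $L^J h_{\alpha\beta}$ only, so at $|J|=0$ it reduces to $|h_{\alpha\beta}|$, which is already \eqref{ineq 2 lem 3 refined-sup}. There is nothing to prove about $\del^I h_{\alpha\beta}$ here, and your application of Proposition~\ref{prop 1 sup-norm-W} to $\del^I h_{\alpha\beta}$ is not needed.

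Second, and more seriously, your proposed control of the $\del_t\del_t\del^{I_2}\phi$ term in \eqref{ineq 2 lem 2 sup-norm-higher} via ``the Klein-Gordon equation as in \eqref{eq 1 lem 1 second-order} with $\phi$ in place of $h_{\alpha\beta}$'' does not close. For $\phi$ the analogue of \eqref{eq 1 lem 1 second-order} necessarily contains the mass term $c^2\del^{I_2}\phi$, so that $|\del_t\del_t\del^{I_2}\phi|\gtrsim (t/s)^2|\del^{I_2}\phi|$. With the inductive bound $|\del^{I_2}\phi|\leq CC_1\vep(s/t)^{2-3\delta}s^{-3/2}$ this gives $|\del_t\del_t\del^{I_2}\phi|\lesssim CC_1\vep(s/t)^{-3\delta}s^{-3/2}$. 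Carrying this through the commutator prefactor $(s/t)^{3/2}s^{-3/2+\delta}$ and the integration defining $F$ yields a contribution of size $(s/t)^{2-4\delta}$ to $F$, hence an extra factor $(s/t)^{-\delta}$ in the bound on $(s/t)^{3\delta-2}s^{3/2}|\del^I\phi|$. Since $(s/t)^{-\delta}\lesssim s^{\delta}$ in $\Kcal$ and $\delta$ is a fixed constant larger than $C(C_1\vep)^{1/2}$, this growth cannot be absorbed; an inner induction on $|I|$ would accumulate a power $s^{\delta}$ at each step.

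The paper closes this by a different mechanism. Rather than a sequential induction on $|I|$, it introduces the simultaneous quantity
\[
X_n(\tau)=\sum_{|I|\leq n}\sup_{\Kcal_{[2,\tau]}}\Big((s/t)^{3\delta-2}s^{3/2}|\del^I\phi|+(s/t)^{3\delta-3}s^{3/2}|\delu_\perp\del^I\phi|\Big)
\]
and uses the identity $\del_t=(t/s)^2\big(\delu_\perp-(x^a/t)\delu_a\big)$ applied to $\del^{I'}\del_t\phi$ to obtain $|\del^{I'}\del_t\del_t\phi|\leq C(s/t)^{1-3\delta}s^{-3/2}X_n(s)+\text{(good)}$. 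The point is that $\del^{I'}\del_t$ has order $\leq n$, so $\delu_\perp\del^{I'}\del_t\phi$ is directly bounded by $X_n$, and the $\delu_\perp$-weight in $X_n$ carries an \emph{extra} factor $(s/t)$ compared to the $\phi$-weight. This is exactly the missing factor in the Klein-Gordon route, and it leads to $X_n(s)\leq CC_0\vep+CC_1\vep+CC_1\vep X_n(s)$, which closes by absorption. The same device (now combined with the bootstrap hypotheses \eqref{ineq pr3 prop 1 refined-sup-higher}--\eqref{ineq pr4 prop 1 refined-sup-higher}) is implicitly used in the $|J|\geq 1$ step to pass from Lemma~\ref{lem 2 sup-norm-higher} to the clean bound \eqref{ineq pr5 prop 1 refined-sup-higher} on the commutator.
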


The proof is divided into two parts and we analyze first the case $|J|=0$.

\begin{proof}[Proof of Proposition \ref{prop 1 refined-sup-higher} in the case $|J|=0$]
We see that \eqref{ineq 1 prop 1 refined-sup-higher} is already guaranteed by \eqref{ineq 2 lem 3 refined-sup}. To establish \eqref{ineq 2 prop 1 refined-sup-higher}, we rely on Proposition \ref{Linfini KG} and follow the notation therein. The terms $R_i$ are already bounded by Lemma \ref{lem 2 refined-sup}, while the commutator term $[\del^I,h^{\mu\nu} \del_\mu\del_\nu]\phi$ is bounded in view of \eqref{ineq 2 lem 2 sup-norm-higher}. Hence, we have (always with $s=\sqrt{t^2 - r^2}$) 
$$
\aligned
F(t, x)& \leq  CC_1\vep (s/t)^{3/2} \int_{s_0}^s \tau^{-3/2 +3\delta}d\tau
+ C(C_1\vep)^2(s/t)^3\int_{s_0}^s\tau^{-3+2\delta} \,  \tau^{3/2}d\tau
\\
& \quad + CC_1\vep(s/t)^{3/2} \sum_{|I'| \leq|I|-1} \int_{s_0}^s\lambda^{\delta} \left|\del^{I'} \del_t\del_t \phi\right|(\lambda t/s, \lambda x/s)d\lambda
\endaligned
$$
so
$$
\aligned
F(t, x)
& \leq  CC_1\vep(s/t)^{3/2}s_0^{-1/2 +3\delta} + C(C_1\vep)^{2}(s/t)^3
\\
& \quad 
+ CC_1\vep(s/t)^{3/2} \sum_{|I'| \leq|I|-1}
\int_{s_0}^s\lambda^{\delta} \left|\del^{I'} \del_t\del_t \phi\right|(\lambda t/s, \lambda x/s)d\lambda
\\
& \leq  CC_1\vep(s/t)^{2-3\delta}
+ CC_1\vep(s/t)^{3/2} \sum_{|I'| \leq|I|-1}
\int_{s_0}^s\lambda^{\delta} \left|\del^{I'} \del_t\del_t \phi\right|(\lambda t/s, \lambda x/s)d\lambda,
\endaligned
$$
where we recall that $s_0 \simeq \frac{t}{s}$.

Setting 
$$
X_n(\tau) := \sum_{|I| \leq n} \sup_{\Kcal_{[2,\tau]}} \Big(
(s/t)^{3\delta-2}s^{3/2} \left|\del^I\phi\right| + (s/t)^{3\delta-3}s^{3/2} \left|\delu_{\perp} \del^I\phi\right|
\Big)(t,x),
$$ 
we claim that
\bel{eq:explain} 
\left|(s/t)^{3\delta-1} \del^{I'} \del_t\del_t \phi\right| (t, x) 
\leq Cs^{-3/2}X_n(s)
+ C t^{-1} \eps (s/t)^{3\delta - 1/2} s^{-1/2+\delta}, 
\ee
which will be explained at the end of this proof. Replacing $t$ by $\lambda t/s$ and integrating in $\lambda$, 
we then obtain 
\be
\aligned
& F(t, x)
\\
& \leq  C(C_1\vep)(s/t)^{2-3\delta} + CC_1\vep (s/t)^{5/2-3\delta} \int_{s_0}^s 
\Big( \lambda^{-3/2 +\delta} X_n(\lambda) 
+
  \eps (s/t)^{3\delta + 1/2} \lambda^{-3/2+2\delta}  
 \Big) \, d\lambda
\\
& \leq C(C_1\vep)(s/t)^{2-3\delta} + CC_1\vep (s/t)^{5/2-3\delta} 
\Big( 
X_n(s) \int_{s_0}^s \lambda^{-3/2 +\delta}d\lambda
+ \eps (s/t)^{3\delta + 1/2}
 \int_{s_0}^s \lambda^{-3/2 + 2\delta}   \, d\lambda
\Big) 
\\
& \leq C(C_1\vep)(s/t)^{2-3\delta} + CC_1\vep(s/t)^{3-4\delta}X_n(s) 
+ CC_1\vep^2 (s/t)^{7/2-2\delta}, 
\endaligned
\ee 
where we used that $X_n(\cdot)$ is non-decreasing and $s_0 \simeq \frac{t}{s}$.
Also, recall that \eqref{ineq lem h00-sup 4 1} gives the desired bound for $h'_{t, x}$ and, therefore, by Proposition \ref{Linfini KG} we deduce that 
$$
(s/t)^{3\delta-2}s^{3/2} \left|\del^I\phi\right| + (s/t)^{3-3\delta}s^{3/2} \left|\delu_{\perp} \del^I\phi\right| \leq CC_0 \, \vep + CC_1\vep + CC_1\vep X_n(s).
$$
Taking the sup-norm of the above inequality in $\Kcal_{[2,s]}$, we obtain
$
X_n(s) \leq CC_0 \, \vep  + CC_1\vep + CC_1\vep X_n(s).
$
Then, if we take in the bootstrap assumption that $\vep_0'$ sufficiently small so that  $CC_1\vep\leq 1/2$ for $0\leq \vep\leq \vep_0'$, we have
$
X_n(s) \leq CC_0 \, \vep  + CC_1\vep\leq CC_1\vep,
$
which is the desired result (since $C_1\geq C_0$).

It remains to derive \eqref{eq:explain} and, with the notation above, we write at any $(t,x)$ 
$$
\aligned
\big| \del^{I'} \del_t\del_t \phi \big| 
= \Big| (t/s)^2 \big( \delu_{\perp}  - (x^a/t) \delu_a \big) \del^{I'} \del_t \phi \big|
& \leq 
(t/s)^2 \big| \delu_{\perp} \del^{I'} \del_t \phi \big|
+(t/s)^2 \big| (x^a/t) \delu_a \del^{I'} \del_t \phi \big|
\\
&
\leq 
(s/t)^{1-3\delta} s^{-3/2} X_n(s) + (t/s)^2 t^{-1} \sum_a \big| L_a \del^{I'} \del_t \phi \big|, 
\endaligned
$$
in which we used the definition of $X_n$ and, on the other hand, the fact that $\del^{I'}$ is of order $|I|-1$ at most. 
Recalling  \eqref{ineq basic-sup 2 generation 1 b} (together with the commutator estimates), we obtain 
$$
\sum_a \big| L_a \del^{I'} \del_t \phi \big| \leq C C_1 \eps t^{-5/2} s^{1/2 + \delta}
=
 C C_1 \eps (s/t)^{5/2} s^{-2+\delta}, 
$$
which leads us to $\aligned
\big| \del^{I'} \del_t\del_t \phi \big|  
\leq 
(s/t)^{1-3\delta} s^{-3/2} X_n(s) + t^{-1} C C_1 \eps (s/t)^{1/2} s^{-2+\delta}.
\endaligned
$
\end{proof}

Before we can proceed with the proof of Proposition \ref{prop 1 refined-sup-higher} in the case $|J|\geq 1$, we need to establish the following result.

\begin{lemma} \label{lem 3 sup-norm-higher} 
For $|I|+|J| \leq N-4$, one has 
\be
\aligned
& \left|\del^IL^J\left(\del_{\alpha} \phi\del_{\beta} \phi\right) \right| + \left|\del^IL^J\left(\phi^2\right) \right|
  \leq  CC_1\vep (s/t)^{2-3\delta}s^{-3/2}  \sum_{|I'| \leq |I|\atop \gamma} \big|\del^{I'}L^J\del_\gamma \phi\big| + |\del^{I'}L^J \phi| 
\\
&\quad + CC_1\vep (s/t)s^{2-3\delta}s^{-3/2 + C(C_1\vep)^{1/2}}
 \sum_{|I'| \leq|I|,|J'|<|J|\atop \gamma} \left|\del^{I'}L^{J'} \del_\gamma \phi\right| + |\del^{I'}L^{J'} \phi|. 
\endaligned
\ee
\end{lemma}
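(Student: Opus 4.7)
The plan is to expand both products via the Leibniz rule,
\[
\del^IL^J(\del_\alpha \phi \, \del_\beta \phi)
= \sum_{I_1+I_2=I, \, J_1+J_2=J} \del^{I_1}L^{J_1}\del_\alpha \phi \cdot \del^{I_2}L^{J_2}\del_\beta \phi,
\]
and analogously for $\del^IL^J(\phi^2)$, and then to control each summand by putting the factor of lower order in $L^\infty$ via the refined pointwise bounds from Section~\ref{subsec energy-low-app}, while leaving the other factor as the free quantity appearing on the right-hand side of the claimed inequality. Since $|I|+|J|\leq N-4$ and $N\geq 13$, after a symmetric relabelling we may assume the first factor has order $|I_1|+|J_1| \leq \lfloor (N-4)/2 \rfloor \leq N-7$, which is within the range of validity of \eqref{ineq 3.2 refined-sup low a}--\eqref{ineq 3.2 refined-sup low d} and, for the $\phi^2$ product, of \eqref{ineq 1 prop 2 refined-sup}.

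I would then split each Leibniz sum according to whether $|J_1|=0$ or $|J_1|\geq 1$. In the first case, no boost acts on the small factor and hence $|J_2|=|J|$; substituting the sharp bound $|\del^{I_1}\del_\alpha \phi| \leq CC_1\vep (s/t)^{2-3\delta}s^{-3/2}$ from \eqref{ineq 3.2 refined-sup low b}--\eqref{ineq 3.2 refined-sup low c} and leaving $\del^{I_2}L^J\del_\beta \phi$ free reproduces exactly the first term of the claimed bound. In the second case $|J_1|\geq 1$ at least one boost has been transferred to the small factor, so $|J_2|<|J|$; here I apply \eqref{ineq 3.2 refined-sup low d} to the small factor, which gives the extra factor $s^{C(C_1\vep)^{1/2}}$, and leave $\del^{I_2}L^{J_2}\del_\beta \phi$ with $|J_2|<|J|$ free, which reproduces the second term of the claimed bound. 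The argument for $\del^IL^J(\phi^2)$ is strictly parallel, with Proposition~\ref{prop 1 refined-sup} (for $|J_1|=0$) and Proposition~\ref{prop 2 refined-sup} (for $|J_1|\geq 1$) replacing \eqref{ineq 3.2 refined-sup low b}--\eqref{ineq 3.2 refined-sup low d}, and with the free factor $|\del^{I'}L^{J'}\phi|$ supplying the second term listed inside each sum.

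The main obstacle is not analytic but combinatorial: one must verify case by case that every Leibniz summand admits a splitting with the small factor lying in the regime where the refined pointwise estimates hold, and that the summand is correctly assigned to the first group (no boost transferred, no $s^{C(C_1\vep)^{1/2}}$ loss) or to the second group (at least one boost transferred, with the $s^{C(C_1\vep)^{1/2}}$ loss). The threshold $\lfloor (N-4)/2 \rfloor \leq N-7$ together with the hypothesis $N\geq 13$ ensures that such a splitting is always available, so that no summand escapes the pointwise control and the bound follows by summing the finitely many contributions.
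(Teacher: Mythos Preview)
Your proposal is correct and follows essentially the same route as the paper: Leibniz expansion, then a case split according to whether the factor to be put in $L^\infty$ carries any boosts, using the sharp $|J|=0$ decay \eqref{ineq 3.2 refined-sup low b}--\eqref{ineq 3.2 refined-sup low c} in the first case and the boosted estimates \eqref{ineq 3.2 refined-sup low d}, \eqref{ineq 1 prop 2 refined-sup} in the second. The only organizational difference is that the paper does not first pass to the smaller factor via your WLOG step but instead splits directly on ``$J_1=0$ or $J_2=0$'' versus ``$|J_1|\geq 1$ and $|J_2|\geq 1$'', invoking the already established $|J|=0$ case of Proposition~\ref{prop 1 refined-sup-higher} to cover $\del^{I_1}\del_\alpha\phi$ up to $|I_1|\leq N-4$; your threshold argument $\lfloor(N-4)/2\rfloor\leq N-8$ achieves the same end using only the Section~\ref{section-9} estimates.
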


\begin{proof} We only consider $\del_{\alpha} \phi\del_\beta \phi$, by relying on \eqref{ineq 2 prop 1 refined-sup-higher} in the case $|J|=0$. Observe that
$$
\left|\del^IL^J\left(\del_{\alpha} \phi\del_\beta \phi\right) \right|
\leq \sum_{I_1+I_2=I\atop J_1+J_2=J} \left|\del^{I_1}L^{J_1} \del_{\alpha} \phi\right| \,  \left|\del^{I_2}L^{J_2} \del_{\beta} \phi\right|.
$$
When $J_1 = 0$ or $ J_2 = 0$, thanks to \eqref{ineq 1 lem 3 refined-sup},
$$
\aligned
\left|\del^{I_1}L^{J_1} \del_{\alpha} \phi\right| \,  \left|\del^{I_2}L^{J_2} \del_{\beta} \phi\right|
& \leq  CC_1\vep (s/t)^{2-3\delta}s^{-3/2} \sum_{\gamma} \left|\del^IL^J\del_{\gamma} \phi\right|. 
\endaligned
$$
When $1\leq |J_1|$ or $1\leq |J_2|$ we see that $|J_2|<|J|$ and $|J_1|<|J|$ and it remains to apply \eqref{ineq 1 prop 2 refined-sup}.
\end{proof}

\begin{proof}[Proof of Proposition \ref{prop 1 refined-sup-higher} in the case $|J|\geq 1$]
We proceed by induction and with the help of a secondary  bootstrap argument (as in the proof of Proposition \ref{prop 2 refined-sup}). We will not rewrite the argument in full details, but only provide the key steps. Suppose that on the interval $[2,s^*]$ there exist positive constants $K_{m-1},C_{m-1}, \vep'_{m-1}$ (depending only on the structure of the main system and $N$) such that 
\bel{ineq pr1 prop 1 refined-sup-higher}
(s/t)^{3\delta-2}s^{3/2} \left|\del^IL^J\phi\right| + (s/t)^{3\delta-3}s^{3/2} \left|\delu_{\perp} \del^IL^J\phi\right|
\leq K_{m-1}C_1\vep s^{C_{m-1}(C_1\vep)^{1/2}}, 
\ee 
\bel{ineq pr2 prop 1 refined-sup-higher}
t\left|L^Jh_{\alpha \beta} \right| \leq K_{m-1}C_1\vep s^{C_{m-1}(C_1\vep)^{1/2}}
\ee
for $0\leq \vep \leq \vep'_{m-1}$ and $|I|+|J| \leq N-4$ and $|J| \leq m-1<N-4$. We will prove that there exist positive constants $K_m,C_m, \vep'_m$ (determined by the structure of the main system and the integer $N$) such that the following inequaities 
hold for $0\leq \vep\leq \vep_m'$: 
\bel{ineq pr3 prop 1 refined-sup-higher}
(s/t)^{3\delta-2}s^{3/2} \left|\del^IL^J\phi\right| + (s/t)^{3\delta-3}s^{3/2} \left|\delu_{\perp} \del^IL^J\phi\right| \leq K_mC_1\vep s^{C_m(C_1\vep)^{1/2}}, 
\ee 
\bel{ineq pr4 prop 1 refined-sup-higher}
t\left|L^Jh_{\alpha \beta} \right| \leq K_mC_1\vep s^{C_m(C_1\vep)^{1/2}}.
\ee

We begin the formulation of the secondary bootstrap argument and  set 
$$
s^{**} := \sup_{s\in[2,s^*]} \{s|\eqref{ineq pr3 prop 1 refined-sup-higher} \text{ and } \eqref{ineq pr4 prop 1 refined-sup-higher} \text{ hold in } \Kcal_{[2,s^*]} \}.
$$
Suppose the $K_m$ that we have taken is sufficiently large such that $s^{**}>2$ and $C_m=2C_{m-1}$ (see the argument in the proof of Proposition \ref{prop 2 refined-sup}.)

We substitute the assumptions \eqref{ineq pr1 prop 1 refined-sup-higher}, \eqref{ineq pr2 prop 1 refined-sup-higher}, \eqref{ineq pr3 prop 1 refined-sup-higher} and \eqref{ineq pr4 prop 1 refined-sup-higher} into \eqref{lem 2 sup-norm-higher}. This gives
\bel{ineq pr5 prop 1 refined-sup-higher}
\left|[\del^IL^J,h^{\mu\nu} \del_{\mu} \del_{\nu}]\phi\right| \leq C(C_1\vep)^2(s/t)^3 s^{-3+3\delta}
 + CK_m^2(C_1\vep)^2(s/t)^{2-3\delta}s^{-5/2 + C_m(C_1\vep)^{1/2}}.
\ee

With the notation in Proposition \ref{Linfini KG} (recalling that $h'_{t, x}$ is bounded in view of \eqref{ineq lem h00-sup 4 1} and $R_i$ are bounded by Lemma \ref{lem 2 refined-sup}), we obtain
$$
\left|F(s) \right| \leq CC_1\vep (s/t)^{3/2}s_0^{-1/2 +3\delta} + CC_m^{-1}K_m^2(C_1\vep)^{3/2}(s/t)^{2-3\delta}s^{C_m(C_1\vep)^{1/2}}.
$$
Then in view of \eqref{Linfty KG ineq}, we have
$$
\aligned
&
(s/t)^{3\delta-2}s^{3/2} \left|\del^IL^J\phi\right| + (s/t)^{3\delta-3}s^{3/2} \left|\delu_{\perp} \del^IL^J\phi\right|
\\
&
\leq CK_{0,m}C_1\vep  + CC_1\vep +  CC_m^{-1}K_m^2(C_1\vep)^{3/2}s^{C_m(C_1\vep)^{1/2}}.
\endaligned
$$
Then, as in the proof of Proposition \ref{prop 2 refined-sup}, we choose
$
\vep_m' = \frac{C_m^2}{C_1} \left(\frac{K_m-2CK_{0,m}-2C}{2CK_m^2} \right)^2. 
$
Then, for $0\leq \vep\leq\vep_m'$, we have 
$$
(s/t)^{3\delta-2}s^{3/2} \left|\del^IL^J\phi\right| + (s/t)^{3\delta-3}s^{3/2} \left|\delu_{\perp} \del^IL^J\phi\right|
\leq \frac{1}{2}K_mC_1\vep s^{C(C_1\vep)^{1/2}}.
$$

The estimate for $L^Jh_{\alpha \beta}$ is checked
as the argument in the proof of Proposition \ref{prop 2 refined-sup}. We omit the details and point out the estimates on ${\sl QS}_{\phi}$ is covered by Lemma \ref{lem 3 sup-norm-higher} and the induction-bootstrap assumption \eqref{ineq pr1 prop 1 refined-sup-higher}, \eqref{ineq pr2 prop 1 refined-sup-higher}, \eqref{ineq pr3 prop 1 refined-sup-higher} and \eqref{ineq pr4 prop 1 refined-sup-higher}.
Other nonlinear terms such as $F_{\alpha \beta}$ and $h^{\mu\nu} \del_\mu\del_\nu h_{\alpha\beta}$ are bounded in view of \eqref{ineq 3 app reined-energy high} and \eqref{ineq 1 sup-norm-higher}. The same argument as in the proof of Proposition \ref{prop 2 refined-sup} leads us to the desired result with $\vep_4 = \min(\vep'_m, \vep_0')$, where $\vep_0'$ was determined at the end of the proof for $|J|=0$.
\end{proof}


\section{Low-Order Refined Energy Estimate for the Scalar Field} \label{section-13}

It remains to establish the refined energy estimate in order to complete the proof of our main result.

\begin{proposition} \label{prop refined-energy-low}
Let $|I|+|J| \leq N-4$ and suppose that the bootstrap assumptions \eqref{ineq energy assumption 1} \eqref{ineq energy assumption 2} hold for  $C_1$ sufficiently large, then there exists some $\vep_5>0$ such that for all $0\leq \vep\leq \vep_5$;
\bel{ineq 1 prop refined-energy-low}
E_{M,c^2}(s, \del^IL^J\phi)^{1/2} \leq \frac{1}{2}C_1\vep s^{C(C_1\vep)^{1/2}}.
\ee
\end{proposition}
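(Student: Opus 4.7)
The starting point is the curved energy estimate of Proposition \ref{prop energy 2KG} applied to $u=\del^IL^J\phi$ with $|I|+|J|\leq N-4$. The initial-data contribution is bounded by $CC_0\vep$, and, by Lemma \ref{lem h00-sup 3}, the $M$-correction satisfies $M[\del^IL^J\phi](\tau)\leq C(C_1\vep)^2\tau^{-3/2+2\delta}$, whose time integral on $[2,s]$ is bounded by $C(C_1\vep)^2$. Consequently, the whole problem reduces to establishing the $L^2$ bound
$$
\int_2^s\big\|[\del^IL^J,h^{\mu\nu}\del_\mu\del_\nu]\phi\big\|_{L^2(\Hcal_\tau^*)}\,d\tau
\leq C\bigl(C_0\vep+(C_1\vep)^2\bigr)s^{C(C_1\vep)^{1/2}}
+ CC_1\vep\int_2^s\tau^{-1}E_{M,c^2}(\tau,\del^IL^J\phi)^{1/2}\,d\tau,
$$
plus analogous terms with $|J'|<|J|$ already controlled by induction.

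To obtain this bound, I would revisit Lemma \ref{lem 3 pre refined-energy-higher} but now in the regime $|I|+|J|\leq N-4$, where much sharper pointwise information is available. Using the decomposition \eqref{eq 2 lem 1 nolinear}, the null and $t^{-1}$--corrective pieces are integrable in $\tau$ directly thanks to the refined sup-norm estimates of Section~\ref{subsec energy-high-app} (inequalities \eqref{ineq 1 app reined-energy high}--\eqref{ineq 5 app reined-energy high}) and the refined $\phi$-estimates from Proposition \ref{prop 1 refined-sup-higher}; they contribute terms of order $C(C_1\vep)^2\tau^{-1+C(C_1\vep)^{1/2}}$. The piece $\del^{I_1}L^{J_1}\hu^{00}\,\del^{I_2}L^{J_2}\del_t\del_t\phi$ with $|I_1|\geq 1$ is bounded by combining \eqref{ineq 7 reined-sup low} for the metric factor and \eqref{ineq 9 app reined-energy high} for the scalar factor. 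The two genuinely delicate contributions are $L^{J_1'}\hu^{00}\,\del^IL^{J_2'}\del_t\del_t\phi$ with $|J_1'|\geq 1$ and $\hu^{00}\,\del_\gamma\del_{\gamma'}\del^IL^{J'}\phi$: for these I would split $\hu^{00}=\chi(r/t)\hu^{00}_0+\hu^{00}_1$ and use \eqref{ineq 2 prop 1 refined-sup-higher} or \eqref{ineq 2 lem 3 refined-sup} (pointwise bound $CC_1\vep\,t^{-1}s^{C(C_1\vep)^{1/2}}$) when $|J_1'|$ is small, and Proposition \ref{prop h00-sup 1} combined with the pointwise bound \eqref{ineq 1 lem 2 app refined-energy higher} on $\del_t\del_t\phi$-type terms when $|J_1'|$ is large. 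In both cases one recovers a time-integrable bound of the form $C(C_1\vep)^2\tau^{-1+C(C_1\vep)^{1/2}}$, plus a single self-referential term $CC_1\vep\,\tau^{-1}E_{M,c^2}(\tau,\del^IL^J\phi)^{1/2}$ and lower-$|J|$ contributions.

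With these $L^2$ bounds in hand, the energy estimate closes into an integral inequality
$$
E_{M,c^2}(s,\del^IL^J\phi)^{1/2}\leq C\bigl(C_0\vep+(C_1\vep)^2\bigr)s^{C(C_1\vep)^{1/2}}
+CC_1\vep\int_2^s\tau^{-1}E_{M,c^2}(\tau,\del^IL^J\phi)^{1/2}\,d\tau,
$$
to which Gronwall's lemma applies and yields
$$
E_{M,c^2}(s,\del^IL^J\phi)^{1/2}\leq C\bigl(C_0\vep+(C_1\vep)^2\bigr)s^{C(C_1\vep)^{1/2}}.
$$
I would organise the case $|J|\geq 1$ by induction on $|J|$, exactly as in the proof of Proposition \ref{prop 1 refined-energy higher}: the contributions involving $|J'|<|J|$ have already been improved at the previous step and feed into a constant term on the right-hand side. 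Finally, choosing $C_1$ large enough with $C_1>2CC_0$ and then $\vep_5>0$ small enough so that $C(C_1\vep)^2\leq (C_1\vep)/4$ for $\vep\leq\vep_5$ converts the bound $C(C_0\vep+(C_1\vep)^2)$ into $\tfrac12 C_1\vep$, which yields \eqref{ineq 1 prop refined-energy-low}.

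\textbf{Main obstacle.} The only genuinely new difficulty compared with the higher-order analysis of Section~\ref{section-10} lies in the commutator term $L^{J_1'}\hu^{00}\,\del^IL^{J_2'}\del_t\del_t\phi$ when $J_1'=J$ and $|J|$ is close to $N-4$: there is no margin to place $L^J\hu^{00}$ in $L^\infty$, so one must invoke the weighted $L^2$ bound on $\hu^{00}_1$ from Proposition \ref{prop h00-sup 1} together with the pointwise Hessian bound on $\phi$. Ensuring that this pairing produces an integrand of the form $\tau^{-1+C(C_1\vep)^{1/2}}$ (and not $\tau^{-1}$ times a growing energy) is what forces us to have first established the sharp pointwise decay \eqref{ineq 9 app reined-energy high} for $\del\del\phi$ at order $N-4$ in Section~\ref{section-12}, and this is the step where the ordering of the whole bootstrap hierarchy is essential.
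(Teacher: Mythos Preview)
Your overall strategy is sound, but you are making the argument harder than it needs to be, and your ``main obstacle'' is not actually an obstacle.

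The crucial point you are missing is that in the low-order regime $|I|+|J|\leq N-4$, the pointwise bound $|L^{J_1'}\hu^{00}|\leq CC_1\vep\, t^{-1}s^{C(C_1\vep)^{1/2}}$ is available for \emph{every} $|J_1'|\leq N-4$, not just when $|J_1'|$ is small. Indeed, Proposition~\ref{prop 2 refined-sup} covers $|J|\leq N-7$ and Proposition~\ref{prop 1 refined-sup-higher} covers $N-6\leq |J|\leq N-4$; together they give the sup-norm control on $L^{J_1'}h_{\alpha\beta}$ across the whole range. So there is never any need to fall back on the weighted $L^2$ bound of Proposition~\ref{prop h00-sup 1} for $\hu^{00}_1$, and the splitting $\hu^{00}=\hu^{00}_0+\hu^{00}_1$ is unnecessary here.

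Consequently the paper's treatment of the commutator is much simpler than yours: for $L^{J_1'}\hu^{00}\,\del^IL^{J_2'}\del_t\del_t\phi$ (with $|J_1'|\geq 1$, hence $|J_2'|<|J|$) and for $\hu^{00}\,\del_\gamma\del_{\gamma'}\del^IL^{J'}\phi$ (with $|J'|<|J|$), one uses the above sup-norm bound on the metric factor together with $\|(s/t)\,\del^IL^{J'}\del_t\del_t\phi\|_{L^2}\leq C\sum_{|J''|<|J|}E_{M,c^2}(s,\del^IL^{J''}\phi)^{1/2}$, and the remaining pieces of \eqref{eq 2 lem 1 nolinear} are directly integrable by \eqref{ineq 4 bilinear-L2} and \eqref{ineq h00-sup 1}. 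The upshot is
\[
\big\|[\del^IL^J,h^{\mu\nu}\del_\mu\del_\nu]\phi\big\|_{L^2(\Hcal_s^*)}
\;\leq\; CC_1\vep\, s^{-1+C(C_1\vep)^{1/2}}\sum_{|J'|<|J|}E_{M,c^2}(s,\del^IL^{J'}\phi)^{1/2}
\;+\;C(C_1\vep)^2 s^{-3/2+2\delta},
\]
with \emph{no self-referential term} $E_{M,c^2}(s,\del^IL^J\phi)^{1/2}$ on the right. The energy inequality then reads
\[
E_{M,c^2}(s,\del^IL^J\phi)^{1/2}\leq C_0\vep + C(C_1\vep)^2
+ CC_1\vep\sum_{|J'|<|J|}\int_2^s\tau^{-1+C(C_1\vep)^{1/2}}E_{M,c^2}(\tau,\del^IL^{J'}\phi)^{1/2}\,d\tau,
\]
so no Gronwall argument is needed: the case $|J|=0$ gives $E_{M,c^2}(s,\del^I\phi)^{1/2}\leq CC_0\vep+C(C_1\vep)^2$ outright, and a straight induction on $|J|$ closes with the bound $CC_0\vep+C(C_1\vep)^{3/2}s^{C(C_1\vep)^{1/2}}$.

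Your route via Proposition~\ref{prop h00-sup 1} and an additional Gronwall step would also close, but it obscures the structural fact that drives this section: once the high-order sup-norm estimates of Section~\ref{section-12} are in place, the low-order Klein--Gordon commutator couples only to \emph{strictly lower} $|J|$, which is exactly why the low-order scalar energy can be pinned down last and with a clean induction.
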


\begin{proof} Our argument now relies on the energy estimate in Proposition \ref{prop energy 2KG}, in which the coercivity condition \eqref{eq energy 1} is guaranteed by Lemma \ref{lem h00-sup 2}. The estimate for $M[\del^IL^J\phi]$ is provided by  \eqref{ineq lem h00-sup 3.5 b}. So the only issue still to be discussed is the estimate of the commutator $\left\|[\del^IL^J,h^{\mu\nu} \del_\mu\del_\nu]\phi\right\|_{L^2(\Hcal_s^*)}$.
Here, we use \eqref{eq 2 lem 1 nolinear} and, in view of \eqref{ineq 3 bilinear-L2}, obtain 
$$
\|{\sl GQQ}_{h\phi}(N-4,k) \|_{L_f^2(\Hcal_s)} \leq C(C_1\vep)^2s^{-3/2 +2\delta}.
$$
For $t^{-1} \del^{I_3}L^{J_3}h_{\alpha'\beta'} \del^{I_4}L^{J_4} \del_{\gamma} \phi$, we have 
$$
\aligned
\|t^{-1} \del^{I_3}L^{J_3}h_{\alpha'\beta'} \del^{I_4}L^{J_4} \del_{\gamma} \phi\|_{L_f^2(\Hcal_s)} 
& \leq \left\|t^{-1}(t^{-1}+(s/t)t^{-1/2}s^{\delta}) \del^{I_4}L^{J_4} \del_{\gamma} \phi\right\|_{L_f^2(\Hcal_s)} 
\\
& \leq C(C_1\vep)^2s^{-3/2 +2\delta},
\endaligned
$$
while the term $\del^{I_1}L^{J_1} \hu^{00} \del^{I_2}L^{J_2} \del_t\del_t\phi$ is bounded by applying \eqref{ineq h00-sup 1} :
$$
\aligned
\|\del^{I_1}L^{J_1} \hu^{00} \del^{I_2}L^{J_2} \del_t\del_t\phi\|_{L^2(\Hcal_s^*)}
& \leq  CC_1\vep s^{-3/2 +\delta} \|(s/t)^{3/2} \del^{I_2}L^{J_2} \del_t\del_t\phi\|_{L^2(\Hcal_s^*)}
  \leq  CC_1\vep s^{-3/2 +2\delta}.
\endaligned
$$
The term $L^{J'_1} \hu^{00} \del^{I}L^{J'_2} \del_t\del_t\phi$ is bounded by applying \eqref{ineq 1 prop 1 refined-sup-higher} and observing that $|J'_1|>0$:
$$
\aligned
\big\|L^{J'_1} \hu^{00} \del^IL^{J'_2} \del_t\del_t\phi\big\|_{L_f^2(\Hcal_s)}
& \leq  CC_1\vep\big\|t^{-1}s^{C(C_1\vep)^{1/2}} \del^IL^{J'_2} \del_t\del_t\phi\big\|_{L_f^2(\Hcal_s)}
\\
& \leq  CC_1\vep s^{-1+ C(C_1\vep)^{1/2}} \big\|(s/t) \del^IL^{J'_2} \del_t\del_t\phi \big\|_{L_f^2(\Hcal_s)}
\\
& \leq  CC_1\vep s^{-1+ C(C_1\vep)^{1/2}} \sum_{|J'|<|J|}E_{M,c^2}(s, \del^IL^{J'} \phi)^{1/2}.
\endaligned
$$
And for the term $\hu^{00} \del_{\alpha} \del_{\beta}$, we apply \eqref{ineq 2 lem 3 refined-sup} :
$$
\big\|\hu^{00} \del_{\alpha} \del_{\beta} \del^IL^{J'} \big\|_{L_f^2(\Hcal_s)}
\leq CC_1\vep s^{-1} \sum_{|J'|<|J|}E_{M,c^2}(\del^IL^{J'} \phi)^{1/2}, 
$$
so that 
$
\left\|[\del^IL^J,h^{\mu\nu} \del_\mu\del_\nu]\phi\right\|_{L^2(\Hcal_s^*)}
\leq CC_1\vep s^{-1+ C(C_1\vep)^{1/2}} \sum_{|J'|<|J|}E_{M,c^2}(s, \del^IL^{J'} \phi)^{1/2}.
$
So by Proposition \ref{prop energy 2KG}, we have 
\bel{ineq pr0 prop refined-energy-low}
\aligned
E_{M,c^2}(s, \del^IL^J\phi)^{1/2} & \leq  C_0 \, \vep + C(C_1\vep)^2\int_{2}^s \tau^{-3/2 +2\delta}d\tau
\\
& \quad + CC_1\vep\sum_{|J'|<|J|} \int_2^s \tau^{-1+ C(C_1\vep)^{1/2}}E_{M,c^2}(\tau, \del^IL^{J'} \phi)^{1/2} d\tau. 
\endaligned
\ee
When $|J|=0$, the last term disappears. We have
\bel{ineq pr1 prop refined-energy-low}
E_{M,c^2}(s, \del^I\phi)^{1/2} \leq CC_0 \, \vep  + C(C_1\vep)^2.
\ee

We are going to prove that for all $|I|+|J| \leq N-4$,
\bel{ineq pr2 prop refined-energy-low}
E_{M,c^2}(s, \del^IL^J\phi)^{1/2} \leq CC_0 \, \vep + C(C_1\vep)^{3/2}s^{C(C_1\vep)^{1/2}}.
\ee
When $|J|\geq 1$, we proceed by induction on $|J|$ and see that \eqref{ineq pr2 prop refined-energy-low} is guaranteed by \eqref{ineq pr1 prop refined-energy-low} ($C_1\vep$ smaller that $1$). Assume that \eqref{ineq pr2 prop refined-energy-low} holds for $|J| \leq m-1< n-4$, we will prove it for $|J|=m\leq N-4$. We directly apply the induction assumption in \eqref{ineq pr0 prop refined-energy-low} and conclude that
$
E_{M,c^2}(s, \del^IL^J\phi)^{1/2} \leq CC_0 \, \vep + C(C_1\vep)^{3/2}s^{C(C_1\vep)^{1/2}}
$
for $|I|+|J| \leq N-4$ and, by taking
$
\vep_5 = \left(\frac{C_1-2CC_0}{2CC_1^{3/2}} \right)^2,
$
the desired result is proven.
\end{proof}

In conclusion, in view of \eqref{ineq 1 prop 1 refined-energy-low-h}, \eqref{ineq 1 prop 1 refined-energy higher}, \eqref{ineq 2 prop 1 refined-energy higher} and \eqref{ineq 1 prop refined-energy-low}, if the bootstrap assumption holds for $C_1>C_0$ sufficiently large, then there exists some $\vep_0:=\min\{\vep_1\, \vep_2, \vep_3, \vep_4, \vep_5\}$ such that
$$
\aligned
E_M(s, \del^IL^J h_{\alpha \beta})^{1/2} & \leq  \frac{1}{2}C_1\vep s^{C(C_1\vep)^{1/2}}, \quad &&|I|+|J| \leq N,
\\
E_M(s, \del^IL^J\phi)^{1/2} & \leq  \frac{1}{2}C_1\vep s^{1/2 + C(C_1\vep)^{1/2}}, \quad &&N-3\leq |I|+|J| \leq N,
\\
E_M(s, \del^IL^J\phi)^{1/2} & \leq  \frac{1}{2}C_1\vep s^{C(C_1\vep)^{1/2}}, \quad&& |I|+|J| \leq N-4.
\endaligned
$$
This improves the bootstrap assumption \eqref{ineq energy assumption 1}--\eqref{ineq energy assumption 2}. We see that \eqref{ineq energy assumption 1}--\eqref{ineq energy assumption 2} hold on the time interval where the solution exists. In view of the local existence theory for the hyperboloidal foliation (see the last chapter of \cite{PLF-YM-book}) the global existence result is thus established.


\appendix 

\section{Revisiting the wave-Klein-Gordon model}

A {\sl wave-Klein-Gordon model} was ``extracted'' from the Einstein equations by the authors in \cite{PLF-YM-CRAS,PLF-YM-one} when they were beginning to analyze the Einstein equations via the Hyperboloidal Foliation Method introduced in \cite{PLF-YM-book}. This model\footnote{A.D. Ionescu and B. Pausader recently further investigated our model via Fourier techniques; see ArXiv:1703.02846.}
 provided to the authors a simple, yet highly not trivial, example of coupling between a wave equation and a Klein-Gordon equation, before developing the method for the full Einstein system, as we do in the present monograph.
We revisit here the proof of existence in \cite{PLF-YM-one} since our presentation missed one bootstrap condition in the list (5.1) which however turns out to be necessary for dealing with the (comparatively easier) wave component when $k=0$ in (5.1).  

When $k=0$, the first bound in (5.1) in \cite{PLF-YM-one} should be weakened to
\bel{ineq energy assumption 0}
E_m(s,\del^I u)^{1/2}\leq C_1\vep s^{\delta},
\qquad |I|\leq N,
\ee
while a similar remark applies to (5.2). Doing so has no effect on the derivation of the sup-norm bounds (in Section 6.2, on which Section 7 is based), since in the application of the Klainerman-Sobolev inequality one uses one boost at least, and the additional growth allowed by \eqref{ineq energy assumption 0} is negligible. Note in passing also that, in Section 6.5 of \cite{PLF-YM-one}, the Hardy-based estimate  (6.20a) is valid for $k= |J| \geq 1$ only, while we already pointed out in \cite{PLF-YM-one}  the next inequality (6.20b) is never used.

In Lemma 8.1, the estimate (8.4) can be improved to 
\bel{eq 1 21-06-2017}
M(s)\lesssim C_1\vep s^{-3/2+k\delta}.
\ee
which is checked for $|I|+|J|\leq N-1$ by writing 
$$
\aligned
\int_{\Hcal_s}\big|\del_{\gamma}h^{\alpha\beta}\del_\alpha\del^IL^Jv\big|^2dx
& 
\lesssim C_1\vep \int_{\Hcal_s}t^{-1}s^{-2} \,  \big|\del_\alpha\del^IL^J v \big|^2dx
\lesssim C_1\vep s^{-3}\int_{\Hcal_s}\big|\del_\alpha\del^IL^J v \big|^2dx
\\
& \lesssim C_1\vep s^{-3}\sum_{\alpha}E_{g,c}(s,\del_{\alpha}\del^IL^Jv). 
\endaligned
$$
In Lemma 8.2, when $k=0$ (8.6) can be improved to
\bel{ineq lem energy lower 2'}
\|[H^{\alpha\beta}u\del_\alpha\del_\beta,\del^I ]v\|_{L_f^2(\Hcal_s)}\lesssim (C_1\vep)^2 s^{-3/2+2\delta},
\qquad |I|\leq N.
\ee
Namely, only the term $\del^{I_1}L^{J_1}u\del^{I_2}L^{J_2}\del_\alpha\del_\beta v$ 
with $|I_1|=1$ and $J_1=0$ need to be considered:
\be
\aligned
\|\del_{\gamma}u\del^{I_2}L^J\del_\alpha\del_\beta v\|_{L_f^2(\Hcal_s)}
\leq& \|(s/t) \del_\gamma u\|_{L^2(\Hcal_s)}\, \|(t/s)\del^{I_2}L^J\del_\alpha\del_\beta v\|_{L^\infty(\Hcal_s)}
\\
\lesssim &(C_1\vep)^2 \|(s/t)^{-\delta}s^{-3/2}\|_{L^{\infty}(\Hcal_s)}\lesssim (C_1\vep)^2s^{-3/2+\delta}.
\endaligned
\ee
In Lemma 8.3, when $k=0$ (8.7) can be improved to
\bel{ineq lem energy higher 0}
\left\|\del^I\left(P^{\alpha\beta}\del_{\alpha}v\del_{\beta}v + Rv^2\right)\right\|\lesssim (C_1\vep)^2 s^{-1+\delta},
\qquad |I|\leq N.
\ee
Namely,  in $
\del^I\left(\del_{\alpha}v\del_{\beta}v\right) = \sum_{I_1+I_2=I}\del^{I_1}\del_{\alpha}v \,  \del^{I_2}\del_{\beta}v$ we can assume that $|I_1|\leq|I_2|$, 
hence $|I_1|\leq \left[|I| / 2\right]\leq N-5$, and then by (7.23b)  (with $\del^I\del_{\alpha}$ of order $\leq N-4$) and (6.5) (third and last inequalities):
\be
\left\|\del^I\left(\del_{\alpha}v\del_{\beta}v\right)\right\|_{L_f^2(\Hcal_s)}\lesssim (C_1\vep)^2\|(s/t)^{1/2-4\delta}t^{-3/2}(t/s) \  (s/t)\del^{I_2}\del_{\beta}v\|_{L_f^2(\Hcal_s)}\lesssim (C_1\vep)^2s^{-1+\delta}.
\ee
\newcommand \Cbar {\overline C}
In the proof of Proposition 5.1, when $|J|=0$ thanks to \eqref{ineq lem energy higher 0}
\bel{pr2 prop bootstrap 0}
\aligned
E_m(s,\del^Iu)^{1/2}\leq \,& \Cbar C_0\vep +  \Cbar (C_1\vep)^2\int_2^s \sbar^{-1+\delta} \, d\sbar
\leq \Cbar C_0\vep + \Cbar (C_1\vep)^2s^{\delta}, 
\endaligned
\ee
and for (8.14) with $k=0$, one has $|I|\leq N-4$ and we can apply \eqref{eq 1 21-06-2017}-\eqref{ineq lem energy lower 2'}:
\bel{pr3 prop bootstrap'}
\aligned
E_{m,c}(s,\del^I v)^{1/2} \leq \, &  \Cbar C_0\vep + \Cbar (C_1\vep)^2\int_2^s \sbar^{-3/2+k\delta} \, d\sbar
\leq \,  \Cbar C_0\vep + \Cbar (C_1\vep)^2.
\endaligned
\ee


\section{Sup-norm estimate for the wave equations}
\label{sec supnorm}

\begin{proposition} 
\label{Linfini wave}
Let $u$ be a spatially compactly supported solution to the wave equation
\be
\aligned
&-\Box u = f,\\
& u|_{t=2} = 0,\qquad \del_t u|_{t=2} = 0,
\endaligned
\ee
in which $f$ is spatially compactly supported in $\Kcal$  
and satisfies  
\be
|f|\leq C_f t^{-2-\nu}(t-r)^{-1+\mu}
\ee
for some $C_f>0$, $0<\mu\leq 1/2$, and $0< |\nu|\leq 1/2$. 
Then, one has 
\be
\label{Linfini wave ineq}
|u(t,x)|
\lesssim
\begin{cases}
\frac{C_f}{\nu\mu}(t-r)^{\mu-\nu} t^{-1}, \qquad & 0< \nu\leq 1/2,
\\
\frac{C_f}{|\nu|\mu}(t-r)^{\mu} t^{-1 -\nu}, &-1/2\leq \nu < 0.
\end{cases}
\ee
\end{proposition}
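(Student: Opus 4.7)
The plan is to represent the solution by the Kirchhoff/Duhamel formula for the 3D wave equation with vanishing Cauchy data at $t=2$:
\be
u(t,x) = \frac{1}{4\pi} \int_0^{t-2} \int_{|y-x|=\tau} \frac{f(t-\tau,y)}{\tau} \, d\sigma(y) \, d\tau,
\ee
and then insert the pointwise bound $|f(s,y)| \le C_f s^{-2-\nu}(s-|y|)^{-1+\mu}\,\mathds{1}_{\{|y|<s-1\}}$. This mirrors the strategy of Proposition \ref{prop 1 sup-norm-W} (in particular the estimate for $h^{IJ,3}_{\alpha\beta}$) except that now the source has a spatially compact footprint inside $\Kcal$ rather than the Schwarzschild tail.

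First I would reduce to a one-variable integral by choosing coordinates with $x=(r,0,0)$ and parametrizing the sphere $\{|y-x|=\tau\}$ by $(\theta,\varphi)$, where $\theta$ is the angle between $-x$ and $y-x$; the law of cosines gives $|y|=\big(r^2+\tau^2-2r\tau\cos\theta\big)^{1/2}$, so that $d\sigma = \tau^2\sin\theta\,d\theta\,d\varphi$ and, changing variables to $\rho=|y|$, one has $\sin\theta\,d\theta = -\rho\,d\rho/(r\tau)$. After the further rescaling $\lambda = \tau/t$ (as in the proof of Proposition \ref{prop 1 sup-norm-W}) and the substitution $\sigma=(t-\tau)-\rho=s-|y|$, the representation becomes, up to a multiplicative constant,
\be
|u(t,x)| \;\lesssim\; \frac{C_f}{r}\int_0^{t-2}\int_{\max(|t-\tau-r|,1)}^{t-\tau+r} (t-\tau)^{-1-\nu}\,\sigma^{-1+\mu}\, d\sigma\, d\tau,
\ee
the constraint $\sigma\ge 1$ coming from the support condition $\rho < s-1$.

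Next I would perform the inner $\sigma$ integral, which yields $\mu^{-1}$ times a difference of two factors of the form $(t-\tau\pm r)^{\mu}$; combined with $(t-\tau)^{-1-\nu}$ the outer integral then splits into the two standard contributions
\be
\aligned
I_1 & = \int_0^{t-2} (t-\tau)^{-1-\nu}(t-\tau+r)^{\mu}\,d\tau,
\\
I_2 & = \int_0^{t-2} (t-\tau)^{-1-\nu}\bigl|t-\tau-r\bigr|^{\mu}\,d\tau.
\endaligned
\ee
Estimating $I_1$ and $I_2$ is a matter of substituting $\eta = t-\tau$ and splitting the integral at the characteristic value $\eta = r$, which produces the two dichotomous cases stated in the proposition. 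The sign of $\nu$ enters only through which endpoint dominates in the antiderivative of $\eta^{-1-\nu}$: for $\nu>0$ the behaviour near $\eta=t-r$ controls the estimate and produces $(t-r)^{\mu-\nu}t^{-1}$ after dividing by $r$ (or by $t$, using $r\simeq t$ in the relevant subregion near the cone and a separate elementary treatment of the region $r\le t/2$, which is easier since there $t-r\simeq t$ and no singular behavior occurs), while for $\nu<0$ the endpoint $\eta=t$ dominates and produces $(t-r)^{\mu}t^{-1-\nu}$.

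The main obstacle, as in Proposition \ref{prop 1 sup-norm-W}, is the bookkeeping when $r$ is comparable to $t$: the $1/r$ prefactor from the spherical change of variables must be absorbed into $1/t$, and one must verify that the $\sigma$-integration lower cut-off at $1$ (enforcing the support condition) does not generate additional logarithmic losses, which is handled precisely by the mean-value argument used in Lemma \ref{lem 3 sup-norm-W}. One then collects the factors $\mu^{-1}$ (from the $\sigma$-integral) and $|\nu|^{-1}$ (from the $\eta$-integral) to obtain the explicit dependence $1/(\mu|\nu|)$ in \eqref{Linfini wave ineq}.
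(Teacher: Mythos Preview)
Your overall strategy---Duhamel formula, spherical change of variables via the law of cosines, then a one-dimensional integral in the null coordinate $\sigma=s-|y|$---is exactly the paper's approach (the paper normalizes by $t$ first, setting $\lambda=\sbar/t$, but this is cosmetic). However, the limits you write for the inner $\sigma$-integral are wrong, and this is not a harmless typo: it loses precisely the factor that makes the proposition true near the cone.

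Concretely, on the sphere $|y-x|=\tau$ one has $\rho=|y|\in[|r-\tau|,\,r+\tau]$, hence
\[
\sigma=s-\rho\in\big[\,s-r-\tau,\ s-|r-\tau|\,\big]\cap[1,\infty),
\qquad s=t-\tau.
\]
The upper limit $s-|r-\tau|$ equals $t-r$ when $\tau\le r$ and $t+r-2\tau$ when $\tau\ge r$; in either case it is $\le t-r$. Your stated upper limit $t-\tau+r=s+r$ is much larger. After integrating $\sigma^{-1+\mu}$, the correct upper limit produces the factor $\mu^{-1}(t-r)^{\mu}$, while your limit gives $\mu^{-1}(s+r)^{\mu}\sim \mu^{-1}t^{\mu}$. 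Feeding this into your $I_1$ yields, for $\nu>0$,
\[
\frac{1}{r}\int_2^t \eta^{-1-\nu}(\eta+r)^{\mu}\,d\eta
\ \lesssim\ \frac{t^{\mu}}{\nu\,r}\ \sim\ \frac{t^{\mu-1}}{\nu},
\]
which is weaker than the claimed $(t-r)^{\mu-\nu}t^{-1}/\nu$ by the divergent factor $(t/(t-r))^{\mu}$ near the light cone. The subtracted piece $I_2$ does not rescue this.

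The fix is exactly the observation that $\sigma\le t-r$ always (this is the content of Lemma~\ref{wave lemma1} after denormalization). Once you use this upper limit, the $\sigma$-integral contributes $\mu^{-1}(t-r)^{\mu}$, and the remaining $\eta$-integral $\int_2^t \eta^{-1-\nu}\,d\eta$ gives the $|\nu|^{-1}$ together with the correct dichotomy on the sign of $\nu$. Your treatment of the region $r\le t/2$ and the absorption of $1/r$ into $1/t$ is fine; only the upper endpoint of the $\sigma$-integral needs correcting.
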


We denote by $d\sigma$ the Lebesgue measure on the sphere $\{|y| = 1-\lambda \}$ and $x\in \mathbb{R}^3$ with $r = |x|$, and consider the integral term
$$
I(\lambda)=I(\lambda, t, x/t):=\int_{|y| = 1-\lambda,|\frac{x}{t}-y|\leq \lambda-t^{-1}}
 \frac{d\sigma(y)}{\big(\lambda-\big|\frac{x}{t} - y\big|\big)^{1-\mu}}.
$$
Clearly, when $0< \lambda\leq \frac{t-r+1}{2t}$, we have $I(\lambda) = 0$.

\begin{lemma}\label{wave lemma1}
When $\frac{t-r+1}{2t}\leq \lambda\leq 1$, we obtain 
$$
I(\lambda) \lesssim
\left\{
\aligned
&\frac{\lambda t(1-\lambda)}{\mu r}\left(\frac{t-r}{t}\right)^{\mu},
\qquad
&&\frac{t-r+1}{2t}\leq \lambda\leq \frac{t+r+1}{2t},
\\
&(1-\lambda)\left(\frac{t+r}{t}-\lambda\right)\left(2\lambda - \frac{t+r}{t}\right)^{-1+\mu}, \qquad
&&\frac{t+r+1}{2t}\leq \lambda\leq \frac{t-r}{t},
\\
& &&  \hskip.cm  \text{ provided } \frac{t+r+1}{2t}\leq  \frac{t-r}{t},
\\
&\frac{(1-\lambda)t}{\mu r}\left(\frac{t-r}{t}\right)^\mu,\qquad
&&\max\left(\frac{t-r}{t},\frac{t+r+1}{2t}\right) \leq \lambda\leq 1.
\endaligned
\right.
$$
\end{lemma}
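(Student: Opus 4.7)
\medskip

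\noindent\textbf{Proof plan for Lemma B.2.}

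The natural approach is to parametrize the sphere $\{|y|=1-\lambda\}$ using the angle $\theta$ from the axis $x/t$ (together with the azimuthal angle $\varphi$), so that $d\sigma(y)=(1-\lambda)^{2}\sin\theta\,d\theta\,d\varphi$. Since the integrand depends only on $\theta$, the $\varphi$-integration contributes a factor $2\pi$. Writing $\rho:=r/t$ and using the law of cosines
$|x/t-y|^{2}=\rho^{2}+(1-\lambda)^{2}-2\rho(1-\lambda)\cos\theta$,
I would then substitute $\tau:=|x/t-y|$, yielding $\sin\theta\,d\theta=\tau\,d\tau/[\rho(1-\lambda)]$. This reduces the problem to the one-dimensional integral
$$
I(\lambda)=\frac{2\pi(1-\lambda)}{\rho}\int_{a(\lambda)}^{b(\lambda)}\frac{\tau\,d\tau}{(\lambda-\tau)^{1-\mu}},
$$
where $a(\lambda)=|1-\lambda-\rho|$ and $b(\lambda)=\min\bigl(1-\lambda+\rho,\ \lambda-1/t\bigr)$.

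Next, the three cases in the statement correspond exactly to the possible configurations of the endpoints $a(\lambda),b(\lambda)$ relative to the geometric parameters:
\begin{itemize}
\item Case 1, $\frac{t-r+1}{2t}\leq\lambda\leq\frac{t+r+1}{2t}$: one has $b(\lambda)=\lambda-1/t$ (the indicator constraint is active), and $a(\lambda)$ equals $1-\lambda-\rho$ or $\lambda+\rho-1$ depending on whether $\lambda\lessgtr 1-\rho$.
\item Case 2, $\frac{t+r+1}{2t}\leq\lambda\leq\frac{t-r}{t}$ (possible only when $r\leq(t-1)/3$): both endpoints come from the sphere, i.e.\ $a=1-\lambda-\rho$ and $b=1-\lambda+\rho$.
\item Case 3, $\max\bigl(\frac{t-r}{t},\frac{t+r+1}{2t}\bigr)\leq\lambda\leq 1$: the sphere crosses the cone, giving $a=\lambda+\rho-1$ and $b=1-\lambda+\rho$.
\end{itemize}
In each case, I would perform the change of variables $u:=\lambda-\tau$, reducing the integrand to $(\lambda-u)u^{\mu-1}$, which integrates explicitly.

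The computations then become elementary. In Case 1, bounding $\tau\leq\lambda$ and evaluating $\int_{a}^{b}(\lambda-\tau)^{\mu-1}d\tau=\mu^{-1}[(\lambda-a)^{\mu}-(\lambda-b)^{\mu}]$ gives $(\lambda-a)^{\mu}\lesssim((t-r)/t)^{\mu}$, since $\lambda-a=\min(2\lambda-1+\rho,\,1-\rho)\leq 1-\rho$ throughout the relevant range. In Case 2, one bounds $\lambda-u\leq 1+\rho-\lambda$ on the interval $[2\lambda-1-\rho,2\lambda-1+\rho]$ and uses that $u^{\mu-1}$ is decreasing so its integral is $\leq 2\rho(2\lambda-1-\rho)^{\mu-1}$; the factor $\rho$ cancels against $1/\rho$ from the prefactor. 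In Case 3, bounding $\lambda-u\leq\lambda\leq 1$ and integrating $u^{\mu-1}$ from $2\lambda-1-\rho$ to $1-\rho$ yields $\mu^{-1}(1-\rho)^{\mu}$, which combines with the prefactor into the stated bound.

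The main obstacle is purely bookkeeping: correctly identifying which endpoint is active in each regime and verifying that the crude bound $\lambda-u\leq$ (appropriate constant) is sharp enough in each case. There is no analytic difficulty beyond the elementary $u^{\mu-1}$ integration and the use of $\lambda\leq 1$ on $\Hcal_{s}$; the key is the clean case split dictated by the three regions $\{\lambda\leq(t-r)/t\}$ and $\{\lambda\geq(t+r+1)/(2t)\}$.
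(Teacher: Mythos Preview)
Your proposal is correct and follows essentially the same route as the paper: the same spherical parametrization, the same reduction via the substitution $\tau=|x/t-y|$ (the paper's $\sqrt{\gamma}$) and then $u=\lambda-\tau$ (the paper's $\zeta$), and the same elementary estimates of $\int u^{\mu-1}(\lambda-u)\,du$ in each regime. The only cosmetic difference is organizational: the paper first splits according to whether the entire sphere lies in the ball (its Cases 1 and 2) and then sub-splits on the sign of $1-\lambda-\rho$ (sub-cases 1.1/1.2 and 2.1/2.2), whereas you split directly according to the three ranges in the statement; the resulting integrals and bounds coincide.
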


\begin{proof}[Proof of Proposition \ref{Linfini wave}]
From the expression
\bel{eq:form-onde}
u(t,x) = \frac{1}{4\pi}\int_2^t\frac{1}{t-\sbar}\int_{|y| = t-\sbar} f(\sbar,x-y) \, d\sigma d\sbar,
\ee
in which the integration is on the intersection of the cone $\big\{(\sbar,y) \, / \, |y-x| = t-\sbar, 2\leq \sbar\leq t\big\}$ and $\big\{(t,x) \, / \, r<t-1, t^2-r^2\leq s^2, t\geq 2 \big\}$, we obtain
$$
\aligned
|u(t,x)|
&\leq \frac{C_f}{4\pi}\int_2^t\int_{|y| = t-\sbar,|x-y|\leq \sbar-1}\frac{\sbar^{-2-\nu}(\sbar-|x-y|)^{-1+\mu}}{t-\sbar}d\sigma d\sbar
\\
& = \frac{C_f}{4\pi t^{1+\nu-\mu}}\int_{\frac{2}{t}}^1\int_{|y'|= 1-\lambda,|\frac{x}{t}-y'|\leq \lambda-t^{-1}}
\frac{(1-\lambda)^{-1}\lambda^{-2-\nu}d\sigma d\lambda}{\big(\lambda-\big|\frac{x}{t} - y'\big|\big)^{1-\mu}}
\quad(\lambda:= \sbar/t,\quad y' := y/t)
\\
& = \frac{C_f}{4\pi t^{1+\nu-\mu}}\int_{\frac{2}{t}}^1(1-\lambda)^{-1}\lambda^{-2-\nu} 
\int_{|y'| = 1-\lambda,|\frac{x}{t}-y'|\leq \lambda-t^{-1}} \frac{d\sigma}{\big(\lambda-\big|\frac{x}{t} - y'\big|\big)^{1-\mu}} \, d\lambda. 
\endaligned
$$
When $|\frac{x}{t} - y'|\leq \lambda - t^{-1}$, we get $\frac{t-r+1}{2t}\leq\lambda\leq 1$. In the following, we replace $y'$ by $y$.  
We distinguish between two cases:

\

\noindent{\bf Case 1:} $\frac{t-r}{t}> \frac{t+r+1}{2t}\Leftrightarrow r\leq \frac{t-1}{3}$. 
We write 
$$ 
\aligned
|u(t,x)|
\leq
& \frac{C_f}{4\pi t^{1+\nu-\mu}}\int_{\frac{t-r+1}{2t}}^1(1-\lambda)^{-1}\lambda^{-2-\nu} 
\int_{|y| = 1-\lambda,|\frac{x}{t}-y|\leq \lambda-t^{-1}} \frac{d\sigma}{\big(\lambda-\big|\frac{x}{t} - y\big|\big)^{1-\mu}} \, d\lambda
\\
\lesssim &\frac{C_f}{\mu t^{1+\nu-\mu}} \int_{\frac{t-r+1}{2t}}^{\frac{t+r+1}{2t}}(1-\lambda)^{-1}\lambda^{-2-\nu} \frac{\lambda t(1-\lambda)}{r}\left(\frac{t-r}{t}\right)^{\mu} d\lambda
\\
&+ \frac{C_f}{ t^{1+\nu-\mu}}\int_{\frac{t+r+1}{2t}}^{\frac{t-r}{t}}
(1-\lambda)^{-1}\lambda^{-2-\nu}(1-\lambda)\left(\frac{t+r}{t}-\lambda\right)\left(2\lambda - \frac{t+r}{t}\right)^{-1+\mu} \, d\lambda
\\
&+ \frac{C_f}{\mu t^{1+\nu-\mu}}\int_{\frac{t-r}{t}}^{1}
(1-\lambda)^{-1}\lambda^{-2-\nu}\frac{(1-\lambda)t}{ r}\left(\frac{t-r}{t}\right)^\mu d\lambda, 
\endaligned
$$
and therefore 
$$ 
\aligned
|u(t,x)|
\lesssim
&\frac{C_f}{\mu t^{1+\nu-\mu}}\frac{t}{r}\bigg(\frac{t-r}{t}\bigg)^{\mu}\int_{\frac{t-r+1}{2t}}^{\frac{t+r+1}{2t}}\lambda^{-1-\nu} \, d\lambda
\\
&+\frac{C_f}{t^{1+\nu-\mu}}\int_{\frac{t+r+1}{2t}}^{\frac{t-r}{t}}\lambda^{-2-\nu}\left(\frac{t+r}{t}-\lambda\right)\left(2\lambda - \frac{t+r}{t}\right)^{-1+\mu} \, d\lambda
\\
&+\frac{C_f}{\mu t^{1+\nu-\mu}}\frac{t}{r}\left(\frac{t-r}{t}\right)^{\mu}\int_{\frac{t-r}{t}}^1\lambda^{-2-\nu} \, d\lambda.
\endaligned
$$
Recall that $r\leq \frac{t-1}{3}$ and  that $0<|\nu|\leq1/2$, we have  
$$
\frac{t}{r}\int_{\frac{t-r+1}{2t}}^{\frac{t+r+1}{2t}}\lambda^{-1-\nu} \, d\lambda
\lesssim \bigg(\frac{t}{t-r}\bigg)^{1+\nu} \lesssim 1, 
$$
and 
$$
\left|\frac{C_f}{\mu t^{1+\nu-\mu}}\frac{t}{r}\bigg(\frac{t-r}{t}\bigg)^{\mu}\int_{\frac{t-r+1}{2t}}^{\frac{t+r+1}{2t}}\lambda^{-1-\nu} \, d\lambda\right | \lesssim C_f\mu^{-1}(t-r)^{\mu}t^{-1-\nu}.
$$
For the second integral term, we note that 
$$
\aligned
&\int_{\frac{t+r+1}{2t}}^{\frac{t-r}{t}}\lambda^{-2-\nu}\left(\frac{t+r}{t}-\lambda\right)\left(2\lambda - \frac{t+r}{t}\right)^{-1+\mu} \, d\lambda
\\
&\lesssim \int_{\frac{t+r+1}{2t}}^{\frac{t-r}{t}}\left(2\lambda - \frac{t+r}{t}\right)^{-1+\mu} \, d\lambda
 = \frac{1}{\mu}\left(2\lambda - \frac{t+r}{t}\right)^{\mu}\bigg|_{\frac{t+r+1}{2t}}^{\frac{t-r}{t}} 
\lesssim \frac{1}{\mu}, 
\endaligned
$$
thus
$$
\frac{C_f}{t^{1+\nu-\mu}}\int_{\frac{t+r+1}{2t}}^{\frac{t-r}{t}}\lambda^{-2-\nu}\left(\frac{t+r}{t}-\lambda\right)\left(2\lambda - \frac{t+r}{t}\right)^{-1+\mu} \, d\lambda
\lesssim
 \frac{C_f}{\mu t^{1+\nu-\mu}}.
$$
For the third term, from $\frac{t-r}{t}\geq \frac{t+r+t1}{2t}\geq \frac{1}{2}$ we get
$$
\aligned
\frac{C_f}{\mu t^{1+\nu-\mu}}\frac{t}{r}\left(\frac{t-r}{t}\right)^{\mu}\int_{\frac{t-r}{t}}^1\lambda^{-2-\nu} \, d\lambda
\lesssim &\frac{C_f}{\mu t^{1+\nu-\mu}}\frac{t}{r}\left(\frac{t-r}{t}\right)^{\mu}\int_{\frac{t-r}{t}}^1 2^{2+\mu} \, d\lambda
\\
\lesssim & C_f\mu^{-1} (t-r)^{\mu}t^{-1-\nu}.
\endaligned
$$
Hence, in the case $0<r\leq \frac{t-1}{3}$, $|u(t,x)|\lesssim C_f\mu^{-1}(t-r)^{\mu}t^{-1-\nu}$.

\

\noindent {\bf Case 2:} $\frac{t+r+1}{2t}\geq \frac{t-r}{t}\Leftrightarrow r\geq \frac{t-1}{3}$.
The second case in Lemma \ref{wave lemma1} can not occur. We have 
$$
|u(t,x)|
\lesssim\frac{C_f}{\mu t^{1+\nu-\mu}}\left(\frac{t-r}{t}\right)^{\mu}
\left(\int_{\frac{t-r+1}{2t}}^{\frac{t+r+1}{2t}}\lambda^{-1-\nu} \, d\lambda + \int_{\frac{t+r+1}{2t}}^1\lambda^{-2-\nu} \, d\lambda\right).
$$
Since $\frac{t+r+1}{2t}\geq 1/2$, the second integral term is bounded by some constant $C$. For the first integral, when $\nu> 0$,
$$
\int_{\frac{t-r+1}{2t}}^{\frac{t+r+1}{2t}}\lambda^{-1-\nu} \, d\lambda
\lesssim
 \frac{1}{\nu}\left(\frac{t-r+1}{t}\right)^{-\nu}, 
$$
thus
$|u(t,x)|\lesssim C_f(\mu\nu)^{-1}(t-r)^{\mu-\nu}t^{-1}$.

When $\nu<0$, we write 
$$
\int_{\frac{t-r+1}{2t}}^{\frac{t+r+1}{2t}}\lambda^{-1-\nu} \, d\lambda
\lesssim
\frac{1}{|\nu|}\left(\frac{t+r+1}{t}\right)^{-\nu}
\lesssim \frac{1}{|\nu|}
$$
and  obtain
$|u(t,x)|\lesssim C_f(\mu|\nu|)^{-1}(t-r)^{\mu}t^{-1-\nu}$. 
\end{proof}

\begin{proof}[Proof of Lemma \ref{wave lemma1}] When $r=0$, the estimate is trivial. When $r>0$,
we can set $x = (r,0,0)$. The surface $S_{\lambda} := \{|y| = 1-\lambda\}\cap \{\left|\frac{x}{t} - y\right|\leq \lambda - t^{-1}\}$ is parameterized by:
\begin{itemize}

\item $\theta$: angle from $(1,0,0)$ to $y$ with $0\leq \theta\leq \pi$,

\item $\phi$: angle from the plane determined by $(1,0,0)$ and $(0,1,0)$ and the plane determined by $y$ and $(1,0,0)$ with $0\leq \phi\leq 2\pi$. 

\end{itemize} 
We have
$y = (1-\lambda)\big(\cos\theta,\sin\theta  \cos\phi,\sin\theta \sin\phi\big)$ and distinguish between two cases: 

\

\noindent {\bf Case 1.}
When $\frac{t-r+1}{2t}\leq\lambda\leq \frac{t+r+1}{2t}$, 
we only have a part of the sphere $\{|y|=1-\lambda\}$ contained in the ball $\{\left|\frac{x}{t}-y\right|\leq \lambda - t^{-1}\}$ where $\cos(\theta) \geq \frac{(r/t)^2+(1-\lambda)^2 - \left(\lambda-t^{-1}\right)^2}{(2r/t)(1-\lambda)}$. So we set 
$\theta_0 := \arccos\left(\frac{(r/t)^2+(1-\lambda)^2 - \left(\lambda-t^{-1}\right)^2}{(2r/t)(1-\lambda)}\right)$
and see that 
$$
\lambda - \big|\frac{x}{t} - y\big|  = \lambda -\sqrt{\frac{r^2}{t^2} + (1-\lambda)^2 - 2\frac{r}{t}(1-\lambda)\cos\theta}
$$
and
$d\sigma = (1-\lambda)^2\sin(\theta)d\theta d\phi$. 
The integral is estimated as follows:
$$
\aligned
& \hskip-.3cm \int_{|y| = 1-\lambda,|\frac{x}{t}-y|\leq \lambda-t^{-1}} \frac{d\sigma}{\big(\lambda-\big|\frac{x}{t} - y\big|\big)^{1-\mu}}
\\
=&\int_0^{2\pi}d\phi\int_0^{\theta_0}(1-\lambda)^2\sin\theta
\bigg(\lambda -\sqrt{\frac{r^2}{t^2} + (1-\lambda)^2 - 2\frac{r}{t}(1-\lambda)\cos\theta}\bigg)^{-1+\mu}d\theta
\\
=&2\pi \int_0^{\theta_0}(1-\lambda)^2\sin\theta
\bigg(\lambda -\sqrt{\frac{r^2}{t^2} + (1-\lambda)^2 - 2\frac{r}{t}(1-\lambda)\cos\theta}\bigg)^{-1+\mu}d\theta
\\
=&-2\pi(1-\lambda)^2\int_0^{\theta_0}
\bigg(\lambda -\sqrt{\frac{r^2}{t^2} + (1-\lambda)^2 - 2\frac{r}{t}(1-\lambda)\cos\theta}\bigg)^{-1+\mu}d\cos\theta
\endaligned
$$
thus, with $\omega = \cos \theta$, 
$$
\aligned
& \hskip-.3cm \int_{|y| = 1-\lambda,|\frac{x}{t}-y|\leq \lambda-t^{-1}} \frac{d\sigma}{\big(\lambda-\big|\frac{x}{t} - y\big|\big)^{1-\mu}}
\\
=&2\pi(1-\lambda)^2\int_{\cos\theta_0}^1
\bigg(\lambda -\sqrt{\frac{r^2}{t^2} + (1-\lambda)^2 - 2\frac{r}{t}(1-\lambda)\omega}\bigg)^{-1+\mu}d\omega
\\
=&\frac{\pi t(1-\lambda)}{r}\int_{|\frac{r}{t} - (1-\lambda)|^2}^{(\lambda-t^{-1})^2}\big(\lambda - \sqrt{\gamma}\big)^{-1+\mu} \, d\gamma
=2 \frac{\pi t(1-\lambda)}{r}\int_{t^{-1}}^{\lambda-|\frac{r}{t}-(1-\lambda)|}\zeta^{-1+\mu}(\lambda-\zeta) \, d\zeta, 
\endaligned
$$
where $\gamma = \frac{r^2}{t^2} + (1-\lambda)^2 - 2\frac{r}{t}(1-\lambda)\omega$ and
$\zeta := \lambda - \sqrt{\gamma}$. We distinguish between two sub-cases.

\

\noindent {\bf Case 1.1:} $\frac{r}{t}\leq 1-\lambda$ or, equivalently, $\lambda \leq \frac{t-r}{t}$. We have 
$$
\aligned
 &2 \frac{\pi t(1-\lambda)}{r}\int_{t^{-1}}^{\lambda-|\frac{r}{t}-(1-\lambda)|}\zeta^{-1+\mu}(\lambda-\zeta) \, d\zeta
\\
&=2 \frac{\pi t(1-\lambda)}{r}\int_{t^{-1}}^{2(\lambda-\frac{t-r}{2t})}\zeta^{-1+\mu}(\lambda-\zeta) \, d\zeta
\lesssim \frac{\lambda t(1-\lambda)}{\mu r}\frac{(t-r)^{\mu}}{t^{\mu}}.
\endaligned
$$

\

\noindent {\bf Case 1.2:}  $1-\lambda<\frac{r}{t}$ or, equivalently, $\lambda>\frac{t-r}{t}$.  We have 
$$
\aligned
&2 \frac{\pi t(1-\lambda)}{r}\int_{t^{-1}}^{\lambda-|\frac{r}{t}-(1-\lambda)|}\zeta^{-1+\mu}(\lambda-\zeta) \, d\zeta
\\
&=2 \frac{\pi t(1-\lambda)}{r}\int_{t^{-1}}^{\frac{t-r}{t}}\zeta^{-1+\mu}(\lambda-\zeta) \, d\zeta
\lesssim
 \frac{\lambda t(1-\lambda)}{\mu r}\frac{(t-r)^{\mu}}{t^{\mu}}.
\endaligned
$$

\

\noindent {\bf Case 2.} When $\frac{t+r+1}{2t}\leq \lambda\leq 1$, the sphere $\{|y|=1-\lambda\}$ is contained in $\{\left|(x/t)-y\right|\leq \lambda-t^{-1} \}$ and 
$$
\aligned
&\int_{|y| = 1-\lambda,|\frac{x}{t}-y|\leq \lambda-t^{-1}} \frac{d\sigma}{\big(\lambda-\big|\frac{x}{t} - y\big|\big)^{1-\mu}}
=\int_{|y| = 1-\lambda}\frac{d\sigma}{\big(\lambda-\big|\frac{x}{t} - y\big|\big)^{1-\mu}}
\\
&=2\pi \int_0^{\pi}(1-\lambda)^2\sin\theta
\bigg(\lambda -\sqrt{\frac{r^2}{t^2} + (1-\lambda)^2 - 2\frac{r}{t}(1-\lambda)\cos\theta}\bigg)^{-1+\mu}d\theta
\\
&=2\pi(1-\lambda)^2\int_{-1}^1
\bigg(\lambda -\sqrt{\frac{r^2}{t^2} + (1-\lambda)^2 - 2\frac{r}{t}(1-\lambda)\omega}\bigg)^{-1+\mu}d\omega. 
\endaligned
$$
Therefore, we have 
$$
\aligned
\int_{|y| = 1-\lambda,|\frac{x}{t}-y|\leq \lambda-t^{-1}} \frac{d\sigma}{\big(\lambda-\big|\frac{x}{t} - y\big|\big)^{1-\mu}}
&=2 \frac{\pi t(1-\lambda)}{r}\int_{\lambda - (\frac{r}{t} + (1-\lambda))}^{\lambda-|\frac{r}{t}-(1-\lambda)|}\zeta^{-1+\mu}(\lambda-\zeta) \, d\zeta
\\
&=2 \frac{\pi t(1-\lambda)}{r}\int_{2\lambda - \frac{t+r}{t}}^{\lambda-|\frac{r}{t}-(1-\lambda)|}\zeta^{-1+\mu}(\lambda-\zeta) \, d\zeta.
\endaligned
$$
We distinguish between two sub-cases.

\

\noindent {\bf Case 2.1: } When $\frac{r}{t}\leq 1-\lambda$ or, equivalently, $\lambda \leq \frac{t-r}{t}$, we have
$$
\aligned
&2 \frac{\pi t(1-\lambda)}{r}\int_{2\lambda - \frac{t+r}{t}}^{\lambda-|\frac{r}{t}-(1-\lambda)|}\zeta^{-1+\mu}(\lambda-\zeta) \, d\zeta
\\
&=2 \frac{\pi t(1-\lambda)}{r}\int_{2\lambda - \frac{t+r}{t}}^{2\lambda-\frac{t-r}{t}}\zeta^{-1+\mu}(\lambda-\zeta) \, d\zeta
\leq C(1-\lambda)\left(\frac{t+r}{t}-\lambda\right)\left(2\lambda - \frac{t+r}{t}\right)^{-1+\mu},
\endaligned
$$
where the function $\zeta^{-1+\mu}(\lambda-\zeta)$ is decreasing and we can bound this integral by the value at the inferior boundary (which is $2\lambda - \frac{t+r}{t}$) times the length of the interval $2r/t$.

\

\noindent {\bf Case 2.2:} When $1-\lambda<\frac{r}{t}$ or, equivalently, $\lambda>\frac{t-r}{t}$, we have
$$
\aligned
& 2 \frac{\pi t(1-\lambda)}{r}\int_{2\lambda - \frac{t+r}{t}}^{\lambda-|\frac{r}{t}-(1-\lambda)|}\zeta^{-1+\mu}(\lambda-\zeta) \, d\zeta
\\
&=2 \frac{\pi t(1-\lambda)}{r}\int_{2\lambda - \frac{t+r}{t}}^{\frac{t-r}{t}}\zeta^{-1+\mu}(\lambda-\zeta) \, d\zeta
\leq C(1-\lambda)\frac{t}{r}\int_{2\lambda - \frac{t+r}{t}}^{\frac{t-r}{t}}\zeta^{-1+\mu}d\zeta
\\
&\leq\frac{C(1-\lambda)t}{\mu r}\zeta^{\mu}\bigg|_0^{\frac{t-r}{r}} =\frac{ C(1-\lambda)t}{\mu r}\left(\frac{t-r}{t}\right)^\mu.
\endaligned
$$
When $\frac{t+r+1}{2t}\leq \frac{t-r}{t}$, both case above may occur, while only Case 2.2 is possible if the opposite inequality holds true. 
\end{proof}


\section{Sup-norm estimate for the Klein-Gordon equation}
\label{sec subsec KG-supnorm}

We provide here a proof of Proposition~\ref{Linfini KG}. 

\begin{lemma}[A decomposition of the Klein-Gordon operator]
\label{lem 0 K-G}
For sufficiently smooth solutions $v$ to \eqref{Linfini KG eq}, the function
$
w_{t,x}(\lambda) := \lambda^{3/2}v(\lambda t/s, \lambda x/s) 
$ 
is a solution to the second-order ODE in $\lambda$
$$
\aligned
&
\frac{d^2}{d\lambda^2}w_{t,x}(\lambda) + \frac{c^2}{1+\hb^{00}(\lambda t/s,\lambda x/s)} w_{t,x}(\lambda)
\\
&= \big(1+\hb^{00}(\lambda t/s,\lambda x/s)\big)^{-1}\big(R_1[v] + R_2[v] + R_3[v] +s^{3/2}f\big)(\lambda t/s,\lambda x/s).
\endaligned
$$
\end{lemma}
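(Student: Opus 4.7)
The plan is to convert the PDE into an ODE in $\lambda$ along the ray $\lambda \mapsto P(\lambda) := (\lambda t/s, \lambda x/s)$. This ray lies on the hyperboloid $\Hcal_\lambda$ (since $(\lambda t/s)^2 - |\lambda x/s|^2 = \lambda^2$), and the weight $\lambda^{3/2}$ in $w_{t,x}$ is designed so as to absorb the first-order damping term $(3/s)\delb_0 v$ produced by the flat wave operator when written in the hyperboloidal frame.

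First I would compute $dw/d\lambda$ and $d^2 w/d\lambda^2$ via the chain rule. Using $\del_t = (t/s)\delb_0$ and $\del_a = \delb_a - (x^a/s)\delb_0$ pointwise, together with the fact that the ratios $t/s$ and $x^a/s$ are preserved along the ray, one checks that $(d/d\lambda)[F(P(\lambda))] = \bigl[\delb_0 F + (x^a/s)\delb_a F\bigr]_{P(\lambda)}$. Expanding the square $\bigl[\delb_0 + (x^a/s)\delb_a\bigr]^2$ and using that $\delb_0,\delb_a$ commute as coordinate vector fields in $(s,x)$-coordinates yields
\[
\tfrac{d^2 w}{d\lambda^2} = \tfrac{3}{4}\lambda^{-1/2}v + 3\lambda^{1/2}\bigl(\delb_0 v + (x^a/s)\delb_a v\bigr) + \lambda^{3/2}\bigl(\delb_0^2 v + 2(x^a/s)\delb_0\delb_a v + (x^a x^b/s^2)\delb_a\delb_b v\bigr),
\]
with every quantity on the right-hand side evaluated at $P(\lambda)$.

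Next I would express the reduced wave operator in the hyperboloidal frame. A direct chain-rule computation from $\Box = -\del_t^2 + \sum_a \del_a^2$ gives $-\Box v = \delb_0^2 v + 2(x^a/s)\delb_0\delb_a v - \sum_a \delb_a^2 v + (3/s)\delb_0 v$, while the curved part, via $\del_\alpha = \Psib^\beta_\alpha \delb_\beta$ and the product rule, decomposes as $h^{\alpha\beta}\del_\alpha\del_\beta v = \hb^{\alpha\beta}\delb_\alpha\delb_\beta v + h^{\alpha\beta}\del_\alpha(\Psib^\gamma_\beta)\delb_\gamma v$. With the paper's sign convention for $\hb^{00}$ (consistent with the $\Hb^{00}$ component of $H = m - g$ used elsewhere in the monograph), the coefficient of $\delb_0^2 v$ in $-\Boxt_g v$ becomes $1 + \hb^{00}$, so the Klein--Gordon equation at $P(\lambda)$ (where the hyperboloidal radius is $\lambda$) can be solved for $(1+\hb^{00})\lambda^{3/2}\delb_0^2 v$ in terms of $\lambda^{3/2} f$, $c^2\lambda^{3/2} v$, the damping $(3/\lambda)\delb_0 v$, the tangential second derivatives $\delb_a\delb_b v$, and the curved-metric corrections.

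Substituting this expression into $(1+\hb^{00})\,d^2 w/d\lambda^2$ and regrouping produces the claimed ODE. The crucial cancellation is that $3\lambda^{1/2}\delb_0 v$, already present in $d^2 w/d\lambda^2$, combines with $\lambda^{3/2}\cdot (3/\lambda)\delb_0 v$ from the damping of $-\Box v$ to leave only $-\hb^{00}\cdot 3\lambda^{1/2}\delb_0 v$, which is exactly the $\hb^{00}\delb_0 v$ piece of $R_2$. The pure Minkowski remnants assemble into $R_1$; the second-derivative contributions $\hb^{\alpha\beta}\delb_\alpha\delb_\beta v$ together with the frame-correction $h^{\alpha\beta}\del_\alpha(\Psib^\gamma_\beta)\delb_\gamma v$ and the $-\hb^{00}$ copies of the $v$ and $\delb_0 v$ parts of $R_1$ assemble into $R_2$; and the cross-terms obtained by distributing $(1+\hb^{00})$ against the $(x^a/s)$-weighted second derivatives form $R_3$. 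After division by $(1+\hb^{00})$, the mass term $c^2 \lambda^{3/2} v = c^2 w$ gives rise to the coefficient $c^2/(1+\hb^{00})$ and $\lambda^{3/2} f$ becomes the forcing. The main difficulty is the careful bookkeeping: one must verify that the definitions of $R_1,R_2,R_3$ have been tuned so that the regrouping closes with no leftover terms, the only nontrivial cancellation being the $\delb_0 v$ one identified above, while every other term is purely additive.
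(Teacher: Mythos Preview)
Your approach is essentially the same as the paper's: both compute $d^2w_{t,x}/d\lambda^2$ via the chain rule along the ray, express $-\Box$ and the curved correction in the hyperboloidal frame, and then regroup so that the second-order radial operator $\delb_0^2+2(x^a/s)\delb_0\delb_a+(x^ax^b/s^2)\delb_a\delb_b$ appears with coefficient $1+\hb^{00}$. The only cosmetic difference is that the paper first writes the identity for the spacetime function $w(t,x)=s^{3/2}v(t,x)$ and then restricts to the ray, whereas you work directly with $w_{t,x}(\lambda)$; the bookkeeping and the key $\delb_0 v$ cancellation you identify are exactly those in the paper.
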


\begin{lemma}[Technical ODE estimate]
\label{lem 1 K-G}
Let the function $G$ be defined on some interval $[s_0,s_1]$ and satisfying $\sup | G| \leq 1/3$ and let $k$ be some integrable function defined on $[s_0,s_1]$. The solution $z$ to 
\be
\aligned
& z''(\lambda) + \frac{c^2}{1+ G(\lambda)} z(\lambda) = k(\lambda),
\\
& z(s_0) = z_0, \qquad z'(s_0) = z_1,
\endaligned
\ee
(for some initial data $z_0, z_1$) 
satisfies the uniform estimate for $s \in [s_0, s_1]$
\bel{Linfini ineq ODE}
\aligned
&
|z(s)| + |z'(s)|
\lesssim  \big(|z_0| + |z_1| + K(s)\big) + \int_{s_0}^s\Big(|z_0|+|z_1| + K(\sbar)\Big) \, | G'(\sbar)|e^{C\int_\sbar^s| G'(\lambda)|d\lambda} \, d\sbar
\endaligned
\ee 
with $K(s) := \int_{s_0}^s |k(\sbar)| \, d\sbar$ and a constant $C>0$.
\end{lemma}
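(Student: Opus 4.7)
My plan is to handle this as a standard weighted-energy estimate for a second-order ODE, where the weight is adapted to the time-dependent coefficient $c^{2}/(1+G)$, followed by a Gronwall integration that produces exactly the kernel structure $|G'(\bar s)| e^{C\int_{\bar s}^{s}|G'|}$ displayed in \eqref{Linfini ineq ODE}.

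First, I would introduce the natural Klein--Gordon energy adapted to the ODE,
$$
E(\lambda) := |z'(\lambda)|^{2} + \frac{c^{2}}{1+G(\lambda)}\, z(\lambda)^{2},
$$
and observe that the assumption $\sup|G|\leq 1/3$ gives the two-sided bound $\tfrac{3}{4}\leq (1+G)^{-1}\leq \tfrac{3}{2}$, so $E(\lambda)$ is comparable, up to absolute constants, to $|z'(\lambda)|^{2}+c^{2}|z(\lambda)|^{2}$. In particular $F(s_0) := \sqrt{E(s_0)}$ satisfies $F(s_0)\lesssim |z_1|+c\,|z_0|\lesssim |z_0|+|z_1|$, and the left-hand side of \eqref{Linfini ineq ODE} is controlled by $\sqrt{E(s)}$.

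Next, I would differentiate $E$ along the ODE. Using $z''=k-\tfrac{c^{2}}{1+G}z$, the cross terms cancel and I obtain the clean identity
$$
E'(\lambda) = 2\, z'(\lambda)\, k(\lambda) - \frac{c^{2}\, G'(\lambda)}{(1+G(\lambda))^{2}}\, z(\lambda)^{2}.
$$
Estimating $|2z'k|\leq 2\sqrt{E}\,|k|$ and $\bigl|\tfrac{c^{2}G'}{(1+G)^{2}}z^{2}\bigr|\leq C\,|G'|\,E$ (thanks again to $|G|\leq 1/3$), and then writing $F:=\sqrt{E}$ and using $E'=2FF'$, I arrive at the scalar differential inequality
$$
F'(\lambda) \leq |k(\lambda)| + C\,|G'(\lambda)|\, F(\lambda)
$$
wherever $F>0$; at points where $F$ vanishes a standard regularization (replacing $E$ by $E+\eta$ and letting $\eta\to 0^{+}$) removes the issue. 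Integrating from $s_0$ to $s$ yields the integral inequality
$$
F(s) \leq F(s_0) + K(s) + C\int_{s_0}^{s} |G'(\lambda)|\, F(\lambda)\, d\lambda,
$$
in which the quantity $A(s):=F(s_0)+K(s)$ is non-decreasing.

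The final step is a Gronwall-type manipulation that reproduces exactly the form on the right of \eqref{Linfini ineq ODE}. Setting $U(s):=\int_{s_0}^{s}|G'|F$, the inequality reads $U'(s)\leq C|G'(s)|(A(s)+U(s))$; multiplying by the integrating factor $e^{-C\int_{s_0}^{s}|G'|}$ and integrating gives
$$
U(s) \leq \int_{s_0}^{s} C\,|G'(\bar s)|\, A(\bar s)\, e^{C\int_{\bar s}^{s}|G'(\lambda)|\,d\lambda}\, d\bar s,
$$
so that
$$
F(s) \leq A(s) + C\int_{s_0}^{s} |G'(\bar s)|\bigl(F(s_0)+K(\bar s)\bigr)\, e^{C\int_{\bar s}^{s}|G'(\lambda)|\,d\lambda}\, d\bar s.
$$
Recalling $F(s_0)\lesssim |z_0|+|z_1|$ and $|z(s)|+|z'(s)|\lesssim F(s)$ finishes the proof. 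I do not expect any genuine obstacle here; the only mild subtlety is the choice of the multiplier (differentiating $\sqrt{E}$ rather than $E$ itself) so that the Gronwall kernel is linear in $|G'|$ rather than quadratic, which is what makes the stated exponential factor $e^{C\int|G'|}$ appear instead of $e^{C\int|G'|^{2}}$.
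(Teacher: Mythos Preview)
Your proof is correct and reaches the same integral Gronwall inequality as the paper, but by a genuinely different route. The paper rewrites the ODE as a first-order system $b'=Ab+(0,k)^{T}$ with $b=(z,z')^{T}$, diagonalizes $A=PQP^{-1}$ (the eigenvalues $\pm ic(1+G)^{-1/2}$ being purely imaginary), and applies Duhamel's formula to $P^{-1}b$, treating $(P^{-1})'b$ as an additional source; the key point there is that $\|e^{\int Q}\|$ is uniformly bounded and $\|(P^{-1})'\|\lesssim |G'|$, which yields exactly the integral inequality
\[
|z(s)|+|z'(s)|\lesssim |z_0|+|z_1|+K(s)+C\int_{s_0}^{s}|G'(\lambda)|\big(|z(\lambda)|+|z'(\lambda)|\big)\,d\lambda
\]
before Gronwall, the same one you obtain from $F'\leq |k|+C|G'|F$. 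Your energy method is more elementary---it stays real-valued, needs no complex diagonalization, and makes the cancellation of the principal term in $E'$ transparent---while the paper's diagonalization makes the oscillatory structure explicit and would adapt more directly to genuine systems. Both arguments rest on the same mechanism: the time-variation of the coefficient contributes a source of size $|G'|$ times the solution, which Gronwall turns into the kernel $|G'(\sbar)|e^{C\int_{\sbar}^{s}|G'|}$.
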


\begin{proof}[Proof of Lemma~\ref{lem 0 K-G}]
{\bf 1. Flat wave operator.} Recall $s=\sqrt{t^2-r^2}$ and $r=|x|$.  an elementary The flat wave operator $\Box$ in the hyperboloidal frame reads
\bel{Hyper box1}
-\Box = \delb_0 \delb_0 - \sum_a\delb_a\delb_a + 2 \sum_a \frac{x^a}{s}\delb_0\delb_a + \frac{3}{s}\delb_0.
\ee
Given any function $v$, we write 
$$
w(t,x) = s^{3/2}v(t,x) = (t^2-|x|^2)^{3/4}v(t,x),
$$
and 
\bel{Hyper box2}
-s^{3/2}\Box v = \delb_0\delb_0 w - \sum_a\delb_a\delb_a w + 2\sum_a \frac{x^a}{s}\delb_0\delb_a w - \frac{3w}{4s^2} - \sum_a\frac{3x^a\delb_a w}{s^2}.
\ee
Consider the function of a single variable 
$$
w_{t,x}(\lambda) := w(\lambda t/s,\lambda x/s)
= \lambda^{3/2}v(\lambda t/s,\lambda x/s),
$$
so that
$$
\frac{d}{d\lambda}w_{t,x}(\lambda) = \big(\delb_0 + s^{-1}x^a\delb_a\big)w(\lambda t/s,\lambda x/s)
= \frac{t}{s}\newperp  w\left(\lambda t/s,\lambda x/s\right)
$$
and
\bel{Hyper 2order}
\frac{d^2}{d\lambda^2}w_{t,x}(\lambda) = \bigg(\delb_0\delb_0 + 2 \frac{x^a}{s}\delb_0\delb_a +\frac{x^ax^b}{s^2}\delb_a\delb_b\bigg)w(\lambda t/s,\lambda x/s).
\ee
Combining with \eqref{Hyper box2} and recalling $w(t,x) = s^{3/2}v(t,x)$, we obtain 
\bel{Hyper 2order-box1}
\aligned
&\bigg(\delb_0\delb_0 + 2 \frac{x^a}{s}\delb_0\delb_a +\frac{x^ax^b}{s^2}\delb_a\delb_b\bigg)w
\\
& = -s^{3/2}\Box v + \sum_a\delb_a\delb_a w + \frac{x^ax^b}{s^2}\delb_a\delb_b w + \frac{3}{4s^2}w + \sum_a\frac{3x^a}{s^2}\delb_a w
  = - s^{3/2}\Box v + R_1[v].
\endaligned
\ee


\noindent {\bf 2. Curved wave operator.} We write
$$
-\Box v = h^{\alpha\beta}\del_\alpha\del_\beta v  - c^2v + f
$$
and perform a change of frame: 
$$
\aligned
h^{\alpha\beta}\del_\alpha\del_\beta v =& \hb^{\alpha\beta}\delb_\alpha\delb_\beta v + h^{\alpha\beta}\del_\alpha\Psib^{\beta'}_\beta\,\delb_{\beta'}v
\\
=& \hb^{00}\delb_0\delb_0 v
+ 2\hb^{0b}\delb_0\delb_bv + \hb^{ab}\delb_a\delb_bv + h^{\alpha\beta}\del_\alpha\Psib^{\beta'}_\beta\,\delb_{\beta'}v.
\endaligned
$$
We get
$$
\aligned
-s^{3/2}\Box v
=& - s^{3/2}\hb^{00}\delb_0\delb_0 v
      - s^{3/2}\big(2\hb^{0b}\delb_0\delb_bv + \hb^{ab}\delb_a\delb_bv + h^{\alpha\beta}\del_\alpha\Psib^{\beta'}_\beta\,\delb_{\beta'}v\big)
 - c^2s^{3/2}v + s^{3/2}f
\\
=& - \hb^{00}\delb_0\delb_0 \big(s^{3/2}v\big)- c^2s^{3/2}v
\\
& + \hb^{00}\bigg(\frac{3v}{4s^{1/2}} + 3s^{1/2}\delb_0 v\bigg)
 -  s^{3/2}\big(2\hb^{0b}\delb_0\delb_bv + \hb^{ab}\delb_a\delb_bv + h^{\alpha\beta}\del_\alpha\Psib^{\beta'}_\beta\,\delb_{\beta'}v\big) + s^{3/2}f,
\endaligned
$$
and conclude that 
\bel{Hyper KG}
\aligned
-s^{3/2}\Box v
& =  - \hb^{00}\delb_0\delb_0 w - c^2w + \hb^{00}\bigg(\frac{3v}{4s^{1/2}} + 3s^{1/2}\delb_0 v\bigg)
\\
& \quad -  s^{3/2}\big(2\hb^{0b}\delb_0\delb_bv + \hb^{ab}\delb_a\delb_bv + h^{\alpha\beta}\del_\alpha\Psib^{\beta'}_\beta\,\delb_{\beta'}v\big) + s^{3/2}f
\\
&= - \hb^{00}\delb_0\delb_0 w - c^2w + R_2[v] + s^{3/2}f.
\endaligned
\ee
Combining \eqref{Hyper 2order-box1} and \eqref{Hyper KG}, we get 
\bel{Hyper 2order-box2}
\delb_0\delb_0w + 2 \frac{x^a}{s}\delb_0\delb_a w+\frac{x^ax^b}{s^2}\delb_a\delb_bw
- \hb^{00}\delb_0\delb_0 w + c^2w
= R_1[v] + R_2[v] + s^{3/2}f.
\ee

\

\noindent {\bf 3. Conclusion.} We now write
$$
\aligned
& \big(1 + \hb^{00}\big)\bigg(\delb_0\delb_0 + 2 \frac{x^a}{s}\delb_0\delb_a +\frac{x^ax^b}{s^2}\delb_a\delb_b\bigg)w + c^2w
\\
& = \hb^{00}\bigg( 2 \frac{x^a}{s}\delb_0\delb_a +\frac{x^ax^b}{s^2}\delb_a\delb_b\bigg)w + R_1[v] + R_2[v] + s^{3/2}f
\endaligned
$$
and so
\bel{Linfini eq1}
\aligned
& \bigg(\delb_0\delb_0 + 2 \frac{x^a}{s}\delb_0\delb_a +\frac{x^ax^b}{s^2}\delb_a\delb_b\bigg)w
+ \frac{c^2w}{1 + \hb^{00}} 
\\
& = \big(1 + \hb^{00}\big)^{-1}\big(R_1[v] + R_2[v] + R_3[v] +s^{3/2}f\big).
\endaligned
\ee
This implies that
\be
\label{Linfini ineq ODE0}
\aligned
& \frac{d^2}{d\lambda^2}w_{t,x}(\lambda) + \frac{c^2w_{t,x}(\lambda)}{1 + \hb^{00}(\lambda t/s,\lambda x/s)}
\\
&= \big(1 + \hb^{00}(\lambda t/s,\lambda x/s)\big)^{-1}\big(R_1[v] + R_2[v] + R_3[v] +s^{3/2}f\big)(\lambda t/s,\lambda x/s).
\endaligned
\ee
\end{proof}


\begin{proof}[Proof of Lemma~\ref{lem 1 K-G}] We consider the vector field $b(\lambda) = \big(z(\lambda),z'(\lambda)\big)^T$ and the matrix $A(\lambda) := \left(
\begin{array}{cc}
0 &1
\\
-c^2(1+G)^{-1} &0
\end{array}
\right)
$
and write
$
b' = Ab +
\left(
\begin{array}{c}
0\\
k
\end{array}
\right). 
$
Consider the diagonalization $A = PQP^{-1}$ with
$$
Q =
\left(
\begin{array}{cc}
ic\big(1+G\big)^{-1/2} &0
\\
0 &-ic\big(1+G \big)^{-1/2}
\end{array}
\right)
$$
and
$$
P =
\left(
\begin{array}{cc}
1 & 1
\\
\frac{ic}{(1+G)^{1/2}} &-\frac{ic}{(1+G)^{1/2}}
\end{array}
\right),
\qquad
\quad
P^{-1} =
\left(
\begin{array}{cc}
1/2 & \frac{(1+G)^{1/2}}{2ic}
\\
1/2 & -\frac{(1+G)^{1/2}}{2ic}
\end{array}
\right).
$$
We thus have
$b' = PQP^{-1}b +
\left(
\begin{array}{c}
0
\\
k
\end{array}
\right)$,
leading us to
$$
\big(P^{-1}b\big)' = Q\big(P^{-1}b\big) + \big(P^{-1}\big)'b + P^{-1}
\left(
\begin{array}{c}
0
\\
k
\end{array}
\right).
$$
We regard $\big(P^{-1}\big)'b$ as a source and write 
$$
\aligned
P^{-1}b(s) = e^{\int_{s_0}^s Q(\sbar)d\sbar} P^{-1}b(s_0)
& + \int_{s_0}^s e^{\int_{\lambda}^s Q(\sbar)d\sbar}P^{-1}
\left(
\begin{array}{c}
0
\\
k
\end{array}
\right)
d\lambda
\\
& + \int_{s_0}^s e^{\int_{\lambda}^s Q(\sbar)d\sbar} \big(P^{-1}\big)'(\lambda)\,b(\lambda) \, d\lambda.
\endaligned
$$
When $\sup_{\lambda\in[1,s]} |G(\lambda)|\leq 1/3$, the norm of $P(\lambda)$ and $P^{-1}(\lambda)$ are bounded for $\lambda\in[s_0,s]$. 
The norm of $\big(P^{-1}\big)'(\lambda)$ is bounded by $C|G'(\lambda)|$ for a constant $C$ depending only on $c$. The norm of $Q$ is bounded by a constant $C>0$. Observe also that
$$
\int_{\lambda}^sQ(\sbar)d\sbar =
\left(
\begin{array}{cc}
ic\int_\lambda^s (1+G)^{-1/2}(\sbar)d\sbar & 0
\\
0 & -ic\int_\lambda^s (1+G)^{-1/2}(\sbar)d\sbar
\end{array}
\right), 
$$
therefore
$$
e^{\int_{\lambda}^sQ(\sbar)d\sbar} =
\left(
\begin{array}{cc}
e^{ic\int_\lambda^s (1+G)^{-1/2}(\sbar)d\sbar} & 0
\\
0 & e^{-ic\int_\lambda^s (1+G)^{-1/2}(\sbar)d\sbar}
\end{array}
\right).
$$
The norm of $e^{\int_{\lambda}^sQ(\sbar)d\sbar}$ is uniformly bounded and we have proven:
$$
|z(s)| + |z'(s)|\leq C (|z(s_0)|+|z'(s_0)|) + C \, K(s) + C\int_{s_0}^s| G'(\lambda)|\big(|z(\lambda)| + |z'(\lambda)|\big) \, d\lambda, 
$$
and it remains to apply Gronwall's lemma.
\end{proof}

\begin{proof}[Proof of Proposition \ref{Linfini KG}] 
We have 
$$
\aligned
&w_{t,x}(\lambda) = \lambda^{3/2}v(\lambda t/s, \lambda x/s),
\\
&w'_{t,x}(\lambda) = \frac{3}{2}\lambda^{1/2}v(\lambda t/s, \lambda x/s) + \frac{t}{s}\lambda^{3/2}\newperp v(\lambda t/s, \lambda x/s). 
\endaligned
$$
The function $w_{t,x}$ is the restriction of $w(t,x) = s^{3/2}v(t,x)$ to the segment $\big\{(\lambda t/s, \lambda x/s),\lambda\in[s_0,s] \big\}$.
Apply \eqref{Linfini ineq ODE} and \eqref{Linfini ineq ODE0} to this segment, with
$$
s_0 =\left
\{
\aligned
& 2, \quad 0\leq r/t \leq 3/5,
\\
& \sqrt{\frac{t+r}{t-r}},\quad 3/5\leq r/t\leq 1.
\endaligned
\right.
$$
This is the line $\{(\lambda t/s, \lambda x/s)\}$ between $(t,x)$ and the boundary of $\Kcal_{[s_0,+\infty)}$.

The function $v$ is supported in $\Kcal$ and the restriction of $v$ to the hyperboloid $\Hcal_2$ is supported in $\Hcal_2 \cap \Kcal$. We recall that when $3/5 \leq r/t\leq 1$, $w_{t,x}(s_0) = 0$ and when $0\leq r/t\leq 3/5$, $w_{t,x}(s_0)$ is determined by $v_0$.

When $0\leq r/t\leq 3/5$, we apply \eqref{Linfini ineq ODE} with $s_0 = 2$. When $\lambda = 2$, we write 
$
w_{t,x}(2) = w(2t/s,2x/s) = 2^{3/2}v(2t/s,2x/s) = 2^{3/2}v_0(2x/s),
$
and
$$
\aligned
w'_{s,x}(2) =& \frac{d}{d\lambda}\big(\lambda^{3/2}v(\lambda t/s,\lambda x/s)\big)\big|_{\lambda = 2}
\\
=& \frac{3\sqrt{2}}{2}v(2t/s,2x/s) + 2^{3/2}(s/t)^{-1}\newperp v(2t/s,2x/s)
\\
=& \frac{3\sqrt{2}}{2}v(2t/s,2x/s) + 2^{3/2}(s/t)^{-1}\del_tv(2t/s,2x/s) + 2^{3/2}(x^a/s)\del_av(2t/s,2x/s)
\\
=& \frac{3\sqrt{2}}{2}v_0(2x/s) + 2^{3/2}(x^a/s)\del_av_0(2x/s) + 2^{3/2}(s/t)^{-1}v_1(2t/s,2x/s).
\endaligned
$$
Recall that when $0\leq r/t\leq 3/5$, we have $4/5\leq s/t\leq 1$. So we see that $|w_{t,x}(s_0)| + |w'_{t,x}(s_0)|\leq C(\|v_0\|_{L^\infty(\Hcal_2)} + \|v_1\|_{L^\infty(\Hcal_2)})$. Then by \eqref{Linfini ineq ODE} and \eqref{Linfini ineq ODE0} we have
$$
\aligned
|w_{t,x}(s)| + |w'_{t,x}(s)| \leq
\, & C(\|v_0\|_{L^\infty(\Hcal_2)} + \|v_1\|_{L^\infty(\Hcal_2)}) + C F(s)
\\
&+ C(\|v_0\|_{L^\infty(\Hcal_2)} + \|v_1\|_{L^\infty(\Hcal_2)})\int_2^s|h_{t,x}'(\sbar)|e^{C\int_{\sbar}^s |h_{t,x}'(\lambda)| d\lambda} \, d\sbar
\\
&+ C\int_2^s F(\sbar)|h_{t,x}'(\sbar)|e^{C\int_\sbar^s |h_{t,x}'(\lambda)|d\lambda} \, d\sbar.
\endaligned
$$

When $3/5\leq r/t \leq 1$, $w_{t,x}(s_0) = w_{t,x}'(s_0) = 0$ and so  
$$
\aligned
|w_{t,x}(s)| + |w'_{t,x}(s)| \leq \, & C F(s) + C\int_{s_0}^s F(\sbar)|h_{t,x}'(\sbar)|e^{C\int_\sbar^s |h_{t,x}'(\lambda)|d\lambda} \, d\sbar, 
\endaligned
$$
which leads to 
$
|w_{t,x}(s)| + |w'_{t,x}(s)|\lesssim V(t,x).
$
Recall finally 
$v(t,x) = s^{3/2}w_{t,x}(s)$
and
$$
(s/t)^{-1}s^{3/2}\newperp  v(t,x) = w_{t,x}'(s) - \frac{3}{2}s^{1/2}v(t,x) = w_{t,x}'(s) - \frac{3}{2}s^{-1}w_{t,x}(s). 
$$ 
\end{proof}


\section{Commutator estimates for the hyperboloidal frame}
\label{appendix-COM} 

In this appendix, we provide some further details on some important properties shared by the commutators arising in our problem. The vector fields $\del_\alpha, $ and $L_a$ are Killing for the wave operator $\Box$, so that
\be
[\del_\alpha, \,\Box]=0, \qquad [L_a,\, \Box] =0.
\ee
By introducing 
\be
\label{pre commutator base L-P}
\aligned
\, [L_a,\del_\beta]
& =: \Theta_{a\beta}^{\gamma}\del_{\gamma},
\qquad
[\del_\alpha,\delu_\beta]
 =: t^{-1}\Gammau_{\alpha\beta}^{\gamma}\del_{\gamma},
\qquad
[L_a,\delu_\beta]
 =: \Thetau_{a\beta}^{\gamma}\delu_{\gamma},
\endaligned
\ee
we find 
\bel{pre commutator base'}
\aligned
&\Theta_{a0}^{\gamma} = -\delta_a^{\gamma},
\qquad
&&\Theta_{ab}^{\gamma} = -\delta_{ab}\delta_0^{\gamma},
\\
&\Gammau_{0b}^{\gamma} = -\frac{x^b}{t}\delta_0^{\gamma} = \Psi^0_b\delta_0^{\gamma},
\quad&&
 \Gammau_{\alpha0}^{\gamma}  = 0,
\qquad
&&&\Gammau_{ab}^{\gamma}= \delta_{ab}\delta_0^{\gamma},\quad
\\
&\Thetau_{a0}^{\gamma} = -\delta^{\gamma}_a + \frac{x^a}{t}\delta^{\gamma}_0 =  -\delta^{\gamma}_a + \Phi_0^a\delta^{\gamma}_0,
\qquad
&&\Thetau_{ab}^{\gamma} = -\frac{x^b}{t}\delta^{\gamma}_a = \Psi^0_b\delta^{\gamma}_a.
\endaligned
\ee 
All of these coefficients are smooth in the (open) cone $\Kcal$ and homogeneous of degree $0$.
Furthermore, we also get 
\bel{pre commutator base''}
\Thetau_{ab}^0 = 0, \quad \text{ so that } \quad
[L_a,\delu_b] =
 \Thetau_{ab}^c\delu_c, 
\ee
which means that the commutator of a ``good'' derivative $\delu_b$ with $L_a$ is again a ``good'' derivative. 

\begin{lemma}
[Algebraic decomposition of commutators. I]
There exist constants $\lambda_{aJ}^I$ such that
\bel{pre lem commutator pr1}
[\del^I, L_a] = \sum_{|J|\leq|I|}\lambda^I_{aJ}\del^J.
\ee
\end{lemma}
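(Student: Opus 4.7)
\medskip

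The plan is to proceed by induction on $|I|$, with the base case being trivial and the inductive step reducing to the rank--one identity $[L_a,\del_\beta]=\Theta^{\gamma}_{a\beta}\,\del_\gamma$ from \eqref{pre commutator base L-P}--\eqref{pre commutator base'}. The crucial structural observation, which I would emphasize at the outset, is that the coefficients $\Theta^{\gamma}_{a\beta}$ explicitly given in \eqref{pre commutator base'} are \emph{numerical constants} (each is $0$, $\pm 1$, or $-\delta_{ab}$), and this is exactly what allows the conclusion to be phrased with constant coefficients $\lambda^{I}_{aJ}$.

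For the base case $|I|=0$ the commutator vanishes identically, and for $|I|=1$ we have $[\del_\beta,L_a]=-\Theta^{\gamma}_{a\beta}\,\del_\gamma$, which is of the desired form. For the inductive step, I would write a multi-index $I$ with $|I|\geq 1$ in the form $\del^I=\del_\beta\,\del^{I'}$ with $|I'|=|I|-1$ and use the Leibniz-type identity
\begin{equation*}
[\del_\beta\,\del^{I'},\,L_a]\;=\;\del_\beta\,[\del^{I'},L_a]\;+\;[\del_\beta,L_a]\,\del^{I'}.
\end{equation*}
By the induction hypothesis, the first term on the right equals $\sum_{|J|\leq |I'|}\lambda^{I'}_{aJ}\,\del_\beta\,\del^{J}$, which, since the $\lambda^{I'}_{aJ}$ are \emph{constants}, is already a constant-coefficient linear combination of $\del^{K}$ with $|K|\leq |I'|+1=|I|$. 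By the one-step rule, the second term equals $-\Theta^{\gamma}_{a\beta}\,\del_\gamma\,\del^{I'}$, again a constant-coefficient linear combination of derivatives of order exactly $|I|$. Adding them and relabeling yields the desired expression $[\del^I,L_a]=\sum_{|J|\leq|I|}\lambda^{I}_{aJ}\,\del^J$ with constant coefficients.

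There is really no serious obstacle here; the only point that requires care is to verify that the constancy of coefficients is preserved under the induction, which is precisely where the explicit form of $\Theta^{\gamma}_{a\beta}$ in \eqref{pre commutator base'} is used (if any of these coefficients depended on $(t,x)$, the differentiation $\del_\beta$ applied to $\lambda^{I'}_{aJ}$ in the inductive step would generate additional lower-order terms with non-constant coefficients). I would conclude with a brief remark that the same scheme, together with the identities in \eqref{pre commutator base'}, underpins the analogous commutator formulas \eqref{pre lem commutator pr2-ZZ}--\eqref{pre lem commutator pr4b-ZZ} stated later in the lemma, the only difference being that there the coefficients are no longer constant but are smooth and homogeneous of controlled degree in $\Kcal$.
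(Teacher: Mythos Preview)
Your proof is correct and follows essentially the same approach as the paper: induction on $|I|$, writing $\del^I=\del_\beta\del^{I'}$, using the Leibniz identity $[\del_\beta\del^{I'},L_a]=\del_\beta[\del^{I'},L_a]+[\del_\beta,L_a]\del^{I'}$, and invoking the induction hypothesis together with the constant-coefficient identity \eqref{pre commutator base L-P}. Your additional remark emphasizing that the constancy of the $\Theta^{\gamma}_{a\beta}$ is what preserves constant coefficients through the induction is a helpful clarification not made explicit in the paper.
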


\begin{proof} We proceed by induction and, for $|I| = 1$, this is \eqref{pre commutator base L-P}. Assuming that \eqref{pre lem commutator pr1} holds for all $|I_1|\leq m$, we are going to prove that it is still valid for $|I|\leq m+1$. Let $I = (\alpha,\alpha_m,\alpha_{m-1},\dots,\alpha_1)$ and  $I_1=(\alpha_m,\alpha_{m-1}, \ldots, \alpha_1)$, so that $\del^I = \del_\alpha\del^{I_1}$. We find 
$$
\aligned
\,[\del^I,L_a]
= \, & [\del_\alpha\del^{I_1},L_a] = \del_\alpha\big([\del^{I_1},L_a]\big)
 + [\del_\alpha,L_a]\del^{I_1}
= \del_\alpha\bigg(\sum_{|J|\leq |I_1|}\lambda_{aJ}^{I_1}\del^J \bigg) - \Theta_{a\alpha}^\gamma\del_{\gamma}\del^{I_1}
\\
= \, &\sum_{|J|\leq |I_1|}\lambda_{aJ}^{I_1}\del_\alpha\del^J - \Theta_{a\alpha}^\gamma\del_{\gamma}\del^{I_1},
\endaligned
$$
which yields the statement for $|I| = m+1$.
\end{proof}

\begin{lemma}
[Algebraic decomposition of commutators. II]
There exist constants $\theta_{\alpha J}^{I\gamma}$ so that
\be
\label{pre lem commutator pr2}
[L^I, \del_\alpha] = \sum_{|J|\leq|I|-1,\gamma}\theta_{\alpha J}^{I\gamma}\del_{\gamma}L^J.
\ee
\end{lemma}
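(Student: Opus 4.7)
The plan is to establish \eqref{pre lem commutator pr2} by induction on $|I|$, in the same spirit as the proof of \eqref{pre lem commutator pr1} just given. The statement is purely algebraic: it asserts that when a string of boosts $L^I$ is commuted past a single partial derivative $\del_\alpha$, the resulting operator can be put in the ``normal form'' of a linear combination of operators $\del_\gamma L^J$ with $|J|\leq|I|-1$ and with constant coefficients.

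For the base case $|I|=1$, I would simply invoke the first identity in \eqref{pre commutator base L-P}, rewritten as $[L_a,\del_\alpha] = -\Theta_{a\alpha}^\gamma \del_\gamma$, which is of the required form with $|J|=0$ and constant coefficients $\theta_{\alpha,\emptyset}^{a\gamma} = -\Theta_{a\alpha}^\gamma$ (recalling from \eqref{pre commutator base'} that the $\Theta_{a\alpha}^\gamma$ are indeed constants, not functions of $(t,x)$ --- this is a key structural fact that makes the lemma work).

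For the inductive step, assume \eqref{pre lem commutator pr2} holds for all multi-indices of length at most $m$, and let $|I|=m+1$. Writing $L^I = L_b\,L^{I_1}$ with $|I_1|=m$, I would use the standard identity
\[
[L_b L^{I_1},\del_\alpha] = L_b\,[L^{I_1},\del_\alpha] + [L_b,\del_\alpha]\,L^{I_1}.
\]
The second term is $-\Theta_{b\alpha}^\gamma\,\del_\gamma L^{I_1}$, already of the desired form (with $|J|=m=|I|-1$). For the first term, I would apply the inductive hypothesis to get a sum of terms $\theta_{\alpha J'}^{I_1 \gamma}\,L_b\,\del_\gamma L^{J'}$ with $|J'|\leq m-1$, and then push $L_b$ across $\del_\gamma$ once more using the base case: $L_b\,\del_\gamma = \del_\gamma L_b - \Theta_{b\gamma}^{\gamma'}\del_{\gamma'}$. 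This produces terms of the form $\theta_{\alpha J'}^{I_1\gamma}\del_\gamma L_b L^{J'}$, where $L_bL^{J'}$ has length at most $m=|I|-1$, and correction terms $-\theta_{\alpha J'}^{I_1\gamma}\Theta_{b\gamma}^{\gamma'}\del_{\gamma'}L^{J'}$, where $L^{J'}$ has length at most $m-1\leq |I|-1$. All coefficients obtained this way are products of the constants $\theta_{\alpha J'}^{I_1\gamma}$ and $\Theta_{b\gamma}^{\gamma'}$, hence themselves constants, giving the required representation with new constants $\theta_{\alpha J}^{I\gamma}$.

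The only subtlety --- and the reason I would highlight the constancy of the $\Theta_{a\alpha}^\gamma$ at every step --- is to make sure no homogeneous $(t,x)$-dependent factor sneaks into the coefficients during the inductive push. This is in contrast with the analogous commutators involving the hyperboloidal fields $\delu_\beta$ (as in \eqref{pre lem commutator pr2 NEW-ZZ}--\eqref{pre lem commutator pr4a-ZZ}), where one must allow smooth homogeneous coefficients; here, because both $L_a$ and $\del_\alpha$ are Killing for $\Box$ with constant-coefficient commutation relations, the constants are preserved through iteration. Once this is observed, the induction is essentially a bookkeeping exercise, and no analytic input beyond \eqref{pre commutator base L-P}--\eqref{pre commutator base'} is needed.
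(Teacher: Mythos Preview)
Your proposal is correct and follows essentially the same induction as the paper's own proof: write $L^I=L_bL^{I_1}$, split the commutator as $L_b[L^{I_1},\del_\alpha]+[L_b,\del_\alpha]L^{I_1}$, apply the inductive hypothesis, and push $L_b$ past the resulting $\del_\gamma$ using the base relation. One minor correction: by the definition in \eqref{pre commutator base L-P} the sign is $[L_a,\del_\alpha]=\Theta_{a\alpha}^\gamma\del_\gamma$, not $-\Theta_{a\alpha}^\gamma\del_\gamma$, though this does not affect the argument.
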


\begin{proof} The case $|I|=1$ is already covered by \eqref{pre commutator base L-P}. Assuming that \eqref{pre lem commutator pr2} is valid for $|I|\leq m$, we are going to prove that it is still valid when $|I|=m+1$. We write 
$
L^I = L_aL^{I_1}$ with $|I_1| = m$, and find 
$$
\aligned
\,[L^I,\del_\alpha]
 = &[L_aL^{I_1},\del_\alpha]
 = L_a\big([L^{I_1},\del_\alpha]\big) + [L_a,\del_\alpha]L^{I_1}
\\
=& L_a\bigg(\sum_{|J|\leq |I_1|-1,\gamma}\theta_{\alpha J}^{I_1\gamma}\del_{\gamma}L^J \bigg)
+ \sum_{\gamma}\Theta_{a\alpha}^{\gamma}\del_{\gamma}L^{I_1}
\\
=&\sum_{|J|\leq |I_1|-1,\gamma}\theta_{\alpha J}^{I_1\gamma}L_a\del_{\gamma}L^J
 + \sum_{\gamma}\Theta_{a\alpha}^{\gamma}\del_{\gamma}L^{I_1}
\endaligned
$$
so 
$$
\aligned
\,[L^I,\del_\alpha]
=&\sum_{|J|\leq |I_1|-1,\gamma}\theta_{\alpha J}^{I_1\gamma}\del_{\gamma}L_aJ^J
+\sum_{|J|\leq |I_1|-1,\gamma}\theta_{\alpha J}^{I_1\gamma}[L_a,\del_{\gamma}]J^J
+ \sum_{\gamma}\Theta_{a\alpha}^{\gamma}\del_{\gamma}L^{I_1}
\\
=&\sum_{|J|\leq |I_1|-1,\gamma}\theta_{\alpha J}^{I_1\gamma}\del_{\gamma}L_aJ^J
+\sum_{|J|\leq |I_1|-1,\gamma}\theta_{\alpha J}^{I_1\gamma}\Theta_{a\gamma}^{\gamma'}\del_{\gamma'}L^J
+ \sum_{\gamma}\Theta_{a\alpha}^{\gamma}\del_{\gamma}L^{I_1}.
\endaligned
$$
\end{proof}

As a consequence of \eqref{pre lem commutator pr2}, we have 
\be
[\del^IL^J,\del_\alpha]u = \sum_{|J'|<|J|,\gamma}\theta_{\alpha J'}^{J\gamma}\del_{\gamma}\del^{I}L^{J'} u.
\ee

\begin{lemma}
[Algebraic decomposition of commutators. III]
\label{lem-com2}
One has 
\bel{pre lem commutator pr2 NEW}
[\del^IL^J,\delu_\beta]
 = \sum_{|J'| \leq |J|,|I'|\leq|I|\atop |I'|+|J'|<|I|+|J|}\thetau_{\beta I'J'}^{IJ \gamma}\del_{\gamma}\del^{I'}L^{J'},
\ee
where $\thetau_{\beta I'J'}^{IJ\gamma}$ are smooth functions satisfying
\bel{pre lem commutator pr4a}
 \big|\del^{I_1}L^{J_1}\thetau_{\beta I'J'}^{IJ\gamma}\big| \leq
\begin{cases}
C\big(|I|,|J|,|I_1|,|J_1|\big) \, t^{-|I_1|}
  &\text{ when } |J'| < |J|,
\\
C\big(|I|,|J|,|I_1|,|J_1|\big) \, t^{-|I_1|-1}     &\text{ when } |I'| < |I|.
\end{cases}
\ee
\end{lemma}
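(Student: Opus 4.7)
\medskip

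\noindent\textbf{Proof plan for Lemma~\ref{lem-com2}.}
The plan is to argue by induction on the total order $p:=|I|+|J|$, always peeling off one generator from the \emph{left} of $\del^IL^J$, and to invoke the elementary commutator identities collected in \eqref{pre commutator base L-P}--\eqref{pre commutator base''} together with the fact that the coefficients $\Gammau_{\alpha\beta}^\gamma$, $\Thetau_{a\beta}^\gamma$, $\Psi_\beta^{\beta'}$ are smooth and homogeneous of degree $0$ in the cone $\Kcal$. Such functions enjoy the automatic estimate $|\del^{I_1}L^{J_1}\phi|\lesssim t^{-|I_1|}$ (one factor of $t^{-1}$ per partial derivative $\del$, Lorentz boosts being scale-invariant), which is precisely the mechanism that generates the two decay regimes in \eqref{pre lem commutator pr4a}.

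\medskip
\noindent
For the base case $p=1$, if $I=(\alpha),J=0$ we use $[\del_\alpha,\delu_\beta]=t^{-1}\Gammau_{\alpha\beta}^\gamma\del_\gamma$; the coefficient $t^{-1}\Gammau_{\alpha\beta}^\gamma$ is homogeneous of degree $-1$ so satisfies the stronger bound $t^{-|I_1|-1}$, consistent with the case $|I'|=0<|I|=1$. If $I=0,J=(a)$, we write $[L_a,\delu_\beta]=\Thetau_{a\beta}^\gamma\delu_\gamma=\Thetau_{a\beta}^\gamma\Psi_\gamma^{\gamma'}\del_{\gamma'}$, a degree-$0$ coefficient, corresponding to the case $|J'|=0<|J|=1$ with bound $t^{-|I_1|}$.

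\medskip
\noindent
For the inductive step, assume the statement holds at order $\leq p$ and take $|I|+|J|=p+1$. If $|I|\geq 1$, write $\del^I=\del_\alpha\del^{\tilde I}$ and use
$$[\del^IL^J,\delu_\beta]=\del_\alpha\,[\del^{\tilde I}L^J,\delu_\beta]+[\del_\alpha,\delu_\beta]\,\del^{\tilde I}L^J.$$
The second term equals $t^{-1}\Gammau_{\alpha\beta}^\gamma\del_\gamma\del^{\tilde I}L^J$, so it yields one summand of the desired form with $(I',J')=(\tilde I,J)$, hence $|I'|<|I|$ and coefficient of type $t^{-|I_1|-1}$. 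For the first term, apply the induction hypothesis to $[\del^{\tilde I}L^J,\delu_\beta]$ and then distribute the outer $\del_\alpha$ via Leibniz; the piece where $\del_\alpha$ hits the vector field is rewritten as $\del_\gamma\del^{I'_{\text{new}}}L^{J'}$ with $|I'_{\text{new}}|=|I'|+1\leq|I|$ and same $J'$, keeping the same ``regime,'' while the piece where $\del_\alpha$ hits the coefficient $\thetau$ upgrades its decay by one power of $t^{-1}$, which only improves the required bound (or shifts it from the $|J'|<|J|$ regime into the $|I'|<|I|$ regime, which is precisely allowed since $|I'|\leq|\tilde I|<|I|$). If instead $|I|=0,|J|\geq 1$, write $L^J=L_aL^{\tilde J}$, apply the same kind of splitting, and use that $[L_a,\del_\gamma]=\Theta_{a\gamma}^{\gamma'}\del_{\gamma'}$ with \emph{constant} coefficients, so passing $L_a$ through a $\del_\gamma$ costs neither decay nor alters the bounds; all output terms sit in the $|J'|<|J|$ regime with the degree-$0$ bound $t^{-|I_1|}$.

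\medskip
\noindent
The only real bookkeeping obstacle is to track, throughout the recursion, which of the two regimes each coefficient belongs to, and to verify that each operation (multiplication by $t^{-1}\Gammau$, application of $\del_\alpha$ or $L_a$ to a coefficient, transposition of a $\del_\gamma$ past an $L_a$) either preserves or strengthens the estimate \eqref{pre lem commutator pr4a} associated with the new $(I',J')$. No single step is subtle on its own; the argument is a clean induction once one notes that \emph{every} coefficient produced is a polynomial in the generators $\Gammau_{\alpha\beta}^\gamma,\Thetau_{a\beta}^\gamma,\Psi_\beta^{\beta'}$ and their $\del,L$-derivatives, hence smooth and homogeneous of some non-positive degree in $\Kcal$, so the claimed $t^{-|I_1|}$ and $t^{-|I_1|-1}$ bounds follow at once from the homogeneity.
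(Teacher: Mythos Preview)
Your inductive argument is correct, and the bookkeeping you sketch for the two decay regimes is accurate. The paper, however, takes a more direct route: it writes $\delu_\beta=\Phi_\beta^\gamma\del_\gamma$ and expands $[\del^IL^J,\Phi_\beta^\gamma\del_\gamma]$ by Leibniz in one shot,
\[
[\del^IL^J,\delu_\beta]=\Phi_\beta^\gamma[\del^IL^J,\del_\gamma]
+\sum_{\substack{I_1+I_2=I,\ J_1+J_2=J\\ |I_1|+|J_1|\geq 1}}\del^{I_1}L^{J_1}\Phi_\beta^\gamma\,\del^{I_2}L^{J_2}\del_\gamma,
\]
and then invokes the already-proven identity \eqref{pre lem commutator pr2} for $[\del^IL^J,\del_\gamma]$ (with constant coefficients $\theta_{\gamma J'}^{J\delta}$) to put every term in the form $\del_\delta\del^{I'}L^{J'}$. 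This makes the coefficients $\thetau$ completely explicit: each is a constant multiple of some $\del^{I_1}L^{J_1}\Phi_\beta^\gamma$, and the two regimes in \eqref{pre lem commutator pr4a} fall out immediately from whether $|I_1|\geq 1$ (giving homogeneity $\leq -1$, hence the $t^{-|I_1|-1}$ bound, and forcing $|I'|<|I|$) or $|I_1|=0,\ |J_1|\geq 1$ (homogeneity $0$, the $t^{-|I_1|}$ bound, and $|J'|<|J|$). So the paper avoids induction entirely by reducing to the commutator with $\del_\gamma$, whereas your approach rebuilds the result from the elementary brackets $[\del_\alpha,\delu_\beta]$ and $[L_a,\delu_\beta]$ one generator at a time. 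Both are sound; the paper's version is shorter and yields an explicit closed form for the coefficients, while yours is self-contained and does not need Lemma~II as input.
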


\begin{proof} Consider the  identity
$$
\aligned
\,[\del^IL^J,\delu_\beta]
= [\del^IL^J, \Phi_\beta^{\gamma}\del_{\gamma}]
=&
\Phi_\beta^{\gamma}[\del^IL^J,\del_{\gamma}]
+ \sum_{I_1+I_2=I, J_1+J_2=J \atop |I_1|+|J_1|<|I|+|J|}
\del^{I_1}L^{J_1}\Phi_\beta^{\gamma}\del^{I_2}L^{J_2}\del_{\gamma}.
\endaligned
$$
Commuting $\del^{I_2}L^{J_2}$ and $\del_{\gamma}$, we obtain  
$$
\aligned
\,[\del^IL^J,\delu_\beta]
=& \Phi_\beta^{\gamma}[\del^IL^J,\del_{\gamma}]
\\
& + \sum_{I_1+I_2=I,J_1+J_2 = J \atop |I_1|+|J_1|<|I|+|J|}
\del^{I_1}L^{J_1}\Phi_\beta^{\gamma}\del_{\gamma}\del^{I_2}L^{J_2}
 + \sum_{I_1+I_2=I,J_1+J_2= J\atop |I_1|+|J_1|<|I|+|J|}
\del^{I_1}L^{J_1}\Phi_\beta^{\gamma}[\del^{I_2}L^{J_2},\del_{\gamma}]
\\
=&  \sum_{I_1+I_2=I,J_1+J_2= J\atop |I_1|+|J_1|<|I|+|J|}
\del^{I_1}L^{J_1}\Phi_\beta^{\gamma}\del_{\gamma}\del^{I_2}L^{J_2}
 + \sum_{I_1+I_2=I\atop J_1+J_2=J}
\del^{I_1}L^{J_1}\Phi_\beta^{\gamma}[\del^{I_2}L^{J_2},\del_{\gamma}]
\\
=& \sum_{I_1+I_2=I,J_1+J_2= J\atop |I_1|+|J_1|<|I|+|J|}
\del^{I_1}L^{J_1}\Phi_\beta^{\gamma}\del_{\gamma}\del^{I_2}L^{J_2}
+ \sum_{I_1+I_2=I\atop J_1+J_2=J}\sum_{|J_2'|<|J_2|}
\big(\del^{I_1} L^{J_1}\Phi_\beta^{\gamma} \big) \, \theta_{\gamma J_2'}^{J_2\delta}\del_{\delta} \del^{I_2}L^{J_2'}.
\endaligned
$$
Hence, $\thetau_{\gamma I'J'}^{IJ\alpha}$ are linear combinations of $\del^{I_1}L^{J_1} \Phi_\beta^{\gamma}$ and
$\big( \del^{I_1}L^{J_1}\Phi_\beta^{\gamma} \big) \theta_{\gamma J_2'}^{J_2\delta}$ and $J_1+J_2=J$,
which yields \eqref{pre lem commutator pr2 NEW}.
Note that $\theta_{\gamma J_2'}^{J_2\delta}$ are constants, so that
$$
\del^{I_3}L^{J_3}\big(\del^{I_1}L^{J_1}\Phi_{\beta}^{\gamma}\theta_{\gamma J_2'}^{J_2\delta}\big)
= \theta_{\gamma J_2'}^{J_2\delta}\del^{I_3}L^{J_3}\del^{I_1}L^{J_1}\Phi_{\beta}^{\gamma}.
$$ 
By definition, $\Phi_{\beta}^{\gamma}$ is a homogeneous function of degree zero, so that $\del^{I_1}L^{J_1}\Phi_{\beta}^{\gamma}$ is again homogeneous but with degree $\leq 0$. We thus arrive at  \eqref{pre lem commutator pr4a}.
\end{proof}

\begin{lemma}[Algebraic decomposition of commutators. IV] 
One has 
\bel{pre lem commutator pr3}
[L^I,\delu_c]
 = \sum_{|J|<|I|}\sigma^{Ia}_{cJ}\delu_aL^J,
\ee
where the coefficients $\sigma_{c J}^{Ia}$ are
smooth functions and
satisfy (in $\Kcal$)
\bel{pre lem commutator pr3b}
\big|\del^{I_1}L^{J_1}\sigma_{c J}^{Ia}\big| \leq C(|I|,|J|,|I_1|,|J_1|)t^{-|I_1|}.
\ee
\end{lemma}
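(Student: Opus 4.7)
\medskip

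\noindent\textbf{Proof proposal.} The plan is to prove both the algebraic decomposition and the derivative bound simultaneously by induction on $|I|$, mimicking the inductive strategy already used in the excerpt for \eqref{pre lem commutator pr1}, \eqref{pre lem commutator pr2}, and Lemma~\ref{lem-com2}. The base case $|I|=1$ is handled directly by \eqref{pre commutator base''}, which gives $[L_a,\delu_c]=\Thetau_{ac}^b\delu_b$ with $\Thetau_{ac}^b$ smooth and homogeneous of degree $0$ in the cone $\Kcal$ (so that any $\del^{I_1}L^{J_1}$-derivative is homogeneous of degree $-|I_1|$ there, giving the required bound by $Ct^{-|I_1|}$).

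For the inductive step, I would write $L^I=L_a L^{I'}$ with $|I'|=|I|-1$ and use the Leibniz-type identity
\[
[L^I,\delu_c]\;=\;L_a\bigl([L^{I'},\delu_c]\bigr)+[L_a,\delu_c]\,L^{I'}.
\]
By the inductive hypothesis, $[L^{I'},\delu_c]=\sum_{|J|<|I'|}\sigma_{cJ}^{I'b}\,\delu_b L^J$, and by \eqref{pre commutator base''}, $[L_a,\delu_c]L^{I'}=\Thetau_{ac}^b\delu_b L^{I'}$, which is already of the claimed form (with $J=I'$, since $|I'|<|I|$). It remains to rewrite
\[
L_a\bigl(\sigma_{cJ}^{I'b}\,\delu_b L^J\bigr)
\;=\;(L_a\sigma_{cJ}^{I'b})\,\delu_b L^J
+\sigma_{cJ}^{I'b}\,[L_a,\delu_b]L^J
+\sigma_{cJ}^{I'b}\,\delu_b L_a L^J,
\]
and substitute $[L_a,\delu_b]=\Thetau_{ab}^d\delu_d$. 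Each of the three resulting terms is a linear combination of $\delu_a L^{J'}$ with $|J'|<|I|$, and the new coefficients are polynomial combinations of the $\Thetau$'s (homogeneous of degree $0$) and of $L$-derivatives of the inductive coefficients $\sigma_{cJ}^{I'b}$.

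The bound \eqref{pre lem commutator pr3b} then follows once we observe that the new coefficients all lie in the algebra generated by homogeneous functions of degree $\leq 0$ and their $L$-derivatives; since $L_a$ preserves homogeneity while $\del^{I_1}$ lowers degree by $|I_1|$, every $\del^{I_1}L^{J_1}$-derivative of such a coefficient is homogeneous of some degree $\leq -|I_1|$ and hence bounded by $C(|I|,|J|,|I_1|,|J_1|)\,t^{-|I_1|}$ in $\Kcal$, as required. The main (mild) technical point will be to keep careful track of the indices on $\sigma$ to confirm the condition $|J|<|I|$ is preserved after applying $L_a$, but this follows immediately from the fact that $L_aL^J$ has order $|J|+1\leq |I'|+1=|I|$, combined with the strict inequality $|J|<|I'|$ coming from the induction hypothesis.
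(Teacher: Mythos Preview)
Your proposal is correct and follows essentially the same route as the paper's own proof: induction on $|I|$ with the base case given by \eqref{pre commutator base''}, the splitting $[L_aL^{I'},\delu_c]=L_a([L^{I'},\delu_c])+[L_a,\delu_c]L^{I'}$, the Leibniz expansion of $L_a(\sigma\,\delu_b L^J)$ together with $[L_a,\delu_b]=\Thetau_{ab}^d\delu_d$, and the homogeneity argument for the coefficient bound. The paper's write-up is slightly terser on the bound (it simply notes that all coefficients are homogeneous of degree~$0$), but the substance is identical.
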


\begin{proof}
This is also by induction. Again, when $|I|=1$, \eqref{pre lem commutator pr3} together with \eqref{pre lem commutator pr3b} are guaranteed by \eqref{pre commutator base''}. Assume that \eqref{pre lem commutator pr3} and \eqref{pre lem commutator pr3b} hold for $|I|\leq m$, we will prove that they are valid for $|I| = m+1$. We take $L^I = L_aL^J$ with $|J| = m$, and obtain
$$
\aligned
\,[L^I,\delu_c]  =&[L_aL^J,\delu_c]  = L_a\big([L^J,\delu_c] \big) + [L_a,\delu_c]L^J
\\
=&L_a\bigg(\sum_{|J'|<|J|}\sigma^{Ja}_{cJ'}\delu_aL^{J'}  \bigg) + \Thetau_{ac}^b\delu_b L^J
\\
=&\sum_{|J'|<|J|}L_a\sigma^{Jb}_{cJ'}\delu_bL^{J'}
 + \sum_{|J'|<|J|}\sigma^{Jb}_{cJ'}L_a\delu_bL^{J'}  + \Thetau_{ac}^b\delu_b L^J, 
\endaligned
$$
so that 
$$
\aligned
\,[L^I,\delu_c]  
=&\sum_{|J'|<|J|}L_a\sigma^{Jb}_{cJ'}\delu_bL^{J'}  + \sum_{|J'|<|J|}\sigma^{Jb}_{cJ'}\delu_b L_a L^{J'}
+ \sum_{|J'|<|J|}\sigma^{Jb}_{cJ'}[L_a,\delu_b]L^{J'}  + \Thetau_{ac}^b\delu_b L^J
\\
=&\sum_{|J'|<|J|}L_a\sigma^{Jb}_{cJ'}\delu_bL^{J'}  + \sum_{|J'|<|J|}\sigma^{Jb}_{cJ'}\delu_b L_a L^{J'}
+ \sum_{|J'|<|J|}\sigma^{Jb}_{cJ'}\Thetau_{ab}^d\delu_dL^{J'}  + \Thetau_{ac}^b\delu_b L^J.
\endaligned
$$
In each term the coefficients are homogeneous of degree $0$ (by applying \eqref{pre lem commutator pr3b}),
and the desired result is proven.
\end{proof}

The following result is also checked by induction along the same lines as above, and so its proof is omitted. 

\begin{lemma}[Algebraic decomposition of commutators. V] 
One has 
\bel{pre lem commutator pr4}
[\del^I,\delu_c]
=  t^{-1}\!\!\!\!\sum_{|J|\leq|I|}\rho_{cJ}^{I}\del^{J},
\ee
where $\rho_{cJ}^{I}$ are smooth functions satisfying
\bel{pre lem commutator pr4b}
\big|\del^{I_1}L^{J_1}\rho_{cJ}^{I}\big| \leq C(|I|,|J|,|I_1|,|J_1|)t^{-|I_1|}.
\ee
\end{lemma}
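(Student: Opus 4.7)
\medskip

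\noindent\textbf{Proof proposal.} The plan is to proceed by strong induction on $|I|$, following exactly the pattern of the previous four lemmas in this appendix. The key observation (implicitly contained in the estimate \eqref{pre lem commutator pr4b}) is that the class $\mathcal{R}$ of smooth functions $\rho$ in $\Kcal$ satisfying $|\del^{I_1}L^{J_1}\rho|\leq C(|I_1|,|J_1|)\,t^{-|I_1|}$ is stable under the following three operations: (i) multiplication between two of its elements, (ii) application of $L_a$, and (iii) application of $t\,\del_\alpha$. Stability under (i) and (ii) is immediate from the Leibniz rule and the commutator identity \eqref{pre lem commutator pr2}, while (iii) follows from the fact that if $\rho\in\mathcal{R}$ then $\del_\alpha\rho$ satisfies the stronger bound $|\del^{I_1}L^{J_1}\del_\alpha\rho|\leq C\,t^{-|I_1|-1}$, which is obtained by commuting $\del_\alpha$ through $L^{J_1}$ using \eqref{pre lem commutator pr2} and absorbing the commutator terms into the same class. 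Note in particular that the coefficients $\Gammau_{\alpha c}^\gamma$ from \eqref{pre commutator base'} all lie in $\mathcal{R}$ (they are Kronecker symbols or $x^b/t$), and of course $1\in\mathcal{R}$, $t^{-1}\in\mathcal{R}$.

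The base case $|I|=1$ is the identity $[\del_\alpha,\delu_c]=t^{-1}\Gammau_{\alpha c}^\gamma\del_\gamma$, read off from \eqref{pre commutator base L-P}--\eqref{pre commutator base'}, with $\rho_{c\gamma}^{(\alpha)}=\Gammau_{\alpha c}^\gamma\in\mathcal{R}$.

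For the inductive step, assume the statement for all multi-indices of order $\leq m$, and take $|I|=m+1$. Writing $\del^I=\del_\alpha\del^{I'}$ with $|I'|=m$, the Leibniz identity for commutators gives
\begin{equation*}
[\del^I,\delu_c]=\del_\alpha\bigl([\del^{I'},\delu_c]\bigr)+[\del_\alpha,\delu_c]\del^{I'}
=\del_\alpha\Bigl(t^{-1}\!\!\sum_{|J|\leq m}\!\rho_{cJ}^{I'}\del^J\Bigr)+t^{-1}\Gammau_{\alpha c}^\gamma\,\del_\gamma\del^{I'}.
\end{equation*}
Distributing $\del_\alpha$ in the first sum produces two contributions. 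The term $t^{-1}\rho_{cJ}^{I'}\del_\alpha\del^J$ already has the desired form, with $|\del_\alpha\del^J|\leq|I|$ and $\rho_{cJ}^{I'}\in\mathcal{R}$ by the inductive hypothesis. The term $\del_\alpha(t^{-1}\rho_{cJ}^{I'})\del^J$ is rewritten as $t^{-1}\hat\rho\,\del^J$ with $\hat\rho:=t\,\del_\alpha(t^{-1}\rho_{cJ}^{I'})=-\delta_\alpha^0\,t^{-1}\rho_{cJ}^{I'}+\del_\alpha\rho_{cJ}^{I'}$; both summands lie in $\mathcal{R}$ by stability under operations (i)--(iii) above, so $\hat\rho\in\mathcal{R}$. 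Finally, the third contribution $t^{-1}\Gammau_{\alpha c}^\gamma\,\del_\gamma\del^{I'}$ is of the required form with $|\del_\gamma\del^{I'}|=|I|$ and $\Gammau_{\alpha c}^\gamma\in\mathcal{R}$. Collecting terms yields \eqref{pre lem commutator pr4}--\eqref{pre lem commutator pr4b} for $|I|=m+1$.

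The only delicate point — and the place where one has to be careful — is the derivative bound on $\del_\alpha\rho_{cJ}^{I'}$ appearing in $\hat\rho$: one must verify that $|\del^{I_1}L^{J_1}\del_\alpha\rho_{cJ}^{I'}|\leq C\,t^{-|I_1|-1}$ (and not merely $\leq C\,t^{-|I_1|}$, which would be insufficient because of the extra factor of $t$ in the definition of $\hat\rho$). This is the main technical obstacle, but it is handled exactly as in the proof of Lemma~D.3: commute $\del_\alpha$ through $L^{J_1}$ via \eqref{pre lem commutator pr2}, writing $L^{J_1}\del_\alpha=\sum_{|J'|\leq|J_1|,\gamma}\theta_{\alpha J'}^{J_1\gamma}\del_\gamma L^{J'}$ with the $\theta$'s constant, and then apply the inductive hypothesis bound \eqref{pre lem commutator pr4b} for $\rho_{cJ}^{I'}$ with multi-indices $(I_1{+}\text{one}\del,J')$, each such term being bounded by $C\,t^{-|I_1|-1}$. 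Once this stability lemma for $\mathcal{R}$ is in hand, the induction closes with no further work.
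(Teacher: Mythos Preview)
Your proposal is correct and follows exactly the approach the paper indicates: the paper's proof is in fact omitted with the remark that it is ``checked by induction along the same lines as above,'' and your induction on $|I|$ via $\del^I=\del_\alpha\del^{I'}$ together with the Leibniz identity for commutators is precisely this. Your abstraction in terms of the class $\mathcal{R}$ (stable under products, boosts, and $t\del_\alpha$) is a clean way to organize the verification of \eqref{pre lem commutator pr4b}; the only small comment is that in your final paragraph the phrase ``insufficient because of the extra factor of $t$'' is slightly misleading---what you actually need for $\hat\rho\in\mathcal{R}$ is just $\del_\alpha\rho\in\mathcal{R}$, i.e.\ the bound $Ct^{-|I_1|}$, but your argument in fact establishes the stronger $Ct^{-|I_1|-1}$ (which is what makes stability under $t\del_\alpha$ work in general), so no harm is done.
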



The following statements are now immediate in view of \eqref{pre lem commutator pr1}, \eqref{pre lem commutator pr2}, and \eqref{pre lem commutator pr3}, and \eqref{pre lem commutator pr4}.

\begin{proposition}[Estimates on commutators. I]
\label{lem commutator esti I}
For all sufficiently regular functions $u$ defined in the future cone $\Kcal$, one has 
\bel{pre lem commutator pr5}
\big|[\del^IL^J,\del_\alpha]u\big|\leq C(|I|, |J|)\sum_{|J'|<|J|,\beta}|\del_\beta\del^IL^{J'}u|,
\ee
\bel{pre lem commutator pr5 NEW}
\big|[\del^IL^J,\delu_c]u\big|
\leq
C(|I|,|J|)
\Bigg(
\sum_{|J'|<|J|,a\atop |I'|\leq |I|} |\delu_a \del^{I'}L^{J'}u|
+ t^{-1}\!\!\!\!\sum_{|I|\leq|I'|\atop |J|\leq|J'|}|\del^{I'}L^{J'}u| \Bigg),
\ee
\bel{pre lem commutator trivial} 
\left|[\del^IL^J,\delu_{\alpha}u]\right|
\leq C(|I|,|J|)t^{-1}\sum_{\beta,|I'|<|I|\atop |J'|\leq|J|}\left|\del_{\beta}\del^{I'}L^{J'}u\right|
 +C(|I|,|J|)\sum_{\beta,|I'|\leq|I|\atop |J'|<|J|}\left|\del_{\beta}\del^{I'}L^{J'}u\right|, 
\ee
\bel{pre lem commutator second-order}
\big|[\del^IL^J,\del_\alpha \del_\beta] u \big|
\leq  C(|I|,|J|)\sum_{\gamma,\gamma'\atop |I|\leq|I'|,|J'|<|I|} \big|\del_{\gamma}\del_{\gamma'}\del^{I'}L^{J'} u\big|,
\ee
\bel{pre lem commutator second-order bar}
\aligned
&\big|[\del^IL^J, \delu_a\delu_\beta] u\big| + \big|[\del^IL^J, \delu_\alpha \delu_b] u\big|
\\
&\leq C(|I|,|J|) \Bigg(
\sum_{c,\gamma,|I'|\leq |I|\atop |J'| < |J|}\big|\delu_c \delu_{\gamma} \del^{I'}L^{J'}u\big|
+
 t^{-1} \sum_{c,\gamma,|I'| < |I|\atop |J'| \leq |J|}\big|\delu_c \delu_{\gamma} \del^{I'}L^{J'}u\big|
+ t^{-1}\sum_{\gamma,|I'|\leq|I|\atop |J'|\leq|J|}\big|\del_{\gamma}\del^{I'}L^{J'}u\big|
\Bigg).
\endaligned
\ee 
\end{proposition}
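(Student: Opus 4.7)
The plan is to derive each of \eqref{pre lem commutator pr5}--\eqref{pre lem commutator second-order bar} by combining the algebraic decompositions \eqref{pre lem commutator pr1}--\eqref{pre lem commutator pr4} with the Leibniz rule applied at the level of commutators, namely
\[
[AB,C] = A[B,C] + [A,C]B,\qquad [A,BC] = [A,B]C + B[A,C],
\]
together with the coefficient bounds \eqref{pre lem commutator pr3b} and \eqref{pre lem commutator pr4b}, which express that $\sigma^{Ia}_{cJ}$ and $\rho^I_{cJ}$ are homogeneous of degree $0$ in $\Kcal$ so that each further action of $\del^{I_1}L^{J_1}$ gains a factor $t^{-|I_1|}$ while preserving boundedness.

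First, for \eqref{pre lem commutator pr5}, I would use that the cartesian partials $\del^I$ commute among themselves, so
\[
[\del^IL^J,\del_\alpha] = \del^I[L^J,\del_\alpha],
\]
and then invoke \eqref{pre lem commutator pr2} to rewrite the right-hand side as a constant-coefficient linear combination $\sum_{|J'|<|J|,\gamma}\theta_{\alpha J'}^{J\gamma}\del_\gamma \del^I L^{J'}u$. Estimate \eqref{pre lem commutator second-order} then follows by one Leibniz step
\[
[\del^IL^J,\del_\alpha\del_\beta] = [\del^IL^J,\del_\alpha]\del_\beta + \del_\alpha[\del^IL^J,\del_\beta]
\]
and applying \eqref{pre lem commutator pr5} twice (commuting an extra $\del_\alpha$ past the derivative count only costs relabeling $I'$).

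For \eqref{pre lem commutator pr5 NEW}, I would split
\[
[\del^IL^J,\delu_c] = \del^I[L^J,\delu_c] + [\del^I,\delu_c]L^J.
\]
The first piece is expanded through \eqref{pre lem commutator pr3} into $\del^I\bigl(\sigma^{Ja}_{cJ'}\delu_a L^{J'}\bigr)$ with $|J'|<|J|$, and the Leibniz rule together with \eqref{pre lem commutator pr3b} (which gives $|\del^{I_1}\sigma^{Ja}_{cJ'}|\lesssim t^{-|I_1|}\leq 1$ in $\Kcal$) produces the first sum of \eqref{pre lem commutator pr5 NEW} after commuting $\del^{I_2}$ past $\delu_a$ via a further application of the present lemma at lower order. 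The second piece is expanded directly by \eqref{pre lem commutator pr4} into $t^{-1}\sum_{|K|\leq|I|}\rho^I_{cK}\del^K L^J u$, which gives the second sum of \eqref{pre lem commutator pr5 NEW}. The inequality \eqref{pre lem commutator trivial} is obtained by writing $\delu_\alpha = \Phi_\alpha^\beta\del_\beta$ (a linear combination, with smooth coefficients homogeneous of degree $0$, of translation vector fields) and using Leibniz plus \eqref{pre lem commutator pr5}: the term where no derivative falls on $\Phi_\alpha^\beta$ yields the second sum in \eqref{pre lem commutator trivial}, while each term with $|I_1|+|J_1|\geq 1$ hitting $\Phi_\alpha^\beta$ contributes the declared factor $t^{-1}$ by homogeneity, producing the first sum.

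Finally, \eqref{pre lem commutator second-order bar} follows from the two-step Leibniz identity $[\del^IL^J,\delu_a\delu_\beta] = [\del^IL^J,\delu_a]\delu_\beta + \delu_a[\del^IL^J,\delu_\beta]$ and a careful application of \eqref{pre lem commutator pr5 NEW}--\eqref{pre lem commutator trivial}. I expect this to be the main bookkeeping step: the difficulty is tracking which terms retain a ``good'' derivative $\delu$ on the outside (these go into the first sum), which terms lose one tangential derivative to a factor $t^{-1}$ (these go into the second sum), and which terms lose both tangential derivatives (these go into the last sum, with only $\del_\gamma$ inside); one must also commute $\delu_a$ past a lower-order commutator, which requires a short induction on $|I|+|J|$. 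Apart from this indexing, each step reduces to the already-proven algebraic identities of Appendix~\ref{appendix-COM} and the homogeneity bounds on the coefficients $\sigma$ and $\rho$, so no further analytic input is needed.
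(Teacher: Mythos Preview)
Your proposal is correct and follows exactly the approach the paper intends: the paper's own proof is a one-line remark that the estimates are ``immediate in view of \eqref{pre lem commutator pr1}, \eqref{pre lem commutator pr2}, \eqref{pre lem commutator pr3}, and \eqref{pre lem commutator pr4},'' and your write-up simply spells out the Leibniz and homogeneity bookkeeping that makes this immediate. The splittings you use (e.g.\ $[\del^IL^J,\delu_c]=\del^I[L^J,\delu_c]+[\del^I,\delu_c]L^J$ for \eqref{pre lem commutator pr5 NEW}, and the decomposition via $\delu_\alpha=\Phi_\alpha^\beta\del_\beta$ for \eqref{pre lem commutator trivial}, which is precisely the content of \eqref{pre lem commutator pr2 NEW}) are the intended ones.
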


Further estimates will be also needed, as now stated.

\begin{proposition}[[Estimates on commutators. II]
\label{pre lem commutator s/t}
For all sufficiently regular functions $u$ defined in the future cone $\Kcal$, one has 
 (for all $I, J, \alpha$)
\bel{pre lem commutator T/t'}
\big|\del^IL^J\big((s/t)\del_\alpha u\big)\big| \leq \big|(s/t)\del_\alpha \del^IL^J u\big|
+ C(|I|,|J|)\sum_{\beta,|I'|\leq|I|\atop |J'|\leq |J|}\big|(s/t)\del_\beta\del^{I'}L^{J'}u\big|.
\ee
\end{proposition}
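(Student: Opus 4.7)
The plan is to expand the iterated derivative $\del^IL^J\bigl((s/t)\del_\alpha u\bigr)$ by the Leibniz rule, factor out $(s/t)$ from every derivative of the weight using Lemma~\ref{lem com 1}, and then use the commutator identity \eqref{pre lem commutator pr5} to move $\del_\alpha$ to the outside.

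First, I would apply the Leibniz rule for the composite operator $\del^IL^J$ (which is routine since $\del^I$ and $L^J$ are commuting products of first-order operators up to combinatorial constants) to write
\begin{equation*}
\del^IL^J\bigl((s/t)\del_\alpha u\bigr) = \sum_{I_1+I_2=I,\, J_1+J_2=J} c^{I,J}_{I_1,J_1}\, \bigl(\del^{I_1}L^{J_1}(s/t)\bigr)\, \bigl(\del^{I_2}L^{J_2}\del_\alpha u\bigr),
\end{equation*}
with combinatorial constants $c^{I,J}_{I_1,J_1}$. Lemma~\ref{lem com 1} then gives $\del^{I_1}L^{J_1}(s/t) = (s/t)\,\Xi^{I_1,J_1}$, where $\Xi^{I_1,J_1}$ is smooth, homogeneous of non-positive degree, and uniformly bounded in $\overline{\Kcal}$ by some $C(|I_1|,|J_1|)$. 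Hence every term in the Leibniz expansion carries the factor $(s/t)$ out, and what remains to be estimated is $\sum (s/t)\,|\del^{I_2}L^{J_2}\del_\alpha u|$ with $|I_2|\le|I|$, $|J_2|\le|J|$.

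Next I would handle each inner factor $\del^{I_2}L^{J_2}\del_\alpha u$ by applying the commutator estimate \eqref{pre lem commutator pr5}, which yields
\begin{equation*}
\bigl|\del^{I_2}L^{J_2}\del_\alpha u\bigr| \le \bigl|\del_\alpha \del^{I_2}L^{J_2} u\bigr| + C(|I_2|,|J_2|)\sum_{|J'|<|J_2|,\,\beta} \bigl|\del_\beta \del^{I_2}L^{J'}u\bigr|.
\end{equation*}
Isolating the single term $(I_1,J_1)=(0,0)$, which produces the distinguished contribution $(s/t)\,\del_\alpha \del^IL^J u$, and absorbing all remaining contributions (those with $|I_1|+|J_1|\ge 1$, together with the commutator remainders above) into the sum over $(I',J',\beta)$ with $|I'|\le|I|$ and $|J'|\le|J|$, one reads off the claimed inequality.

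I do not expect any serious obstacle: the only non-trivial ingredients are Lemma~\ref{lem com 1} (which ensures that $(s/t)$ behaves well under $\del^{I_1}L^{J_1}$, so the weight can always be factored out cleanly) and the already-established commutator estimate \eqref{pre lem commutator pr5}. The mild bookkeeping point is to verify that every term generated belongs to the admissible index range $|I'|\le|I|$, $|J'|\le|J|$; this is automatic, since Leibniz preserves total order and \eqref{pre lem commutator pr5} lowers $|J|$ strictly while keeping $|I|$ unchanged.
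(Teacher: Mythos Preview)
Your proposal is correct and follows the natural route: the paper itself does not write out a proof of this proposition, stating it without argument, but places Lemma~\ref{pre lem lem commutator s/t} (the $\Xi^{I,J}=(t/s)\del^IL^J(s/t)$ lemma, identical to Lemma~\ref{lem com 1}) immediately afterward as the implicit key ingredient. Your Leibniz expansion combined with that lemma and the commutator estimate \eqref{pre lem commutator pr5} is exactly the intended argument.
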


Finally, recall from \cite{PLF-YM-book}) the following technical observation concerning  products of first-order linear operators with homogeneous coefficients of order $0$ or $1$.

\begin{lemma}
\label{pre lem lem commutator s/t}
For all multi-indices $I$, the function
$$
\Xi^{I,J}  := (t/s) \del^I L^J (s/t),
$$
defined in the closed cone $\overline{\Kcal} = \{|x|\leq t-1\}$, is smooth and all of its derivatives (of any order)
are bounded in $\overline{\Kcal}$. Furthermore, it is homogeneous of degree $\eta$ with $\eta\leq 0$.
\end{lemma}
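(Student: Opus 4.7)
The plan is to treat the three assertions (homogeneity, smoothness, and uniform boundedness of all derivatives) by passing to the coordinates $\xi=x/t\in\RR^3$ in which the action of $\del^IL^J$ on $s/t$ becomes transparent.

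\textbf{Homogeneity.} I would first observe that $s=\sqrt{t^2-|x|^2}$ and $t$ are each homogeneous of degree $1$, so $s/t$ (and hence $t/s$) is homogeneous of degree $0$. Each $\del_\alpha$ lowers the degree by one, while each boost $L_a=x^a\del_t+t\del_a$ preserves it. Consequently $\del^I L^J(s/t)$ is homogeneous of degree $-|I|$, and $\Xi^{I,J}=(t/s)\del^IL^J(s/t)$ is homogeneous of degree $\eta=-|I|\le 0$.

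\textbf{Structural formula.} Next I would introduce $\xi=x/t$ and write $s/t=f(\xi)$ with $f(\xi)=\sqrt{1-|\xi|^2}$. A direct calculation shows that on functions depending only on $\xi$,
\[
\del_t=-t^{-1}\xi^a\del_{\xi^a},\qquad
\del_b=t^{-1}\del_{\xi^b},\qquad
L_a=\mathcal{L}_a:=(\delta^a_b-\xi^a\xi^b)\del_{\xi^b},
\]
so that $L_a$ becomes a vector field in $\xi$ alone. The crucial algebraic identity $\mathcal{L}_a f=-\xi^a f$, together with the derivation rule $\mathcal{L}_a(gf)=(\mathcal{L}_ag-g\xi^a)f$, yields by induction on $|J|$ that $\mathcal{L}^J f=p_J(\xi)\,f$ for a polynomial $p_J$, so the boosts never introduce a factor $1/f$. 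An elementary induction also gives, for $|\alpha|\ge 1$,
\[
\del_\xi^\alpha f(\xi)=\frac{P_\alpha(\xi)}{f(\xi)^{2|\alpha|-1}}
\]
for some polynomial $P_\alpha$. Expanding $\del^I(p_J(\xi)f(\xi))$ via Leibniz and using that any $\del^I$ applied to a pure function of $\xi$ is of the form $t^{-|I|}\widetilde Q_I(\xi,\del_\xi)$ with polynomial coefficients, one obtains, for all $|I|\ge 0$,
\[
\Xi^{I,J}(t,x)=\sum_{|\beta|\le|I|}t^{-|I|}\,\frac{Q_{I,J,\beta}(\xi)}{f(\xi)^{2|\beta|}},
\]
with polynomial coefficients $Q_{I,J,\beta}$.

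\textbf{Bounds in $\overline{\Kcal}$.} The defining inequality $|x|\le t-1$ translates into $|\xi|^2\le 1-2/t+1/t^2$, hence $f(\xi)^2=1-|\xi|^2\ge 1/t$ throughout $\overline\Kcal$. Therefore $f^{-2|\beta|}\le t^{|\beta|}\le t^{|I|}$ for every $|\beta|\le|I|$, and substituting into the displayed formula gives
\[
|\Xi^{I,J}(t,x)|\le\sum_{|\beta|\le|I|}t^{|\beta|-|I|}\,|Q_{I,J,\beta}(\xi)|\le C(I,J),
\]
since each $Q_{I,J,\beta}$ is a polynomial and $|\xi|\le 1$. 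Smoothness in $\overline\Kcal$ is immediate because $s^2\ge 2t-1>0$ there, so $\Xi^{I,J}$ is a rational function of $(t,x)$ with non-vanishing denominator. For any further Cartesian derivative $\del^{I'}\Xi^{I,J}$, Leibniz expansion together with $t/s=1/f(\xi)$ reduces the analysis to finite sums of terms of the same type $t^{-(|I|+|I'|)}Q(\xi)/f(\xi)^{2(|I|+|I'|)}$, to which the same cone estimate $f^2\ge 1/t$ applies, yielding uniform boundedness of every derivative.

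The main obstacle is the careful bookkeeping in the inductive proof of the structural formula: one has to verify that each $\del_\xi$-derivative falling on $f$ (which raises the exponent of $1/f$ by $2$) is exactly compensated by the matching factor $t^{-1}$ supplied by $\del_\alpha=t^{-1}D_\alpha$, so that the cone estimate $1/f^2\le t$ can absorb it. The miraculous identity $\mathcal{L}_a f=-\xi^a f$, which guarantees that the boosts never produce a $1/f$, is precisely what makes this accounting close.
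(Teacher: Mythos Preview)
Your argument is correct. The homogeneity computation is straightforward, and the structural formula obtained by passing to the variable $\xi=x/t$ is the right tool: the key identity $\mathcal L_a f=-\xi^a f$ for $f(\xi)=\sqrt{1-|\xi|^2}$ indeed ensures that boosts never generate negative powers of $f$, so the worst denominator after $\del^I$ is $f^{2|I|}$, which is exactly balanced by $t^{-|I|}$ via the cone inequality $f^2\ge 1/t$. The extension to higher Cartesian (or boost) derivatives follows by the same bookkeeping, since $\del_\alpha(t^{-|I|})$ only produces further powers of $t^{-1}$ and $L_a(t^{-|I|})=-|I|\xi^a t^{-|I|}$.

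The paper itself does not supply a proof of this lemma: both occurrences (Lemma~\ref{lem com 1} in the main text and the restatement in the appendix) defer to the authors' earlier monograph \cite{PLF-YM-book}. Your self-contained derivation via the $\xi$-coordinates is exactly the kind of argument one expects, and it has the virtue of making the precise degree $\eta=-|I|$ and the mechanism of the cancellation fully explicit.
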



%
%
%
%
%
%
%
%
%
%


\begin{thebibliography}{99}

\bibitem{AnsorgMacedo} {\sc M. Ansorg and R.P. Macedo,}
Spectral decomposition of black-hole perturbations on hyperboloidal slices, 
Physical Rev. D 93 (2016), 124016. 

\bibitem{Asanov} {\sc R.A. Asanov,}
The Schwarzschild metric and de Donder condition,
Gen. Relat. Grav. 21 (1989), 149--154.

\bibitem{Bachelot88} {\sc A. Bachelot,}
Probl\`eme de Cauchy global pour des syst\`emes de Dirac-Klein-Gordon,
Ann. Inst. Henri Poincar\'e 48 (1988), 387--422.

\bibitem{Bachelot94} {\sc A. Bachelot,}
Asymptotic completeness for the Klein-Gordon equation on the Schwarzschild metric,
Ann. Inst. Henri Poincar\'e: Phys. Th\'eor. 61 (1994), 411--441.

\bibitem{BLF} {\sc F. Beyer and P.G. LeFloch,}
Dynamics of self--gravitating fluids in Gowdy-symmetric spacetimes near cosmological singularities, 
Comm. Part. Diff. Equa. (2017). Preprint ArXiv:1512.07187.

\bibitem{BieriZipser} {\sc L. Bieri and N. Zipser,}
{\sl Extensions of the stability theorem of the Minkowski space in General Relativity,}
AMS/IP Studies Adv. Math. 45. Amer. Math. Soc., International Press, Cambridge,  2009.

\bibitem{BD} {\sc A.Y. Burtscher and R. Donninger,}
Hyperboloidal Evolution and Global Dynamics for the Focusing Cubic Wave Equation,
Comm. Math. Phys. 353 (2017), 549--596. 

\bibitem{BuLF} {\sc A.Y. Burtscher and P.G. LeFloch,} 
The formation of trapped surfaces in spherically-symmetric Einstein-Euler spacetimes with bounded variation,  
J. Math. Pures Appl. 102 (2014), 1164--1217. 

\bibitem{CB}{\sc Y. Choquet-Bruhat},
{\sl General relativity and the Einstein equations}, Oxford Math. Monograph,
Oxford Univ. Press, 2009

\bibitem{Christodoulou86} {\sc D. Christodoulou,}
Global solutions of nonlinear hyperbolic equations for small initial data,
Comm. Pure Appl. Math. 39 (1986), 267--282.

\bibitem{Christodoulou} {\sc D. Christodoulou,}
{\sl The formation of black holes in general relativity,}
Eur. Math. Soc. (EMS) series, Z\"urich, 2008.

\bibitem{CK} {\sc D. Christodoulou and S. Klainerman,}
{\sl The global nonlinear stability of the Minkowski space,}
Princeton Math. Ser. 41, Princeton University, 1993.

\bibitem{Chrusciel} {\sc P.T. Chrusciel,} 
Anti-gravit\'e \`a la Carlotto-Schoen, 
Bourbaki seminar, Paris, November 2016, to appear. 

\bibitem{CD} {\sc P.T. Chrusciel and E. Delay,} 
Exotic hyperbolic gluings, 
Jour. Differ. Geom. (2017). Preprint ArXiv:1511.07858. 

\bibitem{CorvinoSchoen} {\sc J. Corvino and R. Schoen,}
On the asymptotics for the vacuum Einstein constraint equations, 
J. Diff. Geom. 73 (2006), 185--217.

\bibitem{Delort01}
{\sc J.-M. Delort,}
Existence globale et comportement asymptotique pour l'\'equation de Klein-Gordon quasi-lin\'eaire \`a donn\'ees petites en dimension $1$,
Ann. Sci. \'Ecole Norm. Sup. 34 (2001), 1--61.

\bibitem{Delort04}
{\sc J.-M. Delort, D. Fang, and R. Xue,}
Global existence of small solutions for quadratic quasilinear Klein-Gordon systems in two space dimensions,
J. Funct. Anal. 211 (2004), 288--323.

\bibitem{FJS} {\sc D. Fajman, J. Joudioux, and J. Smulevici,}
A vector field method for relativistic transport equations with applications, 
Analysis \& PDE (2017), to appear. 
See also ArXiv:1510.04939. 

\bibitem{FJS2} {\sc D. Fajman, J. Joudioux, and J. Smulevici,}
Sharp asymptotics for small data solutions of the Vlasov-Nordström system in three dimensions,
Preprint ArXiv:1704.05353. 

\bibitem{Frauendiener} {\sc J. Frauendiener,}
Numerical treatment of the hyperboloidal initial value problem for the vacuum Einstein equations. II. The evolution equations,
Phys. Rev. D 58 (1998), 064003.

\bibitem{FrauendienerH} {\sc J. Frauendiener and M. Hein,}
Numerical evolution of axisymmetric, isolated systems in general relativity,
Phys. Rev. D 66 (2002), 124004.

\bibitem{Friedrich81} {\sc H. Friedrich,}
On the regular and the asymptotic characteristic initial value problem for Einstein's vacuum field equations,
Proc. R. Soc. London Ser.
A 375 (1981), 169--184.

\bibitem{Friedrich83} {\sc H. Friedrich,}
Cauchy problems for the conformal vacuum field equations in General Relativity,
Commun. Math. Phys. 91 (1983), 445--472.

\bibitem{G} {\sc P. Germain,}
Global existence for coupled Klein--Gordon equations with different speeds,
Ann. Inst. Fourier 61  (2011), 2463--2506.

\bibitem{GLF} {\sc N. Grubic and P.G. LeFloch,}
On the area of the symmetry orbits in weakly regular Einstein-Euler spacetimes with Gowdy symmetry, 
SIAM J. Math. Anal. 47 (2015), 669--683.

\bibitem{Hilditch} {\sc D. Hilditch, E. Harms, M. Bugner, H. Rueter, and B. Bruegmann,}
The evolution of hyperboloidal data with the dual foliation formalism: Mathematical analysis and wave equation tests, 
Preprint ArXiv:1609.08949. 

\bibitem{Hormander}{\sc L. H\"ormander,}
{\sl Lectures on nonlinear hyperbolic differential equations,}
Springer Verlag, Berlin, 1997.

\bibitem{HoshigaKubo} {\sc A. Hoshiga and H. Kubo,}
Global small amplitude solutions of nonlinear hyperbolic systems with a critical exponent under the null condition,
SIAM J. Math. Anal. 31 (2000), 486--513.

\bibitem{IP} {\sc A.D. Ionescu and B. Pausader,} 
Global solutions of quasilinear systems of Klein-Gordon equations in 3D, 
J. Eur. Math. Soc. 16, 2355--2431.

\bibitem{Katayama12a} {\sc S. Katayama,}
Global existence for coupled systems of nonlinear wave and Klein-Gordon equations in three space dimensions,
Math. Z. 270 (2012), 487--513.

\bibitem{Katayama12b}{\sc S. Katayama,}
Asymptotic pointwise behavior for systems of semilinear wave equations in three space dimensions,
J. Hyperbolic Differ. Equ. 9 (2012), 263--323.

\bibitem{Klainerman80}{\sc S. Klainerman,}
Global existence for nonlinear wave equations,
 Comm. Pure Appl. Math. 33 (1980), 43--101.

\bibitem{Klainerman85}{\sc S. Klainerman,}
Global existence of small amplitude solutions to nonlinear Klein-Gordon equations in four spacetime dimensions,
Comm. Pure Appl. Math. 38 (1985), 631--641.

\bibitem{Klainerman86}{\sc S. Klainerman,}
The null condition and global existence to nonlinear wave equations,
in ``Nonlinear systems of partial differential equations in applied mathematics'', Part 1, Santa Fe, N.M., 1984,
Lectures in Appl. Math., Vol.~23, Amer. Math. Soc., Providence, RI, 1986, pp.~293--326.

\bibitem{Klainerman87}{\sc S. Klainerman,}
Remarks on the global Sobolev inequalities in the Minkowski space $\RR^{n+1}$,
Comm. Pure Appl. Math. 40 (1987), 111--117.

\bibitem{KR} {\sc S. Klainerman and I. Rodnianski,}
On the formation of trapped surfaces,
Acta Math. 208 (2012), 211--333. 

\bibitem{KRS} {\sc S. Klainerman, I. Rodnianski, and J. Szeftel,}
The bounded L2 curvature conjecture,
Invent. Math. 202 (2015), 91--216.

\bibitem{LeFloch-lectures}  {\sc P.G. LeFloch,} 
An introduction to self-gravitating matter, Graduate course given at
Institute Henri Poincar\'e, Paris, Fall 2015,  available at {\sl http://www.youtube.com/user/PoincareInstitute.}

\bibitem{PLF-YM-book} {\sc P.G. LeFloch and Y. Ma,}
{\sl The hyperboloidal foliation method,}  World Scientific, 2014.

\bibitem{PLF-YM-CRAS}{\sc P.G. LeFloch and Y. Ma}, 
The global nonlinear stability of Minkowski spacetime for the Einstein equations in presence of massive fields, 
Note C.R. Acad. Sc. Paris 354 (2016), 948--953.

\bibitem{PLF-YM-one}{\sc P.G. LeFloch and Y. Ma}, 
The global nonlinear stability of Minkowski space for self-gravitating massive fields. The wave-Klein-Gordon model,  
Comm. Math. Phys. 346 (2016), 603--665.  

\bibitem{PLF-YM-three}{\sc P.G. LeFloch and Y. Ma},
The global nonlinear stability of Minkowski space for the $f(R)$-theory of modified gravity, 
in preparation.  

\bibitem{LFR} {\sc P.G. LeFloch and A.D. Rendall,}
A global foliation of Einstein-Euler spacetimes with Gowdy-symmetry on $T^3$, 
Arch. Rational Mech. Anal. 201 (2011), 841--870.  

\bibitem{LNS} {\sc H. Lindblad, M. Nakamura, and C.D. Sogge,}
Remarks on global solutions for nonlinear wave equations under the standard null conditions,
 J. Differential Equations 254 (2013), 1396--1436.

\bibitem{LR1} {\sc H. Lindblad and I. Rodnianski,}
Global existence for the Einstein vacuum equations in wave coordinates,
Comm. Math. Phys. 256 (2005), 43--110.

\bibitem{LR2} {\sc H. Lindblad and I. Rodnianski,}
The global stability of Minkowski spacetime in harmonic gauge,
Ann. of Math. 171 (2010), 1401--1477.

\bibitem{Ma}{\sc Y. Ma},
A quasi-linear wave-Klein-Gordon system in $2 + 1$ dimensions, in preparation.

\bibitem{MoncriefRinne} {\sc V. Moncrief and O. Rinne,}
Regularity of the Einstein equations at future null infinity,
Class. Quant. Grav. 26 (2009), 125010.

\bibitem{OCP} {\sc H. Okawa, V. Cardoso, and P. Pani,}
Collapse of self-interacting fields in asymptotically flat spacetimes: do self-interactions render Minkowski spacetime unstable?,
Phys. Rev. D 89 (2014), 041502.

\bibitem{Rinne} {\sc O. Rinne,}
An axisymmetric evolution code for the Einstein equations on hyperboloidal slices,
Class. Quantum Grav. 27 (2010), 035014. 

\bibitem{Shatah82} {\sc J. Shatah,}
Global existence of small solutions to nonlinear evolution equations,
J. Differential Equations 46 (1982), 409--425.

\bibitem{Shatah85} {\sc J. Shatah,}
Normal forms and quadratic nonlinear Klein--Gordon equations,
Comm. Pure Appl. Math. 38 (1985), 685--696.

\bibitem{Smulevici} {\sc J. Smulevici,} 
Small data solutions of the Vlasov-Poisson system and the vector field method, 
Preprint ArXiv:1504.02195. 

\bibitem{Speck} {\sc J. Speck},  
The global stability of the Minkowski spacetime solution to the Einstein-nonlinear system in wave coordinates, 
Analysis \& PDE 7 (2014), 771--901. 

\bibitem{Strauss} {\sc W.A. Strauss,}
{\sl Nonlinear wave equations,}
CBMS 73, Amer. Math. Soc., Providence, 1989.

\bibitem{VHH} {\sc A. Van\'o-Vinuales, S. Husa, and D. Hilditch,}
Spherical symmetry as a test case for unconstrained hyperboloidal evolution,
Class. Quantum Gravity 32 (2015), 175010. 

\bibitem{Zenginoglu} {\sc A. Zenginoglu,}
Hyperboloidal evolution with the Einstein equations,
Class. Quantum Grav. 25 (2008), 195025.

\bibitem{Zenginoglu2} {\sc A. Zenginoglu,}
Hyperboloidal layers for hyperbolic equations on unbounded domains,
J. Comput. Phys. 230 (2011), 2286--2302.

\end{thebibliography}
\end{document}